\begin{document}

\pagenumbering{Roman}

\hypersetup{pageanchor=false}
\title{Linear-Time Algorithms for $k$-Edge-Connected Components, $k$-Lean Tree Decompositions, and More
\thanks{The research leading to these results was supported by Grant Number 16582, Basic Algorithms Research Copenhagen (BARC), from the VILLUM Foundation.}
}

\author{Tuukka Korhonen\thanks{Department of Computer Science, University of Copenhagen, Denmark. \texttt{tuko@di.ku.dk}}}


\maketitle

\thispagestyle{empty}

\begin{abstract}
We present $k^{\OO(k^2)} m$ time algorithms for various problems about decomposing a given undirected graph by edge cuts or vertex separators of size $<k$ into parts that are ``well-connected'' with respect to cuts or separators of size $<k$; here, $m$ is the total number of vertices and edges of the graph.
As an application of our results, we obtain for every fixed $k$ a linear-time algorithm for computing the $k$-edge-connected components of a given graph, solving a long-standing open problem.
More generally, we obtain a $k^{\OO(k^2)} m$ time algorithm for computing a $k$-Gomory-Hu tree of a given graph, which is a structure representing pairwise minimum cuts of size~$<k$.

Our main technical result, from which the other results follow, is a $k^{\OO(k^2)} m$ time algorithm for computing a $k$-lean tree decomposition of a given graph.
This is a tree decomposition with adhesion size $<k$ that captures the existence of separators of size $<k$ between subsets of its bags.
A $k$-lean tree decomposition is also an unbreakable tree decomposition with optimal unbreakability parameters for the adhesion size bound $k$.

As further applications, we obtain $k^{\OO(k^2)} m$ time algorithms for $k$-vertex connectivity and for element connectivity $k$-Gomory-Hu tree.
All of our algorithms are deterministic.

Our techniques are inspired by the tenth paper of the Graph Minors series of Robertson and Seymour and by Bodlaender's parameterized linear-time algorithm for treewidth.
\end{abstract}

\thispagestyle{empty}

\newpage

\pagenumbering{roman}

\setcounter{page}{1}
\setcounter{tocdepth}{2}

\tableofcontents

\newpage

\pagenumbering{arabic}

\hypersetup{pageanchor=true}

\clearpage
\setcounter{page}{1}

\section{Introduction}
Fifty years ago, Aho, Hopcroft, and Ullman~\cite[Problem~5.30]{aho1974design} asked the following question:
\begin{displayquote}
``Hopcroft and Tarjan~\cite{HopcroftT73} have given a linear time algorithm to find $3$-connected components.
It is natural to conjecture that there exists linear (in numbers of vertices and edges) time algorithm to find $k$-connected components for each $k$. Can you find one?''
\end{displayquote}
It is not clear whether they meant that the algorithm is supposed to run in linear-time for every fixed constant $k$, or for unbounded $k$.
It is also unclear what they meant by finding $k$-connected components, since no agreed-upon definition of them in the context of vertex connectivity exists even today.
Nevertheless, in this paper we answer various forms of this question affirmatively in the setting when $k$ is an arbitrary fixed constant.

In particular, we give $k^{\OO(k^2)} n + \OO(m)$ time algorithms for various problems about decomposing an undirected graph along edge cuts or vertex separators of size less than $k$.
Our main theorem is technical to state, so let us start with a simple-to-state application of it, a linear-time algorithm for $k$-edge-connected components for every fixed $k$.

\paragraph{Computing the $k$-edge-connected components.}
Two vertices of a graph are in the same $k$-edge-connected component if they cannot be separated from each other by cutting less than $k$ edges, or equivalently, if there is a set of $k$ edge-disjoint paths between them.
It can be observed that this yields an equivalence relation among the vertices, so the $k$-edge-connected components of a graph $G$ form a uniquely defined partition of the vertex set $V(G)$.

Computing the $k$-edge-connected components of a graph is a fundamental and well-studied problem.
For $k=1$, this corresponds to computing the connected components, for which linear-time algorithms are considered folklore.
For $k=2$, Tarjan~\cite{DBLP:journals/siamcomp/Tarjan72} (see also~\cite{DBLP:journals/cacm/Paton71,DBLP:journals/cacm/HopcroftT73}) gave a linear-time algorithm for computing the biconnected components, from which $2$-edge-connected components can be easily retrieved.
For $k=3$, Hopcroft and Tarjan~\cite{HopcroftT73} gave a linear-time algorithm for computing the triconnected components, which Galil and Italiano~\cite{DBLP:journals/sigact/GalilI91} observed to yield also a linear-time algorithm for $3$-edge-connected components via a simple reduction.
Simplified linear-time algorithms for $3$-edge-connected components were later given by multiple authors~\cite{nagamochi1992linear,DBLP:journals/mst/Tsin07,DBLP:books/cu/NI2008,DBLP:journals/jda/Tsin09,DBLP:journals/ipl/NorouziT14}.
For $4$-edge-connected components, the first linear-time algorithms were given by Nadara, Radecki, Smulewicz, and Soko{\l}owski~\cite{DBLP:conf/esa/NadaraRSS21}, and concurrently by Georgiadis, Italiano, and Kosinas~\cite{DBLP:conf/esa/GeorgiadisIK21}.
Very recently, Kosinas~\cite{DBLP:conf/soda/Kosinas24} obtained a linear-time algorithm for $5$-edge-connected components.

For $k > 5$, the best known algorithm when $k$ is a constant is an algorithm of Hariharan, Kavitha, and Panigrahi~\cite{DBLP:conf/soda/HariharanKP07}, which runs in time $\OO(k^3 n \log^2 n + m)$.
When $k$ is unbounded, the best running time is is obtained by the almost-linear $m^{1+o(1)}$ time algorithm of Abboud, Li, Panigrahi, and Saranurak~\cite{DBLP:conf/focs/Abboud0PS23}.
In fact, the latter algorithm computes a Gomory-Hu tree of the input graph, which is a structure that represents pairwise minimum cuts~\cite{gomory1961multi}, and the former computes a $k$-Gomory-Hu tree\footnote{In~\cite{DBLP:conf/soda/HariharanKP07} this structure is called a partial Gomory-Hu tree, but we use $k$-Gomory-Hu tree, following~\cite{DBLP:conf/stoc/PettieSY22}.}, which represents pairwise minimum cuts of size less than~$k$.
Our algorithm also computes a $k$-Gomory-Hu tree, so let us define it formally.

A $k$-Gomory-Hu tree of a graph $G$ is a triple $(T, \gamma, \alpha)$, where $T$ is a tree, $\gamma \colon V(G) \to V(T)$ maps vertices of $G$ to nodes of $T$, $\alpha \colon E(T) \to 2^{E(G)}$ maps edges of $T$ to sets of edges of $G$, and
\begin{enumerate}
\item for all $u,v \in V(G)$ and $e \in E(T)$ so that $e$ is on the $(\gamma(u), \gamma(v))$-path in $T$, the set $\alpha(e)$ is an $(u,v)$-cutset of size $|\alpha(e)| < k$, and
\item for all $u,v \in V(G)$, if there exists an $(u,v)$-cutset of size $<k$, then there exists $e \in E(T)$ on the $(\gamma(u),\gamma(v))$-path in $T$ so that $\alpha(e)$ is a minimum-size $(u,v)$-cutset.
\end{enumerate}

It can be observed that the $k$-edge-connected components of $G$ correspond to the non-empty sets $\{v \in V(G) : \gamma(v) = t\}$ over all $t \in V(T)$, and therefore any algorithm for computing a $k$-Gomory-Hu gives also an algorithm for $k$-edge-connected components.
Let us then state our first theorem.

\begin{restatable}{theorem}{thekgomoryhu}
\label{the:mainkgomoryhu}
There is an algorithm that, given an $n$-vertex $m$-edge graph $G$ and an integer $k \ge 1$, in time $k^{\OO(k^2)} n + \OO(m)$ returns a $k$-Gomory-Hu-tree of $G$.
\end{restatable}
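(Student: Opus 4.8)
The plan is to derive Theorem~\ref{the:mainkgomoryhu} from the main technical result --- the $k^{\OO(k^2)}m$-time algorithm for computing a $k$-lean tree decomposition (stated later) --- via two reductions: a linear-time sparsification and a gadget construction that turns small edge cuts of $G$ into small vertex separators. First I would run the Nagamochi--Ibaraki forest decomposition to obtain, in $\OO(m)$ time, a spanning subgraph $G'\subseteq G$ with $|E(G')|\le k(n-1)$ and $d_{G'}(X)\ge\min(k,d_G(X))$ for every $X\subseteq V(G)$, where $d_H(X)$ is the number of edges of $H$ with exactly one endpoint in $X$. A one-line calculation then shows that every $X$ with $d_{G'}(X)<k$ has the same edge boundary in $G'$ and in $G$, and that $\lambda_{G'}(u,v)=\lambda_G(u,v)$ whenever this value is $<k$; hence a $k$-Gomory-Hu tree of $G'$ is also one of $G$ (each cutset it uses is an edge boundary of $G'$ of size $<k$, so it coincides with the corresponding edge boundary of $G$ and is thus a cutset of $G$ of the same minimum size). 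From now on the instance is $G'$, with $n$ vertices and $\OO(kn)$ edges.

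Next I would build an auxiliary graph $\hat G$ by replacing each vertex $v$ with a gadget $D_v$ on $\deg_{G'}(v)+k$ vertices, of which $\deg_{G'}(v)$ are \emph{ports} (one per edge of $G'$ incident with $v$); $D_v$ is a clique if $\deg_{G'}(v)\le k+1$ and a $2k$-th power of a cycle otherwise, so that $D_v$ is $k$-connected, has $\OO(k\deg_{G'}(v)+k^2)$ edges, and admits no order-$<k$ separation of $\hat G$ that splits it. For each edge $e=uv$ of $G'$ I add to $\hat G$ an edge joining the port of $e$ in $D_u$ to the port of $e$ in $D_v$. Then $\hat G$ has $\OO(kn)$ vertices and $\OO(k^2n)$ edges, and a path-routing argument through the (connected) gadgets yields the dictionary: for $u\ne v$, if $S$ is a vertex set of $\hat G$ of size $<k$ lying on the $\gamma(u)$--$\gamma(v)$ path of some tree decomposition of $\hat G$, then the set $F(S)$ of edges of $G'$ having a port in $S$ is a $(u,v)$-cutset of $G'$ with $|F(S)|\le|S|$; and conversely, the $A$-side ports of the edges of a minimum $(u,v)$-cut $E_{G'}(A,B)$ of $G'$ of size $\mu<k$ form a set of $\mu$ vertices whose deletion separates $D_u$ from $D_v$, so $\hat G$ has exactly $\mu$ vertex-disjoint $D_u$--$D_v$ paths (it cannot have fewer, again by the dictionary, nor more than the separator size $\mu$).

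Now I would apply the main technical result to $\hat G$, obtaining a $k$-lean tree decomposition $(T,\beta)$ in time $k^{\OO(k^2)}(|V(\hat G)|+|E(\hat G)|)=k^{\OO(k^2)}n$. Since no order-$<k$ separation splits a gadget, each $D_v$ lies inside a single bag (the standard fact that an unsplittable set is contained in a bag); set $\gamma(v)$ to its node, and for each edge $e$ of $T$ let $\alpha(e)$ be the edge boundary in $G'$ (equivalently, by the sparsification step, in $G$) of the bipartition of $V(G')$ induced by the separation of $\hat G$ at $e$. Property~1 of a $k$-Gomory-Hu tree holds because an adhesion on the $\gamma(u)$--$\gamma(v)$ path separates $D_u$ from $D_v$, so its bipartition separates $u$ from $v$, and $|\alpha(e)|\le$ (adhesion size) $<k$. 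For Property~2, suppose $\lambda_G(u,v)=\mu<k$; then by the dictionary $\hat G$ has no $\mu+1$ vertex-disjoint $Z$--$Z'$ paths for any $Z\subseteq D_u$, $Z'\subseteq D_v$ with $|Z|=|Z'|=\mu+1$ (such $Z,Z'$ exist since $|D_v|=\deg_{G'}(v)+k\ge\mu+1$), so $k$-leanness of $(T,\beta)$ forces some edge $e$ on the $\gamma(u)$--$\gamma(v)$ path to have an adhesion of size $\le\mu$; then $\alpha(e)$ is a $(u,v)$-cutset of $G'$ with $\mu=\lambda_{G'}(u,v)\le|\alpha(e)|\le\mu$, hence a minimum $(u,v)$-cut of $G'$ and, by the sparsification step, of $G$. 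All the post-processing (computing $\gamma$, the induced bipartitions, and the sets $\alpha(e)$, whose total size is $\OO(kn)$) is linear time by standard tree-traversal techniques, and the total running time is $k^{\OO(k^2)}n+\OO(m)$.

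The genuine difficulty lies in the main technical result itself, which is assumed here. Within this reduction, the care-intensive parts are establishing the separator--cutset dictionary of $\hat G$ with exactly the right parameters and disposing of degenerate cases (pairs already disconnected in $G$, parallel edges and self-loops, and keeping the extraction of $\gamma$ and $\alpha$ within linear time), while the conceptual point to get right is that $k$-leanness of a vertex tree decomposition of $\hat G$ is precisely the property that makes some adhesion on each tree path of the Gomory-Hu tree realize a minimum edge cut of $G$.
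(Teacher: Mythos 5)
Your proposal is correct and follows essentially the same route as the paper: Nagamochi--Ibaraki sparsification, a gadget reduction turning edge cuts of $G'$ into vertex separators of an auxiliary graph, one application of the $k$-lean tree decomposition algorithm, and extraction of $\gamma$ and $\alpha$ using the facts that each gadget lies in a single bag and that leanness between subsets of the two gadget bags certifies a small adhesion on the connecting tree path. The only difference is cosmetic: the paper subdivides each edge and replaces each vertex by a clique of size $k$ adjacent to the subdivision vertices of its incident edges, whereas you use a larger $k$-connected port gadget per vertex with port-to-port edges; both yield the same separator--cutset dictionary.
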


The algorithm of \Cref{the:mainkgomoryhu}, like all of our algorithms, is deterministic.

For every constant $k \ge 6$, the algorithm of \Cref{the:mainkgomoryhu} is the first linear-time algorithm for computing the $k$-edge-connected components.
In fact, no linear-time algorithms were known for $k \ge 6$ even for the simpler problem of finding a cut of size less than $k$ if one exists, i.e., the edge connectivity problem.
For edge connectivity, an $\OO(k^2 n \log n + m)$ time algorithm was given by Gabow~\cite{DBLP:journals/jcss/Gabow95}, and an $\OO(m \log^3 n)$ time algorithm by Karger~\cite{DBLP:journals/jacm/Karger00} (see also~\cite{DBLP:journals/jacm/KawarabayashiT19,DBLP:journals/siamcomp/HenzingerRW20,DBLP:conf/soda/Ghaffari0T20}).

As further related work, let us mention that computing the $k$-edge-connected components is also a well-studied problem in the context of directed graphs~\cite{DBLP:conf/icalp/Georgiadis10,DBLP:journals/siamcomp/GeorgiadisIP20,DBLP:conf/focs/GIK24}, and maintaining the edge connectivity and the pairwise $k$-edge-connectivity relation has received a lot of attention in the dynamic setting, see for example~\cite{DBLP:conf/focs/KanevskyTBC91,DBLP:journals/algorithmica/WestbrookT92,DBLP:journals/siamcomp/GalilI93,DBLP:journals/dm/PoutreLO93,DBLP:journals/jacm/EppsteinGIN97,DBLP:journals/algorithmica/DinitzW98,DBLP:journals/siamcomp/PoutreW98,DBLP:journals/siamcomp/Poutre00,DBLP:conf/swat/DinitzN00,DBLP:journals/jacm/HolmLT01,DBLP:conf/soda/HolmRT18,DBLP:conf/wads/PengSS19,DBLP:conf/focs/0001S21,DBLP:conf/soda/GoranciHNSTW23}.

\paragraph{Lean and unbreakable tree decompositions.}
\Cref{the:mainkgomoryhu} is a corollary of our main theorem, which is about computing a certain type of a \emph{tree decomposition} of a graph~\cite{DBLP:journals/jct/RobertsonS84}.
A tree decomposition of a graph $G$ is a pair $(T,\bag)$, where $T$ is a tree and $\bag \colon V(T) \to 2^{V(G)}$ is a mapping that assigns a ``bag'' of vertices for each node of $T$, and must satisfy that
\begin{enumerate}
\item for every edge $uv \in E(G)$, there is $t \in V(T)$ with $\{u,v\} \subseteq \bag(t)$, and
\item for every vertex $v \in V(G)$, the set $\{t \in V(T) : v \in \bag(t)\}$ induces a non-empty connected subtree of $T$.
\end{enumerate}

Conventionally, a tree decomposition is associated with its \emph{width}, i.e., the maximum size of a bag (minus one)~\cite{DBLP:journals/jct/RobertsonS84}.
Tree decompositions of small width exist only for graphs of small treewidth, but in this paper we are interested in tree decompositions that exist for all graphs.
Instead of computing a tree decomposition of small width, we will compute a tree decomposition that has small \emph{adhesions} and whose bags satisfy certain connectivity conditions.
The adhesion at an edge $st \in E(T)$ of a tree decomposition $(T,\bag)$ is the intersection $\adh(st) = \bag(s) \cap \bag(t)$.
The definition of tree decompositions implies that $\adh(st)$ is a separator of $G$ that separates $G$ in the same manner as $st$ separates $T$.

Our main theorem is about computing a \emph{$k$-lean tree decomposition}~\cite{DBLP:journals/siamdm/CarmesinDHH14}, which is a tree decomposition $(T,\bag)$ that satisfies
\begin{enumerate}
\item every adhesion of $(T,\bag)$ has size $<k$, and
\item for all $t_1, t_2 \in V(T)$, $X_1 \subseteq \bag(t_1)$, and $X_2 \subseteq \bag(t_2)$, if there exists an $(X_1,X_2)$-separator\footnote{An $(X_1,X_2)$-separator is a vertex set that intersects all paths starting in $X_1$ and ending in $X_2$, including the one-vertex paths in $X_1 \cap X_2$.} of size $<\min(k, |X_1|, |X_2|)$, then there exists $e \in E(T)$ on the $(t_1,t_2)$-path in $T$ so that $\adh(e)$ is a minimum-size $(X_1,X_2)$-separator.\label{enum:klendefintro:leannes}
\end{enumerate}

We remark that in \Cref{enum:klendefintro:leannes}, it can be that $t_1 = t_2$, and this is indeed an important case.

Let us note the analogy to the $k$-Gomory-Hu tree: The edges of the tree $T$ correspond to cuts/separators of $G$ of size $<k$, and there is a condition stating that these cuts/separators should capture minimum cuts/separators of size $<k$ between certain vertices/sets of vertices.

Our main theorem is the following algorithm for computing a $k$-lean tree decomposition.

\begin{restatable}{theorem}{themainalgklean}
\label{the:mainalgklean}
There is an algorithm that, given an $n$-vertex $m$-edge graph $G$ and an integer $k \ge 1$, in time $k^{\OO(k^2)} n + \OO(m)$ returns a $k$-lean tree decomposition of $G$.
\end{restatable}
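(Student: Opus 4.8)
The plan is to combine the recursive ``contract, recurse, lift'' strategy of Bodlaender's linear-time treewidth algorithm with an iterative bag-splitting procedure in the spirit of Robertson and Seymour's tenth paper and of the known existence proof of $k$-lean tree decompositions. We begin with a single $\OO(m)$-time preprocessing pass: compute a sparse subgraph $G_0 \subseteq G$ with $\OO(kn)$ edges that preserves the relevant $<k$-separator structure (a Nagamochi--Ibaraki-style $k$-connectivity certificate, obtainable in $\OO(m)$ time by scan-first search), and record the few adjacencies of $G$ outside $G_0$ --- which matter only inside very dense, highly $k$-connected lumps --- as compressed annotations on bags. After this it suffices to handle graphs with $\OO(kn)$ edges, so the additive $\OO(m)$ term has been paid once and from now on a multiplicative factor $k^{\OO(1)}$ is free.

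\textbf{Recursive core.} For a current graph $H$ on $n_H$ vertices: if $n_H \le k^{\OO(k)}$, solve directly by iteratively improving the trivial one-bag tree decomposition until it becomes $k$-lean. Otherwise, imitating Bodlaender, find in $\OO(k n_H)$ time a family of vertex-disjoint \emph{local reductions} covering a $1/k^{\OO(1)}$ fraction of $V(H)$ --- either a large matching whose edges lie inside highly $k$-connected regions (to be contracted) or a large set of low-degree vertices whose neighbourhoods are ``cliquey'' with respect to separators of size $<k$ (to be deleted) --- via a counting argument that is valid because $H$ has only $\OO(k n_H)$ edges. Recurse on the reduced graph $H'$, which has $n_{H'} \le (1 - 1/k^{\OO(1)}) n_H$ vertices, to obtain a $k$-lean tree decomposition $(T',\bag')$, and then lift it to $H$. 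Since $n$ shrinks by a $1/k^{\OO(1)}$ factor each round and each round costs $k^{\OO(k^2)}$ times the current $n$, the recursion runs in $k^{\OO(k^2)} n$ time.

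\textbf{Lifting.} This is the crux. Undoing the contractions in $(T',\bag')$ (replace each super-vertex by its at most two preimages in every bag) and reinserting the deleted vertices yields a tree decomposition $(T,\bag)$ of $H$ of adhesion $\OO(k)$ that is $k$-lean \emph{for $H'$}. Since $H$ and $H'$ differ only through the local reductions, the only pairs $(t_1,X_1)$, $(t_2,X_2)$ with $X_i \subseteq \bag(t_i)$ --- crucially including $t_1 = t_2$ --- that may now violate $H$-leanness involve the $\OO(n_H)$ bags or adhesions a reduction touched. We repair these locally: whenever some $\bag(t)$ contains $X_1, X_2$ separated by a set $Z$ of size $<\min(k, |X_1|, |X_2|)$ that is not realized on a tree path, split $t$ along $Z$ (and do the analogous fix across an edge when two bags are involved), then re-inspect the constantly many newly created adhesions; a potential/rank argument bounds the total number of splits. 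Crucially, we never materialize the (possibly huge) interior of a bag: a bag interacts with the rest of $(T,\bag)$ only through its $\OO(k)$-size adhesions, so each split is decided by a dynamic program whose state is a subset of an adhesion together with a target cut size ($k^{\OO(k)}$ states, with the local bookkeeping across the $\OO(k)$ adhesions of a repaired region pushing the per-region cost to $k^{\OO(k^2)}$), and ``is there a small separator here'' queries are answered by querying the recursively computed $k$-lean tree decomposition of $H'$, which was altered only locally. The whole repair costs $k^{\OO(k^2)} n$.

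\textbf{Main obstacle.} I expect the lifting step to be the principal difficulty, for three reasons. First, leanness is a conjunction over all pairs of subsets of all --- including equal --- pairs of bags; designing a bounded-size dynamic-programming characteristic that captures it exactly, and proving that re-gluing bags after a split still gives a valid tree decomposition that eventually becomes lean, has no direct analogue in the treewidth setting, where one optimizes a single parameter. Second, the classical existence proof yields termination but not a polynomial bound on the number of improvements, so a genuinely new rank/potential measure is needed to keep the number of splits near-linear. Third, every split and every separator query must cost only $f(k)$ and never depend on bag size --- which is exactly what the $\OO(k)$ adhesion bound inherited from the contractions buys us --- and simultaneously the additive term must stay $\OO(m)$ rather than $k^{\OO(1)} m$, which is the role of the sparse-certificate preprocessing; reconciling all of this is where the bulk of the technical work lies.
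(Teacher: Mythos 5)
Your first half — the $\OO(m)$ Nagamochi--Ibaraki sparsification followed by the Bodlaender-style recursion (contract a large matching, or eliminate a large independent set of vertices whose neighbourhoods are cliques in the $k$-improved graph, recurse on a graph with $(1-1/k^{\OO(1)})n$ vertices) — is exactly the paper's route, including the observation that the sparsifier must be re-applied at every recursion level and that a $k$-lean decomposition of the sparsifier is one of $G$.

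The gap is in the lifting step, and it is not merely that the details are hard: the premise of your repair strategy is false. After uncontracting the matching, the lifted decomposition is not ``$k$-lean except near the $\OO(n_H)$ touched bags.'' Its adhesions have size up to $2k-2$ \emph{everywhere} (each adhesion vertex may split into two preimages), so the adhesion bound of $k$-leanness fails globally, and all that survives of leanness is the much weaker property that every bag is $(2k,k)$-unbreakable. Consequently the task is not a local repair but a global reconstruction: one must simultaneously drive every adhesion below $k$ and restore the separator-capturing property between arbitrary pairs of (possibly equal) bags. Your proposal contains no mechanism for the adhesion reduction at all, and your ``split $\bag(t)$ along $Z$ and re-inspect'' loop is exactly the Bellenbaum--Diestel improvement process, for which no near-linear bound on the number of refinements is known on general graphs (the known implementation gives $k^{\OO(k)} n^{\OO(1)}$). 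The paper escapes this by a long detour you do not have: it passes to hypergraphs and superbranch decompositions, massages the internal separations into \emph{doubly well-linked} (resp.\ mixed-$k$-well-linked) ones — which can be committed to greedily, in analogy with minimum cuts in Gomory--Hu trees — uses \emph{$k$-tangle-unbreakability} of torsos to justify contracting all adhesions of size $\ge k$, and only then, on torsos that are $(k^{\OO(k)},k)$-unbreakable, runs the Bellenbaum--Diestel refinement, where unbreakability forces every non-lean witness to have one side of size $<k^{\OO(k)}$, makes witnesses findable by local search, and makes the potential argument terminate in linearly many steps. Your DP ``with state a subset of an adhesion'' also cannot certify the $t_1=t_2$ case of leanness, which quantifies over subsets of a single bag of unbounded size; this case is handled in the paper precisely via the structure of well-structured witnesses in unbreakable graphs, not via adhesion-indexed states.
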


\Cref{the:mainalgklean} implies \Cref{the:mainkgomoryhu} via a simple reduction that first applies the Nagamochi-Ibaraki sparsifier~\cite{DBLP:journals/algorithmica/NagamochiI92} to reduce the number of edges to $\le kn$, then subdivides each edge, and then replaces the original vertices by cliques of size $k$.

Even the fact that a $k$-lean tree decompositions exist for every graph is a non-trivial theorem.
Lean tree decompositions, which correspond to $k$-lean with $k=\infty$, were introduced and proved to exist by Thomas~\cite{DBLP:journals/jct/Thomas90}.
Bellenbaum and Diestel~\cite{bellenbaum2002two} simplified the proof of Thomas and introduced the term ``lean''.
The definition of $k$-lean tree decompositions was introduced by Carmesin, Diestel, Hamann, and Hundertmark~\cite{DBLP:journals/siamdm/CarmesinDHH14}, who observed that the Bellenbaum-Diestel proof can be directly adapted to show the existence of $k$-lean tree decompositions.

To the author's knowledge, there was no prior algorithmic work about computing $k$-lean tree decompositions, but instead the weaker notion of \emph{unbreakable tree decompositions}~\cite{DBLP:journals/siamcomp/CyganLPPS19} has received a lot of attention in the literature of parameterized graph algorithms.
For integers $s \ge k \ge 1$, a tree decomposition $(T,\bag)$ is \emph{$(s,k)$-unbreakable} if for every node $t \in V(T)$, there is no vertex cut $(A,B)$\footnote{A pair $(A,B)$ with $A \cup B = V(G)$ is a \emph{vertex cut} if there are no edges between $A \setminus B$ and $B \setminus A$.} of $G$ so that $|A \cap B| < k$, $|A \cap \bag(t)| \ge s$, and $|B \cap \bag(t)| \ge s$.

Unbreakable tree decompositions were introduced by Cygan, Lokshtanov, Pilipczuk, Pilipczuk, and Saurabh~\cite{DBLP:journals/siamcomp/CyganLPPS19}, who gave an $2^{\OO(k^2)} n^2 m$ time algorithm for computing a $(2^{\OO(k)}, k)$-unbreakable tree decomposition with adhesion size $2^{\OO(k)}$, and used it to show that minimum bisection is fixed-parameter tractable.
Their work was motivated by an earlier technique called ``recursive understanding''~\cite{DBLP:conf/stoc/GroheKMW11,DBLP:conf/focs/KawarabayashiT11,DBLP:journals/siamcomp/ChitnisCHPP16}, which can be seen as a predecessor of unbreakable tree decompositions in this context.
After~\cite{DBLP:journals/siamcomp/CyganLPPS19}, unbreakable tree decompositions became a standard tool in the area and were used in several works, for example~\cite{DBLP:conf/icalp/LokshtanovR0Z18,DBLP:conf/focs/Lokshtanov0S20,DBLP:journals/siamdm/AgrawalKPRS22,DBLP:conf/icalp/PilipczukSSTV22,DBLP:conf/esa/0002L0S23,DBLP:conf/lics/SchirrmacherSST24}.

As for computing unbreakable tree decompositions, Cygan, Komosa, Lokshtanov, Pilipczuk, Pilipczuk, Saurabh, and Wahlström~\cite{DBLP:journals/talg/CyganKLPPSW21} used the Bellenbaum-Diestel technique~\cite{bellenbaum2002two} to give a $k^{\OO(k)} n^{\OO(1)}$ time algorithm for computing a tree decomposition with adhesion size $<k$ that is $(i,i)$-unbreakable for all $i \le k$.
However, the $n^{\OO(1)}$ factor is far from linear, and the authors asked as an open problem whether a parameterized ``near-linear'' time algorithm for unbreakable tree decompositions exists.
Very recently, Anand, Lee, Li, Long, and Saranurak~\cite{DBLP:journals/corr/abs-2408-09368} gave a parameterized ``close-to-linear'' time algorithm for unbreakable tree decompositions, computing an $(\OO(k/\varepsilon), k)$-unbreakable tree decomposition with adhesion size $\OO(k/\varepsilon)$ in time $(k/\varepsilon)^{\OO(k/\varepsilon)} m^{1+\varepsilon}$, for any given $k \ge 1$ and $\varepsilon < 1$.

It can be observed that a $k$-lean tree decomposition is $(i,i)$-unbreakable for all $i \le k$, and therefore \Cref{the:mainalgklean} gives a $k^{\OO(k^2)} n + \OO(m)$ time algorithm for computing a tree decomposition that is $(i,i)$-unbreakable for all $i \le k$ and has adhesion size $<k$.
This is the first parameterized linear-time algorithm for unbreakable tree decompositions, answering the question of~\cite{DBLP:journals/talg/CyganKLPPSW21} in a strong sense.
The unbreakability bounds match the bounds in~\cite{DBLP:journals/talg/CyganKLPPSW21}, which are indeed the best possible for the adhesion size bound $k$, while the running time dependence on the parameter $k$ is worse than in the previous algorithms.

As further related work, let us mention the work of Grohe and Schweitzer~\cite{DBLP:journals/siamdm/GroheS16,DBLP:conf/icalp/Grohe16} on algorithms with running time $n^{\OO(k)}$ for computing a \emph{$k$-tangle tree decomposition}~\cite{RobertsonS91}, which is a structure that is in some ways similar to lean and unbreakable tree decompositions, and is closely related to the techniques we use in the proof of \Cref{the:mainalgklean}.

\paragraph{Vertex connectivity.}
The \emph{vertex connectivity} of a graph $G$ is the minimum number of vertices that need to be deleted from $G$ to disconnect it into at least two components.
Equivalently, it is the minimum size of a \emph{proper vertex separator} of $G$, which is a set $S \subseteq V(G)$ so that $G \setminus S$ has at least two connected components.
We obtain the following algorithm as a straightforward application of~\Cref{the:mainalgklean}.

\begin{restatable}{corollary}{theglobalvertexcut}
\label{the:mainglobalvertexcut}
There is an algorithm that, given an $n$-vertex $m$-edge graph $G$ and an integer $k \ge 1$, in time $k^{\OO(k^2)} n + \OO(m)$ returns a proper vertex separator of $G$ of size $<k$ if any exist.
\end{restatable}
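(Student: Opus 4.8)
The plan is to invoke \Cref{the:mainalgklean} to compute a $k$-lean tree decomposition $(T,\bag)$ of $G$ in time $k^{\OO(k^2)}n+\OO(m)$, clean it up into a ``reduced'' form, and then simply read off an adhesion. The clean-up is: while $T$ has a leaf $w$ whose bag is contained in the bag of its unique neighbour $s_w$, delete $w$ from $T$. Each such deletion keeps $(T,\bag)$ a tree decomposition of $G$ (every edge or vertex witnessed in $\bag(w)$ is also witnessed in $\bag(s_w)$), keeps $\bigcup_{t}\bag(t)=V(G)$, keeps all adhesions of size $<k$, and preserves $k$-leanness: for two surviving nodes $t_1,t_2$ the leaf $w$ does not lie on the path from $t_1$ to $t_2$ in $T$, so that path and all its edges are unchanged, and in particular the edge whose adhesion witnesses the leanness condition for given $X_1\subseteq\bag(t_1)$, $X_2\subseteq\bag(t_2)$ still does. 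With standard bookkeeping — maintain for each vertex of $G$ whether it currently occurs in exactly one bag, and observe that a leaf is deletable precisely when no vertex occurs only in it — this reduction runs in time linear in the size of $(T,\bag)$, so the whole preprocessing stays within budget. Let $(T,\bag)$ from now on denote the reduced decomposition.

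If $T$ has exactly one node $r$, then $\bag(r)=V(G)$ (the reduction preserves the union of the bags), and I claim $G$ has no proper vertex separator of size $<k$, so the algorithm reports this. Suppose toward a contradiction that $S$ is such a separator, put $\ell=|S|<k$, and let $C_1,C_2$ be two distinct components of $G\setminus S$. Then $X_1:=C_1\cup S$ and $X_2:=C_2\cup S$ are subsets of $\bag(r)$, each of size at least $\ell+1$, and every path starting in $X_1$ and ending in $X_2$ — including every one-vertex path in $X_1\cap X_2=S$ — meets $S$; hence $S$ is an $(X_1,X_2)$-separator of size $\ell<\min(k,|X_1|,|X_2|)$. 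The defining property of $k$-lean tree decompositions, applied with $t_1=t_2=r$, would then produce an edge on the path from $r$ to $r$ in $T$, which has no edges — a contradiction.

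If instead $T$ has at least one edge, I claim that every adhesion of $(T,\bag)$ is a proper vertex separator of $G$, so the algorithm may output any one of them; by definition its size is $<k$. Fix $st\in E(T)$, let $T_s$ and $T_t$ be the two components of $T-st$, and let $V_s$ and $V_t$ be the unions of the bags over $T_s$ and over $T_t$. Then $V_s\cap V_t=\adh(st)$, $V_s\cup V_t=V(G)$, and (as noted right after the definition of tree decompositions) $\adh(st)$ separates $V_s\setminus\adh(st)$ from $V_t\setminus\adh(st)$ in $G$. If $\adh(st)$ were not a proper separator of $G$, one of these two sets would be empty, say $V_s=\adh(st)\subseteq\bag(t)$; then every bag over $T_s$ is contained in $\bag(t)$, so taking a leaf $w$ of $T$ inside $T_s$ (one exists, since the subtree $T_s$ always contains a leaf of $T$), every vertex of $\bag(w)$ lies in both $\bag(w)$ and $\bag(t)$, hence, by the connectivity axiom of tree decompositions applied along the path in $T$ from $w$ to $t$, also in $\bag(s_w)$; thus $\bag(w)\subseteq\bag(s_w)$, contradicting that $(T,\bag)$ is reduced.

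Putting the pieces together gives the algorithm and its correctness: it outputs an adhesion exactly when the reduced tree has an edge, and in that case the adhesion is a proper vertex separator of size $<k$; when the reduced tree is a single node it correctly reports that no proper separator of size $<k$ exists. The total running time is the $k^{\OO(k^2)}n+\OO(m)$ of \Cref{the:mainalgklean} plus the linear-time reduction. I expect the main obstacle to be the reduction step: verifying that deleting bag-dominated leaves preserves $k$-leanness, that it can be carried out in linear time, and that a single-node $k$-lean tree decomposition genuinely certifies the absence of small proper separators via the $t_1=t_2$ case of the leanness condition.
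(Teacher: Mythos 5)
Your proof is correct and follows essentially the same route as the paper's: both cases reduce to (a) a single bag equal to $V(G)$, where the $t_1=t_2$ instance of the leanness condition rules out any proper separator of size $<k$, and (b) otherwise outputting an adhesion that has vertices of $G$ strictly on both sides. The only difference is cosmetic --- the paper locates such an adhesion by rooting $T$ and picking a deepest node $t$ with $\bag(t)\neq\adh(tp)$, whereas you prune bag-dominated leaves first and then argue that every surviving adhesion works.
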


The algorithm of \Cref{the:mainglobalvertexcut} is the first algorithm for testing if vertex connectivity is less than $k$ that works in linear-time for every fixed $k$.
Previously, linear-time algorithms were known only for $k \le 3$~\cite{DBLP:journals/siamcomp/Tarjan72,HopcroftT73}.
Even though there was a long line of work for the problem for $k \ge 4$~\cite{kleitman1969methods,DBLP:journals/siamcomp/Even75,DBLP:journals/siamcomp/EvenT75,DBLP:journals/combinatorica/LinialLW88,DBLP:journals/jcss/KanevskyR91,DBLP:journals/algorithmica/NagamochiI92,DBLP:journals/jal/HenzingerRG00,DBLP:journals/jacm/Gabow06}, before 2019 the best running time for $k=4$ (and for any constant $k$) was $\OO(k^2 n^2)$ achieved by~\cite{DBLP:journals/jcss/KanevskyR91,DBLP:journals/algorithmica/NagamochiI92}.
The quadratic barrier was broken by Nanongkai, Saranurak, and Yingchareonthawornchai~\cite{DBLP:conf/stoc/NanongkaiSY19}, and shortly afterwards Forster, Nanongkai, Yang, Saranurak, and Yingchareonthawornchai achieved the running time $\OO(k^3 n \log^{\OO(1)} n + m)$~\cite{DBLP:conf/soda/ForsterNYSY20} (see also~\cite{DBLP:conf/focs/SaranurakY22,DBLP:journals/corr/abs-2308-04695}).
For unbounded $k$, Li, Nanongkai, Panigrahi, Saranurak, and Yingchareonthawornchai~\cite{DBLP:conf/stoc/LiNPSY21} gave an almost-linear $m^{1+o(1)}$ time algorithm via almost-linear time max-flow~\cite{DBLP:conf/focs/ChenKLPGS22}.

\paragraph{Element connectivity.}
As a $k$-lean tree decomposition is a rather strong representation of vertex separators, it would be tempting to use \Cref{the:mainalgklean} to also obtain a $k$-Gomory-Hu tree for vertex connectivity.
However, no ($k$-)Gomory-Hu trees exist for vertex connectivity~\cite{DBLP:journals/siamcomp/Benczur95}, so we settle for the simpler notion of \emph{element connectivity}~\cite{DBLP:journals/jal/JainMVW02}.

For a set $U \subseteq V(G)$ and two vertices $u,v \in U$, we say that a set $S \subseteq (V(G) \setminus U) \cup E(G)$ is an $U$-element $(u,v)$-cutset if $u$ and $v$ are in different connected components of $G \setminus S$.
In particular, the vertices in $U$ are treated as undeletable, while we are allowed to cut any other vertices and edges.
It turns out that $k$-Gomory-Hu trees exist for element connectivity~\cite{schrijver2003combinatorial,DBLP:conf/esa/ChekuriRX15,DBLP:conf/stoc/PettieSY22}.
An \emph{element connectivity $k$-Gomory-Hu tree} of a pair $(G,U)$ is a triple $(T,\gamma,\alpha)$, where $T$ is a tree, $\gamma \colon U \to V(T)$ maps vertices in $U$ to nodes of $T$, $\alpha \colon E(T) \to 2^{(V(G) \setminus U) \cup E(G)}$ maps edges of $T$ to subsets of $(V(G) \setminus U) \cup E(G)$, and
\begin{enumerate}
\item for all $u,v \in U$ and $e \in E(T)$ so that $e$ is on the $(\gamma(u), \gamma(v))$-path in $T$, the set $\alpha(e)$ is an $U$-element $(u,v)$-cutset of size $|\alpha(e)| < k$, and
\item for all $u, v \in U$, if there exists an $U$-element $(u,v)$-cutset of size $<k$, then there exists $e \in E(T)$ on the $(\gamma(u),\gamma(v))$-path so that $\alpha(e)$ is a minimum-size $U$-element $(u,v)$-cutset.
\end{enumerate}

Via a similar reduction to \Cref{the:mainalgklean} as for \Cref{the:mainkgomoryhu}, we obtain the following algorithm for computing an element connectivity $k$-Gomory-Hu tree.

\begin{restatable}{corollary}{theelementkgomoryhu}
\label{the:mainelementgomoryhu}
There is an algorithm that, given an $n$-vertex $m$-edge graph $G$, a set $U \subseteq V(G)$, and an integer $k \ge 1$, in time $k^{\OO(k^2)} n + \OO(m)$ returns an element connectivity $k$-Gomory-Hu tree of $(G,U)$.
\end{restatable}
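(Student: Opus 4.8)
The plan is to mimic the reduction used to derive \Cref{the:mainkgomoryhu} from \Cref{the:mainalgklean}, applying the clique blow-up only to the vertices of~$U$. Given $(G,U,k)$, I would first replace $G$ by a sparse graph: apply a linear-time sparsification producing a subgraph $\hat G \subseteq G$ with $U \subseteq V(\hat G)$ and $\OO(kn)$ edges such that for every $xy \in E(G) \setminus E(\hat G)$ the pair $x,y$ is $k$-element-connected in $(\hat G, U)$. As in the proof of \Cref{the:mainkgomoryhu}, this certificate property implies not only that all $U$-element cut values $<k$ agree in $G$ and $\hat G$, but that every minimum $U$-element cutset of $\hat G$ of size $<k$ is also a minimum $U$-element cutset of $G$, and conversely. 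Then I would build $G'$ by subdividing every edge $e$ of $\hat G$ once (with a new vertex $x_e$) and replacing every $u\in U$ by a clique $K_u$ on $k$ fresh vertices, making each subdivision vertex incident to $u$ adjacent to all of $K_u$, while keeping each vertex of $V(\hat G)\setminus U$ as a single deletable vertex. Now $G'$ has $\OO(kn)$ vertices and $\OO(k^2 n)$ edges, so running the algorithm of \Cref{the:mainalgklean} on $(G',k)$ costs $k^{\OO(k^2)}n$, which together with the $\OO(m)$ spent on sparsification and construction stays within the claimed bound; it returns a $k$-lean tree decomposition $(T,\bag)$ of $G'$.

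The conceptual core is a correspondence lemma: for $u,v\in U$, the minimum size of a $U$-element $(u,v)$-cutset in $\hat G$ equals the minimum size of a $(K_u,K_v)$-separator in $G'$, and \emph{every} minimum $(K_u,K_v)$-separator $S$ of $G'$ with $|S|<k$ is disjoint from $\bigcup_{w\in U}K_w$, hence translates (subdivision vertices $x_e\mapsto e$, all other vertices kept) to a minimum $U$-element $(u,v)$-cutset of $\hat G$, and therefore of $G$. The direction from cutset to separator is immediate; the disjointness is the nontrivial point. Since $S$ is minimum it is a minimal $(K_u,K_v)$-separator, so each $s\in S$ has a neighbour in the component $C_u$ of $K_u\setminus S$ and in the component $C_v$ of $K_v\setminus S$ (here $K_u\setminus S$ and $K_v\setminus S$ are nonempty cliques because $|S|<k$). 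If $s\in K_w$, then $K_w\setminus S$ is again a nonempty clique and lies in a single component $C^*$; as $s$'s only neighbours are $K_w\setminus S$ and subdivision vertices $x_e$ with $e\ni w$, and each such $x_e$ is adjacent to all of $K_w\setminus S$, one concludes that $s$'s neighbour in $C_u$ and its neighbour in $C_v$ both force $C_u=C^*=C_v$, a contradiction.

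With the lemma in hand, the extraction is routine. For each $u\in U$ pick a node $t_u\in V(T)$ with $K_u\subseteq\bag(t_u)$ (this exists because $K_u$ is a clique, so the subtrees $\{t: x\in\bag(t)\}$, $x\in K_u$, pairwise intersect and hence, by the Helly property for subtrees of a tree, share a common node). Contract in $T$ every edge whose adhesion meets some $K_w$, obtaining a tree $T'$; let $\gamma(u)$ be the image of $t_u$, and for each surviving edge $e$ let $\alpha(e)$ be the translation of $\adh(e)$ into $(V(G)\setminus U)\cup E(G)$, which is well defined since $\adh(e)$ avoids all cliques. Property~1 holds because a surviving edge $e$ on the $(\gamma(u),\gamma(v))$-path has $|\adh(e)|<k=|K_u|=|K_v|$ and separates $K_u$ from $K_v$ in $G'$, so its translation is a $U$-element $(u,v)$-cutset of size $<k$ in $G$; in particular, if $u$ and $v$ are $k$-element-connected then every such edge is contracted and $\gamma(u)=\gamma(v)$, as required. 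Property~2 follows by applying the leanness condition (\Cref{enum:klendefintro:leannes}) to $(T,\bag)$ with $t_1=t_u$, $t_2=t_v$, $X_1=K_u$, $X_2=K_v$: here $\min(k,|X_1|,|X_2|)=k$, so whenever a $(K_u,K_v)$-separator of size $<k$ exists — equivalently whenever a $U$-element $(u,v)$-cutset of size $<k$ exists — there is an edge $e$ on the $(t_u,t_v)$-path with $\adh(e)$ a minimum $(K_u,K_v)$-separator; by the lemma this adhesion avoids the cliques, so $e$ survives in $T'$, lies on the $(\gamma(u),\gamma(v))$-path, and $\alpha(e)$ is a minimum $U$-element $(u,v)$-cutset.

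The main obstacle I anticipate is securing the linear-time element-connectivity sparsifier with the strong guarantee that \emph{small cuts are preserved verbatim} (not merely that cut values are preserved), since the usual Nagamochi–Ibaraki forest-decomposition argument yields pairwise edge-disjoint connecting paths but not pairwise \emph{element}-disjoint ones; I would either establish such a certificate directly for element connectivity, or route the construction through an auxiliary graph so that only an ordinary vertex/edge sparsifier on a bounded-degree host is needed. A secondary, more bookkeeping-level point is to verify that after contracting the clique-touching adhesions the maps $\gamma$ and $\alpha$ remain simultaneously consistent for all pairs in $U$ — in particular that $k$-element-connected pairs are sent to the same node of $T'$, which is exactly what the argument for Property~1 above provides.
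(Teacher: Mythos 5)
Your overall reduction---sparsify, subdivide edges, blow up each $u\in U$ into a $k$-clique, compute a $k$-lean tree decomposition, read off $\gamma$ and $\alpha$---is exactly the paper's, and your correspondence lemma together with its minimality argument is correct. The one step you explicitly leave open, the sparsifier, has a much simpler resolution than either workaround you sketch: the paper just applies the ordinary Nagamochi--Ibaraki vertex-connectivity $k$-sparsifier (\Cref{the:nispars}) to $G$. By \Cref{lem:delhighconn}, every deleted edge $xy$ leaves $k$ internally vertex-disjoint $(x,y)$-paths in the sparsifier $G'$, and such paths are pairwise disjoint in their internal vertices and edges, hence in the elements of $(V(G)\setminus U)\cup E(G)$ they use (internal vertices lying in $U$ are undeletable anyway). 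A $U$-element cutset $S$ of $G'$ with $|S|<k$ therefore misses at least one of these paths, so $x$ and $y$ stay connected in $G'\setminus S$ whenever they do in $G\setminus S$; consequently every $U$-element cutset of size $<k$ of $G'$ is verbatim a $U$-element cutset of $G$, and no element-connectivity-specific certificate is needed. So your ``main obstacle'' is not an obstacle, but as written your proof is incomplete at that point.

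For the extraction step you take a genuinely different, and valid, route. You prove that every \emph{minimum} $(K_u,K_v)$-separator of size $<k$ in the gadget graph avoids all cliques, and then contract the tree edges whose adhesions touch a clique. The paper instead post-processes the decomposition itself: it picks for each $v\in U$ the node $t_v$ whose bag contains the clique $Y_v$ and deletes the clique vertices from all other bags, arguing that the result is still a tree decomposition (the nontrivial point being $x_{uv}\in\bag(t_v)$, which follows because $x_{uv}$ is adjacent to all of $Y_v$ and the adhesion size is $<k$) and still $k$-lean; after this, \emph{all} adhesions avoid the cliques, so $\alpha$ is read off with no contraction and no claim about arbitrary minimum separators. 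Your version trades the bag-modification correctness claim for the minimal-separator argument; both are sound, and your handling of $k$-element-connected pairs (every edge on the $(t_u,t_v)$-path must touch a clique and hence be contracted, forcing $\gamma(u)=\gamma(v)$) correctly closes the consistency issue you flag at the end.
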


In fact, \Cref{the:mainkgomoryhu} is a special case of \Cref{the:mainelementgomoryhu} obtained by setting $U = V(G)$.
The algorithm of \Cref{the:mainelementgomoryhu} also implies a $k^{\OO(k^2)} (n+m) + \OO(M)$ time algorithm for computing a $k$-Gomory-Hu tree of a hypergraph (in the context of hyperedge cuts), where $m$ is the number of hyperedges and $M$ the total size of them, since hyperedge connectivity can be reduced to element connectivity (see e.g.~\cite{DBLP:journals/orl/CheriyanZ12} for the reduction).

Previously, Pettie, Saranurak, and Yin~\cite{DBLP:conf/stoc/PettieSY22} gave a $k \cdot m^{1+o(1)}$ time algorithm for computing an element connectivity $k$-Gomory-Hu tree, based on the Gomory-Hu tree algorithm of Li and Panigrahi~\cite{DBLP:conf/stoc/LiP21}.
The authors of~\cite{DBLP:conf/stoc/PettieSY22} used the element connectivity $k$-Gomory-Hu tree for constructing a data structure for answering pairwise vertex connectivity queries.
Our \Cref{the:mainelementgomoryhu} could also be applied with color coding~\cite{DBLP:journals/jacm/AlonYZ95} to construct a data structure that answers pairwise vertex connectivity queries up to the connectivity value $k$, but this would involve either randomization (yielding a data structure that is correct for a given pair with probability $0.99$ with one-sided error), or an $\OO(\log n)$-factor in the running time.

\paragraph{Our techniques.}
Previous linear-time algorithms for $k$-(edge)-connected components~\cite{DBLP:journals/siamcomp/Tarjan72,HopcroftT73,DBLP:conf/esa/NadaraRSS21,DBLP:conf/esa/GeorgiadisIK21,DBLP:conf/soda/Kosinas24} followed an approach based on analyzing the properties of depth-first search trees.
In this paper, we take a different approach, and our main ``linear-time engine'' instead comes from a matching contraction technique that Bodlaender~\cite{Bodlaender96} introduced for his parameterized linear-time algorithm for treewidth.
By generalizing Bodlaender's technique, we show that computing a $k$-lean tree decomposition reduces to a problem where we can assume that the input already contains a $(2k,k)$-unbreakable tree decomposition with adhesion size~$\le 2k$, and we only need to improve it to $k$-lean.
The sparsifier of Nagamochi and Ibaraki~\cite{DBLP:journals/algorithmica/NagamochiI92} plays a large role in our proof of this generalized version of Bodlaender's technique.

The majority of this paper is about giving the algorithm for improving the given unbreakable tree decomposition into $k$-lean.
For this algorithm, our main new insights are about exploiting so-called ``doubly well-linked'' separations.
These are separations can be used to ``greedily'' decompose the graph, analogously to decomposing along minimum cuts in the setting of computing Gomory-Hu trees~\cite{gomory1961multi}.
Furthermore, doubly well-linked separations turn out to be relatively easy to find via local-search type computations in unbreakable tree decompositions.
The definition of doubly well-linked separations and the techniques for exploiting them are inspired by ``doubly robust'' separations used by Robertson and Seymour~\cite{RobertsonS91}.

Our algorithm proceeds by improving the input tree decomposition step by step, finally arriving at a $(k^{\OO(k)}, k)$-unbreakable tree decomposition with adhesion size $<k$, whose adhesions correspond to doubly well-linked separations.
At this point, we can turn the tree decomposition into $k$-lean simply by computing $k$-lean tree decompositions of all ``torsos'' of the decomposition, which are $(k^{\OO(k)}, k)$-unbreakable graphs, and stitching them together.
For computing $k$-lean tree decomposition of unbreakable graphs, we use an approach based on the Bellenbaum-Diestel proof~\cite{bellenbaum2002two}.

We believe that an important conceptual ingredient of our algorithm is that during most of the improvement algorithm we work with hypergraphs, and in particular, with Robertson-Seymour style separations of hypergraphs~\cite{RobertsonS91} (we explain what this means in \Cref{sec:overview}).
This allows for elegant formulations of the submodularity of vertex separators and the concept of ``gluing vertices together''.
In fact, we decided to use this set of definitions because the usual definitions for vertex separators failed in subtle ways for some of our key lemmas.

A question that is natural to ask about our techniques is whether our result for $k$-edge-connected components could be obtained directly without going through notions that are about vertex connectivity, like $k$-lean tree decompositions.
The answer to this appears to be currently negative.
In particular, the first, and perhaps the most significant barrier is that the matching contraction technique of Bodlaender~\cite{Bodlaender96} seems to not work in the setting of edge connectivity, because contracting a matching approximately preserves small balanced vertex separators, but could completely destroy small balanced edge cuts.

Our paper is almost completely self-contained, the only external ingredients whose proofs are not given here are the sparsifier of Nagamochi and Ibaraki~\cite{DBLP:journals/algorithmica/NagamochiI92} and the branchwidth-tangle duality theorem of Robertson and Seymour~\cite{RobertsonS91}, both of which have proofs in less than 10 pages.

\paragraph{Organization of the paper.}
We start by giving an overview of our algorithm in \Cref{sec:overview}.
Then, in \Cref{sec:prelis} we present formal definitions and preliminary results.
In \Cref{sec:roadmap} we break the proof of \Cref{the:mainalgklean} down to 9 ``main lemmas'', then prove \Cref{the:mainalgklean} assuming these main lemmas, and then prove \Cref{the:mainkgomoryhu}, \Cref{the:mainglobalvertexcut}, and \Cref{the:mainelementgomoryhu} assuming \Cref{the:mainalgklean}.
The rest of the paper, consisting of \Cref{sec:bodl,sec:gt,sec:localsearch,sec:manipulatingsuperbranch,sec:downwl,sec:mixedkwldecomp,sec:tangubrtorsos,sec:smalleradhesions,sec:fromtubrtoubr,sec:combin,sec:lean}, is then dedicated to proving these main lemmas one by one, relatively independently of each other.
Further explanation of the organization of \Cref{sec:bodl,sec:gt,sec:localsearch,sec:manipulatingsuperbranch,sec:downwl,sec:mixedkwldecomp,sec:tangubrtorsos,sec:smalleradhesions,sec:fromtubrtoubr,sec:combin,sec:lean} is given in \Cref{sec:roadmap}.

\section{Overview}
\label{sec:overview}
We then give an overview of our proofs.
We focus only on the proof of \Cref{the:mainalgklean}; the proofs of \Cref{the:mainkgomoryhu}, \Cref{the:mainglobalvertexcut}, and \Cref{the:mainelementgomoryhu} assuming \Cref{the:mainalgklean} are relatively short and easy, and are given in \Cref{sec:roadmap}.

The algorithm of \Cref{the:mainalgklean} has two conceptually distinct parts.
The first part is a self-reduction technique that shows that to give a $k^{\OO(k^2)} n + \OO(m)$ time algorithm for computing a $k$-lean tree decomposition, it suffices to give such an algorithm for computing a $k$-lean tree decomposition with an additional assumption that the input already contains a $(2k,k)$-unbreakable tree decomposition with adhesion size $\le 2k$.
The second part is to actually give such an ``improvement algorithm''.
We sketch the proof of the first part in \Cref{subsec:ovgenbodl}, and then overview of the second part in \Cref{subsec:ovimprover}.

\subsection{Generalized Bodlaender's technique}
\label{subsec:ovgenbodl}
Assume that we have an algorithm $\mathcal{A}$, that takes as input a graph $G$, an integer $k \ge 1$, and a $(2k,k)$-unbreakable tree decomposition of $G$ with adhesion size $\le 2k$, and in time $k^{\OO(k^2)} m$ outputs a $k$-lean tree decomposition of $G$.
Let us show that $\mathcal{A}$ can be used to construct a $k^{\OO(k^2)} n + \OO(m)$ time algorithm for computing a $k$-lean tree decomposition from scratch.

First, we use the algorithm of Nagamochi and Ibaraki~\cite{DBLP:journals/algorithmica/NagamochiI92} to in time $\OO(m)$ compute a subgraph $G'$ of the input graph $G$ with $|E(G')| \le k \cdot |V(G)|$, so that any vertex cut $(A,B)$ of $G'$ with $|A \cap B| < k$ is also a vertex cut of $G$.
It is not hard to show that any $k$-lean tree decomposition $(T,\bag)$ of $G'$ is also a $k$-lean tree decomposition of $G$:
To show that $(T,\bag)$ is a tree decomposition of $G$, we observe that if the algorithm of~\cite{DBLP:journals/algorithmica/NagamochiI92} removes an edge $uv \in E(G) \setminus E(G')$, then it leaves in $G'$ a set of $k$ internally vertex-disjoint $(u,v)$-paths.
Together with the fact that $(T,\bag)$ has adhesion size $<k$, this implies that $(T,\bag)$ must contain a bag containing both $u$ and $v$.
Then, $(T,\bag)$ is $k$-lean because any separator of $G$ witnessing otherwise would also be a separator of $G'$ witnessing that $(T,\bag)$ is not a $k$-lean tree decomposition of $G'$.
Therefore, we can now assume that $m \le kn$.

We then give a recursive algorithm for computing a $k$-lean tree decomposition with the help of $\mathcal{A}$.
We start by computing in $\OO(m) = \OO(kn)$ time an inclusion-wise maximal matching $M$ of the input graph $G$.
Assume first that $|M| \ge n/\OO(k^3)$.
Now, we contract all edges in $M$ to obtain the graph $G' = G \contr M$, call our algorithm recursively with $G'$ to compute a $k$-lean tree decomposition $(T',\bag')$ of $G'$, and then ``uncontract'' $(T',\bag')$ along the matching $M$ to obtain a tree decomposition $(T^*,\bag^*)$ of $G$.
Here, the uncontraction means replacing each vertex $w_{uv}$ corresponding to the contraction of an edge $uv$ with the two vertices $u$ and $v$.
It is not hard to show that the uncontraction operation at most doubles the adhesion size, so the adhesion size of $(T^*,\bag^*)$ is at most $2k-2$, and furthermore, because each bag of $(T',\bag')$ is $(i,i)$-unbreakable for all $i \le k$, it can be shown that each bag of $(T^*,\bag^*)$ is $(2k,k)$-unbreakable.
Then, we can apply the algorithm $\mathcal{A}$ with $G$ and $(T^*,\bag^*)$ to obtain a $k$-lean tree decomposition of $G$, and then return it.
Because $|V(G')| \le |V(G)| \cdot \left(1-\frac{1}{\OO(k^3)}\right)$, the total running time of this recursive algorithm sums up to $k^{\OO(k^2)} n$.
Note that the Nagamochi-Ibaraki sparsifier needs to be applied at each level of the recursion.

The above described recursion step works when $|M| \ge n/\OO(k^3)$.
Our goal is to also achieve a similar recursion step that eliminates a large fraction of vertices when $|M| < n/\OO(k^3)$.
In the case, we would like to conclude that because $V(M)$ is a relatively small vertex cover of $G$ (i.e., every edge of $G$ is incident to $V(M)$), the graph $G$ is ``star-like'' in some sense, and we can ``eliminate'' a significant number of ``leaves'' of this star.
More formally, we define that the \emph{$k$-improved graph} $I_k(G)$ of $G$ is the supergraph of $G$ that contains all edges of $G$ and additionally an edge between every pair $u,v \in V(G)$ for which there is a collection of $k$ internally vertex-disjoint paths between $u$ and $v$.
Now, we say that a vertex $v$ is \emph{$I_k$-simplicial} if the neighborhood of $v$ is a clique in $I_k(G)$.
We manage to show, by using the Nagamochi-Ibaraki sparsifier in an interesting manner, that when $|M| < n/\OO(k^3)$, we can find in $k^{\OO(1)} n$ time an independent set $I \subseteq V(G) \setminus V(M)$ of $|I| \ge \Omega(n)$ $I_k$-simplicial vertices of degree $\le 4k$.
Then we construct the graph $G' = G \elim I$ by turning the neighborhood of each $v \in I$ into a clique and removing $I$, then apply our algorithm recursively to compute a $k$-lean tree decomposition $(T',\bag')$ of $G'$, and then use the fact that the vertices in $I$ are $I_k$-simplicial and the algorithm $\mathcal{A}$ to lift $(T',\bag')$ into a $k$-lean tree decomposition of $G$.

Let us remark that the above two cases of either contracting a large matching or eliminating a large set of $I_k$-simplicial vertices follow the outline of Bodlaender's algorithm~\cite{Bodlaender96}.
However, the analysis that a small maximal matching implies many $I_k$-simplicial vertices is significantly more complicated in our case than in Bodlaender's (where he could also make the third conclusion that the graph has treewidth $>k$), so we regard this part of our algorithm as a non-trivial generalization of the Bodlaender's technique.

\subsection{The improvement algorithm}
\label{subsec:ovimprover}
We then overview the improvement algorithm, that takes as input a graph $G$, an integer $k \ge 1$, and a $(2k,k)$-unbreakable tree decomposition of $G$ with adhesion size $\le 2k$, and in time $k^{\OO(k^2)} m$ computes a $k$-lean tree decomposition of $G$.
This algorithm has so many steps that there is no space to sketch the proofs of all of them here.
Let us instead first discuss the general methodology and ideas behind this algorithm, and then give a high-level outline of it.
A further outline of our algorithm with formally correct definitions and statements is given in \Cref{sec:roadmap}.

\subsubsection{General methodology}
\paragraph{Hypergraphs and Robertson-Seymour style separations.}
We work with hypergraphs, which are like graphs but the hyperedges are arbitrary subsets of vertices instead of subsets of size $2$.
In our case, the hyperedges will always have size at most $2k$, and moreover, we will have the extra property that every vertex belongs to at least two hyperedges.
The hypergraphs we work with may have multihyperedges, i.e., there may be many distinct hyperedges each consisting of the same set of vertices.
The intuition will be that working with a hypergraph $G$ corresponds to working with its primal graph $\primal(G)$, which is the graph with $V(\primal(G)) = V(G)$ and $E(\primal(G))$ obtained by making each hyperedge of $G$ into a clique.
This intuition will hold almost always, the exceptions turning up only in some subtle considerations.
In this overview, whenever we use definitions in the context of hypergraphs that have only been defined for graphs, assume that this definition is applied to the primal graph of the hypergraph.

The \emph{border} of a hyperedge set $X \subseteq E(G)$ is $\bd(X) = V(X) \cap V(E(G) \setminus X)$, where $V(X)$ for a set $X \subseteq E(G)$ denotes the union of the vertices of the hyperedges in $X$.
A separation of a hypergraph $G$ is defined as a bipartition $(A,B)$ of its hyperedge set $E(G)$, i.e., so that $A \cup B = E(G)$ and $A \cap B = \emptyset$, and the \emph{order} of a separation $(A,B)$ is $\bdc(A) = \bdc(B) = |\bd(A)| = |\bd(B)| = |V(A) \cap V(B)|$.
We note that the function $\bdc \colon 2^{E(G)} \to \mathbb{Z}_{\ge 0}$ is symmetric and submodular, meaning that (1) $\bdc(A) = \bdc(E(G) \setminus A)$ for all $A \subseteq E(G)$, and (2) $\bdc(A \cap B) + \bdc(A \cup B) \le \bdc(A) + \bdc(B)$ for all $A,B \subseteq E(G)$.
These definitions are inspired by~\cite{RobertsonS91}.

The advantage of these definitions is the following.
Suppose we have identified a separation $(A,B)$ of $G$ and would like to decompose $G$ along $(A,B)$.
Now, we define $G \rescliqs B$ to be the hypergraph obtained by removing all hyperedges in $B$, and inserting a hyperedge $e_B$ with $V(e_B) = \bd(B)$.
More formally, $V(G \rescliqs B) = V(A)$, and $E(G \rescliqs B) = A \cup \{e_B\}$ with $V(e_B) = \bd(B)$.
Now, the separations of $G \rescliqs B$ correspond to separations of $G$ by expanding $e_B$ into $B$, and the separations of $G$ that do not ``cross'' $B$ correspond to separations of $G \rescliqs B$ by shrinking $B$ into~$e_B$.
Similar effect could be achieved in the context of graphs by just making $\bd(B)$ into a clique.
However, this turns out to be problematic because a clique could ``provide connectivity'' to both sides of a separation, while we need to insist that $e_B$ provides connectivity only to one side of any separation of $G \rescliqs B$.

\paragraph{Well-linkedness.}
A key technique that we will employ is that if a separation $(A,B)$ of a hypergraph $G$ has sufficiently strong ``uncrossing properties'', then a $k$-lean tree decomposition of $G \rescliqs A$ can be simply stitched together with a $k$-lean tree decomposition of $G \rescliqs B$ along $\bd(A) = \bd(B)$ to obtain a $k$-lean tree decomposition of $G$.
Formally, these ``uncrossing properties'' are that $(A,B)$ is \emph{doubly well-linked}.
We define that a set $A \subseteq E(G)$ of hyperedges is \emph{well-linked} if there is no bipartition $(C_1,C_2)$ of $A$ so that $\bdc(C_i) < \bdc(A)$ for both $i \in [2]$.
Now, a separation $(A,B)$ is doubly well-linked if both $A$ and $B$ are well-linked.
It turns out that doubly well-linked separations enjoy strong uncrossing properties: If $(A,B)$ is doubly well-linked separation, then we can uncross with $(A,B)$ any separations between two subsets of $V(A)$, any separations between two subsets of $V(B)$, and any separations between a subset of $V(A)$ and a subset of $V(B)$.
It follows from these uncrossing properties that when computing a $k$-lean tree decomposition, we can indeed ``greedily'' decompose along any doubly well-linked separation of order $<k$.
We note that in a sense, this is a generalization of the classical property that when computing a Gomory-Hu tree, one can ``greedily'' decompose along any pairwise minimum cut~\cite{gomory1961multi}.

Let us furthermore note the following key property of well-linkedness.
Suppose that $A \subseteq E(G)$ is a well-linked set, and consider the hypergraph $G \rescliqs A$, containing the hyperedge $e_A$ corresponding to $A$.
Now, suppose that $B \subseteq E(G \rescliqs A)$ is a well-linked set in $G \rescliqs A$.
Let us denote by $B \orescliqs A$ the set of hyperedges of $G$ corresponding to $B$, i.e., $B \orescliqs A = B$ if $e_A \notin B$, and $B \orescliqs A = (B \cup A) \setminus \{e_A\}$ if $e_A \in B$.
It can be shown that now, $B \orescliqs A$ is well-linked in $G$.
This is quite trivial if $e_A \notin B$, and it can be proved when $e_A \in B$ by an uncrossing argument with $A$.
We call this property the \emph{transitivity} of well-linkedness.
It is useful for arguing that even after multiple iterations of decomposing a hypergraph by doubly well-linked separations, all of the corresponding separations of the original hypergraph are also doubly well-linked.

\paragraph{Superbranch decompositions.}
We did not yet define what we actually mean by a decomposition of a hypergraph, so let us define it now.
A \emph{superbranch decomposition} of a hypergraph $G$ is a pair $\Tc = (T,\lmap)$, where $T$ is a tree whose all internal nodes have degree~$\ge 3$, and $\lmap \colon \leafs(T) \to E(G)$ is a bijection from the leaves of $T$ to the hyperedges of $G$.
This is a generalization of the definition of branch decompositions by~\cite{RobertsonS91}, whose internal nodes are required to have degree exactly~$3$.
For an edge $xy$ of $T$, we denote by $\lmap(\vec{xy}) \subseteq E(G)$ the set of hyperedges mapped to leaves of $T$ closer to $x$ than $y$.
Note that $(\lmap(\vec{xy}), \lmap(\vec{yx}))$ is a separation of $G$, and in fact the collection of all such separations over all $xy \in E(T)$ forms a collection of pairwise non-crossing separations of $G$.
We say that $(\lmap(\vec{xy}), \lmap(\vec{yx}))$ is an \emph{internal separation} of $\Tc$ if both $x$ and $y$ are internal nodes of $T$.
The adhesion at $xy$ is $\adh(xy) = \bd(\lmap(\vec{xy})) = \bd(\lmap(\vec{yx}))$.
Lastly, the \emph{torso} of a node $t \in V(T)$ is the hypergraph $\torso(t)$ whose hyperedges correspond to the adhesions of the edges of $T$ incident to $t$.
Equivalently, $\torso(t) = G \rescliqs \lmap(\vec{c_1 t}) \rescliqs \lmap(\vec{c_2 t}) \rescliqs \ldots \rescliqs \lmap(\vec{c_p t})$, where $c_1,\ldots,c_p$ are the neighbors of $t$ in $T$ and the operation $\rescliqs$ is applied from left to right.

We note that if $(T,\lmap)$ is a superbranch decomposition of $G$, then $(T,\bag)$, where $\bag(t) = V(\torso(t))$ for each $t \in V(T)$, is a tree decomposition of $\primal(G)$.
Moreover, the adhesions of $(T,\bag)$ and the superbranch decomposition $(T,\lmap)$ are the same.
In particular, one may think of a superbranch decomposition of $G$ as a tree decomposition of $\primal(G)$, where we have additionally fixed a bijection between the leaves of the decomposition and the hyperedges of $G$.

\paragraph{Intermediate goal.}
We can now state the main intermediate goal of our algorithm.
A hypergraph $G$ is $(s,k)$-unbreakable if there is no separation $(A,B)$ so that $\bdc(A) < k$ and $|V(A)|,|V(B)| \ge s$.
Let $\hyperg(G)$ denote the hypergraph obtained from the input graph $G$ in the natural manner.
Our goal is to compute a superbranch decomposition $(T,\lmap)$ of $\hyperg(G)$ so that
\begin{enumerate}
\item the internal separations of $(T,\lmap)$ have order $<k$ and are doubly well-linked, and
\item for each $t \in V(T)$, the hypergraph $\torso(t)$ is $(k^{\OO(k)}, k)$-unbreakable.
\end{enumerate}

After computing such a superbranch decomposition $(T,\lmap)$, it turns out that we can compute a $k$-lean tree decomposition $\Tc_t$ of each graph $\primal(\torso(t))$ for $t \in V(T)$, and obtain a $k$-lean tree decomposition of $G$ by stitching the decompositions $\Tc_t$ together along $T$.
This follows from the uncrossing properties of doubly well-linked separations.
The graphs $\primal(\torso(t))$ are $(k^{\OO(k)}, k)$-unbreakable, so what remains after this is to design an $s^{\OO(k)} m$ time algorithm for computing a $k$-lean tree decomposition of an $(s,k)$-unbreakable graph, which can be done by essentially following the ideas of~\cite{bellenbaum2002two,DBLP:journals/talg/CyganKLPPSW21}.
Therefore, in the rest of this overview let us focus only on achieving this intermediate goal.

First, let us argue why such a superbranch decomposition $(T,\lmap)$ even exists.
Let $(T,\lmap)$ be a superbranch decomposition of a hypergraph $G$ whose internal separations have order $<k$ and are doubly well-linked.
Note that such a superbranch decomposition exists by taking the trivial star-shaped superbranch decomposition that has one internal node and no internal separations.
We will argue that such a superbranch decomposition that maximizes $|V(T)|$ must have also $(2^k,k)$-unbreakable torsos.
In a more algorithmic sense, we will argue that if there exists $t \in V(T)$ so that $\torso(t)$ is not $(2^k, k)$-unbreakable, then we can further decompose $\torso(t)$ by a doubly well-linked separation of order $<k$, increasing the number of nodes of~$T$.

Suppose now that $\torso(t)$ is not $(2^k, k)$-unbreakable, let $i \le k$ be the smallest integer so that $\torso(t)$ is not $(2^i, i)$-unbreakable, and let $(A,B)$ be a separation of $\torso(t)$ with $\bdc(A) < i$, and $|V(A)|,|V(B)| \ge 2^i$.
Now, it can be shown that the $(2^{i-1}, i-1)$-unbreakability of $\torso(t)$ implies that both $A$ and $B$ are well-linked in $\torso(t)$, i.e., $(A,B)$ is doubly well-linked.
Then, the transitivity of well-linkedness implies that the separation $(A \orescliqs \Tc, B \orescliqs \Tc)$ of $G$ corresponding to $(A,B)$ is also doubly-well-linked.
Furthermore, we have that $|A|,|B| \ge 2$.
Therefore, we can split $\torso(t)$ along $(A,B)$, separating the node $t$ of $T$ into two nodes $t_A$ and $t_B$, with $\torso(t_A) = \torso(t) \rescliqs B$ and $\torso(t_B) = \torso(t) \rescliqs A$, increasing $|V(T)|$.

The above proof is quite algorithmic, but there is an issue of how to find such separations $(A,B)$ efficiently, and even then it is not clear how the whole decomposition process could be implemented in linear-time.
However, both of these issues would be more approachable if we would have a guarantee that if such an ``unbreakability-violating'' separation $(A,B)$ exists, then either $|A|$ or $|B|$ could be bounded as a function of $k$.
Then, we could find such separations by a local-search type algorithm in time proportional to $\min(|A|,|B|)$, and also implement the decomposition process efficiently by local computations that would edit only the small side.

Now, the utility of starting with a $(2k,k)$-unbreakable superbranch decomposition instead of starting from scratch is that we indeed have strong unbreakability guarantees to start with.
However, the challenge is that the adhesions of the input superbranch decomposition can have size up to $2k$, and its internal separations do not have any well-linkedness properties.
Therefore, our plan is to first ``massage'' the internal separations of the input tree decomposition into doubly well-linked, then get rid of adhesions of size $\ge k$, and if we manage to do this while maintaining the unbreakability of the torsos, we are done.

\subsubsection{Outline of the algorithm}
Let us now give a more detailed outline of our algorithm, and in particular, let us focus on achieving the intermediate goal.
We start with a superbranch decomposition $\Tc = (T,\lmap)$ of a hypergraph $G$, so that the adhesions of $\Tc$ have size $\le 2k$, and for each $t \in V(T)$, the set $V(\torso(t))$ is $(2k,k)$-unbreakable in $G$.
Note that the unbreakability of $V(\torso(t))$ in $G$ is a stronger property than just the unbreakability of the hypergraph $\torso(t)$.
Our goal is to turn $\Tc$ into a superbranch decomposition $\Tc' = (T',\lmap')$, so that the internal separations of $\Tc'$ have order $<k$ and are doubly well-linked, and $\torso(t)$ for each $t \in V(T)$ is $(k^{\OO(k)}, k)$-unbreakable.

First, we wish to turn the internal separations of $\Tc$ into doubly well-linked.
We divide this into two parts, first making them ``downwards well-linked'', and then making them ``upwards well-linked''.
Let $T$ be rooted at some node.
We say that $\Tc$ is \emph{downwards well-linked} if for all edges $tp$ of $T$ so that $p$ is the parent of $t$, the set $\lmap(\vec{tp})$ is well-linked.

\paragraph{Downwards well-linkedness.}
The idea for making $\Tc$ downwards well-linked is that if a set $A \subseteq E(G)$ is not well-linked, then it can be partitioned into at most $2^{\bdc(A)}$ parts $B_1,\ldots,B_h$, so that each part $B_i$ is well-linked.
This partitioning works simply by initializing the partition to contain only $A$, and then as long as there is a part $B_i$ that is not well-linked, finding a bipartition $(C_1,C_2)$ of $B_i$ so that $\bdc(C_j) < \bdc(B_i)$ for both $j \in [2]$, and replacing $B_i$ by $C_1$ and $C_2$.
This can be implemented in roughly $|A| \cdot 2^{\OO(\bdc(A))}$ time.

Now, we process $\Tc$ from the leaves to the root, and for each node $t$ with a parent $p$, if the set $E(\torso(t)) \setminus \{e_p\}$ (where $e_p$ is the hyperedge corresponding to the adhesion at $tp$) is not well-linked in $\torso(t)$, we find a partition of it into $2^{\OO(k)}$ well-linked sets, and ``collapse'' the adhesion at $tp$ into adhesions corresponding to these sets.
When implementing this in the bottom-up order, the transitivity of well-linkedness implies that the resulting superbranch decomposition becomes downwards well-linked.
After incorporating some additional technical elements to this process, it can be shown that because the vertex sets of the torsos of $\Tc$ are $(2k,k)$-unbreakable in $G$, the vertex sets of the torsos of the resulting superbranch decomposition are $(2^{\OO(k)}, k)$-unbreakable in $G$.

\paragraph{Upwards well-linkedness.}
In the next step we wish to make the superbranch decomposition $\Tc$, in addition to being downwards well-linked, also ``upwards well-linked'', i.e., for each $tp \in E(T)$ so that $p$ is the parent of $t$, make the set $\lmap(\vec{pt})$ well-linked.
The intuition is that we would like to run a similar adhesion-collapsing procedure as for downwards well-linkedness, but this time top-down, from the root to the leaves.
This is however too much to ask, since we cannot afford to make for each child $t$ of $p$ an $\Omega(E(\torso(p)))$-time search for decomposing its parent.
Therefore, we settle for a notion of well-linkedness into which we can decompose by a local-search type algorithm in $(2^{\OO(k)}, k)$-unbreakable hypergraphs.

We define that a set $A \subseteq E(G)$ is \emph{$k$-well-linked}, if there is no bipartition $(C_1,C_2)$ of $A$ so that $\bdc(C_i) < \bdc(A)$ for both $i \in [2]$, and $\bdc(C_i) < k$ for at least one $i \in [2]$.
Now, we can use a local-search type algorithm, that if $E(\torso(p)) \setminus \{e_t\}$ is not $k$-well-linked, finds an implicitly represented partition of it into $2^{\OO(k)}$ $k$-well-linked sets, whose complements are well-linked in $\torso(p)$.
With this, we can, in some sense, implement a top-down procedure that makes $\Tc$ upwards $k$-well-linked, in addition to being downwards well-linked.
In reality the statement is more technical, as $\Tc$ will become ``scrambled'' in this process, but it is true enough that we can in this overview assume that $\Tc$ becomes downwards well-linked and upwards $k$-well-linked.
This procedure furthermore maintains that the torsos of $\Tc$ are $(2^{\OO(k)}, k)$-unbreakable.

We note that $k$-well-linkedness alone is a quite weak property, but a separation $(A,B)$ so that $A$ is well-linked and $B$ is $k$-well-linked turns out to be as strong as a doubly well-linked separation for the purpose of uncrossing separations of order $<k$.
Note that indeed the internal separations of $\Tc$ are now of this type.
Note also that if the $A$ is $k$-well-linked and $\bdc(A) < k$, then $A$ is in fact well-linked.

\paragraph{Small adhesions.}
Now, the main remaining issue is that the adhesions of $\Tc$ are still too large, having size at most $2k$ instead of less than $k$.
For reducing the adhesion size, our main tool will be a graph-theoretical concept of \emph{$k$-tangle-unbreakability}.
\emph{Tangles} were introduced by Robertson and Seymour~\cite{RobertsonS91}, and they are a certain way of defining ``well-connected'' regions in a (hyper)graph.
For readers familiar with tangles, a hypergraph is $k$-tangle-unbreakable if there is no separation $(A,B)$ of order $\bdc(A) < k$, so that $(A,B)$ is oriented in a different way by two tangles, in particular, if there is a unique maximal tangle of order $\le k$.
For readers not familiar with tangles, a good intuition is that the $k$-tangle-unbreakability of a hypergraph implies that if $(A,B)$ is a separation of order $\bdc(A) < k$ (and satisfies an extra condition), then either the \emph{branchwidth} of $G \rescliqs A$ is at most $\bdc(A)$, or the branchwidth of $G \rescliqs B$ is at most $\bdc(A)$.
The branchwidth of a hypergraph is equal to the minimum \emph{width} of a branch decomposition of it, where a branch decomposition is simply a superbranch decomposition with maximum degree $3$, and the width of a branch decomposition is its maximum adhesion size.

We show that if the torsos of a superbranch decomposition $\Tc = (T,\lmap)$ are $k$-tangle-unbreakable, and its internal separations are well-linked on one side and $k$-well-linked on the other side, then if we simply contract all edges of $T$ corresponding to adhesions of size $\ge k$, the resulting superbranch decomposition will have $k$-tangle-unbreakable torsos (and obviously, all of its internal separations will have order $<k$ and be doubly well-linked).
This will be our technique for getting rid of too large adhesions.

To be able to exploit this technique, we need to show that (1) $(2^{\OO(k)}, k)$-unbreakable graphs can be efficiently decomposed into $k$-tangle-unbreakable graphs by doubly well-linked separations of order $<k$, and (2) $k$-tangle-unbreakable graphs can be efficiently decomposed into $(k^{\OO(k)}, k)$-unbreakable graphs by doubly well-linked separations of order $<k$.
If we manage to do (1) and (2), then by first decomposing each torso by (1), then contracting the too large adhesions, and then applying (2) to decompose each resulting torso, we are done with our goal of computing a superbranch decomposition with $(k^{\OO(k)}, k)$-unbreakable torsos and doubly well-linked internal separations of order $<k$.

It turns out that both (1) and (2) can be achieved by designing local-search type algorithms that ``chip'' away small parts of the hypergraph at a time.
In particular, for (1) we show that to make a hypergraph $k$-tangle-unbreakable, it suffices to decompose it enough by doubly well-linked separations of certain type that can be found efficiently by local search in unbreakable graphs.
For (2), we have that $k$-tangle-unbreakable hypergraphs have the nice property that if they are not unbreakable, then they have ``small'' witnesses of non-unbreakability, which can be found efficiently by local search.

\section{Preliminaries}
\label{sec:prelis}
In this section we present definitions and preliminary results.

\subsection{Basic definitions}
We start with basic definitions.

\paragraph{Model of computation and $\OO$-notation.}
We assume the standard model of computation in the context of graph algorithms, that is, the word RAM model with $\Theta(\log n)$-bit words, where $n$ is the input length (see e.g. \cite{DBLP:journals/cacm/HopcroftT73} for discussion).
However, we do not abuse this model in any way; we only use elementary operations without any bit-manipulation tricks.
Furthermore, all of our algorithms are deterministic.

We use $\OO$-notation in the standard manner, but let us clarify one convention.
When writing $k^{\OO(f(\cdot))}$ in the context where $k$ is a positive integer, we ignore the special case of $k=1$, and in fact mean $\max(k,2)^{\OO(f(\cdot))}$.
This implies, in particular, that $f(\cdot) \le 2^{\OO(f(\cdot))} \le k^{\OO(f(\cdot))}$ holds always.

\paragraph{Miscellaneous.}
For two integers $a \le b$, we denote by $[a,b]$ the set of integers $\{a,\ldots,b\}$.
When $a > b$, the set $[a,b]$ is empty.
For an integer $a$, we use $[a]$ to denote the set $[1,a]$.

\paragraph{Graphs.}
By a graph we mean an undirected simple graph.
The set of vertices of a graph $G$ are denoted by $V(G)$ and the set of edges by $E(G)$.
We denote the size of a graph by $\|G\| = |V(G)| + |E(G)|$.
We use the convention that an edge between vertices $u$ and $v$ is denoted by an unordered pair $uv \in E(G)$.
For a set of edges $X \subseteq E(G)$, we denote by $V(X) = \bigcup_{uv \in X} \{u,v\}$ the union of their endpoints.

For a set of vertices $X \subseteq V(G)$ we denote by $G[X]$ the subgraph of $G$ induced by $X$, and use $G \setminus X = G[V(G) \setminus X]$.
We may use the notation $G \setminus X$ also in the context where $X$ contains edges or both vertices and edges, in which case it denotes the subgraph of $G$ obtained by deleting those vertices and edges.
We define that a \emph{connected component} of a graph $G$ is an inclusion-wise maximal non-empty set $C \subseteq V(G)$ so that $G[C]$ is connected.
We denote the collection of connected components of $G$ by $\cc(G)$.

The set of neighbors of a vertex $v \in V(G)$ is denoted by $N_G(v)$, and the closed neighborhood of $v$ is denoted by $N_G[v] = N_G(v) \cup \{v\}$.
For a set $X \subseteq V(G)$ we denote $N_G(X) = \bigcup_{v \in X} N_G(v) \setminus \{X\}$ and $N_G[X] = N_G(X) \cup X$.
The set of edges incident to $v$ is denoted by $\ninc_G(v)$, and for a set $X \subseteq V(G)$ we may denote $\ninc_G(X) = \bigcup_{v \in X} \ninc_G(v)$.
We drop the subscript if $G$ is clear from the context.

The \emph{contraction} of an edge $uv \in E(G)$ means inserting a new vertex $w_{uv}$ with $N(w_{uv}) = (N(u) \cup N(v)) \setminus \{u,v\}$ and deleting the vertices $u$ and $v$.
We denote by $G \contr uv$ the graph resulting from contracting $uv$.
A \emph{matching} is a set of edges $M \subseteq E(G)$ with no common endpoints, and we denote by $G \contr M$ the graph resulting from successively contracting all edges in $M$.

\paragraph{Edge cuts.}
An \emph{edge cut} of a graph $G$ is a pair $(A,B)$ with $A,B \subseteq V(G)$, so that $A \cup B = V(G)$ and $A \cap B = \emptyset$.
The \emph{cutset} of the edge cut $(A,B)$, denoted by $E(A,B)$, consists of the set of edges with one endpoint in $A$ and another in $B$.
For two vertices $u,v \in V(G)$, we say that $C \subseteq E(G)$ is an $(u,v)$-cutset if $C$ is the cutset of some edge cut $(A,B)$ with $u \in A$ and $v \in B$.

\paragraph{Vertex cuts and separators.}
A \emph{vertex cut} of a graph $G$ is a pair $(A,B)$ with $A,B \subseteq V(G)$, so that $A \cup B = V(G)$ and there are no edges between $A \setminus B$ and $B \setminus A$.
The \emph{order} of a vertex cut $(A,B)$ is $|A \cap B|$.

Let $X_1,X_2 \subseteq V(G)$ be two sets of vertices.
An \emph{$(X_1,X_2)$-path} is a path in $G$ that intersects both $X_1$ and $X_2$, and may consist of only one vertex.
An \emph{$(X_1,X_2)$-separator} is a set $S \subseteq V(G)$ that intersects every $(X_1,X_2)$-path.
Equivalently, $S$ is an $(X_1,X_2)$-separator, if there exists a vertex cut $(A,B)$ of $G$ so that $X_1 \subseteq A$, $X_2 \subseteq B$, and $S = A \cap B$.
For two vertices $u,v \in V(G)$, an $(\{u\},\{v\})$-separator $S$ is a \emph{proper $(u,v)$-separator} if $\{u,v\} \cap S = \emptyset$.
A set $S \subseteq V(G)$ is a \emph{proper vertex separator} of $G$ if $S$ is a proper $(u,v)$-separator for some pair $u,v \in V(G)$.

For two sets $X_1,X_2 \subseteq V(G)$ we denote by $\flow_G(X_1,X_2)$ the maximum number of vertex-disjoint $(X_1,X_2)$-paths, which by Menger's theorem is equal to the minimum size of an $(X_1,X_2)$-separator.
For two vertices $u \neq v$, we denote by $\flowp_G(u,v)$ the maximum number of internally vertex-disjoint $(u,v)$-paths, including the path $uv$ if $u$ and $v$ are adjacent.
If $u$ and $v$ are non-adjacent, Menger's theorem implies that $\flowp_G(u,v)$ is the minimum size of a proper $(u,v)$-separator, and if they are adjacent, then $\flowp_G(u,v)$ is one plus the minimum size of a proper $(u,v)$-separator in $G \setminus \{uv\}$.

A vertex cut $(A',B')$ is an \emph{orientation} of a vertex cut $(A,B)$ if either (1) $(A',B') = (A,B)$ or (2) $(B',A') = (A,B)$.
Two vertex cuts $(A_1, B_1)$ and $(A_2, B_2)$ are \emph{parallel} if there is an orientation $(A_1',B_1')$ of $(A_1, B_1)$ and an orientation $(A_2',B_2')$ of $(A_2,B_2)$ so that $A_1' \subseteq A_2'$ and $B_2' \subseteq B_1'$.
Otherwise they are \emph{crossing}.

If $(A_1,B_1)$ and $(A_2, B_2)$ are vertex cuts, then also $(A_1 \cap A_2, B_1 \cup B_2)$ and $(A_1 \cup A_2, B_1 \cap B_2)$ are vertex cuts, both of which are parallel with both $(A_1,B_1)$ and $(A_2, B_2)$.
The \emph{submodularity} of vertex cuts means the following property.
\begin{lemma}[Submodularity]
If $(A_1, B_1)$ and $(A_2, B_2)$ are vertex cuts, then
\[|(A_1 \cap A_2) \cap (B_1 \cup B_2)| + |(A_1 \cup A_2) \cap (B_1 \cap B_2)| \le |A_1 \cap B_1| + |A_2 \cap B_2|.\]
\end{lemma}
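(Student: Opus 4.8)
The plan is to prove the inequality by a direct set-theoretic counting argument, avoiding any appeal to the (later, more general) submodularity of $\bdc$ on hyperedge sets; here everything is stated purely in terms of vertex cuts. The key observation is that for a vertex cut $(A,B)$ the order $|A \cap B|$ measures the "shared" vertices, and the two derived cuts $(A_1 \cap A_2, B_1 \cup B_2)$ and $(A_1 \cup A_2, B_1 \cap B_2)$ redistribute these shared vertices without creating new ones. Concretely, I would first note that $(A_1 \cap A_2) \cap (B_1 \cup B_2) \subseteq (A_1 \cap B_1) \cup (A_2 \cap B_2)$: if a vertex $v$ lies in $A_1 \cap A_2$ and in $B_1 \cup B_2$, then either $v \in B_1$, putting it in $A_1 \cap B_1$, or $v \in B_2$, putting it in $A_2 \cap B_2$. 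Symmetrically, $(A_1 \cup A_2) \cap (B_1 \cap B_2) \subseteq (A_1 \cap B_1) \cup (A_2 \cap B_2)$, by the same case split on which of $A_1,A_2$ contains $v$.

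Next I would bound the sum of the two left-hand sizes by $|A_1 \cap B_1| + |A_2 \cap B_2|$. Write $X_1 = A_1 \cap B_1$ and $X_2 = A_2 \cap B_2$, and set $L = (A_1 \cap A_2) \cap (B_1 \cup B_2)$ and $R = (A_1 \cup A_2) \cap (B_1 \cap B_2)$. The previous paragraph gives $L \subseteq X_1 \cup X_2$ and $R \subseteq X_1 \cup X_2$, so it suffices to show $|L| + |R| \le |X_1| + |X_2|$, which would follow from $|L| + |R| \le |L \cup R| + |L \cap R| \le |X_1 \cup X_2| + |X_1 \cap X_2| = |X_1| + |X_2|$ (the first inequality is inclusion--exclusion, an equality in fact). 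For the middle step I would check $L \cap R \subseteq X_1 \cap X_2$: a vertex in $L \cap R$ lies in $A_1 \cap A_2 \cap B_1 \cap B_2$, which is contained in $(A_1 \cap B_1) \cap (A_2 \cap B_2) = X_1 \cap X_2$. Combined with $L \cup R \subseteq X_1 \cup X_2$ from the two containments above, this yields the claim.

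Putting these together, $|L| + |R| = |L \cup R| + |L \cap R| \le |X_1 \cup X_2| + |X_1 \cap X_2| = |X_1| + |X_2|$, which is exactly the asserted inequality. The only point that requires a little care — and which I expect to be the "main obstacle" in the sense of being the step where a sloppy argument could go wrong — is verifying the two containments $L, R \subseteq X_1 \cup X_2$ cleanly via the case analysis on membership in $B_1$ versus $B_2$ (respectively $A_1$ versus $A_2$); once those are in hand the rest is just inclusion--exclusion. I would also remark in passing that the fact that $(A_1 \cap A_2, B_1 \cup B_2)$ and $(A_1 \cup A_2, B_1 \cap B_2)$ are again vertex cuts (no edges between $(A_1 \cap A_2)\setminus(B_1 \cup B_2)$ and $(B_1 \cup B_2)\setminus(A_1 \cap A_2)$, etc.) is needed for the statement to even make sense, and follows since any such edge would have to cross one of $(A_1,B_1)$ or $(A_2,B_2)$; this can be noted before the counting argument.
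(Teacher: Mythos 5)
Your proof is correct and follows essentially the same route as the paper: both arguments establish that the two left-hand sets are contained in $(A_1\cap B_1)\cup(A_2\cap B_2)$, that their intersection is contained in $(A_1\cap B_1)\cap(A_2\cap B_2)$, and then conclude by inclusion--exclusion. The only cosmetic difference is your closing remark that the derived pairs are again vertex cuts, which the paper states separately just before the lemma and which is not actually needed for the counting inequality itself.
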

\begin{proof}
Let $S_1 = A_1 \cap B_1$, $S_2 = A_2 \cap B_2$, $\hat{S} = (A_1 \cap A_2) \cap (B_1 \cup B_2)$, and $\check{S} = (A_1 \cup A_2) \cap (B_1 \cap B_2)$.

Note that the right side is $|S_1| + |S_2| = |S_1 \cup S_2| + |S_1 \cap S_2|$.
Observe also that $\hat{S} \subseteq S_1 \cup S_2$ and $\check{S} \subseteq S_1 \cup S_2$.
Therefore, the left side is $|\hat{S}|+|\check{S}| \le |S_1 \cup S_2| + |\hat{S} \cap \check{S}|$.
However, observe that if $v \in \hat{S} \cap \check{S}$, then $v \in A_1 \cap A_2 \cap B_1 \cap B_2$, and therefore $v \in S_1 \cap S_2$, showing that the right side is always at least the left side.
\end{proof}

\paragraph{Element cuts.}
Let $G$ be a graph and $U \subseteq V(G)$.
For two vertices $u,v \in U$, we say that a set $S \subseteq E(G) \cup V(G) \setminus U$ is an $U$-element $(u,v)$-cutset if $u$ and $v$ are in different connected components of the graph $G \setminus S$.

\paragraph{Trees.}
A tree is a connected acyclic graph.
To more easily distinguish trees from graphs, the vertices of a tree may be called \emph{nodes}.
The \emph{leaves} of a tree $T$ are its nodes of degree $\le 1$, and are denoted by $\leafs(T)$.
Note that if $|V(T)| = 1$, then the single node is a leaf.
The \emph{internal nodes} of a tree $T$ are the non-leaf nodes, and are denoted by $\vint(T) = V(T) \setminus \leafs(T)$.
The \emph{internal edges} of a tree are the edges whose both endpoints are internal nodes, and are denoted by $\eint(T) = \{uv \in E(T) : u,v \in \vint(T)\}$.

A \emph{rooted tree} is a tree $T$ with one node $\troot(T) \in V(T)$ selected as a root.
Unintuitively, we allow the root of a rooted tree to be a leaf.
If $T$ is a rooted tree and $t \in V(T)$, the neighbors of $t$ that are further away from the root than $t$ are called the \emph{children} of $t$ and denoted by $\children(t)$, and if $t \neq \troot(T)$, the neighbor of $t$ that is closer to the root is called the \emph{parent} of $t$ and denoted by $\parent(t)$.
A node $x$ is a \emph{descendant} of a node $y$ if the unique path from $x$ to the root contains $y$.
Note that each node is a descendant of itself.
By a \emph{strict descendant} we mean a descendant that is not the node itself.
A \emph{prefix} of a rooted tree is a set $P \subseteq V(T)$ so that $T[P]$ is connected and contains the root.

In the context of trees, in addition to edges we will work with \emph{orientations} of edges.
For an edge $st \in E(T)$, the ordered pairs $\vec{st}$ and $\vec{ts}$ are \emph{orientations} of $st$.
In particular, $\vec{st}$ should be seen as the edge $st$ but oriented to point towards $t$.
We denote by $\vec{E}(T)$ the set of all orientations of edges of $T$.

\paragraph{Hypergraphs.}
In most of the technical sections of this paper we will work with hypergraphs instead of graphs.

A \emph{hypergraph} $G$ consists of a set of vertices $V(G)$, a set of hyperedges $E(G)$, and a mapping $V \colon E \to 2^{V(G)}$ that maps each hyperedge $e \in E$ to a subset $V(e) \subseteq V(G)$.
We require that $V(G) = \bigcup_{e \in E(G)} V(e)$ always holds, i.e., every vertex of a hypergraph must be in the vertex set of at least one hyperedge.
For a set $A \subseteq E(G)$, we denote $V(A) = \bigcup_{e \in A} V(e)$.
We denote the size of a hypergraph by $\|G\| = |E(G)| + \sum_{e \in E(G)} |V(e)|$.

The \emph{rank} of a hyperedge $e$ is $|V(e)|$, and the \emph{rank} $\rank(G)$ of a hypergraph $G$ is the maximum rank of its hyperedges.
A hypergraph is \emph{normal} if its every vertex occurs in at least two distinct hyperedges.

We will work only with hypergraphs whose rank is bounded by $2k$, where $k \ge 1$ is the original input parameter $k$.
Furthermore, we can essentially always assume that the hypergraphs we work with are normal, but we write this out only when we actually use this fact.

Our definition of hypergraphs does not allow isolated vertices, but isolated vertices can be simulated by having hyperedges of rank $1$.
We allow hyperedges of rank $0$, but they will not play any role in this paper.
We also allow that $V(e_1) = V(e_2)$ for two distinct hyperedges $e_1$ and $e_2$.
The \emph{multiplicity} of a hyperedge $e \in E(G)$ is the number of hyperedges $e' \in E(G)$ so that $V(e') = V(e)$.
This counts also $e$ itself, so the multiplicity is always at least one.
The \emph{multiplicity} of a hypergraph is the maximum multiplicity of its hyperedges.

For an integer $p \ge 1$, a \emph{multiplicity-$p$-violator} in $G$ is a set $A \subseteq E(G)$ so that $|A| = p+1$ and $V(e_1) = V(e_2)$ for all $e_1,e_2 \in A$.
The \emph{rank} of a multiplicity-$p$-violator $A$ is the rank of the hyperedges in $A$.
By a \emph{multiplicity-violator} we mean a multiplicity-$p$-violator for some $p \ge 1$.

For a vertex $v \in V(G)$, we define $\ninc_G(v) \subseteq E(G)$ to be the set of hyperedges incident to $v$, i.e., $\ninc_G(v) = \{e \in E(G) : v \in V(e)\}$.
We drop the subscript $G$ if it is clear from the context.

The \emph{primal graph} $\primal(G)$ of a hypergraph $G$ is the graph $\primal(G)$ with $V(\primal(G)) = V(G)$ and having an edge between $u,v \in V(\primal(G))$ if there exists $e \in E(G)$ with $\{u,v\} \subseteq V(e)$.
Note that $\|\primal(G)\| = \OO(\rank(G) \cdot \|G\|)$.

Let $G$ be a graph.
The \emph{hypergraph} $\hyperg(G)$ of $G$ is the hypergraph $\hyperg(G)$ with $E(\hyperg(G)) = V(G) \cup E(G)$, and for each $v \in V(G)$, $V(v) = \{v\}$, and for each $uv \in E(G)$, $V(uv) = \{u,v\}$.
Note that $\primal(\hyperg(G)) = G$, $\rank(\hyperg(G)) \le 2$, $\hyperg(G)$ is normal, the multiplicity of $\hyperg(G)$ is one, and $\|\hyperg(G)\| = \OO(\|G\|)$.

\paragraph{Separations of hypergraphs.}
A \emph{separation} of a hypergraph $G$ is a pair $(A,B)$ with $A,B \subseteq E(G)$ so that $A \cup B = E(G)$ and $A \cap B = \emptyset$.
The \emph{order} of a separation $(A,B)$ is $|V(A) \cap V(B)|$.
A separation $(A',B')$ is an \emph{orientation} of a separation $(A,B)$ if either $(A,B) = (A',B')$ or $(A,B) = (B',A')$.

Note that a separation $(A,B)$ is uniquely defined just by the set $A \subseteq E(G)$, as it must be that $B = E(G) \setminus A$.
We will use the notation $\co{A} = E(G) \setminus A$.
We denote $\bd_G(A) = \bd_G(\co{A}) = V(A) \cap V(\co{A})$, and $\bdc_G(A) = |\bd_G(A)| = |V(A) \cap V(\co{A})|$, and drop the subscript $G$ if it is clear from the context.
For a hyperedge $e \in E(G)$, we may use $\bdc(e)$ and $\bd(e)$ as shorthands for $\bdc(\{e\})$ and $\bd(\{e\})$.
Note that if $G$ is normal, then $\bd(e) = V(e)$ for all $e \in E(G)$.

Note that the separations of a hypergraph $G$ correspond to the vertex cuts of $\primal(G)$ in the sense that if $(A,\co{A})$ is a separation of $G$, then $(V(A),V(\co{A}))$ is a vertex cut of $\primal(G)$, and if $(A,B)$ is a vertex cut of $\primal(G)$, then there exists a separation $(C,\co{C})$ of $G$ so that $V(C) \subseteq A$ and $V(\co{C}) \subseteq B$.

Two separations $(A,\co{A})$ and $(B,\co{B})$ are \emph{parallel} if there is an orientation $(A',\co{A'})$ of $(A,\co{A})$ and an orientation $(B',\co{B'})$ of $(B,\co{B})$ so that $A' \subseteq B'$.
If they are not parallel, then they are \emph{crossing}.

The function $\bdc \colon 2^{E(G)} \to \mathbb{Z}_{\ge 0}$ is \emph{symmetric} and \emph{submodular} (see e.g.~\cite{RobertsonS91}), meaning that it satisfies
\begin{enumerate}
\item $\bdc(A) = \bdc(\co{A})$ for all $A \subseteq E(G)$ (symmetry), and
\item $\bdc(A \cup B) + \bdc(A \cap B) \le \bdc(A) + \bdc(B)$ for all $A,B \subseteq E(G)$ (submodularity).
\end{enumerate}

\paragraph{Transformations of hypergraphs.}
Let $G$ be a hypergraph and $A \subsetneq E(G)$.
We denote by $G \rescliqs A$ the hypergraph $G \rescliqs A$ with
\begin{enumerate}
\item $V(G \rescliqs A) = V(\co{A})$, and
\item $E(G \rescliqs A) = \co{A} \cup \{e_A\}$, where $V(e_A) = \bd(A)$.
\end{enumerate}

If $B \subseteq E(G \rescliqs A)$ is a non-empty set, we define $B \orescliqs A = B$ if $e_A \notin B$ and $B \orescliqs A = (B \setminus \{e_A\}) \cup A$ if $e_A \in B$.
Note that now, $\bd_G(B \orescliqs A) = \bd_{G \rescliqs A}(B)$.
Also, if $(B,\co{B})$ is a separation of $G \rescliqs A$, then $(B \orescliqs A, \co{B} \orescliqs A)$ is a separation of $G$.

When $\compset$ is a collection of disjoint non-empty sets, we define $G \rescliqs \compset$ to be the hypergraph resulting from applying $G \rescliqs C$ successively for each $C \in \compset$.
In particular, 
\begin{enumerate}
\item $V(G \rescliqs \compset) = \bigcap_{C \in \compset} V(\co{C})$, and
\item $E(G \rescliqs \compset) = \left(\bigcap_{C \in \compset} \co{C}\right) \cup \{e_C : C \in \compset\}$, where $V(e_C) = \bd(C)$ for each $C \in \compset$.
\end{enumerate}

For $A \subseteq E(G \rescliqs \compset)$, we define $A \orescliqs \compset = (A \setminus \{e_C : C \in \compset\}) \cup \bigcup \{C \in \compset : e_C \in A\}$.
Again, we have that $\bd_G(A \orescliqs \compset) = \bd_{G \rescliqs \compset}(A)$.

\subsection{Decompositions}
We will work with \emph{tree decompositions} of graphs, and with \emph{superbranch decompositions} of hypergraphs.
These are essentially two different formulations of the same object.

\paragraph{Tree decompositions of graphs.}
A tree decomposition of a graph $G$ is a pair $\Tc = (T,\bag)$, where $T$ is a tree, and $\bag \colon V(T) \rightarrow 2^{V(G)}$ is a mapping that satisfies
\begin{enumerate}
\item for each $uv \in E(G)$, there exists $t \in V(T)$ with $\{u,v\} \subseteq \bag(t)$, and\label{tddef:cond1}
\item for each $v \in V(G)$, the set of nodes $\{t \in V(T) : v \in \bag(t)\}$ induces a non-empty connected subtree of $T$.\label{tddef:cond2}
\end{enumerate}

The condition of \Cref{tddef:cond1} is called the \emph{edge condition} of tree decompositions, and the condition of \Cref{tddef:cond2} the \emph{vertex condition} of tree decompositions.

The \emph{adhesion} at the edge $st \in E(T)$ is the set $\adh(st) = \bag(s) \cap \bag(t)$.
For two nodes $x,y \in V(T)$, a set $A \subseteq V(G)$ is an \emph{$(x,y)$-adhesion} if $A = \adh(st)$ for some edge $st$ on the unique $(x,y)$-path in $T$.
The \emph{adhesion size} of $\Tc$, denoted by $\adhsize(\Tc)$, is the maximum size of an adhesion of $\Tc$.
If $\Tc$ has only one bag, then we define that $\adhsize(\Tc) = -\infty$.

We denote the total size of a tree decomposition $\Tc$ by $\|\Tc\| = |V(T)| + \sum_{t \in V(T)} |\bag(t)|$.

A \emph{rooted tree decomposition} is a tree decomposition $(T,\bag)$ where $T$ is a rooted tree.

\paragraph{Superbranch decompositions of hypergraphs.}
A superbranch decomposition of a hypergraph is defined similarly to the classical definition of a branch decomposition of a hypergraph~\cite{RobertsonS91}, but its internal nodes can have arbitrary high degree.

We say that a tree $T$ is \emph{supercubic} if all of its internal nodes have degree at least $3$.
Let $G$ be a hypergraph.
A \emph{superbranch decomposition} of $G$ is a pair $\Tc = (T,\lmap)$, where $T$ is a supercubic tree and $\lmap \colon \leafs(T) \to E(G)$ is a bijection from the leaves of $T$ to $E(G)$.
We note that superbranch decompositions are defined even when $|E(G)| = 1$ or $|E(G)| = 0$, in these cases the tree consisting only of one node or being the empty tree, respectively.
We now introduce several definitions to work with superbranch decompositions.

Let $\vec{st} \in \vec{E}(T)$ be an oriented edge of $T$.
We denote by $\leafs_T(\vec{st}) \subseteq \leafs(T)$ the set of leaves of $T$ that are closer to $s$ than $t$.
Then, we denote by $\lmap(\vec{st}) = \{\lmap(t) : t \in \leafs_T(\vec{st})\}$ the set of hyperedges of $G$ associated with the leaves of $T$ that are closer to $s$ than $t$.

Note that for all $st \in E(T)$, the pair $(\lmap(\vec{st}), \lmap(\vec{ts}))$ is a separation of $G$, and these separations are parallel with each other.
We define the \emph{adhesion} at $st$ to be $\adh(st) = \bd(\lmap(\vec{st})) = \bd(\lmap(\vec{ts}))$.
The \emph{adhesion size} of a superbranch decomposition, denoted by $\adhsize(\Tc)$, is the maximum cardinality of an adhesion of it.
Note that if $G$ is normal, then $\adhsize(\Tc) \ge \rank(G)$.
If $T$ has no edges, then we define that $\adhsize(\Tc) = -\infty$.

We say that a separation of form $(\lmap(\vec{st}), \lmap(\vec{ts}))$ is a \emph{separation} of $\Tc$, and if $st \in \eint(T)$, then $(\lmap(\vec{st}), \lmap(\vec{ts}))$ is an \emph{internal separation} of $\Tc$.
The motivation for distinguishing internal separations of $\Tc$ from all separations of $\Tc$ is that if $(\lmap(\vec{st}), \lmap(\vec{ts}))$ is a separation of $\Tc$ but not an internal separation, then $(\lmap(\vec{st}), \lmap(\vec{ts}))$ is an orientation of a separation of form $(\{e\},E(G) \setminus \{e\})$ for some $e \in E(G)$, in which case $(\lmap(\vec{st}), \lmap(\vec{ts}))$ is in fact a separation of every superbranch decomposition of $\Tc$.

Let $t \in \vint(T)$ be an internal node of $T$.
We define the \emph{torso} of $t$ to be the hypergraph $\torso(t)$, whose hyperedges correspond to the adhesions of the edges of $T$ incident to $t$.
In other words, $E(\torso(t)) = \{e_s : s \in N_T(t)\}$ and $V(e_s) = \adh(st)$ for all $e_s \in E(\torso(t))$.
Note that $\torso(t)$ is normal and $|E(\torso(t))| \ge 3$.
For $t \in \vint(T)$, we use the shorthand $V(t) = V(\torso(t))$.
We observe also that $\torso(t) = G \rescliqs \{\lmap(\vec{st}) : s \in N_T(t)\}$.
In light of this, for a set $A \subseteq E(\torso(t))$ we denote by $A \orescliqs \Tc = A \orescliqs \{\lmap(\vec{st}) : s \in N_T(t)\}$ the corresponding set in $G$.
When speaking about the ``torsos of a superbranch decomposition'', we mean all torsos $\torso(t)$ for $t \in \vint(T)$, recalling that $\torso(t)$ is not defined when $t$ is a leaf.

In the context where we are working with multiple superbranch decompositions/tree decompositions, we may put $\Tc$ to the subscript in the notations $\adh(st)$, $\torso(t)$, and $V(t)$ to clarify which decomposition we mean.

We define the \emph{total size} of a superbranch decomposition $\Tc = (T,\lmap)$ to be 
\[\|\Tc\| = \sum_{t \in \vint(T)} \|\torso(t)\| + \sum_{\ell \in \leafs(T)} |V(\lmap(\ell))|.\]
Note that $\|\Tc\| \ge \|G\|$ and $\|\Tc\| = \OO(\|G\| + |E(G)| \cdot \adhsize(\Tc))$.
Also, if $G$ is normal, then $\|\Tc\| = \OO(|E(G)| \cdot \adhsize(\Tc))$.

We observe the following relation between superbranch decompositions and tree decompositions.

\begin{observation}
If $(T,\lmap)$ is a superbranch decomposition of a hypergraph $G$, then $(T,\bag)$, where $\bag(t) = V(t)$ for all internal nodes $t \in \vint(T)$ and $\bag(\ell) = V(\lmap(\ell))$ for all leaves $\ell \in \leafs(T)$, is a tree decomposition of $\primal(G)$.
\end{observation}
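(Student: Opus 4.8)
The plan is to verify the two defining conditions of a tree decomposition directly, by unfolding the definitions of $\torso$, $\adh$, and $\bd$. The edge condition is immediate: if $uv \in E(\primal(G))$, then by definition of $\primal(G)$ there is a hyperedge $e \in E(G)$ with $\{u,v\} \subseteq V(e)$; since $\lmap$ is a bijection from $\leafs(T)$ onto $E(G)$ there is a leaf $\ell$ with $\lmap(\ell) = e$, and then $\{u,v\} \subseteq V(e) = V(\lmap(\ell)) = \bag(\ell)$.

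For the vertex condition, fix $v \in V(G)$ and let $L_v = \{\ell \in \leafs(T) : v \in V(\lmap(\ell))\}$; since $V(G) = \bigcup_{e \in E(G)} V(e)$ and $\lmap$ is onto $E(G)$, the set $L_v$ is nonempty. Let $T_v$ be the minimal subtree of $T$ containing $L_v$. A leaf of $T$ lying in $T_v$ stays of degree $\le 1$ in $T_v$ and hence is a leaf of $T_v$, and conversely a leaf of $T_v$ outside $L_v$ could be deleted without disconnecting $T_v$ or losing $L_v$, contradicting minimality; so the leaves of $T_v$ are exactly $L_v$. I will show that $\{t \in V(T) : v \in \bag(t)\} = V(T_v)$, which suffices since $V(T_v)$ induces a nonempty subtree of $T$.

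The crux is describing $\bag(t)$ for an internal node $t$: since $\bag(t) = V(\torso(t)) = \bigcup_{s \in N_T(t)} \adh(st)$ and $\adh(st) = \bd(\lmap(\vec{st})) = V(\lmap(\vec{st})) \cap V(\lmap(\vec{ts}))$ (using $\lmap(\vec{ts}) = \co{\lmap(\vec{st})}$), we get that for internal $t$, $v \in \bag(t)$ holds iff some neighbor $s$ of $t$ and some leaves $\ell_1 \in \leafs_T(\vec{st})$, $\ell_2 \in \leafs_T(\vec{ts})$ satisfy $v \in V(\lmap(\ell_1)) \cap V(\lmap(\ell_2))$, i.e., iff $L_v$ meets both components of $T - st$ for some edge $st$ incident to $t$. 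For the inclusion $\subseteq$: such $\ell_1, \ell_2 \in L_v$ force $t$ onto the $\ell_1$–$\ell_2$ path of $T$ (which crosses $st$), so $t \in V(T_v)$; and a leaf $t$ with $v \in \bag(t) = V(\lmap(t))$ satisfies $t \in L_v \subseteq V(T_v)$. For the inclusion $\supseteq$: a leaf $t \in V(T_v)$ lies in $L_v$, so $v \in \bag(t)$; and an internal node $t \in V(T_v)$ (so $|L_v| \ge 2$) lies on a path between two leaves $\ell_1, \ell_2$ of $T_v$ in distinct components of $T_v - t$, because by minimality of $T_v$ each component of $T_v - t$ contains a leaf of $T_v$, and taking $s$ to be the neighbor of $t$ on this path towards $\ell_1$ places $\ell_1$ and $\ell_2$ on opposite sides of $st$ in $T$, whence $v \in \adh(st) \subseteq \bag(t)$.

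I expect the only friction to be the bookkeeping of which leaves of $T$ fall on which side of an edge $st$ (the meaning of $\leafs_T(\vec{st})$), together with the two elementary structural facts used above: that the leaves of $T_v$ are exactly $L_v$, and that every node of $T_v$ separates two of its leaves. The degenerate cases $|E(G)| \le 1$, where $T$ has at most one node, are trivial. No idea beyond definition-chasing is required.
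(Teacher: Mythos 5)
Your proof is correct. The paper states this as an unproved observation, and your argument is exactly the routine definition-chase it leaves implicit: the edge condition via the leaf bag of the witnessing hyperedge, and the vertex condition via the identification of $\{t : v \in \bag(t)\}$ with the minimal subtree spanning the leaves whose hyperedges contain $v$, with all the small structural facts (leaves of $T_v$ are exactly $L_v$; every internal node of $T_v$ separates two of them) checked correctly.
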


Note that the adhesions of $(T,\bag)$ are the same as the adhesions of $(T,\lmap)$.
When thinking of superbranch decompositions, it may be useful to think about this tree decomposition.

A \emph{rooted superbranch decomposition} is a superbranch decomposition $(T,\lmap)$ where $T$ is a rooted tree.

\subsection{Unbreakability and leannes}
\label{subsec:defs:ubandlean}
We then define our notions of unbreakability and leannes.

\paragraph{Unbreakability.}
Let us first define unbreakability for graphs.
Let $G$ be a graph and $s,k$ integers.
A set $X \subseteq V(G)$ is \emph{$(s,k)$-unbreakable} if there is no vertex cut $(A,B)$ of $G$ of order $<k$ so that $|A \cap X| \ge s$ and $|B \cap X| \ge s$.
A graph $G$ is $(s,k)$-unbreakable if $V(G)$ is $(s,k)$-unbreakable.

Note that if $k \le 0$, then every set is $(s,k)$-unbreakable.
Furthermore, if $s < k$, then no set of size $\ge s$ is $(s,k)$-unbreakable.
Therefore, these definitions make sense only for $s \ge k \ge 1$.

We then define unbreakability for hypergraphs.
Let $G$ be a hypergraph and $s,k$ integers.
A set $X \subseteq V(G)$ is $(s,k)$-unbreakable if there is no separation $(A,\co{A})$ of $G$ of order $<k$ so that $|V(A) \cap X| \ge s$ and $|V(\co{A}) \cap X| \ge s$.
A hypergraph $G$ is $(s,k)$-unbreakable if $V(G)$ is $(s,k)$-unbreakable.

Note that $(s,k)$-unbreakability implies $(s',k')$-unbreakability for all $s' \ge s$ and $k' \le k$.
Also, if $X \subseteq V(G)$ is $(s,k)$-unbreakable, then every subset of $X$ is also $(s,k)$-unbreakable.

If $G$ is a hypergraph, $X \subseteq V(G)$, and $X$ is $(s,k)$-unbreakable, then $X$ is $(s+k,k)$-unbreakable in the graph $\primal(G)$.
If $X$ is $(s,k)$-unbreakable in $\primal(G)$, then $X$ is also $(s,k)$-unbreakable in $G$.

\paragraph{Unbreakable decompositions.}
A tree decomposition $\Tc = (T,\bag)$ of a graph is $(s,k)$-unbreakable if $\bag(t)$ is $(s,k)$-unbreakable for every $t \in V(T)$.
A superbranch decomposition $\Tc = (T,\lmap)$ of a hypergraph is $(s,k)$-unbreakable if $V(t)$ is $(s,k)$-unbreakable for every $t \in \vint(T)$.

We note that we will work with both $(s,k)$-unbreakable superbranch decompositions, and superbranch decompositions whose torsos are $(s,k)$-unbreakable.
The former notion implies the latter, but the latter notion does not necessarily imply the former.

\paragraph{Lean tree decompositions.}
Following the definition of \cite{DBLP:journals/siamdm/CarmesinDHH14}, we define that a tree decomposition $\Tc = (T,\bag)$ is \emph{$k$-lean} if
\begin{enumerate}
\item $\adhsize(\Tc) < k$, and 
\item for all $t_1,t_2 \in V(T)$, $X_1 \subseteq \bag(t_1)$, and $X_2 \subseteq \bag(t_2)$, if there exists an $(X_1,X_2)$-separator of size $k' < \min(|X_1|, |X_2|, k)$, then there exists a $(t_1,t_2)$-adhesion of size $\le k'$.\label{klean:prop2}
\end{enumerate}

In particular, the adhesions of a $k$-lean tree decomposition capture all separators of size $<k$ between subsets of bags.
The case of $t_1 = t_2$ of the property of \Cref{klean:prop2} implies that for every $t \in V(T)$, if $X_1,X_2 \subseteq \bag(t)$, then $\flow(X_1,X_2) \ge \min(|X_1|,|X_2|,k)$.
In particular, this implies the following observation.

\begin{observation}
A $k$-lean tree decomposition is $(i,i)$-unbreakable for all $i \le k$.
\end{observation}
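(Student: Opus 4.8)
The plan is to derive a contradiction directly from the definition of $k$-leanness, reducing everything to the single-bag case ($t_1=t_2$) that is already singled out right after the definition. Fix $i \le k$. If $i \le 0$ then every vertex set is $(i,i)$-unbreakable by definition, so assume $1 \le i \le k$. Let $\Tc=(T,\bag)$ be a $k$-lean tree decomposition of a graph $G$, and suppose for contradiction that it is not $(i,i)$-unbreakable. Then there is a node $t\in V(T)$ and a vertex cut $(A,B)$ of $G$ of order $<i$ with $|A\cap\bag(t)|\ge i$ and $|B\cap\bag(t)|\ge i$.

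First I would choose subsets $X_1\subseteq A\cap\bag(t)$ and $X_2\subseteq B\cap\bag(t)$ with $|X_1|=|X_2|=i$, which exist by the size bounds, and set $S=A\cap B$, so $|S|<i$. Next I would check that $S$ is an $(X_1,X_2)$-separator: any $(X_1,X_2)$-path is in particular an $(A,B)$-path, and since $A\cup B=V(G)$ with no edges between $A\setminus B$ and $B\setminus A$, every such path (including the one-vertex paths in $X_1\cap X_2$, which lie in $A\cap B$) meets $S$. Thus $\flow_G(X_1,X_2)\le|S|<i$. On the other hand, $X_1,X_2\subseteq\bag(t)$, so the $t_1=t_2=t$ instance of the leanness property (\Cref{klean:prop2}), equivalently the observation stated right before, gives $\flow_G(X_1,X_2)\ge\min(|X_1|,|X_2|,k)=\min(i,i,k)=i$, using $i\le k$ in the last equality. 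This is a contradiction. (Alternatively, applying \Cref{klean:prop2} directly with $t_1=t_2=t$ and $k'=|S|$ would yield a $(t,t)$-adhesion of size $\le|S|$; but the $(t,t)$-path in $T$ consists of the single node $t$ and has no edges, so no such adhesion exists.) Hence no such vertex cut exists, every bag of $\Tc$ is $(i,i)$-unbreakable, and therefore $\Tc$ is $(i,i)$-unbreakable.

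There is no real obstacle here; the statement is essentially a direct unwinding of the definitions. The only two points that deserve a line of justification are that $S=A\cap B$ really is an $(X_1,X_2)$-separator in the strong sense used in this paper — which needs $X_1\cap X_2\subseteq A\cap B$, automatic since $X_1\subseteq A$ and $X_2\subseteq B$ — and that the hypothesis $i\le k$ is precisely what makes $\min(|X_1|,|X_2|,k)$ equal to $i$, which is where that assumption is consumed.
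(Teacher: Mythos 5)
Your proof is correct and follows exactly the route the paper intends: it is the $t_1=t_2$ case of the leanness property (\Cref{klean:prop2}), where the absence of any $(t,t)$-adhesion forces $\flow(X_1,X_2)\ge\min(|X_1|,|X_2|,k)$, contradicting the existence of a small cut splitting a bag. The paper states this observation without a written proof, and your write-up is a faithful and complete unwinding of the same argument.
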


When $\Tc = (T,\bag)$ is a tree decomposition of a graph $G$, we define that a \emph{non-$k$-lean-witness}\footnote{Similar objects were called lean-witnesses in~\cite{DBLP:journals/talg/CyganKLPPSW21} and breakable witnesses in~\cite{DBLP:journals/corr/abs-2408-09368}.} for $\Tc$ is a quadruple $(A,B,t_1,t_2)$ so that

\begin{enumerate}
\item $(A,B)$ is a vertex cut of $G$ of order $|A \cap B|<k$,
\item $t_1,t_2 \in V(T)$, 
\item $|\bag(t_1) \cap A| > |A \cap B|$, $|\bag(t_2) \cap B| > |A \cap B|$, and
\item every $(t_1,t_2)$-adhesion has size $> |A \cap B|$.
\end{enumerate}

It is not hard to observe that if a tree decomposition violates \Cref{klean:prop2} of the definition of a $k$-lean tree decomposition, then there exists a non-$k$-lean-witness for it.
In particular, we have the following.

\begin{observation}
A tree decomposition $\Tc$ is $k$-lean if and only if $\adhsize(\Tc) < k$ and there are no non-$k$-lean-witnesses for $\Tc$.
\end{observation}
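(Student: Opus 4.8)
The plan is to prove the equivalence by simply unfolding the two definitions; the only genuine content is the passage between $(X_1,X_2)$-separators and vertex cuts recorded in the preliminaries. First I would dispose of the degenerate case $\adhsize(\Tc) \ge k$: then the left-hand side fails (a $k$-lean tree decomposition has adhesion size $<k$ by definition) and so does the right-hand side (it asks for $\adhsize(\Tc)<k$ explicitly), so the stated equivalence holds trivially. Hence from now on I assume $\adhsize(\Tc) < k$, and it remains to show that \Cref{klean:prop2} holds if and only if there is no non-$k$-lean-witness for $\Tc$.

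For the forward implication I would argue by contraposition. Given a non-$k$-lean-witness $(A,B,t_1,t_2)$, write $c := |A \cap B| < k$ and set $X_1 = \bag(t_1) \cap A$ and $X_2 = \bag(t_2) \cap B$, so that $|X_1|, |X_2| > c$ by condition~3. The key observation is that $A \cap B$ is an $(X_1,X_2)$-separator: a one-vertex $(X_1,X_2)$-path lies in $X_1 \cap X_2 \subseteq A \cap B$, and any longer such path runs from $X_1 \subseteq A$ to $X_2 \subseteq B$ and hence, since there are no edges between $A \setminus B$ and $B \setminus A$, must meet $A \cap B$. Thus $A \cap B$ is an $(X_1,X_2)$-separator of size $c < \min(|X_1|,|X_2|,k)$, so \Cref{klean:prop2} would hand us a $(t_1,t_2)$-adhesion of size $\le c$, contradicting condition~4 of the witness.

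For the backward implication I would also argue by contraposition: suppose \Cref{klean:prop2} fails, witnessed by $t_1,t_2 \in V(T)$, $X_1 \subseteq \bag(t_1)$, $X_2 \subseteq \bag(t_2)$, and an $(X_1,X_2)$-separator $S$ with $|S| = k' < \min(|X_1|,|X_2|,k)$ while every $(t_1,t_2)$-adhesion has size $>k'$. Using the equivalent vertex-cut description of separators from the preliminaries, I would choose a vertex cut $(A,B)$ of $G$ with $X_1 \subseteq A$, $X_2 \subseteq B$, and $A \cap B = S$. Then $|A \cap B| = k' < k$; moreover $|\bag(t_1) \cap A| \ge |X_1| > k'$ and $|\bag(t_2) \cap B| \ge |X_2| > k'$ (using $X_1 \subseteq \bag(t_1) \cap A$ and $X_2 \subseteq \bag(t_2) \cap B$), and every $(t_1,t_2)$-adhesion has size $>k' = |A \cap B|$. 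Hence $(A,B,t_1,t_2)$ is a non-$k$-lean-witness, which completes the contrapositive.

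I expect the argument to be entirely routine; the only point that needs a moment's care is verifying in the forward direction that $A \cap B$ indeed separates $X_1$ from $X_2$ — in particular handling the one-vertex paths through $X_1 \cap X_2$ — together with the symmetric appeal to the vertex-cut form of a separator in the backward direction. Both facts are immediate from the definitions of vertex cut and of $(X_1,X_2)$-separator stated in the preliminaries, so there is no real obstacle here.
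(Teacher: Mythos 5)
Your proof is correct and is exactly the routine definitional unfolding the paper intends: the paper states this as an observation without proof, remarking only that a violation of \Cref{klean:prop2} yields a non-$k$-lean-witness, and your two contrapositive directions (passing between $(X_1,X_2)$-separators and vertex cuts via the equivalence recorded in the preliminaries) fill in precisely that routine verification.
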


\paragraph{Gomory-Hu trees.}
When $G$ is a graph and $U \subseteq V(G)$, an \emph{element connectivity $k$-Gomory-Hu tree} of $(G,U)$ is a triple $(T,\gamma,\alpha)$, where $T$ is a tree, $\gamma \colon U \to V(T)$ maps vertices in $U$ to nodes of $T$, and $\alpha \colon E(T) \to 2^{(V(G) \setminus U) \cup E(G)}$ maps edges of $T$ to subsets $\alpha(e)$ of $(V(G) \setminus U) \cup E(G)$ with $|\alpha(e)|<k$ and so that
\begin{enumerate}
\item for all $u,v \in U$ and $e \in E(T)$ so that $e$ is on the unique $(\gamma(u),\gamma(v))$-path in $T$, the set $\alpha(e)$ is an $U$-element $(u,v)$-cutset, and
\item for all $u,v \in U$, if there exists an $U$-element $(u,v)$-cutset of size $k'<k$, then there exists $e \in E(T)$ on the unique $(\gamma(u),\gamma(v))$-path in $T$ so that $\alpha(e)$ is an $U$-element $(u,v)$-cutset of size $|\alpha(e)| \le k'$.
\end{enumerate}

A \emph{$k$-Gomory-Hu} tree of a graph $G$ is an element connectivity $k$-Gomory-Hu tree of $(G,V(G))$.

\subsection{Graph-theoretical concepts}
\label{subsec:defs:graphtheory}
We then define some key graph-theoretical concepts for hypergraphs.

\paragraph{Well-linked sets.}
Let $G$ be a hypergraph and $A \subseteq E(G)$.
A \emph{bipartition} of $A$ is a pair $(B_1,B_2)$ so that $B_1 \cup B_2 = A$ and $B_1 \cap B_2 = \emptyset$.
We allow $B_1$ or $B_2$ to be empty (and when $A$ is empty, they both are empty).
A set $A \subseteq E(G)$ is \emph{well-linked} if for all bipartitions $(B_1,B_2)$ of $A$, either $\bdc(B_1) \ge \bdc(A)$ or $\bdc(B_2) \ge \bdc(A)$.
A separation $(A,\co{A})$ of $G$ is \emph{doubly well-linked} if both $A$ and $\co{A}$ are well-linked.

Let $k$ be an integer.
A set $A \subseteq E(G)$ is \emph{$k$-well-linked} if for all bipartitions $(B_1,B_2)$ of $A$ either (1) $\bdc(B_1) \ge \bdc(A)$, (2) $\bdc(B_2) \ge \bdc(A)$, or (3) $\bdc(B_1) \ge k$ and $\bdc(B_2) \ge k$.
Note that if $A$ is well-linked then $A$ is $k$-well-linked for every $k$.
Also, if $\bdc(A) \le k$, then $k$-well-linkedness of $A$ is equivalent to well-linkedness of $A$.

A separation $(A,\co{A})$ is \emph{mixed-$k$-well-linked} if one of $A$ and $\co{A}$ is well-linked, and other is $k$-well-linked.
A set $A \subseteq E(G)$ is \emph{$k$-better-linked} if either (1) $A$ is well-linked, or (2) $A$ is $k$-well-linked and $\co{A}$ is well-linked.
In particular, if $(A,\co{A})$ is mixed-$k$-well-linked separation, then both $A$ and $\co{A}$ are $k$-better-linked.

The point of the definition of $k$-better-linked sets is that they turn out to be as useful as well-linked sets for the purpose of uncrossing separations of order $<k$.

A \emph{tripartition} of a set $A \subseteq E(G)$ is a triple $(B_1,B_2,B_3)$ so that $B_1 \cup B_2 \cup B_3 = A$ and $B_1$, $B_2$, and $B_3$ are disjoint.
We again allow one or more of the sets $B_i$ to be empty.
A set $A \subseteq E(G)$ is \emph{tri-well-linked} if for all tripartitions $(B_1,B_2,B_3)$ of $A$, there is $i \in [3]$ so that $\bdc(B_i) \ge \bdc(A)$.
A separation $(A,\co{A})$ is \emph{doubly tri-well-linked} if both $A$ and $\co{A}$ are tri-well-linked.

\paragraph{Internal components.}
Let $G$ be a hypergraph and $A \subseteq E(G)$ a non-empty set.
We define that $A$ is \emph{internally connected} if there is no bipartition $(B_1,B_2)$ of $A$ so that both $B_1$ and $B_2$ are non-empty and $\bd(B_i) \subseteq \bd(A)$ for both $i \in [2]$.
Note that if $|A| = 1$, then $A$ is internally connected.
An \emph{internal component} of $A$ is an inclusion-wise maximal subset $C \subseteq A$ so that $C$ is internally connected.

An intuitive way of thinking of the internal components of $A$ is that each internal component $C$ consist of either a single hyperedge $C = \{e\}$ with $V(e) \subseteq \bd(A)$, or corresponds to a connected component $C'$ of the graph $\primal(G)[V(A) \setminus \bd(A)]$, in the sense that $C$ contains all hyperedges $e$ so that $V(e)$ intersects $C'$.

With this in mind, it is easy to observe that if $C$ is an internal component of $A$, then $\bd(C) \subseteq \bd(A)$.
We also observe that the internal components of $A$ form a partition of $A$, i.e., there is an unique division of $A$ into its internal components.

We say that a set $A \subseteq E(G)$ is \emph{semi-internally connected} if for every internal component $C$ of $A$ it holds that $\bd(C) = \bd(A)$.
Note that the maximal semi-internally connected subsets of $A$ form a partition of $A$ into at most $2^{\bdc(A)}$ semi-internally connected sets $C$ with $\bd(C) \subseteq \bd(A)$.

\paragraph{Tangles.}
Tangles were introduced by Robertson and Seymour~\cite{RobertsonS91}.
Let $G$ be a hypergraph and $k$ be an integer.
A \emph{tangle} of order $k$ of $G$ is a family $\tang$ of subsets of $E(G)$ that satisfies the following four conditions, called the four \emph{tangle axioms}:

\begin{enumerate}
\item $\bdc(A) < k$ for all $A \in \tang$,
\item if $A \subseteq E(G)$ and $\bdc(A) < k$, then either $A \in \tang$ or $\co{A} \in \tang$,
\item if $A,B,C \in \tang$, then $A \cup B \cup C \neq E(G)$, and
\item for all $e \in E(G)$, it holds that $E(G) \setminus \{e\} \notin \tang$.
\end{enumerate}

A \emph{tangle} of $G$ is a tangle of order $k$ of $G$ for some $k$.
One should think of a tangle of order $k$ as an orientation of all separations of order $<k$ that satisfies certain consistency conditions.
In particular, if $A \in \tang$, then we think that $\tang$ orients the separation $(A,\co{A})$ so that $A$ is the ``small side'' and $\co{A}$ is the ``big side''.

For example, if $G$ is a hypergraph and $C \subseteq V(G)$ is a connected component of $\primal(G)$, then $C$ defines a tangle of order~$1$ by letting
\[\tang = \{A \subseteq E(G) : \bdc(A) = 0 \text{ and } V(A) \cap C = \emptyset\}.\]
Also, for every hyperedge $e \in E(G)$ we can construct a tangle of order $\bdc(e)$ by setting 
\[\tang = \{A \subseteq E(G) : e \notin A \text{ and } \bdc(A)<\bdc(e)\}.\]
This is called the \emph{$e$-tangle} of $G$.

A tangle $\tang_1$ is a \emph{truncation} of a tangle $\tang_2$ if $\tang_1 \subseteq \tang_2$.
If a tangle $\tang$ has order $k$, then for every $k' \le k$, the \emph{truncation of $\tang$ of order $k'$}, defined as
\[\tang_{k'} = \{A \subseteq E(G) : \bdc(A) < k' \text{ and } A \in \tang\}\]
is a tangle of order $k'$.
Note that the example of a tangle of order $1$ defined by a connected component $C$ of $\primal(G)$ can be obtained as a truncation of any $e$-tangle for $e \in E(G)$ with $V(e) \subseteq C$ and $\bdc(e) \ge 1$.

It follows from these definitions that a hypergraph with $|E(G)| \le 1$ has no tangles of positive order.
All hypergraphs, including ones with $|E(G)| \le 1$, have the tangle $\tang = \emptyset$ of order $0$.

Let us then observe that a normal unbreakable hypergraph contains a tangle defined by vertex-cardinalities.

\begin{lemma}
\label{lem:vertexcardtangle}
Let $G$ be a normal hypergraph, and $s \ge k \ge 1$ integers so that $G$ is $(s,k)$-unbreakable and has $|V(G)| \ge 3s$.
Then, the family $\tang = \{A \subseteq E(G) : |V(A)| \le |V(\co{A})| \text{ and } \bdc(A) < k\}$ is a tangle of order $k$ of $G$.
\end{lemma}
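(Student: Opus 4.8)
The plan is to verify the four tangle axioms one by one for the family $\tang = \{A \subseteq E(G) : |V(A)| \le |V(\co{A})| \text{ and } \bdc(A) < k\}$. Axiom~1 is immediate from the definition, since membership in $\tang$ explicitly requires $\bdc(A) < k$. For Axiom~2, take any $A \subseteq E(G)$ with $\bdc(A) < k$; by symmetry of $\bdc$ we also have $\bdc(\co{A}) < k$, and since $|V(A)| \le |V(\co{A})|$ or $|V(\co{A})| \le |V(A)|$, at least one of $A$, $\co{A}$ lies in $\tang$. For Axiom~4, fix $e \in E(G)$ and set $A = E(G) \setminus \{e\}$; since $G$ is normal, $\bd(A) = V(e)$, so $V(\co{A}) = V(e) \subseteq V(A)$, and the inequality $|V(A)| \le |V(\co{A})|$ would force $V(A) = V(e) = V(G)$. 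But then $|V(G)| = |V(e)| \le \rank(G)$, which is bounded, contradicting $|V(G)| \ge 3s \ge 3k \ge 3$ once we note $s \ge \rank(G)$ is not assumed — so I should instead argue directly: if $V(A) = V(\co{A}) = V(e)$ then $V(G) = V(e)$ has size $\le \rank(G)$; but more robustly, $|V(\co{A})| = |V(e)| < k \le s$ (since $\bdc(e) = \bdc(A) < k$ and $G$ normal gives $\bd(e) = V(e)$), whereas the unbreakability hypothesis with the separation $(A, \co{A}) = (A,\{e\})$ of order $<k$ would require $|V(A) \cap V(G)| < s$ or $|V(\co{A})| < s$ — the latter holds, so this does not yet give a contradiction. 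The clean route: $|V(A)| \le |V(\co{A})| = |V(e)| < k$ implies $|V(G)| \le |V(A)| + |V(\co{A})| < 2k \le 2s$, contradicting $|V(G)| \ge 3s$. Hence $A \notin \tang$.

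The substantive axiom is Axiom~3: if $A, B, C \in \tang$ then $A \cup B \cup C \ne E(G)$. Suppose for contradiction that $A \cup B \cup C = E(G)$. The idea is that each of $A$, $B$, $C$ has $|V(\cdot)| \le |V(G)|/2 + (\text{something small})$ — more precisely, I want to show each set in $\tang$ has vertex set of size at most roughly $|V(G)|/2$, and then three such sets cannot cover all of $E(G)$ because their union's vertex set would be too small. Concretely: for $A \in \tang$ with $\bdc(A) < k$, the unbreakability of $G$ applied to the separation $(A, \co{A})$ of order $<k$ says we cannot have both $|V(A)| \ge s$ and $|V(\co{A})| \ge s$. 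Combined with $|V(A)| \le |V(\co{A})|$, this forces $|V(A)| < s$. So in fact every $A \in \tang$ satisfies $|V(A)| < s$! Then $V(G) = V(A \cup B \cup C) = V(A) \cup V(B) \cup V(C)$ has size $< 3s$, contradicting the hypothesis $|V(G)| \ge 3s$. This is the key step, and it is where both the unbreakability and the $|V(G)| \ge 3s$ hypotheses are used in an essential way.

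I expect the main obstacle to be entirely bookkeeping rather than conceptual: making sure the edge cases with small or degenerate separations are handled (e.g. $A = \emptyset$, where $V(A) = \emptyset$ and $\bdc(A) = 0 < k$, so $\emptyset \in \tang$ — fine, and then $|V(\emptyset)| = 0 < s$ trivially), and confirming that "separation of $G$ of order $<k$" in the definition of unbreakability matches the pair $(A, \co{A})$ here, which it does by the definition of separations of hypergraphs. One should double-check that the claim "$|V(A)| < s$ for all $A \in \tang$" is used correctly: it relies on $|V(A)| \le |V(\co{A})|$ (part of the definition of $\tang$) together with $(s,k)$-unbreakability ruling out $|V(A)| \ge s \wedge |V(\co{A})| \ge s$; since $|V(\co{A})| \ge |V(A)|$, if $|V(A)| \ge s$ then both sides are $\ge s$, contradiction. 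This immediately yields Axioms~3 and~4 simultaneously (for Axiom~4 note $|V(G)| < 3s$ would again fail, or argue as above), so the proof is short once this observation is in place.
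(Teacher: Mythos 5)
Your proof is correct and follows essentially the same route as the paper: axioms 1 and 2 are immediate, axiom 3 uses exactly the paper's key observation that $(s,k)$-unbreakability together with $|V(A)| \le |V(\co{A})|$ forces $|V(A)| < s$ for every $A \in \tang$, and axiom 4 is settled by normality plus the order bound. Your axiom-4 discussion contains some false starts before the ``clean route,'' but that final argument ($V(e) \subseteq V(A)$ by normality, $\bdc(A) = |V(e)| < k$, hence $|V(G)| < 2k \le 2s$) is valid and is only a cosmetic variant of the paper's version.
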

\begin{proof}
By definition, we have that $\bdc(A) < k$ for all $A \in \tang$.
Also, clearly either $|V(A)| \le |V(\co{A})|$ or $|V(\co{A})| \le |V(A)|$ for all $A \subseteq E(G)$, so also the second tangle axiom holds.

For the third axiom, assume there exists $B_1,B_2,B_3 \in \tang$ with $B_1 \cup B_2 \cup B_3 = E(G)$.
For any $i \in [3]$, it cannot be that $|V(B_i)| \ge s$, as then also $|V(\co{B_i})| \ge s$, which would contradict the $(s,k)$-unbreakability of $G$.
Therefore $|V(B_i)| < s$ for all $i \in [3]$.
However, this is a contradiction, because every vertex of $G$ is in at least one of $V(B_i)$ and $|V(G)| \ge 3s$.

For the fourth axiom, assume there exists $e \in E(G)$ so that $|V(E(G) \setminus \{e\})| \le |V(e)|$.
Because $G$ is normal, this implies $V(e) = V(G)$.
However, then $\bdc(e) = |V(G)| > s \ge k$, so $E(G) \setminus \{e\} \notin \tang$.
\end{proof}

If the tangle of order $k$ defined in \Cref{lem:vertexcardtangle} exists, we call it the \emph{vertex-cardinality-tangle of order $k$}.

A separation $(A,\co{A})$ of a hypergraph $G$ \emph{distinguishes} a tangle $\tang_1$ from a tangle $\tang_2$ if $A \in \tang_1$ and $\co{A} \in \tang_2$.
If $\tang_1$ and $\tang_2$ are two tangles, then either (1) $\tang_1$ is a truncation of $\tang_2$, (2) $\tang_2$ is a truncation of $\tang_1$, or (3) there exists a separation $(A,B)$ that distinguishes $\tang_1$ from $\tang_2$.
We say that a hypergraph $G$ is \emph{$k$-tangle-unbreakable} if there is no separation of order $<k$ that distinguishes two tangles.
In particular, then for all $k' \le k$ there is at most one tangle of order $k'$, and the tangles of order $\le k$ can be totally ordered by the truncation relation.

\paragraph{Branchwidth.}
A tree $T$ is \emph{cubic} if all of its internal nodes have degree $3$.
A \emph{branch decomposition} of a hypergraph $G$ is a superbranch decomposition $\Tc = (T,\lmap)$ of $G$ where $T$ is a cubic tree.
When $|E(G)| \ge 2$, the \emph{width} of a branch decomposition $\Tc$ of $G$ is $\adhsize(\Tc)$, and when $|E(G)| \le 1$, we define the width of $\Tc$ to be $0$.
The \emph{branchwidth} of $G$, denoted by $\bw(G)$, is the minimum width of a branch decomposition of $G$.

The following lemma by Robertson and Seymour~\cite{RobertsonS91}\footnote{The definition of tangles in~\cite{RobertsonS91} is superficially different from our definition. See for example~\cite{OumS07} for a statement of \Cref{lem:tanglebwduality} that matches our definitions.} gives a min-max relation between tangles and branchwidth.

\begin{lemma}[\cite{RobertsonS91}]
\label{lem:tanglebwduality}
The branchwidth of a hypergraph is equal to the maximum order of a tangle of it.
\end{lemma}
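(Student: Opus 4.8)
The plan is to prove the two inequalities $\bw(G) \ge \theta^\star$ and $\bw(G) \le \theta^\star$ separately, where $\theta^\star$ denotes the maximum order of a tangle of $G$; only the symmetry and submodularity of $\bdc$ are used, so the hypergraph statement is just an instance of the abstract ``symmetric submodular function'' version proved in~\cite{RobertsonS91}. I would write out the first inequality (the easy direction) in full, and for the second — which carries the real content — sketch the strategy and then invoke \Cref{lem:tanglebwduality} as a black box, its proof in~\cite{RobertsonS91} being under ten pages.

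\textbf{Tangle order is a lower bound for branchwidth.} Fix a tangle $\tang$ of order $\theta$ and an arbitrary branch decomposition $\Tc = (T,\lmap)$; the claim is $\adhsize(\Tc) \ge \theta$. The cases $|E(G)| \le 2$ are checked directly against the tangle axioms, so assume $|E(G)| \ge 3$ and, for contradiction, that $\bdc(\lmap(\vec{st})) < \theta$ for every $st \in E(T)$. Then the second tangle axiom orients every edge of $T$: exactly one of $\lmap(\vec{st})$ and $\lmap(\vec{ts})$ lies in $\tang$, and I direct $st$ towards the endpoint whose side is \emph{not} in $\tang$. For a leaf $\ell$ with incident edge $\ell c$, the side $\{\lmap(\ell)\}$ has $\bdc < \theta$ and the fourth axiom gives $E(G) \setminus \{\lmap(\ell)\} \notin \tang$, so the second axiom forces $\{\lmap(\ell)\} \in \tang$ and the edge is directed away from $\ell$. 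Since the orientation of a tree's edge set is a DAG, it has a sink; by the previous sentence no leaf is a sink, so the sink is an internal node $v$ with neighbours $c_1,c_2,c_3$, and every $vc_i$ points towards $v$. Hence $\lmap(\vec{vc_i}) \notin \tang$, i.e.\ $\lmap(\vec{c_i v}) \in \tang$, for each $i \in [3]$; but $\lmap(\vec{c_1 v}) \cup \lmap(\vec{c_2 v}) \cup \lmap(\vec{c_3 v}) = E(G)$, contradicting the third tangle axiom. Thus every branch decomposition has width $\ge \theta$, so $\bw(G) \ge \theta^\star$.

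\textbf{Branchwidth is at most the maximum tangle order.} Here I would argue by contradiction: take a hypergraph $G$ with no tangle of order $k$ but $\bw(G) \ge k$, with $|E(G)|$ minimum. The aim is to find a separation $(A,\co A)$ of $G$ with $\bdc(A) < k$ and $|A|,|\co A| \ge 2$. Given such a separation, $G \rescliqs \co A$ and $G \rescliqs A$ each have fewer hyperedges, and one checks they still have no tangle of order $k$; by minimality each then has a branch decomposition of width $<k$, and gluing these along the hyperedges $e_{\co A}$ and $e_A$ (both with vertex set $\bd(A)$) produces a branch decomposition of $G$ of width $<k$, since all adhesion sizes are preserved and the new edge has order $\bdc(A) < k$ — a contradiction. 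The work is in (i) exhibiting such a balanced low-order separation, which is done by taking the family of ``small sides'' of all separations of order $<k$ and uncrossing its members against one another via submodularity of $\bdc$ until either all tangle axioms hold (contradicting ``no tangle of order $k$'') or a suitable separation is extracted; and (ii) verifying that a tangle of order $k$ in $G \rescliqs A$ would push back through $e_A$ to a tangle of $G$, again an uncrossing argument; the degenerate cases in which $G$ has no order-$<k$ separation with both sides of size $\ge 2$ are treated separately and are not counterexamples. These uncrossing arguments are precisely the heart of the Robertson--Seymour proof and are the main obstacle, so I would import \Cref{lem:tanglebwduality} from~\cite{RobertsonS91} rather than reconstruct them here.
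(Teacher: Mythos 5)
The paper does not prove this lemma at all: it is imported verbatim from Robertson and Seymour~\cite{RobertsonS91} (with a footnote pointing to~\cite{OumS07} for a statement matching these definitions), so your decision to cite the hard direction rather than reconstruct its uncrossing arguments is exactly the paper's own treatment. Your written-out easy direction -- orient each edge of a hypothetical width-$<\theta$ branch decomposition towards the endpoint whose side is outside the tangle, note that the fourth axiom forces every leaf edge to point away from the leaf, and contradict the third axiom at an internal (degree-$3$) sink, with $|E(G)|\le 2$ handled separately -- is correct and is the standard argument.
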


\subsection{The Nagamochi-Ibaraki sparsifier}
\label{subsec:nagamochiibaraki}
Let $G$ be a graph and $k \ge 1$ an integer.
We define that a \emph{$k$-sparsifier} of $G$ is a subgraph $G'$ of $G$ so that $V(G') = V(G)$ and for any set $S \subseteq V(G)$ of size $|S| < k$, it holds that $\cc(G - S) = \cc(G' - S)$.
Equivalently, if $(A,B)$ is a vertex cut of $G'$ of order $<k$, then $(A,B)$ is a vertex cut of $G$.
We use the following result of Nagamochi and Ibaraki for computing $k$-sparsifiers with small number of edges.

\begin{lemma}[\cite{DBLP:journals/algorithmica/NagamochiI92,DBLP:books/cu/NI2008}]
\label{the:nispars}
There is an algorithm that, given a graph $G$, in time $\OO(\|G\|)$ returns a $k$-sparsifier $G'$ of $G$ with $|E(G')| \le k \cdot |V(G)|$.
\end{lemma}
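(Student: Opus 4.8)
The plan is to use the sparse-certificate construction of Nagamochi and Ibaraki, based on \emph{scan-first search}. First I would compute a \emph{nested forest decomposition} of $G$: a partition $E(G) = F_1 \cup F_2 \cup \cdots \cup F_r$ of the edge set into forests such that, for every $j$, the set $F_j$ is a (maximal) spanning forest of the graph obtained from $G$ by deleting the edges of $F_1,\ldots,F_{j-1}$, and moreover each $F_j$ is a scan-first-search forest of that graph --- this last property, and not merely maximality, is what the correctness argument will exploit. The key point is that all of $F_1,\ldots,F_r$ can be extracted during a single breadth-first-search-style traversal of $G$ that examines each adjacency list once (when an edge $vw$ is examined it is placed in the lowest-indexed forest in which $v$ and $w$ are not yet connected), and Nagamochi and Ibaraki show this can be organized to run in total time $\OO(\|G\|)$; I would either reproduce their implementation or invoke it directly. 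The output is $G' := (V(G),\, F_1 \cup \cdots \cup F_k)$, with the convention $F_j = \emptyset$ for $j > r$ (so $G' = G$ when $r < k$).

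The edge bound is immediate: each $F_j$ is a forest on the vertex set $V(G)$, so $|F_j| \le |V(G)|-1$, whence $|E(G')| \le k\,(|V(G)|-1) < k\,|V(G)|$. The only subtlety in the running time is that naively iterating $k$ separate spanning-forest computations (on $G$, then on $G$ minus the first forest, and so on) would cost $\OO(k\,\|G\|)$; the contribution of the Nagamochi--Ibaraki procedure is precisely that all $k$ forests are obtained in one linear-time traversal.

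For the sparsifier property I must show $\cc(G - S) = \cc(G' - S)$ for every $S \subseteq V(G)$ with $|S| < k$. Since $G'$ is a subgraph of $G$ on the same vertex set, $\cc(G'-S)$ automatically refines $\cc(G-S)$, so it suffices to prove the reverse refinement, i.e.\ that any two vertices connected in $G-S$ are connected in $G'-S$. Concatenating along a path in $G-S$, it suffices to treat a single edge $uv \in E(G)$ with $u,v \notin S$. If $uv \in E(G')$ this edge survives in $G'-S$ and we are done. Otherwise $uv \in E(G)\setminus E(G')$, so for every $j \in [k]$ the edge $uv$ is present in the graph of which $F_j$ is a spanning forest but does not belong to $F_j$; by maximality $u$ and $v$ lie in a common tree of $F_j$, which gives a $u$-$v$ path $Q_j$, and the paths $Q_1,\ldots,Q_k$ live in the pairwise edge-disjoint forests $F_1,\ldots,F_k$.

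The main obstacle is to upgrade this edge-disjoint collection to a \emph{vertex}-disjointness statement: the claim I would establish is that whenever $uv \in E(G)\setminus E(G')$, the subgraph $G'$ contains $k$ internally vertex-disjoint $u$-$v$ paths. Edge-disjointness of the $Q_j$ alone is insufficient, since a single vertex can meet arbitrarily many edge-disjoint paths; the proof must use that $F_1,\ldots,F_k$ are scan-first-search forests arising from one consistent traversal, which forces the $u$-$v$ connections in successive forests to be ``laminar'' enough that they can be rerouted into internally vertex-disjoint paths. Granting the claim, a set $S$ with $|S| < k$ and $u,v \notin S$ must avoid the interior of at least one of these $k$ paths, which then lies entirely in $G'-S$ and connects $u$ and $v$, completing the argument. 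I would prove the structural claim by induction following the scan order, ``peeling off'' one $u$-$v$ path per forest while maintaining internal disjointness from those already peeled --- this is exactly the step carried out by Nagamochi and Ibaraki, and is the reason the lemma is invoked here rather than reproved in full.
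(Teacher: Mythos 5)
Your proposal is correct and matches the paper's treatment of this statement: the paper does not prove the lemma but cites it (pointing to Section~2.2 of the Nagamochi--Ibaraki book), and you reconstruct exactly the cited construction --- the nested scan-first-search forest decomposition, the union $F_1\cup\cdots\cup F_k$, the trivial edge bound, and the reduction of the sparsifier property to the claim that a deleted edge's endpoints are joined by $k$ internally vertex-disjoint paths in $G'$. You correctly identify that this last vertex-disjointness claim is the genuinely non-trivial part and defer it to the source, which is precisely what the paper does.
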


The proof of the formulation of \Cref{the:nispars} presented here is given explicitly in~\cite[Section~2.2]{DBLP:books/cu/NI2008}, and is implicit in~\cite{DBLP:journals/algorithmica/NagamochiI92}.

We observe that if a $k$-sparsifier removes an edge $uv$, then it must retain a collection of $k$ internally disjoint $(u,v)$-paths.

\begin{lemma}
\label{lem:delhighconn}
Let $G'$ be a $k$-sparsifier of $G$.
If $uv \in E(G) \setminus E(G')$, then $\flowp_{G'}(u, v) \ge k$.
\end{lemma}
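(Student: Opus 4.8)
The plan is a short proof by contradiction that directly exploits the ``equivalent'' formulation of $k$-sparsifier given in \Cref{subsec:nagamochiibaraki}: namely, that every vertex cut of $G'$ of order $<k$ is also a vertex cut of $G$. First I would record the trivial but essential observation that since $uv \in E(G) \setminus E(G')$, the vertices $u$ and $v$ are non-adjacent in $G'$. By the definition of $\flowp$ together with Menger's theorem (as stated in the preliminaries), this means $\flowp_{G'}(u,v)$ equals the minimum size of a proper $(u,v)$-separator in $G'$.

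Now suppose toward a contradiction that $\flowp_{G'}(u,v) < k$. Then there is a proper $(u,v)$-separator $S \subseteq V(G')$ with $|S| < k$ and $\{u,v\} \cap S = \emptyset$. Unfolding the definition of a proper separator, there is a vertex cut $(A,B)$ of $G'$ with $u \in A$, $v \in B$, $A \cap B = S$, and hence $u \in A \setminus B$ and $v \in B \setminus A$. Since the order of $(A,B)$ is $|A \cap B| = |S| < k$, the $k$-sparsifier property of $G'$ tells us that $(A,B)$ is also a vertex cut of $G$. But $uv \in E(G)$ is an edge with one endpoint in $A \setminus B$ and the other in $B \setminus A$, which contradicts the defining property of a vertex cut (no edges between $A \setminus B$ and $B \setminus A$). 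Therefore $\flowp_{G'}(u,v) \ge k$, as claimed.

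I do not anticipate any real obstacle here; the statement is essentially a restatement of the sparsifier guarantee specialized to a deleted edge. The only points requiring a bit of care are (i) invoking the correct branch of the definition of $\flowp$ — the ``non-adjacent'' case, which is legitimate precisely because $uv \notin E(G')$ — so that we may identify $\flowp_{G'}(u,v)$ with a minimum proper separator, and (ii) using the equivalent reformulation of $k$-sparsifier (rather than the component-based definition) to pass the cut $(A,B)$ from $G'$ back to $G$. Both are immediate from the definitions already stated, so the write-up should be just a few lines.
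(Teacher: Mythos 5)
Your proof is correct and follows essentially the same contradiction argument as the paper: the paper takes a proper $(u,v)$-separator $S$ of size $<k$ in $G'$ and notes that $u,v$ lie in the same component of $G\setminus S$ but different components of $G'\setminus S$, contradicting the component-based definition of a $k$-sparsifier, while you invoke the equivalent vertex-cut formulation stated in the same definition. The two phrasings are interchangeable, so this is the same proof.
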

\begin{proof}
Suppose $\flowp_{G'}(u,v) < k$, and let $S \subseteq V(G) \setminus \{u,v\}$ be a proper $(u,v)$-separator of size $|S| < k$ in $G'$.
Now $u$ and $v$ are in different connected components of $G' \setminus S$, but in the same connected components of $G \setminus S$, which contradicts the definition of a $k$-sparsifier.
\end{proof}

\subsection{Basic data structures}
We then discuss the data structures that we use to represent graphs, hypergraphs, tree decompositions, and superbranch decompositions.
In particular, in this subsection we define what it means to give these objects as inputs to algorithms, and what it means to manipulate these objects when they are already stored.
We do not discuss any advanced techniques, but rather the purpose of this subsection is to fix definitions and recall what basic operations are available in the context of deterministic linear-time graph algorithms.

By a \emph{linked list} we mean a doubly linked list, with each entry storing pointers to the preceeding and following entries.
In particular, linked lists support insertion to the end in $\OO(1)$ time, and deleting an entry, given a pointer to it, in $\OO(1)$ time.

\paragraph{Representation of hypergraphs.}
We define that a representation of a hypergraph $G$ consists of
\begin{enumerate}
\item a linked list storing $V(G)$, where each entry corresponding to $v \in V(G)$ stores a pointer to the \emph{incidence list} of $v$,
\item for each $v \in V(G)$, the incidence list of $v$, storing $\ninc(v)$ by pointers to the hyperedges in $\ninc(v)$,
\item a linked list storing $E(G)$, where each entry corresponding to $e \in E(G)$ stores $V(e)$ as a linked list, with each entry of the linked list containing a pointer to $v \in V(e)$ and a pointer to the entry of the incidence list of $v$ that stores a pointer to $e$.
\end{enumerate}

We assume that the linked lists storing $V(G)$ and $E(G)$ have the capacity to store some extra information associated with each vertex and hyperedge of $G$.
Note that a representation of $G$ takes $\OO(\|G\|)$ space.
We then note the basic operations that can be efficiently implemented with this representation of hypergraphs.

\begin{lemma}
\label{lem:hypergraph_impl}
The representation of hypergraphs supports the following operations
\begin{enumerate}
\item add a hyperedge $e$ with $V(e) = \emptyset$ and return a pointer to $e$, in $\OO(1)$ time,\label{lem:hypergraph_impl:item1}
\item given a pointer to a vertex $v \in V(G)$ and a hyperedge $e \in E(G)$ with $v \notin V(e)$, add $v$ to $V(e)$, in $\OO(1)$ time,\label{lem:hypergraph_impl:item2}
\item given a pointer to a hyperedge $e \in E(G)$, add a new vertex $v$, add $v$ to $V(e)$, and return a pointer to $v$, in $\OO(1)$ time, and\label{lem:hypergraph_impl:item3}
\item given a pointer to a hyperedge $e \in E(G)$, remove $e$ and every vertex $v$ that occurs only in $V(e)$ in $\OO(|V(e)|)$ time.\label{lem:hypergraph_impl:item4}
\end{enumerate}
\end{lemma}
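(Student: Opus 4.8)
The statement to prove is \Cref{lem:hypergraph_impl}, which lists four elementary operations on the linked-list representation of hypergraphs. The plan is to verify each item by spelling out the pointer manipulations and checking they touch only a constant (or $\OO(|V(e)|)$) number of list entries, each operation on a doubly linked list being $\OO(1)$.

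\medskip

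\textbf{Item \ref{lem:hypergraph_impl:item1}.} To add a hyperedge $e$ with $V(e)=\emptyset$, I would append a new entry to the linked list storing $E(G)$; this entry holds an empty linked list for $V(e)$. Appending to the end of a doubly linked list is $\OO(1)$, and the pointer to the new entry is returned. No incidence lists change since $V(e)=\emptyset$.

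\textbf{Item \ref{lem:hypergraph_impl:item2}.} Given pointers to $v\in V(G)$ and to $e\in E(G)$ with $v\notin V(e)$, I would: (i) append to the incidence list of $v$ a new entry pointing to $e$, keeping a pointer $p$ to this new entry; (ii) append to the linked list $V(e)$ a new entry storing a pointer to $v$ together with the pointer $p$. Both appends are $\OO(1)$, so the total is $\OO(1)$. The invariants of the representation are maintained: the new $V(e)$-entry correctly points back to the incidence-list entry of $v$ that stores $e$.

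\textbf{Item \ref{lem:hypergraph_impl:item3}.} This is a combination: first create a new vertex $v$ by appending an entry to the linked list storing $V(G)$, with its incidence list initially empty; then invoke the procedure of item \ref{lem:hypergraph_impl:item2} to add $v$ to $V(e)$; then return the pointer to $v$. Each step is $\OO(1)$.

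\textbf{Item \ref{lem:hypergraph_impl:item4}.} Given a pointer to $e$, I would iterate over the linked list $V(e)$. For each entry, which stores a pointer to $v\in V(e)$ and a pointer $p$ to the incidence-list entry of $v$ recording $e$, I delete that incidence-list entry using $p$ in $\OO(1)$ time; then I check whether the incidence list of $v$ has become empty — this is the condition ``$v$ occurs only in $V(e)$'' — and if so, delete the entry for $v$ from the linked list storing $V(G)$ in $\OO(1)$ time. After processing all $|V(e)|$ entries, delete the entry for $e$ from the linked list storing $E(G)$ in $\OO(1)$ time. The loop runs $|V(e)|$ iterations of $\OO(1)$ work each, so the total is $\OO(|V(e)|)$.

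\medskip

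The only point that requires a moment of care is checking emptiness of the incidence list of $v$ in item \ref{lem:hypergraph_impl:item4} in $\OO(1)$ time: with a doubly linked list one can test whether the list is empty by checking whether the head pointer is null, so this is indeed constant time, and we do not need a separate size counter. I do not expect any genuine obstacle here — the lemma is purely bookkeeping about standard linked-list operations, and the representation was set up (with each $V(e)$-entry storing a back-pointer $p$ into the incidence list of the relevant vertex) precisely so that removing a hyperedge does not require searching incidence lists.
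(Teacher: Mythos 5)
Your proposal is correct and follows essentially the same pointer-manipulation argument as the paper's proof, item by item, including the use of the back-pointer stored in each $V(e)$-entry to delete incidence-list occurrences in $\OO(1)$ time in item \ref{lem:hypergraph_impl:item4}. No gaps.
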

\begin{proof}
The operation of \Cref{lem:hypergraph_impl:item1} is simply adding one more entry to the linked list of hyperedges.

The operation of \Cref{lem:hypergraph_impl:item2} first adds the pointer to $e$ to the incidence list of $v$.
Then, it adds to the linked list representing $V(e)$ a pair of pointers pointing to $v$ and to the just added entry in the incidence list of $v$.

The operation of \Cref{lem:hypergraph_impl:item3} adds a new entry to the linked list representing $V(G)$, and then works as the operation of \Cref{lem:hypergraph_impl:item2}.

The operation of \Cref{lem:hypergraph_impl:item4} scans the linked list representing $V(e)$ in order.
For each $v \in V(e)$, it uses the pointer to the entry of the incidence list of $v$ to remove the occurrence of $e$ from the incidence list of $v$, and then inspects whether the incidence list of $v$ became empty, in which case it removes $v$ from the linked list representing $V(G)$.
\end{proof}

One important deficiency of our hypergraph representation is that given a hyperedge $e \in E(G)$, we cannot efficiently decide whether there exists another hyperedge $e' \in E(G)$ with $V(e) = V(e')$.
This would require an efficient deterministic dynamic dictionary, which is not known.

Let us then state how basic primities for manipulating hypergraphs are implemented.
We first consider computing $V(C)$ and $\bd(C)$ when given $C$.

\begin{lemma}
\label{lem:hypergraph_impl2}
Let $G$ be a hypergraph whose representation is stored.
There is an algorithm that given a set $C \subseteq E(G)$, in time $\OO(|C| \cdot \rank(G))$ returns the sets $V(C)$ and $\bd(C)$.
\end{lemma}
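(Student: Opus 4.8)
The plan is to compute $V(C)$ and, as a by-product of the same pass, the counts $d_C(v) := |\ninc(v)\cap C|$ for every $v\in V(C)$ (stored in the extra-information slot of the entry of $v$), and then to read off $\bd(C)$ from these counts by short local scans. Throughout I assume $C$ is handed over as a list of pointers to the hyperedge entries, as is standard.

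First I would mark every hyperedge of $C$ using the extra-information slot of the $E(G)$-list, so that ``$e\in C$?'' is an $\OO(1)$ test; this costs $\OO(|C|)$. Then I iterate over the hyperedges $e\in C$ and, for each of them, over the entries of the linked list representing $V(e)$; each such entry points to a vertex $v\in V(e)$, and I increment $d_C(v)$, appending $v$ to the output list for $V(C)$ exactly when $d_C(v)$ first becomes positive. Since $\sum_{e\in C}|V(e)|\le |C|\cdot\rank(G)$, this pass runs in $\OO(|C|\cdot\rank(G))$ time, and afterwards $d_C(v)=|\ninc(v)\cap C|$ for every $v\in V(C)$ while the output list is exactly $V(C)$; note $|V(C)|\le |C|\cdot\rank(G)$ as well.

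For $\bd(C)$ I would use $\bd(C)=V(C)\cap V(\co{C})$, so a vertex $v\in V(C)$ lies in $\bd(C)$ iff $\ninc(v)\not\subseteq C$, i.e.\ iff $d_C(v)<|\ninc(v)|$. The difficulty is that $|\ninc(v)|$ is not stored and $\ninc(v)$ may be far longer than $d_C(v)$, so we cannot afford a full scan of the incidence list. Instead, for each $v\in V(C)$ I scan its incidence list from the front, keeping a local counter of the marked (in-$C$) hyperedges seen, and stop as soon as either (i) an unmarked hyperedge is met, whence $v\in\bd(C)$, or (ii) the local counter reaches $d_C(v)$: at that point all of $\ninc(v)\cap C$ has been seen, so $v\notin\bd(C)$ if the current entry is the last of the list, and otherwise the next entry is forced to be unmarked and $v\in\bd(C)$. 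Hence at most $d_C(v)+1$ entries are examined per vertex, and this phase costs $\sum_{v\in V(C)}(d_C(v)+1)\le |C|\cdot\rank(G)+|V(C)| = \OO(|C|\cdot\rank(G))$. I expect this truncated-scan argument to be the only non-routine point; it is what makes the running time independent of the vertex degrees.

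Finally I would clean up by walking over $C$ to unmark its hyperedges and over $V(C)$ to reset every $d_C(v)$ to $0$, both in $\OO(|C|\cdot\rank(G))$ time, and return the computed sets $V(C)$ and $\bd(C)$. (If $C=\emptyset$ the algorithm does nothing and returns empty sets.) All remaining details are bookkeeping with the linked-list representation, using only the operations guaranteed by the stored representation and the extra-information capacity on the $V(G)$- and $E(G)$-lists.
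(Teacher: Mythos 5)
Your proposal is correct and follows essentially the same route as the paper: compute $V(C)$ together with the counts $d_C(v)=|\ninc(v)\cap C|$ in one pass over $\bigcup_{e\in C}V(e)$, then decide membership in $\bd(C)$ by a truncated scan of each incidence list that examines at most $d_C(v)+1$ entries. The paper skips the hyperedge-marking step and simply tests whether $|\ninc(v)|>d_C(v)$ by counting entries, but this is a cosmetic difference; your truncated-scan argument and the resulting $\OO(|C|\cdot\rank(G))$ bound match the paper's proof.
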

\begin{proof}
We compute a linked list representing the set $V(C)$ by iterating over $V(e)$ for each $e \in C$, for each vertex $v \in V(e)$ incrementing a counter stored at the vertex (which starts at $0$), and every time a vertex is seen the first time (i.e. the counter is increased from $0$ to $1$), adding the vertex to the linked list representing $V(C)$.
This runs in $\OO(\sum_{e \in C} |V(e)|) = \OO(|C| \cdot \rank(G))$ time, as each list $V(e)$ can be iterated over in $\OO(|V(e)|)$ time.

This process computed for each $v \in V(C)$ the number of hyperedges $e \in C$ with $v \in V(e)$.
Now, we have that $v \in \bd(C)$ if this number is less than the number of hyperedges incident to $v$.
This can be decided for each $v \in V(C)$ in time linear in the stored number, as we need to iterate $\ninc(v)$ for only that many (plus one) elements.
As the sum of the stored numbers is $\OO(|C| \cdot \rank(G))$, this part runs also in $\OO(|C| \cdot \rank(G))$ time.
\end{proof}

We then consider the operation $G \rescliqs C$.
We first give an algorithm for computing $G \rescliqs \co{C}$ when given $C$.
Note that this setting is perhaps the more natural than computing $G \rescliqs C$, as we have that $|E(G \rescliqs \co{C})| = |C|+1$.

\begin{lemma}
\label{lem:hypergraph_impl3}
Let $G$ be a hypergraph whose representation is stored.
There is an algorithm that, given $C \subseteq E(G)$, in time $\OO(|C| \cdot \rank(G))$ returns a representation of the hypergraph $G \rescliqs \co{C}$, with a mapping that maps vertices in $V(G \rescliqs \co{C})$ to corresponding vertices in $G$ and hyperedges in $E(G \rescliqs \co{C}) \setminus \{e_{\co{C}}\}$ to corresponding hyperedges in $G$.
\end{lemma}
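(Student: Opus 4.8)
The plan is to construct a fresh representation of the hypergraph $H := G \rescliqs \co{C}$ and to leave the stored representation of $G$ untouched. By definition of $\rescliqs$ we have $V(H) = V(\co{\co{C}}) = V(C)$ and $E(H) = C \cup \{e_{\co{C}}\}$ with $V(e_{\co{C}}) = \bd(\co{C}) = \bd(C)$ (using symmetry of $\bd$), so $\|H\| = |C| + 1 + \sum_{e \in C}|V(e)| + |\bd(C)| = \OO(|C|\cdot\rank(G))$, and an output-sensitive construction meets the claimed time budget. The one thing to be careful about is to never spend time proportional to $\|G\|$: in particular we may not scan $\ninc_G(v)$ in full for a vertex $v$, nor iterate over $E(G) \setminus C$.

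Concretely, I would proceed in four phases. First, apply \Cref{lem:hypergraph_impl2} to the set $C$ to obtain linked lists representing $V(C)$ and $\bd(C)$ in time $\OO(|C|\cdot\rank(G))$. Second, for each $v$ in the list $V(C)$, create a fresh vertex $v'$ in the representation of $H$ using \Cref{lem:hypergraph_impl}, record at $v'$ a pointer to $v$ (this is the required vertex mapping), and temporarily record at $v$ a pointer to $v'$. Third, for each $e \in C$, create a fresh hyperedge $e'$ of $H$ with $V(e') = \emptyset$, record at $e'$ a pointer to $e$ (the hyperedge mapping), then scan the stored list $V(e)$ in $\OO(|V(e)|)$ time, following from each $v \in V(e)$ the temporary pointer to $v'$ and adding $v'$ to $V(e')$ via the $\OO(1)$-time incidence-list-maintaining operation of \Cref{lem:hypergraph_impl}; after processing all of $C$, create one more fresh hyperedge $e_{\co{C}}$ and, scanning the list $\bd(C)$, add the corresponding $v'$ to $V(e_{\co{C}})$ in the same way. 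Fourth, scan $V(C)$ once more to erase the temporary pointers from the old vertices. Correctness is immediate from the definition of $\rescliqs$: $V(H) = V(C)$; each $e \in C$ satisfies $V(e) \subseteq V(C)$, so its copy $e'$ retains the full vertex set $V(e)$; and $e_{\co{C}}$ receives exactly the vertex set $\bd(C)$.

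For the running time, the first phase is $\OO(|C|\cdot\rank(G))$ by \Cref{lem:hypergraph_impl2}; the second is $\OO(|V(C)|) = \OO(|C|\cdot\rank(G))$; the third is $\OO(\sum_{e \in C}|V(e)| + |\bd(C)|) = \OO(|C|\cdot\rank(G))$ since every list operation is $\OO(1)$; and the fourth is $\OO(|V(C)|)$. The ``main obstacle'' here is really only a matter of care: the reason we build the incidence lists of $H$ by pushing each $e \in C$ onto the incidence lists of its vertices (rather than pulling, i.e.\ reading $\ninc_G(v)$ for $v \in V(H)$), and the reason we build a fresh copy (rather than deleting $E(G)\setminus C$ in place), is precisely to keep the cost proportional to $\|H\|$ rather than $\|G\|$ -- the same idea already exploited inside \Cref{lem:hypergraph_impl2}. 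Finally, if $C = \emptyset$ then $\co{C}$ is not a proper subset of $E(G)$ and $H$ is undefined, so we assume $C \neq \emptyset$; the remaining small cases ($|C| = 1$, or $\bd(C) = \emptyset$) need no special handling.
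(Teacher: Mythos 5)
Your proof is correct and follows the same route as the paper: compute $V(C)$ and $\bd(C)$ via \Cref{lem:hypergraph_impl2}, copy the hyperedges in $C$ and vertices in $V(C)$ into a fresh representation, and then add the hyperedge $e_{\co{C}}$ with vertex set $\bd(C)$. The paper states the copying step in one sentence; your elaboration of it (fresh copies with temporary back-pointers, pushing onto incidence lists to stay output-sensitive) is exactly the intended implementation.
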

\begin{proof}
We first use \Cref{lem:hypergraph_impl2} to construct $V(C)$ and $\bd(C)$ in $\OO(|C| \cdot \rank(G))$ time.
Then we simply copy the hyperedges in $C$ and vertices in $V(C)$ from the representation of $G$ to a representation of $G \rescliqs \co{C}$, which can be done in $\OO(|C| \cdot \rank(G))$ time.
Finally, we add the hyperedge $e_{\co{C}}$, which can be done in $\OO(\bdc(C)) = \OO(|C| \cdot \rank(G))$ time.
\end{proof}

Then we give an algorithm for computing $G \rescliqs C$ given $C$.

\begin{lemma}
\label{lem:hypergraph_impl4}
Let $G$ be a hypergraph whose representation is stored.
There is an algorithm that, given $C \subseteq E(G)$, in time $\OO(|C| \cdot \rank(G))$ turns the representation of $G$ into a representation of $G \rescliqs C$, preserving all pointers to hyperedges in $E(G \rescliqs C) \setminus \{e_C\} = \co{C}$ and to vertices in $V(G \rescliqs C) = V(\co{C})$, and returning a pointer to the hyperedge $e_C \in E(G \rescliqs C)$
\end{lemma}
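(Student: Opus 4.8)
The plan is to observe that a representation of $G \rescliqs C$ differs from the stored representation of $G$ only locally: we must delete exactly the hyperedges in $C$ together with the vertices of $V(C) \setminus \bd(C)$ (those incident only to hyperedges of $C$), and then insert a single new hyperedge $e_C$ with $V(e_C) = \bd(C)$. The crucial point is that all of this can be done while touching only objects incident to $C$ and the vertices of $\bd(C)$, never a hyperedge of $\co{C}$ or a vertex of $V(\co{C})$, which is what keeps the running time at $\OO(|C| \cdot \rank(G))$ rather than $\OO(\|G\|)$ (note that $E(G \rescliqs C)$ itself can be almost as large as $E(G)$, so we cannot afford to rebuild it, as \Cref{lem:hypergraph_impl3} does for $G \rescliqs \co{C}$).

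Concretely, first I would call \Cref{lem:hypergraph_impl2} to compute linked lists representing $V(C)$ and $\bd(C)$ in time $\OO(|C| \cdot \rank(G))$; the list for $\bd(C)$ stores pointers to entries of the vertex list of $G$ that will survive all subsequent deletions, so it stays valid. Next, iterate over the hyperedges of $C$ and delete each one using the hyperedge-removal primitive of \Cref{lem:hypergraph_impl}, which for a hyperedge $e$ costs $\OO(|V(e)|)$ and, as a side effect, removes every vertex that becomes incident to no hyperedge; over all of $C$ this is $\OO(\sum_{e \in C}|V(e)|) = \OO(|C| \cdot \rank(G))$. After this loop, a vertex of $V(C)$ has been deleted if and only if all of its incident hyperedges lay in $C$, i.e.\ if and only if it lies in $V(C) \setminus \bd(C)$; the vertices of $\bd(C)$ survive precisely because each is incident to some hyperedge of $\co{C}$, and those incidences were never disturbed. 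Finally, create a new hyperedge $e_C$ with $V(e_C) = \emptyset$ using the hyperedge-creation primitive of \Cref{lem:hypergraph_impl}, and for each vertex of $\bd(C)$ (using the surviving pointers from the first step) add it to $V(e_C)$ using the vertex-addition primitive of \Cref{lem:hypergraph_impl}, costing $\OO(1)$ per vertex and hence $\OO(|\bd(C)|) = \OO(|C| \cdot \rank(G))$ in total; then return the pointer to $e_C$.

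To see that the result is a valid representation of $G \rescliqs C$ with the required pointers preserved: the surviving vertex list is $V(G) \setminus (V(C) \setminus \bd(C)) = V(\co{C}) = V(G \rescliqs C)$, the surviving hyperedge list is $\co{C} \cup \{e_C\} = E(G \rescliqs C)$, and the incidence list of a vertex $v \in \bd(C)$ had each occurrence of a hyperedge of $C$ stripped out by the removal primitive and then received the pointer to $e_C$, so it now equals $\{e \in \co{C} : v \in V(e)\} \cup \{e_C\} = \ninc_{G \rescliqs C}(v)$, while incidence lists of vertices outside $\bd(C)$ were never touched; no entry for a hyperedge of $\co{C}$ or a vertex of $V(\co{C})$ was ever moved or deleted. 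The only delicate point — and the closest thing to an obstacle — is the running-time bookkeeping: we must never scan the incidence list of a $\bd(C)$-vertex in full, since it may have length $\Theta(\|G\|)$, so it is essential that the removal primitive of \Cref{lem:hypergraph_impl} purges the $C$-occurrences from those incidence lists for free and that inserting $e_C$ into an incidence list is an $\OO(1)$ append rather than a search; granting these, everything else is routine linked-list manipulation.
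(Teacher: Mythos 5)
Your proof is correct and matches the paper's own argument: compute $\bd(C)$ via \Cref{lem:hypergraph_impl2}, delete the hyperedges of $C$ with the removal primitive of \Cref{lem:hypergraph_impl} (which also purges the now-isolated vertices of $V(C)\setminus\bd(C)$), and then insert $e_C$ with $V(e_C)=\bd(C)$ using the creation and vertex-addition primitives. The paper states this in three sentences; your additional bookkeeping about preserved pointers and the $\OO(1)$ incidence-list operations is a correct elaboration of the same route.
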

\begin{proof}
We first use \Cref{lem:hypergraph_impl2} to compute $\bd(C)$ in time $\OO(|C| \cdot \rank(G))$.
Then, we delete all hyperedges in $C$ in time $\OO(|C| \cdot \rank(G))$, and insert a hyperedge $e_C$ with $V(e_C) = \bd(C)$ in time $\OO(|\bd(C)|) = \OO(|C| \cdot \rank(G))$.
\end{proof}

We then consider computing the internal components of a given set $A \subseteq E(G)$.

\begin{lemma}
\label{lem:alg_internal_comps}
Let $G$ be a hypergraph whose representation is already stored.
There is an algorithm that, given a set $A \subseteq E(G)$, in time $\OO(|A| \cdot \rank(G))$ returns the partition of $A$ into the internal components of $A$.
\end{lemma}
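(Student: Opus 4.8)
The plan is to reduce the computation of the internal components to an ordinary graph search on an auxiliary incidence graph, relying on the characterization already sketched in the text: the internal components of $A$ are precisely (i) the singletons $\{e\}$ with $e \in A$ and $V(e) \subseteq \bd(A)$, and (ii) for each connected component $C'$ of $\primal(G)[V(A) \setminus \bd(A)]$, the set $A_{C'}$ of hyperedges of $A$ whose vertex set meets $C'$. Writing $W = V(A) \setminus \bd(A)$ for the set of ``interior'' vertices, the structural fact that makes this work, and that I would verify carefully, is that every $v \in W$ has $\ninc(v) \subseteq A$ (since $v \notin V(\co A)$), and moreover, if $v$ lies in the component $C'$ of $\primal(G)[W]$, then in fact $\ninc(v) \subseteq A_{C'}$. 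Consequently, no vertex that is interior to a union of two or more of the sets in (i)--(ii) can be incident to hyperedges of two distinct such sets, which is exactly what yields both that each of these sets is internally connected and that each is maximal.

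First I would call \Cref{lem:hypergraph_impl2} to compute $V(A)$ and $\bd(A)$ in time $\OO(|A| \cdot \rank(G))$, and mark on the stored record of each vertex of $V(A)$ whether it lies in $\bd(A)$ or in $W$. Next I would build, as scratch adjacency lists, the bipartite incidence graph $H$ with node set $A \cup W$ and an edge between $e \in A$ and $v \in W$ whenever $v \in V(e)$: the edges incident to hyperedge-nodes are enumerated by scanning the stored list $V(e)$ for each $e \in A$ and keeping the endpoints marked interior, at cost $\sum_{e \in A} |V(e)| = \OO(|A| \cdot \rank(G))$, and the edges incident to interior-vertex-nodes are enumerated by scanning $\ninc(v)$ for each $v \in W$ --- legitimate precisely because $\ninc(v) \subseteq A$ --- at cost $\sum_{v \in W} |\ninc(v)| = \sum_{e \in A} |V(e) \cap W| \le |A| \cdot \rank(G)$. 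Then I would run a graph search (BFS or DFS) over $H$; the hyperedge-nodes of each connected component form one internal component of $A$, where an isolated hyperedge-node $e$ (which happens exactly when $V(e) \subseteq \bd(A)$) gives the singleton $\{e\}$. Finally I would undo all temporary marks so that the stored representation of $G$ is left unchanged, and output the resulting partition. The total running time is $\OO(|V(H)| + |E(H)| + |A| \cdot \rank(G)) = \OO(|A| \cdot \rank(G))$ since $|W| \le |V(A)| \le |A|\cdot\rank(G)$, and the degenerate cases $A = \emptyset$ (output the empty partition) and $\rank(G) = 0$ (every hyperedge is its own component) are handled directly.

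I expect the main obstacle to be twofold. On the algorithmic side, the point is to get the bound $\OO(|A| \cdot \rank(G))$ with no inverse-Ackermann overhead, which is why I insist on a plain graph search rather than a union--find over the hyperedges; one must check that every incidence touched during the search is charged to one of the two sums above, using $\ninc(v) \subseteq A$ for $v \in W$. On the correctness side, the subtlety is that ``internally connected'' for a subset $C \subseteq A$ is defined relative to $\bd(C)$ rather than $\bd(A)$, so matching my candidate partition with the internal components is not literally immediate; I would establish it cleanly by observing that a bipartition $(B_1, B_2)$ of a set $C$ with $\bd(B_1), \bd(B_2) \subseteq \bd(C)$ is exactly one in which any two hyperedges sharing a $C$-interior vertex stay on the same side, which reduces internal connectivity of $C$ to connectivity of the ``interior-incidence graph'' of $C$, and then verifying that for a single candidate part this graph is connected (via a path in the relevant component $C'$) while for any union of two or more candidate parts it is disconnected (by the structural fact above).
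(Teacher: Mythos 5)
Your proposal is correct and follows essentially the same route as the paper's own proof: compute $V(A)$ and $\bd(A)$ via \Cref{lem:hypergraph_impl2}, form the bipartite incidence graph on $A \cup (V(A) \setminus \bd(A))$, and read off the internal components as the connected components of that graph. The paper's proof is a one-liner that omits the correctness and charging details you spell out (in particular the key observation that $\ninc(v) \subseteq A$ for interior vertices $v$), but the underlying algorithm and argument are the same.
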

\begin{proof}
This can be done by first using \Cref{lem:hypergraph_impl2} to construct $V(A)$ and $\bd(A)$, and then computing the connected components of the bipartite graph whose vertices are the union of $A$ and $V(A) \setminus \bd(A)$, and whose edges are constructed by connecting $e \in A$ to $v \in V(A) \setminus \bd(A)$ whenever $v \in V(e)$.
\end{proof}

\paragraph{Representation of graphs.}
We define that a representation of a graph $G$ is a representation of the hypergraph $\hyperg(G)$.
By using \Cref{lem:hypergraph_impl}, we observe that the representation of graphs supports the following operations.

\begin{observation}
\label{lem:graph_impl}
The representation of graphs supports the following operations
\begin{enumerate}
\item add a new vertex and return a pointer to it, in $\OO(1)$ time,
\item given $u,v \in V(G)$ with $u \neq v$ and $uv \notin E(G)$, add $uv$ to $E(G)$ and return a pointer to $uv$, in $\OO(1)$ time,
\item given a pointer to an edge $uv \in E(G)$, remove $uv$ in $\OO(1)$ time, and
\item given a pointer to a vertex $v \in V(G)$, remove $v$ and all edges incident to $v$, in time $\OO(|N(v)|)$.
\end{enumerate}
\end{observation}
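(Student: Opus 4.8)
The plan is to reduce each of the four operations to the hypergraph primitives provided by \Cref{lem:hypergraph_impl}, applied to the hypergraph $\hyperg(G)$, which is by definition exactly the stored representation of $G$. Recall that the hyperedges of $\hyperg(G)$ are $V(G) \cup E(G)$, with $V(v) = \{v\}$ for $v \in V(G)$ and $V(uv) = \{u,v\}$ for $uv \in E(G)$; thus every hyperedge has rank at most $2$, and a vertex $v$ is incident in $\hyperg(G)$ to precisely the hyperedges $\{v\} \cup \{uv : u \in N_G(v)\}$.

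For adding a new vertex, I would use \Cref{lem:hypergraph_impl:item1} to create a hyperedge $e$ with $V(e) = \emptyset$, then \Cref{lem:hypergraph_impl:item3} to attach a fresh vertex $v$ to $e$ and return the pointer to $v$; the result encodes $v$ as an isolated vertex of $G$, and both steps run in $\OO(1)$ time. For adding an edge $uv$ given pointers to $u$ and $v$, I would again create an empty hyperedge via \Cref{lem:hypergraph_impl:item1} and then apply \Cref{lem:hypergraph_impl:item2} twice to add $u$ and $v$ to it, obtaining in $\OO(1)$ total time a rank-$2$ hyperedge with vertex set $\{u,v\}$, whose pointer is returned. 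For deleting an edge $uv$ given its pointer, the operation of \Cref{lem:hypergraph_impl:item4} removes this rank-$2$ hyperedge in $\OO(|V(uv)|) = \OO(1)$ time; the automatic cleanup of vertices left in no hyperedge does not remove $u$ or $v$, since each of them still lies in its own hyperedge $\{u\}$, respectively $\{v\}$ --- which is exactly the behaviour wanted when deleting an edge.

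For deleting a vertex $v$ together with all incident edges, I would iterate over the list $\ninc_{\hyperg(G)}(v) = \{v\} \cup \{uv : u \in N_G(v)\}$ --- copying these $|N_G(v)|+1$ pointers into a separate list first, to avoid traversing a list that the deletions are modifying --- and delete each of these hyperedges using the operation of \Cref{lem:hypergraph_impl:item4}. Every such hyperedge has rank at most $2$, so each deletion costs $\OO(1)$; deleting a hyperedge $uv$ never removes $u$ because $u$ still occurs in $\{u\}$, and when the final hyperedge containing $v$ is deleted, the cleanup step of \Cref{lem:hypergraph_impl:item4} removes $v$ itself. The total time is $\OO(|N_G(v)| + 1)$, matching the claimed $\OO(|N_G(v)|)$. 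This is a routine bookkeeping argument; the only point needing care is precisely this interaction between the automatic vertex-removal in \Cref{lem:hypergraph_impl:item4} and the normality of $\hyperg(G)$, which is what guarantees each $G$-vertex survives exactly as long as it should.
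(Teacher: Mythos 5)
Your proof is correct and follows exactly the route the paper intends: the paper states this observation as an immediate consequence of \Cref{lem:hypergraph_impl} applied to $\hyperg(G)$ without spelling out the details, and your reduction (including the careful points about the rank-$2$ hyperedges, the singleton hyperedges keeping $u$ and $v$ alive after an edge deletion, and copying the incidence list before deleting) is precisely the omitted bookkeeping.
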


\paragraph{Representation of trees.}
The representation of a tree is its representation as a graph.
However, for rooted trees we store extra information.
When $T$ is a rooted tree, we store also a global pointer pointing to the root node $\troot(T)$ of $T$, and for each non-root node $t \in V(T) \setminus \{\troot(T)\}$ a pointer to the edge $tp$, where $p$ is the parent of $t$.
Note that changing the root to an arbitrary node can be implemented in time $\OO(\|T\|)$, but not more efficiently.

\paragraph{Representation of tree decompositions.}
A representation of a tree decomposition $\Tc = (T,\bag)$ consists of a representation of the tree $T$, and the function $\bag$ stored explicitly, so that each $t \in V(T)$ contains a pointer to a linked list storing $\bag(t)$.
Note that this representation takes $\OO(\|\Tc\|)$ space.

For the ease of manipulating tree decompositions we do not store adhesions explicitly, but let us note that they can be easily computed.

\begin{lemma}
\label{lem:computeadhesions}
There is an algorithm that, given a tree decomposition $\Tc = (T,\bag)$ of a graph $G$, in time $\OO(\|\Tc\|)$ returns the pairs $(e,\adh(e))$ for all $e \in E(T)$.
\end{lemma}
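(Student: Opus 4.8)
The plan is to avoid computing each $\adh(st) = \bag(s) \cap \bag(t)$ by directly intersecting the two bags, which would be too slow when the tree is large and some bags are large, and instead to exploit the vertex condition of tree decompositions. If $|V(T)| \le 1$ then $E(T) = \emptyset$ and there is nothing to do, so assume otherwise. First I would root $T$ at an arbitrary node by a single traversal in $\OO(\|T\|) = \OO(|V(T)|) \le \OO(\|\Tc\|)$ time, recording for every non-root node $t$ a pointer to its parent $\parent(t)$ and to the edge $\parent(t)\,t$.

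The key observation is that for each $v \in V(G)$ the vertex condition makes $S_v := \{t \in V(T) : v \in \bag(t)\}$ a nonempty connected subtree of $T$, so it has a unique node $\mathrm{top}(v)$ of minimum depth, and this node is an ancestor of every node of $S_v$ (if the $T$-path from $\mathrm{top}(v)$ to some $u \in S_v$ went strictly upward anywhere it would reach a node of $S_v$ of smaller depth). Consequently, for a non-root node $t$ with parent $p$, we have $v \in \adh(pt)$ if and only if $v \in \bag(t)$ and $t \neq \mathrm{top}(v)$: if $t \in S_v$ and $t \neq \mathrm{top}(v)$, the path from $t$ up to $\mathrm{top}(v)$ lies in $S_v$, so $p \in S_v$; and if $t = \mathrm{top}(v)$ then $p \notin S_v$ by minimality of depth.

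The algorithm is then: (i) run a traversal from the root that visits every node after its parent, keeping a Boolean flag for each $v \in V(G)$ (initially all false), and on visiting a node $t$, for each $v \in \bag(t)$ whose flag is false, set $\mathrm{top}(v) := t$ and set the flag of $v$ to true; since $\mathrm{top}(v)$ is an ancestor of every member of $S_v$, it is the first member of $S_v$ the traversal reaches, so this correctly computes $\mathrm{top}(v)$; (ii) create an empty linked list $\adh(e)$ for every $e \in E(T)$; (iii) scan all pairs $(t,v)$ with $t \neq \troot(T)$ and $v \in \bag(t)$, and whenever $\mathrm{top}(v) \neq t$ (a pointer comparison) append $v$ to $\adh(\parent(t)\,t)$ via the stored edge pointer of $t$. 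Steps (i) and (iii) cost $\OO(|V(T)| + \sum_{t} |\bag(t)|) = \OO(\|\Tc\|)$ and step (ii) costs $\OO(|E(T)|) = \OO(\|\Tc\|)$, for a total of $\OO(\|\Tc\|)$. The only points needing care are the correctness of the traversal-based computation of $\mathrm{top}(v)$ and the displayed equivalence, both of which follow directly from the connectivity of $S_v$; note that only the vertex condition, not the edge condition, of tree decompositions is used, so I anticipate no real obstacle here.
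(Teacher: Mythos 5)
Your proposal is correct and matches the paper's own proof: both root $T$, compute for each vertex $v$ the node of $\{t : v \in \bag(t)\}$ closest to the root via one traversal, and then read off $\adh(\parent(t)\,t)$ as the vertices $v \in \bag(t)$ whose topmost node is not $t$. You simply spell out the correctness of this characterization in more detail than the paper does.
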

\begin{proof}
We root $T$ at an arbitrary node, and use depth-first search to store for each $v \in V(G)$ the node $t_v$ with $v \in \bag(t_v)$ that is the closest to the root.
Now, if $t \in V(T)$ and $p = \parent(t)$, we have that $v \in \bag(t)$ is in $\adh(tp)$ if and only if $t_v \neq t$.
This allows to construct the pairs $(e,\adh(e))$ in $\OO(\|\Tc\|)$ time after the initial depth-first search.
\end{proof}

We will also need the following auxiliary data structure for working with tree decompositions.

\begin{lemma}
\label{lem:cliqtdds}
There is a data structure that is initialized with a tree decomposition $\Tc = (T,\bag)$ of a graph $G$ in time $\OO(\|\Tc\|)$, and supports a query, where given a set of vertices $W \subseteq V(G)$ such that there exists $t \in V(T)$ with $W \subseteq \bag(t)$, returns such $t$ in $\OO(|W|)$ time.
\end{lemma}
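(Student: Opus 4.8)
The plan is to root $T$ at an arbitrary node, precompute depths, and for each vertex of $G$ remember the shallowest bag containing it; a query is then answered by taking the \emph{deepest} of these remembered nodes over $v\in W$. Concretely, fix a root of $T$ and compute $\mathsf{depth}(t)$ for all $t\in V(T)$ by a single traversal in $\OO(\|T\|)\subseteq\OO(\|\Tc\|)$ time. For $v\in V(G)$ set $S_v=\{t\in V(T):v\in\bag(t)\}$; by the vertex condition of tree decompositions $S_v$ induces a non-empty connected subtree of $T$, so it has a unique shallowest node $r_v$, and $r_v$ is an ancestor of every node of $S_v$. We compute $r_v$ for all $v$ by one depth-first search of $T$ that scans the bags in preorder: the first node whose bag contains $v$ is $r_v$, since that node is necessarily $r_v$ (preorder visits $r_v$ before any other node of $S_v$). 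This costs $\OO(|V(T)|+\sum_{t}|\bag(t)|)=\OO(\|\Tc\|)$. The data structure stores, for each $v\in V(G)$, a pointer to $r_v$ and the value $\mathsf{depth}(r_v)$; since every vertex lies in some bag, $|V(G)|=\OO(\|\Tc\|)$, so this fits in the initialization budget. A query on $W$ looks up $r_v$ for each $v\in W$, returns the one maximizing $\mathsf{depth}$ (returning the root if $W=\emptyset$), and clearly runs in $\OO(|W|)$ time.

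The substance of the proof is the correctness of the query. Let $t$ be any node with $W\subseteq\bag(t)$ (one exists by assumption) and let $t^\star$ be the returned node, i.e.\ the node of maximum depth among $\{r_v:v\in W\}$. For each $v\in W$ we have $t\in S_v$, hence $r_v$ is an ancestor of $t$; therefore all the nodes $r_v$ ($v\in W$) lie on the path from the root to $t$, are linearly ordered by the ancestor relation, and $t^\star$ — being the deepest among them — is a descendant of every $r_v$ and an ancestor of $t$. Now fix $v\in W$. Since $r_v\in S_v$ and $t\in S_v$ and $S_v$ is connected, the unique $r_v$–$t$ path in $T$ is contained in $S_v$; this path is a subpath of the root–to–$t$ path and so passes through $t^\star$. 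Hence $t^\star\in S_v$, i.e.\ $v\in\bag(t^\star)$. As this holds for every $v\in W$, we conclude $W\subseteq\bag(t^\star)$, as required.

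The only genuinely nontrivial ingredient is this correctness argument, and it rests entirely on the connectedness of each $S_v$ furnished by the vertex condition: that is what forces the shallowest nodes $r_v$ onto a common root-to-$t$ path and what keeps the $r_v$–$t$ path (through $t^\star$) inside $S_v$. The remaining components — rooting, computing depths, and extracting the shallowest bag per vertex via a preorder DFS — are standard deterministic linear-time computations, so I do not expect a real obstacle there.
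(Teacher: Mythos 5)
Your proposal is correct and is essentially the same as the paper's proof: both root $T$, store for each vertex the shallowest (closest-to-root) node whose bag contains it, and answer a query by returning the deepest such node over $v\in W$, justified via the connectedness of the subtrees $S_v$. Your write-up just spells out the ancestor-chain argument in more detail than the paper does.
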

\begin{proof}
We root $T$ at an arbitrary node, and use depth-first search to compute and store for each $t \in V(T)$ the distance from $t$ to the root.
We also compute for each $v \in V(G)$ the closest node $t_v$ to the root so that $v \in \bag(t_v)$.
These can be done in time $\OO(\|\Tc\|)$.

Now, given $W$, we return the node $t_v$ for $v \in W$ that maximizes the distance to the root.
This is correct because the subtrees corresponding to other vertices in $W$ must intersect the subtree rooted at $t_v$, but their highest node is closer to root than $t_v$ (or is also $t_v$), so they must contain $t_v$.
\end{proof}

\paragraph{Representation of superbranch decompositions.}
Let $G$ be a hypergraph.
We define that a representation of a superbranch decomposition $\Tc = (T, \lmap)$ of $G$ consists of
\begin{enumerate}
\item a representation of $T$,
\item for each $\ell \in \leafs(T)$, a pointer from $\ell$ to $\lmap(\ell)$,
\item for each edge $st \in E(T)$ a linked list storing $\adh(st)$, and
\item for each internal node $t \in \vint(T)$, a representation of $\torso(t)$, where additionally
\begin{enumerate}
\item each hyperedge $e_s \in E(\torso(t))$ corresponding to an edge $st \in E(T)$ stores a pointer to the edge $st$ of $T$, and
\item each edge $st \in E(T)$ stores a pointer to the hyperedge $e_s \in E(\torso(t))$ and to the hyperedge $e_t \in E(\torso(s))$, if $t$ (resp. $s$) is an internal node.
\end{enumerate}
\end{enumerate}

\Cref{subsec:transsuperbds} will be dedicated to efficient algorithms for manipulating superbranch decompositions.

\section{Roadmap}
\label{sec:roadmap}
Our main technical result, which implies the other results, is \Cref{the:mainalgklean}, i.e., the algorithm to compute a $k$-lean tree decomposition in time $k^{\OO(k^2)} n + m$.
In this section we outline the algorithm of \Cref{the:mainalgklean} and break it down to 9 ``main lemmas'' that will be proven one by one in the later sections.
At the end of this section, we prove \Cref{the:mainalgklean} assuming the 9 main lemmas, and then prove \Cref{the:mainkgomoryhu} and \Cref{the:mainelementgomoryhu,the:mainglobalvertexcut} by using \Cref{the:mainalgklean}.

\subsection{Generalized Bodlaender's technique}
The algorithm of \Cref{the:mainalgklean} consists of two conceptually separate ingredients.
The first ingredient is to show that if one can prove \Cref{the:mainalgklean} under the assumption that the input already contains an unbreakable tree decomposition with certain unbreakability parameters, then this ``improvement algorithm'' in fact can be used to construct an algorithm that returns a $k$-lean tree decomposition from scratch.
The second ingredient is, of course, such an improvement algorithm.

The first ingredient is heavily inspired by Bodlaender's linear-time parameterized algorithm for treewidth~\cite{Bodlaender96}, and can in fact be regarded as a generalized version of the matching contraction technique introduced by Bodlaender.
The first ingredient is captured by the following main lemma.

\begin{restatable}{mainlemma}{thehighlevelbodl}
\label{the:highlevel:bodl}
Suppose there is an algorithm that, given 
\begin{itemize}
\item a graph $G$, 
\item an integer $k \ge 1$, and 
\item a $(2k, k)$-unbreakable tree decomposition $\Tc$ of $G$ with $\adhsize(\Tc) \le 2k$,
\end{itemize}
in time $\runtime(k) \cdot (\|G\| + \|\Tc\|)$ returns a $k$-lean tree decomposition of $G$.
Then, there is an algorithm that, given a graph $G$ and an integer $k \ge 1$, in time $\runtime(k) \cdot k^{\OO(1)} \cdot |V(G)| + \OO(\|G\|)$ returns a $k$-lean tree decomposition of $G$.
\end{restatable}

In addition to ideas from~\cite{Bodlaender96}, a crucial ingredient that is used in multiple steps in the proof of \Cref{the:highlevel:bodl} is the sparsifier of Nagamochi and Ibaraki~\cite{DBLP:journals/algorithmica/NagamochiI92}.
The proof of \Cref{the:highlevel:bodl} will be given in \Cref{sec:bodl}.

\subsection{Refining a rooted superbranch decomposition}
The rest of the main lemmas are dedicated to providing the improvement algorithm, i.e., an algorithm that given a graph $G$, an integer $k \ge 1$, and an $(2k,k)$-unbreakable tree decomposition $\Tc$ of $G$ with $\adhsize(\Tc) \le 2k$, returns a $k$-lean tree decomposition of $G$.
The general plan for this improvement algorithm will be to refine $\Tc$ gradually in multiple steps, which will correspond to the main lemmas.
Here, we will mostly work with hypergraphs, so the first step is to turn $G$ into the hypergraph $\hyperg(G)$ and $\Tc$ into a superbranch decomposition of $\hyperg(G)$.
This corresponds to the following main lemma.

\begin{restatable}{mainlemma}{thehighlevelfromgraphstohypergraphs}
\label{the:highlevel:fromgraphstohypergraphs}
There is an algorithm that, given 
\begin{itemize}
\item a graph $G$,
\item an integer $k \ge 1$,
\item a $(2k, k)$-unbreakable tree decomposition $\Tc$ of $G$ with $\adhsize(\Tc) \le 2k$,
\end{itemize}
in time $\OO(k \cdot (\|G\| + \|\Tc\|))$ returns a superbranch decomposition $\Tc'$ of $\hyperg(G)$ so that
\begin{itemize}
\item $\Tc'$ is $(2k,k)$-unbreakable, and
\item $\adhsize(\Tc') \le 2k$.
\end{itemize}
\end{restatable}

The proof of \Cref{the:highlevel:fromgraphstohypergraphs} is not very difficult.
It will be given in \Cref{sec:manipulatingsuperbranch}.

Then we start improving the superbranch decomposition $\Tc$ of $\hyperg(G)$.
We would like to make the internal separations of $\Tc$ doubly well-linked, while maintaining the adhesion size bound and unbreakability.
It turns out that this is too difficult goal to achieve directly, but instead we can settle for making the internal separations mixed-$k$-well-linked, which will be good enough.

The first step in making the internal separations of $\Tc$ mixed-$k$-well-linked is to make $\Tc$ ``downwards well-linked''.
We define that a rooted superbranch decomposition $\Tc = (T,\lmap)$ is \emph{downwards well-linked} if for every edge $tp \in E(T)$, where $p$ is the parent of $t$, it holds that $\lmap(\vec{tp})$ is well-linked.

\begin{restatable}{mainlemma}{thehighleveldownwlalgo}
\label{the:highlevel:downwlalgo}
There is an algorithm that, given a superbranch decomposition $\Tc$ of a hypergraph $G$, and an integer $k \ge 1$ so that
\begin{itemize}
\item $\Tc$ is $(2k,k)$-unbreakable and 
\item $\adhsize(\Tc) \le 2k$,
\end{itemize}
in time $2^{\OO(k)} \cdot \|G\|$ outputs a rooted superbranch decomposition $\Tc'$ of $G$ so that
\begin{itemize}
\item $\Tc'$ is downwards well-linked,
\item $\Tc'$ is $(2^{\OO(k)}, k)$-unbreakable, and
\item $\adhsize(\Tc') \le 2k$.
\end{itemize}
\end{restatable}

The idea of the algorithm of \Cref{the:highlevel:downwlalgo} is to root the given superbranch decomposition at an arbitrary node, process edges of form $tp \in E(T)$ in a bottom-up order, and for each $\lmap(\vec{tp})$ that is not well-linked, ``collapse'' the adhesion at $tp$ into at most $2^{\bdc(\lmap(\vec{tp}))}$ adhesions that are downwards well-linked.
The proof of \Cref{the:highlevel:downwlalgo} will be given in \Cref{sec:downwl}.

Then, we would like to make $\Tc$ ``upwards $k$-well-linked'' in order to make its internal separations mixed-$k$-well-linked.
This turns out to be a too strong goal, but we are able to achieve the weaker goal of just making its internal separations mixed-$k$-well-linked, in the process ``scrambling'' $\Tc$ and losing the property of downwards well-linkedness, but in fact retaining an extra property that implies that in spirit, the resulting superbranch decomposition is downwards well-linked and upwards $k$-well-linked.

\begin{restatable}{mainlemma}{thehighlevelupwlalgo}
\label{the:highlevel:upwlalgo}
There is an algorithm that, given a rooted superbranch decomposition $\Tc$ of a hypergraph $G$, and an integer $k \ge 1$ so that
\begin{itemize}
\item $\Tc$ is downwards well-linked,
\item $\Tc$ is $(2^{\OO(k)},k)$-unbreakable, and
\item $\adhsize(\Tc) \le 2k$,
\end{itemize}
in time $2^{\OO(k^2)} \cdot \|G\|$ outputs a superbranch decomposition $\Tc' = (T', \lmap')$ of $G$ so that
\begin{itemize}
\item the internal separations of $\Tc'$ are mixed-$k$-well-linked,
\item for all $t \in V(T')$, there is at most one $v \in N_{T'}(t)$ so that $\lmap'(\vec{vt})$ is not well-linked,
\item the torsos of $\Tc'$ are $(2^{\OO(k)}, k)$-unbreakable, and
\item $\adhsize(\Tc') \le 2k$.
\end{itemize}
\end{restatable}

Note that the properties of $\Tc'$ imply that it can be partitioned into connected subtrees, so that each of these subtrees can be rooted so that it is downwards well-linked and upwards $k$-well-linked, and moreover, the separations between these subtrees are doubly well-linked.
Note also that at this point we switch from maintaining the unbreakability of the superbranch decomposition to maintaining the unbreakability of the torsos of the superbranch decomposition, essentially because the internal separations have strong enough uncrossing properties that the unbreakability of the torsos in fact implies the unbreakability of the decomposition.

The algorithm of \Cref{the:highlevel:upwlalgo} proceeds in an opposite order to the algorithm of \Cref{the:highlevel:downwlalgo}, processing the superbranch decomposition from the root to the leaves.
The idea is to ``collapse'' an adhesion towards the root if the corresponding set is not $k$-well-linked, but in this direction it is more technical than in the downwards direction.
The algorithm of \Cref{the:highlevel:upwlalgo} is perhaps conceptually the most difficult of all of the steps in our algorithm, although its proof comes quite naturally after finding the right invariants to maintain while refining the superbranch decomposition.
The proof of \Cref{the:highlevel:upwlalgo} will be given in \Cref{sec:mixedkwldecomp}.

\subsection{Decreasing adhesion size}
After making the internal separations of $\Tc$ mixed-$k$-well-linked, our main objective will be to decrease the sizes of the internal adhesions to $<k$.
Here, the plan is to proceed with the help of the concept of $k$-tangle-unbreakability.
Recall that a hypergraph is $k$-tangle-unbreakable if it does not have a separation $(A,\co{A})$ of order $\bdc(A) < k$ and two tangles $\tang_1,\tang_2$ so that $A \in \tang_1$ and $\co{A} \in \tang_2$.
The first step is to make the torsos of $\Tc$ $k$-tangle-unbreakable.

\begin{restatable}{mainlemma}{thehighlevelmakektangleunbreakable}
\label{the:highlevel:makektangleunbreakable}
There is an algorithm that, given a superbranch decomposition $\Tc = (T,\lmap)$ of a hypergraph $G$, and integers $s \ge k \ge 1$ so that
\begin{itemize}
\item the internal separations of $\Tc$ are mixed-$k$-well-linked,
\item for all $t \in V(T)$, there is at most one $v \in N_{T}(t)$ so that $\lmap(\vec{vt})$ is not well-linked,
\item the torsos of $\Tc$ are $(s,k)$-unbreakable, and
\item $\adhsize(\Tc) \le 2k$,
\end{itemize}
in time $s^{\OO(k)} \cdot \|G\|$ outputs a superbranch decomposition $\Tc'$ of $G$ so that
\begin{itemize}
\item the internal separations of $\Tc'$ are mixed-$k$-well-linked,
\item the torsos of $\Tc'$ are $k$-tangle-unbreakable, and
\item $\adhsize(\Tc') \le 2k$.
\end{itemize}
\end{restatable}

The idea of the algorithm of \Cref{the:highlevel:makektangleunbreakable} is to show that if a hypergraph is not $k$-tangle-unbreakable, then it has a doubly well-linked separation $(A,\co{A})$ so that $|V(A)| \le |V(\co{A})|$ and $A$ is internally connected.
This means that we can use a local-search type algorithm to further decompose an $(s,k)$-unbreakable torso into $k$-tangle-unbreakable torsos by doubly well-linked separations.
The proof of \Cref{the:highlevel:makektangleunbreakable} will be given in \Cref{sec:tangubrtorsos}.

At this point we do not state anymore that the resulting torsos are unbreakable, because we will lose that property anyway in the next step.

Then we get rid of internal separations of order $\ge k$ simply by contracting all edges of $\Tc$ corresponding to such internal separations.
Here, the main property to prove is that combining two $k$-tangle-unbreakable torsos separated by a mixed-$k$-well-linked separation of order $\ge k$ preserves the $k$-tangle-unbreakability.

\begin{restatable}{mainlemma}{thehighleveldecreaseadhesion}
\label{the:highlevel:decreaseadhesion}
There is an algorithm that, given a superbranch decomposition $\Tc$ of a hypergraph $G$, and an integer $k \ge 1$ so that
\begin{itemize}
\item the internal separations of $\Tc$ are mixed-$k$-well-linked,
\item the torsos of $\Tc$ are $k$-tangle-unbreakable, and
\item $\adhsize(\Tc) \le 2k$,
\end{itemize}
in time $\OO(k \cdot \|G\|)$ outputs a superbranch decomposition $\Tc'$ of $G$ so that
\begin{itemize}
\item the internal separations of $\Tc'$ have order $<k$ and are doubly well-linked, and
\item the torsos of $\Tc'$ are $k$-tangle-unbreakable.
\end{itemize}
\end{restatable}

The idea of the proof of \Cref{the:highlevel:decreaseadhesion} is to show that separations of order $<k$ that distinguish tangles can be ``uncrossed'' with mixed-$k$-well-linked separations of order $\ge k$.
The proof of \Cref{the:highlevel:decreaseadhesion} will be given in \Cref{sec:smalleradhesions}.

Then, the target is to re-obtain the property that the torsos are unbreakable.
At this point, the unbreakability bound unfortunately increases to $(k^{\OO(k)}, k)$, and along with it, the dependence on $k$ in the running time to $k^{\OO(k^2)}$.

\begin{restatable}{mainlemma}{thehighleveltunbrtounbr}
\label{the:highlevel:tunbrtounbr}
There is an algorithm that, given a superbranch decomposition $\Tc = (T, \lmap)$ of a hypergraph $G$ with multiplicity $1$ and $\rank(G) \le 2$, and an integer $k \ge 1$ so that
\begin{itemize}
\item the internal separations of $\Tc$ have order $<k$ and are doubly well-linked, and
\item the torsos of $\Tc$ are $k$-tangle-unbreakable,
\end{itemize}
in time $k^{\OO(k^2)} \cdot \|G\|$ outputs a superbranch decomposition $\Tc'$ of $G$ so that
\begin{itemize}
\item the internal separations of $\Tc'$ have order $<k$ and are doubly well-linked, and
\item the torsos of $\Tc'$ are $(k^{\OO(k)}, k)$-unbreakable.
\end{itemize}
\end{restatable}

The idea of the proof of \Cref{the:highlevel:tunbrtounbr} is that if $G$ is $k$-tangle-unbreakable and $(A,\co{A})$ is a doubly tri-well-linked separation of $G$ of order $<k$, then both $G \rescliqs A$ and $G \rescliqs \co{A}$ are $k$-tangle-unbreakable, and one of them has branchwidth at most $\bdc(A)$.
This makes it possible to decompose $k$-tangle-unbreakable torsos to unbreakable torsos by a local-search type algorithm that decomposes along doubly tri-well-linked separations.
The proof of \Cref{the:highlevel:tunbrtounbr} will be given in \Cref{sec:fromtubrtoubr}.

\subsection{From unbreakable to lean}
The final part is to improve $\Tc$ from being an unbreakable superbranch decomposition of $\hyperg(G)$ into being a $k$-lean tree decomposition of $G$.
Here, the first main insight is that we can simply compute $k$-lean tree decompositions of the primal graphs of the torsos of $\Tc$, and then glue them together along $\Tc$, to obtain a $k$-lean tree decomposition of $G$.

\begin{restatable}{mainlemma}{thehighlevelleanimplication}
\label{the:highlevel:leanimplication}
There is an algorithm that, given a superbranch decomposition $\Tc = (T,\lmap)$ of a normal hypergraph $G$, an integer $k \ge 1$, and a collection of tree decompositions $\{\Tc_t\}_{t \in \vint(T)}$ so that
\begin{itemize}
\item for each $t \in \vint(T)$, $\Tc_t$ is a $k$-lean tree decomposition of $\primal(\torso(t))$, and
\item the internal separations of $\Tc$ have order $<k$ and are doubly well-linked,
\end{itemize}
in time $\OO((k+\rank(G)) \cdot \|G\| + \sum_{t \in \vint(T)} \|\Tc^*_t\|)$ returns a $k$-lean tree decomposition of $\primal(G)$.
\end{restatable}

The idea of the proof of \Cref{the:highlevel:leanimplication} is that vertex cuts between subsets of torsos of $\Tc$ can be uncrossed with the internal separations of $\Tc$ because they are doubly well-linked.
The proof of \Cref{the:highlevel:leanimplication} will be given in \Cref{sec:combin}.

Then, it remains to give an algorithm to compute a $k$-lean tree decomposition of an $(k^{\OO(k)},k)$-unbreakable graph, which is the last main lemma.

\begin{restatable}{mainlemma}{thehighlevelunbrtolean}
\label{the:highlevel:unbrtolean}
There is an algorithm that, given a graph $G$ and integers $s \ge k \ge 1$ so that $G$ is $(s,k)$-unbreakable, in time $s^{\OO(k)} \cdot \|G\|$ returns a $k$-lean tree decomposition of $G$.
\end{restatable}

The idea of the proof of \Cref{the:highlevel:unbrtolean} is to implement the Bellenbaum-Diestel improvement procedure~\cite{bellenbaum2002two,DBLP:journals/talg/CyganKLPPSW21}.
To simplify this strategy, we show that computing a $k$-lean tree decomposition of an $(s,k)$-unbreakable graph reduces to the problem of computing only one ``central'' lean bag, and to the problem of computing $k$-lean tree decompositions of graphs whose size is bounded by $s^{\OO(k)}$.
In the latter problem we do not mind $\|G\|^{\OO(1)}$ factors, so the complications of a linear-time implementation need to be addressed only in the problem of computing the ``central'' lean bag.
The proof of \Cref{the:highlevel:unbrtolean} will be given in \Cref{sec:lean}.

\subsection{Putting things together}
Let us then prove \Cref{the:mainalgklean} assuming the main lemmas stated above.

\themainalgklean*
\begin{proof}
By \Cref{the:highlevel:bodl}, it suffices to give an algorithm that, given a graph $G$, an integer $k \ge 1$, and a $(2k,k)$-unbreakable tree decomposition $\Tc$ of $G$ with $\adhsize(\Tc) \le 2k$, in time $k^{\OO(k^2)} \cdot (\|G\| + \|\Tc\|)$ returns a $k$-lean tree decomposition of $G$.

To give this algorithm, we first combine \Cref{the:highlevel:fromgraphstohypergraphs,the:highlevel:downwlalgo,the:highlevel:upwlalgo,the:highlevel:makektangleunbreakable,the:highlevel:decreaseadhesion,the:highlevel:tunbrtounbr} to in time $k^{\OO(k^2)} (\|G\|+\|\hyperg(G)\|+\|\Tc\|) = k^{\OO(k^2)} (\|G\|+\|\Tc\|)$ compute a superbranch decomposition $\Tc' = (T',\lmap')$ of $\hyperg(G)$ so that the internal separations of $\Tc'$ have order $<k$ and are doubly well-linked and the torsos of $\Tc'$ are $(k^{\OO(k)}, k)$-unbreakable.

Then, for each $t \in \vint(T')$, we compute $\primal(\torso(t))$ from $\torso(t)$ in time $\OO(\|\torso(t)\| \cdot \rank(\torso(t))) = \OO(\|\torso(t)\| \cdot k)$.
\begin{claim}
\label{themainalgklean:claimunb}
For each $t \in \vint(T')$, the graph $\primal(\torso(t))$ is $(k^{\OO(k)}, k)$-unbreakable.
\end{claim}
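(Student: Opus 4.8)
The plan is to derive this immediately from the hypergraph-to-primal unbreakability transfer recorded in \Cref{subsec:defs:ubandlean}, together with the hypothesis carried over from \Cref{the:highlevel:tunbrtounbr} that the torsos of $\Tc'$ are $(k^{\OO(k)}, k)$-unbreakable as hypergraphs.

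First I would unpack what the hypothesis gives: ``the torsos of $\Tc'$ are $(k^{\OO(k)}, k)$-unbreakable'' means that for each $t \in \vint(T')$, taking $s = k^{\OO(k)}$, the vertex set $V(\torso(t))$ is $(s,k)$-unbreakable \emph{in the hypergraph $\torso(t)$}. Then I would invoke the fact stated in \Cref{subsec:defs:ubandlean}: if $H$ is a hypergraph, $X \subseteq V(H)$, and $X$ is $(s,k)$-unbreakable in $H$, then $X$ is $(s+k,k)$-unbreakable in $\primal(H)$. Applying this with $H = \torso(t)$ and $X = V(\torso(t))$, and using $V(\primal(\torso(t))) = V(\torso(t))$, we conclude that $\primal(\torso(t))$ is $(s+k, k)$-unbreakable.

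Finally I would observe that $s+k \le k^{\OO(k)} + k = k^{\OO(k)}$ (recall that under the paper's $\OO$-convention $k \ge 2$, so $k \le k^{\OO(k)}$ and the sum stays within $k^{\OO(k)}$), which is exactly the claimed bound $(k^{\OO(k)}, k)$-unbreakability of $\primal(\torso(t))$.

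I do not anticipate any real obstacle here: the entire argument is a one-line application of an already-established structural fact, and the only thing to be careful about is bookkeeping of the $k^{\OO(k)}$ parameter (making sure the additive $+k$ is absorbed) and not confusing ``torsos are unbreakable'' with the stronger ``$\Tc'$ is an unbreakable superbranch decomposition'' — we only need, and only have, the former.
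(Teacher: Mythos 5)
Your proposal is correct and takes essentially the same route as the paper: the paper's proof simply writes out inline the argument behind the transfer fact you cite from \Cref{subsec:defs:ubandlean} (a vertex cut $(A,B)$ of $\primal(\torso(t))$ of order $<k$ with both sides of size $\ge s+k$ induces a separation of $\torso(t)$ of order $\le |A\cap B|$ with both vertex sets of size $\ge s$), arriving at the same $(s+k,k)$-unbreakability bound, which is absorbed into $k^{\OO(k)}$ exactly as you note.
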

\begin{claimproof}
Let $s = k^{\OO(k)}$ be the bound so that the torsos of $\Tc'$ are $(s,k)$-unbreakable.
We show that $\primal(\torso(t))$ is $(s+k, k)$-unbreakable.

Suppose otherwise, and let $(A,B)$ be a vertex cut of $\primal(\torso(t))$ of order $<k$ with $|A| \ge s+k$ and $|B| \ge s+k$.
Let $(C,\co{C})$ be a separation of $\torso(t)$ constructed by assigning every hyperedge $e \in E(\torso(t))$ so that $V(e)$ intersects $A \setminus B$ into $C$, every hyperedge $e \in E(\torso(t))$ so that $V(e)$ intersects $B \setminus A$ into $\co{C}$, and other hyperedges arbitrarily.
Now, $|V(C)| \ge |A \setminus B| \ge s$ and $|V(\co{C})| \ge |B \setminus A| \ge s$, and the order of $(C,\co{C})$ is at most $|A \cap B|$, so $(C,\co{C})$ contradicts the $(s, k)$-unbreakability of $\torso(t)$.
\end{claimproof}

By \Cref{themainalgklean:claimunb}, we can use \Cref{the:highlevel:unbrtolean} to compute a $k$-lean tree decomposition $\Tc_t^*$ of $\primal(\torso(t))$ in time $k^{\OO(k^2)} \cdot \|\primal(\torso(t))\| = k^{\OO(k^2)} \cdot \|\torso(t)\|$.
Doing this for all torsos of $\Tc'$ takes in total $k^{\OO(k^2)} \cdot \|\Tc'\| = k^{\OO(k^2)} \cdot \|G\|$ time.

Finally, we apply \Cref{the:highlevel:leanimplication} to combine $\Tc'$ and the $k$-lean tree decompositions $\Tc_t^*$ into a $k$-lean tree decomposition of $\primal(\hyperg(G)) = G$, in time $\OO(k \cdot \|G\| + \sum_{t \in \vint(T)} \|\Tc_t^*\|) = k^{\OO(k^2)} \cdot \|G\|$.
\end{proof}

We then prove \Cref{the:mainkgomoryhu} and \Cref{the:mainelementgomoryhu} using \Cref{the:mainalgklean}.
Note that \Cref{the:mainelementgomoryhu} implies \Cref{the:mainkgomoryhu} by setting $U = V(G)$, so it suffices to prove only \Cref{the:mainelementgomoryhu}.

\theelementkgomoryhu*
\begin{proof}
First, let $G'$ be a $k$-sparsifier of $G$, which can be computed in time $\OO(m)$ by the algorithm of \Cref{the:nispars}.

\begin{claim}
Any element connectivity $k$-Gomory-Hu tree of $(G',U)$ is also an element connectivity $k$-Gomory-Hu tree of $(G,U)$.
\end{claim}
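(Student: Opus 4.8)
The plan is to reduce everything to one observation about $k$-sparsifiers: deleting any set of fewer than $k$ ``elements'' of $G'$ (vertices, or edges of $G'$) leaves $G$ and $G'$ with exactly the same connected components. Concretely, I would first establish the auxiliary claim that for every $S \subseteq V(G) \cup E(G')$ with $|S| < k$ we have $\cc(G \setminus S) = \cc(G' \setminus S)$. Granting this, both defining properties of an element connectivity $k$-Gomory-Hu tree transfer from $(G',U)$ to $(G,U)$ with only bookkeeping: since $V(G') = V(G)$ and $E(G') \subseteq E(G)$, we have $(V(G') \setminus U) \cup E(G') \subseteq (V(G) \setminus U) \cup E(G)$, so $\gamma$ and $\alpha$ are still of the correct type and still satisfy $|\alpha(e)| < k$.

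For the auxiliary claim, note that $G' \setminus S$ is a spanning subgraph of $G \setminus S$ (they have the same vertex set $V(G)\setminus S$ and $E(G'\setminus S)\subseteq E(G\setminus S)$), so each component of $G \setminus S$ is a union of components of $G' \setminus S$; it therefore suffices to show that any two vertices joined by a path $P$ in $G \setminus S$ are already joined in $G' \setminus S$. I would walk along $P$ edge by edge: if an edge $ab$ of $P$ lies in $E(G')$ then, since neither $a$, $b$, nor $ab$ lies in $S$, it survives in $G' \setminus S$, so $a$ and $b$ are connected there; if $ab \notin E(G')$ then \Cref{lem:delhighconn} gives $\flowp_{G'}(a,b) \ge k$, i.e.\ $k$ internally vertex-disjoint $(a,b)$-paths in $G'$, and these are also pairwise edge-disjoint because two of them could only share the edge $ab$, which is absent from $G'$; as $|S| < k$ and $a,b\notin S$, each element of $S$ kills at most one of these $k$ paths, so one of them avoids $S$ entirely and again connects $a$ and $b$ in $G' \setminus S$. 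Chaining these connections along $P$ proves the claim. This is the only real content of the argument, and the subtlety it must handle — which is exactly the point of invoking \Cref{lem:delhighconn} and the edge-disjointness observation — is that the Nagamochi--Ibaraki guarantee (\Cref{the:nispars}) speaks literally about deleting \emph{vertices}, whereas here the deleted set $S$ may also contain edges of $G'$.

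It then remains to deduce the two Gomory-Hu properties for $(G,U)$. For the first, given $u,v \in U$ and an edge $e$ of $T$ on the $(\gamma(u),\gamma(v))$-path, the set $\alpha(e)$ lies in $(V(G')\setminus U)\cup E(G') \subseteq V(G)\cup E(G')$ and has size $<k$, so by the auxiliary claim $\cc(G\setminus\alpha(e)) = \cc(G'\setminus\alpha(e))$; since $\alpha(e)$ is a $U$-element $(u,v)$-cutset of $G'$ it is therefore also one of $G$. For the second, suppose $S$ is a $U$-element $(u,v)$-cutset of $G$ with $|S| = k' < k$; put $S' = S \cap (V(G)\cup E(G'))$, which lies in $(V(G')\setminus U)\cup E(G')$, has size $\le k'$, and yields a graph $G'\setminus S'$ that is a subgraph of $G\setminus S$ (same vertex set, since $S$ contains no vertex of $U$, and only edges of $G'$ not in $S$), hence $S'$ is a $U$-element $(u,v)$-cutset of $G'$ of size $\le k'$. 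Applying the Gomory-Hu property of $(G',U)$ to $S'$ produces an edge $e$ on the $(\gamma(u),\gamma(v))$-path with $\alpha(e)$ a $U$-element $(u,v)$-cutset of $G'$ of size $|\alpha(e)|\le k'$, and by the first property (just established) $\alpha(e)$ is a $U$-element $(u,v)$-cutset of $G$ of size $\le k'$, which is exactly what the second property for $(G,U)$ demands. I expect the auxiliary claim to be the main obstacle; everything after it is routine.
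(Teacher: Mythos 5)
Your proof is correct and follows essentially the same route as the paper: the entire content is that a set of fewer than $k$ vertices/edges of $G'$ cannot separate in $G'$ what is connected in $G$, which both you and the paper derive from \Cref{lem:delhighconn} via the observation that $|S|<k$ elements cannot hit all $k$ internally vertex-disjoint (hence, as you correctly note, pairwise edge-disjoint) paths replacing a deleted edge. You are somewhat more explicit than the paper about the edge-disjointness point and about deducing the second Gomory-Hu property via $S' = S \cap (V(G)\cup E(G'))$, but these are details the paper's one-directional argument leaves implicit rather than a different approach.
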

\begin{claimproof}
For this, we need to argue that if $u,v \in U$ and $S$ is an $U$-element $(u,v)$-cutset of size $|S|<k$ in $G'$, then $S$ is also an $U$-element $(u,v)$-cutset in $G$.
Suppose not, and let $C_u \subseteq V(G)$ be the connected component of $G' \setminus S$ containing $u$.
There must be an edge $xy \in E(G) \setminus (E(G') \cup S)$ so that $x \in C_u$ and $y \in V(G) \setminus (C_u \cup S)$.
Therefore, by \Cref{lem:delhighconn} there exists a collection of $k$ internally vertex-disjoint $(x,y)$-paths in $G'$.
The set $S$ can hit at most $k-1$ of these paths, so this contradicts that $x$ and $y$ are in different connected components of $G' \setminus S$.
\end{claimproof}

Therefore, it remains to compute an element connectivity $k$-Gomory-Hu tree of $(G',U)$.

We construct a graph $G''$ from $G'$ by replacing all vertices in $U$ by cliques of size $k$, and replacing each edge $uv \in E(G')$ by a vertex $x_{uv}$, connecting it to the vertices corresponding to its endpoints.
More formally, we let
\[V(G'') = (V(G') \setminus U) \cup \{x_{uv} : uv \in E(G')\} \cup \{y_{v,i} : v \in U \text{ and } i \in [k]\},\]
and
\begin{align*}
E(G'') = &\{x_{uv}v : v \in V(G') \setminus U \text{ and } u \in N(v)\}\\
\cup &\{x_{uv}y_{v,i} : v \in U, i \in [k], \text{ and } u \in N(v)\}\\
\cup &\{y_{v,i}y_{v,j} : v \in U \text{ and } 1 \le i < j \le k\}.
\end{align*}
The graph $G''$ has $\|G''\| = \OO(k^2 \cdot |V(G')| + k \cdot |E(G')|) = \OO(k^2 n)$ and can be constructed in $\OO(k^2 n)$ time.

Then, we apply the algorithm of \Cref{the:mainalgklean} to in time $k^{\OO(k^2)} n$ compute a $k$-lean tree decomposition $\Tc = (T,\bag)$ of $G''$.
We then use the data structure of \Cref{lem:cliqtdds} to in time $\OO(\|\Tc\| + kn) = k^{\OO(k^2)} n$ compute for each vertex $v \in U$ a node $t_v \in V(T)$ so that $Y_v = \{y_{v,i} : i \in [k]\} \subseteq \bag(t_v)$.
Because $Y_v$ is a clique in $G''$, such a node $t_v$ must exist.
Now, we construct a tree decomposition $\Tc' = (T,\bag')$ by removing each vertex $y_{v,i}$ for $v \in U$ and $i \in [k]$ from the bags of all other nodes than $t_v$.
\begin{claim}
$\Tc'$ is a $k$-lean tree decomposition of $G''$.
\end{claim}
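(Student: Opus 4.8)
The plan is to verify the two defining properties of a $k$-lean tree decomposition for $\Tc' = (T,\bag')$, using throughout the two simple facts that the underlying tree $T$ is unchanged and that $\bag'(t) \subseteq \bag(t)$ for every $t \in V(T)$ (indeed $\bag'$ arises from $\bag$ only by deleting some occurrences of the vertices $y_{v,i}$, and in particular all occurrences of non-$y$ vertices and all occurrences of $y_{v,i}$ in $\bag(t_v)$ are retained).

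First I would check that $\Tc'$ is a tree decomposition of $G''$ at all. The vertex condition is immediate: each $y_{v,i}$ occurs in $\Tc'$ only in the single node $t_v$ (and $y_{v,i}\in\bag(t_v)$ since $Y_v\subseteq\bag(t_v)$), while every vertex of $G''$ not of the form $y_{v,i}$ occurs in exactly the same bags under $\bag'$ as under $\bag$; so all occurrence sets are non-empty connected subtrees of $T$. For the edge condition, every edge of $G''$ with no endpoint in any $Y_v$ is still covered, since it was covered by some $\bag(t)$ whose relevant endpoints survive in $\bag'(t)$, and the clique edges $y_{v,i}y_{v,j}$ are covered by $\bag'(t_v)\supseteq Y_v$. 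The only remaining — and the only nontrivial — case is the edges $x_{uv}y_{v,i}$ with $v\in U$ and $u\in N_{G'}(v)$: as $y_{v,i}$ survives only in $t_v$, covering such an edge requires $x_{uv}\in\bag(t_v)$. I expect this to be the main obstacle, and would handle it as follows. The set $Y_v\cup\{x_{uv}\}$ is a clique of $G''$ (as $Y_v$ is a clique and $x_{uv}$ is adjacent to every $y_{v,i}$), hence contained in some bag $\bag(c)$. Both $c$ and $t_v$ lie in $\{t : Y_v\subseteq\bag(t)\}$, which is a subtree of $T$ (an intersection of $k$ subtrees), so every bag on the $c$--$t_v$ path in $T$ contains $Y_v$. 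If $x_{uv}\notin\bag(t_v)$, then by the vertex condition for $x_{uv}$ there is an edge $r'r$ on this path with $x_{uv}\in\bag(r')\setminus\bag(r)$; but then $Y_v\subseteq\bag(r)\cap\bag(r')=\adh_{\Tc}(r'r)$, so $\adhsize(\Tc)\ge|Y_v|=k$, contradicting that $\Tc$ is $k$-lean (which requires $\adhsize(\Tc)<k$). Hence $x_{uv}\in\bag(t_v)$, and the edge condition holds.

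It then remains to check $k$-leanness of $\Tc'$. For the adhesion bound, for every $st\in E(T)$ we have $\adh_{\Tc'}(st)=\bag'(s)\cap\bag'(t)\subseteq\bag(s)\cap\bag(t)=\adh_{\Tc}(st)$, so $\adhsize(\Tc')\le\adhsize(\Tc)<k$. For the separator property, take $t_1,t_2\in V(T)$, $X_1\subseteq\bag'(t_1)$, $X_2\subseteq\bag'(t_2)$, and suppose there is an $(X_1,X_2)$-separator of $G''$ of size $k'<\min(|X_1|,|X_2|,k)$. Since $X_i\subseteq\bag'(t_i)\subseteq\bag(t_i)$ and $\Tc$ is $k$-lean, there is an edge $e$ on the $(t_1,t_2)$-path in $T$ with $|\adh_{\Tc}(e)|\le k'$; the same edge $e$ lies on the $(t_1,t_2)$-path in the unchanged tree $T$, and $\adh_{\Tc'}(e)\subseteq\adh_{\Tc}(e)$, so $\adh_{\Tc'}(e)$ is a $(t_1,t_2)$-adhesion of $\Tc'$ of size $\le k'$. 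Thus $\Tc'$ satisfies both conditions and is a $k$-lean tree decomposition of $G''$. The only step with genuine content is the verification that $x_{uv}\in\bag(t_v)$; everything else follows formally from $\bag'\le\bag$ pointwise and from $T$ being unchanged.
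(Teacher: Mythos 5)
Your proof is correct and follows essentially the same route as the paper: the only nontrivial point in both is showing $x_{uv}\in\bag(t_v)$ by forcing $Y_v$ into an adhesion of size $\ge k$ otherwise, and the leanness transfer is the same observation that bags and adhesions only shrink (the paper phrases it via non-$k$-lean-witnesses, you phrase it directly via the separator condition — these are equivalent). No gaps.
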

\begin{claimproof}
Let us first prove that $\Tc'$ is a tree decomposition of $G''$.
The vertex condition is clear from the construction.
For the edge condition, all other types of edges are clear except the edges of form $x_{uv} y_{v,i}$.
For this, we need to prove that $x_{uv} \in \bag(t_v)$.
Suppose not, and let $t_{uv}$ be a node of $T$ so that $x_{uv} \in \bag(t_{uv})$, and let $s_{uv}$ be the neighbor of $t_v$ on the $(t_{uv}, t_v)$-path.
Because $x_{uv}$ is not in $\bag(t_v)$ but is adjacent to all vertices in $Y_v$, it follows that $Y_v \subseteq \bag(s_{uv})$.
However, then $Y_v \subseteq \adh(s_{uv} t_v)$, contradicting that $\Tc$ has adhesion size $<k$.

To prove that $\Tc'$ is $k$-lean, observe that $\adhsize(\Tc') \le \adhsize(\Tc)$, and any non-$k$-lean-witness for $\Tc'$ would also be a non-$k$-lean-witness for $\Tc$.
\end{claimproof}

Now, we construct the mapping $\gamma \colon U \to V(T)$ by letting $\gamma(v) = t_v$, and the mapping $\alpha \colon E(T) \to 2^{(V(G') \setminus U) \cup E(G')}$ by taking for each $e \in E(T)$ the set $\adh_{\Tc'}(e)$ and replacing each vertex of type $x_{uv}$ by the edge $uv \in E(G')$.
By using \Cref{lem:computeadhesions} for constructing the adhesions of $\Tc'$, the mappings $\gamma$ and $\alpha$ can be constructed in total $\OO(|U| + \|\Tc'\|) = k^{\OO(k^2)} n$ time.

\begin{claim}
$(T,\gamma,\alpha)$ is an element connectivity $k$-Gomory-Hu tree of $(G',U)$.
\end{claim}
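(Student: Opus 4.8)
The plan is to set up a size-preserving dictionary between $U$-element $(u,v)$-cutsets in $G'$ and vertex separators between the size-$k$ cliques $Y_u := \{y_{u,i}: i\in[k]\}$ and $Y_v := \{y_{v,i}: i\in[k]\}$ in $G''$, and then read off the two defining conditions of a Gomory--Hu tree from the $k$-leanness of $\Tc'$. First I would define, for a set $S\subseteq(V(G')\setminus U)\cup E(G')$, its lift $\widehat S\subseteq V(G'')$ obtained by replacing each edge $xy\in S\cap E(G')$ by the subdivision vertex $x_{xy}$; then $|\widehat S|=|S|$, the set $\widehat S$ contains no vertex of the form $y_{w,i}$, and $\widehat S\cap(Y_u\cup Y_v)=\emptyset$. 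The core lemma I would prove is that $S$ is a $U$-element $(u,v)$-cutset in $G'$ if and only if $\widehat S$ is a $(Y_u,Y_v)$-separator in $G''$, and, symmetrically, a $y$-free set $S''\subseteq V(G'')$ is a $(Y_u,Y_v)$-separator in $G''$ if and only if the set obtained from $S''$ by replacing each $x_{xy}$ by the edge $xy$ is a $U$-element $(u,v)$-cutset in $G'$. Both directions are proved by translating walks: a $u$--$v$ walk in $G'$ avoiding $S$ (in its vertices and in its edges) lifts to a $Y_u$--$Y_v$ walk in $G''$ avoiding $\widehat S$ by replacing each traversed edge $e$ by $x_e$ and routing through a single clique vertex $y_{w,1}$ whenever the walk visits some $w\in U$; conversely a $Y_u$--$Y_v$ walk in $G''$ avoiding $\widehat S$ projects, under $x_e\mapsto e$ and $y_{w,i}\mapsto w$, to a $u$--$v$ walk in $G'$ avoiding $S$. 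In both directions one uses that the separator in play is $y$-free --- automatic for $\widehat S$, and established next for adhesions of $\Tc'$ --- so that the clique routing vertices $y_{w,1}$ can always be chosen outside it.

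Before using the dictionary I would record one structural fact about $\Tc'$: every vertex $y_{w,i}$ lies in exactly one bag $\bag'(t_w)$ of $\Tc'$, so no adhesion $\adh_{\Tc'}(e)$ contains any vertex of the form $y_{w,i}$. Consequently $\alpha(e)$, which by definition replaces each $x_{xy}\in\adh_{\Tc'}(e)$ by the edge $xy$, is precisely the projection of $\adh_{\Tc'}(e)$ in the sense of the dictionary, is well-defined, and satisfies $|\alpha(e)|=|\adh_{\Tc'}(e)|<k$ because $\adhsize(\Tc')<k$.

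The two conditions then follow quickly. For the first, let $u,v\in U$ and let $e$ be an edge of $T$ on the $(\gamma(u),\gamma(v))=(t_u,t_v)$-path. Since $Y_u\subseteq\bag'(t_u)$, $Y_v\subseteq\bag'(t_v)$, $|Y_u|=|Y_v|=k$ and $|\adh_{\Tc'}(e)|<k$, the sets $Y_u\setminus\adh_{\Tc'}(e)$ and $Y_v\setminus\adh_{\Tc'}(e)$ are non-empty and lie on opposite sides of $e$ in $T$, so by the defining separation property of tree decompositions $\adh_{\Tc'}(e)$ is a $(Y_u,Y_v)$-separator of $G''$; being $y$-free, its projection $\alpha(e)$ is a $U$-element $(u,v)$-cutset in $G'$, of size $<k$. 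For the second, if $S$ is a $U$-element $(u,v)$-cutset in $G'$ of size $k'<k$, then $\widehat S$ is a $(Y_u,Y_v)$-separator in $G''$ of size $k'<k=\min(|Y_u|,|Y_v|,k)$, so \Cref{klean:prop2} applied to the $k$-lean decomposition $\Tc'$ with $X_1=Y_u$ and $X_2=Y_v$ yields an edge $e$ on the $(t_u,t_v)$-path with $|\adh_{\Tc'}(e)|\le k'$, and then $\alpha(e)$ is a $U$-element $(u,v)$-cutset of size $\le k'$ by the first condition. The degenerate case $t_u=t_v$ needs no separate treatment: the $t_1=t_2$ instance of $k$-leanness of $\Tc'$ already forbids a $(Y_u,Y_v)$-separator of size $<k$ in $G''$, hence via the dictionary any $U$-element $(u,v)$-cutset of size $<k$ in $G'$, so the second condition holds vacuously for such a pair.

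I expect the main obstacle to be the walk-translation lemma, and in particular checking that an arbitrary $Y_u$--$Y_v$ walk in $G''$ --- which may pass through several clique blobs $Y_w$ and subdivision vertices $x_e$ in any order --- projects to a genuine $u$--$v$ walk in $G'$ avoiding exactly the projected vertex/edge set; this is precisely where $y$-freeness of the separator is used, and it holds automatically for lifts of cutsets and for adhesions of $\Tc'$. Everything else --- the tree-decomposition separation property, the cardinality bookkeeping, and the single invocation of \Cref{klean:prop2} --- is routine.
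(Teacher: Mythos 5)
Your proposal is correct and follows essentially the same route as the paper: establish that adhesions of $\Tc'$ are $y$-free and hence project to $U$-element cutsets, prove the forward direction from the separation property of tree decompositions, and prove the minimality direction by lifting a cutset $S$ to a $(Y_u,Y_v)$-separator in $G''$ and invoking $k$-leanness of $\Tc'$ (the paper phrases this last step via the absence of a non-$k$-lean-witness, which is equivalent to the instance of \Cref{klean:prop2} you use). The only difference is that you make the walk-translation dictionary explicit where the paper leaves it implicit; this is a presentational choice, not a mathematical one.
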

\begin{claimproof}
First, we need to prove that for each $e \in E(T)$, and all $u,v \in U$ so that $e$ is on the $(\gamma(u),\gamma(v))$-path in $T$, the set $\alpha(e)$ is an $U$-element $(u,v)$-cutset of $G'$.
For this, we recall that $\adh_{\Tc'}(e)$ is a $(\bag'(\gamma(u)), \bag'(\gamma(v)))$-separator not containing vertices of type $y_{w,i}$ for any $w \in U$, which by $Y_u \subseteq \bag'(\gamma(u))$ and $Y_v \subseteq \bag'(\gamma(v))$ implies that $\alpha(e)$ is an $U$-element $(u,v)$-cutset of $G'$.

Then, we need to prove that if $u,v \in U$ and there exists an $U$-element $(u,v)$-cutset $S$ of size $|S|<k$ of $G'$, then there exists $e \in E(T)$ on the $(\gamma(u), \gamma(v))$-path so that $|\alpha(e)| \le |S|$, i.e., $|\adh_{\Tc'}(e)| \le |S|$.
Let $S$ be an $U$-element $(u,v)$-cutset of size $|S|<k$.
Now, there exists a vertex cut $(A,B)$ of $G''$ so that $Y_u \subseteq A \setminus B$, $Y_v \subseteq B \setminus A$, and $A \cap B = (S \cap V(G')) \cup \{x_{uv} : uv \in S \cap E(G')\}$.
If all $(\gamma(u),\gamma(v))$-adhesions of $\Tc'$ would have size $>|S|$ then $(A,B,\gamma(u),\gamma(v))$ would be a non-$k$-lean-witness, so there exists an $(\gamma(u),\gamma(v))$-adhesion of size $\le |S|$.
\end{claimproof}
This claim finishes the proof.
\end{proof}

We then prove \Cref{the:mainglobalvertexcut}.

\theglobalvertexcut*
\begin{proof}
We apply the algorithm of \Cref{the:mainalgklean} to compute a $k$-lean tree decomposition $\Tc = (T,\bag)$ of $G$, and then use \Cref{lem:computeadhesions} to explicitly construct the adhesions of it.

First, note that if there exists $t \in V(T)$ so that $\bag(t) = V(G)$, then no proper vertex separators of size $<k$ exist, because any separation $(A,B)$ of order $<k$ with $A \setminus B$ and $B \setminus A$ non-empty would give a non-$k$-lean-witness $(A,B,t,t)$.

We claim that if no such node $t$ exists, then we can find a proper vertex separator of size $<k$.

Root $T$ at an arbitrary node.
We can in $\OO(k \cdot \|\Tc\|)$ time find a non-root node $t$ with a parent $p$ so that $\bag(t) \neq \adh(tp)$, so that among such nodes $t$ maximizes the distance to the root.
Such a node $t$ must exist, because otherwise the root bag would contain all vertices.
We claim that $\adh(tp)$ is a proper vertex separator of $G$ of size $<k$.
Let $A$ be the union of the bags of the descendants of $t$ and $B$ the union of the other bags, and note that $(A,B)$ is a vertex cut with $A \cap B = \adh(tp)$.
We have that $A \setminus B$ is non-empty because $\bag(t) \neq \adh(tp)$, so it suffices to prove that $B \setminus A$ is non-empty.
The selection of $t$, maximizing the distance to the root, implies that $A = \bag(t)$, implying that $A \neq V(G)$, which in turn implies that $B \setminus A$ is non-empty.

This algorithm runs in time $k^{\OO(k^2)} n + \OO(m) + \OO(k \cdot \|\Tc\|) = k^{\OO(k^2)} n + \OO(m)$.
\end{proof}

\subsection{Organization of the paper}
The rest of this paper is organized as follows.
First, in \Cref{sec:bodl} we give the proof of \Cref{the:highlevel:bodl}.
Then, \Cref{sec:gt,sec:localsearch,sec:manipulatingsuperbranch} will be about introducing tools for working with hypergraphs and superbranch decompositions that will be used in the later sections.
More precisely, in \Cref{sec:gt} we prove various graph-theoretic lemmas about hypergraphs, in \Cref{sec:localsearch} we give algorithms for finding separations of hypergraphs, and in \Cref{sec:manipulatingsuperbranch} we introduce tools for manipulating superbranch decompositions.
\Cref{sec:manipulatingsuperbranch} also includes the proof of \Cref{the:highlevel:fromgraphstohypergraphs}.
Then, \Cref{sec:downwl,sec:mixedkwldecomp,sec:tangubrtorsos,sec:smalleradhesions,sec:fromtubrtoubr,sec:combin,sec:lean}, respectively, will be dedicated to the proofs of \Cref{the:highlevel:downwlalgo,the:highlevel:upwlalgo,the:highlevel:makektangleunbreakable,the:highlevel:decreaseadhesion,the:highlevel:tunbrtounbr,the:highlevel:leanimplication,the:highlevel:unbrtolean}, respectively.

\section{Generalized Bodlaender's technique}
\label{sec:bodl}
In this section, we provide the first ingredient of our algorithm.
This is heavily based on Bodlaender's technique~\cite{Bodlaender96} that he introduced for obtaining a linear-time parameterized algorithm for computing treewidth, and can be regarded as a generalization of it.

In particular, our goal is to prove the following.

\thehighlevelbodl*

\subsection{Sparsification}
Recall the Nagamochi-Ibaraki sparsifier from \Cref{subsec:nagamochiibaraki}.
Our first step is to use it reduce the number of edges of $G$.
For this, we need to prove that if $\Tc$ is a $k$-lean tree decomposition of a $k$-sparsifier of $G$, then $\Tc$ is also a $k$-lean tree decomposition of $G$.

The following lemma combined with \Cref{lem:delhighconn} implies that if $G'$ is a $k$-sparsifier of $G$ and $\Tc$ is a tree decomposition of $G'$ with adhesion size $<k$, then $\Tc$ is a tree decomposition of $G$.

\begin{lemma}
\label{lem:highconnsamebag}
Let $G$ be a graph and $(T,\bag)$ a tree decomposition of $G$ with adhesion size $<k$.
Let also $u,v \in V(G)$ be distinct vertices of $G$.
If $\flowp(u,v) \ge k$, then there is $t \in V(T)$ with $\{u,v\} \subseteq \bag(t)$.
\end{lemma}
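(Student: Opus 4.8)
The plan is to prove the contrapositive: assume that no bag of $(T,\bag)$ contains both $u$ and $v$, and deduce that $\flowp(u,v) \le k-1$. First I would dispose of the adjacent case: if $uv \in E(G)$, then the edge condition of tree decompositions already yields a node $t$ with $\{u,v\} \subseteq \bag(t)$, contradicting our assumption. Hence $u$ and $v$ are non-adjacent, so by Menger's theorem $\flowp(u,v)$ equals the minimum size of a proper $(u,v)$-separator, and it suffices to exhibit a proper $(u,v)$-separator of size $<k$.

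Next I would locate a suitable edge of $T$. Let $T_u = \{t \in V(T) : u \in \bag(t)\}$ and $T_v = \{t \in V(T) : v \in \bag(t)\}$; by the vertex condition these are non-empty and induce connected subtrees of $T$, and by our assumption they are disjoint. Take the unique minimal path in $T$ from $T_u$ to $T_v$, and let $st$ be its first edge with $s$ lying on the $T_u$-side. Deleting $st$ from $T$ splits it into components $T_s \supseteq T_u$ and $T_t \supseteq T_v$. Writing $V_s = \bigcup_{x \in T_s} \bag(x)$ and $V_t = \bigcup_{x \in T_t} \bag(x)$, the edge condition implies that $G$ has no edge between $V_s \setminus V_t$ and $V_t \setminus V_s$, and the vertex condition implies $V_s \cap V_t \subseteq \adh(st)$; moreover $u \in V_s \setminus V_t$ and $v \in V_t \setminus V_s$, since $\{x : u \in \bag(x)\} = T_u \subseteq T_s$ and $\{x : v \in \bag(x)\} = T_v \subseteq T_t$ and $T_s \cap T_t = \emptyset$. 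Therefore every $(u,v)$-path in $G$ meets $V_s \cap V_t$, so $V_s \cap V_t$ is a $(u,v)$-separator avoiding $\{u,v\}$, i.e.\ a proper $(u,v)$-separator, of size $|V_s \cap V_t| \le |\adh(st)| < k$. This gives $\flowp(u,v) \le k-1$, completing the contrapositive.

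I do not expect a genuine obstacle here: the statement is a standard fact about tree decompositions, and the only points needing care are the routine verifications that the chosen edge $st$ really separates $T_u$ from $T_v$ in $T$ and that $V_s \cap V_t$ acts as a separator of $G$, both of which are immediate consequences of the connectivity (vertex) and edge conditions. The one subtlety worth flagging is that handling the adjacent case via the edge condition is exactly what lets the lemma be stated with $\flowp$ (which counts the path $uv$ itself) rather than with $\flow$.
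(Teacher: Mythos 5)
Your proof is correct and follows the same route as the paper: the paper's own argument is a two-line version of exactly this — assume no common bag, use the vertex condition to extract an edge $st \in E(T)$ whose adhesion is a proper $(u,v)$-separator of size $<k$, contradicting $\flowp(u,v) \ge k$. Your write-up just makes explicit the routine details (the adjacent case via the edge condition, and the verification that the adhesion separates $u$ from $v$) that the paper leaves implicit.
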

\begin{proof}
Suppose there is no bag containing both $u$ and $v$.
Now, the vertex condition of tree decompositions implies that there exists $st \in E(T)$ so that $\adh(st)$ is a proper $(u,v)$-separator, contradicting that $\flowp(u,v) \ge k$.
\end{proof}

We then apply the above fact to show that a $k$-lean tree decomposition of a $k$-sparsifier of $G$ is also a $k$-lean tree decomposition of $G$.

\begin{lemma}
\label{lem:sparsmaintain}
Let $G'$ be a $k$-sparsifier of $G$.
If $\Tc$ is a $k$-lean tree decomposition of $G'$, then $\Tc$ is also a $k$-lean tree decomposition of $G$.
\end{lemma}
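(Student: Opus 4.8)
The plan is to verify the two conditions in the definition of a $k$-lean tree decomposition, namely that $\adhsize(\Tc) < k$ and that $\Tc$ satisfies \Cref{klean:prop2} with respect to $G$. The adhesion size condition is immediate, since $\adhsize(\Tc)$ is a property of the decomposition itself and does not depend on whether we view $\Tc$ as a decomposition of $G$ or of $G'$.

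First I would establish that $\Tc$ is a tree decomposition of $G$ at all. Since $G'$ is a subgraph of $G$ with $V(G') = V(G)$, the only thing to check is the edge condition for edges $uv \in E(G) \setminus E(G')$. By \Cref{lem:delhighconn}, any such edge satisfies $\flowp_{G'}(u,v) \ge k$, and since $\adhsize(\Tc) < k$, \Cref{lem:highconnsamebag} (applied to $G'$ and $\Tc$) gives a bag $t \in V(T)$ with $\{u,v\} \subseteq \bag(t)$. The vertex condition is inherited directly from $\Tc$ being a tree decomposition of $G'$. Hence $\Tc$ is a tree decomposition of $G$.

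It remains to check \Cref{klean:prop2}. Fix $t_1,t_2 \in V(T)$, $X_1 \subseteq \bag(t_1)$, $X_2 \subseteq \bag(t_2)$, and suppose there is an $(X_1,X_2)$-separator $S$ in $G$ of size $k' < \min(|X_1|,|X_2|,k)$. Let $(A,B)$ be a vertex cut of $G$ realizing $S$, i.e. $X_1 \subseteq A$, $X_2 \subseteq B$, $A \cap B = S$, with $|S| = k' < k$. Since $G'$ is a $k$-sparsifier of $G$ and $|A \cap B| < k$, the pair $(A,B)$ is also a vertex cut of $G'$; thus $S$ is an $(X_1,X_2)$-separator of $G'$ of the same size $k'$. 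Now $k' < \min(|X_1|,|X_2|,k)$ as required, and since $\Tc$ is a $k$-lean tree decomposition of $G'$, there is a $(t_1,t_2)$-adhesion of size $\le k'$. This is exactly the conclusion of \Cref{klean:prop2} for $G$, completing the proof.

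There is essentially no obstacle here: the lemma is a routine consequence of the defining property of $k$-sparsifiers (vertex cuts of order $<k$ transfer from $G'$ to $G$) together with \Cref{lem:delhighconn} and \Cref{lem:highconnsamebag}. The one point that needs a moment's care is the direction of the sparsifier property — it is vertex cuts of $G'$ that are guaranteed to be vertex cuts of $G$, but here we need the reverse transfer for the separator $(A,B)$; this is fine because every vertex cut of $G$ is automatically a vertex cut of the subgraph $G'$ (there are only fewer edges to worry about), so the sparsifier definition is not even needed for that step — it is only needed via \Cref{lem:delhighconn} for the edge condition.
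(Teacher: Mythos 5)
Your proof is correct and takes essentially the same approach as the paper: both establish the edge condition via \Cref{lem:delhighconn} and \Cref{lem:highconnsamebag}, and both use the trivial observation that any vertex cut of $G$ is a vertex cut of the subgraph $G'$ to transfer the leanness property (the paper phrases this via non-$k$-lean-witnesses, you verify \Cref{klean:prop2} directly, which is the same argument).
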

\begin{proof}
First, to prove that $\Tc$ is a tree decomposition of $G$, we must show that if $uv \in E(G) \setminus E(G')$, then there is a bag of $\Tc$ that contains both $u$ and $v$.
By \Cref{lem:delhighconn} we have that in this case $\flowp_{G'}(u,v) \ge k$.
Then, because $\Tc$ has adhesion size $<k$, \Cref{lem:highconnsamebag} implies that there is a bag containing both $u$ and $v$.

It remains to prove that $\Tc$ is a $k$-lean tree decomposition of $G$.
We have that $\adhsize(\Tc) < k$ by assumption.
Because $G'$ is a subgraph of $G$, any vertex cut of $G$ is also a vertex cut of $G'$, and therefore any non-$k$-lean-witness for $\Tc$ with $G$ would also be a non-$k$-lean-witness for $\Tc$ with $G'$, contradicting that $\Tc$ is a $k$-lean tree decomposition of $G'$.
\end{proof}

\subsection{Compression}
We then move to the main part of the algorithm behind \Cref{the:highlevel:bodl}.

Let $G$ be a graph and $k$ an integer.
The \emph{$k$-improved graph} of $G$ is the graph obtained from $G$ by adding an edge between any two non-adjacent vertices $u$ and $v$ with $\flowp(u,v) \ge k$.
We denote the $k$-improved graph of $G$ by $I_k(G)$.

\begin{lemma}
\label{lem:improvedclique}
If $W \subseteq V(G)$ is a clique in $I_k(G)$, then for every vertex cut $(A,B)$ of $G$ of order $|A \cap B|<k$ it holds that either $W \subseteq A$ or $W \subseteq B$.
\end{lemma}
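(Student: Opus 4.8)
The plan is to argue by contradiction. Suppose $(A,B)$ is a vertex cut of $G$ with $|A\cap B|<k$ but $W\not\subseteq A$ and $W\not\subseteq B$. Since $A\cup B=V(G)$, there must then exist $u\in W\cap(B\setminus A)$ and $v\in W\cap(A\setminus B)$; in particular $u\neq v$. As $W$ is a clique in $I_k(G)$, the pair $uv$ is an edge of $I_k(G)$, so by the definition of the $k$-improved graph either $uv\in E(G)$, or $uv\notin E(G)$ and $\flowp_G(u,v)\ge k$. (The case $|W|\le 1$ is trivial.)

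First I would dispose of the case $uv\in E(G)$: this is an edge of $G$ with one endpoint in $A\setminus B$ and the other in $B\setminus A$, directly contradicting that $(A,B)$ is a vertex cut.

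In the remaining case $\flowp_G(u,v)\ge k$, with $u$ and $v$ non-adjacent in $G$, so by Menger's theorem $\flowp_G(u,v)$ equals the minimum size of a proper $(u,v)$-separator. The key observation is that $A\cap B$ is exactly such a separator: it is an $(\{u\},\{v\})$-separator because $u\in B$, $v\in A$, and $A\cap B$ is the common part of the vertex cut $(B,A)$; and it is \emph{proper} because $u\in B\setminus A$ and $v\in A\setminus B$ are both outside $A\cap B$. Hence $\flowp_G(u,v)\le|A\cap B|<k$, a contradiction, which completes the proof.

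The argument has essentially no hard part; the only point that needs a little care is checking that $A\cap B$ is a \emph{proper} $(u,v)$-separator, so that Menger's theorem applies in the form asserting that $\flowp_G(u,v)$ is the minimum size of a proper $(u,v)$-separator for non-adjacent $u,v$. This is precisely why one should select $u$ and $v$ from the ``strict'' sides $B\setminus A$ and $A\setminus B$ rather than merely from $B$ and $A$.
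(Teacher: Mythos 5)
Your proof is correct and follows exactly the same route as the paper's: pick $u$ and $v$ in the strict sides $A\setminus B$ and $B\setminus A$, note that adjacency would contradict the vertex-cut property, and otherwise observe that $A\cap B$ is a proper $(u,v)$-separator of size $<k$, contradicting $\flowp_G(u,v)\ge k$. The extra care you take about properness of the separator is exactly the point the paper's (terser) proof relies on.
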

\begin{proof}
Suppose not, and let $u \in W \cap (A \setminus B)$ and $v \in W \cap (B \setminus A)$.
Now, if $uv \in E(G)$, then $(A,B)$ is not a vertex cut of $G$, and if $uv \notin E(G)$, then $A \cap B$ is a proper $(u,v)$-separator of size $<k$, contradicting that $\flowp(u,v) \ge k$.
\end{proof}

A vertex $w \in V(G)$ is a \emph{simplicial vertex} if $N(w)$ is a clique.
A vertex $w \in V(G)$ is an \emph{$I_k$-simplicial vertex} if $N(w)$ is a clique in the $k$-improved graph of $G$, or equivalently, if for every pair $u,v \in N(w)$ it holds that either $uv \in E(G)$ or $\flowp_G(u,v) \ge k$.

For a vertex $w \in V(G)$, we denote by $G \elim w$ the graph obtained from $G$ by \emph{eliminating} $w$, that is, by inserting edges between all pairs of distinct vertices $u,v \in N(w)$ and then removing $w$.
For a set of vertices $X \subseteq V(G)$, we denote by $G \elim X$ the graph obtained from $G$ by eliminating all vertices in $X$.
It can be shown that the order of eliminations does not matter, and therefore $G \elim X$ is well-defined.
In fact, we will use this notation only when $X$ is an independent set, in which case it is even easier to observe that the order of eliminations does not matter.

One case of our algorithm will identify a large independent set $I \subseteq V(G)$ of $I_k$-simplicial vertices of degree $<k$.
We now show that in that case, we can recursively call our algorithm with $G \elim I$, and lift the resulting $k$-lean tree decomposition of $G \elim I$ into an unbreakable tree decomposition of $G$.

\begin{lemma}
\label{lem:bodlputbacksimplsmalldeg}
There is an algorithm that, given a graph $G$, integer $k \ge 1$, an independent set $I \subseteq V(G)$ of $I_k$-simplicial vertices of degree $<k$, and a $k$-lean tree decomposition $\Tc$ of $G \elim I$, in time $\OO(\|G\|+\|\Tc\|)$ computes a $(k,k)$-unbreakable tree decomposition of $G$ with adhesion size~$<k$.
\end{lemma}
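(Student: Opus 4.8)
The plan is to take the given $k$-lean tree decomposition $\Tc = (T, \bag)$ of $G \elim I$ and, for each eliminated vertex $w \in I$, find a single node of $T$ whose bag contains $N_G(w)$, then add $w$ to that bag (or, equivalently, attach a new leaf bag $N_G[w]$ there). Since the vertices of $I$ are $I_k$-simplicial, each set $N_G(w)$ is a clique in $I_k(G \elim I)$: indeed, for $u, v \in N_G(w)$ either $uv \in E(G)$ (hence $uv \in E(G \elim I)$, as eliminating other vertices of $I$ — which are non-adjacent to $w$ and each other — does not destroy this edge), or $\flowp_G(u,v) \ge k$, which is preserved under elimination of $I$ since eliminating a vertex does not decrease connectivity between remaining vertices. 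Because $N_G(w)$ is a clique in $I_k(G \elim I)$, the set $N_G(w)$ is also a clique in $G \elim I$ itself (eliminating $w$ turns $N_G(w)$ into a clique, and the other eliminations don't touch it), so by the vertex condition of tree decompositions there is a node $t_w \in V(T)$ with $N_G(w) \subseteq \bag(t_w)$; such a node can be located in $\OO(|N_G(w)|) = \OO(k)$ time using the data structure of \Cref{lem:cliqtdds} after an $\OO(\|\Tc\|)$-time initialization.

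Having located $t_w$ for each $w \in I$, I would build $\Tc^* = (T^*, \bag^*)$ by, for each $w \in I$, introducing a fresh leaf node $\ell_w$ adjacent to $t_w$ with $\bag^*(\ell_w) = N_G[w]$, and keeping all original bags unchanged. This is a tree decomposition of $G$: the vertex condition holds since each $w \in I$ appears only in the single new bag $\bag^*(\ell_w)$, and every vertex of $G \elim I$ keeps its original connected subtree (the new leaves are only attached, not inserted on paths); the edge condition holds since every edge $uv$ of $G$ with $u, v \notin I$ is also an edge of $G \elim I$ (eliminations only add edges) and is covered by some original bag, while every edge incident to $w \in I$ has both endpoints in $\bag^*(\ell_w) = N_G[w]$. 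The adhesion at $\ell_w t_w$ is $N_G(w)$, which has size $<k$ by hypothesis, and all other adhesions are unchanged and have size $<k$ since $\Tc$ is $k$-lean; hence $\adhsize(\Tc^*) < k$. The running time is $\OO(\|\Tc\|)$ for the data structure plus $\OO(\sum_{w \in I} |N_G[w]|) = \OO(\|G\|)$ for the queries and bag insertions, so $\OO(\|G\| + \|\Tc\|)$ total.

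It remains to verify $(k,k)$-unbreakability, i.e. that every bag of $\Tc^*$ is $(k,k)$-unbreakable in $G$. The new bags $\bag^*(\ell_w) = N_G[w]$ have size $\le k$, so they are trivially $(k,k)$-unbreakable. For an original bag $\bag(t)$, suppose toward a contradiction there is a vertex cut $(A,B)$ of $G$ of order $<k$ with $|A \cap \bag(t)| \ge k$ and $|B \cap \bag(t)| \ge k$. Since $I$ is a set of $I_k$-simplicial vertices, \Cref{lem:improvedclique} applied in $G \elim I$ (whose cuts of order $<k$ correspond to cuts of $G$ of order $<k$ once we check that deleting a separator of $G \elim I$ of size $<k$ from $G$ keeps the two sides separated — this uses that $I_k$-simplicial vertices, having high-connectivity or adjacent neighborhoods, always fall entirely on one side) shows that $(A,B)$ restricted to $V(G \elim I)$ is still a vertex cut of $G \elim I$ of order $<k$; but then the two sides witness that $\bag(t)$ is not $(i,i)$-unbreakable in $G \elim I$ for $i = k$, contradicting that $\Tc$ is $k$-lean (hence $(i,i)$-unbreakable for all $i \le k$). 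The main obstacle I anticipate is precisely this last correspondence: making rigorous that a small vertex cut of $G \elim I$ lifts to a vertex cut of $G$ of the same order, which is exactly where $I_k$-simpliciality of the eliminated vertices is essential, and where one must argue — via \Cref{lem:improvedclique} and the preservation of $\flowp$ under elimination — that no vertex of $I$ can straddle the cut.
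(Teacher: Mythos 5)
Your proposal is correct and follows essentially the same route as the paper: locate a bag containing $N_G(w)$ via \Cref{lem:cliqtdds}, attach a new leaf with bag $N_G[w]$, and prove unbreakability of the old bags by using \Cref{lem:improvedclique} to show every $N[w]$ ($w \in I$) lies entirely on one side of any order-$<k$ cut of $G$, so that $(A\setminus I, B\setminus I)$ is a cut of $G \elim I$ contradicting the $(k,k)$-unbreakability guaranteed by $k$-leanness. The only cosmetic differences are that you dispose of the new bags by the trivial size bound $|N_G[w]|\le k$ where the paper again invokes \Cref{lem:improvedclique}, and your parenthetical at the key step momentarily describes the correspondence of cuts in the wrong direction (lifting cuts of $G\elim I$ to $G$) before correctly concluding the direction actually needed (restricting a cut of $G$ to $G\elim I$).
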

\begin{proof}
Denote $\Tc = (T,\bag)$.
For every $v \in I$, $N(v)$ is a clique in $G \elim I$, so $(T,\bag)$ contains a bag containing $N(v)$.
We construct a tree decomposition $(T^*, \bag^*)$ of $G$ from $(T,\bag)$ as follows.
For each $v \in I$, we select an arbitrary node $t_v \in V(T)$ with $N(v) \subseteq \bag(t_v)$.
Then, we insert a new node $t'_v$ adjacent to $t_v$ with $\bag^*(t'_v) = N[v]$.

This construction of $(T^*, \bag^*)$ can be implemented with the help of the data structure of \Cref{lem:cliqtdds} in $\OO(\|G\|+\|\Tc\|)$ time.
It remains to prove that $(T^*,\bag^*)$ is indeed a $(k,k)$-unbreakable tree decomposition of $G$ with adhesion size $<k$.

It is easy to see that $(T^*,\bag^*)$ is a tree decomposition of $G$.
The adhesion size of $(T^*,\bag^*)$ is $<k$, because all the new adhesions (compared to $(T,\bag)$) are equal to sets $N(v)$ for $v \in I$, which by assumption have size $|N(v)| < k$.

It remains to prove that $(T^*, \bag^*)$ is $(k,k)$-unbreakable.

First, assume that for some $v \in I$ the set $N[v]$ is not $(k,k)$-unbreakable, implying that there is a vertex cut $(A,B)$ of $G$ so that $|A \cap N[v]| \ge k$, $|B \cap N[v]| \ge k$, and $|A \cap B|<k$.
However, as $N[v]$ is a clique in $I_k(G)$, this yields a contradiction by \Cref{lem:improvedclique}.

Then, assume that for some $t \in V(T)$, the set $\bag(t)$ is not $(k,k)$-unbreakable in $G$, implying that there is a vertex cut $(A,B)$ of $G$ so that $|A \cap \bag(t)| \ge k$, $|B \cap \bag(t)| \ge k$, and $|A \cap B|<k$.
By \Cref{lem:improvedclique}, we have for all $v \in I$ that either $N[v] \subseteq A$ or $N[v] \subseteq B$.
Therefore, $(A \setminus I, B \setminus I)$ is a vertex cut of $G \elim I$, and because $\bag(t)$ is disjoint with $I$, it contradicts that $\bag(t)$ is $(k,k)$-unbreakable in $G \elim I$.
\end{proof}

Another case of our algorithm will be to identify a large independent set $I \subseteq V(G)$ of $I_k$-simplicial vertices of degree $\ge k$.
We show that also in this case, a $k$-lean tree decomposition of $G \elim I$ can be turned into an unbreakable tree decomposition of $G$.

\begin{lemma}
\label{lem:bodlputbacksimplhighdeg}
There is an algorithm that, given a graph $G$, integer $k$, an independent set $I \subseteq V(G)$ of $I_k$-simplicial vertices of degree $\ge k$, and a $k$-lean tree decomposition $\Tc$ of $G \elim I$, in time $\OO(\|G\|+\|\Tc\|)$ computes a $(k,k)$-unbreakable tree decomposition of $G$ with adhesion size~$<k$.
\end{lemma}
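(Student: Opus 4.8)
The plan is to adapt the proof of \Cref{lem:bodlputbacksimplsmalldeg}. Writing $\Tc = (T,\bag)$, the obstruction to reusing that proof verbatim is that now $|N(v)| = \deg(v) \ge k$ for every $v \in I$, so a fresh pendant bag with bag $N[v]$ would have an adhesion $N(v)$ of size $\ge k$. Instead I would place each $v \in I$ directly into an existing bag. Since $I$ is independent, $N(v)$ is a clique of size $\ge k$ in $G \elim I$, and since $\adhsize(\Tc) < k$ this clique lies in a (unique) bag $\bag(t_v)$; so the first step is to use the data structure of \Cref{lem:cliqtdds} to find all the nodes $t_v$ in total time $\OO(\|G\| + \|\Tc\|)$, and then output the tree decomposition $(T,\bag^*)$ where $T$ is unchanged and $\bag^*(t) = \bag(t) \cup \{v \in I : t_v = t\}$. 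Checking that $(T,\bag^*)$ is a tree decomposition of $G$ with $\adhsize(T,\bag^*) = \adhsize(\Tc) < k$ is routine: each $v \in I$ occurs in exactly one bag, so it never enters an adhesion, and the edges incident to $v$ are covered by $\bag^*(t_v)$ because $N(v) \subseteq \bag(t_v)$.

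The substance is the $(k,k)$-unbreakability of $(T,\bag^*)$. For a node $t$, let $W_t = \{v \in I : t_v = t\}$, so $\bag^*(t) = \bag(t) \cup W_t$ with $\bag(t) \cap I = \emptyset$. Suppose toward a contradiction that a vertex cut $(A,B)$ of $G$ of order $s := |A \cap B| < k$ has $|A \cap \bag^*(t)| \ge k$ and $|B \cap \bag^*(t)| \ge k$. Since every $v \in I$ is $I_k$-simplicial, $N[v]$ is a clique in $I_k(G)$, so by \Cref{lem:improvedclique} each $N[v]$ is contained in $A$ or in $B$; consequently the fill-in edges of $G \elim I$ never cross $(A \setminus I, B \setminus I)$, so $(A_0,B_0) := (A \setminus I, B \setminus I)$ is a vertex cut of $G \elim I$ of order $s_0 := |(A\cap B)\setminus I| \le s < k$. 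I would then show that $(A_0,B_0,t,t)$ is a non-$k$-lean-witness for $\Tc$, contradicting $k$-leanness of $\Tc$. The $(t,t)$-path in $T$ has no edges, so the adhesion condition of a non-$k$-lean-witness is vacuous, and it remains to check $|\bag(t) \cap A_0| > s_0$ and, symmetrically, $|\bag(t) \cap B_0| > s_0$. For the $A$-side: if some $w \in A \cap W_t$ has $N[w] \subseteq A$, then $N(w) \subseteq \bag(t) \cap A$ already gives $|\bag(t) \cap A_0| = |\bag(t) \cap A| \ge |N(w)| \ge k > s_0$; otherwise every $w \in A \cap W_t$ has $N[w] \subseteq B$, so $A \cap W_t \subseteq (A \cap B) \cap I$ has size at most $s - s_0$, whence $|\bag(t) \cap A| = |\bag^*(t) \cap A| - |A \cap W_t| \ge k - (s - s_0) > s_0$.

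The main obstacle — and the reason it is essential that $\Tc$ is $k$-lean and not merely $(k,k)$-unbreakable — is precisely this last count: a high-degree vertex $w \in I$ placed into $\bag(t)$ can land in the separator $A \cap B$ and thereby pad the smaller side of $\bag^*(t)$ up to size $k$ without being supported by $k$ genuine vertices of $\bag(t)$ on that side. Unbreakability of the bags of $\Tc$ alone does not rule this out; one needs the stronger non-$k$-lean-witness characterization of leanness (the observation following the definition of non-$k$-lean-witnesses) together with the fact that $N(w) \subseteq \bag(t)$ to force enough of $\bag(t)$ onto each side. Once this is established, the running time $\OO(\|G\| + \|\Tc\|)$ follows exactly as in \Cref{lem:bodlputbacksimplsmalldeg}, dominated by initializing \Cref{lem:cliqtdds} and performing one query per vertex of $I$.
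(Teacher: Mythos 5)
Your proposal is correct and follows essentially the same route as the paper: insert each $v \in I$ into the unique bag containing the clique $N(v)$, then uncross a hypothetical bad cut $(A,B)$ using \Cref{lem:improvedclique} and derive a contradiction with the $k$-leanness of $\Tc$ via a non-$k$-lean-witness at a single node. The only (cosmetic) difference is that the paper first normalizes $(A,B)$ by minimizing $|A\cap B|$ to conclude $I \cap A \cap B = \emptyset$, whereas you instead bound $|A\cap W_t| \le s - s_0$ directly; both counts go through.
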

\begin{proof}
Denote $\Tc = (T,\bag)$.
For every $v \in I$, $N(v)$ is a clique in $G \elim I$, so $(T,\bag)$ contains a bag containing $N(v)$.
Because $|N(v)| \ge k$ and $(T,\bag)$ has adhesion size $<k$, this bag is in fact unique.
We construct a tree decomposition $(T,\bag^*)$ from $(T,\bag)$ by inserting each $v \in I$ to this uniquely determined bag.
With the data structure of \Cref{lem:cliqtdds}, this can be implemented in $\OO(\|G\|+\|\Tc\|)$ time.

It is easy to observe that $(T,\bag^*)$ is a tree decomposition of $G$.
Also, because each $v \in I$ is inserted to only one bag, the adhesions of $(T,\bag^*)$ are the same as the adhesions of $(T,\bag)$, so the adhesion size of $(T,\bag^*)$ is $<k$.

Suppose that $(T,\bag^*)$ is not $(k,k)$-unbreakable, and let $t \in V(T)$ be a node with $W = \bag^*(t)$ so that there is a vertex cut $(A,B)$ of $G$ of order $|A \cap B| < k$ so that $|A \cap W| > |A \cap B|$ and $|B \cap W| > |A \cap B|$.
Moreover, select $(A,B)$ so that it minimizes $|A \cap B|$.

\begin{claim}
$I$ is disjoint with $A \cap B$.
\end{claim}
\begin{claimproof}
Suppose there is $v \in I \cap A \cap B$.
By \Cref{lem:improvedclique} we have that either $N[v] \subseteq A$ or $N[v] \subseteq B$.
In the former case, $(A, B \setminus \{v\})$ would be a vertex cut that contradicts the choice of $(A,B)$, and in the latter case, $(A \setminus \{v\}, B)$ would contradict the choice of $(A,B)$.
\end{claimproof}

We have that $(A \setminus I, B \setminus I)$ is a vertex cut of $G \elim I$.
We claim that $(A \setminus I, B \setminus I)$ contradicts the fact that $\bag(t) = W \setminus I$ is $(i,i)$-unbreakable for $i = |A \cap B|+1$.
To prove that $|(A \setminus I) \cap W| > |A \cap B|$, suppose first that $I$ does not intersect $(A \setminus B) \cap W$.
In this case, $|(A \setminus I) \cap W| = |A \cap W| > |A \cap B|$.
Then suppose that $I$ intersects $(A \setminus B) \cap W$.
In this case, there is $v \in I \setminus B$ so that $v \in \bag^*(t)$, implying that $N(v) \subseteq W$ and $N(v) \subseteq A \setminus I$, implying that $|(A \setminus I) \cap W| \ge |N(v)| \ge k > |A \cap B|$.
The same argument, after exchanging the roles of $A$ and $B$, shows that $|(B \setminus I) \cap W| > |A \cap B|$.
\end{proof}

Another case of our algorithm will be to identify a large matching $M \subseteq E(G)$.
We show that in this case, a $k$-lean tree decomposition of $G \contr M$ can be turned into an unbreakable tree decomposition of $G$.

\begin{lemma}
\label{lem:matchinguncontract}
There is an algorithm that, given a graph $G$, an integer $k$, a matching $M \subseteq E(G)$ in $G$, and a $k$-lean tree decomposition $\Tc$ of $G \contr M$, in time $\OO(\|G\|+\|\Tc\|)$ computes a $(2k,k)$-unbreakable tree decomposition of $G$ with adhesion size $\le 2k-2$.
\end{lemma}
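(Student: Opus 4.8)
The plan is to \emph{uncontract} the decomposition $\Tc=(T,\bag)$ of $G\contr M$ along $M$. Recall that the vertices of $G\contr M$ are $V(G)\setminus V(M)$ together with one fresh vertex $w_e$ per $e\in M$, and that $w_e$ records its two preimages. I would build $(T,\bag^*)$ from $(T,\bag)$ by replacing, in every bag, each occurrence of each $w_e$ by the two endpoints of $e$, and leaving the other vertices untouched; this is a direct $\OO(\|G\|+\|\Tc\|)$-time rewriting of the bags (and $T$ is unchanged). First one checks that $(T,\bag^*)$ is a tree decomposition of $G$: the set of bags containing a vertex $x\in V(e)$ equals the set of bags of $(T,\bag)$ containing $w_e$, which is a nonempty subtree, so the vertex condition holds; and any edge $xy\in E(G)$ maps to an edge or a loop of $G\contr M$ between the images of $x$ and $y$, so some bag of $\Tc$ contains both images and the corresponding bag of $(T,\bag^*)$ contains $\{x,y\}$. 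Moreover each adhesion of $(T,\bag^*)$ arises from an adhesion of $(T,\bag)$ by expanding each $w_e$ into its two preimages, so $\adhsize(\Tc)<k$ gives $\adhsize(T,\bag^*)\le 2k-2$.

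The main work is to show that $\bag^*(t)$ is $(2k,k)$-unbreakable in $G$ for every $t\in V(T)$. Suppose not, and fix a vertex cut $(A,B)$ of $G$ with $|A\cap B|<k$, $|A\cap\bag^*(t)|\ge 2k$ and $|B\cap\bag^*(t)|\ge 2k$. I would push $(A,B)$ down to $G\contr M$ by putting $w_e\in A'$ iff $V(e)\cap A\neq\emptyset$, $w_e\in B'$ iff $V(e)\cap B\neq\emptyset$, and placing each $z\in V(G)\setminus V(M)$ on exactly the side(s) of $(A',B')$ that it lies on in $(A,B)$. The key point is that since $(A,B)$ is a vertex cut, the two endpoints of a matching edge cannot lie one in $A\setminus B$ and the other in $B\setminus A$; using this one verifies that $(A',B')$ is a vertex cut of $G\contr M$ (an edge of $G\contr M$ from $A'\setminus B'$ to $B'\setminus A'$ would lift to an edge of $G$ from $A\setminus B$ to $B\setminus A$), and that $A'\cap B'=\{w_e: V(e)\cap(A\cap B)\neq\emptyset\}\cup((A\cap B)\setminus V(M))$, hence $|A'\cap B'|\le|A\cap B|<k$.

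Finally I would transfer the cardinality bounds. The map sending $z\in A\cap\bag^*(t)$ with $z\notin V(M)$ to $z$, and $x\in A\cap\bag^*(t)$ with $x\in V(e)$, $e\in M$, to $w_e$, lands in $A'\cap\bag(t)$ and is at most $2$-to-$1$, so $|A'\cap\bag(t)|\ge|A\cap\bag^*(t)|/2\ge k$, and symmetrically $|B'\cap\bag(t)|\ge k$. Setting $j=|A'\cap B'|<k$ and $i=j+1\le k$, we have $|A'\cap\bag(t)|\ge k\ge i$, $|B'\cap\bag(t)|\ge i$, and the order of $(A',B')$ is $j=i-1<i$, so $\bag(t)$ is not $(i,i)$-unbreakable in $G\contr M$ --- contradicting that $\Tc$, being $k$-lean, is $(i,i)$-unbreakable for all $i\le k$. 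I expect the only real subtlety to be the double-counting bookkeeping: both the adhesion bound ($<k$ degrading to $\le 2k-2$) and the unbreakability bound ($2k$ in $G$ buying $\ge k\ge i$ in $G\contr M$) close exactly because each $w_e$ represents two original vertices, and one must use that $(A,B)$ crosses no matching edge precisely to see that $(A',B')$ is a genuine vertex cut rather than a mere partition.
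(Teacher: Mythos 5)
Your proof is correct and follows essentially the same route as the paper: uncontract the bags, note the adhesions at most double, and refute a violating cut $(A,B)$ of $G$ by contracting it to a cut of $G\contr M$ whose separator injects into $A\cap B$ and whose sides retain at least half the bag vertices, contradicting the $(i,i)$-unbreakability of a $k$-lean decomposition. The only cosmetic difference is that the paper invokes $(k,k)$-unbreakability directly rather than $(i,i)$ with $i=|A'\cap B'|+1$.
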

\begin{proof}
For an edge $uv \in M$, let us denote by $w_{uv} \in V(G \contr M)$ the vertex of $G \contr M$ corresponding to $uv$.
We construct a tree decomposition $(T, \bag^*)$ of $G$ by constructing $\bag^*$ from $\bag$ by replacing each occurrence of a vertex $w_{uv}$ corresponding to a contracted edge by the two vertices $u,v \in V(G)$.
It is easy to observe that $(T, \bag^*)$ is a tree decomposition of $G$, and each adhesion (and bag) of $(T, \bag^*)$ has size at most twice the corresponding adhesion (or bag) of $(T,\bag)$.
Also, this construction can be implemented in $\OO(\|\Tc\|)$ time.

It remains to prove that $(T, \bag^*)$ is $(2k,k)$-unbreakable.
Suppose not, and let $t \in V(T)$ and $W = \bag^*(t)$ so that there exists a vertex cut $(A,B)$ of $G$ of order $|A \cap B|<k$ so that $|A \cap W| \ge 2k$ and $|B \cap W| \ge 2k$.
Now, for a vertex set $X \subseteq V(G)$, we define $X \contr M \subseteq V(G \contr M)$ as
\[X \contr M = (X \setminus V(M)) \cup \{w_{uv} : uv \in M \text{ and } \{u,v\} \cap X \neq \emptyset\}.\]
We observe that 
\[|X|/2 \le |X \contr M| \le |X|\]
and that $(A \contr M, B \contr M)$ is a vertex cut of $G \contr M$.
Moreover, because $(A,B)$ is vertex cut of $G$, it holds that $(A \contr M)\cap(B \contr M) = (A \cap B) \contr M$, so we obtain that the order of the vertex cut $(A \contr M, B \contr M)$ of $G \contr M$ is 
\[|(A \contr M)\cap(B \contr M)| = |(A \cap B) \contr M| \le |A \cap B| < k.\]

Then, we observe that $W \contr M = \bag(t)$ and 
\[|(W \contr M) \cap (A \contr M)| \ge |(W \cap A) \contr M| \ge |W \cap A|/2 \ge k,\]
and with a similar argument that $|(W \contr M) \cap (B \contr M)| \ge k$.
Therefore, the vertex cut $(A \contr M,B \contr M)$ witnesses that $W \contr M = \bag(t)$ is not $(k,k)$-unbreakable in $G \contr M$, which is a contradiction.
\end{proof}

Then, we show that we can find either a large matching or a large set of $I_k$-simplicial vertices.

\begin{lemma}
\label{lem:matchorsimpl}
There is an algorithm that, given an integer $k$ and a graph $G$ with $|E(G)| \le k \cdot |V(G)|$, in time $\OO(k \cdot \|G\|)$ returns either
\begin{itemize}
\item a matching $M$ in $G$ of size $|M| \ge |V(G)|/\OO(k^3)$, or
\item an independent set $I$ of $|I| \ge |V(G)|/\OO(1)$ vertices that are $I_k$-simplicial in $G$ and have degree at most $4k$.
\end{itemize}
\end{lemma}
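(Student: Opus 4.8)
The plan is to first compute, in $\OO(\|G\|)$ time, an inclusion-wise maximal matching $M$ of $G$, and to fix a constant $c$ that will be chosen large enough. If $|M| \ge |V(G)|/(ck^3)$, we simply return $M$. So assume $|M| < |V(G)|/(ck^3)$, write $n = |V(G)|$, and let $C = V(M)$, which is a vertex cover of $G$ with $|C| < 2n/(ck^3)$. Then $I_0 := V(G) \setminus C$ is an independent set with $|I_0| > n(1 - 2/(ck^3))$. Since $|E(G)| \le kn$, the number of vertices of degree $> 4k$ is at most $2kn/(4k) = n/2$, so the set $I_2 := \{v \in I_0 : \deg_G(v) \le 4k\}$ has $|I_2| \ge |I_0| - n/2 \ge n/3$ once $c$ is large. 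Every vertex of $I_2$ already has degree $\le 4k$ and lies outside $C$, and $I_2$ is independent, so it only remains to locate $\Omega(n)$ vertices of $I_2$ that are $I_k$-simplicial in $G$.

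For the $I_k$-simplicial test I would avoid computing $I_k(G)$ and instead certify $I_k$-simpliciality through cheap, locally computable information combined with the Nagamochi--Ibaraki sparsifier. First, compute a $k$-sparsifier $G'$ of $G$ via \Cref{the:nispars} in $\OO(\|G\|)$ time; for non-adjacent $u,v$ one has $\flowp_G(u,v) \ge k$ iff $\flowp_{G'}(u,v) \ge k$ (one direction since $G' \subseteq G$ preserves every small vertex cut of $G$, the other since a small vertex cut of $G'$ is one of $G$), and by \Cref{lem:delhighconn} every edge of $G$ deleted by the sparsifier already joins a $k$-connected pair. Using this, build a spanning subgraph $J$ of $I_k(G)$ on the vertex set $C$ by declaring $uv \in E(J)$ whenever $uv \in E(G)$ or $u,v$ have at least $k$ common neighbours inside $I_2$ (the latter condition forces $\flowp_G(u,v) \ge k$ via internally disjoint paths through the common neighbours, so indeed $J \subseteq I_k(G)$ restricted to $C$). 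Because each $w \in I_2$ has $|N_G(w)| \le 4k$, the total number of pairs occurring inside some $N_G(w)$ is $\OO(k^2 n)$, so the relevant part of $J$, and hence the set of $w \in I_2$ whose neighbourhood is a clique in $J$, can be computed in $\OO(k^2 n) = \OO(k \cdot \|G\|)$ time; every such $w$ (including every $w$ of degree $\le 1$, whose neighbourhood is trivially a clique) is $I_k$-simplicial in $G$ by \Cref{lem:improvedclique}-type reasoning, and we return such a set.

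The main obstacle is the combinatorial claim that at least $\Omega(n)$ vertices of $I_2$ pass this test, i.e., that fewer than, say, $|I_2|/2$ vertices $w \in I_2$ contain a \emph{light} pair $\{u,v\} \subseteq N_G(w)$ with $uv \notin E(G)$ and fewer than $k$ common neighbours in $I_2$. The naive charging is not enough: a light pair lies in $N_G(w)$ for fewer than $k$ vertices $w \in I_2$, but there can be $\Theta(|C|^2)$ light pairs, giving only a bound of order $kn^2/(c^2k^6)$, which fails for large $n$; and charging through neighbourhood sizes loses a factor $k^2$. This is precisely where the sparsifier must be used in a more essential way. The plan here is to have the sparsification step organize the potential witnesses of failure into a \emph{sparse} auxiliary structure on $C$ of total size $\OO(kn)$ rather than $\binom{|C|}{2}$ --- roughly, to have the sparsifier select for each $w \in I_2$ a bounded number of ``witness'' edges incident to $w$ so that any failure of $w$ to be $I_k$-simplicial is charged to an edge of a fixed arboricity-$\OO(k)$ graph instead of to an arbitrary pair inside $N_G(w)$ --- which would bring the count of bad vertices down to $\OO(|C| \cdot k^{\OO(1)}) = \OO(n/k)$, leaving $\Omega(n)$ good ones. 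Making this charging precise, and verifying that the whole procedure (sparsification, the auxiliary structure, and the cliquehood checks) runs in $\OO(k \cdot \|G\|)$ time, is the bulk of the work; the remaining bookkeeping --- that the returned set is independent (it lies in $I_0$) and consists of $I_k$-simplicial vertices of degree $\le 4k$ --- is routine.
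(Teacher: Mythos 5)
Your setup (maximal matching, vertex cover $C=V(M)$ of size $O(n/k^3)$, restriction to the independent set of degree-$\le 4k$ vertices outside $C$) matches the paper's, and you correctly diagnose the obstruction: pairs $u,v\in N(w)$ with $uv\notin E(G)$ and between $1$ and $k-1$ common low-degree neighbours cannot be handled by your graph $J$, and naive charging of bad vertices to such pairs loses too much. But the argument that resolves this obstruction is exactly the content of the lemma, and you leave it as ``the bulk of the work'' without supplying it; the proposal is therefore incomplete at its central step. Also note that the $k$-sparsifier of $G$ itself that you compute in your second paragraph plays no role in what follows --- the essential sparsification is of a different graph.

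The missing construction is this. Build the auxiliary graph $G^{\star}$ on $V(M)$ with an edge $uv$ whenever $uv\in E(G)$ \emph{or} $u,v$ have at least \emph{one} common neighbour in $X'$ (not at least $k$), and record for each non-$G$-edge a witness set $X'(uv)$ of up to $k$ common neighbours; this costs $\OO(k^2 n)$ since low-degree vertices contribute $\OO(k^2)$ pairs each. Now take a $4k^2$-sparsifier $G^{\star\star}$ of $G^{\star}$ via \Cref{the:nispars}. The key claim is that every edge $uv\in E(G^{\star})\setminus(E(G^{\star\star})\cup E(G))$ already satisfies $\flowp_G(u,v)\ge k$: a proper $(u,v)$-separator $S$ of size $<k$ in $G$ inflates to a proper $(u,v)$-separator $S^{\star}=(S\cap V(M))\cup\bigcup_{w\in S\cap X'}(N(w)\setminus\{u,v\})$ of size $<4k^2$ in $G^{\star\star}$, contradicting \Cref{lem:delhighconn} applied to $G^{\star\star}$ as a $4k^2$-sparsifier of $G^{\star}$. (This is why the sparsification parameter must be $\Theta(k^2)$, not $k$.) The counting then works: only the $\le 4k^2\,|V(M)|$ edges \emph{surviving} in $G^{\star\star}$ can disqualify a vertex with fewer than $k$ witnesses, each such edge blames only the $\le k$ vertices in $X'(uv)$, so at most $4k^3|V(M)|\le n/4$ vertices of $X'$ are disqualified; every remaining $w$ is $I_k$-simplicial because each pair in $N(w)$ is either a $G$-edge, or has $|X'(uv)|\ge k$ (giving $k$ internally disjoint paths), or was deleted by the sparsifier (giving $\flowp_G(u,v)\ge k$ by the claim). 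Without this claim and this count, your ``arboricity-$\OO(k)$ auxiliary structure'' remains a hope rather than a proof.
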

\begin{proof}
First, we in time $\OO(\|G\|)$ compute an inclusion-wise maximal matching $M$ in $G$.
If $|M| \ge |V(G)|/(32 k^3)$, we conclude with the first case.
Otherwise, $V(M)$ forms a vertex cover of $G$ of size $|V(M)| < |V(G)|/(16 k^3)$ and our goal will be to conclude with the second case.

Let us denote $X = V(G) \setminus V(M)$.
Note that $X$ is an independent set in $G$.
Let $X' \subseteq X$ be the vertices in $X$ with degree at most $4k$.
We have that $|X| \ge |V(G)| - |V(G)|/(16 k^3)$, and because $|E(G)| \le k \cdot |V(G)|$, we have $|X \setminus X'| \le |V(G)|/4$, implying that $|X'| \ge |V(G)|/2$.

We define the graph $G^{\star}$ to be the graph with the vertex set $V(G^{\star}) = V(M)$, and edge set obtained by inserting an edge between $u,v \in V(G^{\star})$ if either $uv \in E(G)$, or if there exists a vertex $w \in X'$ with $\{u,v\} \subseteq N(w)$.
For $uv \in E(G^{\star}) \setminus E(G)$, we say that a vertex $w \in X'$ is \emph{$uv$-useful} if $\{u,v\} \subseteq N(w)$.
We assign for each $uv \in E(G^{\star}) \setminus E(G)$ a set $X'(uv)$ of $uv$-useful vertices, so that if the number of $uv$-useful vertices is at most $k$, then $X'(uv)$ contains all $uv$-useful vertices, and if the number of $uv$-useful vertices is more than $k$, then $X'(uv)$ contains an arbitrarily selected set of $k$ $uv$-useful vertices.
In particular, $|X'(uv)| \le k$.
Note that $\flowp_{G}(u,v) \ge |X'(uv)|$.

We construct the graph $G^{\star}$ and the sets $X'(uv)$ explicitly.
This can be done in $\OO(k \cdot \|G\|)$ time, because vertices in $X'$ have degree at most $4k$.
In particular, this is done by first constructing the list of all triples $(u,v,w)$ so that $w \in X'$ and $\{u,v\} \subseteq N(w)$, then sorting this list by $(u,v)$ with radix sort in linear time, and then inspecting the list in the sorted order.

We then apply the algorithm of \Cref{the:nispars} to compute a $4k^2$-sparsifier $G^{\star\star}$ of $G^{\star}$ with $|E(G^{\star\star})| \le 4k^2 \cdot |V(M)|$.

\begin{claim}
\label{lem:matchorsimpl:claim:highflow}
If $uv \in E(G^{\star}) \setminus (E(G^{\star\star}) \cup E(G))$, then $\flowp_{G}(u,v) \ge k$.
\end{claim}
\begin{claimproof}
Suppose that $uv \in E(G^{\star}) \setminus (E(G^{\star\star}) \cup E(G))$, but $\flowp_{G}(u,v) < k$.
Let $S \subseteq V(G) \setminus \{u,v\}$ be an proper $(u, v)$-separator of size $<k$ in $G$.
We construct a set $S^{\star}$ as
\[S^{\star} = (S \cap V(M)) \cup \{N(w) \setminus \{u,v\} : w \in S \cap X'\}.\]
We observe that $S^{\star}$ is a proper $(u,v)$-separator in the graph obtained from $G^{\star}$ after the removal of the edge $uv$, and therefore because $G^{\star\star}$ is a subgraph of $G^{\star}$ not containing $uv$, $S^{\star}$ is a proper $(u,v)$-separator in $G^{\star\star}$.

Furthermore, $|S^{\star}| < 4k^2$, because $|N(w)| \le 4k$ for all $w \in X'$.
This implies that $\flowp_{G^{\star\star}}(u,v) < 4k^2$.
However, by \Cref{lem:delhighconn} this contradicts that $G^{\star\star}$ is a $4k^2$-sparsifier of $G^{\star}$.
\end{claimproof}

We say that a vertex $w \in X'$ is \emph{meaningful} if $w \in X'(uv)$ for some $uv \in E(G^{\star\star}) \setminus E(G)$, and otherwise \emph{meaningless}.
Because $|E(G^{\star\star})| \le |V(M)| \cdot 4k^2$ and $|X'(uv)| \le k$, there are at most $|V(M)| \cdot 4k^3$ meaningful vertices.
As $|V(M)| \le |V(G)|/(16 k^3)$, the number of meaningful vertices is at most $|V(G)|/4$.
As $|X'| \ge |V(G)|/2$, this means that there are at least $|V(G)|/4$ meaningless vertices.

We return the set of meaningless vertices.
By the construction of $X'$ they are an independent set and have degree at most $4k$.
It remains to prove the following claim.
\begin{claim}
\label{lem:matchorsimpl:claim:meaningless}
Every meaningless vertex is $I_k$-simplicial in $G$.
\end{claim}
\begin{claimproof}
Let $w \in X'$ be a meaningless vertex.
Now, for every pair of distinct vertices $u,v \in N(w)$, one of the following holds.

\begin{enumerate}
\item $uv \in E(G)$,\label{lem:matchorsimpl:claim:meaningless:case1}
\item $uv \in E(G^{\star\star}) \setminus E(G)$ and $|X'(uv)| \ge k$, or \label{lem:matchorsimpl:claim:meaningless:case2}
\item $uv \in E(G^{\star}) \setminus (E(G^{\star\star}) \cup E(G))$.\label{lem:matchorsimpl:claim:meaningless:case3}
\end{enumerate}

Observe that $|X'(uv)| \ge k$ implies that $\flowp_{G}(u,v) \ge k$.
By \Cref{lem:matchorsimpl:claim:highflow}, the case of \Cref{lem:matchorsimpl:claim:meaningless:case3} also implies $\flowp_{G}(u,v) \ge k$.

Therefore, either $uv \in E(G)$, or $\flowp_{G}(u,v) \ge k$, implying that $w$ is $I_k$-simplicial.
\end{claimproof}
\Cref{lem:matchorsimpl:claim:meaningless} finishes the proof.
\end{proof}

\subsection{Putting things together}
It remains to put the tools provided in the previous subsection together to show \Cref{the:highlevel:bodl}.

As one more auxiliary lemma, we need the following algorithm about making $k$-lean tree decompositions reasonably small.

\begin{lemma}
\label{lem:leantdreducetotalsize}
There is an algorithm that, given a graph $G$ and a $k$-lean tree decomposition $\Tc$ of $G$, in time $\OO(\|G\| + \|\Tc\|)$ returns a $k$-lean tree decomposition $\Tc'$ of $G$ with $\|\Tc'\| = \OO(k \cdot |V(G)|)$.
\end{lemma}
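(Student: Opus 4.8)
The plan is to root $\Tc = (T,\bag)$ at an arbitrary node, discard all ``redundant'' nodes of $T$, and show that (i) this preserves $k$-leanness, and (ii) the surviving decomposition has at most $|V(G)|+1$ nodes, which together with the adhesion size bound forces total size $\OO(k \cdot |V(G)|)$.

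The structural heart of the argument is the claim that contracting an edge $st \in E(T)$ with $\bag(s) \subseteq \bag(t)$ — that is, deleting $s$, keeping $\bag(t)$, and making the remaining neighbours of $s$ into neighbours of $t$ — yields again a $k$-lean tree decomposition. First, adhesions do not grow: for a neighbour $u \ne t$ of $s$ the $T$-path from $u$ to $t$ passes through $s$, so by the vertex condition $\bag(u) \cap \bag(t) \subseteq \bag(s)$, whence the new adhesion $\bag(u)\cap\bag(t)$ equals $\bag(u)\cap\bag(s) = \adh(us)$; in particular $\adhsize$ does not increase, so the result $\Tc'$ still has $\adhsize(\Tc') < k$. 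For the separation-capturing property, take $t_1,t_2 \in V(T')$, $X_1 \subseteq \bag(t_1)$, $X_2 \subseteq \bag(t_2)$ with an $(X_1,X_2)$-separator of size $k' < \min(|X_1|,|X_2|,k)$, and apply $k$-leanness of $\Tc$ to obtain an edge $e$ on the $T$-path from $t_1$ to $t_2$ with $|\adh_\Tc(e)| \le k'$. If $e$ is not incident to $s$ it survives in $T'$ with the same adhesion and still lies on the $(t_1,t_2)$-path; if $e = us$ with $u \ne t$ then the edge $ut$ of $T'$ has $\adh_{\Tc'}(ut) = \adh_\Tc(us)$ and lies on the $(t_1,t_2)$-path in $T'$. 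The only remaining case is $e = st$: then $\adh_\Tc(st) = \bag(s)$ has size $\le k'$; one of $t_1,t_2$, say $t_1$, lies strictly in the $s$-side of $T - st$ (it cannot equal $s$, as $s \notin V(T')$), so the edge $e' = us$ incident to $s$ on the $t_1$-to-$s$ path has $\adh_\Tc(e') \subseteq \bag(s)$, hence size $\le k'$, and its image $ut$ in $T'$ is a $(t_1,t_2)$-adhesion of size $\le k'$. Thus $\Tc'$ is $k$-lean.

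Now, for each $v \in V(G)$ let $t_v$ be the node of $T$ closest to $\troot(T)$ with $v \in \bag(t_v)$; this is well defined because $\{t : v \in \bag(t)\}$ induces a subtree of $T$ whose topmost node is an ancestor of all its other nodes. Set $S = \{\troot(T)\} \cup \{t_v : v \in V(G)\}$, so $|S| \le |V(G)| + 1$. Any node $t \notin S$ is not the root and satisfies, for each $v \in \bag(t)$, that $t_v$ is a proper ancestor of $t$, so $\parent(t)$ lies on the $t$-to-$t_v$ path and hence $v \in \bag(\parent(t))$; therefore $\bag(t) \subseteq \bag(\parent(t))$. Consequently, deleting all nodes outside $S$ — in any order — is a sequence of contractions of the above type, each preserving $k$-leanness, and the outcome $\Tc' = (T',\bag')$ has node set $S$ with $\bag'$ the restriction of $\bag$, and is $k$-lean. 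For the size bound, $\sum_{s \in S}|\bag(s)| = \sum_{v \in V(G)} |\{s \in S : v \in \bag(s)\}| = |V(G)| + \sum_{e \in E(T')} |\adh_{\Tc'}(e)| \le |V(G)| + (|S|-1)(k-1)$, using that each $v$ occupies a subtree of $T'$ and that $\adhsize(\Tc') < k$; hence $\|\Tc'\| = |S| + \sum_{s \in S}|\bag(s)| = \OO(k \cdot |V(G)|)$.

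For the running time: one pre-order DFS of $T$ computes every $t_v$ (record $v$ the first time a bag containing it is visited), and hence $S$, in $\OO(\|\Tc\|)$ time; a second DFS that maintains a stack of the already-entered nodes of $S$ builds $T'$ by adding, upon entering each $s \in S$, the edge from $s$ to the current top of that stack, in $\OO(\|T\|)$ time; copying the bags of the nodes in $S$ is $\OO(\|\Tc\|)$. In total $\OO(\|\Tc\|) = \OO(\|G\| + \|\Tc\|)$. I expect the delicate points to be the final case $e = st$ of the leanness argument — where one must produce a replacement small adhesion inside $T'$ rather than at the deleted edge — and, algorithmically, realising the whole deletion as a single linear-time tree compression instead of repeatedly re-pointing neighbours.
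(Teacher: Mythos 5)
Your proposal is correct and follows essentially the same route as the paper: identify the nodes whose bags are contained in their parent's bag (your complement of $S$ is exactly the paper's set of ``unnecessary'' nodes), contract them into their parents, verify that each such contraction preserves $k$-leanness by exhibiting a surviving small adhesion on every relevant path (including the delicate case where the witnessing edge is the contracted edge itself), and bound $\|\Tc'\|$ by the same forget-node/subtree counting. The only differences are cosmetic: the paper realizes the compression by contracting maximal subtrees of unnecessary nodes, while you build the compressed tree directly with two DFS passes.
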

\begin{proof}
Denote $\Tc = (T,\bag)$.
Let us choose an arbitrary node $r \in V(T)$ with $\bag(r)$ non-empty as the root of $T$.
Then, say that a non-root node $t \in V(T)$ is \emph{unnecessary} if $\bag(t) \subseteq \bag(p)$, where $p$ is the parent of $t$.
Note that in particular, in this case it holds that $\bag(t) = \adh(tp)$, implying that $|\bag(t)| < k$.

By contracting an unnecessary node $t$ into its parent $p$, we mean removing $t$, and adding all children of $t$ as children of $p$.
Our algorithm will contract unnecessary nodes into their parents as long as there exists any of them.
Let us first prove the correctness of this in the following two claims, and then discuss how to obtain the running time.

We first prove that the result tree decomposition is $k$-lean.

\begin{claim}
If $\Tc = (T,\bag)$ is $k$-lean, and $\Tc' = (T',\bag')$ results from contracting an unnecessary node $t$ into its parent $p$, then $\Tc'$ is also $k$-lean.
\end{claim}
\begin{claimproof}
First, this does not increase adhesion size because every new adhesion is a subset of $\adh(tp)$.
Then, we need to argue that for any two distinct nodes $t_1,t_2 \in V(T')$, the minimum adhesion size at edges on the $(t_1, t_2)$-path in $\Tc'$ is no larger than on the $(t_1, t_2)$-path in $\Tc$.
The only adhesions that can be on the $(t_1, t_2)$-path in $\Tc$ but not in $\Tc'$ are of form $\adh_{\Tc}(ct)$ for a child $c$ of $t$, and $\adh_{\Tc}(tp)$.
In the former case, the adhesion $\adh_{\Tc'}(cp) = \adh_{\Tc}(ct)$ is on the $(t_1, t_2)$-path in $\Tc'$.
In the latter case, the $(t_1, t_2)$-path in $\Tc'$ includes $p$ and a child $c$ of $t$, and therefore the adhesion $\adh_{\Tc'}(cp) \subseteq \adh_{\Tc}(tp)$ is on the $(t_1, t_2)$-path in $\Tc'$.
\end{claimproof}

We then prove that $\|\Tc\| = \OO(k \cdot |V(G)|)$.

\begin{claim}
If $\Tc = (T,\bag)$ has no unnecessary nodes and has adhesion size $<k$, then $\|\Tc\| = \OO(k \cdot |V(G)|)$.
\end{claim}
\begin{claimproof}
For a vertex $v \in V(G)$, let us say that the \emph{forget-node} of $v$ is the closest node $t \in V(T)$ to the root with $v \in \bag(t)$.
Because the root bag is non-empty and there are no unnecessary nodes, we have that every node is the forget-node of at least one vertex, implying that $|V(T)| \le |V(G)|$.
Then, we note that if $t \in V(T)$ has parent $p$, then $t$ is the forget-node for $|\bag(t)|-|\adh(tp)|$ vertices.
Therefore as the adhesion size is $<k$, we have that $\sum_{t \in V(T)} |\bag(t)| \le |V(G)|+(k-1) \cdot |V(T)| \le k \cdot |V(G)|$.
\end{claimproof}

A naive implementation of the process of iteratively contracting unnecessary nodes into their parents is too slow.
However, we can find a partition of $T$ into disjoint connected subtrees $T_1,\ldots,T_h$, so that for each $T_i$, the node of $T_i$ closest to the root is not unnecessary, but all other nodes in $T_i$ are unnecessary.
Now, the process of iteratively contracting unnecessary nodes gives the same result as contracting each $T_i$ into its root node.
This can be implemented in $\OO(\|\Tc\|)$ time with the help of a data structure that stores for each vertex $v \in V(G)$ the node of $\Tc$ closest to the root that contains $v$.
\end{proof}

Finally, we are ready to prove \Cref{the:highlevel:bodl}.

\thehighlevelbodl*
\begin{proof}
Let $\mathcal{A}$ be the hypothetical given algorithm.
We will describe a recursive algorithm that, with the help of $\mathcal{A}$, given an $n$-vertex graph $G$ with at most $kn$ edges, in time $\runtime(k) \cdot k^{\OO(1)} \cdot n$ returns a $k$-lean tree decomposition $\Tc$ of $G$ with $\|\Tc\| = \OO(kn)$.
By applying this with the sparsification procedure of \Cref{the:nispars} (which can be applied by \Cref{lem:sparsmaintain}), this implies the desired conclusion.

We first apply \Cref{lem:matchorsimpl} to in time $\OO(kn)$ compute either (1) a matching $M$ of size $|M| \ge n/\OO(k^3)$, or (2) an independent set $I$ with $|I| \ge n/\OO(1)$ vertices that are $I_k$-simplicial and have degree at most $4k$.

\paragraph{Case 1: A matching.}
If we get a matching, then let $G' = G \contr M$ be graph obtained from $G$ by contracting the matching.
We compute $G'$ in $\OO(\|G\|) = \OO(kn)$ time with the help of radix sort.
We also compute a $k$-sparsifier of $G'$ with at most $k \cdot |V(G')|$ edges with the algorithm of \Cref{the:nispars}, in time $\OO(\|G'\|) = \OO(kn)$.
Let $G''$ denote this $k$-sparsifier.

We have that $|V(G'')| \le n \cdot (1-1/\OO(k^3))$.
We call our algorithm recursively to compute a $k$-lean tree decomposition $\Tc'$ of $G''$.
By \Cref{lem:sparsmaintain}, $\Tc'$ is also a $k$-lean tree decomposition of $G'$.
We then apply \Cref{lem:matchinguncontract} with $G'$, $M$, and $\Tc'$ to compute in time $\OO(\|G'\| + \|\Tc'\|) = \OO(kn)$ a $(2k,k)$-unbreakable tree decomposition $\Tc$ of $G$ with adhesion size $\le 2k$.

Then, we apply $\mathcal{A}$ with $G$ and $\Tc$ to compute in time $\runtime(k) \cdot (\|G\|+\|\Tc\|) = \runtime(k) \cdot \OO(kn)$ a $k$-lean tree decomposition $\Tc^*$ of $G$.
Finally, we apply \Cref{lem:leantdreducetotalsize} to ensure that $\|\Tc^*\| \le \OO(kn)$ in time $\OO(\|G\| + \|\Tc^*\|) = \runtime(k) \cdot \OO(k n)$, and then return $\Tc^*$.

\paragraph{Case 2: A set of simplicial vertices.}
Suppose then that we have an independent set $I \subseteq V(G)$ of $|I| \ge n/\OO(1)$ vertices that are $I_k$-simplicial and have degree at most $4k$.
First, we compute in time $\OO(\|G\|)$ whether the majority of the vertices in $I$ have degree $<k$.
If the majority have degree $<k$, let $I'$ be those, and if the majority have degree between $k$ and $4k$, let $I'$ be those.

We compute $G' = G \elim I'$ in time $\OO(k^2 n)$ with the help of radix sort, and compute a $k$-sparsifier $G''$ of $G'$ with at most $k \cdot |V(G')|$ edges in time $\OO(\|G'\|) = \OO(k^2 n)$ with \Cref{the:nispars}.

We have that $|V(G'')| \le n \cdot (1-1/\OO(1))$.
We call our algorithm recursively to compute a $k$-lean tree decomposition $\Tc'$ of $G''$.
By \Cref{lem:sparsmaintain}, $\Tc'$ is also a $k$-lean tree decomposition of $G'$.

Depending on whether $I'$ consists of vertices of degree $<k$ or between $k$ and $4k$, we use either \Cref{lem:bodlputbacksimplsmalldeg} or \Cref{lem:bodlputbacksimplhighdeg} to compute a $(k,k)$-unbreakable tree decomposition $\Tc$ of $G$ with adhesion size $<k$ in time $\OO(\|G\| + \|\Tc'\|) = \OO(kn)$.
Then, we apply $\mathcal{A}$ with $G$ and $\Tc$ to compute in time $\runtime(k) \cdot (\|G\|+\|\Tc\|) = \runtime(k) \cdot \OO(kn)$ a $k$-lean tree decomposition $\Tc^*$ of $G$.
Then, we apply \Cref{lem:leantdreducetotalsize} to ensure that $\|\Tc^*\| \le \OO(kn)$ in time $\OO(\|G\| + \|\Tc^*\|) = \runtime(k) \cdot \OO(kn)$, and finally return $\Tc^*$.

\paragraph{Overall analysis.}
In both of the cases, the overall running time of the algorithm can be bounded by the recurrence
\[\oruntime(n,k) \le \runtime(k) \cdot \OO(k n) + \OO(k^2 n) + \oruntime(n \cdot (1-1/\OO(k^3)), k),\]
which can be bounded by $\oruntime(n,k) \le \runtime(k) \cdot k^{\OO(1)} \cdot n$.
\end{proof}

\section{Graph-theoretic lemmas}
\label{sec:gt}
In this section we prove various graph-theoretic lemmas about hypergraphs, which will then be used in later sections.
Most of these lemmas use only the symmetry and submodularity of the $\bdc$ function, and could directly be generalized to the setting of arbitrary symmetric submodular functions.

\subsection{Transitivity of well-linkedness}
We start by proving lemmas about the concepts of well-linkedness, tri-well-linkedness, $k$-well-linkedness, and $k$-better-linkedness, which were defined in \Cref{subsec:defs:graphtheory}.

Let $\prop$ be a property of hyperedge sets of hypergraphs, i.e., for every hypergraph $G$ and $A \subseteq E(G)$, either $\prop$ holds or does not hold for $A$ in $G$.
In this paper, $\prop$ will always be one of well-linkedness, tri-well-linkedness, $k$-well-linkedness, $k$-better-linkedness, or semi-internally connectedness.
We define that $\prop$ is \emph{transitive} if for every hypergraph $G$, a set $A \subseteq E(G)$ for which $\prop$ holds, and a set $B \subseteq E(G \rescliqs A)$ for which $\prop$ holds in $G \rescliqs A$, we have that $\prop$ holds for $B \orescliqs A$ in $G$.
In this subsection, we will prove that the properties of well-linkedness and tri-well-linkedness are transitive.

Let $\prop_1$ and $\prop_2$ be properties of hyperedge sets.
We define that $\prop_1$ is \emph{transitive along} sets satisfying $\prop_2$ if for every hypergraph $G$, a set $A \subseteq E(G)$ satisfying $\prop_2$, and a set $B \subseteq E(G \rescliqs A)$ satisfying $\prop_1$ in $G \rescliqs A$, we have that $\prop_1$ holds for $B \orescliqs A$ in $G$.
In this subsection, we will prove that $k$-well-linkedness is transitive along $k$-better-linked sets.

\subsubsection{Well-linkedness}
Let $G$ be a hypergraph.
Recall that a set $A \subseteq E(G)$ is well-linked if for all bipartitions $(B_1,B_2)$ of $A$, either $\bdc(B_1) \ge \bdc(A)$ or $\bdc(B_2) \ge \bdc(A)$.

The next lemma, sometimes called the ``posimodularity'' of symmetric submodular functions, is useful for providing an alternative characterization of well-linkedness.

\begin{lemma}
\label{lem:uncrosswl}
Let $B_1,B_2 \subseteq E(G)$.
Then, either $\bdc(B_1 \setminus B_2) \le \bdc(B_1)$ or $\bdc(B_2 \setminus B_1) \le \bdc(B_2)$.
\end{lemma}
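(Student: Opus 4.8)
The plan is to derive this ``posimodularity'' statement from the symmetry and submodularity of $\bdc$ by a single uncrossing step, applying submodularity not to $B_1$ and $B_2$ directly, but to $B_1$ and the complement $\co{B_2}$.

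First I would invoke submodularity on the pair $B_1, \co{B_2} \subseteq E(G)$ to get
\[
\bdc(B_1 \cup \co{B_2}) + \bdc(B_1 \cap \co{B_2}) \le \bdc(B_1) + \bdc(\co{B_2}).
\]
Then I would rewrite each term using basic set identities together with symmetry of $\bdc$. We have $B_1 \cap \co{B_2} = B_1 \setminus B_2$, and by symmetry $\bdc(\co{B_2}) = \bdc(B_2)$. For the union term, note that $\co{\,B_1 \cup \co{B_2}\,} = \co{B_1} \cap B_2 = B_2 \setminus B_1$, so by symmetry $\bdc(B_1 \cup \co{B_2}) = \bdc(B_2 \setminus B_1)$. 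Substituting these into the inequality above yields
\[
\bdc(B_2 \setminus B_1) + \bdc(B_1 \setminus B_2) \le \bdc(B_1) + \bdc(B_2).
\]

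Finally I would observe that this inequality immediately forces at least one of the two claimed inequalities: if both $\bdc(B_1 \setminus B_2) > \bdc(B_1)$ and $\bdc(B_2 \setminus B_1) > \bdc(B_2)$ held, adding them would contradict the displayed inequality. Hence $\bdc(B_1 \setminus B_2) \le \bdc(B_1)$ or $\bdc(B_2 \setminus B_1) \le \bdc(B_2)$, as desired. There is essentially no obstacle in this argument; the only point requiring a moment's care is correctly identifying the complement of $B_1 \cup \co{B_2}$ as $B_2 \setminus B_1$ so that symmetry can be applied, and noting that all the set identities and the final implication hold for arbitrary subsets, so no edge cases (such as $B_1 \subseteq B_2$) need separate treatment.
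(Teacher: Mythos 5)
Your proof is correct and uses exactly the same mechanism as the paper: submodularity applied to the pair $B_1$ and $\co{B_2}$ together with symmetry of $\bdc$. The only cosmetic difference is that you derive the summed inequality $\bdc(B_1 \setminus B_2) + \bdc(B_2 \setminus B_1) \le \bdc(B_1) + \bdc(B_2)$ and conclude by contradiction, whereas the paper argues by cases on whether $\bdc(B_1 \setminus B_2) > \bdc(B_1)$; these are equivalent.
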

\begin{proof}
If $\bdc(B_1 \setminus B_2) \le \bdc(B_1)$ we are done, so assume that $\bdc(B_1 \setminus B_2) = \bdc(B_1 \cap \co{B_2}) > \bdc(B_1)$.
By submodularity we get that $\bdc(B_1 \cup \co{B_2}) < \bdc(\co{B_2})$, and by symmetry this implies $\bdc(B_2 \setminus B_1) = \bdc(\co{B_1} \cap B_2) < \bdc(B_2)$.
\end{proof}

\Cref{lem:uncrosswl} gives the following alternative characterization of well-linkedness.

\begin{lemma}
\label{lem:uncrosswlappl}
A set $A \subseteq E(G)$ is well-linked if and only if for all pairs $B_1,B_2 \subseteq A$ with $B_1 \cup B_2 = A$ it holds that $\bdc(B_1) \ge \bdc(A)$ or $\bdc(B_2) \ge \bdc(A)$.
\end{lemma}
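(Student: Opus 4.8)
The plan is to prove the two directions separately, and the content is essentially a single application of \Cref{lem:uncrosswl} on top of the definition of well-linkedness. The backward implication is immediate: every bipartition $(B_1,B_2)$ of $A$ is in particular a pair of subsets of $A$ with $B_1 \cup B_2 = A$, so if the stated condition holds for all such covering pairs, then it holds for all bipartitions, which is exactly the definition of well-linkedness.

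For the forward implication, I would assume $A$ is well-linked and fix arbitrary $B_1,B_2 \subseteq A$ with $B_1 \cup B_2 = A$. If $\bdc(B_1) \ge \bdc(A)$ we are already done, so I may assume $\bdc(B_1) < \bdc(A)$, and applying well-linkedness to the genuine bipartition $(B_1, A \setminus B_1)$ gives $\bdc(A \setminus B_1) \ge \bdc(A)$. The key move is then to apply \Cref{lem:uncrosswl} to the sets $B_2$ and $B_1$: either $\bdc(B_2 \setminus B_1) \le \bdc(B_2)$ or $\bdc(B_1 \setminus B_2) \le \bdc(B_1)$. In the first case, since $B_1 \cup B_2 = A$ we have $B_2 \setminus B_1 = A \setminus B_1$, hence $\bdc(B_2) \ge \bdc(A \setminus B_1) \ge \bdc(A)$. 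In the second case, again because $B_1 \cup B_2 = A$ we have $B_1 \setminus B_2 = A \setminus B_2$, so $\bdc(A \setminus B_2) \le \bdc(B_1) < \bdc(A)$; applying well-linkedness to the bipartition $(A \setminus B_2, B_2)$ then forces $\bdc(B_2) \ge \bdc(A)$. Either way we conclude $\bdc(B_1) \ge \bdc(A)$ or $\bdc(B_2) \ge \bdc(A)$, as required.

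There is no real obstacle here: the only idea is to use posimodularity (\Cref{lem:uncrosswl}) to ``uncross'' the overlapping pair $(B_1,B_2)$ and reduce to the genuine bipartitions already controlled by well-linkedness, and the remaining manipulations are just the set identities $B_2 \setminus B_1 = A \setminus B_1$ and $B_1 \setminus B_2 = A \setminus B_2$, valid precisely because $B_1 \cup B_2 = A$.
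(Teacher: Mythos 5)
Your proof is correct and uses exactly the paper's key idea: apply the posimodularity lemma (\Cref{lem:uncrosswl}) to uncross the overlapping covering pair into a genuine bipartition, to which the definition of well-linkedness applies. The only difference is cosmetic — the paper argues by contrapositive (both borders small yields a bad bipartition) while you argue the forward direction directly with a two-case analysis — so this is essentially the same proof.
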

\begin{proof}
The if-direction is clear from the definition of well-linkedness.
For the only-if-direction, suppose that there exists a pair $B_1,B_2 \subseteq A$ so that $B_1 \cup B_2 = A$ and $\bdc(B_i) < \bdc(A)$ for both $i \in [2]$.
Now, by \Cref{lem:uncrosswl}, either the bipartition $(B_1 \setminus B_2, B_2)$ or the bipartition $(B_1, B_2 \setminus B_1)$ witnesses that $A$ is not well-linked.
\end{proof}

We then show that the property of well-linkedness is transitive.

\begin{lemma}
\label{lem:linkedcliq}
Let $A \subseteq E(G)$ be a well-linked set.
Let also $B \subseteq E(G \rescliqs A)$ be a well-linked set in $G \rescliqs A$.
Then, $B \orescliqs A$ is well-linked in $G$.
\end{lemma}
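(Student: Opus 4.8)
The plan is to split on whether the new hyperedge $e_A$ of $G\rescliqs A$ lies in $B$. Before that I would record one elementary fact: for every $C\subseteq\co{A}$ one has $\bdc_G(C)=\bdc_{G\rescliqs A}(C)$, indeed $\bd_G(C)=\bd_{G\rescliqs A}(C)$ as sets. This is because $V(C)\subseteq V(\co A)$, so the only vertices of $V(A)$ that $C$ can see in $G$ are those in $V(A)\cap V(\co A)=\bd(A)=V(e_A)$, which are exactly the vertices of $V(A)$ that $C$ sees through $e_A$ in $G\rescliqs A$. With this, the case $e_A\notin B$ is immediate: then $B\orescliqs A=B\subseteq\co A$, every bipartition of $B$ consists of subsets of $\co A$, so $\bdc_G$ and $\bdc_{G\rescliqs A}$ agree on $B$ and on all parts of all its bipartitions, and hence ``$B$ well-linked in $G\rescliqs A$'' is literally the statement ``$B\orescliqs A$ well-linked in $G$''.

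For the case $e_A\in B$, write $B'=B\setminus\{e_A\}\subseteq\co A$, so $B\orescliqs A=B'\cup A$, and set $b=\bdc_G(B'\cup A)$, which by the identity $\bd_G(B\orescliqs A)=\bd_{G\rescliqs A}(B)$ equals $\bdc_{G\rescliqs A}(B)$. Suppose towards a contradiction that $B'\cup A$ is not well-linked in $G$ and fix a bipartition $(C_1,C_2)$ of $B'\cup A$ with $\bdc_G(C_1),\bdc_G(C_2)<b$. Put $A_i=C_i\cap A$ and $D_i=C_i\cap B'$, so $(A_1,A_2)$ is a bipartition of $A$, $(D_1,D_2)$ a bipartition of $B'$, and $C_i=A_i\cup D_i$. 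Since $A$ is well-linked, after possibly swapping $C_1$ and $C_2$ I may assume $\bdc_G(A_1)\ge\bdc_G(A)$ (the assumed property of $(C_1,C_2)$ is symmetric in the two parts, so this is genuinely without loss of generality).

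The first uncrossing pushes $C_1$ into $G\rescliqs A$: submodularity applied to $C_1$ and $A$ gives $\bdc_G(C_1\cup A)+\bdc_G(A_1)\le\bdc_G(C_1)+\bdc_G(A)$, hence $\bdc_G(C_1\cup A)\le\bdc_G(C_1)<b$ using $\bdc_G(A_1)\ge\bdc_G(A)$. As $C_1\cup A=D_1\cup A=(D_1\cup\{e_A\})\orescliqs A$, the border identity rewrites this as $\bdc_{G\rescliqs A}(D_1\cup\{e_A\})<b$. Now $(D_1\cup\{e_A\},D_2)$ is a bipartition of $B$ in $G\rescliqs A$, so well-linkedness of $B$ forces $\bdc_{G\rescliqs A}(D_2)\ge b$, i.e.\ $\bdc_G(D_2)\ge b$ by the preliminary fact (as $D_2\subseteq\co A$). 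The second uncrossing uses the other side against $\co A$: submodularity applied to $C_2$ and $\co A$ gives $\bdc_G(C_2\cup\co A)+\bdc_G(C_2\cap\co A)\le\bdc_G(C_2)+\bdc_G(\co A)$, where $C_2\cap\co A=D_2$, $\bdc_G(\co A)=\bdc_G(A)$ by symmetry, and $\co{(C_2\cup\co A)}=A\setminus C_2=A_1$ so $\bdc_G(C_2\cup\co A)=\bdc_G(A_1)\ge\bdc_G(A)$; substituting yields $\bdc_G(D_2)\le\bdc_G(C_2)<b$, contradicting $\bdc_G(D_2)\ge b$.

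I expect the one genuinely non-obvious point — the ``hard part'' — to be arranging the two uncrossings so they fit: the first must be performed on the side $C_1$ whose trace on $A$ is large, so that $\bdc_G(C_1\cup A)$ drops below $b$ and well-linkedness of $B$ inside $G\rescliqs A$ becomes applicable, whereas the second must be performed on the opposite side $C_2$ and \emph{against $\co A$ rather than $A$}, so that $\bdc_G(C_2\cup\co A)$ rewrites as $\bdc_G(A_1)$, which is still at least $\bdc_G(A)=\bdc_G(\co A)$ and therefore absorbs the surplus term. (A naive attempt that uncrosses both sides against $A$ gets stuck in the equality case $\bdc_G(A_1)=\bdc_G(A)$; uncrossing $C_2$ with $\co A$ avoids this.) Everything else is bookkeeping with the definitions of $\rescliqs$ and $\orescliqs$, symmetry and submodularity of $\bdc$, and the border-preservation fact for subsets of $\co A$; the degenerate cases $A=\emptyset$ and $B=\emptyset$ are handled trivially.
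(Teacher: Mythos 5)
Your proof is correct and follows essentially the same route as the paper: the same case split on $e_A\in B$, the same use of well-linkedness of $A$ to assume $\bdc(A\cap C_1)\ge\bdc(A)$, and the same two submodularity uncrossings (one of $C_1$ against $A$, one of $C_2$ against $\co A$). The only cosmetic difference is that the paper exhibits the bipartition $(\{e_A\}\cup C_1\setminus A,\;C_2\setminus A)$ of $B$ and bounds both sides directly, whereas you derive the contradiction $\bdc(D_2)\ge b$ versus $\bdc(D_2)<b$; the underlying computations are identical.
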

\begin{proof}
Denote $B' = B \orescliqs A$.
First, if $B$ does not contain the hyperedge $e_A$, then $B' = B$ is well-linked in $G$ because $\bdc_G(X) = \bdc_{G \rescliqs A}(X)$ for all $X \subseteq B$.
Then suppose that $e_A \in B$, implying in particular that $A \subseteq B'$.

Suppose $B'$ is not well-linked, and let $(C_1,C_2)$ be a bipartition of $B'$ so that $\bdc(C_i) < \bdc(B')$ for both $i \in [2]$.
By well-linkedness of $A$, we have that either $\bdc(A \cap C_1) \ge \bdc(A)$ or $\bdc(A \cap C_2) \ge \bdc(A)$.
By symmetry, assume that $\bdc(A \cap C_1) \ge \bdc(A)$.

We claim that now, the bipartition $(\{e_A\} \cup C_1 \setminus A, C_2 \setminus A)$ of $B$ contradicts that $B$ is well-linked in $G \rescliqs A$.

First, we have that $\bdc_{G \rescliqs A}(\{e_A\} \cup C_1 \setminus A) = \bdc_G(A \cup C_1)$, which by the assumption $\bdc(A \cap C_1) \ge \bdc(A)$ and submodularity is 
\[\bdc(A \cup C_1) \le \bdc(C_1) < \bdc(B') = \bdc(B).\]

Then, we have that $\bdc_{G \rescliqs A}(C_2 \setminus A) = \bdc_G(C_2 \setminus A)$, which by submodularity is at most 
\begin{align*}
\bdc(C_2 \cap \co{A}) &\le \bdc(C_2) + \bdc(A) - \bdc(C_2 \cup \co{A}) && \\
&= \bdc(C_2) + \bdc(A) - \bdc(\co{C_2} \cap A) && \text{(symmetry)}\\
&= \bdc(C_2) + \bdc(A) - \bdc(C_1 \cap A) && \text{(by $C_1 \cap A$ = $\co{C_2} \cap A$)}\\
&\le \bdc(C_2) < \bdc(B') = \bdc(B). && \text{(by assumption $\bdc(A \cap C_1) \ge \bdc(A)$)}
\end{align*}
\end{proof}

\subsubsection{Tri-well-linkedness}
Recall that a set $A \subseteq E(G)$ is tri-well-linked if for all tripartitions $(B_1,B_2,B_3)$ of $E(G)$ it holds that $\bdc(B_i) \ge \bdc(A)$ for some $i \in [3]$.

Again, we start by providing an alternative characterization of tri-well-linkedness.

\begin{lemma}
\label{lem:uncrosswlappltri}
A set $A \subseteq E(G)$ is tri-well-linked if and only if for all triples $B_1,B_2,B_3 \subseteq A$ with $B_1 \cup B_2 \cup B_3 = A$, for at least one $i \in [3]$ it holds that $\bdc(B_i) \ge \bdc(A)$.
\end{lemma}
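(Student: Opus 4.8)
The plan is to prove \Cref{lem:uncrosswlappltri} in exact analogy with the proof of \Cref{lem:uncrosswlappl}, which handled the two-part case. The ``only if'' direction is trivial: if $A$ is tri-well-linked, then in particular for every triple $B_1,B_2,B_3 \subseteq A$ whose union is $A$ we can extract a tripartition (for instance $(B_1, B_2 \setminus B_1, B_3 \setminus (B_1 \cup B_2))$), and tri-well-linkedness applied to that tripartition gives the claim — though actually we don't even need to pass to a genuine tripartition, since the definition already quantifies over tripartitions and any covering triple can be refined to one without increasing any $\bdc(B_i)$.

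The substantive direction is ``if''. Suppose $A$ is not tri-well-linked, so there is a tripartition $(B_1, B_2, B_3)$ of $A$ with $\bdc(B_i) < \bdc(A)$ for all $i \in [3]$; I want to produce a \emph{covering} triple $C_1, C_2, C_3 \subseteq A$ with $C_1 \cup C_2 \cup C_3 = A$ and $\bdc(C_i) < \bdc(A)$ for all $i$ — but this is immediate, since a tripartition is already a covering triple. Wait: that shows the two notions coincide trivially in one direction. The real content is the converse: starting from a covering triple with small borders, I must build a \emph{tripartition} with small borders. So the argument should go: given $B_1, B_2, B_3 \subseteq A$ covering $A$ with each $\bdc(B_i) < \bdc(A)$, I apply \Cref{lem:uncrosswl} (posimodularity) repeatedly to ``disjointify'' the triple. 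Concretely, first replace the pair $(B_1, B_2)$: by \Cref{lem:uncrosswl}, either $\bdc(B_1 \setminus B_2) \le \bdc(B_1) < \bdc(A)$ or $\bdc(B_2 \setminus B_1) \le \bdc(B_2) < \bdc(A)$; in the first case replace $B_1$ by $B_1 \setminus B_2$, in the second replace $B_2$ by $B_2 \setminus B_1$. After this step $B_1$ and $B_2$ are disjoint, the triple still covers $A$, and all three borders are still $< \bdc(A)$. Then do the same for the pair whose removal of overlap is safe to handle $B_3$ against $B_1$ and then against $B_2$; each such step only shrinks one set and keeps the covering property and the border bound. After at most three uncrossing steps we obtain a genuine tripartition witnessing that $A$ is not tri-well-linked, completing the contrapositive.

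I expect the main obstacle to be purely bookkeeping: making sure that after each application of \Cref{lem:uncrosswl} the triple still \emph{covers} $A$. This is where one must be slightly careful — when I replace $B_1$ by $B_1 \setminus B_2$, any hyperedge of $A$ that lay only in $B_1 \cap B_2$ is still covered by $B_2$, so coverage is preserved; similarly for the other cases. One also has to handle the order of operations so that disjointness achieved in an earlier step is not destroyed later: processing $B_3 \setminus B_1$ (or $B_1 \setminus B_3$) and then $B_3 \setminus B_2$ (or $B_2 \setminus B_3$) only removes elements, so it cannot reintroduce an overlap between $B_1$ and $B_2$. I would write this as a short three-line argument mirroring the proof of \Cref{lem:uncrosswlappl} essentially verbatim, noting that the same template gives, more generally, that the $p$-part analogue of well-linkedness is equivalent to its ``covering'' reformulation for every fixed $p$.
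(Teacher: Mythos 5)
Your argument is correct and is essentially the paper's proof: both reduce a covering triple with small borders to a genuine tripartition by repeatedly applying the posimodularity lemma (\Cref{lem:uncrosswl}), the only cosmetic difference being that you disjointify the three pairs by explicit iteration while the paper picks a counterexample triple minimizing $|B_1|+|B_2|+|B_3|$ and derives a contradiction. (Your opening paragraph briefly mislabels which direction is trivial and flirts with a refinement $(B_1, B_2\setminus B_1, B_3\setminus(B_1\cup B_2))$ that would not preserve the border bounds, but you correct yourself and the substantive argument in the second paragraph is sound.)
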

\begin{proof}
The if-direction again comes directly from the definition of tri-well-linkedness.
For the only-if-direction, suppose there exists a triple $B_1,B_2,B_3 \subseteq A$ so that $B_1 \cup B_2 \cup B_3 = A$, and $\bdc(B_i) < \bdc(A)$ for all $i \in [3]$.
Moreover, choose such a triple that minimizes $|B_1|+|B_2|+|B_3|$.

Now, if $B_1$, $B_2$, and $B_3$ are disjoint, the tripartition $(B_1,B_2,B_3)$ witnesses that $A$ is not tri-well-linked.
If they intersect, assume \wilog that $B_1$ and $B_2$ intersect.
However, now by \Cref{lem:uncrosswl} either the triple $B_1 \setminus B_2, B_2, B_3$, or the triple $B_1, B_2 \setminus B_1, B_3$ contradicts the choice of the triple $B_1, B_2, B_3$.
\end{proof}

We then show that the property of tri-well-linkedness is transitive.

\begin{lemma}
\label{lem:triwelllinkedcom}
Let $A \subseteq E(G)$ be a tri-well-linked set.
Let also $B \subseteq E(G \rescliqs A)$ be a tri-well-linked set in $G \rescliqs A$.
Then, $B \orescliqs A$ is well-linked in $G$.
\end{lemma}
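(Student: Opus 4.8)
The plan is to observe that tri-well-linkedness is a strictly stronger property than well-linkedness, and then to reduce the statement entirely to the already-established transitivity of well-linkedness, \Cref{lem:linkedcliq}. So there is no new uncrossing to do here; the work has been done in \Cref{lem:linkedcliq}.

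First I would record the elementary fact that every tri-well-linked set is well-linked. Indeed, let $C \subseteq E(G)$ be tri-well-linked. If $\bdc(C) = 0$ then $C$ is vacuously well-linked, so assume $\bdc(C) \ge 1$. Given any bipartition $(C_1, C_2)$ of $C$, the triple $(C_1, C_2, \emptyset)$ is a tripartition of $C$ (the definitions permit empty parts), and $\bdc(\emptyset) = 0 < \bdc(C)$; hence by tri-well-linkedness we must have $\bdc(C_1) \ge \bdc(C)$ or $\bdc(C_2) \ge \bdc(C)$. This is exactly the condition for $C$ to be well-linked.

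Applying this observation twice: $A$ is well-linked in $G$ (since it is tri-well-linked in $G$), and $B$ is well-linked in $G \rescliqs A$ (since it is tri-well-linked there). Now \Cref{lem:linkedcliq} applies verbatim to the pair $A$, $B$ and yields that $B \orescliqs A$ is well-linked in $G$, which is precisely the conclusion of the lemma.

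I do not expect any real obstacle. The only points requiring a line of care are the degenerate case $\bdc(C) = 0$ in the first step, and making explicit that a bipartition is a tripartition with an empty third part (so that tri-well-linkedness can be specialized to bipartitions). Note that the conclusion is deliberately only "well-linked" rather than "tri-well-linked": tri-well-linkedness is not preserved under $\orescliqs$ in general, so the lemma is tight in what it claims, and the short argument above is the right level of generality.
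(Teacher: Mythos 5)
Your reduction does prove the statement exactly as printed: tri-well-linkedness specializes to well-linkedness by taking the third part of a tripartition to be empty (your treatment of the degenerate case $\bdc(C)=0$ is fine), and then \Cref{lem:linkedcliq} immediately gives that $B \orescliqs A$ is well-linked in $G$. So, read literally, the argument is sound.

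However, this bypasses the actual content of the lemma, and your closing remark is where the real problem lies. The word ``well-linked'' in the conclusion is evidently a slip for ``tri-well-linked'': the paper's own proof takes a hypothetical tripartition $(C_1,C_2,C_3)$ of $B \orescliqs A$ with all borders below $\bdc(B \orescliqs A)$ and uncrosses it with $A$, setting $C_i' = C_i \setminus A$ when $\bdc(A\cap C_i) < \bdc(A)$ (well-linkedness of $A$ plus submodularity keeps $\bdc(C_i')$ small) and $C_i' = \{e_A\} \cup C_i \setminus A$ when $\bdc(A \cap C_i)\ge\bdc(A)$, and then uses the \emph{tri}-well-linkedness of $A$, applied to the tripartition $(A\cap C_1, A\cap C_2, A\cap C_3)$, to guarantee that $e_A$ lands in at least one $C_i'$, so that $C_1'\cup C_2'\cup C_3' = B$ and \Cref{lem:uncrosswlappltri} yields a contradiction with the tri-well-linkedness of $B$ in $G \rescliqs A$. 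Hence, under the hypotheses of the lemma, tri-well-linkedness \emph{is} preserved by $\orescliqs$; your assertion that it is not, and that the lemma is ``tight'' in claiming only well-linkedness, is false. The stronger conclusion is also exactly what is used later: in \Cref{lem:tubrtoubralgremovebigchipsalg} this lemma is invoked as the ``transitivity of tri-well-linkedness'' to maintain the invariant that the internal separations of the superbranch decomposition are doubly tri-well-linked, which your weaker conclusion cannot deliver. So while nothing you wrote is wrong except the tightness claim, the proposal misses the genuine work of the lemma, namely the tripartition uncrossing sketched above.
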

\begin{proof}
Denote $B' = B \orescliqs A$.
If $B$ does not contain the hyperedge $e_A$, then $B' = B$ is tri-well-linked in $G$ because $\bdc_{G}(X) = \bdc_{G \rescliqs A}(X)$ for all $X \subseteq B$.
Then, suppose that $e_A \in B$ implying $A \subseteq B'$.

Suppose that $B'$ is not tri-well-linked and let $(C_1,C_2,C_3)$ be a tripartition of $B'$ so that $\bdc(C_i) < \bdc(B')$ for all $i \in [3]$.
As in the proof of \Cref{lem:linkedcliq}, our strategy will now be to ``uncross'' the tripartition $(C_1,C_2,C_3)$ with $A$ to contradict the well-linkedness of $B$ in $G \rescliqs A$.
The following claim gives the uncrossing argument when $\bdc(A \cap C_i) < \bdc(A)$.

\begin{claim}
\label{lem:triwelllinkedcom:auxclaim1}
For any $i \in \{1,2,3\}$, if $\bdc(A \cap C_i) < \bdc(A)$, then $\bdc(C_i \setminus A) < \bdc(B)$.
\end{claim}
\begin{claimproof}
The assumption implies $\bdc(A \cap \co{C_i}) \ge \bdc(A)$ because $A$ is well-linked.
Now, 
\begin{align*}
\bdc(C_i \setminus A) = \bdc(C_i \cap \co{A}) &\le \bdc(C_i) + \bdc(A) - \bdc(C_i \cup \co{A}) && \text{(submodularity)}\\
&\le \bdc(C_i) + \bdc(A) - \bdc(A \cap \co{C_i}) && \text{(symmetry)}\\
&\le \bdc(C_i) < \bdc(B') = \bdc(B). && \text{(by $\bdc(A \cap \co{C_i}) \ge \bdc(A)$)}
\end{align*}
\end{claimproof}

When $\bdc(A \cap C_i) \ge \bdc(A)$, we can uncross directly by submodularity, as it implies $\bdc(A \cup C_i) \le \bdc(C_i) < \bdc(B') = \bdc(B)$.

Now, for all $i \in [3]$, we define that if $\bdc(A \cap C_i) < \bdc(A)$, then $C_i' = C_i \setminus A$, and if $\bdc(A \cap C_i) \ge \bdc(A)$, then $C_i' = \{e_A\} \cup C_i \setminus A$.
By the preceeding uncrossing arguments, we have that $\bdc_{G \rescliqs A}(C_i') < \bdc_{G \rescliqs A}(B)$ for all $i \in [3]$.
Furthermore, as $A$ is tri-well-linked, for at least one $i \in [3]$ it must hold that $\bdc(A \cap C_i) \ge \bdc(A)$, and therefore $e_A$ must be in at least one of $C_i'$, implying $C_1' \cup C_2' \cup C_3' = B$.
By \Cref{lem:uncrosswlappltri}, this contradicts the tri-well-linkedness of $B$ in $G \rescliqs A$.
\end{proof}

\subsubsection{$k$-well-linkedness}
Let $G$ be a hypergraph and $k$ an integer.
Recall that a set $A \subseteq E(G)$ is $k$-well-linked if for all bipartitions $(B_1,B_2)$ of $A$, it holds that either (1) $\bdc(B_1) \ge \bdc(A)$, (2) $\bdc(B_2) \ge \bdc(A)$, or (3) $\bdc(B_1) \ge k$ and $\bdc(B_2) \ge k$.
Recall also that a set $A \subseteq E(G)$ is $k$-better-linked if either (1) $A$ is well-linked, or (2) $A$ is $k$-well-linked and $\co{A}$ is well-linked.

Let us prove that $k$-well-linkedness is transitive along $k$-better-linked sets.

\begin{lemma}
\label{lem:kwltransitive}
Let $A \subseteq E(G)$ be a $k$-better-linked set.
Let also $B \subseteq E(G \rescliqs A)$ be a $k$-well-linked set in $G \rescliqs A$.
Then, $B \orescliqs A$ is $k$-well-linked in $G$.
\end{lemma}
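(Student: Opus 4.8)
The plan is to imitate the uncrossing arguments in the proofs of \Cref{lem:linkedcliq} and \Cref{lem:triwelllinkedcom}, being careful about the conditional clause ``$\bdc(B_1)\ge k$ and $\bdc(B_2)\ge k$'' in the definition of $k$-well-linkedness. Write $B'=B\orescliqs A$. If $e_A\notin B$, then $B'=B$ and $\bd_G(X)=\bd_{G\rescliqs A}(X)$ for every $X\subseteq B$, so the $k$-well-linkedness of $B$ passes to $B'$ verbatim; hence I would assume $e_A\in B$, so that $A\subseteq B'$, and argue by contradiction. So suppose $(C_1,C_2)$ is a bipartition of $B'$ witnessing that $B'$ is not $k$-well-linked in $G$, i.e.\ $\bdc_G(C_1),\bdc_G(C_2)<\bdc_G(B')$ and, after possibly swapping the two parts, $\bdc_G(C_1)<k$. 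Since $A\subseteq B'=C_1\cup C_2$, the pair $(A\cap C_1,A\cap C_2)$ is a bipartition of $A$, and $A\cap\co{C_1}=A\cap C_2$.

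The two bipartitions of $B$ in $G\rescliqs A$ to test are the ones obtained by pushing $A$ entirely to one side, namely $P_1=(\{e_A\}\cup(C_1\setminus A),\,C_2\setminus A)$ and $P_2=(C_1\setminus A,\,\{e_A\}\cup(C_2\setminus A))$; using $\bd_G(X\orescliqs A)=\bd_{G\rescliqs A}(X)$, their two orders are $\bdc_G(A\cup C_1),\bdc_G(C_2\cap\co{A})$ and $\bdc_G(C_1\cap\co{A}),\bdc_G(A\cup C_2)$, respectively. A routine submodularity-and-symmetry computation, exactly as in the proof of \Cref{lem:linkedcliq}, shows that if $\bdc_G(A\cap C_1)\ge\bdc_G(A)$ then in $P_1$ the first side has order $\le\bdc_G(C_1)$ and the second side has order $\le\bdc_G(C_2)$, so $P_1$ contradicts the $k$-well-linkedness of $B$ in $G\rescliqs A$ (both orders are below $\bdc_{G\rescliqs A}(B)=\bdc_G(B')$, and the first is below $k$ since $\bdc_G(C_1)<k$); symmetrically, if $\bdc_G(A\cap C_2)\ge\bdc_G(A)$ then $P_2$ does the same job, with $C_1\setminus A$ playing the role of the ``$<k$'' side. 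This is precisely where the reduction $\bdc_G(C_1)<k$ gets used.

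The main obstacle, and the only place where I would need $A$ to be $k$-better-linked rather than merely well-linked, is the remaining case $\bdc_G(A\cap C_1)<\bdc_G(A)$ and $\bdc_G(A\cap C_2)<\bdc_G(A)$. If $A$ were well-linked this already contradicts well-linkedness of $A$ via the bipartition $(A\cap C_1,A\cap C_2)$; so $A$ is not well-linked, hence being $k$-better-linked means $\co{A}$ is well-linked and $A$ is $k$-well-linked, and then $k$-well-linkedness of $A$ applied to $(A\cap C_1,A\cap C_2)$ forces $\bdc_G(A\cap C_1)\ge k$ and $\bdc_G(A\cap C_2)\ge k$. Now $\bdc_G(C_1)<k\le\bdc_G(A\cap C_1)$ together with submodularity gives $\bdc_G(A\cup C_1)<\bdc_G(A)$, hence by symmetry $\bdc_G(\co{A}\cap\co{C_1})<\bdc_G(\co{A})$; feeding the bipartition $(C_1\cap\co{A},\,\co{A}\cap\co{C_1})$ of $\co{A}$ into the well-linkedness of $\co{A}$ gives $\bdc_G(C_1\cap\co{A})\ge\bdc_G(\co{A})=\bdc_G(A)$. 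On the other hand, submodularity together with $\co{C_1}\cap A=C_2\cap A$ gives the upper bound $\bdc_G(C_1\cap\co{A})\le\bdc_G(C_1)+\bdc_G(A)-\bdc_G(A\cap C_2)$. Combining the last two inequalities yields $\bdc_G(A\cap C_2)\le\bdc_G(C_1)<k$, contradicting $\bdc_G(A\cap C_2)\ge k$. Hence this case cannot occur either, so there is no witnessing bipartition of $B'$, i.e.\ $B'=B\orescliqs A$ is $k$-well-linked in $G$, as desired.
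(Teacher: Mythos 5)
Your proof is correct and follows essentially the same route as the paper's: reduce to the case $e_A\in B$, show that $\bdc(A\cap C_i)\ge\bdc(A)$ must hold for some $i$ (using the $k$-well-linkedness of $A$ together with an uncrossing against the well-linked $\co{A}$ in the hard subcase), and then push $e_A$ to that side and bound both resulting orders by $\bdc(C_1)$ and $\bdc(C_2)$ via submodularity. The only difference is organizational — you normalize $\bdc(C_1)<k$ and keep both candidate bipartitions $P_1,P_2$, and your contradiction in the hard case rules out one branch of the $\co{A}$-uncrossing before invoking well-linkedness of $\co{A}$, whereas the paper does a direct two-case analysis — but the inequalities used are the same.
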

\begin{proof}
Denote $B' = B \orescliqs A$.
First, if $B$ does not contain the hyperedge $e_A$, then $B' = B$ is $k$-well-linked in $G$ because $\bdc_G(X) = \bdc_{G \rescliqs A}(X)$ for all $X \subseteq B$.
Then suppose that $e_A \in B$, implying that $A \subseteq B'$.

Suppose $B'$ is not $k$-well-linked, and let $(C_1,C_2)$ be a bipartition of $B'$ so that $\bdc(C_i) < \bdc(B')$ for both $i \in [2]$ and $\bdc(C_i) < k$ for at least one $i \in [2]$.

\begin{claim}
Either $\bdc(A \cap C_1) \ge \bdc(A)$ or $\bdc(A \cap C_2) \ge \bdc(A)$.
\end{claim}
\begin{claimproof}
If $A$ is well-linked this follows directly from the definition.
Suppose then that $A$ is $k$-well-linked and $\co{A}$ is well-linked.
We are again done unless $\bdc(A \cap C_i) \ge k$ for both $i \in [2]$, so assume that indeed $\bdc(A \cap C_i) \ge k$ for both $i \in [2]$.

Let $i \in [2]$ so that $\bdc(C_i) < k$, and let $j=3-i$.
By the well-linkedness of $\co{A}$, either $\bdc(\co{A} \cap C_i) \ge \bdc(A)$ or $\bdc(\co{A} \cap \co{C_i}) \ge \bdc(A)$.
In the former case we get by submodularity that $\bdc(\co{A} \cup C_i) \le \bdc(C_i)$, which by symmetry means $\bdc(A \cap \co{C_i}) \le \bdc(C_i)$.
This in turn means that $\bdc(A \cap C_j) < k$, contradicting our assumption.

In the latter case we get by submodularity that $\bdc(\co{A} \cup \co{C_i}) \le \bdc(C_i)$, which by symmetry means $\bdc(A \cap C_i) \le \bdc(C_i) < k$, contradicting our assumption.
\end{claimproof}

Now, assume \wilog that $\bdc(A \cap C_1) \ge \bdc(A)$.
We claim that the bipartition $(\{e_A\} \cup C_1 \setminus A, C_2 \setminus A)$ of $B$ contradicts that $B$ is $k$-well-linked in $G \rescliqs A$.

First, we have that $\bdc_{G \rescliqs A}(\{e_A\} \cup C_1 \setminus A) = \bdc_G(A \cup C_1)$, which by $\bdc(A \cap C_1) \ge \bdc(A)$ and submodularity is
\[\bdc(A \cup C_1) \le \bdc(C_1).\]

Then, we have that $\bdc_{G \rescliqs A}(C_2 \setminus A) = \bdc_G(C_2 \setminus A)$, which by submodularity is at most

\begin{align*}
\bdc(C_2 \cap \co{A}) &\le \bdc(C_2) + \bdc(A) - \bdc(C_2 \cup \co{A}) && \\
&= \bdc(C_2) + \bdc(A) - \bdc(\co{C_2} \cap A) && \text{(symmetry)}\\
&= \bdc(C_2) + \bdc(A) - \bdc(C_1 \cap A) && \text{(by $C_1 \cap A$ = $\co{C_2} \cap A$)}\\
&\le \bdc(C_2). && \text{(by assumption $\bdc(A \cap C_1) \ge \bdc(A)$)}
\end{align*}

Therefore, as $\bdc_{G \rescliqs A}(\{e_A\} \cup C_1 \setminus A) \le \bdc(C_1)$ and $\bdc_{G \rescliqs A}(C_2 \setminus A) \le \bdc(C_2)$, the bipartition $(\{e_A\} \cup C_1 \setminus A, C_2 \setminus A)$ contradicts that $B$ is $k$-well-linked in $G \rescliqs A$.
\end{proof}

\subsection{Linkedness}
\label{subsec:linkedness}
We then prove lemmas about the concept of \emph{linkedness}, which we define now.

Let $A \subseteq B \subseteq E(G)$.
We say that $B$ is \emph{linked into} $A$ if for all $X$ with $A \subseteq X \subseteq B$ it holds that $\bdc(X) \ge \bdc(B)$.
Note that $B$ can be linked into $A$ only if $\bdc(A) \ge \bdc(B)$.
The graph-theoretic way of thinking about this is simply that there exists a set of $\bdc(B)$ vertex-disjoint paths from $\bd(B)$ to $\bd(A)$.

\subsubsection{Transitivity of linkedness}
We need the following lemma, which is clearly true from the vertex-disjoint paths viewpoint of linkedness, but let us prove it from the definitions using only submodularity.

\begin{lemma}
\label{lem:transitivityoflinkedness}
Let $A \subseteq B \subseteq C \subseteq E(G)$ so that $B$ is linked into $A$ and $C$ is linked into $B$.
Then, $C$ is linked into $A$.
\end{lemma}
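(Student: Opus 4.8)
This is a direct uncrossing argument using only the submodularity of $\bdc$. The plan is: take an arbitrary set $X$ with $A \subseteq X \subseteq C$, and show $\bdc(X) \ge \bdc(C)$, which is exactly what it means for $C$ to be linked into $A$. The natural move is to uncross $X$ with the ``intermediate'' set $B$, applying the linkedness of $B$ into $A$ to $X \cap B$ and the linkedness of $C$ into $B$ to $X \cup B$.

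\textbf{Key steps.} First I would fix $X$ with $A \subseteq X \subseteq C$. Since $A \subseteq X$ and $A \subseteq B$, we have $A \subseteq X \cap B \subseteq B$, so the hypothesis that $B$ is linked into $A$ gives $\bdc(X \cap B) \ge \bdc(B)$. Similarly, since $X \subseteq C$ and $B \subseteq C$, we have $B \subseteq X \cup B \subseteq C$, so the hypothesis that $C$ is linked into $B$ gives $\bdc(X \cup B) \ge \bdc(C)$. Now apply submodularity of $\bdc$ to $X$ and $B$:
\[\bdc(C) + \bdc(B) \le \bdc(X \cup B) + \bdc(X \cap B) \le \bdc(X) + \bdc(B),\]
and cancelling $\bdc(B)$ yields $\bdc(X) \ge \bdc(C)$. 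Since $X$ was arbitrary in the range $A \subseteq X \subseteq C$, this is precisely the statement that $C$ is linked into $A$.

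\textbf{Main obstacle.} There is essentially no obstacle here; the only things to be careful about are the two containment checks ($A \subseteq X \cap B \subseteq B$ and $B \subseteq X \cup B \subseteq C$) so that the two linkedness hypotheses are applicable, and noting that $\bdc(B)$ is finite so that it may be cancelled from both sides of the inequality. One could also remark in passing that the hypotheses force $\bdc(A) \ge \bdc(B) \ge \bdc(C)$, which is consistent with (and implied by) the conclusion, but this is not needed for the argument.
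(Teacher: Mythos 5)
Your proof is correct and uses the same idea as the paper: uncross $X$ with the intermediate set $B$ via submodularity and apply the two linkedness hypotheses to $X\cap B$ and $X\cup B$. The paper phrases it as a contradiction (a case split on which of the two submodular inequalities holds), whereas you run the same inequalities directly and cancel $\bdc(B)$; this is only a cosmetic difference.
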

\begin{proof}
Assume otherwise, and let $X$ be a set so that $A \subseteq X \subseteq C$ and $\bdc(X) < \bdc(C)$.
By submodularity, either (1) $\bdc(X \cap B) < \bdc(B)$ or (2) $\bdc(X \cup B) \le \bdc(X)$.
However, (1) would contradict that $B$ is linked into $A$, and (2) would contradict that $C$ is linked into $B$.
\end{proof}

\subsubsection{Linking into a well-linked set}
The following lemma gives the main application of linkedness, in particular, to be a tool for arguing that some sets are well-linked.

\begin{lemma}
\label{lem:linkedcombwelllinked}
Let $A \subseteq B \subseteq E(G)$.
If $A$ is well-linked and $B$ is linked into $A$, then also $B$ is well-linked.
\end{lemma}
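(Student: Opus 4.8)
The plan is to argue by contradiction, using a single application of submodularity together with the two hypotheses. Suppose $B$ is not well-linked, so there is a bipartition $(D_1,D_2)$ of $B$ with $\bdc(D_i) < \bdc(B)$ for both $i \in [2]$. The first step is to transfer this bad bipartition to $A$: since $A \subseteq B$, the pair $(A \cap D_1,\, A \cap D_2)$ is a bipartition of $A$, and by well-linkedness of $A$ we may assume \wilog that $\bdc(A \cap D_1) \ge \bdc(A)$.

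The second step is to ``push back up'' from $A$ to $A \cup D_1$. Note that $A \subseteq A \cup D_1 \subseteq B$, so the assumption that $B$ is linked into $A$ gives $\bdc(A \cup D_1) \ge \bdc(B)$. Now apply submodularity to the sets $A$ and $D_1$:
\[
\bdc(A \cup D_1) + \bdc(A \cap D_1) \le \bdc(A) + \bdc(D_1).
\]
Substituting the two lower bounds $\bdc(A \cup D_1) \ge \bdc(B)$ and $\bdc(A \cap D_1) \ge \bdc(A)$ into the left-hand side yields $\bdc(B) + \bdc(A) \le \bdc(A) + \bdc(D_1)$, hence $\bdc(B) \le \bdc(D_1)$, contradicting $\bdc(D_1) < \bdc(B)$. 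This completes the proof.

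I do not expect a genuine obstacle here; the only thing to get right is the choice of which set to uncross, namely combining the lower bound $\bdc(A \cap D_1) \ge \bdc(A)$ coming from well-linkedness of $A$ with the lower bound $\bdc(A \cup D_1) \ge \bdc(B)$ coming from $B$ being linked into $A$, and then reading off the contradiction from submodularity. (One could equivalently phrase the first step via the ``covering'' characterization of well-linkedness in \Cref{lem:uncrosswlappl}, but the bipartition form is already sufficient.)
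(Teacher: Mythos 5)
Your proof is correct and is essentially the paper's own argument: both take a violating bipartition of $B$, use well-linkedness of $A$ to get $\bdc(A \cap D_1) \ge \bdc(A)$, and then apply submodularity to $A$ and $D_1$ to contradict the fact that $B$ is linked into $A$ via the intermediate set $A \cup D_1$. The only cosmetic difference is the order in which the submodularity inequality and the linkedness hypothesis are invoked.
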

\begin{proof}
Suppose not, and let $(C_1,C_2)$ be a bipartition of $B$ so that $\bdc(C_i) < \bdc(B)$ for both $i \in [2]$.
By well-linkedness of $A$, we have either $\bdc(A \cap C_1) \ge \bdc(A)$ or $\bdc(A \cap C_2) \ge \bdc(A)$.
By symmetry, assume without loss of generality that $\bdc(A \cap C_1) \ge \bdc(A)$.
We get by submodularity that $\bdc(A \cup C_1) \le \bdc(C_1) < \bdc(B)$, which contradicts that $B$ is linked into $A$ because $A \subseteq A \cup C_1 \subseteq B$.
\end{proof}

The essentially same proof works for tri-well-linkedness.

\begin{lemma}
\label{lem:linkedcombtriwelllinked}
Let $A \subseteq B \subseteq E(G)$.
If $A$ is tri-well-linked and $B$ is linked into $A$, then $B$ is also tri-well-linked.
\end{lemma}
\begin{proof}
Suppose not, and let $(C_1,C_2,C_3)$ be a tripartition of $B$ so that $\bdc(B \cap C_i) < \bdc(B)$ for all $i \in [3]$.
By tri-well-linkedness of $A$, we have that $\bdc(A \cap C_i) \ge \bdc(A)$ for some $i$.
By symmetry, assume without loss of generality that $\bdc(A \cap C_1) \ge \bdc(A)$.
We get by submodularity that $\bdc(A \cup C_1) \le \bdc(C_1) < \bdc(B)$, which contradicts that $B$ is linked into $A$.
\end{proof}

\subsubsection{Obtaining linked sets}
We then present arguments for arguing that a set $A$ is linked into a set $B$.
We start with the following simple lemma.

\begin{lemma}
\label{lem:linkednesscornercase}
Let $G$ be a hypergraph and $A \subseteq B \subseteq E(G)$ so that $\bd(B) \subseteq \bd(A)$.
Then, $B$ is linked into $A$.
\end{lemma}
\begin{proof}
Because $\bd(B) \subseteq \bd(A)$, each vertex $v \in \bd(B)$ is incident to a hyperedge in $A$ and to a hyperedge in $\co{B}$.
Therefore if $A \subseteq X \subseteq B$, then each $v \in \bd(B)$ is incident to a hyperedge in $A \subseteq X$ and a hyperedge in $\co{B} \subseteq \co{X}$, implying $\bd(B) \subseteq \bd(X)$.
\end{proof}

Then we consider minimal witnesses for non-$k$-well-linkedness.
Let $k$ be an integer.
A \emph{non-$k$-well-linkedness witness} for $A$ is a bipartition $(B_1,B_2)$ of $A$ so that $\bdc(B_1) < \min(\bdc(A), k)$ and $\bdc(B_2) < \bdc(A)$.
A \emph{minimal non-$k$-well-linkedness witness} for $A$ is a non-$k$-well-linkedness witness $(B_1,B_2)$ for $A$, that among all non-$k$-well-linkedness witnesses $(B_1,B_2)$ minimizes $\bdc(B_1)+\bdc(B_2)$. 

\begin{lemma}
\label{lem:minnonkwlwitnesslinked}
Let $A \subseteq E(G)$, $k$ an integer, and $(B_1,B_2)$ a minimal non-$k$-well-linkedness witness for $A$.
Then, both $\co{B_1}$ and $\co{B_2}$ are linked into $\co{A}$.
\end{lemma}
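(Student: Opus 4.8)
The plan is to argue by contradiction, using the equivalence between ``linked into'' and a statement about sets sandwiched between $B_i$ and $A$, and then uncrossing such a set with $B_{3-i}$ via posimodularity to produce a strictly smaller non-$k$-well-linkedness witness, contradicting minimality.

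First I would record the reformulation obtained by complementation. Since $\bdc$ is symmetric, $\co{B_1}$ is linked into $\co{A}$ if and only if every set $Y$ with $B_1 \subseteq Y \subseteq A$ satisfies $\bdc(Y) \ge \bdc(B_1)$: one takes complements of the sets $X$ with $\co{A} \subseteq X \subseteq \co{B_1}$ and uses $\bdc(\co{Y}) = \bdc(Y)$ and $\bdc(\co{B_1}) = \bdc(B_1)$. Symmetrically, $\co{B_2}$ is linked into $\co{A}$ iff every $Y$ with $B_2 \subseteq Y \subseteq A$ has $\bdc(Y) \ge \bdc(B_2)$. Note that $\bdc(B_1) < \bdc(A)$ and $\bdc(B_2) < \bdc(A)$, so at $Y = A$ there is nothing to check, and hence any violating $Y$ is a proper subset of $A$.

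Then, suppose toward a contradiction that $\co{B_1}$ is not linked into $\co{A}$, and fix $Y$ with $B_1 \subseteq Y \subsetneq A$ and $\bdc(Y) < \bdc(B_1)$. Consider the bipartition $(Y, A \setminus Y)$ of $A$; since $B_1 \subseteq Y$ we have $A \setminus Y \subseteq A \setminus B_1 = B_2$. Applying \Cref{lem:uncrosswl} to the pair $B_2, Y$, and simplifying $B_2 \setminus Y = A \setminus Y$ and $Y \setminus B_2 = Y \cap B_1 = B_1$ (using $B_1 \subseteq Y \subseteq A$), we obtain that either $\bdc(A \setminus Y) \le \bdc(B_2)$ or $\bdc(B_1) \le \bdc(Y)$; the latter contradicts the choice of $Y$, so $\bdc(A \setminus Y) \le \bdc(B_2)$. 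Combined with $\bdc(Y) < \bdc(B_1) < \min(\bdc(A), k)$ and $\bdc(A \setminus Y) \le \bdc(B_2) < \bdc(A)$, the pair $(Y, A \setminus Y)$ is a non-$k$-well-linkedness witness for $A$ with $\bdc(Y) + \bdc(A \setminus Y) < \bdc(B_1) + \bdc(B_2)$, contradicting minimality of $(B_1,B_2)$. The argument for $\co{B_2}$ is the mirror image: from $Y$ with $B_2 \subseteq Y \subsetneq A$ and $\bdc(Y) < \bdc(B_2)$ one gets $A \setminus Y \subseteq B_1$, and \Cref{lem:uncrosswl} applied to $B_1, Y$ yields $\bdc(A \setminus Y) \le \bdc(B_1)$; then $(A \setminus Y, Y)$ is a non-$k$-well-linkedness witness whose ``small'' first part satisfies $\bdc(A \setminus Y) \le \bdc(B_1) < \min(\bdc(A),k)$, again with strictly smaller total.

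The only part needing care --- the ``main obstacle'', although it is still routine --- is the bookkeeping in the uncrossing step: verifying that the set differences collapse as claimed, and ensuring the inequalities remain strict so that $\bdc(Y) + \bdc(A \setminus Y)$ is genuinely below $\bdc(B_1) + \bdc(B_2)$ rather than merely $\le$. Everything else is just symmetry and submodularity of $\bdc$ packaged in \Cref{lem:uncrosswl}.
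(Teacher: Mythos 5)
Your proof is correct and follows essentially the same route as the paper: after complementing, the violating set $Y$ yields exactly the witness $(\co{X}, X\cap A)=(Y, A\setminus Y)$ that the paper constructs, and the key bound $\bdc(A\setminus Y)\le\bdc(B_{3-i})$ is obtained there by a direct submodularity computation rather than via the posimodularity dichotomy of \Cref{lem:uncrosswl}, but the two are interchangeable here. The bookkeeping and strictness of the total are handled correctly, so no gap.
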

\begin{proof}
Suppose there is $i \in [2]$ so that $\co{B_i}$ is not linked into $\co{A}$, and let $X$ be a set with $\co{A} \subseteq X \subseteq \co{B_i}$ and $\bdc(X) < \bdc(B_i)$.
We claim that either $(\co{X}, X \cap A)$ or $(X \cap A, \co{X})$ is a non-$k$-well-linkedness witness for $A$ that contradicts the minimality of $(B_1,B_2)$.

Let $j = 3-i$.
By assumption, $\bdc(\co{X}) < \bdc(B_i)$, so it suffices to prove that $\bdc(X \cap A) < \bdc(B_j)$.
We have that
\begin{align*}
\bdc(X \cap A) = \bdc(X \cap B_j) &\le \bdc(X) + \bdc(B_j) - \bdc(X \cup B_j) && \text{(by submodularity)}\\
&\le \bdc(X) + \bdc(B_j) - \bdc(\co{B_i}) && \text{(by $X \cup B_j = \co{B_i}$)} \\
&< \bdc(B_j) && \text{(by $\bdc(X) < \bdc(B_i)$)}.
\end{align*}
\end{proof}

Note that \Cref{lem:minnonkwlwitnesslinked} can also be used in the context of well-linkedness by setting $k=\bdc(A)$.

We will furthermore need the following lemma related to proving that a set is linked into another.

\begin{lemma}
\label{lem:linkednessexpand}
Let $G$ be a hypergraph and $C \subseteq E(G)$ a $k$-better-linked set.
Let also $A \subseteq B \subseteq E(G \rescliqs C)$ so that $B$ is linked into $A$ in $G \rescliqs C$.
Assume that $\bdc(B) < k$.
Then, $B \orescliqs C$ is linked into $A \orescliqs C$ in $G$.
\end{lemma}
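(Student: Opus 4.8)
The plan is to argue by contradiction. Write $A' = A \orescliqs C$ and $B' = B \orescliqs C$, and suppose $B'$ is not linked into $A'$ in $G$, witnessed by a set $X$ with $A' \subseteq X \subseteq B'$ and $\bdc_G(X) < \bdc_G(B')$. Using the identity $\bdc_G(Y \orescliqs C) = \bdc_{G \rescliqs C}(Y)$ for $Y \subseteq E(G \rescliqs C)$, together with the hypothesis $\bdc(B) < k$, this gives $\bdc_G(X) < \bdc_G(B') = \bdc_{G \rescliqs C}(B) < k$; the bound $\bdc_G(X) < k$ is the crucial numeric fact. The aim is to produce a set $\hat X \subseteq E(G \rescliqs C)$ with $A \subseteq \hat X \subseteq B$ and $\bdc_{G \rescliqs C}(\hat X) \le \bdc_G(X) < \bdc_{G \rescliqs C}(B)$, contradicting that $B$ is linked into $A$ in $G \rescliqs C$.

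First I would handle the easy cases. If $e_C \notin B$, then $X \subseteq B \subseteq \co C$, so $\bdc_G(X) = \bdc_{G \rescliqs C}(X)$ and $\hat X = X$ works immediately. If $e_C \in A$ (hence $C \subseteq A' \subseteq X$), take $\hat X = (X \setminus C) \cup \{e_C\}$: since $C \subseteq X$ this satisfies $\hat X \orescliqs C = X$, so $\bdc_{G\rescliqs C}(\hat X) = \bdc_G(X)$, and $A \subseteq \hat X \subseteq B$ follows from $A \setminus \{e_C\} \subseteq A' \setminus C \subseteq X \setminus C$ and $X \setminus C \subseteq B' \setminus C = B \setminus \{e_C\}$. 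Neither of these cases uses well-linkedness of $C$.

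The main case is $e_C \in B$ and $e_C \notin A$, so $A' = A$ is disjoint from $C$. The key step here is the following uncrossing claim: \emph{if $C$ is $k$-better-linked and $\bdc_G(X) < k$, then $\bdc_G(X \cup C) \le \bdc_G(X)$ or $\bdc_G(X \setminus C) \le \bdc_G(X)$.} Granting this, in the first alternative I take $\hat X = (X \setminus C) \cup \{e_C\}$ (which projects to $X \cup C$) and in the second $\hat X = X \setminus C$; in both, $A \subseteq \hat X \subseteq B$ is routine from $A \subseteq X$, $A \cap C = \emptyset$, and $X \setminus C \subseteq B \setminus \{e_C\}$, and $\bdc_{G \rescliqs C}(\hat X) \le \bdc_G(X) < \bdc_{G\rescliqs C}(B)$, yielding the contradiction.

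It remains to prove the uncrossing claim, which I expect to be the main obstacle. After disposing of the extremes $C \subseteq X$ and $C \cap X = \emptyset$ (both trivial), consider the bipartition $(C \cap X, C \setminus X)$ of $C$. If $C$ is well-linked, one of these parts has $\bdc \ge \bdc(C)$, and submodularity of $\bdc$ applied to $C$ and $X$ (respectively $C$ and $\co X$), together with symmetry, converts this into $\bdc(X \cup C) \le \bdc(X)$ (respectively $\bdc(X \setminus C) \le \bdc(X)$). If instead $C$ is $k$-well-linked and $\co C$ is well-linked, $k$-well-linkedness gives either one of those two good outcomes, or the ``bad'' outcome that $\bdc(C \cap X) \ge k$ and $\bdc(C \setminus X) \ge k$; I would rule out the bad outcome by applying well-linkedness of $\co C$ to the bipartition $(\co C \cap X, \co C \cap \co X)$ and using submodularity to deduce $\bdc(C \setminus X) \le \bdc_G(X)$ or $\bdc(C \cap X) \le \bdc_G(X)$, both below $k$, a contradiction. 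This elimination is exactly where the hypothesis $\bdc(B) < k$ is used; all the remaining work is careful bookkeeping with the $\rescliqs$/$\orescliqs$ notation, keeping track of whether $e_C$ lies in $A$ and in $B$.
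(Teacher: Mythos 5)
Your proposal is correct and follows essentially the same route as the paper's proof: the same case split on whether $e_C$ lies in $A$, in $B\setminus A$, or outside $B$, and the same uncrossing of the witness $X$ with $C$ via submodularity, using well-linkedness of $C$ (resp.\ $k$-well-linkedness of $C$ plus well-linkedness of $\co{C}$, with the bound $\bdc(X)<\bdc(B)<k$ ruling out the case $\bdc(C\cap X),\bdc(C\setminus X)\ge k$). The only cosmetic difference is that you phrase the elimination of the bad case as the contrapositive of the paper's argument.
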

\begin{proof}
Let $A' = A \orescliqs C$ and $B' = B \orescliqs C$.
Let $X \subseteq E(G)$ be a set with $A' \subseteq X \subseteq B'$ witnessing that $B'$ is not linked into $A'$, i.e., with $\bdc(X) < \bdc(B')$.
If $e_C \in A$, then $C \subseteq X$ and $(X \setminus C) \cup \{e_C\}$ would witness that $B$ is not linked into $A$ in $G \rescliqs C$.
Similarly, if $e_C \notin B$, then $X$ would witness that $B$ is not linked into $A$ in $G \rescliqs C$.
Therefore, assume that $e_C \in B \setminus A$, implying $C \subseteq B' \setminus A'$.

By submodularity, we have that
\begin{equation}
\label{lem:linkednessexpand:eq1}
\bdc(X \cup C) \le \bdc(X) + \bdc(C) - \bdc(X \cap C),
\end{equation}
and
\begin{equation}
\label{lem:linkednessexpand:eq2}
\bdc(X \setminus C) = \bdc(X \cap \co{C}) = \bdc(\co{X} \cup C) \le \bdc(X) + \bdc(C) - \bdc(\co{X} \cap C).
\end{equation}
If $C$ is well-linked, we have that either $\bdc(X \cap C) \ge \bdc(C)$ or $\bdc(\co{X} \cap C) \ge \bdc(C)$.
In the former case, \Cref{lem:linkednessexpand:eq1} implies that $\bdc(X \cup C) \le \bdc(X)$, and in the latter case, \Cref{lem:linkednessexpand:eq2} implies that $\bdc(X \setminus C) \le \bdc(X)$.
In either case, we obtain a set $X'$ with $\bdc(X') \le \bdc(X)$ and $A' \subseteq X' \subseteq B'$, so that either $C \subseteq X'$ or $C$ is disjoint with $X'$.
Therefore, either $X'$ or $(X' \setminus C) \cup \{e_C\}$ would contradict that $B$ is linked into $A$ in $G \rescliqs C$.

The other case is that $C$ is $k$-well-linked and $\co{C}$ is well-linked.
In this case, \Cref{lem:linkednessexpand:eq1,lem:linkednessexpand:eq2} still hold, and the above two cases of $\bdc(X \cap C) \ge \bdc(C)$ and $\bdc(\co{X} \cap C) \ge \bdc(C)$ work similarly as previosly, but we also have the case that $\bdc(X \cap C) \ge k$ and $\bdc(\co{X} \cap C) \ge k$.
Because $k > \bdc(B) > \bdc(X)$, by submodularity the former implies $\bdc(\co{X} \cap \co{C}) < \bdc(C)$, and the latter implies $\bdc(X \cap \co{C}) < \bdc(C)$.
However, this would contradict that $\co{C}$ is well-linked.
\end{proof}

\subsection{Tangles}
We then prove graph-theoretic lemmas about tangles, which were defined in \Cref{subsec:defs:graphtheory}.

\subsubsection{Separations distinguishing tangles}
Recall that a separation $(A,\co{A})$ distinguishes two tangles $\tang_1$ and $\tang_2$ if $A \in \tang_1$ and $\co{A} \in \tang_2$.
The following lemma is useful when working with separations distinguishing tangles.

\begin{lemma}
\label{lem:tangledistpushing}
Let $G$ be a hypergraph and $(A,\co{A})$ a separation of $G$ that distinguishes a tangle $\tang_1$ from a tangle $\tang_2$.
If there exists a bipartition $(B_1,B_2)$ of $A$ so that $\bdc(B_i) \le \bdc(A)$ for both $i \in [2]$, then either $(B_1,\co{B_1})$ or $(B_2,\co{B_2})$ distinguishes $\tang_1$ from $\tang_2$.
\end{lemma}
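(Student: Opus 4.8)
The plan is to push the orientations that $\tang_1$ and $\tang_2$ give to $(A,\co A)$ down to the finer sets $B_1,B_2$, using only the order axiom, the orientation axiom (axiom~2), and the ``no three tangle sets cover $E(G)$'' axiom (axiom~3).

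First I would record the orders: let $k_1,k_2$ be the orders of $\tang_1,\tang_2$. Since $A\in\tang_1$ and $\co A\in\tang_2$, axiom~1 gives $\bdc(A)<k_1$ and $\bdc(\co A)=\bdc(A)<k_2$; combined with the hypothesis $\bdc(B_i)\le\bdc(A)$ this yields $\bdc(B_i)<\min(k_1,k_2)$ for both $i\in[2]$. Hence axiom~2 applies to each $B_i$ in each tangle, so for every $i\in[2]$ and $j\in[2]$ we have $B_i\in\tang_j$ or $\co{B_i}\in\tang_j$. The next step is to show $B_1,B_2\in\tang_1$: if some $B_i\notin\tang_1$ then $\co{B_i}\in\tang_1$, and since $B_i\subseteq A$ we have $A\cup\co{B_i}=E(G)$; together with $A\in\tang_1$ this contradicts axiom~3 (taking the three sets to be $A$, $\co{B_i}$, $A$, repetitions being allowed). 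Symmetrically, I would show that at least one of $\co{B_1},\co{B_2}$ lies in $\tang_2$: otherwise $B_1,B_2\in\tang_2$ by the orientation step, and since $B_1\cup B_2=A$ and $\co A\in\tang_2$ we get $B_1\cup B_2\cup\co A=E(G)$, contradicting axiom~3.

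Finally, picking an index $i$ with $\co{B_i}\in\tang_2$, which exists by the previous paragraph, we also have $B_i\in\tang_1$, so $(B_i,\co{B_i})$ distinguishes $\tang_1$ from $\tang_2$, which is what we wanted. There is no serious obstacle here; the only point needing a little care is that axiom~3 is being invoked with a repeated set, and that the degenerate cases $A=E(G)$ or $B_i=\varnothing$ do not break the covering identities $A\cup\co{B_i}=E(G)$ and $B_1\cup B_2\cup\co A=E(G)$ — the case $A=E(G)$ is in fact already excluded by $A\in\tang_1$ via axiom~3, and otherwise these identities hold verbatim.
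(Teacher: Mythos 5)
Your proof is correct and follows essentially the same route as the paper's: both $B_i$ must lie in $\tang_1$ (the paper compresses this to ``because they are subsets of $A$,'' which unpacks to exactly your axiom-2/axiom-3 argument), not both can lie in $\tang_2$ since $B_1\cup B_2\cup\co{A}=E(G)$, and the second tangle axiom then puts some $\co{B_i}$ in $\tang_2$. Your version just spells out the order checks and degenerate cases that the paper leaves implicit.
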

\begin{proof}
Note that $B_i \in \tang_1$ for both $i \in [2]$ because they are subsets of $A$.
Then, it cannot be that $B_1 \in \tang_2$ and $B_2 \in \tang_2$, as with $\co{A} \in \tang_2$ they would violate the third tangle axiom.
Therefore, either $\co{B_1} \in \tang_2$ or $\co{B_2} \in \tang_2$, implying the conclusion.
\end{proof}

We then prove that if there exists a separation distinguishing two tangles, then there exists a separation distinguishing two tangles with specific properties.
This will be used in \Cref{sec:tangubrtorsos}.

\begin{lemma}
\label{lem:tangdistinguisherprops}
Let $G$ be a hypergraph.
If $G$ contains a separation of order $<k$ that distinguishes two tangles, then $G$ contains a separation $(A,\co{A})$ so that
\begin{itemize}
\item $(A,\co{A})$ distinguishes two tangles,
\item $\bdc(A) < k$,
\item $|V(A)| \le |V(\co{A})|$,
\item $(A,\co{A})$ is doubly well-linked, and
\item $A$ is internally connected.
\end{itemize}
\end{lemma}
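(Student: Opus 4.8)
Start from an arbitrary separation $(A_0,\co{A_0})$ of order $<k$ that distinguishes two tangles $\tang_1,\tang_2$ (with $A_0\in\tang_1$, $\co{A_0}\in\tang_2$), and successively improve it to acquire the four extra properties: smaller side has $\le$ half the vertices, doubly well-linked, and internally connected. The key tool throughout is \Cref{lem:tangledistpushing}: whenever we split one side of a distinguishing separation into two parts each of border size $\le$ the original, one of the two resulting separations still distinguishes $\tang_1$ from $\tang_2$. Combined with the submodularity/posimodularity machinery (\Cref{lem:uncrosswl}, \Cref{lem:uncrosswlappl}) and \Cref{lem:minnonkwlwitnesslinked}, this lets us push toward the desired shape while never increasing the order.

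\textbf{Step 1: cardinality.} Among all separations $(A,\co{A})$ of order $<k$ distinguishing $\tang_1$ from $\tang_2$, we may pass to one minimizing $\bdc(A)$; say its order is $r<k$. If $|V(A)|>|V(\co{A})|$ we simply swap the names, which is fine since the roles of $\tang_1,\tang_2$ are symmetric. Actually this alone gives $|V(A)|\le|V(\co{A})|$ for free, so cardinality costs nothing; the real content is making this coexist with the next two properties, so I would instead carry the cardinality condition as a tie-breaker and re-establish it at the end (every later operation only shrinks the chosen side, so it cannot make $|V(A)|>|V(\co{A})|$ once we arrange to modify only the smaller side).

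\textbf{Step 2: double well-linkedness.} Among separations of minimum order $r$ distinguishing $\tang_1,\tang_2$, pick one, say $(A,\co{A})$, and suppose $A$ is not well-linked. By \Cref{lem:uncrosswlappl} there is a bipartition $(B_1,B_2)$ of $A$ with $\bdc(B_i)<r$ for both $i$; by \Cref{lem:tangledistpushing} one of $(B_i,\co{B_i})$ still distinguishes $\tang_1$ from $\tang_2$ and has strictly smaller order, contradicting minimality of $r$. Hence $A$ is well-linked; symmetrically $\co{A}$ is well-linked, so $(A,\co{A})$ is doubly well-linked. (One must be slightly careful that $\co{A}\in\tang_2$ and that splitting $\co{A}$ preserves membership — but \Cref{lem:tangledistpushing} is stated symmetrically enough, or one applies it to the reversed separation.)

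\textbf{Step 3: internal connectivity, and the main obstacle.} This is the step I expect to be genuinely delicate. The internal components of $A$ partition $A$, and each internal component $C$ has $\bd(C)\subseteq\bd(A)$. Pick any internal component $C$ of $A$. Then $(C,\co{C})$ has $\bdc(C)\le\bdc(A)=r$, and $C$ is internally connected by definition; moreover $C\in\tang_1$ since $C\subseteq A$. The issue is to show we can choose $C$ so that $(C,\co{C})$ still \emph{distinguishes} $\tang_1$ from $\tang_2$, i.e. $\co{C}\in\tang_2$ — and that the resulting separation can still be made doubly well-linked and satisfy the cardinality bound. If $A$ has $\ge 2$ internal components $C_1,\dots,C_p$, write $A=C_1\cup\dots\cup C_p$; since $\co{A}\in\tang_2$ and $C_i\in\tang_1\subseteq$ "not forced into $\tang_2$", the third tangle axiom (applied as in \Cref{lem:tangledistpushing}, iterated) forbids all of $C_1,\dots,C_p,\co{A}$ lying in $\tang_2$, so some $\co{C_i}\in\tang_2$, giving a distinguishing separation $(C_i,\co{C_i})$ with $C_i$ internally connected and order $\le r$. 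To keep double well-linkedness: if $(C_i,\co{C_i})$ is not doubly well-linked, rerun Step 2 on it; but Step 2 replaces $C_i$ by a subset, which might break internal connectivity. The fix I would use is a combined minimization: among all separations $(A,\co{A})$ of order $<k$ that distinguish $\tang_1$ from $\tang_2$ \emph{and} have $A$ internally connected, choose one minimizing first $\bdc(A)$ and then $|A|$; then argue (a) by the internal-component/tangle-axiom argument that such separations exist with any prescribed small order, and (b) that non-well-linkedness of such a minimal $A$ yields, via \Cref{lem:uncrosswl} and the fact that shrinking to an internal component of the violating part does not increase order, a contradiction to minimality — here one needs that an internal component of $B_1$ (or $B_2$) is still internally connected and, crucially, that $\bd$ of that component is $\subseteq\bd(B_i)\subseteq$ something small, together with \Cref{lem:tangledistpushing} to keep distinguishing. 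Well-linkedness of $\co{A}$: if violated, \Cref{lem:uncrosswlappl} gives $D_1\cup D_2=\co{A}$ with $\bdc(D_j)<r$; by submodularity $\bdc(A\cup D_j)\le\bdc(D_j)<r$ for the $j$ making $A\cup D_j$ distinguishing, but $A\cup D_j$ need not be internally connected. So I would finish $\co{A}$-well-linkedness last, \emph{after} fixing cardinality: replace $(A,\co{A})$ by the separation witnessing the failure — taking the internal component of the relevant side — and note that because $|V(A)|\le|V(\co{A})|$, uncrossing on the $\co{A}$ side produces a side whose vertex count only decreases, so cardinality is preserved and the process terminates by a potential-function argument on $(\bdc(A),|A|)$. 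The heart of the difficulty, and where I would spend the most care, is choreographing these three improvements so that fixing one does not undo another; the clean way is a single lexicographic minimization over $(\bdc(A),|A|)$ restricted to distinguishing separations with $A$ internally connected and $|V(A)|\le |V(\co{A})|$, then showing any failure of well-linkedness of $A$ or of $\co{A}$ contradicts minimality via \Cref{lem:uncrosswl}, \Cref{lem:tangledistpushing}, and the observation that passing to an internal component preserves both internal connectivity and the distinguishing property while not increasing order.
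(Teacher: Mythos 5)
Your final plan is essentially the paper's proof: both rest on an extremal choice of a distinguishing separation together with \Cref{lem:tangledistpushing} to uncross. The paper organizes it more cleanly than your Step~3, via a single lexicographic minimization of $(\bdc(A),\,|V(A)|,\,|A|)$ over \emph{all} distinguishing separations of order $<k$ (no restriction to internally connected sides): order-minimality gives double well-linkedness exactly as in your Step~2, symmetry of the first three criteria plus minimality of $|V(A)|$ gives $|V(A)|\le|V(\co{A})|$, and minimality of $|A|$ gives internal connectivity, since a bipartition of $A$ into two nonempty parts with borders contained in $\bd(A)$ yields, again by \Cref{lem:tangledistpushing}, a distinguishing part with no larger order, no more vertices, and strictly fewer hyperedges. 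Your restricted-class version also goes through, but you leave implicit the one point that makes it work, namely that the restricted class is non-empty and reachable at every order: after orienting so that $|V(P)|\le|V(\co{P})|$, the internal component $C$ of $P$ that still distinguishes satisfies $|V(C)|\le|V(P)|\le|V(\co{P})|\le|V(\co{C})|$ because $\co{C}\supseteq\co{P}$, so the internally connected distinguishing side is automatically the smaller one; without this observation your cardinality constraint could a priori conflict with internal connectivity.
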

\begin{proof}
Let $(A,\co{A})$ be a separation so that (1) $\bdc(A) < k$, (2) $(A,\co{A})$ distinguishes a tangle $\tang_1$ from a tangle $\tang_2$ (with $A \in \tang_1$ and $\co{A} \in \tang_2$), (3) subject to (1) and (2), $\bdc(A)$ is minimized, (4) subject to (1),(2), and (3), $|V(A)|$ is minimized, and (5) subject to (1), (2), (3), and (4), $|A|$ is minimized.
By assumption, $G$ contains a separation satisfying (1) and (2), and therefore $G$ contains a separation $(A,\co{A})$ satisfying all (1), (2), (3), (4), and (5).
We claim that $(A,\co{A})$ satisfies the required properties.

By assumption, $\bdc(A) < k$.
Furthermore, because (1), (2), and (3) do not change if $A$ is swapped with $\co{A}$, and by (4), $|V(A)|$ is minimized, we have that $|V(A)| \le |V(\co{A})|$.

\begin{claim}
\label{lem:tangdistinguisherprops:claim1}
Both $A$ and $\co{A}$ is well-linked.
\end{claim}
\begin{claimproof}
Let us prove that $A$ is well-linked.
The same argument works to show that $\co{A}$ is well-linked.

Suppose for a contradiction that there exists a bipartition $(B_1,B_2)$ of $A$ so that $\bdc(B_i) < \bdc(A)$ for both $i \in [2]$.
By \Cref{lem:tangledistpushing}, either $(B_1, \co{B_1})$ or $(B_2, \co{B_2})$ distinguishes $\tang_1$ from $\tang_2$.
However, these separations have order smaller than $(A,\co{A})$, so this contradicts the third condition in the choice of $(A,\co{A})$.
\end{claimproof}

\begin{claim}
\label{lem:tangdistinguisherprops:claim2}
$A$ is internally connected.
\end{claim}
\begin{claimproof}
Suppose for a contradiction that there exists a bipartition $(B_1,B_2)$ of $A$ so that $B_i$ is non-empty and $\bd(B_i) \subseteq \bd(A)$ for both $i \in [2]$.
By \Cref{lem:tangledistpushing}, now either $(B_1, \co{B_1})$ or $(B_2, \co{B_2})$ distinguishes $\tang_1$ from $\tang_2$.
For both $i \in [2]$, we have $\bdc(B_i) \le \bdc(A)$, $|V(B_i)| \le |V(A)|$, and $|B_i| < |A|$, so this contradicts the fourth condition in the choice of $(A,\co{A})$.
\end{claimproof}
\Cref{lem:tangdistinguisherprops:claim1,lem:tangdistinguisherprops:claim2} finish the proof.
\end{proof}

Then we show that separations distinguishing tangles can be uncrossed with mixed-$k$-well-linked separations.
This will be used in \Cref{sec:smalleradhesions}.

\begin{lemma}
\label{lem:tangledisuncross}
Let $(A,\co{A})$ be a separation of a hypergraph $G$ of order $<k$ that distinguishes two tangles.
Let also $(B,\co{B})$ be a mixed-$k$-well-linked separation of $G$.
Then there are orientations $(A',\co{A'})$ of $(A,\co{A})$ and $(B',\co{B'})$ of $(B,\co{B})$ so that $(A' \cup B', \co{A'} \cap \co{B'})$ distinguishes two tangles.
\end{lemma}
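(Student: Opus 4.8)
The plan is to \emph{uncross} the separation $(A,\co A)$ with $(B,\co B)$ by moving to one of the four ``corner'' separations $(A'\cup B',\co{A'}\cap\co{B'})$ determined by the two (possibly crossing) separations, using submodularity of $\bdc$ to control the orders of the corners and the tangle axioms --- especially the third axiom, together with the elementary fact (a consequence of the third axiom with $A=B$) that a tangle cannot contain both a set and its complement --- to locate the two tangles. Fix tangles $\tang_1,\tang_2$ with $A\in\tang_1$ and $\co A\in\tang_2$. Since $(B,\co B)$ is mixed-$k$-well-linked, after possibly swapping $B$ with $\co B$ I may assume $B$ is well-linked and $\co B$ is $k$-well-linked. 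Applying well-linkedness of $B$ to the bipartition $(A\cap B,\co A\cap B)$ of $B$, and using that both the hypotheses and the conclusion are symmetric under swapping $A$ with $\co A$ and $\tang_1$ with $\tang_2$, I may assume $\bdc(A\cap B)\ge\bdc(B)$. Submodularity then gives $\bdc(\co A\cap\co B)=\bdc(A\cup B)\le\bdc(A)<k$, so $\co A\cap\co B\in\tang_2$ because it is a subset of $\co A$ of order $<k$. By the second tangle axiom, either $A\cup B\in\tang_1$ --- and then $(A\cup B,\co A\cap\co B)$ distinguishes $\tang_1$ from $\tang_2$, so we are done with $A'=A$, $B'=B$ --- or else $\co A\cap\co B\in\tang_1$ as well.

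So assume $\co A\cap\co B\in\tang_1\cap\tang_2$. Here I would invoke the $k$-well-linkedness of $\co B$: applied to the bipartition $(A\cap\co B,\co A\cap\co B)$ of $\co B$, and using $\bdc(\co A\cap\co B)<k$ to rule out the ``both $\ge k$'' alternative, it yields $\bdc(A\cap\co B)\ge\bdc(\co B)$ or $\bdc(\co A\cap\co B)\ge\bdc(\co B)$; submodularity then makes $\bdc(\co A\cap B)<k$ in the first case, and $\bdc(A\cap B)<k$ in the second. Either way one obtains a second small corner $X$, contained in $A$ or in $\co A$ and hence oriented by the appropriate original tangle. I would then test whether the uncrossed separation $(\co X,X)$ distinguishes $\tang_1$ from $\tang_2$; if it does, writing it as $(A'\cup B',\co{A'}\cap\co{B'})$ with the evident orientations of $(A,\co A)$ and $(B,\co B)$ finishes the proof. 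If it does not, then $X$ must lie in both tangles, and now one of the triples of corners that together cover $E(G)$ --- such as $\{A,\ \co A\cap B,\ \co A\cap\co B\}$ or $\{\co A,\ A\cap B,\ A\cap\co B\}$ --- would consist of three members of a common tangle whose union is $E(G)$, contradicting the third tangle axiom. This contradiction forces the uncrossed separation to distinguish $\tang_1$ from $\tang_2$, completing the argument.

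The routine part is the bookkeeping of the submodularity inequalities relating the orders of $A$, $B$, and the four corners. The delicate part --- and the precise place where mixed-$k$-well-linkedness is needed rather than mere well-linkedness of one side --- is exactly the situation where an uncrossed corner ends up oriented by \emph{both} tangles: resolving it requires descending one level deeper, using the $k$-well-linkedness of $\co B$ to produce the second small corner and then a triple-covering application of the third tangle axiom. I expect this nested case analysis, together with a handful of degenerate sub-cases (for instance $B\subseteq A$ or $A\subseteq B$, or a corner equal to $E(G)\setminus\{e\}$, where one instead argues directly via $\emptyset\in\tang$ and axiom~4), to be the main thing that needs to be handled carefully.
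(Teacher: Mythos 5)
Your overall strategy is the same as the paper's: uncross $(A,\co A)$ with $(B,\co B)$ via submodularity, orient the resulting small corners using the first two tangle axioms, and close the bad branches with the third axiom (the paper packages the last step as \Cref{lem:tangledistpushing}). Your Case~(a) --- where the $k$-well-linkedness of $\co B$ yields $\bdc(A\cap\co B)\ge\bdc(\co B)$ and hence $\bdc(\co A\cap B)\le\bdc(A)$ --- is correct: there $\co A\cap B$ and $\co A\cap\co B$ form a bipartition of $\co A$ into two corners of order $\le\bdc(A)$, both subsets of $\co A\in\tang_2$, and the triple $\{A,\,\co A\cap B,\,\co A\cap\co B\}$ closes the contradiction.

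The gap is in Case~(b), where the disjunct $\bdc(\co A\cap\co B)\ge\bdc(\co B)$ holds and your second small corner is $X=A\cap B$. The two small corners you then hold, $A\cap B$ and $\co A\cap\co B$, are \emph{diagonal}: they form a bipartition of neither $A$ nor $\co A$, and their union misses both $A\cap\co B$ and $\co A\cap B$. Your closing triple $\{\co A,\,A\cap B,\,A\cap\co B\}$ therefore needs $A\cap\co B\in\tang_2$, but you have established no bound on $\bdc(A\cap\co B)$ in this case; since the tangles are only guaranteed to have order $\bdc(A)+1$, that corner need not be orientable by either tangle at all, and even with an order bound it is a subset of $A\in\tang_1$, so placing it in $\tang_2$ would itself require the second axiom and a further sub-case. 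The fix stays within your toolkit: in Case~(b) one has $\bdc(B)=\bdc(\co B)\le\bdc(\co A\cap\co B)\le\bdc(A)$, so submodularity gives $\bdc(A\cap\co B)+\bdc(\co A\cap B)\le\bdc(A)+\bdc(B)\le 2\bdc(A)$, hence at least one off-diagonal corner has order $\le\bdc(A)$; whichever it is, orient it by the subset rule, apply the second axiom to its complement (the distinguishing branch finishes the proof), and close the remaining branch with the corresponding one of your two triples. The paper avoids this entirely by arranging its submodularity chain so that every case ends with a bipartition of $A$ or of $\co A$ into two corners of order $\le\bdc(A)$, to which \Cref{lem:tangledistpushing} applies in one stroke.
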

\begin{proof}
Assume \wilog that $B$ is well-linked and $\co{B}$ is $k$-well-linked.

First, consider the case that $\bdc(A \cap \co{B}) > \bdc(A)$.
By submodularity and symmetry, we have that $\bdc(\co{A} \cap B) < \bdc(B)$.
By well-linkedness of $B$, this implies $\bdc(A \cap B) \ge \bdc(B)$, which in turn by submodularity and symmetry implies $\bdc(\co{A} \cap \co{B}) \le \bdc(A)$.
Then, because $\bdc(A) < k$ and $\co{B}$ is $k$-well-linked, the inequality $\bdc(\co{A} \cap \co{B}) \le \bdc(A)$ implies that $\bdc(A \cap \co{B}) \ge \bdc(B)$, which in turn by submodularity and symmetry implies $\bdc(\co{A} \cap B) \le \bdc(A)$.

So in the end, $\bdc(A \cap \co{B}) > \bdc(A)$ implies that $\bdc(\co{A} \cap \co{B}) \le \bdc(A)$ and $\bdc(\co{A} \cap B) \le \bdc(A)$.
Applying \Cref{lem:tangledistpushing} (substituting $\co{A}$ for $A$) tells that then either $(\co{A} \cap B, \co{\co{A} \cap B})$ or $(\co{A} \cap \co{B}, \co{\co{A} \cap \co{B}})$ distinguishes two tangles, so we are done in this case.

The remaining case is that $\bdc(A \cap \co{B}) \le \bdc(A)$.
Because $\bdc(A) < k$ and $\co{B}$ is $k$-well-linked, this implies $\bdc(\co{A} \cap \co{B}) \ge \bdc(B)$, implying by submodularity and symmetry that $\bdc(A \cap B) \le \bdc(A)$.
We are again done by applying \Cref{lem:tangledistpushing}.
\end{proof}

\subsubsection{Uncrossing tangles with tri-well-linked sets}
Then we show that tri-well-linked sets are useful for ``uncrossing'' tangles, in some sense.
These statements will be used later in this section to prove statements that will be used in \Cref{sec:fromtubrtoubr}.

First we show that if $A \subseteq E(G)$ is a tri-well-linked set, then tangles of $G \rescliqs A$ satisfying some conditions can be used to construct tangles of $G$.

\begin{lemma}
\label{lem:tanglelifting}
Let $G$ be a hypergraph and $A \subseteq E(G)$ a tri-well-linked set.
Suppose $G \rescliqs A$ contains a tangle $\tang$ of order $k$, so that there exists $C \subseteq E(G \rescliqs A)$ with $e_A \in C \in \tang$, where $e_A$ is the hyperedge of $G \rescliqs A$ corresponding to $A$.
Then, $G$ contains a tangle $\tang'$ of order $k$, so that for all $D \subseteq E(G \rescliqs A)$, $D \in \tang$ implies $D \orescliqs A \in \tang'$.
\end{lemma}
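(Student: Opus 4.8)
The plan is to define $\tang'$ as the family of all separations of $G$ that, after "shrinking" $A$ back into $e_A$, land inside $\tang$, together with those low-order separations that are forced into $\tang'$ because one of their sides fails to contain $A$ cleanly. Concretely, for a separation $(D',\co{D'})$ of $G$ with $\bdc(D') < k$, I would want to put $D' \in \tang'$ in the following two situations: (i) if $A \subseteq \co{D'}$, then $D'$ corresponds to a separation $D$ of $G \rescliqs A$ (namely $D = D'$, since $D'$ contains none of $A$), and we set $D' \in \tang'$ iff $D \in \tang$; (ii) if $A$ crosses $D'$, i.e.\ $A \cap D' \neq \emptyset \neq A \cap \co{D'}$, we use tri-well-linkedness of $A$ to uncross. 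The key computation is the standard one: since $A$ is tri-well-linked and $(A \cap D', A \cap \co{D'})$ is a bipartition of $A$ into two parts, if both had border $<\bdc(A)$ we would get a contradiction via \Cref{lem:uncrosswl} applied to the triple — so at least one of $\bdc(A \cup D') \le \bdc(D')$ or $\bdc(A \cup \co{D'}) \le \bdc(\co{D'})$ holds (using submodularity: $\bdc(A \cap D') \ge \bdc(A)$ forces $\bdc(A \cup D') \le \bdc(D')$). This gives a canonical way to push $D'$ to a parallel separation $D^*$ with $A \subseteq D^*$ or $A \subseteq \co{D^*}$ and $\bdc(D^*) \le \bdc(D')$, and we orient $(D',\co{D'})$ according to how the shrunk version of $D^*$ is oriented by $\tang$. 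The case $A \subseteq D'$ (with $\bdc(D') < k$) is symmetric: since $D' \supseteq A$ and $e_A \in C \in \tang$, the shrunk separation $(D' \setminus A) \cup \{e_A\}$ has the same border, so we orient by $\tang$ again; but we must check this is consistent with the hypothesis that $e_A$ lies on the "small side" $C$, which is exactly why the hypothesis $e_A \in C \in \tang$ is needed to rule out the degenerate orientation.

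After fixing these definitions, the four tangle axioms must be verified. Axiom 1 ($\bdc(D') < k$ for all $D' \in \tang'$) is immediate from the construction. Axiom 2 (every low-order separation is oriented) follows because for every separation $(D',\co{D'})$ of order $<k$ we have produced, via the uncrossing argument above, a parallel separation whose shrunk version is oriented by $\tang$, and the orientation of $D'$ is inherited from that — one should double-check that the two ways of orienting (via $D'$ or via $\co{D'}$) agree, which again comes down to the submodularity/uncrossing computation. Axiom 4 ($E(G) \setminus \{e\} \notin \tang'$ for each $e \in E(G)$): if $e \in A$ then $E(G) \setminus \{e\} \supseteq \co{A}$ contains most of $A$, and shrinking gives a separation whose small side would be a single hyperedge $\{e_A'\}$-type object — we need $\bdc(E(G) \setminus \{e\}) \ge \bdc(A)$ is not automatic, so here I'd use that $A$ is tri-well-linked (hence well-linked, hence internally "non-splittable") to argue $\{e\}$ is either forced to the big side; alternatively, if $e \notin A$ then $E(G)\setminus\{e\}$ shrinks to $E(G \rescliqs A) \setminus \{e\}$, which is not in $\tang$ by Axiom 4 for $\tang$. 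Axiom 3 (no three members union to $E(G)$) is the main obstacle: given $D_1', D_2', D_3' \in \tang'$ with $D_1' \cup D_2' \cup D_3' = E(G)$, I'd push each $D_i'$ to its uncrossed parallel version $D_i^*$ with $\bdc(D_i^*) \le \bdc(D_i')$ and $A$ on a definite side. If $A \subseteq \co{D_i^*}$ for some $i$, then the shrunk $D_i^*$ together with the shrunk versions of the others should violate Axiom 3 in $G \rescliqs A$; the subtle case is when $A \subseteq D_i^*$ for all three, where we need that the shrunk small sides still union to $E(G \rescliqs A)$ — and crucially that $C \in \tang$ with $e_A \in C$ provides a member "covering" $A$, so effectively we can replace one $D_i^*$ by something not containing $A$. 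I expect getting this bookkeeping exactly right, especially proving consistency of the orientation in Axiom 2 and the covering argument in Axiom 3, to be the bulk of the work.

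I would organize the write-up as: (1) define the uncrossing map $D' \mapsto D^*$ and prove $\bdc(D^*) \le \bdc(D')$ with $A$ on a definite side, via a short submodularity lemma (essentially a repackaging of \Cref{lem:uncrosswl} and \Cref{lem:uncrosswlappltri}); (2) define $\tang'$ using this map and the orientation inherited from $\tang$, with a separate clause for separations already having $A$ on a definite side; (3) verify consistency (well-definedness) of the orientation; (4) verify the four axioms, spending most effort on Axioms 2 and 3; (5) conclude the "moreover" part — that $D \in \tang$ implies $D \orescliqs A \in \tang'$ — which should be essentially by definition since $D \orescliqs A$ either equals $D$ (if $e_A \notin D$) or has $A$ on its small side (if $e_A \in D$), and in both cases its uncrossed version is $D$ itself, oriented into $\tang$ by hypothesis. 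The order of $\tang'$ being exactly $k$ follows since all borders are $<k$ and, conversely, $\tang$ has order $k$ so there is a separation $D \in \tang$ of order $k-1$, whence $D \orescliqs A \in \tang'$ of order $k-1$ witnesses order $\ge k$.
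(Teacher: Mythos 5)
Your overall architecture --- define $\tang'$ by shrinking $A$ back into $e_A$ and inheriting orientations from $\tang$ --- matches the paper's, but the paper's actual definition is the simpler disjunction: $B \in \tang'$ iff $\bdc(B)<k$ and at least one of $B^{-}=B\setminus A$ and $B^{+}=(B\setminus A)\cup\{e_A\}$ lies in $\tang$. That formulation sidesteps both the ``canonical uncrossing map'' and the well-definedness/consistency check you flag as open; all that is needed is the easy observation (well-linkedness of $A$ plus submodularity) that $\min(\bdc(B^{-}),\bdc(B^{+}))\le\bdc(B)$, which makes the second axiom immediate.

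The genuine gap is the third axiom, which you defer as ``bookkeeping'' but which is the real content of the lemma and the only place tri-well-linkedness (rather than mere well-linkedness) enters. Suppose $B_1,B_2,B_3\in\tang'$ with $B_1\cup B_2\cup B_3=E(G)$. One first shows $B_i^{+}\notin\tang$ for every $i$ (otherwise $B_i^{+}$ together with in-$\tang$ versions of the other two would union to all of $E(G\rescliqs A)$), hence $B_i^{-}\in\tang$ for all $i$. But then $B_1^{-}\cup B_2^{-}\cup B_3^{-}=E(G\rescliqs A)\setminus\{e_A\}$: you have three tangle members that miss $e_A$, and the third axiom only forbids three sets covering \emph{everything}, so there is no contradiction yet. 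The paper resolves this by a case split on $\bdc(B_2^{-}\cup B_3^{-})$: if it is $<k$, then either $B_2^{-}\cup B_3^{-}\in\tang$, and $C$, $B_1^{-}$, $B_2^{-}\cup B_3^{-}$ violate axiom 3 (this is where the hypothesis $e_A\in C\in\tang$ is used), or its complement is in $\tang$ and violates axiom 3 with $B_2^{-}$, $B_3^{-}$; if it is $\ge k$, submodularity yields $\bdc(A\cap B_i)<\bdc(A)$ for all three $i$, contradicting tri-well-linkedness via \Cref{lem:uncrosswlappltri}. Your sketch contains neither the ``three sets miss $e_A$'' obstruction nor the large-border branch where tri-well-linkedness is actually needed, so the plan of ``push $A$ to a definite side and violate axiom 3 in $G\rescliqs A$'' does not go through as stated. (A smaller issue: in your axiom-4 case $e\in A$, no order computation or well-linkedness is needed --- there $B^{-}=E(G\rescliqs A)\setminus\{e_A\}$ and $B^{+}=E(G\rescliqs A)$, which are excluded from $\tang$ directly by the fourth and third tangle axioms.)
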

\begin{proof}
We construct a tangle $\tang'$ of order $k$ of $G$ as follows.
Let $B \subseteq E(G)$.
Let $B^{-} = B \setminus A$ and $B^{+} = (B \setminus A) \cup \{e_A\}$, and note that $B^{-},B^{+} \subseteq E(G \rescliqs A)$.
We define the tangle $\tang'$ as $\tang' = \{B \subseteq E(G) \mid \bdc(B) < k \text{ and } (B^{-} \in \tang \text{ or } B^{+} \in \tang)\}$.
From the definition we get that for all $D \subseteq E(G \rescliqs A)$, $D \in \tang$ implies $D \orescliqs A \in \tang'$.

We then need to prove that $\tang'$ in fact is a tangle of order $k$ of $G$.
The first tangle axiom is immediate from the construction of $\tang'$.
For the second tangle axiom, we start with the following claim.

\begin{claim}
\label{lem:tanglelifting:claim1}
For all $B \subseteq E(G)$, either $\bdc(B^{-}) \le \bdc(B)$ or $\bdc(B^{+}) \le \bdc(B)$.
\end{claim}
\begin{claimproof}
Suppose the contrary.
Then, $\bdc(B \cap \co{A}) > \bdc(B)$, implying by submodularity that $\bdc(\co{B} \cap A) = \bdc(B \cup \co{A}) < \bdc(A)$, and $\bdc(B \cup A) > \bdc(B)$, implying by submodularity that $\bdc(B \cap A) < \bdc(A)$.
However, this contradicts that $A$ is well-linked.
\end{claimproof}

We then complete the proof of the second tangle axiom in the following claim.

\begin{claim}
\label{lem:tanglelifting:claim2}
For all $B \subseteq E(G)$ with $\bdc(B) < k$, either $B \in \tang'$ or $\co{B} \in \tang'$.
\end{claim}
\begin{claimproof}
Suppose for a contradiction that there exists $B \subseteq E(G)$ so that $\bdc(B) < k$, but $B,\co{B} \notin \tang'$.
By \Cref{lem:tanglelifting:claim1}, either $\bdc(B^{-}) < k$ or $\bdc(B^{+}) < k$.

Suppose first that $\bdc(B^{-}) < k$.
By construction of $\tang'$ we have that $B^{-} \notin \tang$, so by the second tangle axiom for $\tang$ we have that $\co{B^{-}} \in \tang$.
However, $\co{B^{-}} = \co{B}^{+}$, so we get that $\co{B} \in \tang'$.

The argument is similar for the other case:
Suppose then that $\bdc(B^{+}) < k$.
By construction of $\tang'$ we have that $B^{+} \notin \tang$, so by the second tangle axiom for $\tang$ we have that $\co{B^{+}} \in \tang$.
However, $\co{B^{+}} = \co{B}^{-}$, so we get that $\co{B} \in \tang'$.
\end{claimproof}

We then prove the third tangle axiom.

\begin{claim}
For all $B_1,B_2,B_3 \in \tang'$, it holds that $B_1 \cup B_2 \cup B_3 \neq E(G)$.
\end{claim}
\begin{claimproof}
Suppose for a contradiction that there exists $B_1,B_2,B_3 \in \tang'$ so that $B_1 \cup B_2 \cup B_3 = E(G)$.
First, suppose that $B_1^{+} \in \tang$.
Now, as either $B_2^{-}$ or $B_2^{+}$ is in $\tang$, and either $B_3^{-}$ or $B_3^{+}$ is in $\tang$, we have that $\tang$ contradicts the third tangle axiom.
Therefore, $B_1^{+} \notin \tang$.
By repeating the same argument with $B_1^{+}$ replaced by $B_i^{+}$, we get that $B_i^{+} \notin \tang$ for all $i \in [3]$.
This implies in particular that $B_i^{-} \in \tang$ for all $i \in [3]$.

Suppose then that $\bdc(B_2^{-} \cup B_3^{-}) < k$.
Let $C \in \tang$ so that $e_A \in C$.
Now, if $B_2^{-} \cup B_3^{-} \in \tang$, the triple of $C$, $B_1^{-}$, and $B_2^{-} \cup B_3^{-}$ would show that $\tang$ contradicts the third tangle axiom.
However, if $\co{B_2^{-} \cup B_3^{-}} \in \tang$, then the triple of $\co{B_2^{-} \cup B_3^{-}}$, $B_2^{-}$, and $B_3^{-}$ would show that $\tang$ contradicts the third tangle axiom.
Therefore, we have that $\bdc(B_2^{-} \cup B_3^{-}) \ge k$.
Now, by submodularity,
\begin{align*}
&&\bdc(A \cap B_1) &\le \bdc(A) + \bdc(B_1) - \bdc(A \cup B_1)&&\\
\Rightarrow&&\bdc(A \cap B_1) &\le \bdc(A) + \bdc(B_1) - \bdc(B_2^{-} \cup B_3^{-})&& \text{(by $\co{A \cup B_1} = B_2^{-} \cup B_3^{-}$)}\\
\Rightarrow&&\bdc(A \cap B_1) &< \bdc(A) && \text{(by $\bdc(B_2^{-} \cup B_3^{-}) \ge k > \bdc(B_1)$)}.
\end{align*}
By repeating the same argument with $B_1$ replaced by $B_i$, we get that $\bdc(A \cap B_i) < \bdc(A)$ for all $i \in [3]$.
By \Cref{lem:uncrosswlappltri}, this contradicts the fact that $A$ is tri-well-linked.
\end{claimproof}

Finally, we prove the fourth tangle axiom.

\begin{claim}
\label{lem:tanglelifting:claim4}
For all $e \in E(G)$, it holds that $E(G) \setminus \{e\} \notin \tang'$.
\end{claim}
\begin{claimproof}
Let $B = E(G) \setminus \{e\}$ for some $e \in E(G)$.
We can assume that $\bdc(B) = \bdc(e) < k$, as otherwise we are done.
First, if $e \in A$, then $B^{-} = E(G \rescliqs A) \setminus \{e_A\}$ and $B^{+} = E(G \rescliqs A)$, so by the third and the fourth tangle axioms for $\tang$, neither of them is in $\tang$.

Second, if $e \notin A$, then $\{e\} \in \tang$, $B^{-} = E(G \rescliqs A) \setminus \{e, e_A\}$, and $B^{+} = E(G \rescliqs A) \setminus \{e\}$.
The third tangle axiom implies directly that $B^{+} \notin \tang$.
By the assumption that there exists $C \subseteq E(G \rescliqs A)$ with $e_A \in C \in \tang$, we also get from the third tangle axiom that $B^{-} \notin \tang$.
\end{claimproof}
\Cref{lem:tanglelifting:claim4} finishes the proof.
\end{proof}

Then we show that a doubly tri-well-linked separation can be used to construct a tangle.

\begin{lemma}
\label{lem:doublytriwltangle}
Let $G$ be a hypergraph and $(A,\co{A})$ a doubly tri-well-linked separation of $G$.
Then $G$ contains a tangle $\tang$ of order $\bdc(A)$, so that for all $B \subseteq A$ and $B \subseteq \co{A}$ with $\bdc(B) < \bdc(A)$, we have that $B \in \tang$.
\end{lemma}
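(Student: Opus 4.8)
The plan is to exhibit the tangle explicitly. Write $k = \bdc(A)$. If $k = 0$, then the empty family is a tangle of order $0$ and the stated property holds vacuously, so assume $k \ge 1$; then $\bd(A) = \bd(\co{A}) \neq \emptyset$, so both $A$ and $\co{A}$ are nonempty. I would define
\[\tang = \{B \subseteq E(G) \;:\; \bdc(B) < k,\ \bdc(B \cap A) < k,\ \text{and}\ \bdc(B \cap \co{A}) < k\},\]
and check the four tangle axioms. The asserted property then comes for free: if $B \subseteq A$ and $\bdc(B) < k$, then $B \cap A = B$ and $B \cap \co{A} = \emptyset$, so all three conditions hold and $B \in \tang$, and symmetrically for $B \subseteq \co{A}$.

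The first axiom is immediate from the definition. The workhorse for the rest is submodularity in the following shape: since $\bdc(B \cap A) + \bdc(B \cup A) \le \bdc(B) + \bdc(A)$ and $\bdc(B \cup A) = \bdc(\co{B} \cap \co{A})$ by symmetry, we get that $\bdc(B \cap A) \ge k$ implies $\bdc(\co{B} \cap \co{A}) \le \bdc(B)$, together with the three variants obtained by swapping $A \leftrightarrow \co{A}$ and/or $B \leftrightarrow \co{B}$. For the second axiom, note first that $B$ and $\co{B}$ cannot both be in $\tang$: otherwise $B \cap A$ and $\co{B} \cap A$ would partition $A$ with both parts of order $< k = \bdc(A)$, contradicting that $A$ is tri-well-linked (apply Lemma~\ref{lem:uncrosswlappltri} with the third part empty). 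Conversely, suppose $\bdc(B), \bdc(\co{B}) < k$ but $B, \co{B} \notin \tang$; after possibly swapping the names $A$ and $\co{A}$ we may assume $\bdc(B \cap A) \ge k$, whence the submodularity bound gives $\bdc(\co{B} \cap \co{A}) \le \bdc(B) < k$; since $\co{B} \notin \tang$ this forces $\bdc(\co{B} \cap A) \ge k$, and the same bound then gives $\bdc(B \cap \co{A}) \le \bdc(\co{B}) < k$. But now $B \cap \co{A}$ and $\co{B} \cap \co{A}$ partition $\co{A}$ with both parts of order $< k$, again contradicting tri-well-linkedness of $\co{A}$.

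For the third axiom, if $B_1, B_2, B_3 \in \tang$ with $B_1 \cup B_2 \cup B_3 = E(G)$, then $A \cap B_1$, $A \cap B_2$, $A \cap B_3$ cover $A$ and each has order $< k = \bdc(A)$, contradicting tri-well-linkedness of $A$ via Lemma~\ref{lem:uncrosswlappltri}. For the fourth axiom, if $E(G) \setminus \{e\} \in \tang$ for some $e \in E(G)$, then putting $e$ into whichever of $A$, $\co{A}$ contains it, one of $(E(G) \setminus \{e\}) \cap A$ and $(E(G) \setminus \{e\}) \cap \co{A}$ equals $A$ (resp.\ $\co{A}$) and hence has order $k$, contradicting the requirement in the definition of $\tang$ that it be $< k$. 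I expect the only genuinely fiddly step is the case analysis for the second axiom --- choosing which side of $B$ and which of $A$, $\co{A}$ to push the order into --- and organizing it around the single reduction ``$\bdc(B \cap A) \ge k$, up to symmetry'' as above keeps it short; everything else is a one-line application of submodularity, well-linkedness, or tri-well-linkedness.
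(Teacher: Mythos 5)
Your proof is correct and follows essentially the same strategy as the paper: construct the tangle explicitly by declaring $B$ small when its intersections with the doubly tri-well-linked separation have small border, then verify the axioms via submodularity, well-linkedness (for axioms 2 and 4), and Lemma~\ref{lem:uncrosswlappltri} (for axiom 3). Your membership condition ($\bdc(B\cap A)<k$ and $\bdc(B\cap\co{A})<k$) is phrased symmetrically whereas the paper uses the asymmetric condition $\bdc(B\cap A)\le\bdc(B)$, but under the hypotheses the two families coincide, so this is only a cosmetic difference.
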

\begin{proof}
We define the tangle $\tang$ as follows.
For every $B \subseteq E(G)$ with $\bdc(B) < \bdc(A)$, we let $B \in \tang$ if $\bdc(B \cap A) \le \bdc(B)$.

Now, if $B \subseteq A$, then $B \in \tang$ because $\bdc(B \cap A) = \bdc(B) \le \bdc(B)$.
If $B \subseteq \co{A}$, then $B \in \tang$ because $\bdc(B \cap A) = \bdc(\emptyset) = 0 \le \bdc(B)$.

The first tangle axiom clearly holds.
We then prove that $\tang$ satisfies the second tangle axiom.

\begin{claim}
If $B \subseteq E(G)$ and $\bdc(B) < \bdc(A)$, then either $B \in \tang$ or $\co{B} \in \tang$.
\end{claim}
\begin{claimproof}
Suppose $B \notin \tang$, implying $\bdc(B \cap A) > \bdc(B)$.
By submodularity, we get $\bdc(\co{B} \cap \co{A}) < \bdc(A)$.
Because $\co{A}$ is well-linked, this implies that $\bdc(B \cap \co{A}) \ge \bdc(A)$.
This in turn, by submodularity, implies that $\bdc(\co{B} \cap A) \le \bdc(B)$, implying that $\co{B} \in \tang$.
\end{claimproof}

We then prove that $\tang$ satisfies the third tangle axiom.

\begin{claim}
If $B_1,B_2,B_3 \in \tang$, then $B_1 \cup B_2 \cup B_3 \neq E(G)$.
\end{claim}
\begin{claimproof}
Suppose there are $B_1, B_2, B_3 \in \tang$ so that $B_1 \cup B_2 \cup B_3 = E(G)$.
But then, $\bdc(B_i \cap A) \le \bdc(B_i) < \bdc(A)$ for all $i \in [3]$, and thus by \Cref{lem:uncrosswlappltri}, this would contradict that $A$ is tri-well-linked.
\end{claimproof}

We then prove that $\tang$ satisfies the fourth tangle axiom.

\begin{claim}
\label{lem:doublytriwltangle:claim3}
For all $e \in E(G)$, it holds that $E(G) \setminus \{e\} \notin \tang$.
\end{claim}
\begin{claimproof}
Suppose there exists $e \in E(G)$ so that $E(G) \setminus \{e\} \in \tang$.
Denote $B = E(G) \setminus \{e\}$.
Now, $\bdc(B)<\bdc(A)$ and $\bdc(B \cap A) \le \bdc(B)$.
This implies that $B \cap A \neq A$, implying $e \in A$.
By well-linkedness of $A$, this yields $\bdc(e) \ge \bdc(A)$, implying that $\bdc(E(G) \setminus \{e\}) \ge \bdc(A)$, contradicting that $E(G) \setminus \{e\} \in \tang$.
\end{claimproof}
\Cref{lem:doublytriwltangle:claim3} finishes the proof.
\end{proof}

\subsubsection{Tangle-unbreakability}
Recall that a hypergraph $G$ is $k$-tangle-unbreakable if no separation of $G$ of order $<k$ distinguishes two tangles.
We now apply \Cref{lem:tanglelifting,lem:doublytriwltangle} to prove two structural results about doubly tri-well-linked separations in $k$-tangle-unbreakable graphs.
These will be used in \Cref{sec:fromtubrtoubr}.

First, we show a powerful result asserting that, informally speaking, the branchwidth of one side of a doubly tri-well-linked separation of order $k$ in a $k$-tangle-unbreakable graph is bounded.

\begin{lemma}
\label{lem:tangleunbreakablebwsmall}
Let $G$ be a hypergraph that is $k$-tangle-unbreakable, and $(A,\co{A})$ a doubly tri-well-linked separation of order $\bdc(A) < k$.
Then, $\bw(G \rescliqs A) \le \bdc(A)$ or $\bw(G \rescliqs \co{A}) \le \bdc(A)$.
\end{lemma}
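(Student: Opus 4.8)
The plan is to argue by contradiction using the branchwidth--tangle duality of \Cref{lem:tanglebwduality}. Write $\ell = \bdc(A)$, and suppose that $\bw(G \rescliqs A) > \ell$ and $\bw(G \rescliqs \co{A}) > \ell$. By \Cref{lem:tanglebwduality}, each of $G \rescliqs A$ and $G \rescliqs \co{A}$ then has a tangle of order $\ell+1$; call them $\tang_A$ and $\tang_B$. The goal is to lift $\tang_A$ and $\tang_B$ to tangles $\tang_A'$ and $\tang_B'$ of $G$ that are distinguished by the separation $(A,\co{A})$, which has order $\ell < k$, contradicting the $k$-tangle-unbreakability of $G$. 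One may assume $A$ and $\co{A}$ are both nonempty; in the degenerate case one of $G \rescliqs A$, $G \rescliqs \co{A}$ reduces to essentially a single (empty) hyperedge and hence has branchwidth $0 \le \bdc(A)$, so the claim holds immediately.

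First I would pin down which side of each tangle $e_A$ lies on. Since $\bdc_{G \rescliqs A}(e_A) \le |V(e_A)| = \ell < \ell+1$, the second tangle axiom for $\tang_A$ forces $\{e_A\} \in \tang_A$ or $E(G \rescliqs A) \setminus \{e_A\} \in \tang_A$, and the fourth axiom rules out the latter; hence $\{e_A\} \in \tang_A$. Symmetrically, $\{e_{\co{A}}\} \in \tang_B$ for the hyperedge $e_{\co{A}}$ of $G \rescliqs \co{A}$ corresponding to $\co{A}$. This is exactly the hypothesis of \Cref{lem:tanglelifting}: as $A$ is tri-well-linked and $\{e_A\} \in \tang_A$ provides the required set $C$, \Cref{lem:tanglelifting} yields a tangle $\tang_A'$ of $G$ of order $\ell+1$ such that $D \in \tang_A$ implies $D \orescliqs A \in \tang_A'$; applying this with $D = \{e_A\}$ gives $A = \{e_A\} \orescliqs A \in \tang_A'$. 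Symmetrically, using that $\co{A}$ is tri-well-linked, \Cref{lem:tanglelifting} gives a tangle $\tang_B'$ of $G$ of order $\ell+1$ with $\co{A} = \{e_{\co{A}}\} \orescliqs \co{A} \in \tang_B'$.

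Finally, $(A,\co{A})$ has order $\ell < k$, and $A \in \tang_A'$ while $\co{A} \in \tang_B'$, so $(A,\co{A})$ distinguishes the two tangles $\tang_A'$ and $\tang_B'$ of $G$ (these are genuinely distinct: no tangle can contain both $A$ and $\co{A}$, since $A \cup \co{A} \cup \co{A} = E(G)$ would violate the third tangle axiom). This contradicts the assumption that $G$ is $k$-tangle-unbreakable, which proves the lemma. The real work is not in this chaining but already packaged inside \Cref{lem:tanglelifting} --- in particular, verifying the third tangle axiom for the lifted family is where the tri-well-linkedness of $A$, via \Cref{lem:uncrosswlappltri}, is genuinely used; here the only points needing attention are the boundary cases just mentioned and the bookkeeping with the $\orescliqs$ operation.
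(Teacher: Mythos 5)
Your proof is correct and follows essentially the same route as the paper: invoke the tangle--branchwidth duality on both sides, observe $\{e_A\}\in\tang_A$ (which the paper states without the short axiom-chasing argument you supply), lift both tangles to $G$ via \Cref{lem:tanglelifting}, and conclude that $(A,\co{A})$ distinguishes them, contradicting $k$-tangle-unbreakability. The extra checks you add (the degenerate empty-side case and the distinctness of the lifted tangles) are harmless elaborations of the same argument.
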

\begin{proof}
If $\bw(G \rescliqs A) > \bdc(A)$, then by \Cref{lem:tanglebwduality}, $G \rescliqs A$ contains a tangle $\tang_1$ of order $\bdc(A)+1$.
Let $e_A$ be the hyperedge of $G \rescliqs A$ corresponding to $A$, and note that $\{e_A\} \in \tang_1$, and therefore we can apply \Cref{lem:tanglelifting}.
It yields that $G$ contains a tangle $\tang_1'$ of order $\bdc(A)+1$ so that $A \in \tang_1'$.

By repeating the same argument with $A$ replaced by $\co{A}$, we get that if $\bw(G \rescliqs \co{A}) > \bdc(A)$, then $G$ contains a tangle $\tang_2'$ of order $\bdc(A)+1$ so that $\co{A} \in \tang_2'$.

However, then the combination of $\bw(G \rescliqs A) > \bdc(A)$ and $\bw(G \rescliqs \co{A}) > \bdc(A)$ implies that $(A,\co{A})$ distinguishes the tangles $\tang_1'$ and $\tang_2'$.
\end{proof}

Then, we show that decomposing $k$-tangle-unbreakable graphs by doubly tri-well-linked separations keeps them $k$-tangle-unbreakable.

\begin{lemma}
\label{lem:tungunbhered}
Let $G$ be a hypergraph that is $k$-tangle-unbreakable, and $(A,\co{A})$ a doubly tri-well-linked separation.
Then, both $G \rescliqs A$ and $G \rescliqs \co{A}$ are $k$-tangle-unbreakable.
\end{lemma}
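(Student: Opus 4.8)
By symmetry between $A$ and $\co{A}$ (both are tri-well-linked), it suffices to prove that $G \rescliqs A$ is $k$-tangle-unbreakable. I would argue by contradiction: suppose $G \rescliqs A$ has a separation $(X,\co{X})$ of order $r := \bdc(X) < k$ and tangles $\tang_1 \ni X$, $\tang_2 \ni \co{X}$. Since $e_A$ lies in exactly one of $X,\co{X}$, after possibly swapping $(X,\tang_1)$ with $(\co{X},\tang_2)$ I may assume $e_A \in X$; then $\co{X}$, viewed as a subset of $E(G)$, avoids $A$, while $X \orescliqs A \supseteq A$, $\co{X \orescliqs A} = \co{X}$, and $\bdc_G(X \orescliqs A) = r$. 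The plan is to lift $\tang_1$ and $\tang_2$ to tangles $\tang_1^*,\tang_2^*$ of $G$ with $X \orescliqs A \in \tang_1^*$ and $\co{X} \in \tang_2^*$; then $(X \orescliqs A, \co{X})$ is a separation of $G$ of order $r < k$ distinguishing $\tang_1^*$ from $\tang_2^*$, contradicting the $k$-tangle-unbreakability of $G$.

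Lifting $\tang_1$ is immediate from \Cref{lem:tanglelifting}: $A$ is tri-well-linked and $e_A \in X \in \tang_1$, so \Cref{lem:tanglelifting} yields a tangle $\tang_1^*$ of $G$ of the same order as $\tang_1$ with $D \orescliqs A \in \tang_1^*$ for every $D \in \tang_1$; in particular $X \orescliqs A \in \tang_1^*$.

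For $\tang_2$ I would split into two cases. If some member of $\tang_2$ contains $e_A$, then \Cref{lem:tanglelifting} again applies directly and produces $\tang_2^*$ with $D \orescliqs A \in \tang_2^*$ for all $D \in \tang_2$; since $e_A \notin \co{X}$ this gives $\co{X} = \co{X} \orescliqs A \in \tang_2^*$, as needed. Otherwise no member of $\tang_2$ contains $e_A$. In this case I would first note that $\bdc(A)$ is at least the order of $\tang_2$, since otherwise $\{e_A\}$ (which has border $\bdc(A)$ in $G\rescliqs A$) would be a member of $\tang_2$ by the second and fourth tangle axioms. I would then define $\tang_2^*$ by the same formula used in the proof of \Cref{lem:tanglelifting} — $D \in \tang_2^*$ iff $\bdc_G(D)$ is below the order of $\tang_2$ and $D \setminus A \in \tang_2$ (the ``$(D\setminus A)\cup\{e_A\}$'' alternative never occurs, as $\tang_2$ avoids $e_A$) — and re-verify that it is a tangle of $G$. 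The first and second tangle axioms are checked verbatim as in \Cref{lem:tanglelifting} and use only that $A$ is well-linked. The proof of \Cref{lem:tanglelifting} invokes the existence of a member of $\tang$ containing $e_A$ only in the proofs of the third and fourth tangle axioms, and this is exactly where I would argue differently, now using that $\co{A}$ is tri-well-linked: for the fourth axiom, well-linkedness of $\co{A}$ together with $\bdc(A)$ being at least the order of $\tang_2$ forces $\bdc_G(\co{A}\setminus\{e\}) \ge \bdc(A)$ for every hyperedge $e$ of small border, hence $E(G)\setminus\{e\} \notin \tang_2^*$; for the third axiom, if $D_1,D_2,D_3 \in \tang_2^*$ covered $E(G)$, then well-linkedness of $\co{A}$ would give $\bdc_G(\co{A}\setminus D_i) \ge \bdc(A)$ for each $i$, and submodularity would then yield $\bdc_G(A\cap D_i) \le \bdc_G(D_i) < \bdc(A)$ for all $i$, while $(A\cap D_1, A\cap D_2, A\cap D_3)$ covers $A$, contradicting tri-well-linkedness of $A$ via \Cref{lem:uncrosswlappltri}.

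In every case this yields tangles $\tang_1^*,\tang_2^*$ of $G$ with $X \orescliqs A \in \tang_1^*$ and $\co{X} \in \tang_2^*$, and since $\co{X} = \co{X \orescliqs A}$, the separation $(X \orescliqs A, \co{X})$ of $G$, of order $r < k$, distinguishes them — the desired contradiction. The main obstacle is precisely the case where $\tang_2$ avoids $e_A$, so that \Cref{lem:tanglelifting} cannot be applied off the shelf; handling it requires re-running that lemma's proof with the third and fourth tangle axioms re-derived from the tri-well-linkedness of $\co{A}$ in place of a member of $\tang_2$ hitting $e_A$, which is also why the hypothesis here is \emph{doubly} tri-well-linked rather than one-sided.
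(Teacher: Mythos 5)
Your proposal is correct. The overall skeleton coincides with the paper's proof: reduce to $G \rescliqs A$ by symmetry, assume a distinguishing separation $(X,\co{X})$ with $e_A \in X$, lift $\tang_1$ to $G$ via \Cref{lem:tanglelifting}, and produce a second tangle of $G$ containing $\co{X}$ to contradict the $k$-tangle-unbreakability of $G$. The two arguments diverge only in the problematic case. You split on whether some member of $\tang_2$ contains $e_A$; when none does, you observe $\bdc(A) \ge$ the order of $\tang_2$ and re-run the construction of \Cref{lem:tanglelifting} by hand, re-proving the third and fourth tangle axioms from the tri-well-linkedness of $A$ and the well-linkedness of $\co{A}$ (your verifications are sound; the one step you leave implicit — that $\bdc_G(\co{A}\cap D_i) < \bdc(A)$, which is what forces $\bdc_G(\co{A}\setminus D_i)\ge\bdc(A)$ — follows from $D_i\setminus A\in\tang_2$ and $\mathrm{ord}(\tang_2)\le\bdc(A)$). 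The paper instead splits on $\bdc(A)\le\bdc(X)$ versus $\bdc(A)>\bdc(X)$ and, in the latter case, does not lift $\tang_2$ at all: it discards $\tang_2$ and takes the ready-made tangle of order $\bdc(A)$ supplied by \Cref{lem:doublytriwltangle}, which contains $\co{X}$ simply because $\co{X}\subseteq\co{A}$. The paper's route is shorter since that lemma is already in hand; yours is self-contained modulo \Cref{lem:tanglelifting} and shows that the lifting construction itself survives when the hypothesis ``some member of the tangle contains $e_A$'' is replaced by tri-well-linkedness of $\co{A}$, which is a slightly stronger statement than what the paper extracts. Both correctly locate where the \emph{doubly} tri-well-linked hypothesis is needed.
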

\begin{proof}
By symmetry, it suffices to prove that $G \rescliqs A$ is $k$-tangle-unbreakable.
Assume for a contradiction that $G \rescliqs A$ has a separation $(B,\co{B})$ of order $<k$ that distinguishes two tangles $\tang_1$ and $\tang_2$ of order $\bdc(B)+1$ (with $B \in \tang_1$ and $\co{B} \in \tang_2$).
Furthermore, assume \wilog that the hyperedge $e_A$ corresponding to $A$ is in the set $B$.

Because $e_A \in B \in \tang_1$, we can apply \Cref{lem:tanglelifting} with $\tang_1$ to obtain a tangle $\tang_1'$ of order $\bdc(B)+1$ in $G$, with $B \orescliqs A = (B \setminus \{e_A\}) \cup A \in \tang_1'$.

Then we consider cases.
First, suppose that $\bdc(e_A) = \bdc(A) \le \bdc(B)$.
Then, it must be that $\{e_A\} \in \tang_2$, so we can apply \Cref{lem:tanglelifting} with $\tang_2$ to obtain a tangle $\tang_2'$ of order $\bdc(B)+1$ in $G$, with $\co{B} \orescliqs A = \co{B} \in \tang'_2$.
Note that $(B \orescliqs A, \co{B})$ is a separation of $G$ of order $<k$, so this contradicts that $G$ is $k$-tangle-unbreakable.

Otherwise, $\bdc(e_A) = \bdc(A) > \bdc(B)$.
In that case, consider the tangle $\tang^*$ of $G$ of order $\bdc(A)$ given by \Cref{lem:doublytriwltangle}, having $C \in \tang^*$ for all $C \subseteq E(G)$ with $\bdc(C) < \bdc(A)$ and either $C \subseteq A$ or $C \subseteq \co{A}$.
Now, because $\co{B} \subseteq \co{A}$, we have that $\co{B} \in \tang^*$.
Again, $B \orescliqs A \in \tang_1'$ and $(B \orescliqs A, \co{B})$ is a separation of $G$ of order $<k$, so this contradicts that $G$ is $k$-tangle-unbreakable.
\end{proof}

\section{Algorithms for finding separations}
\label{sec:localsearch}
The main algorithmic subroutines in our algorithm are about finding separations of hypergraphs.
This section is dedicated to giving these subroutines.

\subsection{Well-linkedness}
We start by showing that non-$k$-well-linkedness witnesses can be found efficiently.
We first give a technical lemma and then wrap it in a nicer statement.

\begin{lemma}
\label{lem:algffwlpre}
Let $G$ be a hypergraph whose representation is already stored.
There is an algorithm that, given a set $A \subseteq E(G)$ and integers $b_1$, $b_2$, in time $2^{\OO(\bdc(A))} \cdot \rank(G)^2 \cdot |A|$ returns a bipartition $(B_1,B_2)$ of $A$ so that $\bdc(B_i) \le b_i$ for each $i \in [2]$, if such a bipartition exists.
\end{lemma}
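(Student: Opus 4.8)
The plan is to reduce the problem to a collection of minimum vertex-cut computations on an auxiliary graph built from the primal graph of $G \rescliqs \co{A}$, one computation for each of the polynomially-in-$2^{\bdc(A)}$ many guesses of how $\bd(A)$ is split between the two sides. Concretely, write $S = \bd(A)$, so $|S| = \bdc(A)$. A bipartition $(B_1,B_2)$ of $A$ induces a partition of $S$ into three parts: $S_1 = \bd(B_1)\setminus\bd(B_2)$ (vertices of $S$ ``seen only from $B_1$''), $S_2 = \bd(B_2)\setminus\bd(B_1)$, and $S_{12} = \bd(B_1)\cap\bd(B_2)$; note $\bd(B_i) = S_i \cup S_{12} \cup (V(B_i)\cap V(B_{3-i})\setminus S)$. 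The key structural observation is that once we fix the target assignment of $S$ into $(S_1, S_2, S_{12})$, minimizing $\bdc(B_1)$ (resp.\ $\bdc(B_2)$) over all bipartitions $(B_1,B_2)$ of $A$ respecting that assignment of $S$ becomes a minimum $(X,Y)$-separator problem in the hypergraph $G' = G\rescliqs\co{A}$ (equivalently in $\primal(G')$): a hyperedge $e\in A$ must go to $B_1$ if $V(e)\cap S \subseteq S_1 \cup S_{12}$ fails, i.e.\ if $V(e)$ contains a vertex forced to the $B_2$-side, and symmetrically; and the ``cost'' $\bdc(B_1)$ counts exactly the vertices of $V(A)$ incident to a hyperedge of $B_1$ and a hyperedge of $B_2$ (or to $\co A$), which is precisely the size of a vertex cut in $\primal(G')$ separating the forced-$B_1$ hyperedges from the forced-$B_2$ hyperedges, with $S_{12}$ pre-placed into the cut. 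So for each guess we run a bounded max-flow/Menger computation (value at most $b_1 + b_2 + |S| \le \OO(\bdc(A))$, so it takes $\OO(\bdc(A))$ augmentations, each in $\OO(\|\primal(G')\|) = \OO(\rank(G)^2|A|)$ time using \Cref{lem:hypergraph_impl3} to build $G'$).

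First I would compute $\bd(A)$ and the hypergraph $G\rescliqs\co{A}$ in $\OO(\rank(G)\cdot|A|)$ time via \Cref{lem:hypergraph_impl2} and \Cref{lem:hypergraph_impl3}, and build $\primal(G\rescliqs\co{A})$ in $\OO(\rank(G)^2|A|)$ time. Then I would enumerate all $3^{|S|} = 2^{\OO(\bdc(A))}$ ways of partitioning $S$ into $(S_1, S_2, S_{12})$. For each, I would: mark every hyperedge $e\in A$ as forced-to-$B_1$ if $V(e)\cap S_2 \ne\emptyset$, forced-to-$B_2$ if $V(e)\cap S_1\ne\emptyset$ (if some $e$ is forced to both, this guess is infeasible), and free otherwise; delete the vertices $S_{12}$ from $\primal(G\rescliqs\co{A})$ and also contract/delete $e_{\co A}$ appropriately so that $\co A$ behaves as the $B_2$-side boundary anchor through $S_2$ (and $S_1$); then compute, by a bounded Menger computation, a minimum vertex cut separating the forced-$B_1$ vertex-set from the forced-$B_2$ vertex-set, stopping once the value exceeds $b_1$; this cut, together with $S_{12}$, induces the optimal $(B_1,B_2)$ for this guess of $S$ with $\bdc(B_1)$ minimized, and I would then check whether also $\bdc(B_2)\le b_2$. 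If any guess yields a bipartition meeting both bounds, return it; otherwise report that none exists.

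The correctness argument has two directions. Soundness: any bipartition returned has $\bdc(B_i)\le b_i$ by construction, since the cut value bounds $\bdc(B_1)$ and we explicitly verify $\bdc(B_2)$. Completeness: given any bipartition $(B_1,B_2)$ of $A$ with $\bdc(B_i)\le b_i$, take the guess $(S_1,S_2,S_{12})$ it induces on $S$; then the forced-hyperedge assignment is consistent with $(B_1,B_2)$, and $(V(B_1), V(B_2)\cup\{$anchor$\})$ is a vertex cut of the auxiliary graph of order $\bdc(B_1) - |S_{12}| \le b_1 - |S_{12}|$ separating the forced sets, so the Menger computation for that guess finds a cut of value at most that, hence a bipartition with $\bdc(B_1') \le \bdc(B_1) \le b_1$; one must also check $\bdc(B_2')\le b_2$, which holds because taking the minimum-$\bdc(B_1)$ bipartition can only decrease $\bdc(B_2)$ relative to the witness — this last monotonicity claim is the subtle point, and I expect it to be the main obstacle. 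I would handle it by the standard uncrossing argument: among bipartitions minimizing $\bdc(B_1)$ for the fixed $S$-guess, the one that additionally minimizes $|B_1|$ (the ``closest-to-$\bd(B_1)$'' side of the mincut) has $\bdc(B_2)$ no larger than any other such bipartition, using submodularity of $\bdc$ exactly as in \Cref{lem:uncrosswl} and \Cref{lem:minnonkwlwitnesslinked} — i.e.\ moving a free hyperedge from $B_1'$ to $B_1$ to match the witness cannot increase $\bdc(B_2)$ past $\bdc(B_2)$ of the witness. The running time is $2^{\OO(\bdc(A))}$ guesses, each $\OO(\bdc(A))\cdot\OO(\rank(G)^2|A|) = 2^{\OO(\bdc(A))}\rank(G)^2|A|$, matching the claim.
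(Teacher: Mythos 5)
Your algorithm is essentially the paper's: guess how $\bd(A)$ is distributed between the two sides ($2^{\OO(\bdc(A))}$ guesses) and reduce each guess to one bounded Ford--Fulkerson computation in the primal graph of $A$. Two remarks on the writeup, one cosmetic and one substantive. Cosmetic: your forcing rule is swapped --- a hyperedge meeting $S_1=\bd(B_1)\setminus\bd(B_2)$ must go to $B_1$ (every hyperedge of $A$ containing such a vertex lies in $B_1$), not to $B_2$, and symmetrically for $S_2$; this is a consistent relabeling and harmless.

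Substantive: the monotonicity claim you flag as ``the main obstacle'' --- that the feasible bipartition minimizing $\bdc(B_1')$ has $\bdc(B_2')$ no larger than the witness's --- is false in general, and the uncrossing via \Cref{lem:uncrosswl} does not rescue it. Over bipartitions that merely respect the forced assignment, the trace on $S_{12}$ can shrink (all hyperedges meeting some $v\in S_{12}$ may drift to $B_2'$), so a minimizer of $\bdc(B_1')$ may trade the $|S_{12}|$ boundary vertices for almost as many extra internal cut vertices, strictly decreasing $\bdc(B_1')$ while pushing $\bdc(B_2')$ above $b_2$; and since such a minimizer beats the witness outright, no tie-breaking among minimizers recovers it. Fortunately the claim is also unnecessary. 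For any bipartition $(B_1,B_2)$ of $A$ one has $\bd(B_1)\setminus\bd(A)=\bd(B_2)\setminus\bd(A)=V(B_1)\cap V(B_2)\setminus\bd(A)=:Z$ and $\bd(B_i)\cap\bd(A)=V(B_i)\cap\bd(A)$, hence $\bdc(B_i)=|V(B_i)\cap\bd(A)|+|Z|$. For a fixed guess, $|V(B_i')\cap\bd(A)|\le|S_i|+|S_{12}|$ for every feasible bipartition, so minimizing the single quantity $|Z|$ --- which is exactly the objective your min-cut after deleting $S_{12}$ bounds --- simultaneously certifies $\bdc(B_1')\le\bdc(B_1)\le b_1$ and $\bdc(B_2')\le\bdc(B_2)\le b_2$ against any witness realizing that guess. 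Replace the monotonicity/uncrossing paragraph with this additive decomposition and the proof closes; the running-time analysis is fine and matches the paper's.
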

\begin{proof}
Let $G^*$ be the graph constructed by taking $V(G^*) = V(A)$ and inserting an edge between $u,v \in V(G^*)$ if there is $e \in A$ with $u,v \in V(e)$.
We have that such a bipartition $(B_1,B_2)$ of $A$ corresponds to a vertex cut $(C_1, C_2)$ of $G^*$ so that $|(C_i \cap \bd(A)) \cup (C_1 \cap C_2)| \le b_i$ for each $i \in [2]$.
In particular, given such a bipartition $(B_1,B_2)$ of $A$, such a vertex cut is given by $(V(B_1), V(B_2))$, and given such a vertex cut $(C_1,C_2)$ of $G^*$, such a bipartition $(B_1,B_2)$ can be constructed by assigning a hyperedge $e \in A$ to $B_1$ if $V(e) \subseteq C_1$, and to $B_2$ if $V(e) \subseteq C_2$ (and arbitrarily if $V(e) \subseteq C_1 \cap C_2$).

We can construct the graph $G^*$ in time $\OO(|A| \cdot \rank(G)^2)$.
Then, the problem of finding such a vertex cut $(C_1,C_2)$ boils down to trying all $2^{\bdc(A)}$ possibilities of $X_1 = C_1 \cap \bd(A)$ and $X_2 = C_2 \cap \bd(A)$ and for each computing a minimum order vertex cut $(C_1,C_2)$ with $X_i \subseteq C_i$ for both $i \in [2]$, by using the Ford-Fulkerson algorithm, which runs in time $\OO(\bdc(A) \cdot \|G^*\|)$.
Therefore, in total the running time is $\OO(|A| \cdot \rank(G)^2) + 2^{\OO(\bdc(A))} \cdot \|G^*\| = 2^{\OO(\bdc(A))} \cdot \rank(G)^2 \cdot |A|$.
\end{proof}

We then give the nicer statement.
Note that even though this is stated in terms of $k$-well-linkedness, it can be used for well-linkedness by setting $k=\bdc(A)$.

\begin{lemma}
\label{lem:algffwl}
Let $G$ be a hypergraph whose representation is already stored.
There is an algorithm that, given a set $A \subseteq E(G)$ and an integer $k$, in time $2^{\OO(\bdc(A))} \cdot \rank(G)^2 \cdot |A|$ either
\begin{itemize}
\item correctly concludes that $A$ is $k$-well-linked, or
\item returns a bipartition $(B_1,B_2)$ of $A$ so that 
\begin{itemize}
\item $\bdc(B_1) < \min(\bdc(A), k)$,
\item $\bdc(B_2) < \bdc(A)$, and
\item both $\co{B_1}$ and $\co{B_2}$ are linked into $\co{A}$.
\end{itemize}
\end{itemize}
\end{lemma}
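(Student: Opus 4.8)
The plan is to reduce to Lemma~\ref{lem:algffwlpre}. First, given $A$ and $k$, set $b_1 = \min(\bdc(A), k) - 1$ and $b_2 = \bdc(A) - 1$, and invoke the algorithm of Lemma~\ref{lem:algffwlpre} with these values. If it reports that no bipartition $(B_1, B_2)$ with $\bdc(B_i) \le b_i$ exists, then by the definition of $k$-well-linkedness (no bipartition has both $\bdc(B_1) < \bdc(A)$ and $\bdc(B_2) < \bdc(A)$ with $\bdc(B_1) < k$ or $\bdc(B_2) < k$; note that $\bdc(B_1) < \min(\bdc(A),k)$ captures exactly the case where the side with small border is $B_1$, and the other case is symmetric after swapping) we can correctly conclude $A$ is $k$-well-linked. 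Here I need to be slightly careful: a non-$k$-well-linkedness witness is a bipartition $(C_1,C_2)$ with $\bdc(C_i) < \bdc(A)$ for both $i$ and $\bdc(C_j) < k$ for at least one $j$; up to swapping the two parts this is exactly a bipartition with $\bdc(B_1) < \min(\bdc(A),k)$ and $\bdc(B_2) < \bdc(A)$, so running Lemma~\ref{lem:algffwlpre} once with $b_1, b_2$ as above suffices (the roles are already made asymmetric by the bounds, so no second call with swapped bounds is needed — but if one worries about it, a second call with the bounds swapped costs only a constant factor).

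Next, suppose Lemma~\ref{lem:algffwlpre} returns a bipartition $(B_1, B_2)$ with $\bdc(B_1) \le b_1 < \min(\bdc(A), k)$ and $\bdc(B_2) \le b_2 < \bdc(A)$. This is a non-$k$-well-linkedness witness for $A$, but it need not be minimal, so $\co{B_1}$ and $\co{B_2}$ need not yet be linked into $\co{A}$. The fix is to replace $(B_1, B_2)$ by a minimal non-$k$-well-linkedness witness, and then apply Lemma~\ref{lem:minnonkwlwitnesslinked}, which guarantees that for a minimal witness both $\co{B_1}$ and $\co{B_2}$ are linked into $\co{A}$. To obtain a minimal witness efficiently I would not enumerate all witnesses; instead, I would iteratively uncross. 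Concretely, having a witness $(B_1, B_2)$, I re-run Lemma~\ref{lem:algffwlpre} with $b_1' = \bdc(B_1) - 1$, $b_2' = \bdc(B_2)$ and also with $b_1' = \bdc(B_1)$, $b_2' = \bdc(B_2) - 1$; if either call succeeds I get a witness with strictly smaller $\bdc(B_1) + \bdc(B_2)$, and I repeat. Since $\bdc(B_1) + \bdc(B_2)$ is a nonnegative integer bounded by $2\bdc(A)$, this terminates after at most $2\bdc(A)$ rounds, each costing $2^{\OO(\bdc(A))} \cdot \rank(G)^2 \cdot |A|$, so the total is still $2^{\OO(\bdc(A))} \cdot \rank(G)^2 \cdot |A|$. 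When no such call succeeds, the current $(B_1, B_2)$ minimizes $\bdc(B_1) + \bdc(B_2)$ among all non-$k$-well-linkedness witnesses (this needs the observation that minimizing the sum in stages, decrementing one coordinate at a time, reaches the global minimum because the set of achievable pairs $(\bdc(B_1), \bdc(B_2))$ is "downward-reachable" via Lemma~\ref{lem:uncrosswl}-style uncrossing — alternatively, just argue directly that if $(B_1,B_2)$ is not sum-minimal there is a witness with smaller sum and hence one detectable by one of the two decrement calls). Then Lemma~\ref{lem:minnonkwlwitnesslinked} applies and we return $(B_1, B_2)$.

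The main obstacle I anticipate is the reduction of "find a minimal witness" to "test for a witness with prescribed border-size bounds" cleanly enough that the termination and correctness argument is airtight — in particular, making sure that the greedy decrement process actually reaches a global minimizer of $\bdc(B_1) + \bdc(B_2)$ rather than getting stuck at a local one. The cleanest way to sidestep this is probably to argue monotonicity: if there is \emph{any} witness with sum $s' < s$ where $s = \bdc(B_1)+\bdc(B_2)$, then there is one with sum exactly $s-1$, obtained by an uncrossing step from the $s'$-witness; this follows from Lemma~\ref{lem:uncrosswl} applied to the current witness and a sum-$s'$ witness, in the same spirit as the proof of Lemma~\ref{lem:uncrosswlappl}. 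With that monotonicity in hand, one decrement call per round suffices and the argument is complete. Everything else — the graph construction $G^*$, the Ford-Fulkerson flow computations, and the bookkeeping to extract the hyperedge bipartition from the vertex cut — is exactly as in the proof of Lemma~\ref{lem:algffwlpre}, so I would simply cite it rather than reproduce it.
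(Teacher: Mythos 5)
Your overall architecture is the same as the paper's: reduce everything to \Cref{lem:algffwlpre}, and obtain the linkedness of $\co{B_1}$ and $\co{B_2}$ by producing a witness that minimizes $\bdc(B_1)+\bdc(B_2)$ and invoking \Cref{lem:minnonkwlwitnesslinked}. The first half of your argument (a single call with $b_1=\min(\bdc(A),k)-1$, $b_2=\bdc(A)-1$ detects non-$k$-well-linkedness) is fine.

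The gap is in how you reach a sum-minimal witness. Your greedy decrement only queries the two boxes $[0,\beta_1-1]\times[0,\beta_2]$ and $[0,\beta_1]\times[0,\beta_2-1]$, where $(\beta_1,\beta_2)$ are the border sizes of the current witness. A witness of strictly smaller total border size need not lie in either box: if the current witness has $(\beta_1,\beta_2)=(3,1)$ and the only smaller-sum witness has $(0,3)$, both decrement calls fail and the greedy stalls at a non-minimal witness, so \Cref{lem:minnonkwlwitnesslinked} cannot be applied. Your proposed monotonicity patch (``if some witness has sum $s'<s$ then one has sum exactly $s-1$'') does not repair this, because even a sum-$(s-1)$ witness need not be coordinate-wise dominated by one of the two decremented bound pairs, and ``one decrement call per round'' is not a well-defined call to \Cref{lem:algffwlpre}, which takes individual bounds $b_1,b_2$ rather than a bound on their sum. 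The clean fix — and what the paper actually does — is to skip the local search entirely: run \Cref{lem:algffwlpre} for \emph{all} pairs $b_1<\min(\bdc(A),k)$, $b_2<\bdc(A)$ (only $\OO(\bdc(A)^2)=2^{\OO(\bdc(A))}$ calls, so the time bound is unaffected) and return the output of a successful pair with minimal $b_1+b_2$; that output is then genuinely sum-minimal, since otherwise the pair $(\bdc(B_1),\bdc(B_2))$ itself would have been a successful pair with smaller sum.
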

\begin{proof}
It suffices to apply the algorithm of \Cref{lem:algffwlpre} for all $b_1 < \min(\bdc(A), k)$ and $b_2 < \bdc(A)$.
If it provides no output for any such choice, then $A$ is $k$-well-linked, and otherwise the bipartition with minimal $b_1+b_2$ guarantees the properties (in particular, linkedness by \Cref{lem:minnonkwlwitnesslinked}).
\end{proof}

\subsection{Local search in hypergraphs}
The basic ingredient for giving algorithms for finding separations in unbreakable hypergraphs is a local search routine that given a hyperedge $e \in E(G)$, enumerates all small internally connected sets $A$ containing $e$.
In fact, we need a slight generalization of this statement, also addressing the technicality that the multiplicity and the rank of $G$ should be taken into account.

\begin{lemma}
\label{lem:localchipalgo}
Let $G$ be a hypergraph whose representation is already stored.
There is an algorithm that, given a set $I \subseteq E(G)$, and integers $p,s,k \ge 0$, in time $\OO(|I| \cdot \rank(G)) + p \cdot s^{\OO(k+\rank(G))}$ either (1) lists all sets $A \subseteq E(G)$ so that
\begin{itemize}
\item $I \subseteq A$,
\item $|V(A)| \le s$,
\item $\bdc(A) \le k$, and
\item every internal component of $A$ intersects $I$,
\end{itemize}
or (2) returns a multiplicity-$p$-violator in $G$.
Furthermore, in the case (1), the number of such sets $A$ is at most $s^{\OO(k)}$.
\end{lemma}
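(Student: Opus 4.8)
The plan is to grow the sets $A$ by a breadth-first-search-like exploration starting from $I$, branching whenever we have a choice of whether to include a hyperedge in the current set or to place it on the boundary. Concretely, I would process hyperedges one at a time, maintaining a partial set $A_0 \subseteq A$ that is ``committed'' together with a partial boundary commitment, and at each step pick a hyperedge $e$ that is incident to a vertex currently on the tentative boundary of $A_0$ but not yet classified; we then branch into two cases: either $e \in A$ (move $e$ into $A_0$, which may shrink the boundary) or the vertices of $e$ that are not yet in $V(A_0)$ stay outside $A$ forever, i.e.\ $V(e) \cap (V(A) \setminus \bd(A)) = \emptyset$. The first step is to note that the condition ``every internal component of $A$ intersects $I$'' is exactly what makes this growth process complete: starting from $I$, every hyperedge of $A$ is reachable from $I$ by a path inside $\primal(G)[V(A)\setminus\bd(A)]$ (using \Cref{lem:alg_internal_comps}-style reasoning about internal components), so a BFS from $I$ confined to $V(A) \setminus \bd(A)$ discovers all of $A$ before it is forced to stop at $\bd(A)$.

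The second step is to bound the branching. Each branch either adds a hyperedge to $A_0$ or permanently fixes at least one vertex of $V(e) \setminus V(A_0)$ to lie outside $V(A) \setminus \bd(A)$; but every vertex we fix that way, if it later turns out to be in $V(A_0)$ as well (because some committed hyperedge contains it), must land in $\bd(A)$. So along any root-to-leaf path in the branching tree, the number of ``exclude'' branches that actually contribute a boundary vertex is at most $k$ (since $\bdc(A) \le k$), and each such branch has at most $\rank(G)$ choices of which vertex of $e$ to blame — giving a branching tree of size $s^{\OO(k+\rank(G))}$ after I also account for the $|V(A)| \le s$ bound to limit how many ``include'' branches can be taken. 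The per-node work is polynomial in $s$ and $\rank(G)$ using the stored representation and \Cref{lem:hypergraph_impl2}. To get the running time right and to avoid rescanning $I$, I would first spend $\OO(|I| \cdot \rank(G))$ time (via \Cref{lem:hypergraph_impl2}) to compute $V(I)$ and $\bd(I)$ and set up $I$ as the initial committed set; everything afterwards is charged against the $s^{\OO(k+\rank(G))}$ enumeration, multiplied by $p$ for the reason explained next.

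The third step handles the multiplicity technicality. The danger is that a single vertex of the tentative boundary can be incident to arbitrarily many hyperedges all with the same vertex set, and then the ``pick the next incident hyperedge'' step could blow up. The fix is: whenever we examine the incidence list of a boundary vertex and encounter more than $p$ hyperedges sharing a common vertex set, we immediately halt and output any $p+1$ of them as a multiplicity-$p$-violator — this is case (2). Otherwise every vertex sees at most $p$ parallel copies, which multiplies the number of branches at each node by at most $p$ (or, more carefully, contributes a factor of $p$ to the per-path cost since distinct parallel copies lead to the same combinatorial continuation), yielding the claimed $p \cdot s^{\OO(k+\rank(G))}$ term. Finally, the claimed bound of $s^{\OO(k)}$ on the number of output sets: since $|V(A)| \le s$, $A$ is determined by the partition $(V(A)\setminus\bd(A),\ \bd(A))$ of a subset of $V(G)$ of size $\le s$ together with which parallel copies are in, but really by the boundary $\bd(A)$ of size $\le k$ (which has $\le s^{k}$ choices among the $\le s$ relevant vertices) plus a choice of which internal components to keep; the internal-component-meets-$I$ condition pins this down and gives $s^{\OO(k)}$.

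The main obstacle I anticipate is making the branching argument airtight while respecting the data-structure constraints: one has to be careful that ``fixing a vertex to be outside $V(A)\setminus\bd(A)$'' interacts correctly with later commitments (a committed hyperedge can re-introduce that vertex, but only into $\bd(A)$), and that the bookkeeping of the tentative boundary can be updated in amortized polynomial-in-$s$ time per branch node using only the linked-list representation of \Cref{lem:hypergraph_impl,lem:hypergraph_impl2} — in particular without a dictionary, which is why the multiplicity cutoff at $p$ is essential rather than cosmetic. Getting the exponent in $s^{\OO(k+\rank(G))}$ to come out with the stated shape (rather than, say, $(sk)^{\OO(k)}$) will require choosing the branching measure as a combination of remaining boundary budget $k$, remaining vertex budget $s$, and $\rank(G)$, and checking that each branch strictly decreases it.
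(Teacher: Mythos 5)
There is a genuine gap in the branching argument, and it stems from branching on \emph{hyperedges} rather than on \emph{vertices}. In your scheme, excluding a hyperedge $e$ incident to a tentative boundary vertex $v$ forces $v \in \bd(A)$ and thus consumes one unit of the boundary budget $k$ --- but only the \emph{first} time. A single vertex $v \in \bd(A)$ can be incident to arbitrarily many hyperedges, and after $v$'s status is fixed your algorithm as described still branches (include/exclude) on each remaining hyperedge of $\ninc(v)$ without any new quantity decreasing: the ``blamed'' vertex is the already-committed $v$, and the vertices of $V(e)\setminus V(A_0)$ that you declare to be outside the interior need not lie in $V(A)$ at all, so no budget is charged. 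Hence the claim that at most $k$ exclude-branches occur per root-to-leaf path is false, and the tree size is not bounded by $s^{\OO(k+\rank(G))}$. (These extra branches are also not genuinely needed: once $\bd(A)$ and the interior $V(A)\setminus\bd(A)$ are fixed, $A$ is forced --- it is $I$ together with every hyperedge meeting the interior --- so per-hyperedge decisions are the wrong primitive.) Relatedly, your closing argument for the $s^{\OO(k)}$ bound on the number of outputs is circular: ``$s^k$ choices of $\bd(A)$ among the $\le s$ relevant vertices'' presupposes a fixed pool of $\OO(s)$ candidate boundary vertices, but that pool is $V(A)$, which varies with $A$.

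The paper's proof repairs exactly this point. It maintains a committed set $I$ and a set $X \subseteq \bd(I)$ of vertices committed to $\bd(A)$, and branches \emph{binarily on a vertex} $v \in \bd(I)\setminus X$: either $v \in \bd(A)$ (add $v$ to $X$; prune if $|X|>k$) or $v \in V(A)\setminus\bd(A)$ (then \emph{all} of $\ninc(v)$ is forced into $A$, so add it to $I$; prune if $|V(I)|>s$). Each branch strictly increases either $|X| \le k$ or $|V(I)\setminus\bd(I)| \le s$, so the recursion tree has size $\binom{s+k}{k} = s^{\OO(k)}$, which simultaneously yields the output-count bound; the base case $X=\bd(I)$ forces $A=I$ via the internal-component condition. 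The $p\cdot s^{\OO(\rank(G))}$ factor enters only through the multiplicity checks (if $|I|$ or $|\ninc(v)|$ exceeds $p\cdot(s+1)^{\rank(G)}$ while $|V(\cdot)|\le s$, pigeonhole plus radix sort produces a multiplicity-$p$-violator), which is the part of your sketch that is essentially right. Your BFS completeness observation (every valid $A$ is reachable from $I$ through the interior) is also correct and matches the paper; it is the branching measure that needs to be reorganized around vertices for the proof to close.
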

\begin{proof}
We will describe a recursive algorithm, that works otherwise as the algorithm described in the statement of the lemma, but additionally takes as input a set $X \subseteq V(I)$, and outputs only such sets $A$ with $X \subseteq \bd(A)$.
Calling this algorithm with $X = \emptyset$ gives the algorithm of the statement.

First, we compute in time $\OO(|I| \cdot \rank(G))$ the set $V(I)$.
If $|V(I)| > s$, we can immediately return.
Then, if $|I| > p \cdot (s+1)^{\rank(G)}$, there must be a multiplicity-$p$-violator that is a subset of $I$, and by taking an arbitrary subset $I'$ of $I$ of size $|I'| = p \cdot (s+1)^{\rank(G)} + 1$ and using radix sort for finding duplicates, we can find and return such a subset in $p \cdot s^{\OO(\rank(G))}$ time.

Now we can assume that $|V(I)| \le s$ and $|I| \le p \cdot (s+1)^{\rank(G)}$.
If $|X| > k$, we can also return immediately.
Then we compute $\bd(I)$ in time $\OO(|I| \cdot \rank(G)) = p \cdot s^{\OO(k+\rank(G))}$.
If $X$ intersects $V(I) \setminus \bd(I)$, we can again return the empty list.
If $X = \bd(I)$, then there is exactly one set $A$ satisfying the conditions, namely $I$ itself, so we can output it.

Then assume that $X \subsetneq \bd(I)$, and choose an arbitrary $v \in \bd(I) \setminus X$.
For any set $A$ satisfying the conditions it holds that either $v \in \bd(A)$ or $v \in V(A) \setminus \bd(A)$.
We now make two recursive calls, one enumerating the sets $A$ of the first type and other enumerating the sets $A$ of the latter type.
For the first type, it suffices to simply add $v$ to $X$.

For the second type, we need to add all hyperedges in $\ninc(v)$ to $I$.
If $|\ninc(v)| \le p \cdot (s+1)^{\rank(G)}$, we simply do this, recursing with $I \cup \ninc(v)$.
If $|\ninc(v)| > p \cdot (s+1)^{\rank(G)}$, we take a subset $Q \subseteq \ninc(v)$ of size $|Q| = p \cdot (s+1)^{\rank(G)}+1$.
We first check in $p \cdot s^{\OO(\rank(G))}$ time whether $Q$ contains a multiplicity-$p$-violator, and if yes, return that.
If not, then it must be that $|V(Q)| > s$, implying that $|V(\ninc(v))| > s$, so the recursive call with $I \cup \ninc(v)$ would immediately return the empty list anyway, so we can omit it.

This concludes the description of the algorithm.
We observe that the first recursive call takes $\OO(|I| \cdot \rank(G)) + p \cdot s^{\OO(\rank(G))}$ time, and the subsequent ones take $p \cdot s^{\OO(\rank(G))}$ time.
Therefore, if the total size of the recursion tree is $R$, then the running time can be bounded by $\OO(|I| \cdot \rank(G)) + R^{\OO(1)} \cdot p \cdot s^{\OO(\rank(G))}$.
Furthermore, the output-size can be bounded by $R$.
It remains to prove that $R \le s^{\OO(k)}$.

To bound $R$, we observe that in a recursive call of the first type, the size of $X$ increases, and in a recursive call of the second type, $|V(I) \setminus \bd(I)|$ increases.
As $|X|$ is bounded by $k$ and $|V(I) \setminus \bd(I)|$ by $s$, we get that $R \le \binom{s+k}{k} \le s^{\OO(k)}$.
\end{proof}

We then apply \Cref{lem:localchipalgo} to give an algorithm for verifying if a set is $k$-well-linked or returning a non-$k$-well-linkedness witness of it.
For generality, we phrase the algorithm in terms of $k$-well-linkedness, but note that by setting $k = \bdc(A)$, it can be also used in the context of well-linkedness.

\begin{lemma}
\label{lem:kwlubalg}
Let $G$ be a hypergraph whose representation is already stored.
Let also $s, k \ge 1$ be integers so that $G$ is $(s,k)$-unbreakable.
There is an algorithm that, given $s$, $k$, a set $A \subseteq E(G)$, and an integer $p$, in time $|A| \cdot p \cdot s^{\OO(k+\rank(G))} \cdot 2^{\OO(\bdc(A))}$ either
\begin{itemize}
\item outputs a multiplicity-$p$-violator in $G$,
\item correctly concludes that $\co{A}$ is $k$-well-linked, or
\item outputs a set $B_1 \subseteq \co{A}$ so that for $B_1$ and $B_2 = \co{A} \setminus B_1$ it holds that
\begin{itemize}
\item $|V(B_1)| < s$, 
\item $\bdc(B_1) < \bdc(A)$ and $\bdc(B_2) < \bdc(A)$,
\item either $\bdc(B_1) < k$ or $\bdc(B_2) < k$, and
\item both $\co{B_1}$ and $\co{B_2}$ are linked into $A$.
\end{itemize}
\end{itemize}
\end{lemma}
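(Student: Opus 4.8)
The plan is to reduce the statement to the task of locating, if it exists, a non-$k$-well-linkedness witness of $\co A$ one of whose two sides spans fewer than $s$ vertices; the point of $(s,k)$-unbreakability is exactly that such a ``small'' witness must exist whenever $\co A$ fails to be $k$-well-linked. Once such a witness is found, the multiplicity-$p$-violator alternative and the linkedness conclusion will follow for free.

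For the structural reduction, note that by symmetry $\bdc(A)=\bdc(\co A)$, so a bipartition $(B_1,B_2)$ of $\co A$ is a non-$k$-well-linkedness witness precisely when $\bdc(B_1)<\min(\bdc(A),k)$ and $\bdc(B_2)<\bdc(A)$, and in particular a witness $(B_1,B_2)$ minimizing $\bdc(B_1)+\bdc(B_2)$ has $\bdc(B_1)<k$. Then the separation $(B_1,\co{B_1})$ of $G$ has order $<k$, so $(s,k)$-unbreakability forces $|V(B_1)|<s$ or $|V(\co{B_1})|=|V(A\cup B_2)|<s$, and in the latter case $|V(B_2)|<s$. Since the list of properties required of the output is symmetric under swapping the two sides of the bipartition, in either case a $\bdc$-minimal witness yields, after possibly swapping, a bipartition with $|V(B_1)|<s$, $\bdc(B_1),\bdc(B_2)<\bdc(A)$, and $\min(\bdc(B_1),\bdc(B_2))<k$; and by Lemma~\ref{lem:minnonkwlwitnesslinked} (applied to $\co A$) the complements $\co{B_1},\co{B_2}$ of such a $\bdc$-minimal witness are linked into $A$, which is the last part of the conclusion.

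The algorithm would iterate over the hyperedges and, for each, call the local-search routine of Lemma~\ref{lem:localchipalgo} with parameters $p,s,k$. This either returns a multiplicity-$p$-violator --- which we hand back as our own output --- or enumerates all internally connected sets $C$ containing the seed with $|V(C)|\le s$ and $\bdc(C)\le k$, at most $s^{\OO(k)}$ of them. For each such candidate core $C$ we test, via minimum-order vertex-cut computations in the style of Lemma~\ref{lem:algffwlpre}, whether $C$ can play the role of (a core of) the small side of a witness; to make the search local we contract $A$ and $\co A\setminus C$ each to a single hyperedge, obtaining an instance on $<s+\OO(k)$ vertices whose hyperedge count is bounded in terms of $p$, $s$ and $\rank(G)$ unless it contains a multiplicity-$p$-violator, and on that bounded instance we run the brute-force minimizer of Lemma~\ref{lem:algffwl}. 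Un-contracting its output and using transitivity of linkedness (Lemmas~\ref{lem:transitivityoflinkedness} and~\ref{lem:linkednessexpand}) produces a $\bdc$-minimal witness of $\co A$ with small side contained in $C$ and with the required linkedness into $A$. If no seed produces a valid candidate, we correctly report that $\co A$ is $k$-well-linked. The running time is $\OO(|A|)$ local-search calls, each costing $p\cdot s^{\OO(k+\rank(G))}$ plus a term linear in the instance size, times $s^{\OO(k)}$ candidates each processed in $2^{\OO(\bdc(A))}\cdot\rank(G)^{\OO(1)}\cdot|\co A|$ time, which matches the claimed bound.

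The main obstacle is the correctness of this search: showing that when $\co A$ is not $k$-well-linked, some enumerated internally connected $C$ really does lead to a witness. The subtlety is that the small side $B_1$ of a $\bdc$-minimal witness need not itself be internally connected, so no single call of Lemma~\ref{lem:localchipalgo} produces it; one has to argue --- using posimodularity and submodularity of $\bdc$ (Lemmas~\ref{lem:uncrosswl} and~\ref{lem:uncrosswlappl}) together with $(s,k)$-unbreakability --- that it suffices to guess an internally connected core $C$ with $|V(C)|<s$ contained in the small side and then re-minimize inside the bounded contracted instance built around $C$, recovering a genuine $\bdc$-minimal witness of $\co A$. Everything else --- the structural facts of the second paragraph, the data-structure bookkeeping for the contractions, and the time accounting --- is routine given the tools already established.
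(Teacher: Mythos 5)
Your structural reduction (a $\bdc$-minimal witness has one side with $|V(B_1)|<s$ by unbreakability, $\bdc(B_1)<k$ by minimality, and linked complements by \Cref{lem:minnonkwlwitnesslinked}) matches the paper. But the step you yourself flag as ``the main obstacle'' is exactly where the proof lives, and your proposed resolution does not work. You suggest guessing a single internally connected core $C$ inside the small side and then re-minimizing in the instance obtained by contracting $A$ and $\co{A}\setminus C$ to single hyperedges. In that contracted instance the only available bipartitions of $\co{A}$ keep $\co{A}\setminus C$ intact, so the re-minimization can only ever return a witness whose small side is contained in $C$. When the minimal witness $B_1$ has several internal components $C_1,\ldots,C_m$, no such witness need exist: $\bdc(C_1)\le\bdc(B_1)$ always holds, but $\bdc(\co{A}\setminus C_1)=\bdc(B_2\cup C_2\cup\cdots\cup C_m)$ can reach $\bdc(B_2)+|\bd(B_1)\cap\bd(A)\setminus\bd(B_2)|\ge\bdc(A)$, so $(C_1,\co{A}\setminus C_1)$ fails to be a witness, and the same can happen for every single component. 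So ``some enumerated internally connected $C$ leads to a witness'' is not something you can argue; it is false in general.

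The paper closes this gap by a different device. Choosing the witness to also minimize $|B_1|$ (tertiary, after $\bdc(B_1)+\bdc(B_2)$ and $|V(B_1)|$) forces every internal component of $B_1$ to have its border meet $X=\bd(A)\setminus\bd(B_2)$ --- otherwise that component could be moved to $B_2$ without increasing either border. One then guesses $X$ ($2^{\bdc(A)}$ options), takes as seed the set $I=\ninc(X)\setminus A$ (which is small unless it contains a multiplicity-$p$-violator, and which must be entirely contained in $B_1$), and invokes \Cref{lem:localchipalgo} with this multi-hyperedge seed: that routine is built precisely to enumerate \emph{disconnected} candidate sets all of whose internal components meet $I$, so it outputs $B_1$ itself in one call. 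Your plan, by contrast, seeds from single hyperedges, obtains only connected pieces, and has no mechanism for reassembling the right union of components. A secondary issue: your per-candidate cost carries a factor $|\co{A}|$ (and your outer loop over ``the hyperedges'' is ambiguous about which set is iterated), whereas the claimed bound only allows a factor $|A|$; the paper stays within it because $\bdc(B_2)=\bdc(B_1\cup A)$ is computed from $A$ and the small set $B_1$ via \Cref{lem:hypergraph_impl2}.
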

\begin{proof}
Let $(B_1,B_2)$ be a bipartition of $\co{A}$ witnessing that it is not $k$-well-linked.
Moreover, choose $(B_1,B_2)$ so that it (1) minimizes $\bdc(B_1)+\bdc(B_2)$, (2) subject to (1), minimizes $|V(B_1)|$, and (3) subject (1) and (2) minimizes $|B_1|$.
Now, if $|V(B_i)| \ge s$ for both $i \in [2]$, then one of the separations $(B_1, \co{B_1})$ or $(B_2, \co{B_2})$, whichever has order $<k$, would witness that $G$ is not $(s,k)$-unbreakable.
Therefore, we have that $|V(B_1)| < s$.
By \Cref{lem:minnonkwlwitnesslinked}, the first minimization implies that both $\co{B_1}$ and $\co{B_2}$ are linked into $A$.

\begin{claim}
\label{lem:kwlubalg:claim1}
If $C$ is an internal component of $B_1$, then $\bd(C)$ intersects with $\bd(A) \setminus \bd(B_2)$.
\end{claim}
\begin{claimproof}
Suppose that $B_1$ contains an internal component $C$ so that $\bd(C) \cap \bd(A) \subseteq \bd(B_2)$.
We claim that then, $(B_1',B_2') = (B_1 \setminus C, B_2 \cup C)$ contradicts the choice of $(B_1,B_2)$.
Clearly, $|V(B_1')| \le |V(B_1)|$ and $|B_1'| < |B_1|$, so it remains to prove that $\bdc(B_i') \le \bdc(B_i)$ for both $i \in [2]$.

Because $C$ is an internal component of $B_1$, we have that $\bd(C) \subseteq \bd(B_1) = \bd(B_2 \cup A) \subseteq \bd(A) \cup \bd(B_2)$.
However, by $\bd(C) \cap \bd(A) \subseteq \bd(B_2)$, we get that $\bd(C) \subseteq \bd(B_2)$.
This implies that $\bd(B'_i) \subseteq \bd(B_i)$ for both $i \in [2]$.
\end{claimproof}

Let us show that we can enumerate all candidates for the set $B_1$ in time $|A| \cdot p \cdot s^{\OO(k+\rank(G))}$ (or find a multiplicity-$p$-violator).
First, the set $\bd(A)$ can be computed in $\OO(\rank(G) \cdot |A|)$ time by \Cref{lem:hypergraph_impl2}.
We try all candidates for the set $X = \bd(A) \setminus \bd(B_2)$, of which there are $2^{\bdc(A)}$.

If $v \in X$, then it must be that $(\ninc(v) \setminus A) \subseteq B_1$.
This implies that if $X$ is correct, then $|V(\ninc(v) \setminus A)| < s$.
Now, if $|\ninc(v) \setminus A| > p \cdot (s+1)^{\rank(G)}$, then either a subset of $\ninc(v) \setminus A$ is a multiplicity-$p$-violator, or $|V(\ninc(v) \setminus A)| \ge s$.
For each $v \in X$, we test if $|\ninc(v) \setminus A| > p \cdot (s+1)^{\rank(G)}$ in time $|A| \cdot p \cdot s^{\OO(\rank(G))}$, and if yes, we obtain in the same running time either a multiplicity-$p$-violator, in which case we can immediately terminate the algorithm, or the conclusion that $|V(\ninc(v) \setminus A)| \ge s$, in which case $X$ is not guessed correctly and we can discard it.

Now, assume that $|\ninc(v) \setminus A| \le p \cdot (s+1)^{\rank(G)}$ for all $v \in X$.
Therefore we have that $|\ninc(X) \setminus A| \le \bdc(A) \cdot p \cdot (s+1)^{\rank(G)}$.
It must be that $(\ninc(X) \setminus A) \subseteq B_1$, and by \Cref{lem:kwlubalg:claim1}, each internal component of $B_1$ intersects with $\ninc(X) \setminus A$.
Therefore, we can use \Cref{lem:localchipalgo} to enumerate all candidates for $B_1$ in time $\bdc(A) \cdot p \cdot s^{\OO(k+\rank(G))}$ (or obtain a multiplicity-$p$-violator).
For each we can compute $\bdc(B_1)$ in $\bdc(A) \cdot p \cdot s^{\OO(k+\rank(G))}$ time by \Cref{lem:hypergraph_impl2}, and $\bdc(B_2) = \bdc(\co{A} \setminus B_1) = \bdc(B_1 \cup A)$ in time $|A| \cdot p \cdot s^{\OO(k+\rank(G))}$ by \Cref{lem:hypergraph_impl2}.
Finally, we return a $B_1$ that satisfies the conditions and minimizes $\bdc(B_1)+\bdc(B_2)$, or if none exist we conclude that $\co{A}$ is $k$-well-linked.
\end{proof}

\section{Manipulating superbranch decompositions}
\label{sec:manipulatingsuperbranch}
In this section we prove several auxiliary algorithmic lemmas about manipulating superbranch decompositions efficiently.

\subsection{From tree decompositions to superbranch decompositions}
We start by proving \Cref{the:highlevel:fromgraphstohypergraphs}.

\thehighlevelfromgraphstohypergraphs*
\begin{proof}
Denote $\Tc = (T,\bag)$.
We first construct a tree decomposition $\Tc^* = (T^*,\bag^*)$ as follows.
For each $uv \in E(G)$, we add a leaf node $\ell_{uv}$ with $\bag^*(\ell_{uv}) = \{u,v\}$ adjacent to a node $t \in V(T)$ so that $\{u,v\} \subseteq \bag(t)$.
Also, for each $v \in V(G)$, we add a leaf node $\ell_{v}$ with $\bag^*(\ell_{v}) = \{v\}$ adjacent to a node $t \in V(T)$ so that $v \in \bag(t)$.
With the help of the data structure of \Cref{lem:cliqtdds}, $\Tc^*$ can be constructed in time $\OO(\|G\| + \|\Tc\|)$.

As $k \ge 1$ and the newly added bags have size at most $2$, it follows that $\Tc^*$ is $(2k,k)$-unbreakable and has $\adhsize(\Tc^*) \le 2k$.

Now, let $T'$ be the tree obtained from $T^*$ by successively removing leaves that are not in $\{\ell_{uv} : uv \in E(G)\} \cup \{\ell_{v} : v \in V(G)\}$ and contracting edges incident to degree-$2$ nodes.
Note that $T'$ can be computed in $\OO(|V(T^*|) = \OO(\|G\| + \|\Tc\|)$ time.
Let also $\lmap' \colon \leafs(T') \to E(\hyperg(G))$ be the function that maps each $\ell_{v}$ to the hyperedge $e$ with $V(e) = \{v\}$, and each $\ell_{uv}$ to the hyperedge $e$ with $V(e) = \{u,v\}$.
We claim that $\Tc' = (T',\lmap')$ is a superbranch decomposition of $\hyperg(G)$ that satisfies the required properties.

Let us first prove that $\adhsize(\Tc') \le 2k$.
Consider $xy \in E(T')$.
By the construction of $T'$, there exists $x^*y^* \in E(T^*)$, so that for each $e_{v} \in \lmap'(\vec{xy})$, the leaf $\ell_{v}$ is closer to $x^*$ than $y^*$ in $T^*$, and for each $e_{uv} \in \lmap'(\vec{xy})$, the leaf $\ell_{uv}$ is closer to $x^*$ than $y^*$ in $T^*$, and similarly while swapping the roles of $x$ and $y$.
By the vertex condition of $T^*$, this implies that if $v \in \adh_{\Tc'}(xy)$, then also $v \in \adh_{\Tc^*}(x^*y^*)$, implying that $\adhsize(\Tc') \le \adhsize(\Tc^*) \le 2k$.

Let us then prove that for every $t \in \vint(T')$, $V(t)$ is $(2k,k)$-unbreakable in $\hyperg(G)$.
By the way $T'$ was constructed, we have that $t$ corresponds to a connected subtree $T_t^*$ of internal nodes of $T^*$, so that $T_t^*$ all nodes of $T_t^*$ have degree $2$ in $T^*$, except exactly one which has degree $\ge 3$.
Suppose that $v \in V(t)$.
This means that there exists two neighbors $t_1,t_2 \in N_{T'}(t)$ so that $v \in V(\lmap(\vec{t_1 t}))$ and $v \in V(\lmap(\vec{t_2 t}))$.
In particular, $v \in \adh_{\Tc'}(t_1 t) \cap \adh_{\Tc'}(t_2 t)$.
Now, by our previous argument about adhesion size, the adhesions of $\Tc'$ at $t_1 t$ and $t_2 t$ are subsets of adhesions of $\Tc^*$ that are between the subtree $T_t^*$ and its neighbors.
Because all nodes in $T_t^*$ have degree $2$ in $T^*$ except one, it follows that $v$ must be in the bag of the node of degree $\ge 3$ in $T_t^*$, and therefore $V(t)$ is a subset of that bag.

We should also argue that the representation of $\Tc' = (T',\lmap')$ can be computed in $\OO(k \cdot (\|G\|+\|\Tc\|))$ time.
Let us first compute the adhesions.
We first compute in total time $\OO(\|G\|)$ for each $v \in V(G)$ the number of hyperedges $e \in E(\hyperg(G))$ so that $v \in V(e)$.
We then root $T'$ at an arbitrary node, and compute by bottom-up dynamic programming for each edge of form $tp \in E(T')$, where $p = \parent(t)$, the adhesion $\adh(tp)$, and for each $v \in \adh(tp)$ the number of hyperedges $e \in \lmap(\vec{tp})$ with $v \in V(e)$.
In particular, if $t \in V(T)$ and we have computed this information for all edges of form $ct$, where $c \in \children(t)$, then this information can be computed for the edge $tp$, where $p = \parent(t)$, by simply summing up the numbers of hyperedges for all vertices occuring in the adhesions $\adh(ct)$, and observing that if this number is less than the total number of hyperedges in which the vertex occurs, then the vertex is in $\adh(tp)$.
This takes in total $\OO(k \cdot |V(T')|) = \OO(k \cdot \|G\|)$ time.

After computing the adhesions, the other information stored in the representation can be easily computed in $\OO(k \cdot |V(T')|) = \OO(k \cdot \|G\|)$ time.
\end{proof}

\subsection{Transformations of superbranch decompositions}
\label{subsec:transsuperbds}
In our algorithm we will manipulate superbranch decompositions in various ways.
We now define notation for these manipulations and show that they can be implemented efficiently.

\subsubsection{Contraction}
The first transformation is the contraction of an edge of the tree.

Let $\Tc = (T,\lmap)$ be a superbranch decomposition of a hypergraph $G$.
Let also $uv \in \eint(T)$ be an internal edge of $T$.
We define that the \emph{contraction} of $\Tc$ along $uv$ is the superbranch decomposition $\Tc \contr uv = \Tc' = (T',\lmap')$ with
\begin{itemize}
\item $V(T') = (V(T) \setminus \{u, v\}) \cup \{w\}$,
\item $E(T') = (E(T) \setminus \ninc(\{u,v\})) \cup \{wx : x \in N_T(\{u,v\})\}$, and
\item $\lmap'(\ell) = \lmap(\ell)$ for all $\ell \in \leafs(T) = \leafs(T')$.
\end{itemize}

In other words, $T' = T \contr uv$, while the leaf mapping is preserved.

When $\Tc$ is rooted, we define that the contracton preserves the root in the natural way, i.e., if neither $u$ or $v$ is the root, then the root does not change, and if one of $u$ or $v$ is the root, then the new node $w$ becomes the root.

We then give an efficient algorithm for contracting an edge of a superbranch decomposition.

\begin{lemma}
\label{lem:superbdcontractalgo}
Let $\Tc = (T,\lmap)$ be a (rooted) superbranch decomposition whose representation is stored.
There is an algorithm that, given $uv \in \eint(T)$, in time $\OO(\min(\|\torso(u)\|, \|\torso(v)\|))$ transforms $\Tc$ into $\Tc \contr uv$.
The pointers to all nodes other than $u$ and $v$ are preserved, and a pointer to the new node $w$ is returned.
\end{lemma}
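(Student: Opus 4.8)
The plan is to realise the contraction by editing in place the \emph{larger} of the two torsos, so that the new node $w$ reuses the tree node and the hypergraph representation of whichever of $u,v$ has the larger torso, and only $\OO(\min(\|\torso(u)\|,\|\torso(v)\|))$ local modifications are made. First I would find out which torso is smaller: scan the representations of $\torso(u)$ and $\torso(v)$ in lockstep (advancing alternately through the hyperedge lists and their vertex lists) and stop as soon as the shorter scan is exhausted; this costs $\OO(\min(\|\torso(u)\|,\|\torso(v)\|))$ and identifies the smaller side, after which I rename the two nodes so that $\|\torso(u)\|\le\|\torso(v)\|$. Node $w$ will then simply \emph{be} $v$'s node object and $\torso(w)$ will be $\torso(v)$'s representation, modified in place. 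This is permitted because the lemma only requires preserving pointers to nodes other than $u,v$ and returning a pointer to $w$; moreover every pointer stored \emph{into} $v$'s structures (cross-pointers from the hyperedges of $\torso(x)$ for $x\in N_T(v)\setminus\{u\}$, $\parent$-pointers of children of $v$, pointers from tree-edges $xv$) stays valid because $v$'s objects are kept.

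The core step rewrites $\torso(v)$ into $\torso(w)$. Using the cross-pointers stored at the tree-edge $uv$ I retrieve the hyperedge $e_u\in E(\torso(v))$ with $V(e_u)=\adh(uv)$ and the hyperedge $e_v\in E(\torso(u))$ with $V(e_v)=\adh(uv)$; using the correspondence between the entries of $\adh(uv)$, of $V(e_u)$ and of $V(e_v)$ maintained by the representation, I build in $\OO(|\adh(uv)|)$ time a map from the vertices of $\torso(u)$ lying in $\adh(uv)$ to the corresponding vertex objects of $\torso(w)$. Then, for every hyperedge $e_x\in E(\torso(u))$ other than $e_v$ (these correspond to the edges $xu$ with $x\in N_T(u)\setminus\{v\}$, which become the edges $xw$ of $T'$), I create a new hyperedge $f$ in $\torso(w)$ via \Cref{lem:hypergraph_impl} and fill $V(f)$ by scanning $V(e_x)=\adh(xu)$: a vertex of $\torso(u)$ in $\adh(uv)$ is routed through the precomputed map (\Cref{lem:hypergraph_impl:item2}), while an interior vertex of $\torso(u)$ is inserted into $\torso(w)$ as a fresh vertex the first time it is seen (\Cref{lem:hypergraph_impl:item3}), remembered in an auxiliary map. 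Simultaneously I rewire the tree edge $xu$, keeping the edge object but replacing its endpoint $u$ by $w$, redirecting its stored pointer to the torso-hyperedge of $\torso(w)$ from $e_u$ to $f$, and leaving unchanged its adhesion list ($\adh(xu)=\adh(xw)$) and its pointer to the unchanged hyperedge in $\torso(x)$. Since $\torso(u)$ is normal, every vertex of $\adh(uv)=V(e_v)$ lies in some hyperedge $e_x$ of $\torso(u)$ with $x\ne v$, hence after all the $f$'s have been added every vertex of $\adh(uv)$ lies in some hyperedge of $\torso(w)$ distinct from $e_u$; therefore I can now delete $e_u$ from $\torso(w)$ using \Cref{lem:hypergraph_impl:item4} without destroying any vertex that is still needed (and, by $V(\torso(u))\cap V(\torso(v))=\adh(uv)$, the fresh-vertex step never wrongly re-creates an existing vertex of $\torso(w)$). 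Finally I delete the tree-edge $uv$ together with its adhesion list and the node $u$ together with the representation of $\torso(u)$; in the rooted case, if $u$ was the root I reset the global root pointer to $w$ and clear $w$'s $\parent$-pointer (which was the edge $vu$), while if $v$ was the root nothing more is needed since $w$ is $v$'s object and the $\parent$-pointers of children of $u$ already point to the edges $xu=xw$. Every operation except the initial lockstep scan touches only $\torso(u)$, the hyperedge $e_u$, or a constant amount of data per neighbour of $u$, so the total running time is $\OO(\|\torso(u)\|)=\OO(\min(\|\torso(u)\|,\|\torso(v)\|))$.

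The main obstacle I expect is the bookkeeping of vertex identities: $\torso(u)$ and $\torso(v)$ store \emph{separate} vertex objects, and the only vertices that must be identified rather than duplicated when forming $\torso(w)$ are exactly those in $\adh(uv)$ (this is the content of the equalities $V(\torso(u))\subseteq V(\lmap(\vec{uv}))$, $V(\torso(v))\subseteq V(\lmap(\vec{vu}))$, hence $V(\torso(u))\cap V(\torso(v))=\bd(\lmap(\vec{uv}))=\adh(uv)$). Establishing the $\torso(u)$-to-$\torso(w)$ map on $\adh(uv)$ within the budget $\OO(|\adh(uv)|)$ requires that the representation maintain, for every tree-edge $st$, an entry-by-entry link between the list $\adh(st)$, the list $V(e_s)$ inside $\torso(t)$, and the list $V(e_t)$ inside $\torso(s)$; I would fold this requirement into the definition of the representation of a superbranch decomposition and check that it is preserved by this and the subsequent transformations. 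The rest is routine pointer manipulation justified by \Cref{lem:hypergraph_impl} and the definition of $\Tc\contr uv$.
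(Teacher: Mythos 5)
Your proposal is correct and follows essentially the same route as the paper's proof: determine the smaller torso by a lockstep scan, keep $v$'s node and torso representation as $w$, copy the hyperedges of the smaller torso $\torso(u)$ into it while identifying exactly the vertices of $\adh(uv)$ and creating fresh objects for the interior vertices, delete $e_u$, and fix up the tree edges and root/parent pointers. The bookkeeping concern you flag about linking the entries of $\adh(uv)$, $V(e_u)$, and $V(e_v)$ is real but is resolved exactly as you suggest, and the paper treats it at the same level of informality.
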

\begin{proof}
We denote $\Tc \contr uv = (T', \lmap')$.
First, we can in time $\OO(\min(\|\torso(u)\|, \|\torso(v)\|))$ determine which one of $\|\torso(u)\|$ and $\|\torso(v)\|$ is smaller.
Assume then \wilog that $\|\torso(u)\| = \min(\|\torso(u)\|, \|\torso(v)\|)$.

We first transform $T$ into $T'$, so that $v$ will correspond to the new node $w$.
We can in $\OO(\|\torso(u)\|)$ time compute $N_T(u)$, and then in time $\OO(\|\torso(u)\|)$ delete $u$ and all of its incident edges from $T$.
Then, we insert edge $vx$ for all $x \in N_T(u) \setminus \{v\}$ to $T$, in time $\OO(\|\torso(u)\|)$.
This transforms $T$ into $T'$.

Then, to transform $\torso_{\Tc}(v)$ into $\torso_{\Tc'}(v)$, we first remove the hyperedge $e_u$ corresponding to $\adh_{\Tc}(uv)$.
A pointer to this hyperedge was stored in $uv$, so we can delete it in $\OO(|V(e_u)|) = \OO(\|\torso(u)\|)$ time.
Then, for each $e \in \torso_{\Tc}(u)$, except for the hyperedge $e_v$ corresponding to $\adh(uv)$, we insert $e$ to $\torso(v)$, inserting also the corresponding vertices if needed.
Note that whether a vertex $x \in V(\torso_{\Tc}(u))$ is in $V(\torso_{\Tc}(v))$ is determined by whether $x \in \adh_{\Tc}(uv)$, and therefore this can be determined for all vertices in $V(\torso(u))$ in time $\OO(\|\torso(u)\|)$.
At the same time when we insert the hyperedges, we can also add the pointers from them to the corresponding edges of $T'$ and from the edges of $T'$ to the hyperedges.

Lastly, if $\Tc$ is rooted, then for every $x \in N_T(u)$, except when $x = v$ or $x$ is the parent of $u$, we need to change the parent-edge pointer of $x$ to point at the edge $xv$.
Because such edge was already processed at this point and there are at most $\|\torso(u)\|$ of them, this can be done in $\OO(\|\torso(u)\|)$ time.
Also, if $u$ is the parent of $v$, we need to change the parent-edge pointer of $v$ to point to either $vp$, where $p$ is the parent of $u$, or to indicate that $v$ is the root in the case when $u$ was the root.
\end{proof}

\subsubsection{Splitting}
\label{subsec:plainsplitting}
The second transformation is the opposite operation of contraction, in particular, it is the splitting of a node.

Let $\Tc = (T,\lmap)$ be a superbranch decomposition, $t \in \vint(T)$ an internal node of $\Tc$, and $C \subseteq E(\torso(t))$ so that $|C| \ge 2$ and $|\co{C}| \ge 2$.
We define $\Tc \rescliqs (t,C) = (T',\lmap')$ to be the superbranch decomposition with
\begin{itemize}
\item $V(T') = (V(T) \setminus \{t\}) \cup \{t', t_C\}$,
\item $E(T') = (E(T) \setminus \ninc_{T}(t)) \cup \{t's : s \in N_T(t) \text{ and } e_s \in \co{C}\} \cup \{t_C s : s \in N_T(t) \text{ and } e_s \in C\} \cup \{t't_C\}$,
\item $\lmap'(\ell) = \lmap(\ell)$ for all $\ell \in \leafs(T) = \leafs(T')$.
\end{itemize}

In other words, $\Tc \rescliqs (t,C)$ is obtained from $\Tc$ by ``separating'' the node $t$ along the separation $(C,\co{C})$ of $\torso(t)$ into two new nodes $t'$ and $t_C$, with the node $t'$ of $\Tc \rescliqs (t,C)$ corresponding to the side of $\co{C}$ of the separation, i.e., having $\torso(t') = \torso(t) \rescliqs C$, and the node $t_C$ of $\Tc \rescliqs (t,C)$ corresponding to the side of $C$ of the separation, i.e., having $\torso(t_C) = \torso(t) \rescliqs \co{C}$.

We observe that this operation leaves the decomposition completely intact everywhere except for the splitted node $t$.
In particular, every internal separation of $\Tc \rescliqs (t,C)$ is either an internal separation of $\Tc$, or an orientation of the separation $(C \orescliqs \Tc, \co{C} \orescliqs \Tc)$, where the latter corresponds to the edge $t't_C$ of $\Tc \rescliqs (t,C)$.
Also, all torsos of $\Tc \rescliqs (t,C)$ are either torsos of $\Tc$, or are $\torso(t') = \torso(t) \rescliqs C$ or $\torso(t_C) = \torso(t) \rescliqs \co{C}$.
We will frequently use these simple observations.

If $\Tc$ is rooted, this operation is defined similarly, but if the root is $t$, then $t'$ becomes the new root.

\begin{lemma}
\label{lem:sbdchippingalgo}
Let $\Tc = (T,\lmap)$ be a (rooted) superbranch decomposition whose representation is already stored.
There is an algorithm that, given an internal node $t \in \vint(T)$ and a set $C \subseteq E(\torso(t))$ with $|C| \ge 2$ and $|\co{C}| \ge 2$, in time $\OO(|C| \cdot \adhsize(\Tc))$ transforms the representation of $\Tc$ into a representation of $(T',\lmap') = \Tc' = \Tc \rescliqs (t,C)$.
All pointers to the other nodes and torsos of other nodes are preserved, and furthermore pointers to the hyperedges of $\torso(t)$ in $\co{C}$ now point to the corresponding hyperedges of $\torso(t')$ in $\co{C}$.
Also, pointers to the nodes $t'$ and $t_C$, and to the edge $t't_C \in E(T')$ are returned.
\end{lemma}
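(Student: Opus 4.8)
The plan is to implement the split by \emph{reusing} the node $t$ as the new node $t'$, so that the large side $\co{C}$ of the split is never traversed, and to build the new node $t_C$ from scratch out of the small side $C$. Concretely, I would first apply \Cref{lem:hypergraph_impl2} to the stored representation of $\torso(t)$ with the set $C$ to compute $\bd_{\torso(t)}(C)$ in $\OO(|C| \cdot \rank(\torso(t)))$ time; since $\rank(\torso(t)) \le \adhsize(\Tc)$, this is within budget, and the same bound gives $|\bd_{\torso(t)}(C)| \le |V(C)| \le |C| \cdot \adhsize(\Tc)$. Then I would apply \Cref{lem:hypergraph_impl3} to $\torso(t)$ with $C$ to obtain a fresh representation of $\torso(t) \rescliqs \co{C}$, which is by definition $\torso(t_C)$, together with a mapping identifying the hyperedges of $C$ inside $\torso(t_C)$ with the hyperedges $e_s \in C \subseteq E(\torso(t))$; this also runs in $\OO(|C| \cdot \adhsize(\Tc))$ time and, importantly, does not yet modify the stored representation of $\torso(t)$.

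Next I would rebuild the tree. Iterating over $C$ and following, for each $e_s \in C$, the stored pointer from $e_s$ to the tree edge $ts \in E(T)$, I would relabel each such edge object from $\{t,s\}$ to $\{t_C,s\}$ in place: swap the ``$t$'' entry in the edge's stored incident-vertex list for a ``$t_C$'' entry, remove the edge from the incidence list of $t$, and insert it into the incidence list of $t_C$, each step being $\OO(1)$ given the pointers maintained by the representation. While doing this I would redirect the edge's stored pointer to the (soon stale) hyperedge $e_s \in E(\torso(t))$ to instead point to its copy in $\torso(t_C)$ and set the reverse pointer; the edge's pointer to $e_t \in E(\torso(s))$ and $\torso(s)$'s pointer to the edge need no change, since the edge object is the same. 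Since $\adh_{\Tc'}(t_C s) = \adh_{\Tc}(ts)$, and likewise $\adh_{\Tc'}(t' s) = \adh_{\Tc}(ts)$ for the edges with $e_s \in \co{C}$, all adhesion linked lists are preserved verbatim. Then I would create the node $t_C$ (it has $|C|+1 \ge 3$ incident torso-hyperedges, hence is internal), add the new edge $t't_C = tt_C$, attach to it a freshly built adhesion linked list storing $\bd_{\torso(t)}(C)$ (cost $\OO(|C|\cdot\adhsize(\Tc))$), and finally apply \Cref{lem:hypergraph_impl4} to the stored $\torso(t)$ with $C$, turning it into $\torso(t) \rescliqs C = \torso(t')$; by the guarantee of \Cref{lem:hypergraph_impl4} this preserves all pointers to hyperedges of $\co{C}$ — yielding the promised pointer-updating for the hyperedges of $\torso(t)$ in $\co{C}$ — and returns the pointer to $e_C$, which I would cross-link with $t't_C$, the latter also being cross-linked with the hyperedge $e_{\co{C}} \in E(\torso(t_C))$ created by \Cref{lem:hypergraph_impl3}.

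For the rooted case I would additionally repair the parent pointers. If $t$ is the root, or the parent-edge $tp$ of $t$ has its torso-hyperedge $e_p$ in $\co{C}$, then $t'=t$ keeps its parent/root status and $t_C$ becomes a child of $t'$ via $t't_C$; otherwise $e_p \in C$, so the edge $tp$ is among those relabeled to $t_Cp$, hence $t_C$ inherits $t$'s old parent $p$, $t'$ becomes a child of $t_C$ via $t't_C$, and I reset $t'$'s parent-edge pointer to $t't_C$. Because the parent-edge pointer of every neighbor $s$ of $t$ points to an edge \emph{object}, which was relabeled in place, nothing else needs updating. Summing the costs — $\OO(|C|\cdot\adhsize(\Tc))$ each for \Cref{lem:hypergraph_impl2,lem:hypergraph_impl3,lem:hypergraph_impl4} and for the one new adhesion list, plus $\OO(1)$ per edge for the $|C|$ relabelings — gives the claimed $\OO(|C|\cdot\adhsize(\Tc))$ bound. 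The main obstacle I anticipate is not mathematical but bookkeeping: keeping the web of cross-pointers (tree-edge $\leftrightarrow$ torso-hyperedge at both endpoints, edge-vertex $\leftrightarrow$ incidence-list entry, parent-edge pointers) consistent while being scrupulous never to enumerate anything whose size is not $\OO(|C|\cdot\adhsize(\Tc))$ — in particular never touching $\co{C}$, the full incidence list of $t$ in $T$, or the adhesions on the $\co{C}$-side — and scheduling the four hypergraph operations in the order above so that no pointer is followed after the object it points to has been destroyed.
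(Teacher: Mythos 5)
Your proposal is correct and follows essentially the same route as the paper's proof: compute $\bd(C)$ via \Cref{lem:hypergraph_impl2}, build $\torso(t_C)$ fresh via \Cref{lem:hypergraph_impl3}, reuse $t$ as $t'$ and convert its torso in place via \Cref{lem:hypergraph_impl4}, rewire only the $|C|$ affected tree edges, and patch the parent pointers by cases on whether $e_p \in C$. The only difference is a bookkeeping detail — you relabel the existing edge objects in place rather than deleting and reinserting them (which in the paper forces an explicit update of the children's parent-edge pointers) — and both variants meet the $\OO(|C|\cdot\adhsize(\Tc))$ budget.
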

\begin{proof}
First, by iterating over $C$ in time $\OO(|C|)$, we can enumerate the set $X = \{s \in N_{T}(t) \mid e_s \in C\}$, and in particular obtain pointers to the edges of form $ts \in E(T)$ with $s \in X$.
We edit $T$ by (1) deleting each such edge $ts$ with $s \in X$, (2) inserting the node $t_C$, (3) renaming $t$ into $t'$, (4) inserting the edge $t' t_C$, and (4) inserting edges of form $t_C s$ for all $s \in X$.
This turns $T$ into $T'$, and can be done in total $\OO(|X|) = \OO(|C|)$ time.

For all $s \in X$, we set $\adh_{\Tc'}(st_C) = \adh_{\Tc}(st)$.
We compute $\bd(C)$ in time $\OO(|C| \cdot \rank(\torso(t))) = \OO(|C| \cdot \adhsize(\Tc))$ by \Cref{lem:hypergraph_impl2} and set $\adh_{\Tc'}(t't_C) = \bd(C)$.
This correctly sets the adhesions of $\Tc'$.

We compute $\torso_{\Tc'}(t_C) = \torso_{\Tc}(t) \rescliqs \co{C}$ in time $\OO(|C| \cdot \adhsize(\Tc))$ by using \Cref{lem:hypergraph_impl3}.
We store a pointer to $t_C s$ to the hyperedge $e_s \in E(\torso_{\Tc'}(t_C))$ for all $s \in X$, and likewise we store a pointer to $e_s$ in each edge $t_C s$.
We also need to store a pointer to $e_{t_C} \in E(\torso_{\Tc'})(s)$ to each $t_C s$, which we can obtain from the previous pointers stored in corresponding edges.

We turn $\torso_{\Tc}(t)$ into $\torso_{\Tc'}(t') = \torso_{\Tc}(t) \rescliqs C$ in time $\OO(|C| \cdot \adhsize(\Tc))$ by using \Cref{lem:hypergraph_impl4}.
This preserves the pointers stored in the hyperedges not in $C$, and returns a pointer to the hyperedge corresponding to $C$, to which we store a pointer to $t' t_C$.
Finally, we need to also store a pointer to $t' t_C$ in the hyperedge corresponding to $\co{C}$ in $\torso_{\Tc'}(t_C)$, and correspondingly store a pointer to the hyperedge corresponding to $\co{C}$ in $\torso_{\Tc'}(t_C)$ to the edge $t' t_C$.

This turns the representation of $\Tc$ into a representation of $\Tc'$.

Finally, if $\Tc$ is rooted, we have to update the pointers to the parent edges.
First, if $t$ was the root, it suffices to set the root pointer to $t'$, the parent-edge pointer of $t_C$ to point to $t'$, and for each $s \in X$, set the parent-edge pointer to point to $t_C$.
Then, if the parent of $t$ was $p$ and $p \in X$, we set the parent-edge pointer of $t'$ to point to $t' t_C$, for each $s \in X \setminus \{p\}$ the parent-edge pointer to point to $s t_C$, and for $t_C$ the parent-edge pointer to point to $t_C p$.
If $p \notin X$, the parent-edge pointer of $t'$ already points to the correct edge, i.e., $t'p$, and for $t_C$ we set the parent-edge pointer to point to $t'$, and for each $s \in X$ we set the parent-edge pointer to point to $t_C$.
All of this can be implemented in time $\OO(|X|) = \OO(|C|)$.
\end{proof}

We observe that $\Tc \rescliqs (t,C)$ and $\Tc \rescliqs (t,\co{C})$ are identical up to renaming nodes, with the node $t'$ of the first one corresponding to the node $t_{\co{C}}$ of the latter one and the node $t_C$ of the first one corresponding to the node $t'$ of the latter one.
Due to this, \Cref{lem:sbdchippingalgo} can easily be modified into one where the set $\co{C}$ is given as input instead of the set $C$, which can be useful when $|\co{C}|$ is smaller than $|C|$.

\subsubsection{Splitting with a partition}
We then define an operation that generalizes the previous splitting operation, splitting with many sets simultaneosly instead of just one.

Let $\Tc = (T,\lmap)$ be a superbranch decomposition and $t \in \vint(T)$ an internal node of $\Tc$.
We say that $\compset$ is a \emph{splitting family} for $t$ if $\compset$ is a collection of disjoint non-empty subsets of $E(\torso(t))$, so that for each $C \in \compset$, $|C| \ge 2$, $|\co{C}| \ge 2$, and $\co{C} \notin \compset$.
When $\compset$ is a splitting family for $t$, we define that $\Tc \rescliqs (t,\compset) = \Tc' = (T',\lmap')$ is the superbranch decomposition with
\begin{itemize}
\item $V(T') = (V(T) \setminus \{t\}) \cup \{t'\} \cup \{t_C : C \in \compset\}$,
\item $E(T') = (E(T) \setminus \ninc_T(t)) \cup \{t' t_C : C \in \compset\} \cup \{t' x : e_x \in E(\torso(t)) \setminus \bigcup_{C \in \compset} C\} \cup \{t_C x : e_x \in C\}$, and
\item $\lmap'(\ell) = \lmap(\ell)$ for all $\ell \in \leafs(T) = \leafs(T')$.
\end{itemize}

In other words, if we enumerate $\compset = \{C_1, \ldots, C_h\}$, then $\Tc \rescliqs (t, \compset)$ can be defined in terms of the plain splitting operation of \Cref{subsec:plainsplitting} as $\Tc \rescliqs (t,\compset) = \Tc \rescliqs (t,C_1) \rescliqs (t',C_2) \rescliqs \ldots \rescliqs (t',C_h)$, where $\rescliqs$ is applied from left to right.

Again, the operation is defined for rooted superbranch decompositions so that the root is preserved if it is not $t$, and if $t$ is the root then $t'$ becomes the new root.

\begin{lemma}
\label{lem:superbdsplittingalgo}
Let $\Tc = (T,\lmap)$ be a (rooted) superbranch decomposition of a hypergraph $G$, whose representation is already stored.
There is an algorithm that, given an internal node $t \in \vint(T)$ and a splitting family $\compset$ for $t$, in time $\OO(|E(\torso(t))| \cdot (\adhsize(\Tc)+\adhsize(\Tc \rescliqs (t,\compset))))$ transforms the representation of $\Tc$ into a representation of $\Tc \rescliqs (t,\compset)$.
\end{lemma}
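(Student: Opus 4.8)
The plan is to implement $\Tc \rescliqs (t,\compset)$ exactly as it is defined, namely as the composition $\Tc \rescliqs (t,C_1) \rescliqs (t',C_2) \rescliqs \cdots \rescliqs (t',C_h)$ of $h = |\compset|$ plain splitting operations, each carried out by the algorithm of \Cref{lem:sbdchippingalgo}. First I would read the splitting family $\compset$ from the input as a list of sets $C_1,\ldots,C_h$, each stored as pointers to hyperedges of $\torso(t)$; this takes $\OO(\sum_{C \in \compset} |C|) = \OO(|E(\torso(t))|)$ time. Then I would process $C_1,\ldots,C_h$ one at a time, maintaining a pointer to the current ``central'' node (initially $t$, and after the $i$-th split the node $t'$ returned by \Cref{lem:sbdchippingalgo}): the $i$-th step calls the algorithm of \Cref{lem:sbdchippingalgo} on the current central node and the set $C_i$. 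This is valid because \Cref{lem:sbdchippingalgo} guarantees that after the $i$-th split the pointers to the hyperedges lying in $\co{C_i}$, which include all of $C_{i+1} \cup \cdots \cup C_h$, are redirected to the corresponding hyperedges of the new $\torso(t')$, so the remaining sets stay accessible.

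Before invoking \Cref{lem:sbdchippingalgo} at step $i$ I need to check its preconditions in the current torso of the central node: $|C_i| \ge 2$ and $|\co{C_i}| \ge 2$. The first is immediate from the definition of a splitting family. For the second, after $i-1$ splits the current torso of the central node has hyperedge set $\bigl(E(\torso(t)) \setminus (C_1 \cup \cdots \cup C_{i-1})\bigr) \cup \{e_{C_1},\ldots,e_{C_{i-1}}\}$ (using that the $C_j$ are pairwise disjoint), so removing $C_i$ leaves a set of size $|E(\torso(t)) \setminus (C_1 \cup \cdots \cup C_i)| + (i-1)$. This is at least $2$ for $i \ge 3$; for $i = 1$ it equals $|\co{C_1}|$ computed in $\torso(t)$, which is $\ge 2$ by the definition of a splitting family; and for $i = 2$ it equals $|E(\torso(t)) \setminus (C_1 \cup C_2)| + 1$, which is $\ge 2$ unless $E(\torso(t)) = C_1 \cup C_2$, in which case $\co{C_2} = C_1 \in \compset$, contradicting the splitting-family condition $\co{C} \notin \compset$.

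It remains to bound the running time. By \Cref{lem:sbdchippingalgo}, the $i$-th call runs in time $\OO(|C_i| \cdot \adhsize(\Tc_{i-1}))$, where $\Tc_{i-1}$ is the superbranch decomposition obtained after the first $i-1$ splits (with $\Tc_0 = \Tc$). The key observation is that $\adhsize(\Tc_{i-1}) \le \max(\adhsize(\Tc), \adhsize(\Tc \rescliqs (t,\compset)))$ for every $i$: every adhesion of $\Tc_{i-1}$ is either an adhesion of $\Tc$ (edges incident to the current central node inherit the adhesions of the corresponding edges incident to $t$), or it is the adhesion $\bd(C_j) = \bd_G(C_j \orescliqs \Tc)$ of a newly-created edge $t' t_{C_j}$ with $j < i$; and each such edge, together with its adhesion $\bd(C_j)$, also occurs in the final decomposition $\Tc \rescliqs (t,\compset)$, since no later split along $C_{j+1},\ldots,C_h \subseteq \co{C_j}$ touches the hyperedge $e_{C_j}$. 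Hence the total time over all $h$ calls is $\sum_{i=1}^h \OO\bigl(|C_i| \cdot (\adhsize(\Tc) + \adhsize(\Tc \rescliqs (t,\compset)))\bigr) = \OO\bigl(|E(\torso(t))| \cdot (\adhsize(\Tc) + \adhsize(\Tc \rescliqs (t,\compset)))\bigr)$, using $\sum_i |C_i| \le |E(\torso(t))|$, and adding the $\OO(|E(\torso(t))|)$ preprocessing cost stays within this bound. The main (minor) obstacle is exactly this bookkeeping: since the per-call cost of \Cref{lem:sbdchippingalgo} depends on the adhesion size of the intermediate decomposition, one must verify that these adhesion sizes never exceed the maximum of $\adhsize(\Tc)$ and $\adhsize(\Tc \rescliqs (t,\compset))$, which is where the observation about persistence of the new adhesions $\bd(C_j)$ is needed.
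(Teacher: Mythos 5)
Your proposal is correct and follows exactly the paper's approach: the paper's proof is a one-line appeal to the identity $\Tc \rescliqs (t,\compset) = \Tc \rescliqs (t,C_1) \rescliqs (t',C_2) \rescliqs \cdots \rescliqs (t',C_h)$ together with repeated application of \Cref{lem:sbdchippingalgo}. Your additional verifications (that the preconditions $|C_i|,|\co{C_i}| \ge 2$ hold at each intermediate step, and that the intermediate adhesion sizes are bounded by $\max(\adhsize(\Tc), \adhsize(\Tc \rescliqs (t,\compset)))$) are details the paper leaves implicit, and they are carried out correctly.
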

\begin{proof}
This can be done by using \Cref{lem:sbdchippingalgo} repeatedly, in particular, by the observation that $\Tc \rescliqs (t,\compset) = \Tc \rescliqs (t,C_1) \rescliqs (t',C_2) \rescliqs \ldots \rescliqs (t',C_h)$, where $\compset = \{C_1,\ldots, C_h\}$.
\end{proof}

We say that a \emph{splitting partition} for $t \in \vint(T)$ is a partition $\compset$ of $E(\torso(t))$ so that $|\compset| \ge 3$.
We define that the splitting family $\compset'$ corresponding to a splitting partition $\compset$ is $\compset' = \{C \in \compset : |C| \ge 2\}$, and denote $\Tc \rescliqs (t,\compset) = \Tc \rescliqs (t,\compset')$.

Note that sometimes $\compset$ can be both a splitting partition and a splitting family for $t$, in which case this definition is superficially ambiguous, but it turns out that in this case $\compset' = \compset$.

Let us say that an \emph{implicit representation} of a splitting partition $\compset = \{C_1, C_2, \ldots, C_h\}$ is the collection $\compset' = \{C_1, C_2, \ldots, C_{h-1}\}$.
These definitions are geared towards the following alternative version of \Cref{lem:superbdsplittingalgo}.

\begin{lemma}
\label{lem:superbdsplittingalgoimplicit}
Let $\Tc = (T,\lmap)$ be a (rooted) superbranch decomposition of a hypergraph $G$, whose representation is already stored.
There is an algorithm that, given an internal node $t \in \vint(T)$ and an implicit representation $\compset'$ of a splitting partition $\compset$ for $t$, in time $\OO((\adhsize(\Tc)+\adhsize(\Tc \rescliqs (t,\compset))) \cdot \sum_{C \in \compset'} |C|)$ transforms the representation of $\Tc$ into a representation of $\Tc \rescliqs (t,\compset)$.
\end{lemma}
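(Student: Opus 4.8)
The plan is to build $\Tc' := \Tc \rescliqs (t,\compset)$ by a short sequence of plain splittings via \Cref{lem:sbdchippingalgo}, arranged so that the one large, unlisted part is never touched explicitly. Write $\compset' = \{C_1,\dots,C_{h-1}\}$ for the given implicit representation, let $C_h = E(\torso(t)) \setminus \bigcup_{i<h} C_i$ be the (never computed) last part, and $\compset = \{C_1,\dots,C_h\}$; note $h-1 = |\compset'| \le \sum_{C\in\compset'} |C|$. First I would chip off, one at a time, every $C_i$ with $i<h$ and $|C_i| \ge 2$: starting from $\Tc$, repeatedly apply \Cref{lem:sbdchippingalgo} to the current center node along the next such $C_i$, creating a new node $t_{C_i}$ and a new center node. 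The key point is that \Cref{lem:sbdchippingalgo} costs only $\OO(|C_i| \cdot \adhsize)$ — in terms of the side $C_i$ being chipped off, not of its complement — and it keeps valid (and re-routes) the pointers to the hyperedges of $\co{C_i}$, so the remaining parts $C_{i+1},\dots,C_h$ stay identifiable as hyperedge sets of the successive center torsos. During this phase I maintain a list $D$ of hyperedges of the current center torso that do not lie in $C_h$: for a singleton $C_i=\{e\}$ with $i<h$ we just add $e$, and for a chipped big part we add the portal hyperedge $e_{C_i}$ returned by \Cref{lem:sbdchippingalgo}. Using the identity $\bd_G(A \orescliqs \compset)=\bd_{G\rescliqs\compset}(A)$ (from \Cref{sec:prelis}), the adhesion newly created when chipping $C_i$ equals $\bd_{\torso(t)}(C_i)$, which is an adhesion of $\Tc'$; hence every intermediate decomposition has adhesion size at most $\max(\adhsize(\Tc),\adhsize(\Tc'))$, and this phase costs $\sum_{i<h,\ |C_i|\ge 2} \OO(|C_i|\cdot(\adhsize(\Tc)+\adhsize(\Tc')))$.

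After this phase the center node $t^*$ satisfies $E(\torso(t^*)) = D \sqcup C_h$ with $|D| = h-1$, and $\torso(t^*) = \torso(t)\rescliqs\{C_i : i<h,\ |C_i|\ge2\}$. I would now decide whether $|C_h|\ge 2$ by scanning the hyperedge list of $\torso(t^*)$ until seeing $|D|+2$ entries; since $E(\torso(t^*))=D\sqcup C_h$ and $C_h\neq\emptyset$, this distinguishes $|C_h|=1$ from $|C_h|\ge2$ in $\OO(h)$ time. If $|C_h|=1$, then $C_h$ lies outside the splitting family and the decomposition built so far already equals $\Tc'$, so I stop. If $|C_h|\ge 2$, then $|D|=h-1\ge 2$, and I apply the variant of \Cref{lem:sbdchippingalgo} in which the complement $\co C$ is supplied instead of $C$ (noted right after that lemma), feeding it the small set $D=\co{C_h}$; this runs in $\OO(|D|\cdot\adhsize)$ time and splits $t^*$ into the node with torso $\torso(t^*)\rescliqs C_h$ (the portal hyperedge plus the $|D|$ hyperedges, i.e.\ exactly the node $t'$ of $\Tc'$) and the node with torso $\torso(t^*)\rescliqs D = \torso(t)\rescliqs\co{C_h}$ (all of $C_h$ plus a portal, i.e.\ the node $t_{C_h}$ of $\Tc'$), again using $\bd_{\torso(t^*)}(C_h)=\bd_{\torso(t)}(C_h)$. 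Correctness follows since $\Tc\rescliqs(t,\compset) = \Tc\rescliqs(t,C_{i_1})\rescliqs(t',C_{i_2})\rescliqs\cdots$ for any enumeration of the big parts of $\compset$, and we have realized precisely such a sequence (the big parts among $C_1,\dots,C_{h-1}$, followed by $C_h$ when it is big); the rooted case is handled because \Cref{lem:sbdchippingalgo} preserves the root at each step. Summing, the total time is $\OO\big((\adhsize(\Tc)+\adhsize(\Tc\rescliqs(t,\compset)))\cdot\sum_{C\in\compset'}|C|\big)$, the $\OO(h)$ bookkeeping terms being absorbed since $h \le 2\sum_{C\in\compset'}|C|$.

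The step I expect to need the most care is the final split of $t^*$: we must avoid ever iterating over $C_h$, which forces us to (i) determine $|C_h|\ge 2$ by bounded iteration rather than by counting $C_h$, and (ii) realize the split along $C_h$ knowing only its complement $D$, relying on the complement-input variant of \Cref{lem:sbdchippingalgo}; and we must verify that $\bd(C_h)$ and, more broadly, the whole torso of the resulting $C_h$-node are the same whether formed directly from $\torso(t)$ or after the intermediate chips, so that the output is genuinely $\Tc\rescliqs(t,\compset)$. The remaining work is routine pointer bookkeeping and the cost summation sketched above.
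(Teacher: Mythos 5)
Your proof is correct and follows essentially the same route as the paper: both realize $\Tc \rescliqs (t,\compset)$ as a sequence of plain splits via \Cref{lem:sbdchippingalgo}, arranged so that the unlisted part $C_h$ is only ever touched through its complement, with the same adhesion-size and cost accounting. The paper merely performs the splits in the opposite order --- it first splits off $\bigcup_{C\in\compset'}C=\co{C_h}$ in one step and then applies \Cref{lem:superbdsplittingalgo} to the resulting small node --- whereas you chip the listed parts first and split along $C_h$ last via the complement-input variant; your version also handles the $|C_h|=1$ corner case explicitly, which the paper leaves implicit.
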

\begin{proof}
Let $\compset' = \{C_1,C_2,\ldots,C_{h-1}\}$ and $\compset = \{C_1,C_2,\ldots,C_h\}$.
We first use \Cref{lem:sbdchippingalgo} to transform $\Tc$ into $\Tc' = \Tc \rescliqs (t,\co{C_h}) = \Tc \rescliqs (t,\bigcup_{C \in \compset'} C)$ in time $\OO(|\co{C_h}| \cdot \adhsize(\Tc)) = \OO(\adhsize(\Tc) \cdot \sum_{C \in \compset'} |C|)$.
Note that the adhesion size of $\Tc'$ is at most the maximum of the adhesion sizes of $\Tc$ and $\bdc(C_h) \le \adhsize(\Tc \rescliqs (t,\compset))$.

Then, let $\compset'' = \{C \in \compset' : |C| \ge 2\}$, and note that $\compset''$ is a splitting family for $t_{\co{C_h}}$.
Then, we use \Cref{lem:superbdsplittingalgo} to turn $\Tc'$ into $\Tc' \rescliqs (t_{\co{C_h}}, \compset'') = \Tc \rescliqs (t,\compset)$.
\end{proof}

\subsubsection{Splitting with a superbranch decomposition}
Let $\Tc = (T,\lmap)$ be a superbranch decomposition, $t \in \vint(T)$ an internal node of $T$, and $\Tc_t = (T_t,\lmap_t)$ a superbranch decomposition of $\torso(t)$.
We define $\Tc \rescliqs (t, \Tc_t) = (T',\lmap')$ to be the superbranch decomposition with

\begin{itemize}
\item $V(T') = (V(T) \setminus \{t\}) \cup \vint(T_t)$,
\item $E(T') = (E(T) \setminus \ninc(t)) \cup \eint(T_t) \cup \{uv : u \in N_T(t) \text{ and } \lmap_t^{-1}(e_u)v \in E(T_t)\}$, and
\item $\lmap'(\ell) = \lmap(\ell)$ for all $\ell \in \leafs(T') = \leafs(T)$.
\end{itemize}

In other words, we place the superbranch decomposition $\Tc_t$ of $\torso(t)$ in place of the node $t$.
Note that this can be interpreted as successive applications of the splitting operation $\Tc \rescliqs (t,C)$.

We observe that each torso of $\Tc'$ is either a torso of a node $x \in \vint(T) \setminus \{t\}$ in $\Tc$, or a torso of an internal node of $\Tc_t$.
Furthermore, each internal separation of $\Tc'$ is either an internal separation of $\Tc$, or is of form $(A \orescliqs \Tc, \co{A} \orescliqs \Tc)$, where $(A,\co{A})$ is an internal separation of $\Tc_t$ (and therefore a separation of $\torso(t)$).

We then give an efficient algorithm for this operation.

\begin{lemma}
\label{lem:superb_impl_breakbydecomp}
Let $G$ be a hypergraph and $\Tc = (T,\lmap)$ a superbranch decomposition of $G$ whose representation is already stored.
There is an algorithm, that given an internal node $t \in \vint(T)$ and a superbranch decomposition $\Tc_t$ of $\torso(t)$, in time $\OO(\adhsize(\Tc_t) \cdot \|\torso(t)\|)$ turns the representation of $\Tc$ into a representation of $\Tc \rescliqs (t, \Tc_t)$.
Moreover, this preserves all pointers to the torsos of $\Tc$ and $\Tc_t$ (except for $\torso(t)$, which no longer exists in $\Tc \rescliqs (t,\Tc_t)$).
\end{lemma}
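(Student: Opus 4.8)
The plan is to build a representation of $\Tc' := \Tc \rescliqs (t,\Tc_t)$ directly, ``transplanting'' the internal part of $\Tc_t = (T_t,\lmap_t)$ into the place vacated by $t$ and reusing every object that can be reused. (Repeatedly invoking \Cref{lem:sbdchippingalgo}, one split per internal edge of $T_t$, would be too slow, as a single split can touch a side of size $\Theta(|E(\torso(t))|)$.) Write $a = \adhsize(\Tc_t)$. The first observation, which is what makes the transplant possible without recomputing anything, is that \emph{neither adhesions nor torso vertex sets actually change}. For an internal edge $xy \in \eint(T_t)$ we have $\adh_{\Tc'}(xy) = \bd_G(\lmap_t(\vec{xy}) \orescliqs \Tc) = \bd_{\torso(t)}(\lmap_t(\vec{xy})) = \adh_{\Tc_t}(xy)$, the middle equality being the identity $\bd_G(A \orescliqs \Tc) = \bd_{\torso(t)}(A)$ from the definition of $\orescliqs$. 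For an edge $ut \in \ninc_T(t)$, let $\ell_u := \lmap_t^{-1}(e_u)$ be the leaf of $T_t$ mapped to the hyperedge $e_u \in E(\torso(t))$ corresponding to $ut$, and let $v \in \vint(T_t)$ be the unique internal neighbour of $\ell_u$; then $\adh_{\Tc'}(uv) = \bd_G(\lmap(\vec{ut})) = \adh_{\Tc}(ut) = V_{\torso(t)}(e_u) = \bd_{\torso(t)}(\{e_u\}) = \adh_{\Tc_t}(\ell_u v)$, using that $\torso(t)$ is normal. Hence $\torso_{\Tc'}(v) = \torso_{\Tc_t}(v)$ for every $v \in \vint(T_t)$ and $\torso_{\Tc'}(x) = \torso_{\Tc}(x)$ for every $x \in \vint(T) \setminus \{t\}$, the only change in each case being which edge of $T'$ a hyperedge of the torso is declared to correspond to.

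Given this, the procedure is mechanical. First I would read off $N_T(t)$ by iterating over $E(\torso(t))$ and following each hyperedge's stored pointer to its edge $st \in E(T)$; before deleting these edges I keep, for every $s \in N_T(t)$, a pointer to the adhesion list of $st$, and, when $s \in \vint(T)$, a pointer to the hyperedge $e_t \in E(\torso(s))$. I then delete $t$ and its incident edges from $T$ and free $\torso_\Tc(t)$; by \Cref{lem:graph_impl} this is $\OO(|N_T(t)|) + \OO(\|\torso(t)\|)$. Next I insert the nodes $\vint(T_t)$ and the edges $\eint(T_t)$ into $T$, transplanting from the representation of $\Tc_t$ the torso objects $\torso_{\Tc_t}(v)$, the adhesion lists of the internal edges of $T_t$, and the mutual edge/hyperedge pointers among them, none of which change. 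Finally I iterate over the leaves of $T_t$: a leaf $\ell$ gives $e_u := \lmap_t(\ell)$, hence $u \in N_T(t)$ and the internal neighbour $v$ of $\ell$; I turn the edge $\ell v$ of $T_t$ into an edge $uv$ of $T'$, attach to $uv$ the adhesion list kept for $ut$, redirect the hyperedge of $\torso_{\Tc_t}(v)$ formerly corresponding to $\ell$ to point at $uv$ and conversely, and, when $u \in \vint(T)$, redirect the hyperedge $e_t$ of $\torso_\Tc(u)$ to point at $uv$ and conversely; then I delete the now-isolated $\ell$.

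For the running time, the processing of $N_T(t)$ and of $\torso_\Tc(t)$ is $\OO(\|\torso(t)\|)$; since $T_t$ is supercubic it has $\OO(|\leafs(T_t)|) = \OO(|E(\torso(t))|)$ internal nodes and edges, each transplanted in $\OO(1)$ time (adhesion lists and torsos are moved, not copied), and the leaf loop spends $\OO(1)$ per leaf. Using normality of $\torso(t)$, $|E(\torso(t))| \le \|\torso(t)\| = \OO(|E(\torso(t))| \cdot \rank(\torso(t))) = \OO(|E(\torso(t))| \cdot a)$, so the whole procedure runs in $\OO(a \cdot \|\torso(t)\|)$ time, and every pointer to a torso of $\Tc$ or $\Tc_t$ except $\torso(t)$ survives by construction. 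The only delicate point is the bookkeeping: keeping the four edge/hyperedge pointers consistent across each ``seam'' edge $uv$, and checking---via the $\bd_G(\cdot \orescliqs \Tc) = \bd_{\torso(t)}(\cdot)$ identities above---that the adhesion lists and torsos carried over from $\Tc$ and $\Tc_t$ are exactly those that $\Tc'$ requires, so that no value is ever recomputed.
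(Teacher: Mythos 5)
Your proposal is correct and follows essentially the same route as the paper's proof: splice the internal part of $T_t$ into the place of $t$, reconnect each former neighbour $u$ of $t$ to the internal neighbour of the leaf $\lmap_t^{-1}(e_u)$, and reuse the adhesion of $ut$ for the new seam edge while fixing up the mutual edge/hyperedge pointers. The only cosmetic difference is that you move the torso and adhesion objects of $\Tc_t$ in $\OO(1)$ each whereas the paper copies them, which is equally within the $\OO(\adhsize(\Tc_t)\cdot\|\torso(t)\|)$ budget.
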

\begin{proof}
Denote $\Tc_t = (T_t, \lmap_t)$.
First, we add $T_t$ to the representation of $T$ so that it turns into the disjoint union of $T$ and $T_t$.
We also directly copy the representations of the torsos and adhesions of $\Tc_t$, along with the various pointers stored therein.

Then consider $e_u \in \torso(t)$ that corresponds to a hyperedge of $T$ between $t$ and $u$.
Note that we can find this $u$ in $\OO(1)$ time by using the pointer stored in $e_u$, finding at the same time a pointer to the edge $ut \in E(T)$.
We compute for each $e_u \in \torso(t)$ the leaf $\ell_u = \lmap_t^{-1}(e_u) \in \leafs(T_t)$ in total $\OO(\|\torso(t)\|)$ time by iterating over $\leafs(T_t)$.
At the same time, we compute the unique neighbor $v_{e_u} \in \vint(T_t)$ of $\ell_u$.
By using the pointers stored in $\Tc_t$, we also find the hyperedge $e_{\ell_u} \in \torso_{\Tc_t}(v_{e_u})$ corresponding to $\ell_u$.

After computing all this information for all $e_u \in \torso(t)$ in total time $\OO(\|\torso(t)\|)$, we do the following for each $e_u \in \torso(t)$.
We delete the node $\ell_u$ and the edge $\ell_u v_{e_u}$ from $T_t$, and insert the edge $u v_{e_u}$ between $T$ and $T_t$.
We set $\adh(u v_{e_u}) = \adh(ut)$, and redirect the pointer stored in a hyperedge of $\torso(u)$ pointing at $ut$ to point at $u v_{e_u}$.
We also store at $u v_{e_u}$ a pointer to that hyperedge in $\torso(u)$ and a pointer to the hyperedge $e_{\ell_u} \in \torso_{\Tc_t}(v_{e_u})$.
We set the pointer stored at $e_{\ell_u}$ to also point at $u v_{e_u}$.
Then we delete the edge $ut$ from $T$.
This can be done in $\OO(|\adh(u v_{e_u})|) = \OO(\adhsize(\Tc_t))$ time for each $e_u \in \torso(t)$, so in total this runs in time $\OO(\adhsize(\Tc_t) \cdot \|\torso(t)\|)$.

Finally, we delete the node $t$ from $T$, and the construction is complete.
\end{proof}

We will often do the operation of \Cref{lem:superb_impl_breakbydecomp} for a large number of nodes of $\Tc$ simultaneosly, so let us introduce notation for that.
Let $Y \subseteq \vint(T)$ be a set of internal nodes of $\Tc$, and $\refiset = \{\Tc_t\}_{t \in Y}$ be a set indexed by $t \in Y$, which contains for each $t \in Y$ a superbranch decomposition $\Tc_t$ of $\torso(t)$.
We call such a set a \emph{refinement set} for $\Tc$.
Let us index $Y = \{t_1, \ldots, t_h\}$.
We define $\Tc \rescliqs \refiset = \Tc \rescliqs (t_1, \Tc_{t_1}) \rescliqs \ldots \rescliqs (t_h, \Tc_{t_h})$, where $\rescliqs$ is applied from left to right.
It can be seen that the order of $t_1,\ldots,t_h$ does not matter, and all torsos of $\Tc \rescliqs \refiset$ are either torsos of nodes of $\Tc$ not in $Y$, or torsos of $\Tc_{t}$ for $t \in Y$.
Also, each internal separation of $\Tc \rescliqs \refiset$ is either an internal separation of $\Tc$ or corresponds to an internal separation of some $\Tc_{t}$.
Let us observe that $\Tc \rescliqs \refiset$ can be implemented via \Cref{lem:superb_impl_breakbydecomp}. 

\begin{lemma}
\label{lem:superbdalgrefinebyrefiset}
There is an algorithm that, given a superbranch decomposition $\Tc = (T,\lmap)$, and a refinement set $\refiset$ for $\Tc$, so that $\adhsize(\Tc_i) < k$ for all $\Tc_i \in \refiset$, in time $\OO((k+\adhsize(\Tc)) \cdot \|G\|)$ returns the superbranch decomposition $\Tc \rescliqs \refiset$.
\end{lemma}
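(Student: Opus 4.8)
The plan is to realize $\Tc \rescliqs \refiset$ by applying the single-node refinement algorithm of \Cref{lem:superb_impl_breakbydecomp} once for every node of $Y$ (where $\refiset = \{\Tc_t\}_{t \in Y}$), and then to control the total running time by summing the per-node costs against the tree degrees.

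Write $Y = \{t_1, \ldots, t_h\}$ in an arbitrary order, put $\Tc^{(0)} = \Tc$, and for $i = 1, \ldots, h$ let $\Tc^{(i)} = \Tc^{(i-1)} \rescliqs (t_i, \Tc_{t_i})$, computed using \Cref{lem:superb_impl_breakbydecomp}. I would first verify this is well-defined: as observed when the operation $\Tc \rescliqs (t,\Tc_t)$ was introduced, a single refinement deletes only the refined node and leaves every other node of the tree --- together with its torso and the adhesions of its incident edges --- intact, merely rerouting the edges formerly incident to the deleted node. Since the elements of $Y$ are pairwise distinct, $t_i$ is untouched by steps $1,\ldots,i-1$, so $\torso_{\Tc^{(i-1)}}(t_i) = \torso_{\Tc}(t_i)$, the object $\Tc_{t_i}$ is still a superbranch decomposition of it, and the pointers into $\torso_\Tc(t_i)$ and into $\Tc_{t_i}$ required by \Cref{lem:superb_impl_breakbydecomp} are still valid. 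The same locality plus an easy induction shows $\Tc^{(h)} = \Tc \rescliqs (t_1,\Tc_{t_1}) \rescliqs \cdots \rescliqs (t_h,\Tc_{t_h}) = \Tc \rescliqs \refiset$ independently of the chosen order, so correctness is immediate.

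For the running time I would bound the cost of the $i$-th call. \Cref{lem:superb_impl_breakbydecomp} states it is $\OO(\adhsize(\Tc_{t_i}) \cdot \|\torso_\Tc(t_i)\|)$, but inspecting its proof shows it is really $\OO(\|\Tc_{t_i}\|)$: copying $\Tc_{t_i}$ costs $\OO(\|\Tc_{t_i}\|)$, while the remaining rerouting and bookkeeping cost $\OO(\|\torso_\Tc(t_i)\|) \le \OO(\|\Tc_{t_i}\|)$, using $\|\Tc_{t_i}\| \ge \|\torso_\Tc(t_i)\|$. Now $\torso_\Tc(t_i)$ is normal, so its rank is at most $\adhsize(\Tc)$, whence $\|\torso_\Tc(t_i)\| \le \deg_T(t_i)\,(1+\adhsize(\Tc))$; combined with $|E(\torso_\Tc(t_i))| = \deg_T(t_i)$, the hypothesis $\adhsize(\Tc_{t_i}) < k$, and the general estimate $\|\Tc'\| = \OO(\|G'\| + |E(G')| \cdot \adhsize(\Tc'))$, this gives $\|\Tc_{t_i}\| = \OO\!\big(\deg_T(t_i)\,(k+\adhsize(\Tc))\big)$. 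Since every internal node of a supercubic tree has degree at least $3$ and the leaves biject with $E(G)$, we have $\sum_{t \in \vint(T)}\deg_T(t) = \OO(|V(T)|) = \OO(|E(G)|)$, and $Y \subseteq \vint(T)$, so the total is $\sum_{i=1}^h \OO(\|\Tc_{t_i}\|) = \OO\!\big((k+\adhsize(\Tc)) \sum_{t\in Y}\deg_T(t)\big) = \OO\!\big((k+\adhsize(\Tc))\,|E(G)|\big) = \OO\!\big((k+\adhsize(\Tc))\,\|G\|\big)$.

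The only step that genuinely needs care is this running-time accounting: the bound stated in \Cref{lem:superb_impl_breakbydecomp} is, by itself, too lossy to close the argument (it would contribute a spurious $k \cdot \adhsize(\Tc)$ product), so one must use the sharper $\OO(\|\Tc_{t_i}\|)$ cost that the underlying algorithm actually achieves and then exploit normality of the torsos together with $\adhsize(\Tc_i) < k$ to make the per-node costs telescope to $\OO((k+\adhsize(\Tc))\,|E(G)|)$. Well-definedness of the iteration and correctness of the output are immediate from the locality of the refinement operation.
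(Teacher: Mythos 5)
Your proposal is correct and follows the same route as the paper, whose entire proof of this lemma is the one sentence "call the algorithm of \Cref{lem:superb_impl_breakbydecomp} successively for all $t \in Y$." Your additional running-time accounting is a genuine (and arguably necessary) refinement: summing the bound $\OO(\adhsize(\Tc_t)\cdot\|\torso(t)\|)$ as literally stated in \Cref{lem:superb_impl_breakbydecomp} only yields $\OO(k\cdot\adhsize(\Tc)\cdot\|G\|)$, so replacing it with the sharper per-call cost $\OO(\|\Tc_t\|) = \OO(\deg_T(t)(k+\adhsize(\Tc)))$ and summing degrees over the supercubic tree, as you do, is what actually delivers the claimed $\OO((k+\adhsize(\Tc))\cdot\|G\|)$.
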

\begin{proof}
We call the algorithm of \Cref{lem:superb_impl_breakbydecomp} successively for all $t \in Y$, and then return the resulting superbranch decomposition.
\end{proof}

\section{Downwards well-linked decomposition}
\label{sec:downwl}
Recall that a rooted superbranch decomposition $\Tc = (T,\lmap)$ is downwards well-linked if for each oriented edge $\vec{tp} \in \vec{E}(T)$, where $p$ is the parent of $t$, it holds that $\lmap(\vec{tp})$ is well-linked.

The goal of this section is to prove the following main lemma.

\thehighleveldownwlalgo*

The first step to prove \Cref{the:highlevel:downwlalgo} is to turn the given superbranch decomposition into \emph{downwards semi-internally connected}, whose definition we give now.
Let $\Tc = (T,\lmap)$ be a rooted superbranch decomposition.
We say that a node $t \in V(T)$ is \emph{downwards semi-internally connected} if either
\begin{itemize}
\item $t$ has parent $p \in V(T)$, and $\lmap(\vec{tp})$ is semi-internally connected, or
\item $t = \troot(T)$.
\end{itemize}

We say that $\Tc$ is downwards semi-internally connected if all nodes of $\Tc$ are downwards semi-internally connected.
In other words, $\Tc$ is downwards semi-internally connected if for all oriented edges $\vec{tp} \in \vec{E}(T)$ where $p = \parent(t)$ it holds that $\lmap(\vec{tp})$ is semi-internally connected.

For making $\Tc$ downwards semi-internally connected it is useful that being semi-internally connected is a transitive property of sets of hyperedges, which we prove now.

\begin{lemma}
\label{lem:conddwsic}
Let $A \subseteq E(G)$ be a semi-internally connected set.
Let also $B \subseteq E(G \rescliqs A)$ be a semi-internally connected set in $G \rescliqs A$.
Then, $B \orescliqs A$ is semi-internally connected in $G$.
\end{lemma}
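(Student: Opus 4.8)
The plan is to recall the intuitive characterization of internal components given just after their definition: the internal components of a set $A \subseteq E(G)$ correspond either to singleton hyperedges $\{e\}$ with $V(e) \subseteq \bd(A)$, or to connected components $C'$ of the primal graph $\primal(G)[V(A) \setminus \bd(A)]$, where a hyperedge $e$ belongs to the component $C$ associated with $C'$ whenever $V(e) \cap C' \ne \emptyset$. I would first set $B' = B \orescliqs A$ and dispose of the easy case $e_A \notin B$: then $B' = B$, and since $\bd_G(X) = \bd_{G \rescliqs A}(X)$ for all $X \subseteq B$ (this identity was recorded when $\orescliqs$ was introduced), semi-internal connectedness of $B$ in $G \rescliqs A$ transfers verbatim to $B'$ in $G$. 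So assume $e_A \in B$, hence $A \subseteq B'$ and $\bd_G(B') = \bd_{G \rescliqs A}(B)$.

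The key step is to relate the internal components of $B'$ in $G$ to those of $B$ in $G \rescliqs A$, using that $A$ is semi-internally connected. I would argue that every internal component $D$ of $B'$ either is disjoint from $A$, or contains all of $A$. Indeed, suppose $D$ is an internal component of $B'$ with $D \cap A \ne \emptyset$. Every internal component $C$ of $A$ satisfies $\bd_G(C) = \bd_G(A)$ by semi-internal connectedness of $A$; since $\bd_G(A) \subseteq \bd_G(B')$ is false in general, what we actually use is that $A$ itself, being semi-internally connected, has the property that no nontrivial bipartition of $A$ keeps both borders inside $\bd_G(A)$ — more precisely I would show that $A$ cannot be split between two distinct internal components of $B'$: if $A$ met internal components $D_1 \ne D_2$ of $B'$, then writing $A_i = A \cap D_i$ we get $\bd_G(A_i) \subseteq \bd_G(D_i) \subseteq \bd_G(B') = \bd_{G\rescliqs A}(B)$. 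But $A_i$ is a union of internal components of $A$ (since each internal component of $A$ is internally connected and $D_i$ are pairwise "border-separated" within $B'$), and each such internal component $C$ has $\bd_G(C) = \bd_G(A)$; picking one inside $A_1$ and one inside $A_2$ contradicts that these components lie in different internal components of $B'$, because a hyperedge of $A$ not in $\bd(A)$ connects them in $\primal(G)[V(B')\setminus \bd(B')]$. The cleanest route is: since $A$ is semi-internally connected, $A$ is "connected modulo $\bd(A)$", and $\bd(A) \subseteq V(e_A)$; hence in $G$ the set $A$ lies entirely within one internal component of $B'$.

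Given that, let $D_1, \dots, D_m$ be the internal components of $B'$, with $A \subseteq D_1$ say. Then $D_1' = (D_1 \setminus A) \cup \{e_A\}$ and $D_2, \dots, D_m$ are exactly the internal components of $B$ in $G \rescliqs A$: this follows by translating the primal-graph-component description through the operation $G \rescliqs A$, since $V(e_A) = \bd_G(A)$ and contracting $A$ to $e_A$ merges precisely the vertices of $A$ outside $\bd(A)$ and leaves everything else (including the adjacency structure seen by $D_2,\dots,D_m$) unchanged. Now for each $i \ge 2$, semi-internal connectedness of $B$ gives $\bd_{G\rescliqs A}(D_i) = \bd_{G\rescliqs A}(B) = \bd_G(B')$; and $\bd_G(D_i) = \bd_{G\rescliqs A}(D_i)$ since $D_i$ avoids $e_A$, so $\bd_G(D_i) = \bd_G(B')$. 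For $i=1$: $\bd_{G\rescliqs A}(D_1') = \bd_G(B')$ by semi-internal connectedness of $B$, and $\bd_G(D_1) = \bd_G(D_1' \orescliqs A) = \bd_{G\rescliqs A}(D_1') = \bd_G(B')$ by the $\orescliqs$ border identity. Hence every internal component of $B'$ has border equal to $\bd_G(B')$, i.e.\ $B'$ is semi-internally connected in $G$.

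The main obstacle I anticipate is the bookkeeping in the middle step — pinning down rigorously that $A$ cannot be split across two internal components of $B'$, and that the internal components of $B$ are obtained from those of $B'$ by replacing $A$ with $e_A$. Both are "geometrically obvious" from the primal-graph picture but need the definitions of internal component and of $G \rescliqs A$ handled carefully, especially the degenerate cases (singleton internal components, $\bd(A)$-vertices with all incident hyperedges inside $A$). If a purely submodular argument is preferred over the primal-graph picture, one can instead proceed by contradiction: take a bipartition $(C_1,C_2)$ of $B'$ with both $C_j$ nonempty and $\bd_G(C_j) \subseteq \bd_G(B')$, use semi-internal connectedness of $A$ (which forces $A$ to lie inside one $C_j$, say $C_1$) to push this down to a bipartition $(\{e_A\}\cup C_1\setminus A,\ C_2)$ of $B$ with both borders inside $\bd_{G\rescliqs A}(B)$, contradicting semi-internal connectedness of $B$ — mirroring the uncrossing pattern used in \Cref{lem:linkedcliq} and \Cref{lem:conddwsic}'s siblings.
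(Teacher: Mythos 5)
Your overall skeleton (dispose of the case $e_A \notin B$, then match the internal components of $B' = B \orescliqs A$ with those of $B$ and of $A$) is the same as the paper's, but your pivotal claim --- that $A$ lies entirely inside a single internal component of $B'$ --- is false, and the justification you give for it conflates \emph{semi-internal} connectedness with \emph{internal} connectedness. Semi-internal connectedness of $A$ does not mean $A$ is ``connected modulo $\bd(A)$''; it only means every internal component of $A$ has border exactly $\bd(A)$, and $A$ may well have several internal components. Concretely, take $A$ consisting of two parallel hyperedges with $V(A)=\bd(A)$ (two singleton internal components, each with border $\bd(A)$, so $A$ is semi-internally connected but not internally connected), and let $\{e_A\}$ be its own internal component of $B$; then $\bd(A)=\bd_{G\rescliqs A}(B)=\bd_G(B')$ and the two hyperedges of $A$ form two \emph{distinct} singleton internal components of $B'$. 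The lemma's conclusion survives there only because each of them has border $\bd(A)=\bd(B')$ --- but your proof, which proceeds from ``let $D_1,\dots,D_m$ be the internal components of $B'$ with $A\subseteq D_1$'', never covers this case. This is exactly the case the paper isolates: it splits on whether the internal component of $B$ containing $e_A$ is $\{e_A\}$ itself (then the internal components of $A$ survive separately in $B'$, each with border $\bd(A)=\bd(B')$) or has size $\ge 2$ (then there is $v\in V(e_A)\setminus\bd(B')$, and semi-internal connectedness of $A$ glues all internal components of $A$ together with that component through $v$).

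Two further points. First, the inclusion $\bd_G(A_i)\subseteq\bd_G(D_i)$ you use inside the contradiction argument is not valid in general: a vertex of $\bd(A_i)$ may be incident only to hyperedges of $D_i$ and hence lie outside $\bd(D_i)$. Second, your fallback ``uncrossing'' argument targets the wrong property: exhibiting a bipartition $(C_1,C_2)$ of $B$ with both parts nonempty and both borders contained in $\bd(B)$ would only contradict \emph{internal} connectedness of $B$, not semi-internal connectedness, so deriving such a bipartition yields no contradiction with the hypothesis. To repair the proof you need the paper's case distinction (equivalently, a case distinction on whether $\bd(A)\subseteq\bd(B')$), and in the non-degenerate case an argument via a vertex $v\in\bd(A)\setminus\bd(B')$ that is met by every internal component of $A$.
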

\begin{proof}
Let $e_A$ be the hyperedge of $G \rescliqs A$ corresponding to $A$.
The lemma is clear if $e_A \notin B$, so assume that $e_A \in B$.

Note that $\bd_G(B \orescliqs A) = \bd_{G \rescliqs A}(B)$.
Note also that if $C \subseteq B$ is an internal component of $B$ in $G \rescliqs A$ and $e_A \notin C$, then $C$ is also an internal component of $B \orescliqs A$ in $G$ and $\bd_G(C) = \bd_{G \rescliqs A}(C) = \bd_{G \rescliqs A}(B) = \bd_G(B \orescliqs A)$.

First, if the internal component of $B$ in $G \rescliqs A$ containing $e_A$ contains only $e_A$, then $\bd_G(B \orescliqs A) = \bd(A)$, and the internal components of $B \orescliqs A$ are simply the union of the internal components of $A$ in $G$, and the internal components of $B$ in $G \rescliqs A$, except for $\{e_A\}$.

Otherwise, let $C$ be an internal component of $B$ in $G \rescliqs A$ with $e_A \in C$ and $|C| \ge 2$.
Now, there exists $v \in V(e_A) \setminus \bd(C)$, in particular, $v \in \bd_G(A) \setminus \bd_G(B \orescliqs A)$.
As $A$ is semi-internally connected, this implies that $A \cup C \setminus \{e_A\}$ is an internal component of $B \orescliqs A$ in $G$, and has $\bd_G(A \cup C \setminus \{e_A\}) = \bd_G(B \orescliqs A)$.
Therefore, the internal components of $B \orescliqs A$ in $G$ are $A \cup C \setminus \{e_A\}$, and the internal components of $B$ in $G \rescliqs A$ that do not contain $e_A$.
\end{proof}

We then give an algorithm for transforming the initial input superbranch decomposition for \Cref{the:highlevel:downwlalgo} into a downwards semi-internally connected superbranch decomposition.

\begin{lemma}
\label{lem:algdownwlsemicon}
There is an algorithm that, given a superbranch decomposition $\Tc$ of a hypergraph $G$ and an integer $k \ge 1$ so that
\begin{itemize}
\item $\Tc$ is $(2k,k)$-unbreakable, 
\item $\adhsize(\Tc) \le 2k$, and
\end{itemize}
in time $2^{\OO(k)} \cdot \|\Tc\|$ outputs a rooted superbranch decomposition $\Tc'$ of $G$ so that
\begin{itemize}
\item $\Tc'$ is downwards semi-internally connected,
\item $\Tc'$ is $(2k,k)$-unbreakable, and
\item $\adhsize(\Tc') \le 2k$.
\end{itemize}
\end{lemma}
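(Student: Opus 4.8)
The plan is to root $\Tc$ at an arbitrary internal node and process the edges of $T$ in a bottom-up order (say, by decreasing distance from the root, or equivalently by processing children before parents). For an edge $tp\in E(T)$ with $p=\parent(t)$, consider the set $A=\lmap(\vec{tp})\subseteq E(G)$. If $A$ is semi-internally connected, we do nothing. If $A$ is \emph{not} semi-internally connected, then by definition $A$ has an internal component $C$ with $\bd(C)\subsetneq\bd(A)$; more to the point, the maximal semi-internally connected subsets of $A$ partition $A$ into at most $2^{\bdc(A)}\le 2^{2k}$ semi-internally connected sets, each with border contained in $\bd(A)$. Equivalently, working inside the torso $\torso(p)$ (or, more precisely, inside $G\rescliqs\lmap(\vec{tp})$ reversed — the natural place is to look at $E(\torso(t))\setminus\{e_p\}$ from $t$'s side, since that set corresponds exactly to $A$ viewed inside $\torso(t)\rescliqs(\text{everything below }t)$), we want to "collapse" the adhesion at $tp$ into several adhesions, one per semi-internally connected piece. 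Concretely, I would use the splitting operation on the node $p$: the hyperedge $e_t\in E(\torso(p))$ corresponding to $A$ gets replaced, inserting a new child node between $t$ and $p$ whose torso records the partition of $A$ into its semi-internally connected pieces, so that the new edges between that node and $p$ carry borders equal to the borders of those pieces. After this, each new downward-pointing edge corresponds to a semi-internally connected set.

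The key point that makes the bottom-up order correct is the transitivity of semi-internal connectedness, which is exactly Lemma~\ref{lem:conddwsic}: once all edges strictly below $t$ already point to semi-internally connected sets, and we make the piece at $tp$ semi-internally connected \emph{inside the modified decomposition below $t$}, the corresponding set $A\orescliqs(\dots)$ in $G$ remains semi-internally connected. So after one bottom-up sweep, every downward edge of the resulting $\Tc'$ points to a semi-internally connected set, i.e.\ $\Tc'$ is downwards semi-internally connected. For the algorithmic side, computing the internal components of a set $A$ is done by Lemma~\ref{lem:alg_internal_comps} in $\OO(|A|\cdot\rank(G))$ time, and grouping them into maximal semi-internally connected pieces (by their border-subsets, at most $2^{\bdc(A)}$ of them) is another $\OO(2^{\bdc(A)}\cdot|A|\cdot\rank(G))$. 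The splitting itself is implemented by Lemma~\ref{lem:superbdsplittingalgoimplicit} (splitting node $p$ along an implicitly represented splitting partition). I would need to be careful that the amortized cost over all edges is $2^{\OO(k)}\cdot\|\Tc\|$: each original hyperedge/leaf contributes to the set $A$ of the edges on its root-path, but since we only ever \emph{split} adhesions and never enlarge them beyond $2k$, and the new nodes introduced are "thin" (their torsos have total size bounded by $\OO(2^{\bdc(A)}\cdot\adhsize(\Tc))$), the total size of all torsos processed stays $2^{\OO(k)}\cdot\|\Tc\|$; a clean way to see this is that the number of tree nodes at most multiplies by $2^{2k}$ and each adhesion has size $\le 2k$.

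Preserving the unbreakability and adhesion-size bounds is comparatively easy: every new adhesion created is the border of a subset of an old adhesion $A=\lmap(\vec{tp})$, hence has size $\le\bdc(A)\le\adhsize(\Tc)\le 2k$, giving $\adhsize(\Tc')\le 2k$. For $(2k,k)$-unbreakability, the set $V(t)$ for any surviving node $t$ of $\Tc$ only shrinks or stays the same (splitting $p$ replaces $e_t$ by hyperedges on subsets of $\bd(A)$, so $V(\torso(p))$ does not grow), and unbreakability is inherited by subsets; for each newly inserted node, its torso's vertex set is a subset of $\bd(A)\cup\bd(A)=\bd(A)$, which has size $\le 2k$, so it is trivially $(2k,k)$-unbreakable since $2k<2k$ fails — i.e.\ a set of size $\le 2k$ can be broken only if it has size $\ge 2k$ on both sides of a cut of order $<k$, impossible when $|V|\le 2k$ unless... here one just notes $|V(\text{new torso})|\le 2k$ and a separation witnessing non-$(2k,k)$-unbreakability would need $\ge 2k$ vertices on each side, forcing $|V|\ge 4k-(k-1)>2k$. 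So the new torsos are automatically fine.

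The main obstacle I anticipate is \textbf{not} the graph theory — Lemmas~\ref{lem:conddwsic} and \ref{lem:alg_internal_comps} do the heavy lifting — but the \emph{linear-time bookkeeping}: organizing the bottom-up pass so that (a) each edge $tp$ is processed exactly once even as $T$ changes underneath (new nodes are inserted, but always \emph{below} already-processed material or as children of $p$, so a pre-computed processing order by original depth, or a queue, still works), and (b) the set $A=\lmap(\vec{tp})$ at the moment of processing is accessed only through $E(\torso(t))\setminus\{e_p\}$ — whose size is $\OO(\|\torso(t)\|)$ — rather than by enumerating all leaves below $t$, so that the cost charged to edge $tp$ is $2^{\OO(k)}\cdot\|\torso(t)\|$ and the total telescopes to $2^{\OO(k)}\cdot\|\Tc\|$. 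This requires that whenever we split, we correctly update the torso of $t$ and the newly created intermediate node, using the pointer-preserving guarantees of Lemma~\ref{lem:sbdchippingalgo}/Lemma~\ref{lem:superbdsplittingalgoimplicit}, which is exactly why those lemmas were stated with explicit pointer-maintenance clauses.
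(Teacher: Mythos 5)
Your plan is the paper's plan: root the decomposition, sweep bottom-up, partition $E(\torso(t))\setminus\{e_p\}$ into its maximal semi-internally connected pieces (at most $2^{\bdc(\lmap(\vec{tp}))}\le 2^{2k}$ of them, obtained by grouping internal components by their borders via Lemma~\ref{lem:alg_internal_comps} and radix sort), and invoke the transitivity of semi-internal connectedness (Lemma~\ref{lem:conddwsic}) to conclude that the corresponding sets in $G$ are semi-internally connected. The unbreakability and adhesion-size arguments you give are also the ones used in the paper.

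There is, however, one concrete gap in the tree surgery. You describe "splitting the node $p$" and "inserting a new child node between $t$ and $p$", and then assert that afterwards every downward edge points to a semi-internally connected set. The correct operation is to split $t$ along the splitting family formed by the pieces of $E(\torso(t))\setminus\{e_p\}$; this produces children $t_C$ (one per piece, with the desired adhesions $\bd(C)\subseteq\adh(tp)$) hanging off a new intermediate node $t'$ with $V(t')=\adh(tp)$ and $|E(\torso(t'))|\le 2^{2k}+1$ — and, crucially, the edge $t'p$ still satisfies $\lmap(\vec{t'p})=\lmap(\vec{tp})$, which is the very set you just determined is \emph{not} semi-internally connected. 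If that intermediate node survives, the output is not downwards semi-internally connected. The paper resolves this by contracting $t'$ into $p$ (Lemma~\ref{lem:superbdcontractalgo}); this is safe because $V(t')=\adh(tp)\subseteq V(p)$, so neither unbreakability nor adhesion size is harmed, and it costs only $2^{\OO(k)}$ time since $\torso(t')$ is small. A side effect of this contraction is that $\torso(p)$ \emph{grows} by up to $2^{2k}$ hyperedges per processed child, so the running-time accounting cannot simply charge $2^{\OO(k)}\cdot\|\torso(t)\|$ to each edge against the original decomposition; the paper uses a potential of the form $2^{ck}\cdot|P|+\sum_{t\in P}\|\torso(t)\|$ over the unprocessed prefix $P$ to absorb this growth. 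Your amortization sketch ("nodes multiply by at most $2^{2k}$") points in the right direction but does not account for re-traversing the enlarged parent torsos when they are themselves processed.
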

\begin{proof}
We first turn $\Tc = (T,\lmap)$ into a rooted superbranch decomposition by picking an arbitrary internal node as the root.
We then edit $\Tc$, while maintaining a prefix $P \subseteq \vint(T)$ of $T$ and the following invariants.

\begin{itemize}
\item If $t \in V(T) \setminus P$ and $p = \parent(t)$, then $\lmap(\vec{tp})$ is semi-internally connected,
\item $\Tc$ is $(2k,k)$-unbreakable, and
\item $\adhsize(\Tc) \le 2k$.
\end{itemize}

Note that if $P = \{\troot(\Tc)\}$ and $\Tc$ satisfies the invariants, then we can output $\Tc$.
We initially set $P = \vint(T)$, which satisfies the invariants.
We repeat the following process as long as $|P| \ge 2$.

Let $t \in P$ be a node in $P$ so that no child of $t$ is in $P$.
Let $p$ be the parent of $t$, and $e_p$ the hyperedge of $\torso(t)$ corresponding to the adhesion at $tp$.
We use the algorithm of \Cref{lem:alg_internal_comps} to compute the partition of $E(\torso(t)) \setminus \{e_p\}$ into its internal components in $\OO(\|\torso(t)\| \cdot k)$ time.
We then use radix sort to group the internal components $C$ by $\bd(C)$, yielding a partition $\compset$ of $E(\torso(t)) \setminus \{e_p\}$ so that (1) each $C \in \compset$ is semi-internally connected (2) $\bd(C) \subseteq \adh(tp)$ for all $C \in \compset$, and (3) $|\compset| \le 2^{|\adh(tp)|} \le 2^{2k}$.
This can be done in $2^{\OO(k)} \cdot \|\torso(t)\|$ time.

If $|\compset| = 1$, then $E(\torso(t)) \setminus \{e_p\}$ is semi-internally connected in $\torso(t)$, and by the transitivity of semi-internally connectedness (\Cref{lem:conddwsic}), $\lmap(\vec{tp})$ is semi-internally connected, so we can remove $t$ from $P$ without editing $\Tc$.

Otherwise, when $|\compset| \ge 2$, let $\compset' = \{C \in \compset : |C| \ge 2\}$ and observe that $\compset'$ is a splitting family for $t$.
We transform $\Tc$ into $\Tc^* = \Tc \rescliqs (t, \compset')$ in time $\OO(\|\torso(t)\| \cdot k)$ by \Cref{lem:superbdsplittingalgo}.
Let us denote by $t_C$ for $C \in \compset'$ the resulting nodes corresponding to the sets $C$ in $\compset'$, and by $t'$ their parent.

\begin{claim}
The rooted superbranch decomposition $\Tc \rescliqs (t,\compset') = \Tc^* = (T^*,\lmap^*)$ satisfies the properties that
\begin{itemize}
\item if $x \in V(T^*) \setminus (P \cup \{t'\})$, and $y$ is the parent of $x$, then $\lmap(\vec{xy})$ is semi-internally connected,
\item $V_{\Tc^*}(t') = \adh_{\Tc}(tp)$,
\item $\Tc^*$ is $(2k,k)$-unbreakable, and
\item $\adhsize(\Tc^*) \le 2k$.
\end{itemize}
\end{claim}
\begin{claimproof}
The first property follows for all nodes other than $t_C$ for $C \in \compset'$ by the corresponding property of $\Tc$.
Thus, we only need to prove that for each $C \in \compset'$, it holds that $C \orescliqs \Tc$ is semi-internally connected.
This follows from the transitivity of semi-internally connectedness (\Cref{lem:conddwsic}).
The fact that $V_{\Tc^*}(t') = \adh_{\Tc}(tp)$ follows because (1) $\bd(C) \subseteq \adh_{\Tc}(tp)$ for all $C \in \compset$, and (2) if $C \in \compset$ and $|C| = 1$, then $V(C) = \bd(C)$.

Then, for all $x \in \vint(T^*)$, there exists $y \in \vint(T)$ so that $V_{\Tc^*}(x) \subseteq V_{\Tc}(y)$, implying that $\Tc^*$ is $(2k,k)$-unbreakable.
Finally, the fact that $\bd(C) \subseteq \adh(tp)$ for all $C \in \compset$ implies that $\adhsize(\Tc^*) \le \adhsize(\Tc) \le 2k$.
\end{claimproof}

For the node $t'$ we have that $|E(\torso(t'))| \le 2^{2k}$ and $V(t') = \adh(tp)$.
We then transform $\Tc^*$ into the contraction $\Tc^* \contr t'p$ with the algorithm of \Cref{lem:superbdcontractalgo}, which takes $2^{\OO(k)}$ time.
This maintains the adhesion sizes and the unbreakability properties because $V(t') \subseteq V(p)$, and now we can set $P$ to $P \setminus \{t\}$.
This concludes the description of the process.

Now, each iteration of the process runs in $2^{\OO(k)} \cdot \|\torso(t)\|$ time.
However, it is not immediately clear that this sums up to $2^{\OO(k)} \cdot \|\Tc\|$, because the torsos of nodes in $P$ can grow.
However, in each iteration the torso of the parent of $p$ grows only by $2^{\OO(k)}$ hyperedges, so by considering a potential function of form $\Phi(P) = 2^{c k} \cdot |P| + \sum_{t \in P} \|\torso(t)\|$, for a large enough constant $c > 1$, we can show that the overall running time is indeed $2^{\OO(k)} \cdot \|\Tc\|$.
\end{proof}

Then, we further transform the rooted branch decomposition outputted by \Cref{lem:algdownwlsemicon} so that its torsos have multiplicity at most $3$.

\begin{lemma}
\label{lem:algdownwlsemiconmult}
There is an algorithm that, given a rooted superbranch decomposition $\Tc$ of a hypergraph $G$ and an integer $k \ge 1$ so that
\begin{itemize}
\item $\Tc$ is downwards semi-interally connected,
\item $\Tc$ is $(2k,k)$-unbreakable, and
\item $\adhsize(\Tc) \le 2k$,
\end{itemize}
in time $\OO(k \cdot \|\Tc\|)$ outputs a rooted superbranch decomposition $\Tc'$ of $G$ so that
\begin{itemize}
\item the multiplicity of each torso of $\Tc'$ is at most $3$.
\item $\Tc'$ is downwards semi-internally connected,
\item $\Tc'$ is $(2k,k)$-unbreakable, and
\item $\adhsize(\Tc') \le 2k$.
\end{itemize}
\end{lemma}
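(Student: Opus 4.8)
The plan is to eliminate, node by node, every over-large \emph{adhesion class}. Observe first that a multiplicity-$p$-violator in a torso $\torso(t)$ is precisely a set of $p+1$ edges of $T$ incident to $t$ whose adhesions are the same vertex set, so it suffices to produce a rooted decomposition in which, for every internal node $t$ and every vertex set $S$, at most $3$ edges incident to $t$ have adhesion $S$. The algorithm keeps the given root and iterates over the internal nodes $t$ of the \emph{input} decomposition: using \Cref{lem:hypergraph_impl2} it computes all adhesions at $t$ and groups the edges incident to $t$ by adhesion with radix sort; then, for each group with common adhesion $S$ consisting of $c\ge 4$ edges, at most one of which is the edge to $\parent(t)$, it repeatedly picks two child edges $ts_i,ts_j$ currently in that group and applies the split $\Tc\gets\Tc\rescliqs(t,\{e_{s_i},e_{s_j}\})$ of \Cref{lem:sbdchippingalgo}, until only $3$ edges of adhesion $S$ remain at $t$ (i.e.\ $c-3$ splits). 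Each such split merges $s_i,s_j$ under a new child $t_C$ of $t$, with $\torso(t_C)$ consisting of the three hyperedges of adhesions $\adh(ts_i)=\adh(ts_j)=S$ and $\adh(tt_C)=\bd(\lmap(\vec{s_it})\cup\lmap(\vec{s_jt}))$.

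The observation that makes this work cleanly is that the newly created adhesion is always exactly $S$: when a split is performed the group at $t$ still has $c\ge 4$ members, so besides $ts_i,ts_j$ there is at least one further edge of adhesion $S$; whether it goes to a child or to $\parent(t)$, its side of the tree lies in $\co{(\lmap(\vec{s_it})\cup\lmap(\vec{s_jt}))}$ and its vertex set contains $S$, hence $\bd(\lmap(\vec{s_it})\cup\lmap(\vec{s_jt}))\supseteq S$; combined with $\bd(\lmap(\vec{s_it})\cup\lmap(\vec{s_jt}))\subseteq\bd(\lmap(\vec{s_it}))=S$ this gives equality. Consequently no split ever creates a new adhesion class, every group shrinks monotonically to size $3$, and a single pass over the input nodes suffices, with the groups at each $t$ handled independently. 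Every new node $t_C$ has degree $3$, so $\torso(t_C)$ has only $3$ hyperedges and thus multiplicity $\le 3$; moreover $V(t_C)=S$ with $|S|\le\adhsize(\Tc)\le 2k$, which forces $\torso(t_C)$ to be $(2k,k)$-unbreakable, since a separation of order $<k$ with both sides of size $\ge 2k$ would give $|V(t_C)|\ge 4k-k=3k>2k$. All torsos other than $\torso(t)$ are untouched, and $\torso(t)$ only loses hyperedges inside its $S$-classes, so $(2k,k)$-unbreakability and the bound $\adhsize\le 2k$ are preserved.

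It remains to argue that downwards semi-internal connectedness survives. The only downward sets changed by a split $\Tc\rescliqs(t,\{e_{s_i},e_{s_j}\})$ are $\lmap(\vec{s_it_C})=\lmap(\vec{s_it})$ and $\lmap(\vec{s_jt_C})=\lmap(\vec{s_jt})$, which are unchanged, and $\lmap(\vec{t_Ct})=\lmap(\vec{s_it})\cup\lmap(\vec{s_jt})$. So the plan needs the auxiliary claim: \emph{if $A_1,A_2\subseteq E(G)$ are disjoint semi-internally connected sets with $\bd(A_1)=\bd(A_2)$, then $A_1\cup A_2$ is semi-internally connected} (this could also be derived from \Cref{lem:conddwsic}, but a direct proof is shorter). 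I would prove it by showing that every internal component $C$ of $A_1\cup A_2$ has $\bd(A_1)=\bd(A_2)=:S$ contained in $V(C)$: $C$ meets $A_1$ or $A_2$, say $A_1$ in a hyperedge $e$, and the internal component $D$ of $A_1$ containing $e$ satisfies $D\subseteq C$ — because hyperedges connected through $V(A_1)\setminus\bd(A_1)$ stay connected through $V(A_1\cup A_2)\setminus\bd(A_1\cup A_2)$, using $\bd(A_1\cup A_2)\subseteq\bd(A_1)$ — and $D$ has border $S$ by semi-internal connectedness of $A_1$, so $S\subseteq V(D)\subseteq V(C)$; since also $\bd(C)\subseteq\bd(A_1\cup A_2)\subseteq S\subseteq V(C)$ and every vertex of $\bd(A_1\cup A_2)$ lies in $V(\co{C})$, we get $\bd(C)=\bd(A_1\cup A_2)$. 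Applying this claim inductively along the caterpillar of new nodes built for each group (every intermediate downward set has border $S$ by the displayed computation above), all new downward sets are semi-internally connected while the old ones are unchanged.

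For the running time, the number of split operations is $\sum_{t\in\vint(T)}\sum_{\text{groups at }t}\max(0,\text{size}-3)\le\sum_{t}\deg_{\Tc}(t)=\OO(\|\Tc\|)$, each costs $\OO(|C|\cdot\adhsize(\Tc))=\OO(k)$ by \Cref{lem:sbdchippingalgo}, and computing and radix-sorting the adhesions at each $t$ costs $\OO(k\cdot\deg_{\Tc}(t))$, giving $\OO(k\cdot\|\Tc\|)$ in total. The main obstacle is the auxiliary union claim for semi-internally connected sets; once it is in place, everything else is routine accounting, and in particular the remark that $\bd(\lmap(\vec{s_it})\cup\lmap(\vec{s_jt}))=S$ whenever the group still has at least $4$ members is exactly what simultaneously keeps the adhesions small, prevents classes from re-proliferating, and makes unbreakability of the new torsos immediate.
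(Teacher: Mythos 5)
Your proposal is correct, and it takes a somewhat different route from the paper's. The paper performs, at each node $t$ (processed bottom-up along a prefix), a single split along the partition of $E(\torso(t))\setminus\{e_p\}$ into full equivalence classes by $V(e)$, which requires special-casing the situations where that partition is too coarse ($|\compset|=1$, or $\le 2$ at the root) and then expanding each resulting class-node into an arbitrary cubic tree; the generic outcome is a node of multiplicity $1$. You instead peel off child edges two at a time from each over-populated adhesion class, building a caterpillar and stopping at $3$ representatives. The payoff of your variant is the invariant that every newly created adhesion equals the class adhesion $S$ exactly, because at the moment of each split at least two further edges of adhesion $S$ survive on the complement side, forcing $S\subseteq\bd(\lmap(\vec{s_it})\cup\lmap(\vec{s_jt}))$; this single observation simultaneously handles the adhesion bound, the unbreakability of the new degree-$3$ torsos (whose vertex sets have size $\le 2k<3k$), and the hypotheses of the union step, and it removes the need for any root or coarse-partition special cases or for a bottom-up ordering. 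Both arguments ultimately rest on the same fact — that a disjoint union of semi-internally connected sets with a common border is semi-internally connected — which the paper invokes tersely inside its claim and you prove explicitly (your version, allowing $\bd(A_1\cup A_2)\subsetneq\bd(A_1)$, is in fact slightly more robust than the equality the paper asserts). Your accounting of the number of splits and the per-split cost matches the claimed $\OO(k\cdot\|\Tc\|)$ bound.
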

\begin{proof}
Assume that the root of $\Tc$ is an internal node, because if it is a leaf, then we can move it to the adjacent internal node while keeping downwards semi-internally connectedness.
We will edit $\Tc = (T,\lmap)$, while maintaining a prefix $P \subseteq \vint(T)$ of $T$ (here, we allow $P = \emptyset$) and the following invariants.
\begin{itemize}
\item If $t \in \vint(T) \setminus P$, then the multiplicity of $\torso(t)$ is at most $3$,
\item $\Tc$ is downwards semi-internally connected,
\item $\Tc$ is $(2k,k)$-unbreakable, and
\item $\adhsize(\Tc) < 2k$.
\end{itemize}

Note that if $P = \emptyset$ and $\Tc$ satisfies the invariants, then we can return $\Tc$.
We initially set $P = \vint(T)$, which clearly satisfies the invariants.
We repeat the following process as long as $|P| \ge 1$.

Let $t \in P$ be a node in $P$ so that no child of $t$ is in $P$.
Let $p$ be the parent of $t$, and $e_p$ the hyperedge of $\torso(t)$ corresponding to the adhesion at $tp$.
We use radix sort to partition the hyperedges $E(\torso(t)) \setminus \{e_p\}$ into equivalence classes based on $V(e)$, in time $\OO(\|\torso(t)\|)$.
Let $\compset$ be this partition of $E(\torso(t)) \setminus \{e_p\}$.
If $t$ is not the root and $|\compset| = 1$, i.e., all hyperedges in $E(\torso(t)) \setminus \{e_p\}$ contain the same set of vertices, then let $\compset$ be an arbitrary partition of $E(\torso(t)) \setminus \{e_p\}$ with $|\compset| = 2$, which exists by the fact that $T$ is supercubic.
If $t$ is the root and $|\compset| \le 2$, then let $\compset$ be an arbitrary partition of $E(\torso(t)) \setminus \{e_p\}$ so that $|\compset| = 3$ and for each $C \in \compset$, all $e \in C$ have equal $V(e)$.
This again exists by the fact that $T$ is supercubic.
Now let $\compset' = \{C \in \compset : |C| \ge 2\}$.

First, we transform $\Tc$ into $\Tc \rescliqs (t,\compset')$ in time $\OO(k \cdot \|\torso(t)\|)$ by \Cref{lem:superbdsplittingalgo}.
For each child node $t_C$ of $t$ created like this, corresponding to a set $C \in \compset'$, we have that for all $e,e' \in E(\torso(t_C))$ it holds that $V(e) = V(e')$.
Then, if $t_C$ has degree more than $3$, we replace $t_C$ by an arbitrary cubic tree of nodes.
This construction can be done in total $\OO(k \cdot \|\torso(t)\|)$ time.
We then remove $t$ from $P$.

\begin{claim}
The resulting rooted superbranch decomposition and $P$ satisfy the invariants.
\end{claim}
\begin{claimproof}
Let $\Tc = (T,\lmap)$ be the old rooted superbranch decomposition and $\Tc^* = (T^*,\lmap^*)$ the resulting rooted superbranch decomposition.
Let $t'$ be the node of $\Tc^*$ corresponding to $t$.

We first argue the multiplicity.
All nodes in the binary trees created have degree at most $3$, so the multiplicity of their torsos is at most $3$.
We have that $\torso_{\Tc^*}(t') = \torso_{\Tc}(t) \rescliqs \compset' = \torso_{\Tc}(t) \rescliqs \compset$, implying that either the multiplicity of $\torso_{\Tc}(t')$ is $1$, or the degree of $t'$ is at most $3$, implying that the multiplicity of $\torso_{\Tc^*}(t')$ is at most $3$.

We then argue the downwards semi-internally connectedness.
Consider $\vec{uv} \in \vec{E}(T^*)$ with $v = \parent(u)$.
We observe that then, there exists $\vec{u_1 v_1}, \ldots, \vec{u_h, v_h} \in \vec{E}(T)$, with $v_i = \parent(u_i)$, so that $\lmap'(\vec{uv}) = \bigcup_{i=1}^h \lmap(\vec{u_i v_i})$ and $\bd(\lmap(\vec{u_i v_i})) = \bd(\lmap'(\vec{uv}))$.
As $\Tc$ is downwards semi-internally connected, this implies that $\Tc^*$ is downwards semi-internally connected.

For each $x \in V(T^*)$, there exists $y \in V(T)$ so that $V_{\Tc^*}(x) \subseteq V_{\Tc}(y)$, implying that $V(x)$ is $(2k,k)$-unbreakable in $G$.
Also, the operation does not increase the adhesion size, so $\adhsize(\Tc^*) \le \adhsize(\Tc) < 2k$.
\end{claimproof}

As each iteration runs in time $\OO(k \cdot \|\torso(t)\|)$, removes $t$ from $P$, and does not alter the torsos of nodes remaining in $P$, the overall algorithm runs in time $\OO(k \cdot \|\Tc\|)$.
\end{proof}

We then start preparing for the main step, making the superbranch decomposition outputted by the algorithm of \Cref{lem:algdownwlsemiconmult} downwards well-linked.
Here, the idea will be to work bottom-up in the superbranch decomposition, and for each oriented edge $\vec{tp}$, where $p=\parent(t)$, so that $\lmap(\vec{tp})$ is not well-linked but $\lmap(\vec{ct})$ is well-linked for all $c \in \children(t)$, find a partition of $E(\torso(t)) \setminus \{e_p\}$ into a family $\compset$ of $2^{\OO(k)}$ well-linked sets $C \in \compset$, and ``collapse'' the adhesion at $tp$ to $|\compset|$ adhesions corresponding to the separations $(C \orescliqs \Tc, \co{C} \orescliqs \Tc)$ for $C \in \compset$.
The next lemma is about finding this partition of $E(\torso(t)) \setminus \{e_p\}$.

\begin{lemma}
\label{lem:partitionintowlsets}
There is an algorithm that, given a hypergraph $G$ and a non-empty set $A \subseteq E(G)$, in time $2^{\OO(\bdc(A))} \cdot \rank(G)^2 \cdot \|G\|$ returns a partition $\{C_1,\ldots,C_h\}$ of $A$ into at least one and at most $h \le 2^{\bdc(A)}$ non-empty parts, so that for each $i \in [h]$,
\begin{itemize}
\item $C_i$ is well-linked, and
\item $\bdc(C_i) \le \bdc(A)$.
\end{itemize}
\end{lemma}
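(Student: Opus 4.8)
The plan is to compute the partition by a recursive refinement: start from the trivial partition $\{A\}$ and repeatedly split any part that is not well-linked into two strictly smaller pieces, stopping once every part is well-linked. Concretely, I would maintain a partition $\compset$ of $A$, initialized to $\compset = \{A\}$, together with a mark on each part recording whether it has already been verified to be well-linked. While $\compset$ contains an unmarked part $C$, I would first compute $\bdc(C)$ via \Cref{lem:hypergraph_impl2} in time $\OO(|C| \cdot \rank(G))$, and then run the algorithm of \Cref{lem:algffwl} on $C$ with the parameter $k = \bdc(C)$. Since $\bdc(C) \le k$, being $k$-well-linked coincides with being well-linked, so \Cref{lem:algffwl} either certifies that $C$ is well-linked, in which case I mark $C$, or returns a bipartition $(C_1,C_2)$ of $C$ with $\bdc(C_1) < \min(\bdc(C),k) = \bdc(C)$ and $\bdc(C_2) < \bdc(C)$, in which case I replace $C$ in $\compset$ by $C_1$ and $C_2$. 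When all parts are marked, I output $\compset$.

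For correctness, every output part is well-linked by the stopping condition, the output parts form a partition of $A$ since each step replaces a part by a bipartition of it, and $h \ge 1$ because $A$ is non-empty. To bound the number of parts and the border sizes, I would track the binary recursion tree whose root is $A$ and whose leaves are the output parts. Along every root-to-leaf branch $\bdc$ strictly decreases at each split, so every part $C$ appearing in the process has $\bdc(C) \le \bdc(A)$ (in particular this holds for the output parts $C_i$), and a part reaches border size $0$ after at most $\bdc(A)$ splits, at which point it is trivially well-linked and hence a leaf. Therefore the recursion tree has depth at most $\bdc(A)$ and hence at most $2^{\bdc(A)}$ leaves, giving $h \le 2^{\bdc(A)}$.

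For the running time, the recursion tree has at most $2^{\bdc(A)}$ leaves and hence $2^{\OO(\bdc(A))}$ nodes in total, so the algorithm of \Cref{lem:algffwl} is invoked $2^{\OO(\bdc(A))}$ times; each invocation, on a part $C \subseteq A$ with $\bdc(C) \le \bdc(A)$, costs $2^{\OO(\bdc(C))} \cdot \rank(G)^2 \cdot |C| \le 2^{\OO(\bdc(A))} \cdot \rank(G)^2 \cdot |A|$, and the cost of computing the $\bdc(C)$ values and of bookkeeping on $\compset$ is dominated by this. (One can be slightly sharper: at each of the $\le \bdc(A)$ levels the parts are pairwise disjoint subsets of $A$, so the total part size summed over all nodes is $\le \bdc(A)\cdot|A|$, which again gives $2^{\OO(\bdc(A))}\cdot\rank(G)^2\cdot|A|$.) Since $|A| \le |E(G)| \le \|G\|$, the total running time is $2^{\OO(\bdc(A))} \cdot \rank(G)^2 \cdot \|G\|$, as required.

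The argument is largely routine; the two points that need a little care are: (i) noting that a part which is not well-linked necessarily has positive border size (any set of border size $0$ is vacuously well-linked), so the bipartitions produced by \Cref{lem:algffwl} have both sides non-empty and the refinement process genuinely terminates; and (ii) the depth argument that converts the strict decrease of $\bdc$ along each branch into the bound $h \le 2^{\bdc(A)}$ on the number of parts.
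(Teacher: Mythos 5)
Your proposal is correct and follows essentially the same route as the paper: greedily refine the partition starting from $\{A\}$, splitting any non-well-linked part via \Cref{lem:algffwl}. The only cosmetic difference is the accounting of the number of parts — you bound the depth of the binary recursion tree by the strict decrease of $\bdc$ along each branch, whereas the paper maintains the potential $\sum_i 2^{\bdc(C_i)}$ — and both yield the same bound $h \le 2^{\bdc(A)}$ and the same running time.
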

\begin{proof}
We do an iterative process that maintains a partition $\compset = \{C_1,\ldots,C_h\}$ of $A$ satisfying that $\sum_{i=1}^h 2^{\bdc(C_i)} \le \bdc(A)$.
Initially, we set $\compset = \{A\}$.

In each iteration, we first check by using the algorithm of \Cref{lem:algffwl} if there exists $C_i \in \compset$ so that $C_i$ is not well-linked.
If all $C_i \in \compset$ are well-linked, we can return $\compset$.

Otherwise, let $C_i$ be not well-linked, and let $(B_1,B_2)$ be a bipartition of $C_i$ returned by the algorithm of \Cref{lem:algffwl} so that $\bdc(B_1),\bdc(B_2) < \bdc(C_i)$.
We replace $C_i$ in $\compset$ by $B_1$ and $B_2$.
Note that this maintains the invariant that $\sum_{i=1}^h 2^{\bdc(C_i)} \le \bdc(A)$.

This procedure runs for at most $2^{\bdc(A)}$ iterations, and each iteration runs in $2^{\OO(\bdc(A))} \cdot \rank(G)^2 \cdot \|G\|$ time.
\end{proof}

Next, we need to argue that if $V(t)$ was $(2k,k)$-unbreakable, then after ``collapsing'' the adhesions below $t$, the resulting $V(t)$ is $(2^{\OO(k)}, k)$-unbreakable.
The next lemma is about this argument.

\begin{lemma}
\label{lem:expdownwardsunbrk}
Let $\Tc = (T,\lmap)$ be a rooted superbranch decomposition of a hypergraph $G$, $t \in \vint(T)$, and $k \ge 1$ an integer.
Let also $Q \subseteq \children(t) \cap \vint(T)$, and assume that
\begin{itemize}
\item for each $c \in Q$, $\lmap(\vec{ct})$ is semi-internally connected and $|V(c)| \le 2^{\OO(k)}$,
\item the multiplicity of $\torso(t)$ is at most $3$,
\item $V(t)$ is $(2k,k)$-unbreakable in $G$, and
\item $\adhsize(\Tc) \le 2k$.
\end{itemize}
Then, the set $V(t) \cup \bigcup_{c \in Q} V(c)$ is $(2^{\OO(k)}, k)$-unbreakable in $G$.
\end{lemma}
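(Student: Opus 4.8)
The plan is to argue by contradiction. Set $W := V(t) \cup \bigcup_{c \in Q} V(c)$ and, for $c \in Q$, write $A_c := \lmap(\vec{ct})$ and $S_c := \adh(ct) = \bd(A_c) \subseteq V(t)$; call $V(A_c) \setminus S_c$ the \emph{interior} of $A_c$. Suppose $W$ is not $(s,k)$-unbreakable in $G$ for a value $s = 2^{\OO(k)}$ to be fixed later, witnessed by a separation $(A,\co A)$ of $G$ with $\bdc(A) < k$ and $|V(A)\cap W|, |V(\co A)\cap W| \ge s$. First I would record two facts that follow from the vertex condition of $\Tc$, viewed as a tree decomposition of $\primal(G)$ via $\bag(x)=V(x)$: \textbf{(a)} for each $c \in Q$ we have $W \cap V(A_c) = V(c)$, hence $|W \cap V(A_c)| \le |V(c)| \le 2^{\OO(k)} =: M$; and \textbf{(b)} the interiors $V(A_c)\setminus S_c$, $c \in Q$, are pairwise disjoint. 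Both hold because any vertex witnessing a failure would have to reach below two distinct children of $t$, hence lie in $V(t)$ and thus in the relevant $S_c$'s (using $V(c) \subseteq V(A_c)$ and $S_c \subseteq V(t)$). From (b) and $|\bd(A)| < k$, at most $k-1$ children $c \in Q$ have $\bd(A)$ meeting the interior of $A_c$; collect them in $Q_{\mathrm{ex}}$.

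Next I would extract the dichotomy that powers the uncrossing, using semi-internal connectedness of each $A_c$: if $c \in Q$ and $(A,\co A)$ \emph{genuinely cuts} the interior of $A_c$ (the interior meets both $V(A)\setminus V(\co A)$ and $V(\co A)\setminus V(A)$), then either $\bd(A)$ meets the interior of $A_c$, or $S_c \subseteq \bd(A)$. Indeed, pick interior vertices $v \in V(A)\setminus V(\co A)$ and $w \in V(\co A)\setminus V(A)$ of $A_c$, lying in internal components $C_v, C_w$ of $A_c$, which satisfy $\bd(C_v) = \bd(C_w) = S_c$ by semi-internal connectedness. If $C_v = C_w$, then $v$ and $w$ are joined by a path inside $\primal(G)[V(A_c)\setminus S_c]$, and walking it from $v$ to $w$ produces a vertex of $\bd(A)$ in the interior; if $C_v \ne C_w$ and either of $C_v \subseteq A$, $C_w \subseteq \co A$ fails, the same walking argument over the corresponding $\primal$-component again yields a vertex of $\bd(A)$ in the interior; and in the remaining case $C_v \subseteq A$ and $C_w \subseteq \co A$ give $S_c = \bd(C_v) \subseteq V(A)$ and $S_c = \bd(C_w) \subseteq V(\co A)$, i.e.\ $S_c \subseteq \bd(A)$.

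Now I would uncross. For every $c \in Q \setminus Q_{\mathrm{ex}}$ that is genuinely cut we have $S_c \subseteq \bd(A)$ by the dichotomy; every other $c \in Q \setminus Q_{\mathrm{ex}}$ has its interior lying (strictly) entirely on one side of $(A,\co A)$. I would form $A^*$ from $A$ by moving, simultaneously and for every $c \in Q \setminus Q_{\mathrm{ex}}$, the whole set $A_c$ to the side that agrees with its interior. The submodularity bookkeeping shows that such a move can only add to the border vertices of $S_c$ and can only delete interior vertices of $A_c$ from the border; combined with $S_c \subseteq \bd(A)$ for the genuinely cut ones, this yields $\bd(A^*) \subseteq \bd(A)$, so $\bdc(A^*) < k$. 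Moreover every $c \in Q \setminus Q_{\mathrm{ex}}$ now has $A_c$ entirely on one side of $(A^*,\co{A^*})$, while for $c \in Q_{\mathrm{ex}}$ nothing changes and, by (a), $c$ contributes at most $|V(c)| \le M$ to $V(A^*)\cap W$ and to $V(\co{A^*})\cap W$.

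Finally I would count. Using (a) and (b), $V(A^*)\cap W$ splits as $(V(A^*)\cap V(t))$ together with the disjoint pieces $V(A^*)\cap(V(c)\setminus V(t))$, $c \in Q$; for $c \notin Q_{\mathrm{ex}}$ this piece is nonempty only when $A_c$ lies on the $A^*$-side, in which case $S_c \subseteq V(A_c) \subseteq V(A^*)$, so $S_c \subseteq V(A^*)\cap V(t)$. If $|V(A^*)\cap V(t)| < 2k$, then every contributing child $c \notin Q_{\mathrm{ex}}$ has $S_c$ either empty (at most $3$ such $c$, since $e_c$ would be a rank-$0$ multihyperedge of $\torso(t)$, which has multiplicity $\le 3$) or a nonempty subset of the fixed set $V(A^*)\cap V(t)$ of size $< 2k$ (at most $3\cdot 2^{2k}$ such $c$, again by multiplicity $\le 3$). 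Hence $|V(A^*)\cap W| \le 2k + M\cdot\big((k-1) + 3 + 3\cdot 2^{2k}\big) = 2^{\OO(k)}$, which is $< s$ once $s$ is chosen large enough — a contradiction. Therefore $|V(A^*)\cap V(t)| \ge 2k$, and by the symmetric argument $|V(\co{A^*})\cap V(t)| \ge 2k$, so $(A^*,\co{A^*})$ is a separation of order $< k$ breaking $V(t)$, contradicting that $V(t)$ is $(2k,k)$-unbreakable; this forces $s = 2^{\OO(k)}$ and finishes the proof. The main obstacle is making the uncrossing step fully rigorous: verifying that moving all of $Q \setminus Q_{\mathrm{ex}}$ to the appropriate sides at once keeps the border inside $\bd(A)$, and, in particular, controlling ``boundary'' hyperedges $e \in A_c$ with $V(e) \subseteq S_c$, which can make a vertex of $S_c$ behave unlike the interior of $A_c$ and potentially become a new border vertex; getting the choice of side for each $c$ (and a few degenerate cases, including $S_c = \emptyset$) exactly right is where the real work lies.
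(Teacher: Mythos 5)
Your counting ingredients are the right ones (the $k-1$ bound from disjoint interiors meeting $\bd(A)$, and the $3\cdot 2^{2k}$ bound from the multiplicity of $\torso(t)$), but the proof as written has a genuine gap exactly where you flag it: the uncrossing step. The claim $\bd(A^*)\subseteq\bd(A)$ is false in general. Concretely, take two children $c,c'\in Q\setminus Q_{\mathrm{ex}}$ with a common adhesion vertex $v\in S_c\cap S_{c'}$, where the interior of $A_c$ lies strictly on the $A$-side and the interior of $A_{c'}$ lies strictly on the $\co{A}$-side, and where every hyperedge of $A_{c'}$ containing $v$ satisfies $V(e)\subseteq S_{c'}$ (a ``boundary hyperedge''). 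Then it is consistent that $v\in V(A)\setminus V(\co{A})$, i.e.\ $v\notin\bd(A)$; but after you move $A_{c'}$ to $\co{A^*}$ and keep $A_c$ in $A^*$, the vertex $v$ is incident to hyperedges on both sides and becomes a new border vertex of $A^*$. So $\bdc(A^*)<k$ is not established, and the final appeal to the $(2k,k)$-unbreakability of $V(t)$ via $(A^*,\co{A^*})$ does not go through. Your dichotomy for genuinely cut children only controls those children; it says nothing about vertices of $S_{c'}$ for children that are \emph{not} genuinely cut, which is where the failure occurs.

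The uncrossing is also unnecessary. Since $V(t)$ is $(2k,k)$-unbreakable against \emph{every} separation of $G$ of order $<k$, you may apply it to the original $(A,\co{A})$ at the outset and assume w.l.o.g.\ $|V(A)\cap V(t)|<2k$; this is what the paper does. Then the counting you perform on $A^*$ works verbatim on $A$ itself: a child $c$ contributes a vertex of $V(c)\setminus\adh(ct)$ to $V(A)$ only if either $\adh(ct)$ meets $V(\co{A})\setminus V(A)$ — in which case semi-internal connectedness produces a connected set in the interior meeting $\bd(A)$, so at most $k-1$ such children — or $\adh(ct)\subseteq V(A)\cap V(t)$, giving at most $3\cdot 2^{2k}$ children by the multiplicity bound. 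This yields $|V(A)\cap W|\le 2k+h\cdot(k-1+3\cdot 2^{2k})$ directly, with no need to realign any $A_c$ with a side of the cut.
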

\begin{proof}
Let $h = 2^{\OO(k)}$ denote the maximum of $|V(c)|$ for $c \in Q$.
Denote $X = V(t) \cup \bigcup_{c \in Q} V(c)$.
For the sake of contradiction, suppose that $X$ is not $(h \cdot 3 \cdot 2^{3k}, k)$-unbreakable in $G$, and let $(A,\co{A})$ be a separation of $G$ of order $<k$ so that $|V(A) \cap X| \ge h \cdot 3 \cdot 2^{3k}$ and $|V(\co{A}) \cap X| \ge h \cdot 3 \cdot 2^{3k}$.
By the $(2k,k)$-unbreakability of $V(t)$, we have that either $|V(A) \cap V(t)| < 2k$ or $|V(\co{A}) \cap V(t)| < 2k$.
Assume \wilog that $|V(A) \cap V(t)| < 2k$.

\begin{claim}
\label{lem:expdownwardsunbrk:claim2}
There are at most $k-1$ children $c \in Q$ so that $\adh(ct)$ intersects $V(\co{A}) \setminus V(A)$ and $V(c) \setminus \adh(ct)$ intersects $V(A)$.
\end{claim}
\begin{claimproof}
Suppose that $\adh(ct)$ intersects $V(\co{A}) \setminus V(A)$, and $V(c) \setminus \adh(ct)$ intersects $V(A)$.
Because $\lmap(\vec{ct})$ is semi-internally connected, there then exists a set $C_c \subseteq V(\lmap(\vec{ct})) \setminus \adh(ct)$ so that $C_c$ is a connected set in $\primal(G)$, $N_{\primal(G)}(C_c) = \adh(ct)$, and $C_c$ intersects $V(A)$.
Let $v \in \adh(ct) \setminus V(A)$.
Because $C_c \cup \{v\}$ is a connected set in $\primal(G)$, and intersects both $V(A)$ and $V(\co{A})$, it must be that $\bd(A)$ intersects $C_c \cup \{v\}$ and therefore intersects $C_c$ because $v \notin V(A)$.

Because the sets $C_c$ over different children $c \in Q$ are disjoint, there can be at most $k-1$ of such children.
\end{claimproof}

Let $P$ be the set of children $c \in Q$ so that $V(c) \setminus \adh(ct)$ intersects $V(A)$.
At most $k-1$ of the children in $P$ satisfy \Cref{lem:expdownwardsunbrk:claim2}.
For the rest it holds that $\adh(ct) \subseteq V(A) \cap V(t)$, so there are at most $3 \cdot 2^{2k}$ of them because the multiplicity of $\torso(t)$ is $3$ and $|V(A) \cap V(t)| \le 2k$.
Therefore, we have that 
\[|P| \le k-1 + 3 \cdot 2^{2k}.\]
This implies that $|X \cap V(A)| \le 2k + h \cdot (k-1 + 3 \cdot 2^{2k}) < h \cdot 3 \cdot 2^{3k}$, contradicting our assumption.
\end{proof}

Then, we give the main algorithm for proving \Cref{the:highlevel:downwlalgo}, in particular, an algorithm that takes as an input the decomposition outputted by the algorithm of \Cref{lem:algdownwlsemiconmult}, and makes it downwards well-linked, with the cost of increasing the unbreakability guarantee.

\begin{lemma}
\label{lem:downwlalgfinallem}
There is an algorithm that, given a rooted superbranch decomposition $\Tc$ of a hypergraph $G$ and an integer $k \ge 1$ so that
\begin{itemize}
\item the multiplicity of each torso of $\Tc$ is at most $3$,
\item $\Tc$ is downwards semi-internally connected,
\item $\Tc$ is $(2k, k)$-unbreakable, and
\item $\adhsize(\Tc) \le 2k$,
\end{itemize}
in time $2^{\OO(k)} \cdot \|\Tc\|$ outputs a rooted superbranch decomposition $\Tc'$ of $G$ so that
\begin{itemize}
\item $\Tc'$ is downwards well-linked,
\item $\Tc'$ is $(2^{\OO(k)}, k)$-unbreakable, and
\item $\adhsize(\Tc') \le 2k$.
\end{itemize}
\end{lemma}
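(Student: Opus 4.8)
The plan is to process the given rooted superbranch decomposition $\Tc=(T,\lmap)$ from the leaves towards the root, maintaining a prefix $P\subseteq\vint(T)$ of ``unfinished'' nodes together with the invariants that (i) every $t\in\vint(T)\setminus P$ with parent $p$ has $\lmap(\vec{tp})$ well-linked, (ii) $\adhsize(\Tc)\le 2k$, and (iii) $\Tc$ is $(2^{\OO(k)},k)$-unbreakable. We start with $P=\vint(T)$ (moving the root to an internal node first, if it is a leaf), and repeatedly pick a node $t$ at the frontier of $P$, i.e.\ a $t\in P$ with no child in $P$, and process it. If $t$ is the root we just delete it from $P$; otherwise let $p=\parent(t)$ and let $e_p\in E(\torso(t))$ be the hyperedge of the adhesion at $tp$.

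At node $t$ we use \Cref{lem:algffwl} (with the parameter set to $\bdc(e_p)=\bdc(E(\torso(t))\setminus\{e_p\})$, so that it tests plain well-linkedness) to decide whether $A:=E(\torso(t))\setminus\{e_p\}$ is well-linked in $\torso(t)$. Since every child $c$ of $t$ is finished (or is a leaf), $\lmap(\vec{ct})$ is well-linked in $G$; hence, using $\torso(t)=G\rescliqs\{\lmap(\vec{st}):s\in N_T(t)\}$ and applying the transitivity of well-linkedness (\Cref{lem:linkedcliq}) to uncontract the children's sets one by one — the parent hyperedge lies on the complement of $A$ and contributes no constraint — we get that if $A$ is well-linked in $\torso(t)$ then $\lmap(\vec{tp})=A\orescliqs\Tc$ is well-linked in $G$, and we delete $t$ from $P$. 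Otherwise we use \Cref{lem:partitionintowlsets} to split $A$ into a family $\compset$ of at least $2$ (since $A$ is not well-linked) and at most $2^{\bdc(e_p)}\le 2^{2k}$ well-linked sets $C$ with $\bdc(C)\le\bdc(e_p)\le 2k$, then apply \Cref{lem:superbdsplittingalgo} to perform $\Tc\rescliqs(t,\compset\cup\{\{e_p\}\})$: this replaces $t$ by a node $t'$ adjacent to $p$ with $|E(\torso(t'))|\le 1+2^{2k}$ (hence $|V(t')|\le 2^{\OO(k)}$) and by nodes $t_C$ for the parts $C$ with $|C|\ge 2$, each satisfying $\lmap(\vec{t_Ct'})=C\orescliqs\Tc$ well-linked in $G$ by the same transitivity argument; finally we apply \Cref{lem:superbdcontractalgo} to contract $t'p$, absorbing the small node $t'$ into $p$, and delete $t$ from $P$. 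The new adhesions have order $\bdc(C)\le 2k$, so (ii) is preserved, and once $P$ has shrunk to (at most) the root, invariant (i) says the output is downwards well-linked.

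The crux is invariant (iii), for which we use \Cref{lem:expdownwardsunbrk}. Each small node $t'$ has $|V(t')|\le 2^{\OO(k)}$, so $\torso(t')$ is $(2^{\OO(k)},k)$-unbreakable; each $t_C$ has $V(t_C)\subseteq V(t)$ and is $(2^{\OO(k)},k)$-unbreakable by heredity. For a node $p$ that has absorbed small nodes, the key point is that splitting a child of $p$ and contracting the resulting small node into $p$ changes neither the adhesions at the edges incident to $p$ nor the set of leaves below $p$. Hence, at the moment $p$ becomes a frontier node, we may treat it as the original node $p$ — with its pristine multiplicity-$\le 3$, $(2k,k)$-unbreakable torso — equipped with the collection $Q$ of absorbed small nodes as children, where each $c'\in Q$ has $|V(c')|\le 2^{\OO(k)}$ and $\lmap(\vec{c'p})$ equal to a downwards set of the input $\Tc$, which is semi-internally connected by hypothesis. \Cref{lem:expdownwardsunbrk} then yields that $V(p)\cup\bigcup_{c'\in Q}V(c')$ is $(2^{\OO(k)},k)$-unbreakable, and the vertex set of $p$ after all the absorptions is contained in this set, so (iii) holds. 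For the running time, processing $t$ costs $2^{\OO(k)}\cdot\|\torso(t)\|$ for the well-linkedness test, the partition and the split, plus $2^{\OO(k)}$ for the contraction; since a node absorbs at most one $2^{\OO(k)}$-sized torso per child, $\|\torso(t)\|$ at processing time is at most $2^{\OO(k)}$ times its original size, and a potential argument $\Phi(P)=2^{ck}|P|+\sum_{t\in P}\|\torso(t)\|$ (large constant $c$), as in \Cref{lem:algdownwlsemicon}, bounds the total time by $2^{\OO(k)}\cdot\|\Tc\|$.

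I expect the main obstacle to be exactly this unbreakability bookkeeping: one must invoke \Cref{lem:expdownwardsunbrk} ``retroactively,'' at the instant a node first has all its children finished, against the hypothetical decomposition in which its split children have not yet been contracted in — so that the relevant torso is still the untouched multiplicity-$\le 3$, $(2k,k)$-unbreakable one — and then argue that the actual sequence of contractions into $p$ only shrinks vertex sets relative to $V(p)\cup\bigcup_{c'}V(c')$ and leaves the semi-internally-connected downwards sets that \Cref{lem:expdownwardsunbrk} relies on untouched. The transitivity steps and the time accounting are, by comparison, routine adaptations of \Cref{lem:linkedcliq} and of the potential argument already used in \Cref{lem:algdownwlsemicon}.
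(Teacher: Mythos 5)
Your proposal is correct and follows essentially the same route as the paper's proof: partition $E(\torso(t))\setminus\{e_p\}$ into at most $2^{2k}$ well-linked parts via \Cref{lem:partitionintowlsets}, lift well-linkedness to $G$ by transitivity (\Cref{lem:linkedcliq}), control unbreakability of the parent via \Cref{lem:expdownwardsunbrk}, and charge the running time to a potential of the form $2^{ck}|P|+\sum_{t\in P}\|\torso(t)\|$. The only difference is bookkeeping: you contract the small hub node $t'$ into the parent eagerly and apply \Cref{lem:expdownwardsunbrk} retroactively to the un-contracted configuration, whereas the paper defers these contractions (via the sets $P_s,P_b$) so that the lemma's hypotheses hold literally at the moment it is invoked — both variants are sound.
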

\begin{proof}
Denote $\Tc = (T,\lmap)$, and assume that $\troot(T) \in \vint(T)$, because if the root is a leaf, we can move it to the adjacent internal node.
We will edit $\Tc = (T,\lmap)$, while maintaining a prefix $P \subseteq \vint(T)$ of $T$, and two disjoint subsets $P_s, P_b \subseteq P$, satisfying the following invariants

\begin{enumerate}
\item if $t \in P \setminus (P_s \cup P_b)$, then the multiplicity of $\torso(t)$ is at most $3$,\label{lem:downwlalgfinallem:inv8}
\item if $t \in P \setminus (P_s \cup P_b)$, then $V(t)$ is $(2k,k)$-unbreakable,\label{lem:downwlalgfinallem:inv6}
\item if $t \in P \setminus \{\troot(T)\}$, then $\lmap(\vec{tp})$ is semi-internally connected, where $p = \parent(t)$,\label{lem:downwlalgfinallem:inv2}
\item $\adhsize(\Tc) \le 2k$\label{lem:downwlalgfinallem:inv5},
\item if $t \in V(T) \setminus P$, then $\lmap(\vec{tp})$ is well-linked, where $p = \parent(t)$,\label{lem:downwlalgfinallem:inv1}
\item if $t \in P_s \cup P_b$, then all children of $t$ are in $V(T) \setminus P$,\label{lem:downwlalgfinallem:inv3}
\item if $t \in P_s$, then $|E(\torso(t))| \le 2^{2k}+1$, and\label{lem:downwlalgfinallem:inv4}
\item $\Tc$ is $(2^{\OO(k)}, k)$-unbreakable (where the $2^{\OO(k)}$ is from \Cref{lem:expdownwardsunbrk}).\label{lem:downwlalgfinallem:inv7}
\end{enumerate}

Initially, these invariants are satisfied by picking $P = \vint(T)$ and $P_s,P_b = \emptyset$.
Once they become satisfied with $P = \{\troot(T)\}$, we can output $\Tc$.

Until that, we will repeat the following process.
Note that $P \setminus P_s$ is a prefix of $T$, and pick a node $t \in P \setminus P_s$ so that none of the children of $t$ are in $P \setminus P_s$.
We consider the following cases.

\paragraph{Case 1: $t \in P \setminus P_b$ and all children of $t$ are in $V(T) \setminus P$.}
In this case, we simply insert $t$ into $P_b$.
This satisfies the invariants, as there are no other extra requirements for $P_b$ compared to $P$ except that all children are required to be in $V(T) \setminus P$.
This runs in time $\OO(1)$.

\paragraph{Case 2: $t \in P \setminus P_b$ and at least one child of $t$ is in $P$.}
Now, $t$ has some children in $P_s$, but by the choice of $t$ no children in $P \setminus P_s$.
We transform $\Tc$ by contracting all edges between $t$ and its children that are in $P_s$, and inserting the resulting node to $P$ and $P_b$.
By using \Cref{lem:superbdcontractalgo} for the contractions, and the fact that $|E(\torso(x))| \le 2^{2k}+1$ for all $x \in P_s$, we implement this transformation in time $2^{\OO(k)} \cdot \|\torso(t)\|$.

\begin{claim}
This transformation maintains the invariants.
\end{claim}
\begin{claimproof}
Let us denote by $\Tc = (T, \lmap)$, $P$, $P_s$, and $P_b$ the decomposition and the sets before the contractions, and by $\Tc' = (T',\lmap')$, $P'$, $P_s'$, $P_b'$ after the contractions.
Denote by $t$ both the originally picked node in $P \setminus P_b$ and the resulting contracted node in $P_b'$.

The most interesting of the invariants in this case is \Cref{lem:downwlalgfinallem:inv7}, so let us prove it first by applying \Cref{lem:expdownwardsunbrk}.
We observe that for all $x \in \vint(T') \setminus \{t\}$ we have that $\torso_{\Tc'}(x) = \torso_{\Tc}(x)$, in which case \Cref{lem:downwlalgfinallem:inv7} clearly holds, so it remains to prove that $\torso_{\Tc'}(t)$ is $(2^{\OO(k)}, k)$-unbreakable.
Let $Q = \children_{T}(t) \cap P_s$ be the children of $t$ in $P_s$.
Now we have that 
\[V_{\Tc'}(t) = V_{\Tc}(t) \cup \bigcup_{c \in Q} V_{\Tc}(c),\]
so by \Cref{lem:expdownwardsunbrk}, $V_{\Tc'}(t)$ is $(2^{\OO(k)}, k)$-unbreakable.
Note that for applying \Cref{lem:expdownwardsunbrk}, we used that $\Tc$, $P$, $P_b$, and $P_s$ satisfy the invariants of \Cref{lem:downwlalgfinallem:inv2,lem:downwlalgfinallem:inv5,lem:downwlalgfinallem:inv6,lem:downwlalgfinallem:inv8,lem:downwlalgfinallem:inv4}

To prove \Cref{lem:downwlalgfinallem:inv6,lem:downwlalgfinallem:inv8} we observe that if $x \in P' \setminus (P_s' \cup P_b')$, then $x \in P \setminus (P_s \cup P_b)$ and $\torso_{\Tc'}(x) = \torso_{\Tc}(x)$.
For \Cref{lem:downwlalgfinallem:inv2} we observe that $\lmap'(\vec{xp'}) = \lmap(\vec{xp})$ for all $x \in P'$, where $p' = \parent_{T'}(x)$ and $p = \parent_{T}(x)$.
For \Cref{lem:downwlalgfinallem:inv5} we recall that contraction does not increase adhesion size.
For \Cref{lem:downwlalgfinallem:inv1}, it suffices to observe that if $x \in V(T') \setminus P'$, then $x \in V(T) \setminus P$, and $\lmap'(\vec{xp'}) = \lmap(\vec{xp})$, where $p' = \parent_{T'}(x)$ and $p = \parent_{T}(x)$.
For \Cref{lem:downwlalgfinallem:inv3}, the only node in $P'_s \cup P'_b$ for which the children in $T'$ are different than those in $T$ is the node $t$, which by construction does not have any children in $P$.
For \Cref{lem:downwlalgfinallem:inv4} we observe that if $x \in P_s'$, then $\torso_{\Tc'}(x) = \torso_{\Tc}(x)$.
\end{claimproof}

\paragraph{Case 3: $t \in P_b$.}
Now all children of $t$ are in $V(T) \setminus P$.
Let $p = \parent(t)$, and denote by $e_p$ the hyperedge of $\torso(t)$ corresponding to the adhesion at $pt$.
Note that $t$ has a parent because we assumed that $P \neq \{\troot(T)\}$.
We apply \Cref{lem:partitionintowlsets} to compute a partition $\compset = \{C_1,\ldots,C_h\}$ of $E(\torso(t)) \setminus \{e_p\}$ so that $h \le 2^{2k}$, and for each $i \in [h]$, $C_i$ is well-linked and has $\bdc(C_i) \le \bdc(e_p) \le 2k$.
This runs in time $2^{\OO(k)} \cdot \rank(\torso(t))^2 \cdot \|\torso(t)\| = 2^{\OO(k)} \cdot \|\torso(t)\|$ (note that $\rank(\torso(t)) \le \adhsize(\Tc) \le 2k$).

First, if $h=1$, then $E(\torso(t)) \setminus \{e_p\}$ is well-linked in $\torso(t)$, and therefore the invariant of \Cref{lem:downwlalgfinallem:inv1} and the fact that well-linkedness is transitive (\Cref{lem:linkedcliq}) implies that $\lmap(\vec{tp})$ is well-linked.
In this case, we simply remove $t$ from $P$ and $P_b$.
This maintains the invariants because the only extra property required for nodes not in $P$ compared to nodes in $P_b$ is \Cref{lem:downwlalgfinallem:inv1}, i.e., that $\lmap(\vec{tp})$ is well-linked.

Then assume $h \ge 2$, and let $\compset' = \{C \in \compset : |C| \ge 2\}$.
We have that for all $C \in \compset'$ it holds that $|C| \ge 2$, $|\co{C}| \ge 2$, and $\co{C} \notin \compset'$, and therefore $\compset'$ is a splitting family for $t$.
We transform $\Tc$ into $\Tc \rescliqs (t,\compset')$ in time $\OO(k \cdot \|\torso(t)\|)$ by \Cref{lem:superbdsplittingalgo}.
We insert the resulting node $t'$ into $P$ and $P_s$, and the resulting nodes $t_C$ for $C \in \compset'$ will not be in $P$.

\begin{claim}
This transformation maintains the invariants.
\end{claim}
\begin{claimproof}
Let us denote by $\Tc = (T, \lmap)$, $P$, $P_s$, and $P_b$ the decomposition and the sets before the transformation, and by $\Tc' = (T',\lmap')$, $P'$, $P_s'$, $P_b'$ after.
Denote by $t_C$ the nodes corresponding to $C \in \compset'$ resulting from the splitting, and denote by $t$ both the originally picked node $t \in P_b$ and the node resulting from the splitting operation.

Let us start proving the invariants with \Cref{lem:downwlalgfinallem:inv1}, as it is the most interesting here.
Consider $x \in V(T') \setminus P'$.
If $x \notin \{t_C : C \in \compset'\}$, then $x \in V(T) \setminus P$ and $\lmap'(\vec{xp}) = \lmap(\vec{xp})$, where $p$ is the parent of $x$ in $T$ (and $T'$).
If $x = t_C$ for some $C \in \compset'$, then $\lmap'(\vec{t_C t}) = C \orescliqs \Tc$.
Because $C$ is well-linked in $\torso_{\Tc}(t)$, and all children of $t$ in $\Tc$ are in $V(T) \setminus P$, \Cref{lem:downwlalgfinallem:inv1} and the transitivity of well-linkedness imply that $C \orescliqs \Tc$ is well-linked in $G$.

For \Cref{lem:downwlalgfinallem:inv6,lem:downwlalgfinallem:inv8} we have that if $x \in P' \setminus (P_s' \cup P_b')$, then $x \in P \setminus (P_s \cup P_b)$ and $\torso_{\Tc'}(x) = \torso_{\Tc}(x)$.
For \Cref{lem:downwlalgfinallem:inv2}, we observe that if $x \in P'$, then $x \in P$ and $\lmap'(\vec{xp}) = \lmap(\vec{xp})$, where $p$ is the parent of $x$ (in both $T$ and $T'$).
For \Cref{lem:downwlalgfinallem:inv5}, we have that all internal separations of $\Tc'$ are also internal separations of $\Tc$ or of form $(C \orescliqs \Tc, \co{C} \orescliqs \Tc)$ for $C \in \compset$.
Therefore, they have order at most $2k$.

For \Cref{lem:downwlalgfinallem:inv3}, let $x \in P_s' \cup P_b'$, and assume first that $x \neq t$.
Then, $x \in P_s \cup P_b$ and the status of the children of $x$ has not changed.
If $x = t$, then all children of $t$ are either children of $t$ in $T$, which were in $V(T) \setminus P$ and did not get inserted into $P$, or of form $t_C$, which also were not inserted into $P$.
For \Cref{lem:downwlalgfinallem:inv4}, we first observe that if $x \in P_s' \setminus \{t\}$, then $x \in P_s$ and $\torso_{\Tc'}(x) = \torso_{\Tc}(x)$.
Then, if $x = t$, the fact that $|\compset| \le 2^{2k}$ implies that $|E(\torso_{\Tc'}(t))| \le 2^{2k}+1$.
For \Cref{lem:downwlalgfinallem:inv7} we have that if $x \in P_b'$, then $x \in P_b$ and $\torso_{\Tc'}(x) = \torso_{\Tc}(x)$.
\end{claimproof}

\paragraph{Overall analysis.}
All of Cases 1, 2, and 3 run in time at most $2^{\OO(k)} \cdot \|\torso(t)\|$.
However, it is not immediately clear that the running times sum up to at most $2^{\OO(k)} \cdot \|\Tc_{\mathsf{init}}\|$, where $\Tc_{\mathsf{init}}$ is the initial superbranch decomposition, because we are editing the decomposition along the way.

To prove that the overall running time is $2^{\OO(k)} \cdot \|\Tc_{\mathsf{init}}\|$, we consider a potential function of form 
\[\Phi(\Tc) = (2^{2k}+2) \cdot \sum_{t \in P \setminus (P_b \cup P_s)} |E(\torso(t))| + \sum_{t \in P_b} |E(\torso(t))|.\]

In Case 1, the first sum decreases by $|E(\torso(t))|$ and the second sum increases by $|E(\torso(t))|$.
As the first sum has larger multiplier, $\Phi(\Tc)$ decreases by at least $|E(\torso(t))|$.
In Case 2, the first sum again decreases by $|E(\torso(t))|$, and the second sum increases by at most $(2^{2k}+1) \cdot |E(\torso(t))|$.
Therefore, as the first sum has multiplier $(2^{2k}+2)$, again $\Phi(\Tc)$ decreases by at least $|E(\torso(t))|$.
In Case 3, the first sum does not change, while the second sum decreases by $|E(\torso(t))|$, implying that $\Phi(\Tc)$ decreases by at least $|E(\torso(t))|$.

As this potential is initially bounded by $\Phi(\Tc) \le 2^{\OO(k)} \cdot \|\Tc\|$, the overall running time is $2^{\OO(k)} \cdot \|\Tc\|$.
\end{proof}

Now, to conclude with \Cref{the:highlevel:downwlalgo}, it remains to combine the above algorithms for transforming superbranch decompositions.

\thehighleveldownwlalgo*
\begin{proof}
By combining the algorithms of \Cref{lem:algdownwlsemicon,lem:algdownwlsemiconmult,lem:downwlalgfinallem}.
\end{proof}

\section{Mixed-$k$-well-linked decomposition}
\label{sec:mixedkwldecomp}
The goal of this section is to prove \Cref{the:highlevel:upwlalgo}, which we re-state now.

\thehighlevelupwlalgo*

Informally speaking, the goal after making the decomposition downwards well-linked would have been to make it also ``upwards well-linked''.
However, we have to settle for making it only upwards $k$-well-linked, and moreover, in the process the decomposition becomes ``scrambled'', so we cannot even say that it is upwards $k$-well-linked, but we still get mixed-$k$-well-linked internal separations and the extra property of the second bullet point above.
Note that in particular, the first two bullet points mean that $\Tc'$ can be partitioned into connected subtrees, so that each subtree can be assigned a root so that it is downwards well-linked and upwards $k$-well-linked with respect to that root, and the separations between the subtrees are doubly well-linked.

We first give an algorithm that will be used for partitioning the torsos of the input superbranch decomposition.
This is analogous to \Cref{lem:partitionintowlsets}, but targeting $k$-well-linkedness instead of well-linkedness, and having to work in a local manner instead of accessing the whole hypergraph.
This algorithm works by applying the algorithm of \Cref{lem:kwlubalg} repeatedly.

\begin{lemma}
\label{lem:algpartitiontokwlsets}
Let $G$ be a hypergraph whose representation is already stored, and let $s \ge k \ge 1$ be integers so that $G$ is $(s,k)$-unbreakable.
There is an algorithm that, given a set $A \subseteq E(G)$, in time $|A| \cdot s^{\OO(k + \rank(G) + \bdc(A))}$ returns either a multiplicity-$3$-violator in $G$, or a collection of $h \le 2^{\bdc(A)}-1$ sets $C_1, \ldots, C_h \subseteq \co{A}$, so that
\begin{itemize}
\item $|V(C_i)| < s$ and $|C_i| \le s^{\OO(\rank(G))}$ for $i \in [h]$,
\item $(C_1, \ldots, C_h, C_{h+1} = \co{A} \setminus (\bigcup_{i=1}^h C_i))$ is a partition of $\co{A}$ into $h+1$ non-empty sets,
\item $C_i$ is $k$-well-linked for all $i \in [h+1]$,
\item $\co{C_i}$ is linked into $A$ for all $i \in [h+1]$,
\item $\bdc(C_i) \le \bdc(A)$ for all $i \in [h+1]$, and
\item there is at most one $i \in [h+1]$ so that $\bdc(C_i) \ge k$.
\end{itemize}
\end{lemma}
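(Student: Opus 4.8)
Here I would closely follow the iterative splitting of \Cref{lem:partitionintowlsets}, with the twist that we cannot afford to touch all of $E(G)$: the large ``residual'' part is split only through the local routine of \Cref{lem:kwlubalg}, while the small pieces peeled off it are finished with the cheap routine of \Cref{lem:algffwl}. The algorithm maintains a partition of $\co{A}$ into \emph{finished} parts $C_1,\dots,C_j$ and one \emph{active} part $D$, with the invariants: (i) each $C_i$ is $k$-well-linked, $|V(C_i)|<s$, $|C_i|\le s^{\OO(\rank(G))}$, $\co{C_i}$ is linked into $A$, and $\bdc(C_i)\le\bdc(A)$; (ii) $\co{D}$ is linked into $A$ and $\bdc(D)\le\bdc(A)$; (iii) at most one of $C_1,\dots,C_j,D$ has border $\ge k$; (iv) $j+2^{\bdc(D)}\le 2^{\bdc(A)}$. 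Initially $j=0$, $D=\co{A}$, and we assume $\co{A}\ne\emptyset$ as in all applications.

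In each round I would call \Cref{lem:kwlubalg} with set parameter $\co{D}$ and $p=3$; by (i) and (iv), $|\co{D}|\le|A|+2^{\bdc(A)}s^{\OO(\rank(G))}$ and $\bdc(\co{D})\le\bdc(A)$, so the call costs $|A|\cdot s^{\OO(k+\rank(G)+\bdc(A))}$. If it returns a multiplicity-$3$-violator, return it. If it certifies $D$ is $k$-well-linked, output $C_1,\dots,C_j$ and $C_{h+1}:=D$ with $h:=j$; then (iv) gives $h\le 2^{\bdc(A)}-1$, the $k$-well-linkedness of $C_{h+1}$ is the certification, and the remaining output properties are (i)--(iii). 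Otherwise it returns $B_1\subsetneq D$ with $|V(B_1)|<s$ and $B_2:=D\setminus B_1\ne\emptyset$, with $\bdc(B_1),\bdc(B_2)<\bdc(D)$, $\min(\bdc(B_1),\bdc(B_2))<k$, and $\co{B_1},\co{B_2}$ linked into $\co{D}$, hence (by \Cref{lem:transitivityoflinkedness}, using $A\subseteq\co{D}\subseteq\co{B_i}$ and (ii)) linked into $A$. If $|B_1|>3(s+1)^{\rank(G)}$, then $B_1$ contains a multiplicity-$3$-violator, which we find by radix sort and return. Otherwise decompose $B_1$ into $k$-well-linked pieces by applying \Cref{lem:algffwl} repeatedly (replacing any piece that is not $k$-well-linked by the returned bipartition); since that bipartition's first side always has border $<\min(\bdc(\cdot),k)\le k$, the quantity $\sum 2^{\bdc(\cdot)}$ over the current pieces never increases, so at most $2^{\bdc(B_1)}$ pieces and splits occur and at most one final piece has border $\ge k$. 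Add all pieces of $B_1$ to the finished list, set $D:=B_2$, and repeat.

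Verifying that a round preserves the invariants is where the work is. Invariant (i) for the new pieces: each lies inside $B_1$ (so $|V|<s$ and $|\cdot|\le s^{\OO(\rank(G))}$), is $k$-well-linked by construction, has border $\le\bdc(B_1)<\bdc(D)\le\bdc(A)$, and has complement linked into $A$ by chaining \Cref{lem:transitivityoflinkedness} up the decomposition tree; invariant (ii) for $D=B_2$ is the transitivity argument above together with $\bdc(B_2)<\bdc(D)\le\bdc(A)$; invariant (iv) follows from $2^{\bdc(B_1)}+2^{\bdc(B_2)}\le 2^{\bdc(D)}$ and the fact that at most $2^{\bdc(B_1)}$ new pieces are created. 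The subtle case, which I expect to be the main obstacle, is invariant (iii) when $\bdc(D)\ge k$ (so $D$ was the unique large-border part) and the returned small-vertex side satisfies $\bdc(B_1)\ge k$: then $\bdc(B_2)<k$, so the ``unique large-border part'' role must transfer from the (now large but small-border) active set $B_2$ onto the single large-border piece of $B_1$ — which is legitimate precisely because $B_1$ is small enough to be fully re-decomposed by \Cref{lem:algffwl}, and that re-decomposition is guaranteed to leave at most one piece of border $\ge k$; in all other configurations of $(\bdc(D),\bdc(B_1))$ the unique large-border part either stays put or disappears. Finally, $j$ strictly increases each round (as $\bdc(B_2)<\bdc(D)$ forces $B_1\ne\emptyset$) and $j\le 2^{\bdc(A)}$ by (iv), so there are at most $2^{\bdc(A)}$ rounds; each round costs $|A|\cdot s^{\OO(k+\rank(G)+\bdc(A))}$ — dominated by the \Cref{lem:kwlubalg} call, the $\le 2^{\bdc(B_1)}$ applications of \Cref{lem:algffwl} to $B_1$ costing only $s^{\OO(\rank(G)+\bdc(A))}$ in total — which yields the claimed running time.
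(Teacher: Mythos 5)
Your proposal is correct and follows essentially the same route as the paper's proof: the same two subroutines in the same roles (\Cref{lem:kwlubalg} with $p$ a small constant on the complement of the large residual part, \Cref{lem:algffwl} on the peeled-off small pieces), the same potential $\sum 2^{\bdc(\cdot)}$ bounding the number of parts and rounds, the same transitivity-of-linkedness chaining via \Cref{lem:transitivityoflinkedness}, and the same bookkeeping for the at-most-one-large-border invariant. The only difference is organizational — you fully decompose each peeled piece $B_1$ before returning to the active part, whereas the paper interleaves this via a marking $M$ over all current parts — which does not change the argument.
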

\begin{proof}
We will maintain a collection of sets $C_1,\ldots,C_h \subseteq \co{A}$ and a marking $M \subseteq [h+1]$ so that
\begin{itemize}
\item $|V(C_i)| < s$ and $|C_i| \le 3 \cdot s^{\rank(G)}$ for $i \in [h]$,
\item $(C_1, \ldots, C_h, C_{h+1} = \co{A} \setminus (\bigcup_{i=1}^h C_i))$ is a partition of $\co{A}$ into $h+1$ non-empty sets,
\item $\sum_{i=1}^{h+1} 2^{\bdc(C_i)} \le 2^{\bdc(A)}$,
\item if $i \in M$, then $C_i$ is $k$-well-linked,
\item $\co{C_i}$ is linked into $A$ for all $i \in [h+1]$, and
\item there is at most one $i \in [h+1]$ so that $\bdc(C_i) \ge k$.
\end{itemize}

Note that the condition $\sum_{i=1}^{h+1} 2^{\bdc(C_i)} \le 2^{\bdc(A)}$ implies that $\bdc(C_i) \le \bdc(A)$ for all $i \in [h+1]$, and that $h+1 \le 2^{\bdc(A)}$.

Initially, the empty collection with $h=0$ and $M = \emptyset$ satisfies these requirements.
If $M = [h+1]$, then we can return the sets $C_1,\ldots,C_h$.

Therefore, assume then that $M \neq [h+1]$, and let $i \in [h+1] \setminus M$.

\paragraph{Case 1: $i = h+1$.}
We apply the algorithm of \Cref{lem:kwlubalg} with $\co{C_{h+1}} = A \cup \bigcup_{i=1}^h C_i$ and the parameter $p=4$.
It runs in time $(|A|+h \cdot s^{\OO(\rank(G))}) \cdot s^{\OO(k+\rank(G))} \cdot 2^{\OO(\bdc(A))} = |A| \cdot s^{\OO(k+\rank(G))} \cdot 2^{\OO(\bdc(A))}$.
If it concludes that $C_{h+1}$ is $k$-well-linked, we may insert $h+1$ to $M$, and we make progress by increasing $|M|$.
If it returns a multiplicity-$3$-violator, then we may immediately return it.

Otherwise, it returns a set $B \subseteq C_{h+1}$ so that $|V(B)| < s$, $\bdc(B) < \bdc(C_{h+1})$, $\bdc(C_{h+1} \setminus B) < \bdc(C_{h+1})$, either $\bdc(B) < k$ or $\bdc(C_{h+1} \setminus B) < k$, and both $\co{B}$ and $\co{C_{h+1} \setminus B}$ are linked into $\co{C_{h+1}}$.
If $|B| > 3 \cdot s^{\rank(G)}$, we find in time $s^{\OO(\rank(G))}$ a multiplicity-$3$-violator in $B$ and return it.
Then, we insert $B$ to the collection $C_1,\ldots,C_h$ of sets, adjusting $h$ and $M$ accordingly.
The first two invariants are clearly satisfied, and the third is implied by the fact that $\bdc(B) < \bdc(C_{h+1})$ and $\bdc(C_{h+1} \setminus B) < \bdc(C_{h+1})$.
The fourth invariant is clearly maintained.
For the fifth, we have that $\co{B}$ and $\co{C_{h+1} \setminus B}$ are linked into $\co{C_{h+1}}$ as guaranteed by \Cref{lem:kwlubalg}, and therefore by \Cref{lem:transitivityoflinkedness} they are linked into $A$.
The sixth holds because if $\bdc(C_{h+1}) \ge k$, then at most one of $\bdc(B) \ge k$ and $\bdc(C_{h+1} \setminus B) \ge k$ holds, and if $\bdc(C_{h+1}) < k$, then neither holds.
We make progress by increasing $h$.

\paragraph{Case 2: $i \le h$.}
We apply the algorithm of \Cref{lem:algffwl} to either conclude that $C_i$ is $k$-well-linked, or to return a bipartition $(B_1,B_2)$ of $C_i$ so that $\bdc(B_j) < \bdc(C_i)$ for both $j \in [2]$, $\bdc(B_1) < k$, and $\co{B_j}$ is linked into $\co{C_i}$ for both $j \in [2]$.
If $C_i$ is $k$-well-linked, we simply insert $i$ into $M$, making progress by increasing $|M|$, and in the latter case we replace $C_i$ by $B_1$ and $B_2$, adjusting $h$ and $M$ accordingly.
The invariants can be proven to hold similarly as in Case 1.
In the latter case we make progress by increasing $h$.

As each of the cases increases $h$ or $|M|$, and they never decrease and are bounded by $2^{\bdc(A)}$, this process terminates after $\OO(2^{\bdc(A)})$ iterations, so also the total running time is $|A| \cdot s^{\OO(k+\rank(G)+\bdc(A))}$.
\end{proof}

We then need a lemma that asserts that the partition given by \Cref{lem:algpartitiontokwlsets} satisfies certain properties when \Cref{lem:algpartitiontokwlsets} is applied to a torso of a node of a superbranch decomposition satisfying certain properties.

\begin{lemma}
\label{lem:colsetslinkedargument}
Let $\Tc = (T,\lmap)$ be a superbranch decomposition of a hypergraph $G$, $k \ge 1$ an integer, and $t \in \vint(T)$ so that for all $u \in N_T(t)$ the set $\lmap(\vec{ut})$ is $k$-better-linked.
Let also $x \in N_T(t)$ so that $\lmap(\vec{xt})$ is well-linked.
Then, let $(C_1, \ldots, C_h)$ be a partition of $E(\torso(t)) \setminus \{e_x\}$ into non-empty sets so that for all $i \in [h]$,
\begin{itemize}
\item $C_i$ is $k$-well-linked in $\torso(t)$,
\item $\co{C_i}$ is linked into $\{e_x\}$ in $\torso(t)$,
\end{itemize}
and $\bdc(C_i) < k$ for all $i < h$.
Then, for all $i \in [h]$,
\begin{itemize}
\item $C_i \orescliqs \Tc$ is $k$-well-linked in $G$,
\item $\co{C_i} \orescliqs \Tc$ is well-linked in $G$.
\end{itemize}
\end{lemma}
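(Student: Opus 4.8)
Throughout, write $\compset=\{\lmap(\vec{st}) : s\in N_T(t)\}$, so that the sets in $\compset$ are pairwise disjoint (leaves on different sides of $t$) and $\torso(t)=G\rescliqs\compset$. The plan is to reconstruct $G$ from $\torso(t)$ by expanding the hyperedges $e_s$ back into the sets $\lmap(\vec{st})$ one at a time, tracking how $k$-well-linkedness (for the first bullet) and linkedness (for the second bullet) survive each expansion. The basic observation I would record first is that for distinct $s,s'\in N_T(t)$ and any $Y\subseteq \lmap(\vec{st})$ we have $\bd_{G\rescliqs\lmap(\vec{s't})}(Y)=\bd_G(Y)$, since $Y$ is disjoint from $\lmap(\vec{s't})$ and $\bd_G(B\orescliqs A)=\bd_{G\rescliqs A}(B)$. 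Consequently restricting along $\lmap(\vec{s't})$ preserves well-linkedness and $k$-well-linkedness of $\lmap(\vec{st})$, and also $k$-better-linkedness of $\lmap(\vec{st})$ (its complement in $G\rescliqs\lmap(\vec{s't})$ is exactly the image under $\orescliqs$ of its complement in $G$, so the bipartition characterisation carries over, cf.\ \Cref{lem:uncrosswlappl}). Iterating, when we expand $\torso(t)=G\rescliqs\compset$ back to $G$ in any order, at every intermediate stage the separation we reintroduce is $k$-better-linked, and the one at $x$ is well-linked, in the current hypergraph.

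For the first bullet I would argue exactly as follows. Since $C_i$ is $k$-well-linked in $\torso(t)=G\rescliqs\compset$, I apply \Cref{lem:kwltransitive} once for each set of $\compset$ as we expand $\torso(t)$ back to $G$: by the observation above the set being factored out is $k$-better-linked in the current hypergraph, and the running image of $C_i$ stays $k$-well-linked. The final image is $C_i\orescliqs\Tc$, which is therefore $k$-well-linked in $G$. In particular, for $i<h$ we get $\bdc(C_i\orescliqs\Tc)=\bdc_{\torso(t)}(C_i)=\bdc(C_i)<k$, so $C_i\orescliqs\Tc$ is then even well-linked; this observation will feed into the second bullet.

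For the second bullet, first note that $\{e_x\}$ is well-linked in $\torso(t)$ (every singleton is), $\{e_x\}\subseteq\co{C_i}$, and $\co{C_i}$ is linked into $\{e_x\}$, so by \Cref{lem:linkedcombwelllinked} the set $\co{C_i}$ is well-linked in $\torso(t)$. Now suppose $i<h$, so $\bdc(\co{C_i})=\bdc(C_i)<k$. I lift the linkedness ``$\co{C_i}$ linked into $\{e_x\}$'' by expanding $\torso(t)$ back to $G$, \emph{expanding the edge at $x$ first} and then the remaining edges, applying \Cref{lem:linkednessexpand} at each step: the border of the running image of $\co{C_i}$ stays equal to $\bdc(C_i)<k$ (since $\bdc$ is invariant under $\orescliqs$), and the set being factored out is $k$-better-linked, so the hypotheses are met; after $\lmap(\vec{xt})$ has been expanded the ``small'' set is $\lmap(\vec{xt})$, and it stays $\lmap(\vec{xt})$ thereafter because it is disjoint from the remaining expansions. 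This yields that $\co{C_i}\orescliqs\Tc$ is linked into $\lmap(\vec{xt})$ in $G$, and since $\lmap(\vec{xt})$ is well-linked in $G$ and contained in $\co{C_i}\orescliqs\Tc$, a final application of \Cref{lem:linkedcombwelllinked} gives that $\co{C_i}\orescliqs\Tc$ is well-linked in $G$.

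The main obstacle is the case $i=h$, where $\bdc(C_h)$ may be at least $k$: then \Cref{lem:linkednessexpand} (which requires the lifted set to have border $<k$) no longer applies, and lifting the well-linkedness of $\co{C_h}$ by transitivity (\Cref{lem:linkedcliq}) would require each $\lmap(\vec{st})$ to be well-linked rather than merely $k$-better-linked. The plan is to treat this case by observing that, since $C_h$ is $k$-well-linked and $\co{C_h}$ is well-linked in $\torso(t)$, the separation $(C_h,\co{C_h})$ is mixed-$k$-well-linked in $\torso(t)$, and then to prove the uncrossing statement that mixed-$k$-well-linkedness is preserved under a single expansion through a $k$-better-linked separation: the $k$-well-linked side is handled by \Cref{lem:kwltransitive} exactly as in the first bullet, and for the well-linked side one argues by submodularity, splitting the factored-out set $D=\lmap(\vec{st})$ across a hypothetical well-linkedness violation $(E_1,E_2)$ of the lifted $\co{C_h}$; using $k$-well-linkedness of $D$, the subcases $\bdc(D\cap E_1)\ge\bdc(D)$ and $\bdc(D\cap E_2)\ge\bdc(D)$ each yield a violation of the well-linkedness of $\co{C_h}$ in $\torso(t)$ (after pushing $D$ entirely onto one side of the bipartition and translating back through $\rescliqs D$), and the remaining subcase $\bdc(D\cap E_1),\bdc(D\cap E_2)\ge k$ is the delicate one, where one exploits the well-linkedness of $\co{D}=\lmap(\vec{ts})$ (given by $k$-better-linkedness of $D$) together with the fact that on the $\torso(t)$-side $\co{C_h}$ is genuinely well-linked and linked into $\{e_x\}$. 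Composing this over all incident edges of $t$ then gives that $\co{C_h}\orescliqs\Tc$ is well-linked in $G$, finishing the proof.
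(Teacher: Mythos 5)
Your treatment of the first bullet and of the indices $i<h$ in the second bullet is exactly the paper's argument: iterate \Cref{lem:kwltransitive} through the $k$-better-linked sets $\lmap(\vec{ut})$ for the $k$-well-linkedness of $C_i\orescliqs\Tc$, then use \Cref{lem:linkednessexpand} (valid because $\bdc(\co{C_i})=\bdc(C_i)<k$) to lift ``linked into $\{e_x\}$'' to ``linked into $\lmap(\vec{xt})$'' and finish with \Cref{lem:linkedcombwelllinked}. That part is fine.

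The case $i=h$, however, is where your proposal has a genuine gap. You correctly identify that neither \Cref{lem:linkednessexpand} nor \Cref{lem:linkedcliq} applies, and you propose instead to prove that mixed-$k$-well-linkedness survives a single expansion through a $k$-better-linked set; but the subcase you yourself flag as delicate ($\bdc(D\cap E_1),\bdc(D\cap E_2)\ge k$) is precisely the crux, and ``one exploits the well-linkedness of $\co{D}$ together with the fact that $\co{C_h}$ is well-linked and linked into $\{e_x\}$'' is not an argument --- no uncrossing is exhibited, and it is not clear that such a per-edge preservation lemma is even true without the extra structure present here. The paper sidesteps this entirely using a fact you have already established: each $C_i\orescliqs\Tc$ with $i<h$ is well-linked in $G$ of order $<k$. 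It therefore passes to the intermediate hypergraph $G'=G\rescliqs(C_1\orescliqs\Tc,\ldots,C_{h-1}\orescliqs\Tc)$, in which the image $D$ of $\co{C_h}$ consists of $\lmap(\vec{xt})$ together with the $h-1$ contracted hyperedges $e_{C_i\orescliqs\Tc}$. The sets $X$ with $\lmap(\vec{xt})\subseteq X\subseteq D$ correspond to a subfamily of the sets $Y$ with $\{e_x\}\subseteq Y\subseteq\co{C_h}$ in $\torso(t)$ with equal borders, so ``$\co{C_h}$ linked into $\{e_x\}$'' transfers for free to ``$D$ linked into $\lmap(\vec{xt})$ in $G'$''; \Cref{lem:linkedcombwelllinked} then makes $D$ well-linked in $G'$, and plain transitivity of well-linkedness (\Cref{lem:linkedcliq}) through the well-linked sets $C_i\orescliqs\Tc$ yields that $\co{C_h}\orescliqs\Tc$ is well-linked in $G$. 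You should either adopt this route or actually carry out the delicate subcase of your proposed lemma; as written, the claim for $i=h$ is unproven.
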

\begin{proof}
First, because being $k$-well-linked is transitive along $k$-better-linked sets (\Cref{lem:kwltransitive}), and each set $\lmap(\vec{ut})$ for $u \in N_T(t)$ is $k$-better-linked, it follows that each $C_i \orescliqs \Tc$ is $k$-well-linked.

Then, because $\bdc(\co{C_i}) < k$ for all $i < h$, and for each $u \in N_T(t)$ we have that $\lmap(\vec{ut})$ is $k$-better-linked, we get by \Cref{lem:linkednessexpand} that $\co{C_i} \orescliqs \Tc$ is linked into $\lmap(\vec{xt})$ in $G$.
Now, \Cref{lem:linkedcombwelllinked} implies that $\co{C_i} \orescliqs \Tc$ is well-linked in $G$ for all $i<h$.

It remains to argue that $\co{C_h} \orescliqs \Tc$ is well-linked in $G$.
For this, the challenge is that it can be that $\bdc(\co{C_h}) \ge k$, so we cannot directly use \Cref{lem:linkednessexpand} to argue that $\co{C_h} \orescliqs \Tc$ is linked into $\lmap(\vec{xt})$ in $G$.
However, we can do this indirectly by using the fact that each $(C_i \orescliqs \Tc, \co{C_i} \orescliqs \Tc)$ for $i<h$ has order $<k$ and is mixed-$k$-well-linked, implying that it is in fact doubly well-linked.

In particular, let $G' = G \rescliqs (C_1 \orescliqs \Tc, \ldots, C_{h-1} \orescliqs \Tc)$.
Let also $U = C_h \orescliqs \Tc$ and $D = (E(G) \setminus U) \rescliqs (C_1 \orescliqs \Tc, \ldots, C_{h-1} \orescliqs \Tc)$.
Note that $(U,D)$ is a separation of $G'$ corresponding to the separation $(C_h, \co{C_h})$ of $\torso(t)$.
Because $\co{C_h}$ is linked into $\{e_x\}$ in $\torso(t)$, we have that $D$ is linked into $\lmap(\vec{xt})$ in $G'$.

Because $\lmap(\vec{xt})$ is well-linked in $G$ and thus in $G'$, by \Cref{lem:linkedcombwelllinked} this implies that $D$ is well-linked in $G'$.
Now, each $C_i \orescliqs \Tc$ for $i \in [h-1]$ is well-linked in $G$ (because it is $k$-well-linked and has $\bdc(C_i)<k$), and therefore because well-linkedness is transitive (\Cref{lem:linkedcliq}), this implies that $D \orescliqs (C_1 \orescliqs \Tc, \ldots, C_{h-1} \orescliqs \Tc) = \co{C_h} \orescliqs \Tc$ is well-linked in $G$.
\end{proof}

We also need to handle the case when \Cref{lem:algpartitiontokwlsets} outputs a multiplicity-$3$-violator, and for this we also need a lemma analogous to \Cref{lem:colsetslinkedargument} but simpler, which we prove next.

\begin{lemma}
\label{lem:multipvioldwl}
Let $\Tc = (T,\lmap)$ be a superbranch decomposition of a graph $G$.
Let $t \in \vint(T)$ so that for all $v \in N_T(t)$, except at most one, $\lmap(\vec{vt})$ is well-linked in $G$.
Let $X$ be a multiplicity-$3$-violator in $\torso(t)$.
If $C \subseteq X$ and $|C| = 2$, then $(C \orescliqs \Tc, \co{C} \orescliqs \Tc)$ is a doubly-well-linked separation of $G$.
\end{lemma}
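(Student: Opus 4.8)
The plan is to reduce the claim to two facts about the separation $(C \orescliqs \Tc, \co{C} \orescliqs \Tc)$: that $C \orescliqs \Tc$ is well-linked in $G$, and that $\co{C} \orescliqs \Tc$ is well-linked in $G$. Since $|C| = 2$ and $C \subseteq X$ is part of a multiplicity-$3$-violator, the two hyperedges $e_1, e_2 \in C$ satisfy $V(e_1) = V(e_2) =: Z$, and since $\torso(t)$ is normal (being a torso of an internal node), $\bd(C) = Z$ as well, so $C$ is internally connected with $\bdc_{\torso(t)}(C) = |Z|$. The key point is that $C$ itself, as a two-element set of parallel hyperedges, is trivially well-linked in $\torso(t)$: any bipartition $(B_1, B_2)$ of $C$ has one part equal to one of $\{e_1\}, \{e_2\}, \emptyset, C$, and in every case $\bdc(B_i) \ge |Z| = \bdc(C)$ for at least one $i$ (either one of the $B_i$ equals $C$, or one equals a singleton $\{e_j\}$ with $\bd(e_j) = Z$). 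So $C$ is well-linked in $\torso(t)$.

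For the other side, $\co{C}$ in $\torso(t)$ is $E(\torso(t)) \setminus \{e_1, e_2\}$ plus nothing; I want to see that $\co{C}$ is well-linked in $\torso(t)$. Here I would use the hypothesis that at most one $v \in N_T(t)$ has $\lmap(\vec{vt})$ not well-linked. First note $\bd_{\torso(t)}(\co{C}) = \bd(C) = Z$, since the vertices of $e_1, e_2$ all appear in other hyperedges of $\torso(t)$ (normality, and $|E(\torso(t))| \ge 3$). Now I would argue that the hyperedge $e_x \in E(\torso(t))$ corresponding to the unique possibly-non-well-linked neighbor (if it exists) lies in $\co{C}$ — it does, since $e_x \ne e_1, e_2$ whenever $C$ is a multiplicity-violator among hyperedges other than... actually we must be slightly careful: it could be that one of $e_1, e_2$ equals $e_x$. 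In that case, at most one of $e_1, e_2$ is "bad", and I would note that even so, $C = \{e_1, e_2\}$ remains well-linked by the trivial argument above (that argument used only $V(e_1) = V(e_2)$, not any well-linkedness of $\lmap(\vec{v t})$). For the well-linkedness of $\co{C}$ I would then invoke \Cref{lem:uncrosswlappl}: given $B_1, B_2 \subseteq \co{C}$ with $B_1 \cup B_2 = \co{C}$ and $\bdc(B_i) < \bdc(\co{C}) = |Z|$ for both $i$, at least one $B_i$ avoids (or rather, excludes from its border) all vertices of $Z$, which — combined with the fact that every vertex of $Z$ lies in $\co{C}$ via $\ge 1$ hyperedge other than $e_1,e_2$, and submodularity/posimodularity — contradicts $\bd_{\torso(t)}(\co{C}) = Z$. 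Actually the cleaner route: $\co{C}$ is well-linked in $\torso(t)$ iff every $(Z', Z'')$-like split is dominated, and this follows because $\co{C}$, viewed inside $\torso(t)$, is linked into... Let me instead just cite the transitivity machinery.

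So the main line is: $(C, \co{C})$ is a doubly-well-linked separation of $\torso(t)$ (proving $C$ well-linked is trivial from parallel hyperedges; proving $\co{C}$ well-linked uses the single-bad-neighbor hypothesis together with \Cref{lem:uncrosswlappl} or the observation that $\bd(\co C) = \bd(C)$ and \Cref{lem:linkednesscornercase}). Then, to lift this to $G$, I would apply the transitivity of well-linkedness (\Cref{lem:linkedcliq}) iteratively along the tree edges incident to $t$: write $\torso(t) = G \rescliqs \{\lmap(\vec{st}) : s \in N_T(t)\}$, note each $\lmap(\vec{st})$ with $s$ a "good" neighbor is well-linked in $G$, and the one bad one — if it lies outside $C$ and outside $\co C$... but it is always inside exactly one of $C \orescliqs \Tc$, $\co C \orescliqs \Tc$ after expansion. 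Here is the subtlety: to conclude $C \orescliqs \Tc$ well-linked in $G$ from $C$ well-linked in $\torso(t)$ via \Cref{lem:linkedcliq}, I need the sets I'm un-$\rescliqs$-ing to be well-linked, i.e. the neighbors $s$ with $\lmap(\vec{st}) \subseteq C \orescliqs \Tc$; the one bad neighbor $e_x$ lies in $\co C$ (if $e_x \notin C$), so it is not a problem when lifting $C$, and when lifting $\co C$ it is the hyperedge playing the role of $e_A$ itself in the transitivity lemma, so again no well-linkedness of $\lmap(\vec{xt})$ is needed — exactly as in the proof of \Cref{lem:colsetslinkedargument}. The remaining case $e_x \in C$ gives $|C \cap \{\text{bad}\}| = 1$, and then when lifting $C$ we need $\lmap(\vec{xt})$ well-linked, which fails — but here $\co C$ contains only good neighbors so the bad one sits in $C$, and we instead argue via the same trick with roles of $C$ and $\co C$ swapped is not available...

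\textbf{Expected main obstacle.} The genuinely delicate point is the bookkeeping of \emph{which} side the (at most one) non-well-linked boundary hyperedge $\lmap(\vec{xt})$ falls on and whether that interferes with the transitivity lifting. I expect to handle it by the same device used in \Cref{lem:colsetslinkedargument}: when $e_x$ is inside the side currently being "expanded first", it plays the role of the un-$\rescliqs$-ed hyperedge $e_A$ in \Cref{lem:linkedcliq}, for which no well-linkedness hypothesis on $\lmap(\vec{xt})$ is required; expand that side first, then all remaining hyperedges to be expanded are well-linked, so \Cref{lem:linkedcliq} applies. Since the roles of $C$ and $\co C$ are symmetric here, one can always arrange this ordering for \emph{both} sides — expanding $e_x$'s block first in each case — which is exactly why the single-bad-neighbor hypothesis (rather than all-neighbors-good) suffices. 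Everything else is a routine application of \Cref{lem:linkedcliq}, \Cref{lem:uncrosswlappl}, and normality of torsos.
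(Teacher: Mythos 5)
There is a genuine gap in the lifting step, and it sits exactly where you flagged your "main obstacle." Your plan is to prove $(C,\co{C})$ doubly well-linked in $\torso(t)$ and then push this down to $G$ by iterated applications of \Cref{lem:linkedcliq} along the neighbors of $t$. But \Cref{lem:linkedcliq} dispenses with the well-linkedness of the expanded set $A$ only in the case $e_A \notin B$; whenever $e_A$ belongs to the set $B$ whose well-linkedness you are propagating, the lemma genuinely needs $A$ well-linked. If the (possibly) non-well-linked neighbor $x$ has $e_x \in C$, then at the step where $\lmap(\vec{xt})$ is expanded the tracked set still contains $e_x$, no matter in which order you perform the expansions — so your device of "expanding $e_x$'s block first, for which no well-linkedness hypothesis is required" rests on a misreading of \Cref{lem:linkedcliq} and does not close the case. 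Since $e_x$ always lies on exactly one of the two sides, the transitivity route fails for that side.

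The paper avoids $\torso(t)$ entirely and argues directly in $G$. Because $X$ is a multiplicity-\emph{3}-violator it has four hyperedges with a common vertex set $Z$, so each of $C$ and $\co{C}$ contains at least two of them, hence at least one $e_v \in X$ with $v$ a \emph{good} neighbor. For that $v$ one has $\lmap(\vec{vt}) \subseteq C' \orescliqs \Tc$ and $\bd(\lmap(\vec{vt})) = Z = \bd(C' \orescliqs \Tc)$, so \Cref{lem:linkednesscornercase} gives that $C' \orescliqs \Tc$ is linked into the well-linked set $\lmap(\vec{vt})$, and \Cref{lem:linkedcombwelllinked} concludes. You do cite both of these lemmas, but only inside $\torso(t)$; applying them at the level of $G$, with $A = \lmap(\vec{vt})$ for a good $v$, is what makes the single-bad-neighbor hypothesis suffice and is the step your write-up is missing. (A minor additional point: your justification of $\bd(C) = Z$ via "normality and $|E(\torso(t))| \ge 3$" is not by itself sufficient — a vertex of $Z$ could a priori lie only in $e_1,e_2$; what saves you is that the other two hyperedges of $X$ sit in $\co{C}$ and also carry all of $Z$.)
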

\begin{proof}
Consider $C' = C$ or $C' = \co{C}$.
In both cases, $|C'| \ge 2$ implies that there exists $v \in N_T(t)$ so that $e_v \in C' \cap X$ and $\lmap(\vec{vt})$ is well-linked in $G$.
Because $\lmap(\vec{vt}) \subseteq C' \orescliqs \Tc$ and $\bd(\lmap(\vec{vt})) = \bd(C' \orescliqs \Tc)$, by \Cref{lem:linkednesscornercase}, $C' \orescliqs \Tc$ is linked into $\lmap(\vec{vt})$, and therefore by \Cref{lem:linkedcombwelllinked}, $C'$ is well-linked.
\end{proof}

Now we are ready to prove \Cref{the:highlevel:upwlalgo}.

\thehighlevelupwlalgo*
\begin{proof}
Let $s = 2^{\OO(k)}$ so that the input superbranch decomposition $\Tc = (T,\lmap)$ is $(s,k)$-unbreakable.
Let us also assume that $\troot(T) \in \vint(T)$, because if the root is a leaf, we can move it to an adjacent internal node.

We will do a process that maintains a rooted superbranch decomposition $\Tc = (T,\lmap)$ of $G$ and a prefix $P \subseteq \vint(T)$ of $T$, satisfying the following invariants.

\begin{enumerate}
\item if $u \in \vint(T) \setminus P$ and $p = \parent(u)$, then $\lmap(\vec{up})$ is well-linked,\label{the:highlevelupwlalgo:inv2}
\item if $u \in \vint(T) \setminus P$, then $V(u)$ is $(s, k)$-unbreakable.\label{the:highlevelupwlalgo:inv6}
\item $\adhsize(\Tc) \le 2k$,\label{the:highlevelupwlalgo:inv4}
\item if $uv \in E(T)$ and $u,v \in P$, then the separation $(\lmap(\vec{uv}), \lmap(\vec{vu}))$ of $G$ is mixed-$k$-well-linked,\label{the:highlevelupwlalgo:inv1}
\item if $u \in P$, then all but at most one of the sets $\lmap(\vec{vu})$ for $v \in N_T(u)$ are well-linked,\label{the:highlevelupwlalgo:inv3}
\item if $u \in P$, then $V(u)$ is $(s + 2k \cdot (2^{2k}+1), k)$-unbreakable, and\label{the:highlevelupwlalgo:inv5}
\end{enumerate}

Note that the initial input rooted superbranch decomposition $\Tc$ and the set $P = \{\troot(T)\}$ satisfy the invariants.
Also, if $P = \vint(T)$ and the invariants are satisfied, we can return $\Tc$.

Let us denote $s' = s + 2k \cdot (2^{2k}+1) \le 2^{\OO(k)}$.

Let us observe that our invariants imply that if $u \in P$ and $v \in N_T(u)$, then $\lmap(\vec{vu})$ is $k$-better-linked.

Now, as long as $P \neq \vint(T)$, we repeatedly do the following process.
Choose $t \in \vint(T)$ so that $t \notin P$ but $p = \parent(t) \in P$.

\paragraph{Case 1: Small parent.}
First, assume that $|E(\torso(p))| \le 2^{2k}+1$.
Note that we can test this in $2^{\OO(k)}$ time by iterating over $E(\torso(p))$.
In this case, we simply contract $p$ and $t$ together, i.e., transform $\Tc$ into $\Tc \contr tp$.
This can be done in $\OO(\|\torso(p)\|) = 2^{\OO(k)}$ time by \Cref{lem:superbdcontractalgo}.
We insert the node introduced by the contraction to the set $P$.
This decreases $|V(T)|$ while keeping $|P|$ the same.

\begin{claim}
This transformation maintains the invariants.
\end{claim}
\begin{claimproof}
Denote by $\Tc = (T,\lmap)$ the superbranch decomposition before the transformation and by $\Tc' = (T',\lmap') = \Tc \contr tp$ the superbranch decomposition after.
Use similarly $P$ and $P'$.
Denote by $t'$ the node of $\Tc'$ corresponding to the contraction of $tp$.

To prove \Cref{the:highlevelupwlalgo:inv2}, suppose $u \in \vint(T') \setminus P'$ and $g = \parent_{T'}(u)$.
Note that then, $u \in \vint(T) \setminus P$.
If $g \neq t'$, then $\lmap'(\vec{ug}) = \lmap(\vec{ug})$ and therefore is well-linked.
If $g = t'$, then $u$ was either a child of $p$ or a child of $t$ in $\Tc$, and therefore either $\lmap'(\vec{ug}) = \lmap(\vec{up})$ or $\lmap'(\vec{ug}) = \lmap(\vec{ut})$, both of which are well-linked by the invariant.

For \Cref{the:highlevelupwlalgo:inv5,the:highlevelupwlalgo:inv6}, we first observe that all other torsos of $\Tc'$ except $\torso(t')$ are also torsos of $\Tc$.
Because $|E(\torso(p))| \le 2^{2k}+1$ and $\adhsize(\Tc) \le 2k$, we have that $|V(\torso(p))| \le 2k \cdot (2^{2k}+1)$.
Now, as $V(\torso(t))$ is $(s,k)$-unbreakable, we get that $V(\torso(t')) = V(\torso(t)) \cup V(\torso(p))$ is $(s+2k \cdot (2^{2k}+1), k)$-unbreakable.

\Cref{the:highlevelupwlalgo:inv4} follows trivially from the fact that contraction does not increase adhesion size.
For \Cref{the:highlevelupwlalgo:inv1}, we observe that all internal separations corresponding to adhesions between two nodes of $\Tc'$ in $P'$ correspond to internal separations between two nodes of $\Tc$ in $P$, and therefore are mixed-$k$-well-linked.

To prove \Cref{the:highlevelupwlalgo:inv3}, consider first $u \in P' \setminus \{t'\}$.
We have that the collection of sets $\{\lmap'(\vec{vu})\}_{v \in N_{T'}(u)}$ is the same as the collection of sets $\{\lmap(\vec{vu})\}_{v \in N_{T}(u)}$, and therefore the invariant continues to hold.
Then consider $u = t'$.
We have that $\{\lmap'(\vec{vt'})\}_{v \in N_{T'}(t')} = \{\lmap(\vec{vt})\}_{v \in N_{T}(t) \setminus \{p\}} \cup \{\lmap(\vec{vp})\}_{v \in N_{T}(p) \setminus \{t\}}$, and therefore as $\lmap(\vec{vt})$ for the first set is always well-linked by \Cref{the:highlevelupwlalgo:inv2}, and for the second set always except at most once by \Cref{the:highlevelupwlalgo:inv3}, we get that the invariant is maintained.
\end{claimproof}

\paragraph{Case 2: Large parent.}
Now assume $|E(\torso(p))| \ge 2^{2k}+2$.
We apply the algorithm of \Cref{lem:algpartitiontokwlsets} with the hypergraph $\torso(p)$, the set $A = \{e_t\}$, and $p = 4$, to either return a multiplicity-$3$-violator in $\torso(p)$, or a collection of $h \le 2^{2k}-1$ sets $C_1,\ldots,C_h \subseteq E(\torso(p))$ satisfying certain properties.
This runs in time ${s'}^{\OO(k+\rank(\torso(t)))} \cdot 2^{\OO(k)} = 2^{\OO(k^2)}$.

\paragraph{Case 2.1: Empty collection of sets.}
Assume first that the algorithm of \Cref{lem:algpartitiontokwlsets} returned the empty collection of sets, i.e., $h=0$.
This implies that $E(\torso(p)) \setminus \{e_t\}$ is $k$-well-linked in $\torso(p)$.
Because $k$-well-linkedness is transitive along $k$-better-linked sets (\Cref{lem:kwltransitive}) and $p \in P$, this implies that $\lmap(\vec{pt})$ is $k$-well-linked in $G$.
In this case, we simply insert $t$ into $P$, without editing~$\Tc$.

\begin{claim}
This transformation maintains the invariants.
\end{claim}
\begin{claimproof}
\Cref{the:highlevelupwlalgo:inv2,the:highlevelupwlalgo:inv6} is only weaker for the new $P$, and \Cref{the:highlevelupwlalgo:inv4} holds trivially.
To prove \Cref{the:highlevelupwlalgo:inv1}, the only new separation to consider is $(\lmap(\vec{tp}), \lmap(\vec{pt}))$, and for it we just argued that $\lmap(\vec{pt})$ is $k$-well-linked in $G$, and by \Cref{the:highlevelupwlalgo:inv2} $\lmap(\vec{tp})$ is well-linked.
\Cref{the:highlevelupwlalgo:inv3} for $t$ follows from \Cref{the:highlevelupwlalgo:inv2} and \Cref{the:highlevelupwlalgo:inv5} follows from \Cref{the:highlevelupwlalgo:inv6}.
\end{claimproof}

\paragraph{Case 2.2: Non-empty collection of sets.}
Assume then that \Cref{lem:algpartitiontokwlsets} returned the collection of sets $C_1,\ldots,C_h \subseteq E(\torso(p))$ and $h \ge 1$.
By combining the guarantees given by \Cref{lem:algpartitiontokwlsets,lem:colsetslinkedargument}, we get that these sets satisfy the following properties:
\begin{itemize}
\item $|V(C_i)| < s'$ and $|C_i| \le {s'}^{\OO(k)}$ for all $i \in [h]$,
\item $\compset = \{C_1, \ldots, C_h, C_{h+1} = E(\torso(p)) \setminus (\{e_t\} \cup \bigcup_{i=1}^h C_i)\}$ is a partition of $E(\torso(p)) \setminus \{e_t\}$ into $h+1 \le 2^{2k}$ non-empty sets,
\item $\bdc(C_i) \le \bdc(\{e_t\})$ for all $i \in [h+1]$ and there is at most one $i \in [h+1]$ so that $\bdc(C_i) \ge k$,
\item $C_i \orescliqs \Tc$ is $k$-well-linked in $G$ for all $i \in [h+1]$, and
\item $\co{C_i} \orescliqs \Tc$ is well-linked in $G$ for all $i \in [h+1]$.
\end{itemize}

Denote $\compset' = \compset \cup \{\{e_t\}\}$.
Now, $\compset'$ is a partition of $E(\torso(p))$ into at least $3$ and at most $2^{2k}+1$ non-empty sets, with $\bdc(C) \le 2k$ for all $C \in \compset'$.
As $|E(\torso(p))| \ge 2^{2k}+2$, there exists at least one $C \in \compset'$ with $|C| \ge 2$.
We have that $\compset'' = \{C_1, \ldots, C_h, \{e_t\}\}$ is an implicit representation of the splitting partition $\compset'$.

We use \Cref{lem:superbdsplittingalgoimplicit} with $\compset''$ to transform $\Tc$ into $\Tc \rescliqs (p,\compset')$, and insert all the newly created nodes into $P$.
This runs in time $2^{\OO(k)} \cdot {s'}^{\OO(k)} \cdot \adhsize(\Tc) = 2^{\OO(k^2)}$.

Because there exists at least one $C \in \compset'$ with $|C| \ge 2$, this increases the number of nodes of $T$, and also increases the number of nodes in $P$ by the same number.

\begin{claim}
This transformation maintains the invariants.
\end{claim}
\begin{claimproof}
We denote by $\Tc = (T,\lmap)$ the superbranch decomposition before the transformation and by $\Tc' = (T',\lmap') = \Tc \rescliqs (p,\compset')$ the superbranch decomposition after.
We use similarly $P$ and $P'$.
We denote by $p_C$ for $C \in \compset$ the newly created nodes corresponding to $\torso(p) \rescliqs \co{C}$ for $C \in \compset$ with $|C| \ge 2$, and by $p'$ the newly created node corresponding to $\torso(p) \rescliqs \compset'$.

To prove \Cref{the:highlevelupwlalgo:inv2}, let $u \in \vint(T') \setminus P'$ and $g = \parent_{T'}(u)$.
Then, $u \in \vint(T) \setminus P$ and $g' = \parent_{T}(u) \in P$, and we have that $\lmap'(\vec{ug}) = \lmap(\vec{ug'})$, and therefore it is well-linked by \Cref{the:highlevelupwlalgo:inv2}.
For \Cref{the:highlevelupwlalgo:inv6} we observe that torsos of all nodes of $\Tc'$ in $\vint(T') \setminus P'$ are the same as the torsos of the corresponding nodes in $\Tc$.
For \Cref{the:highlevelupwlalgo:inv4}, we note that all new internal separations of $\Tc'$ have the form $(C \orescliqs \Tc, \co{C} \orescliqs \Tc)$ for $C \in \compset$, and therefore have order $\le \bdc(\{e_t\}) \le 2k$.

To prove \Cref{the:highlevelupwlalgo:inv1}, we note that every separation of form $(\lmap'(\vec{uv}), \lmap'(\vec{vu}))$ for $u,v \in P'$ either is a separation of form $(\lmap(\vec{u'v'}), \lmap(\vec{v'u'}))$ for $u',v' \in P$, or is a separation of form $(C_i \orescliqs \Tc, \co{C_i} \orescliqs \Tc)$.
The separations of the first form are mixed-$k$-well-linked by \Cref{the:highlevelupwlalgo:inv1}, and the separations of the second form are mixed-$k$-well-linked by their already stated properties.

For \Cref{the:highlevelupwlalgo:inv3}, assume first that $u$ is not a newly created node.
Then, the collection of sets $\{\lmap'(\vec{vu})\}_{v \in N_{T'}(u)}$ is the same as the collection of sets $\{\lmap(\vec{vu})\}_{v \in N_{T}(u)}$, and therefore the invariant continues to hold.
Then, assume $u = p_C$ for some $C \in \compset$ with $|C| \ge 2$.
We have that $\lmap'(\vec{p'u}) = \co{C} \orescliqs \Tc$ is well-linked, and for $v \in N_{T'}(u) \setminus \{p'\}$, we have that $\lmap'(\vec{vu}) = \lmap(\vec{vp})$, so at most one of them is not well-linked.
Finally, assume $u = p'$.
We have that $\{\lmap'(\vec{vp'}) : v \in N_{T'}(p')\} = \{\lmap(\vec{tp})\} \cup \{C \orescliqs \Tc : C \in \compset\}$.
By \Cref{the:highlevelupwlalgo:inv2}, $\lmap(\vec{tp})$ is well-linked, and because each $C \orescliqs \Tc$ is $k$-well-linked and there is at most one $C \in \compset$ so that $\bdc(C) \ge k$, all but at most one of the sets $C \orescliqs \Tc$ is well-linked.

For \Cref{the:highlevelupwlalgo:inv5} we observe that $V(p_C) \subseteq V(p)$ for all $C \in \compset$ and $V(p') \subseteq V(p)$, and the torsos of all other nodes in $P'$ are the same in $\Tc'$ as in $\Tc$.
\end{claimproof}

\paragraph{Case 2.3: Multiplicity-$3$-violator.}
Assume then that \Cref{lem:algpartitiontokwlsets} returned a multiplicity-$3$-violator $X \subseteq E(\torso(p))$.
Select $C \subseteq X$ so that $|C| = 2$ and $e_t \notin C$.
By \Cref{lem:multipvioldwl} we have that $C \orescliqs \Tc$ and $\co{C} \orescliqs \Tc$ are well-linked in $G$.
Now, we use \Cref{lem:sbdchippingalgo} to transform $\Tc$ into $\Tc \rescliqs (p,C)$ in time $\OO(k)$, and let both of the resulting nodes, $p'$ and $p_C$, be in $P$.
This increases both $|V(T)|$ and $|P|$ by one.

\begin{claim}
This transformation maintains the invariants.
\end{claim}
\begin{claimproof}
Let us denote by $\Tc = (T,\lmap)$ the superbranch decomposition before the transformation and by $\Tc' = (T',\lmap') = \Tc \rescliqs (p,C)$ the superbranch decomposition after.
We use similarly $P$ and $P'$.
Denote by $p_C$ the newly created node corresponding to $\torso(p) \rescliqs \co{C}$ and by $p'$ the newly created node corresponding to $\torso(p) \rescliqs C$.

For \Cref{the:highlevelupwlalgo:inv2}, let $u \in \vint(T') \setminus P'$ and $g = \parent_{T'}(u)$.
We have that then, $u \in \vint(T) \setminus P$ and $\lmap'(\vec{ug}) = \lmap(\vec{ug'})$, where $g' = \parent_T(u)$, and therefore $\lmap'(\vec{ug})$ is well-linked by \Cref{the:highlevelupwlalgo:inv2}.

For \Cref{the:highlevelupwlalgo:inv5,the:highlevelupwlalgo:inv6} we have that all vertex sets of torsos of $\Tc'$ in $P'$ are subsets of vertex sets of torsos of $\Tc$ in $P$, and all torsos of $\Tc'$ in $\vint(T') \setminus P'$ are also torsos of $\Tc$ in $\vint(T) \setminus P$.

For \Cref{the:highlevelupwlalgo:inv4} we note that all internal separation of $\Tc'$ except $(C \orescliqs \Tc, \co{C} \orescliqs \Tc)$ are internal separations of $\Tc$, and $\bdc(C) \le \rank(\torso(p)) \le \adhsize(\Tc) \le 2k$.

To prove \Cref{the:highlevelupwlalgo:inv1}, we note that if $uv \in E(T')$ and $u,v \in P'$, then either $(\lmap'(\vec{uv}), \lmap'(\vec{vu}))$ is equal to $(\lmap(\vec{u'v'}),\lmap(\vec{v'u'}))$ for some $u',v' \in P$, or to $(C \orescliqs \Tc, \co{C} \orescliqs \Tc)$.
The former types are mixed-$k$-well-linked by \Cref{the:highlevelupwlalgo:inv1}, and the latter type was argued above to be doubly-well-linked.

For \Cref{the:highlevelupwlalgo:inv3}, first assume that $u \in P' \setminus \{p',p_C\}$.
Then, $\{\lmap'(\vec{vu}) : v \in N_{T'}(u)\} = \{\lmap(\vec{vu}) : v \in N_{T}(u)\}$, and \Cref{the:highlevelupwlalgo:inv3} holds by \Cref{the:highlevelupwlalgo:inv3}.
Then consider $u = p_C$.
We have that $\lmap'(\vec{p'p_C}) = \co{C} \orescliqs \Tc$ is well-linked, and for $v \in N_{T'}(p_C) \setminus \{p'\}$, we have that $\lmap'(\vec{v p_C}) = \lmap(\vec{v p})$, so at most one of them is not well-linked by \Cref{the:highlevelupwlalgo:inv3}.
Finally, assume $u = p'$.
We have that $\lmap'(p_C p') = C \orescliqs \Tc$ is well-linked, and for $v \in N_{T'}(p') \setminus \{p_C\}$, we have that $\lmap'(\vec{v p'}) = \lmap(\vec{v p})$, so at most one of them is not well-linked by \Cref{the:highlevelupwlalgo:inv3}.
\end{claimproof}

\paragraph{Overall analysis.}
Each of the cases runs in $2^{\OO(k^2)}$ time.
In all of the cases, the measure $2|P|-|V(T)|$ increases: In Case 1, $|V(T)|$ decreases while $|P|$ stays the same, in Case 2.1, $|P|$ increases while $|V(T)|$ stays the same, and in Cases 2.2 and 2.3, $|P|$ and $|V(T)|$ both increase by the same amount and by at least one.
Therefore, as both $|V(T)|$ and $|P|$ are non-negative and bounded by $\OO(\|G\|)$, this process runs for at most $\OO(\|G\|)$ iterations.

We also need to argue that in each iteration, we can efficiently find a node in $\vint(T) \setminus P$ that has a parent in $P$.
This is not difficult, as after a node in $\vint(T) \setminus P$ first has a parent in $P$, it will always have a parent in $P$.
However, this requires in Cases 1 and 2.1 to iterate over all the children of $t$, which increases the running time of these cases by $\OO(|E(\torso(t))|)$.
However, this can be charged from a potential of form $\sum_{v \in \vint(T) \setminus P} |E(\torso(v))|$, which is initially at most $\OO(\|G\|)$, does not increase in any of the cases, and in Cases 1 and 2.1 decreases by $|E(\torso(t))|$.
\end{proof}

\section{Tangle-unbreakable torsos}
\label{sec:tangubrtorsos}
In this section we prove \Cref{the:highlevel:makektangleunbreakable}, which we re-state now.

\thehighlevelmakektangleunbreakable*

Our first goal is to give an algorithm that, given an $(s,k)$-unbreakable hypergraph $G$, computes a superbranch decomposition of $G$ whose all internal separations are doubly well-linked, and which has $k$-tangle-unbreakable torsos.
Then, this algorithm can simply be applied to decompose each torso to yield the algorithm of \Cref{the:highlevel:makektangleunbreakable}.

We then define sets that we will ``chip'' from $G$ to decompose it.
Let $s \ge k \ge 1$ be integers.
We say that a \emph{$(s,k)$-connchip} of a hypergraph $G$ is a set $C \subseteq E(G)$ so that
\begin{itemize}
\item $\bdc(C) < k$,
\item $|C| \ge 2$ and $|\co{C}| \ge 2$,
\item $|V(C)| \le s$,
\item $C$ is internally connected, and
\item both $C$ and $\co{C}$ are well-linked.
\end{itemize}

We show that unbreakable hypergraphs that are not tangle-unbreakable always have connchips.

\begin{lemma}
\label{lem:nokchipsimpliesktangunbrk}
If a hypergraph $G$ is $(s,k)$-unbreakable and does not have any $(s,k)$-connchips, then $G$ is $k$-tangle-unbreakable.
\end{lemma}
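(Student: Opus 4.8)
The plan is to prove the contrapositive: assuming $G$ is not $k$-tangle-unbreakable but \emph{is} $(s,k)$-unbreakable, I will exhibit an $(s,k)$-connchip, contradicting the hypothesis. So suppose $G$ is not $k$-tangle-unbreakable. Then $G$ has a separation of order $<k$ distinguishing two tangles, and I would immediately feed this into \Cref{lem:tangdistinguisherprops}, which hands back a separation $(A,\co{A})$ that distinguishes two tangles $\tang_1 \ni A$ and $\tang_2 \ni \co{A}$, and moreover satisfies $\bdc(A) < k$, $|V(A)| \le |V(\co{A})|$, $(A,\co{A})$ doubly well-linked, and $A$ internally connected.

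Comparing this with the definition of $(s,k)$-connchip, three of the five defining properties ($\bdc(A) < k$; $A$ internally connected; both $A$ and $\co{A}$ well-linked) are already in hand. It remains to verify $|V(A)| \le s$ and $|A| \ge 2$, $|\co{A}| \ge 2$. For the size bound, I would argue: since $\bdc(A) < k$ and $G$ is $(s,k)$-unbreakable, we cannot have both $|V(A)| \ge s$ and $|V(\co{A})| \ge s$; together with $|V(A)| \le |V(\co{A})|$ this forces $|V(A)| < s$, which in particular gives $|V(A)| \le s$.

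For the cardinality conditions, I would rule out $|A| \le 1$ using the tangle axioms: if $A = \emptyset$, then $\co{A} = E(G) \in \tang_2$, contradicting the third tangle axiom applied with all three sets equal to $E(G)$; and if $A = \{e\}$ for a single hyperedge $e$, then $\co{A} = E(G) \setminus \{e\} \in \tang_2$ contradicts the fourth tangle axiom. The symmetric argument with the roles of $A, \tang_1$ and $\co{A}, \tang_2$ exchanged gives $|\co{A}| \ge 2$. Hence $A$ is an $(s,k)$-connchip, which is the desired contradiction.

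I do not expect any serious obstacle here: the proof is essentially a matter of checking the connchip definition item by item against the output of \Cref{lem:tangdistinguisherprops}. The only points requiring a little care are (i) that the degenerate separations $|A| \le 1$ and $|\co{A}| \le 1$ are genuinely impossible between two tangles — which is precisely what the third and fourth tangle axioms guarantee — and (ii) that the bound $|V(A)| \le s$ is extracted from $(s,k)$-unbreakability by using the normalization $|V(A)| \le |V(\co{A})|$ already supplied by \Cref{lem:tangdistinguisherprops}.
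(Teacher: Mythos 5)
Your proof is correct and follows essentially the same route as the paper: apply \Cref{lem:tangdistinguisherprops} to a tangle-distinguishing separation, then check the connchip conditions, using $(s,k)$-unbreakability together with $|V(A)| \le |V(\co{A})|$ for the size bound. The only difference is that you spell out the (correct) third- and fourth-axiom argument for $|A|,|\co{A}| \ge 2$, which the paper states without elaboration.
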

\begin{proof}
Suppose that $G$ contains a separation of order $<k$ that distinguishes two tangles.
Then, by \Cref{lem:tangdistinguisherprops}, $G$ contains a separation $(C,\co{C})$ of order $<k$ so that $(C,\co{C})$ distinguishes two tangles, is doubly well-linked, has $|V(C)| \le |V(\co{C})|$, and $C$ is internally connected.
The fact that $(C,\co{C})$ distinguishes two tangles implies that $|C|,|\co{C}| \ge 2$.
The facts that $G$ is $(s,k)$-unbreakable and $|V(C)| \le |V(\co{C})|$ imply that $|V(C)| < s$.
It follows that then, $C$ would be an $(s,k)$-connchip of $G$.
\end{proof}

In our algorithm, we need the following lemma about verifying whether a given set is a connchip.

\begin{lemma}
\label{lem:algcheckifconchip1}
Let $G$ be a hypergraph whose representation is already stored.
Let also $s \ge k \ge 1$ be integers so that $G$ is $(s,k)$-unbreakable.
There is an algorithm that, given a set $C \subseteq E(G)$, in time $|C| \cdot s^{\OO(k+\rank(G))}$ either
\begin{itemize}
\item returns a multiplicity-$3$-violator in $G$, or
\item returns whether $C$ is a $(s,k)$-connchip of $G$.
\end{itemize}
\end{lemma}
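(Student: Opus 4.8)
The plan is to test, one at a time, the five conditions in the definition of a $(s,k)$-connchip, using the primitives established earlier and making sure that no step costs time proportional to $|\co{C}|$ or $|E(G)|$. First I would compute $V(C)$ and $\bd(C)$ with \Cref{lem:hypergraph_impl2} in time $\OO(|C| \cdot \rank(G))$; this yields $\bdc(C)$ and $|V(C)|$, so the conditions $\bdc(C) < k$ and $|V(C)| \le s$ are checked directly, and $|C| \ge 2$ is immediate. The condition $|\co{C}| \ge 2$ is checked in $\OO(|C|)$ time by marking the hyperedges of $C$ in the scratch space of the $E(G)$ linked list and scanning that list until two unmarked entries appear or it ends. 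If any of these fails, we report that $C$ is not a connchip. Next I would compute the internal components of $C$ with \Cref{lem:alg_internal_comps} in time $\OO(|C| \cdot \rank(G))$ and check that there is exactly one, which decides internal connectedness. For the well-linkedness of $C$ itself, I would run \Cref{lem:algffwl} on $A = C$ with the parameter $\bdc(C)$: since this parameter equals $\bdc(C)$, the algorithm's verdict on ``$\bdc(C)$-well-linkedness'' is exactly a verdict on well-linkedness, and it runs in time $2^{\OO(\bdc(C))} \cdot \rank(G)^2 \cdot |C| \le |C| \cdot s^{\OO(k + \rank(G))}$.

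The delicate condition is that $\co{C}$ is well-linked, since $\co{C}$ can be far larger than $C$; this is where the hypothesis that $G$ is $(s,k)$-unbreakable is used. Here I would invoke \Cref{lem:kwlubalg} with $A = C$ and $p = 3$, which runs in time $|C| \cdot 3 \cdot s^{\OO(k + \rank(G))} \cdot 2^{\OO(\bdc(C))} = |C| \cdot s^{\OO(k + \rank(G))}$ because $\bdc(C) < k$. If it returns a multiplicity-$3$-violator in $G$, we output that and stop. Otherwise it decides whether $\co{C}$ is $k$-well-linked, and since $\bdc(\co{C}) = \bdc(C) < k$, the fact that $k$-well-linkedness coincides with well-linkedness on sets of border size at most $k$ means this decides whether $\co{C}$ is well-linked. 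If all five conditions hold we report that $C$ is a $(s,k)$-connchip, and otherwise that it is not.

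Every step runs within $|C| \cdot s^{\OO(k + \rank(G))}$, with the call to \Cref{lem:kwlubalg} being the dominant one and the only subroutine able to emit a multiplicity-$3$-violator, so the claimed running time and output behavior follow. The main obstacle is precisely the well-linkedness test for the large side $\co{C}$: applying \Cref{lem:algffwl} to $\co{C}$ directly would cost time proportional to $|\co{C}|$, and it is \Cref{lem:kwlubalg}, together with the elementary observation that $\bdc(\co{C}) < k$, that circumvents this. The remainder is bookkeeping: assembling the primitives and tracking the side conditions $|C| \ge 2$ and $|\co{C}| \ge 2$ that the connchip definition imposes beyond internal connectedness and well-linkedness.
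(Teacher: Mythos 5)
Your proposal is correct and follows essentially the same route as the paper's proof: the cheap conditions are checked via \Cref{lem:hypergraph_impl2}, \Cref{lem:alg_internal_comps}, and a truncated scan of $E(G)$, well-linkedness of $C$ is tested with \Cref{lem:algffwl}, and well-linkedness of the large side $\co{C}$ is tested with \Cref{lem:kwlubalg} (with $p=3$), using that $k$-well-linkedness coincides with well-linkedness when the border has size below $k$. No gaps.
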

\begin{proof}
We can test if $\bdc(C) < k$ in time $\OO(|C| \cdot \rank(G))$ by \Cref{lem:hypergraph_impl2}, and whether $|C| \ge 2$ and $|\co{C}| \ge 2$ simply by iterating first over $C$ and then over $E(G)$.
We can test whether $C$ is internally connected in $\OO(|C| \cdot \rank(G))$ time by \Cref{lem:alg_internal_comps}, and whether $|V(C)| \le s$ in $\OO(|C| \cdot \rank(G))$ time by \Cref{lem:hypergraph_impl2}.
Then, we use \Cref{lem:kwlubalg} to in time $|C| \cdot s^{\OO(k+\rank(G))}$ either (1) return a multiplicity-$3$-violator, or (2) test whether $\co{C}$ is well-linked.
Finally, we use \Cref{lem:algffwl} to in time $2^{\OO(k)} \cdot \rank(G)^2 \cdot |C|$ test whether $C$ is well-linked.
\end{proof}

We then show that decomposing by doubly well-linked separations does not, in some sense, create new connchips.

\begin{lemma}
\label{lem:conchipdwlinduction12}
Let $G$ be a hypergraph, $s \ge k \ge 1$ integers, and $(A,\co{A})$ a doubly well-linked separation of $G$.
If $C \subseteq E(G \rescliqs A)$ is an $(s,k)$-connchip of $G \rescliqs A$ and $e_A \notin C$, then $C$ is an $(s,k)$-connchip of $G$.
\end{lemma}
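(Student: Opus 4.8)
The plan is to verify the five defining conditions of an $(s,k)$-connchip for $C$ in $G$ one by one. The basic observation is that, since $e_A \notin C$, the set $C$ is literally the same subset of hyperedges in $G$ as in $G \rescliqs A$, i.e.\ $C \orescliqs A = C$. The key bookkeeping fact I would record first is that every $B \subseteq C$ also satisfies $e_A \notin B$, hence $B \orescliqs A = B$, and therefore $\bd_G(B) = \bd_{G \rescliqs A}(B)$ by the identity $\bd_G(X \orescliqs A) = \bd_{G \rescliqs A}(X)$ noted after the definition of $\orescliqs$; in particular $\bdc_G(B) = \bdc_{G \rescliqs A}(B)$. Moreover $V_G(C) = V_{G \rescliqs A}(C)$, since the hyperedges of $C$ retain their vertex sets in $G \rescliqs A$.

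Given these observations, four of the five conditions are immediate. We have $\bdc_G(C) = \bdc_{G \rescliqs A}(C) < k$ and $|V_G(C)| = |V_{G \rescliqs A}(C)| \le s$. If some bipartition $(B_1,B_2)$ of $C$ witnessed that $C$ is not internally connected in $G$, then, using $\bd_G(B_i) = \bd_{G \rescliqs A}(B_i)$ and $\bd_G(C) = \bd_{G \rescliqs A}(C)$, the same bipartition would witness this in $G \rescliqs A$, a contradiction; so $C$ is internally connected in $G$. Likewise any bipartition of $C$ violating well-linkedness of $C$ in $G$ would, by the same equalities, violate it in $G \rescliqs A$, so $C$ is well-linked in $G$. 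Finally, $|C| \ge 2$ is unchanged, while $\co{C}$ in $G$ equals $A \cup (\co{A} \setminus C)$: from the fact that $\co{C}$ in $G \rescliqs A$, namely $(\co{A} \setminus C) \cup \{e_A\}$, has size at least $2$ we get $|\co{A} \setminus C| \ge 1$, and since $A$ is nonempty (as it is in every application of $\rescliqs$ in this context), $|\co{C}| \ge 2$ in $G$ as well.

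The only step that genuinely uses the hypothesis that $(A,\co{A})$ is \emph{doubly} well-linked, rather than just properties of the $\rescliqs$ operation, is showing that $\co{C}$ is well-linked in $G$. Here I would observe that $\co{C}$ in $G$ is precisely the lift $(E(G \rescliqs A) \setminus C) \orescliqs A$ of $\co{C}$ in $G \rescliqs A$. Since $(A,\co{A})$ is doubly well-linked, $A$ is well-linked in $G$; and $E(G \rescliqs A) \setminus C$ is well-linked in $G \rescliqs A$ because $C$ is an $(s,k)$-connchip of $G \rescliqs A$. Applying the transitivity of well-linkedness (\Cref{lem:linkedcliq}) then yields that $(E(G \rescliqs A) \setminus C) \orescliqs A = \co{C}$ is well-linked in $G$, completing the verification.

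There is no serious obstacle here: the argument is essentially bookkeeping around the identity $\bd_G(X \orescliqs A) = \bd_{G \rescliqs A}(X)$ together with a single invocation of \Cref{lem:linkedcliq}. The only point requiring a moment's care is the $|\co{C}| \ge 2$ condition, which relies on $A \neq \emptyset$; this is harmless since in all uses of this lemma $(A,\co{A})$ arises from a connchip, so in fact $|A| \ge 2$.
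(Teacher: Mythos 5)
Your proof is correct and follows essentially the same route as the paper's: all conditions except the well-linkedness of $\co{C}$ transfer trivially because $C$ and its subsets are unchanged by the $\rescliqs A$ operation, and the well-linkedness of $\co{C}$ in $G$ follows from the transitivity of well-linkedness (\Cref{lem:linkedcliq}) applied to the well-linked set $A$ and the well-linked set $\co{C}$ of $G \rescliqs A$. Your treatment is in fact slightly more careful than the paper's one-line argument, e.g.\ in noting the edge case for $|\co{C}| \ge 2$.
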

\begin{proof}
The set $C$ trivially satifies all properties to be a $(s,k)$-connchip of $G$ except that $\co{C} \orescliqs A$ is well-linked.
However, being well-linked is a transitive property (\Cref{lem:linkedcliq}), which implies that $\co{C} \orescliqs A$ is well-linked.
\end{proof}

We then give our algorithm for decomposing $(s,k)$-unbreakable hypergraphs into $k$-tangle-unbreakable hypergraphs.

\begin{lemma}
\label{lem:algfromunbrtotangunbr}
There is an algorithm that, given integers $s \ge k \ge 1$ and a hypergraph $G$ that is $(s,k)$-unbreakable, in time $s^{\OO(k+\rank(G))} \cdot \|G\|$ returns a superbranch decomposition $\Tc$ of $G$ so that
\begin{itemize}
\item the internal separations of $\Tc$ have order $\le \max(k-1,\rank(G))$ and are doubly well-linked, and
\item the torsos of $\Tc$ are $k$-tangle-unbreakable.
\end{itemize}
\end{lemma}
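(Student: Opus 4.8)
The plan is to build $\Tc$ by starting from the trivial superbranch decomposition of $G$ (the star, whose unique internal node $t$ has $\torso(t)$ equal to $G$ up to replacing each hyperedge by its border) and repeatedly \emph{splitting} nodes along $(s,k)$-connchips, maintaining two invariants throughout: (i) every internal separation of $\Tc$ is doubly well-linked and has order $\le\max(k-1,\rank(G))$, and (ii) every torso $\torso(v)$ is $(s,k)$-unbreakable. To ``finish'' a node $t$, I would scan the hyperedges of $\torso(t)$ one at a time; for a hyperedge $e$ I invoke \Cref{lem:localchipalgo} (with $p=3$ and vertex bound $s$) to either enumerate the $s^{\OO(k)}$ candidate internally connected sets $C\ni e$ with $|V(C)|\le s$ and small border, or to return a multiplicity-$3$-violator, and I test each candidate with \Cref{lem:algcheckifconchip1}. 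If a candidate is an $(s,k)$-connchip $C$ (necessarily with $e\in C$), I replace $\Tc$ by $\Tc\rescliqs(t,C)$ via \Cref{lem:sbdchippingalgo}; if instead a multiplicity-$3$-violator $X$ appears, I pick $C\subseteq X$ with $|C|=2$ and split $\Tc\rescliqs(t,C)$, which is legitimate since \Cref{lem:multipvioldwl} (applicable because \emph{all} sets $\lmap(\vec{vt})$ are well-linked) guarantees $(C\orescliqs\Tc,\co{C}\orescliqs\Tc)$ is doubly well-linked. Once no hyperedge of $\torso(t)$ yields a connchip and all multiplicity-$3$-violators have been eliminated, $\torso(t)$ has no $(s,k)$-connchip, so by \Cref{lem:nokchipsimpliesktangunbrk} together with invariant~(ii) it is $k$-tangle-unbreakable, and $t$ is done.

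Maintaining the invariants is routine. For~(ii): writing $\torso(t)=G\rescliqs\{\lmap(\vec{c_it})\}$, the operation $\torso(v)\mapsto\torso(v)\rescliqs A$ preserves $(s,k)$-unbreakability for \emph{any} $A$, because the new hyperedge $e_A$ satisfies $V(e_A)=\bd(A)\subseteq V(A)$ and $\bdc(e_A)=\bdc(A)$, so any order-$<k$ separation of $\torso(v)\rescliqs A$ pulls back to an order-$<k$ separation of $\torso(v)$ with vertex sets at least as large on both sides. For~(i): the only new internal separation is $(C\orescliqs\Tc,\co{C}\orescliqs\Tc)$, of order $\bdc(C)\le\max(k-1,\rank(G))$; since every adhesion $\lmap(\vec{c_it})$ is well-linked (internal ones by the invariant, leaf ones because they are singletons), iterating the transitivity of well-linkedness (\Cref{lem:linkedcliq}) along the chain defining $\torso(t)$ shows that both $C\orescliqs\Tc$ and $\co{C}\orescliqs\Tc$ are well-linked in $G$; all other internal separations are untouched.

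For the running time the essential point is to avoid re-scanning a whole torso after each split. When $t$ is split into the ``large'' side $t'$ ($\torso(t')=\torso(t)\rescliqs C$) and the ``small'' side $t_C$ ($\torso(t_C)=\torso(t)\rescliqs\co{C}$, hence $V(\torso(t_C))=V(C)$ with $|V(C)|\le s$), the small side — keeping multiplicity bounded so that $|C|\le s^{\OO(\rank(G))}$ — is a hypergraph of total size $s^{\OO(\rank(G))}$ and can be finished entirely in $s^{\OO(k+\rank(G))}$ time. On the large side, \Cref{lem:conchipdwlinduction12}, applied to the doubly well-linked separation $(C,\co{C})$ of $\torso(t)$, shows that any $(s,k)$-connchip of $\torso(t')$ avoiding the new hyperedge $e_C$ is already an $(s,k)$-connchip of $\torso(t)$; thus a hyperedge already certified connchip-free in $\torso(t)$ needs re-examination only with respect to connchips containing $e_C$, and a single local search from $e_C$ catches all of those. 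Charging work to a queue of (hyperedge, torso) pairs — $\OO(|E(G)|)$ initially, with each split adding only $\OO(1)$ more ($e_C$, the new hyperedge of $\torso(t_C)$, and the re-enqueued $e$) — and using that the supercubic tree has $|E(G)|$ leaves and hence $\le|E(G)|-2$ internal nodes (so $\le|E(G)|$ splits), I get $\OO(|E(G)|)$ local searches and $\OO(|E(G)|)$ invocations of \Cref{lem:sbdchippingalgo} (with $|C|\le s^{\OO(\rank(G))}$ and $\adhsize(\Tc)\le\max(k-1,\rank(G))$), plus the $s^{\OO(k+\rank(G))}$-time processing of each small side, for a total of $s^{\OO(k+\rank(G))}\cdot\|G\|$. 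Finally, the algorithm of \Cref{the:highlevel:makektangleunbreakable} follows by applying this algorithm inside each torso of the given superbranch decomposition (via the splitting-with-a-superbranch-decomposition operation of \Cref{lem:superb_impl_breakbydecomp}) and checking that the uncrossing properties of the mixed-$k$-well-linked internal separations make the result again mixed-$k$-well-linked.

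The step I expect to be the main obstacle is precisely this running-time bookkeeping: showing that a split triggers only $\OO(1)$ new local searches rather than a full rescan hinges on combining \Cref{lem:conchipdwlinduction12} with the smallness of the ``connchip side'' $\torso(t_C)$, and on keeping the hyperedge multiplicity bounded — so that both $|C|$ and the small torsos stay of size $s^{\OO(\rank(G))}$ — which in turn forces one to interleave multiplicity-$3$-violator splits and to verify that this auxiliary process terminates (each such split strictly decreases the total multiplicity excess of the affected class).
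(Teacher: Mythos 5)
Your proposal is correct and follows essentially the same route as the paper's proof: the same greedy splitting along $(s,k)$-connchips (falling back to $2$-subsets of multiplicity-$3$-violators), the same invariants (doubly well-linked internal separations of order $\le\max(k-1,\rank(G))$ via transitivity of well-linkedness, and $(s,k)$-unbreakable torsos), termination via \Cref{lem:nokchipsimpliesktangunbrk}, and the same queue-based amortization using \Cref{lem:conchipdwlinduction12} so that after a split only $e_C$ on the large side and the $s^{\OO(k+\rank(G))}$-size small side need re-examination. The running-time bookkeeping you flagged as the main obstacle is handled exactly as in the paper (each split increases the number of internal nodes of a supercubic tree with $|E(G)|$ leaves, so there are at most $\OO(\|G\|)$ splits and queue insertions).
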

\begin{proof}
Let us define that a \emph{chip} of a hypergraph $G$ is a set $C$ so that $(C,\co{C})$ is doubly well-linked, $|C| \ge 2$, $|\co{C}| \ge 2$, and $\bdc(C) \le \max(k-1,\rank(G))$.

We initially let $\Tc = (T,\lmap)$ be the unique superbranch decomposition of $G$ with one internal node.
We will do a process where we repeatedly discover a chip $C$ of $\torso(t)$, where $t \in \vint(T)$, and replace $\Tc$ by $\Tc \rescliqs (t,C)$.

\begin{claim}
\label{lem:algfromunbrtotangunbr:claim1}
This process maintains the invariants that
\begin{itemize}
\item the internal separations of $\Tc$ have order $\le \max(k-1,\rank(G))$ and are doubly well-linked, and
\item each torso of $\Tc$ is $(s,k)$-unbreakable and has rank at most $\max(k-1,\rank(G))$.
\end{itemize}
\end{claim}
\begin{claimproof}
The order of the internal separations is clear as by definition chips $C$ have $\bdc(C) \le \max(k-1,\rank(G))$.
The doubly well-linkedness of the internal separations follows from the transitivity of well-linkedness (\Cref{lem:linkedcliq}).
The $(s,k)$-unbreakability of torsos follows from the fact that if a hypergraph $G$ is $(s,k)$-unbreakable, then for every $C \subseteq E(G)$ the hypergraph $G \rescliqs C$ is $(s,k)$-unbreakable.
The rank of the torsos follows again by the fact that a chip $C$ has $\bdc(C) \le \max(k-1,\rank(G))$.
\end{claimproof}

We will continue this process until no torso of $\Tc$ has $(s,k)$-connchips, at which point each torso of $\Tc$ is $k$-tangle-unbreakable by \Cref{lem:nokchipsimpliesktangunbrk}, and we can output $\Tc$.

This process is implemented with a help of a queue $Q$.
The queue $Q$ stores pairs $(t,e)$, where $t \in \vint(T)$ and $e \in E(\torso(t))$.
We allow $Q$ to also contain such pairs so that $t$ is no longer in $\vint(T)$ or $e$ is no longer in $E(\torso(t))$ because $\Tc$ was edited after inserting them, but such pairs will be discarded when encountered.
The queue $Q$ will satisfy the invariant that if there exists $t \in \vint(T)$ and a $(s,k)$-connchip $C$ of $\torso(t)$, then $Q$ contains a pair $(t,e)$, where $e \in C$.
Initially, when $\vint(T) = \{t\}$, we let $Q$ contain all pairs $(t,e)$ with $e \in \torso(t)$, so it trivially satisfies the invariant.

\begin{claim}
\label{lem:algfromunbrtotangunbr:claim2}
Assume $Q$ is non-empty.
We can in $s^{\OO(k+\rank(G))}$ time either (1) obtain a pair $(t,C)$ so that $C$ is a chip of $\torso(t)$ and $|C| \le s^{\OO(k+\rank(G))}$, or (2) remove an element from $Q$ while maintaining its invariant.
\end{claim}
\begin{claimproof}
Let $(t,e) \in Q$ be the front element from $Q$.
If $t$ is no longer an internal node of $\Tc$ or $e$ is no longer a hyperedge of $\torso(t)$, we can remove $(t,e)$ from $Q$.
Otherwise, we use \Cref{lem:localchipalgo} to either (1) enumerate all sets $C \subseteq \torso(t)$ so that $e \in C$, $|V(C)| \le s$, $\bdc(C) < k$, and $C$ is internally connected, or (2) return a multiplicity-$3$-violator in $\torso(t)$.
This algorithm runs in time $s^{\OO(k+\rank(\torso(t)))} = s^{\OO(k+\rank(G))}$.

First, if it returns a multiplicity-$3$-violator $X \subseteq E(\torso(t))$, then any subset $X' \subseteq X$ with $|X'| = 2$ is a chip, and we can return $(t,X')$.

Otherwise, we use \Cref{lem:algcheckifconchip1} to check for each enumerated set $C$ if it is a $(s,k)$-connchip, or obtain a multiplicity-$3$-violator.
This runs in time $s^{\OO(k+\rank(G))}$ in total.
If we obtain a multiplicity-$3$-violator, we proceed as we did before in that case.
If we find an $(s,k)$-connchip $C$, we return $(t,C)$, as any $k$-connchip is also a chip.
If no enumerated set $C$ is a $k$-connchip, then $\torso(t)$ has no $(s,k)$-connchips $C$ with $e \in C$, and we can remove $(t,e)$ from $Q$.
\end{claimproof}

Suppose that applying \Cref{lem:algfromunbrtotangunbr:claim2} returned a chip $C \subseteq E(G)$.
We use \Cref{lem:sbdchippingalgo} to transform $\Tc$ into $\Tc \rescliqs (t,C)$ in time $\OO(|C| \cdot k) = s^{\OO(k+\rank(G))}$.
Now, $|E(\torso(t_C))| = |C|+1$, so we can within the same time add all pairs of form $(t_C, e)$, where $e \in E(\torso(t_C))$, to the queue $Q$.
We also add the pair $(t, e_C)$ to $Q$.
This maintains the invariant of $Q$ by \Cref{lem:conchipdwlinduction12}.

As each chip $C$ has $|C| \ge 2$ and $|\co{C}| \ge 2$, each iteration increases $|V(T)|$, so this process runs for at most $\|G\|$ iterations.
Therefore, the total running time is $\|G\| \cdot s^{\OO(k+\rank(G))}$.
\end{proof}

Now we can prove \Cref{the:highlevel:makektangleunbreakable}.

\thehighlevelmakektangleunbreakable*
\begin{proof}
For each $t \in \vint(T)$, we use \Cref{lem:algfromunbrtotangunbr} to compute a superbranch decomposition $\Tc_t = (T_t,\lmap_t)$ of $\torso(t)$ so that the internal separations of $\Tc_t$ have order $\le 2k$ and are doubly well-linked, and the torsos of $\Tc_t$ are $k$-tangle-unbreakable.
This runs in total $s^{\OO(k)} \cdot \sum_{t \in \vint(T)} \|\torso(t)\| = s^{\OO(k)} \cdot \|G\|$ time.

Then, we use \Cref{lem:superbdalgrefinebyrefiset} to compute the superbranch decomposition $\Tc' = (T',\bag') = \Tc \rescliqs \{\Tc_t\}_{t \in \vint(T)}$ in time $\OO(k \cdot \|G\|)$.
Because $\adhsize(\Tc) \le 2k$ and $\adhsize(\Tc_t) \le 2k$, we have that $\adhsize(\Tc') \le 2k$.
Because the torsos of all $\Tc_t$ are $k$-tangle-unbreakable, also the torsos of $\Tc'$ are $k$-tangle-unbreakable.
It remains to argue that the internal separations of $\Tc'$ are mixed-$k$-well-linked.

Let $T_t'$ be a connected subtree of $T'$ corresponding to a node $t \in \vint(T)$.
Let $v \in N_T(t)$ be the node in $N_T(t)$ so that $\lmap(\vec{vt})$ is not well-linked (if no such node exist, then let $v$ be an arbitrary neighbor of $t$).
Then let $r \in V(T_t')$ be the node so that there is $v' \in N_{T'}(r)$ so that $\lmap'(\vec{v'r}) = \lmap(\vec{vt})$.
Consider $T_t'$ as rooted on $r$.
Now, if $x \in V(T_t')$ and $y$ is the parent of $x$ in $T_t'$, then $\lmap'(\vec{xy})$ is well-linked by the transitivity of well-linkedness (\Cref{lem:linkedcliq}).
Also, $\lmap'(\vec{yx})$ is $k$-well-linked by the transitivity of $k$-well-linkedness along $k$-better-linked sets (\Cref{lem:kwltransitive}).
Therefore, all internal separations of $\Tc'$ corresponding to edges of such connected subtrees $T_t'$ are mixed-$k$-well-linked.
All other internal separations of $\Tc'$ are internal separations of $\Tc$, and therefore are mixed-$k$-well-linked.
\end{proof}

\section{Small adhesions}
\label{sec:smalleradhesions}
In this section we prove \Cref{the:highlevel:decreaseadhesion}, which we re-state now.

\thehighleveldecreaseadhesion*

We have in fact already proven the main ingredient of \Cref{the:highlevel:decreaseadhesion}, which is that separations of order $<k$ distinguishing tangles can be uncrossed with mixed-$k$-well-linked separations, which was proven as \Cref{lem:tangledisuncross}.
To prove \Cref{the:highlevel:decreaseadhesion}, let us start by applying \Cref{lem:tangledisuncross} to show that hypergraphs that admit certain superbranch decompositions are $k$-tangle-unbreakable.

\begin{lemma}
\label{lem:decomptangleunbrimpliestangunbr}
Let $G$ be a hypergraph, $k \ge 1$ an integer, and $\Tc = (T,\lmap)$ a superbranch decomposition of $G$, so that every internal separation of $\Tc$ has order $\ge k$ and is mixed-$k$-well-linked, and every torso of $\Tc$ is $k$-tangle-unbreakable.
Then, $G$ is $k$-tangle-unbreakable.
\end{lemma}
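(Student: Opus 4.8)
The goal is to show that a hypergraph $G$ admitting a superbranch decomposition $\Tc = (T,\lmap)$ with all internal separations of order $\ge k$ and mixed-$k$-well-linked, and all torsos $k$-tangle-unbreakable, is itself $k$-tangle-unbreakable. I would argue by contradiction: suppose $(A,\co{A})$ is a separation of $G$ of order $<k$ that distinguishes two tangles $\tang_1,\tang_2$ (with $A \in \tang_1$, $\co{A} \in \tang_2$). The plan is to ``push'' this distinguishing separation down into a single torso of $\Tc$, using the uncrossing lemma \Cref{lem:tangledisuncross} at each internal edge of $T$, and then derive a contradiction with the $k$-tangle-unbreakability of that torso.

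The key mechanism: for each internal edge $xy \in \eint(T)$, the separation $(\lmap(\vec{xy}),\lmap(\vec{yx}))$ is an internal separation of $\Tc$, hence mixed-$k$-well-linked. By \Cref{lem:tangledisuncross}, applied with $(A,\co{A})$ playing the role of the tangle-distinguishing separation and $(\lmap(\vec{xy}),\lmap(\vec{yx}))$ the mixed-$k$-well-linked one, there is an orientation of $(A,\co{A})$ and an orientation of the internal separation so that their ``union'' still distinguishes two tangles. Concretely, this lets me replace $(A,\co{A})$ by a separation that, relative to the internal edge $xy$, lies entirely on one side — i.e., I can choose for $(A,\co{A})$ a side of $xy$ into which it is ``absorbed.'' Iterating this over all internal edges and tracking consistency (the orientations chosen at incident edges must agree so that the result points at a common node $t$, which follows because once a distinguishing separation $(A',\co{A'})$ has $A'$ contained in one of the two sides of an internal edge, that designates a direction of $T$), I arrive at an internal node $t \in \vint(T)$ and a separation $(A^*,\co{A^*})$ distinguishing two tangles such that $(A^*,\co{A^*})$ restricted to $\torso(t)$ is a genuine separation of $\torso(t)$ of order $\le \bdc(A^*) < k$. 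Then, by \Cref{lem:tanglebwduality} (or directly by lifting tangles: the tangles $\tang_1',\tang_2'$ in $G$ restrict to tangles of $\torso(t) = G \rescliqs \{\lmap(\vec{st}):s \in N_T(t)\}$ that are distinguished by this restricted separation), I contradict that $\torso(t)$ is $k$-tangle-unbreakable.

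The main obstacle I anticipate is making the ``pushing'' argument precise: I need a clean induction on the tree $T$ showing that a distinguishing separation can always be oriented toward a fixed node, and I need to verify that the restriction of the $G$-tangles to a torso genuinely yields tangles of that torso distinguished by the restricted separation. For the latter, the relevant fact is that $\torso(t) = G \rescliqs \compset$ for $\compset = \{\lmap(\vec{st}) : s \in N_T(t)\}$, and tangles behave well under the $\rescliqs$ operation when the pieces being contracted are well-linked on one side — which is exactly what the mixed-$k$-well-linkedness of the internal separations guarantees (each $\lmap(\vec{st})$ or its complement is well-linked). A technically cleaner route may be: pick $t$ to be a node such that the separation $(A,\co{A})$ cannot be uncrossed to lie on a proper side of any edge incident to $t$; such $t$ exists by a standard argument (consider the edge orientations of $T$ induced by which side absorbs $(A,\co{A})$, these are ``consistent'' and point to a node or toward a single edge, and in the latter case we are already inside that edge's adhesion). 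I would then carefully set up this orientation-consistency statement as a short claim, prove it using \Cref{lem:tangledisuncross} together with submodularity, and conclude. I expect the bookkeeping of orientations — rather than any deep idea — to be where the bulk of the work lies.
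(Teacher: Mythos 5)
Your proposal is correct and follows essentially the same route as the paper: uncross the distinguishing separation with the mixed-$k$-well-linked internal separations via \Cref{lem:tangledisuncross} until it crosses no edge of $T$, localize it at a single internal node $t$, and restrict the two tangles to $\torso(t)$ to contradict its $k$-tangle-unbreakability. The paper handles the "bookkeeping" you worry about by an extremal choice (a distinguishing separation crossing the fewest edges of $\Tc$), and the tangle-restriction step needs only the non-crossing property plus the order-$\ge k$ bound on internal adhesions (for the fourth tangle axiom), not well-linkedness.
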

\begin{proof}
Assume for a contradiction that such a hypergraph $G$ has a separation $(A,\co{A})$ of order $<k$ that distinguishes two tangles.
Moreover, select such $(A,\co{A})$ so that it crosses the minimum number of edges of $\Tc$, where we say that $(A,\co{A})$ crosses an edge $uv \in E(T)$ if the separations $(A,\co{A})$ and $(\lmap(\vec{uv}), \lmap(\vec{vu}))$ cross.

\begin{claim}
$(A,\co{A})$ crosses no edges of $\Tc$.
\end{claim}
\begin{claimproof}
Suppose $(A,\co{A})$ crosses at least one edge $uv \in E(T)$.
Now, by \Cref{lem:tangledisuncross}, there exists an orientation $(A',\co{A'})$ of $(A,\co{A})$ and an orientation $\vec{uv}$ of $uv$ so that the separation $(A' \cup \lmap(\vec{uv}), \co{A'} \cap \lmap(\vec{vu}))$ distinguishes two tangles.
Let us furthermore assume (by permuting $A$ with $\co{A}$ if necessary) that $A' = A$, and in particular, that $(B,\co{B}) = (A \cup \lmap(\vec{uv}), \co{A} \cap \lmap(\vec{vu}))$ distinguishes two tangles.

We claim that $(B,\co{B})$ crosses less edges of $\Tc$ than $(A,\co{A})$, contradicting the choice of $(A,\co{A})$.
First, it does not cross the edge $uv$.
Second, it does not cross any edge $xy \neq uv$ that is closer to $u$ than $v$ in $T$, as for such edge there exists an orientation $\vec{xy}$ so that $\lmap(\vec{xy}) \subseteq \lmap(\vec{uv}) \subseteq B$.

Third, we claim that $(B,\co{B})$ crosses an edge $xy \neq uv$ that is closer to $v$ than $u$ only if $(A,\co{A})$ crosses the edge.
Let $\vec{xy}$ be an orientation of such an edge so that $\lmap(\vec{xy}) \subseteq \lmap(\vec{vu})$.
We have that $\co{B} \cap \lmap(\vec{xy}) = \co{A} \cap \lmap(\vec{xy})$ and $B \cap \lmap(\vec{xy}) = A \cap \lmap(\vec{xy})$.
We also have that $A$ and $\co{A}$ both intersect $\lmap(\vec{yx})$, because $\lmap(\vec{uv}) \subseteq \lmap(\vec{yx})$, and $(A,\co{A})$ crosses $uv$.
\end{claimproof}

Now we can assign an orientation $\vec{uv}$ for every edge $uv$ of $T$ so that $\lmap(\vec{uv})$ intersects only one of $A$ and $\co{A}$, and $\lmap(\vec{vu})$ intersects both $A$ and $\co{A}$.
Let $t \in V(T)$ be a node so that all of the incident edges are oriented towards $t$.
The fourth tangle axiom implies that $|A|,|\co{A}| \ge 2$, implying that $t$ is not a leaf of $T$.

Without loss of generality, let the tangles distinguished be $\tang_1$ and $\tang_2$, so that $A \in \tang_1$, $\co{A} \in \tang_2$, and both $\tang_1$ and $\tang_2$ have order $\bdc(A)+1 \le k$.
From each $\tang_i$ ($i \in \{1,2\}$), we construct a tangle $\tang'_i$ of $\torso(t)$ as follows.
For a set $B \subseteq E(\torso(t))$, we let $B \in \tang'_i$ if $B \orescliqs \Tc \in \tang_i$.

\begin{claim}
Both $\tang'_1$ and $\tang'_2$ are tangles of $\torso(t)$ of order $\bdc(A)+1$.
\end{claim}
\begin{claimproof}
First, $\bdc(B) = \bdc(B \orescliqs \Tc)$ and $\co{B \orescliqs \Tc} = \co{B} \orescliqs \Tc$, implying the first two tangle axioms.
For the third axiom, assume that there are $B_1,B_2,B_3 \in \tang'_i$ so that $B_1 \cup B_2 \cup B_3 = E(\torso(t))$.
Then, it would be that $(B_1  \orescliqs \Tc) \cup (B_2  \orescliqs \Tc) \cup (B_3  \orescliqs \Tc) = E(G)$, contradicting that $\tang_i$ is a tangle of $G$.
For the fourth axiom, assume there is $e_s \in E(\torso(t))$ so that $E(\torso(t)) \setminus \{e_s\} \in \tang'_i$.
Because internal adhesions of $\Tc$ have order $\ge k$ and $\tang_i$ has order $k$, it must be that $s$ is a leaf of $T$.
However, then $(E(\torso(t)) \setminus \{e_s\}) \orescliqs \Tc = E(G) \setminus \{e\}$ for some $e \in E(G)$, contradicting that $\tang_i$ is a tangle of $G$.
\end{claimproof}

Now, because for each $s \in N(t)$ it holds that $\lmap(\vec{st}) \subseteq A$ or $\lmap(\vec{st}) \subseteq \co{A}$, there exists $B \subseteq E(\torso(t))$ so that $B \orescliqs \Tc = A$ and $\co{B} \orescliqs \Tc = \co{A}$.
Now, $(B,\co{B})$ distinguishes the tangles $\tang'_1$ and $\tang'_2$ of $\torso(t)$, contradicting that $\torso(t)$ is $k$-tangle-unbreakable.
\end{proof}

Now we are ready to give the algorithm of \Cref{the:highlevel:decreaseadhesion}, which simply contracts all edges of $\Tc$ whose adhesions have size $\ge k$.

\thehighleveldecreaseadhesion*
\begin{proof}
We transform $\Tc = (T,\lmap)$ into $\Tc' = (T', \lmap')$ by contracting all edges of $T$ corresponding to internal separations of order $\ge k$.
This can be implemented in time $\OO(\|\Tc\|) = \OO(k \cdot \|G\|)$.

By construction, the internal separations of $\Tc'$ have order $<k$, and because they are also internal separations of $\Tc$ they are mixed-$k$-well-linked and thus doubly well-linked.
It remains to argue that the torsos of $\Tc'$ are $k$-tangle-unbreakable.

Each node of $\Tc'$ corresponds to a connected subtree of $\Tc$.
We observe that by taking the corresponding subtree from $\Tc$, each torso $\torso_{\Tc'}(t)$ of an internal node $t$ of $\Tc'$ has a superbranch decomposition $\Tc_t$ whose internal separations have order $\ge k$ and are mixed-$k$-well-linked, and whose torsos are $k$-tangle-unbreakable.
Therefore, \Cref{lem:decomptangleunbrimpliestangunbr} implies that $\torso_{\Tc'}(t)$ is $k$-tangle-unbreakable.
\end{proof}

\section{From tangle-unbreakability to unbreakability}
\label{sec:fromtubrtoubr}
In this section we prove \Cref{the:highlevel:tunbrtounbr}, which we re-state now.

\thehighleveltunbrtounbr*



The proof of \Cref{the:highlevel:tunbrtounbr} will be divided to many lemmas across this section.
For it, the following definitions are central.

Let $G$ be a hypergraph and $h \ge s \ge k \ge 1$ integers.
An \emph{$(s,h,k)$-smallchip} of $G$ is a set $C \subseteq E(G)$ so that
\begin{itemize}
\item $\bdc(C) < k$,
\item $s \le |V(C)| \le h$, 
\item $C$ is semi-internally connected, and
\item for all $e \in C$, $\bdc(e) < k-1$.
\end{itemize}

An \emph{$(h,k)$-bigchip} of $G$ is a set $C \subseteq E(G)$ so that
\begin{itemize}
\item $\bdc(C) < k$,
\item $\bdc(C) < |V(C)| \le h$,
\item $C$ is internally connected, and
\item there exists $e \in C$ so that $\bdc(e) \ge k-1$.
\end{itemize}

We say that an $(s,h,k)$-chip is a set $C \subseteq E(G)$ that is either an $(s,h,k)$-smallchip or an $(h,k)$-bigchip.

We will prove that if $G$ is $k$-tangle-unbreakable and does not have $(s,h,k)$-chips for certain parameters of $s$, $h$, and $k$, then $G$ is $(k^{\OO(k)}, k)$-unbreakable.
For this, we first need the following lemma.

\begin{lemma}
\label{lem:bigenoughunbreak}
Let $G$ be a hypergraph that is $(s,k-1)$-unbreakable, and let $C \subseteq E(G)$ so that $\bdc(C) < k$, $|V(C)| \ge 3s$ and $|V(\co{C})| \ge s$.
Then, $C$ is tri-well-linked.
\end{lemma}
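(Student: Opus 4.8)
The plan is to prove tri-well-linkedness via its characterization in \Cref{lem:uncrosswlappltri}: I need to show that for every triple $B_1,B_2,B_3 \subseteq C$ with $B_1 \cup B_2 \cup B_3 = C$, at least one $B_i$ has $\bdc(B_i) \ge \bdc(C)$. So suppose for contradiction that $\bdc(B_i) < \bdc(C)$ for all $i \in [3]$. The key observation is that each separation $(B_i,\co{B_i})$ has order $\bdc(B_i) < \bdc(C) < k$, so $\bdc(B_i) \le k-2$, i.e., each $(B_i,\co{B_i})$ is a separation of order $< k-1$. Since $G$ is $(s,k-1)$-unbreakable, for each $i$ we cannot have both $|V(B_i)| \ge s$ and $|V(\co{B_i})| \ge s$.

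First I would handle the sides: since $B_i \subseteq C$ we have $\co{C} \subseteq \co{B_i}$, hence $|V(\co{B_i})| \ge |V(\co{C})| \ge s$. Therefore the $(s,k-1)$-unbreakability forces $|V(B_i)| < s$ for every $i \in [3]$. But now every vertex of $V(C)$ lies in $V(e)$ for some hyperedge $e \in C$, and each such $e$ lies in at least one of $B_1,B_2,B_3$; hence $V(C) = V(B_1) \cup V(B_2) \cup V(B_3)$, giving $|V(C)| \le |V(B_1)| + |V(B_2)| + |V(B_3)| < 3s$, contradicting the hypothesis $|V(C)| \ge 3s$. This yields the desired conclusion.

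One small subtlety I would double-check: the degenerate case $\bdc(C) = 0$. If $\bdc(C) = 0$ then $C$ is trivially tri-well-linked since every $B_i \subseteq C$ has $\bdc(B_i) \ge 0 = \bdc(C)$, so the statement holds vacuously there (and the unbreakability hypothesis is not even needed). For $\bdc(C) \ge 1$ the above argument goes through cleanly, using that $\bdc(B_i) < \bdc(C) \le k-1$ implies $\bdc(B_i) \le k-2 < k-1$, which is exactly what the $(s,k-1)$-unbreakability talks about (separations of order strictly less than $k-1$). I expect no real obstacle here — this is essentially a counting argument combined with the pigeonhole-style use of unbreakability on each of the three sides; the only thing to be careful about is making sure the strict inequalities line up so that $(B_i,\co{B_i})$ genuinely witnesses a violation of $(s,k-1)$-unbreakability if both sides were large, and that $V(C)$ is indeed covered by $V(B_1),V(B_2),V(B_3)$, which follows because $\{B_1,B_2,B_3\}$ covers $C$ as a set of hyperedges.
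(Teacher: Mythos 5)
Your proof is correct and is essentially the same argument as the paper's: the paper takes a violating tripartition, uses $|V(C)|\ge 3s$ to find a part $B_i$ with $|V(B_i)|\ge s$, and notes $\bdc(B_i)<k-1$ and $|V(\co{B_i})|\ge|V(\co{C})|\ge s$ contradict $(s,k-1)$-unbreakability; you run the identical counting in the contrapositive direction (unbreakability forces all $|V(B_i)|<s$, contradicting $|V(C)|\ge 3s$). The only cosmetic difference is that you route through the cover characterization of \Cref{lem:uncrosswlappltri} rather than working directly with tripartitions, which changes nothing of substance.
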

\begin{proof}
Suppose $(B_1,B_2,B_3)$ is a tripartition of $C$ that witnesses otherwise.
Then, there exists $i \in [3]$ so that $\bdc(B_i) < k-1$, $|V(B_i)| \ge s$ and $|V(\co{B_i})| \ge |V(\co{C})| \ge s$, contradicting that $G$ is $(s,k-1)$-unbreakable.
\end{proof}

Then we prove that in $k$-tangle-unbreakable graphs, there always exists ``small'' witnesses of non-unbreakability.

\begin{lemma}
\label{lem:ktubnsmallwitness}
Let $s_1$, $s_2$, and $k$ be integers so that $s_1 \ge 3 s_2$ and $s_1 \ge k$.
If $G$ is a normal hypergraph that is $k$-tangle-unbreakable and $(s_2, k-1)$-unbreakable, but not $(s_1, k)$-unbreakable, then there exists a set $A \subseteq E(G)$ so that $\bdc(A) < k$, $s_1 \le |V(A)| \le 2 s_1$, and $|V(\co{A})| \ge s_1$.
\end{lemma}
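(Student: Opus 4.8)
The plan is to start from a separation $(A_0,\co{A_0})$ witnessing that $G$ is not $(s_1,k)$-unbreakable, so $\bdc(A_0)<k$ and $|V(A_0)|,|V(\co{A_0})|\ge s_1$, and to ``shrink'' one of its two sides until its vertex set has size between $s_1$ and $2s_1$ while keeping order $<k$ and the other side of vertex size $\ge s_1$. Swapping the names $A_0$ and $\co{A_0}$ if necessary, assume $|V(A_0)|\le|V(\co{A_0})|$. If already $|V(A_0)|\le 2s_1$, then $A=A_0$ satisfies all three requirements, so we may assume $|V(A_0)|>2s_1$, which forces $|V(\co{A_0})|>2s_1$ as well. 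In this case $A_0$ is not a single hyperedge: if $A_0=\{e\}$ then, since $G$ is normal, $\bdc(A_0)=|V(e)|=|V(A_0)|\ge s_1\ge k$, contradicting $\bdc(A_0)<k$; and of course $A_0\ne\emptyset$. So $|A_0|\ge 2$, and symmetrically $|\co{A_0}|\ge 2$.

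Next I would establish that $(A_0,\co{A_0})$ is doubly tri-well-linked: since $|V(A_0)|,|V(\co{A_0})|\ge s_1\ge 3s_2$ and $G$ is $(s_2,k-1)$-unbreakable, \Cref{lem:bigenoughunbreak} applied to $A_0$ and to $\co{A_0}$ gives that both are tri-well-linked. As $G$ is $k$-tangle-unbreakable and $\bdc(A_0)<k$, \Cref{lem:tangleunbreakablebwsmall} yields $\bw(G\rescliqs A_0)\le\bdc(A_0)$ or $\bw(G\rescliqs\co{A_0})\le\bdc(A_0)$; since both sides of the separation are now large, everything that follows is symmetric in $A_0$ and $\co{A_0}$, so assume $\bw(G\rescliqs\co{A_0})<k$. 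Now $G\rescliqs\co{A_0}$ is normal (vertices in $\bd(A_0)$ lie in $e_{\co{A_0}}$ and in a hyperedge of $A_0$; vertices of $V(A_0)\setminus\bd(A_0)$ lie only in hyperedges of $A_0$, of which there are at least two), and it has $|A_0|+1\ge 3$ hyperedges, so it has a branch decomposition $(T,\lmap)$ of width $w\le\bdc(A_0)<k$, and since for a normal hypergraph with at least two hyperedges the width of any branch decomposition is at least its rank, every hyperedge $e$ of $G\rescliqs\co{A_0}$ — in particular every $e\in A_0$ — has $|V(e)|<k$.

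I would then root $T$ at the leaf $\ell^*$ with $\lmap(\ell^*)=e_{\co{A_0}}$, and for each descendant $t$ of the child $r$ of $\ell^*$ let $D_t\subseteq A_0$ be the set of hyperedges at leaves below $t$, so that $D_r=A_0$, $e_{\co{A_0}}\notin D_t$, and $D_t=D_{c_1}\sqcup D_{c_2}$ for internal $t$ with children $c_1,c_2$. For each such $t$, the pair $(D_t,E(G)\setminus D_t)$ is a separation of $G$ with $\bdc_G(D_t)=\bdc_{G\rescliqs\co{A_0}}(D_t)\le w<k$, with $V(D_t)\subseteq V(A_0)$, and with $|V(E(G)\setminus D_t)|\ge|V(\co{A_0})|\ge s_1$. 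The set of descendants $t$ of $r$ with $|V(D_t)|>s_1$ is a non-empty prefix of the $r$-subtree — it contains $r$ since $|V(D_r)|=|V(A_0)|>2s_1$, and if $t$ is in it so is every ancestor of $t$ below $r$ because $D$ only grows upward — so I pick a minimal such $t^*$. It cannot be a leaf of $T$, since a leaf has $|V(D_{t^*})|=|V(\lmap(t^*))|<k\le s_1$; hence its two children $c_1,c_2$ satisfy $|V(D_{c_i})|\le s_1$, so $|V(D_{t^*})|\le|V(D_{c_1})|+|V(D_{c_2})|\le 2s_1$. Taking $A=D_{t^*}$ then gives $\bdc_G(A)<k$, $s_1<|V(A)|\le 2s_1$, and $|V(\co A)|\ge s_1$, proving the lemma.

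The step I expect to be the main obstacle is ensuring the shrinking along the branch decomposition never overshoots, i.e.\ that $|V(D_t)|$ cannot drop from above $2s_1$ directly to below $s_1$ in one move down the tree; this is precisely what the rank bound $|V(e)|<k\le s_1$ on all hyperedges of $G\rescliqs\co{A_0}$ — forced by $\bw(G\rescliqs\co{A_0})<k$ together with normality — handles, since it bounds the vertex set of any single-hyperedge leaf set and hence any two-leaf set. The secondary points requiring care are checking that $G\rescliqs\co{A_0}$ is normal with at least three hyperedges (so its width dominates its rank), and that $\bd_G$ and $\bd_{G\rescliqs\co{A_0}}$ agree on subsets of $A_0$ so that orders computed in the smaller hypergraph transfer back to $G$.
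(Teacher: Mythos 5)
Your proposal is correct and follows essentially the same route as the paper's proof: take a violating separation with both sides of vertex-size $>2s_1$, use \Cref{lem:bigenoughunbreak} to get double tri-well-linkedness, apply \Cref{lem:tangleunbreakablebwsmall} to get a branch decomposition of width $<k$ on one side, root it at the leaf of the composite hyperedge, and walk down to the deepest node whose subtree still covers too many vertices (the paper picks the deepest node with $\ge 2s_1$ vertices and passes to a child, you pick the deepest with $>s_1$ and stay there — the same argument either way). Your extra checks (normality of $G\rescliqs\co{A_0}$, the rank bound forcing leaves to be small, and the transfer of $\bd$ between the two hypergraphs) are exactly the points the paper relies on implicitly, and they all hold.
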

\begin{proof}
Let $(A,\co{A})$ be a separation of $G$ of order $<k$ so that $|V(A)|,|V(\co{A})| \ge s_1$.
If $|V(A)| \le 2 s_1$ or $|V(\co{A})| \le 2 s_1$ we are done, so assume that $|V(A)|,|V(\co{A})| > 2 s_1$.
By \Cref{lem:bigenoughunbreak}, $(A,\co{A})$ is doubly tri-well-linked.
Now, by \Cref{lem:tangleunbreakablebwsmall}, either $\bw(G \rescliqs A) \le \bdc(A)$ or $\bw(G \rescliqs \co{A}) \le \bdc(A)$.
Assume \wilog that $\bw(G \rescliqs A) \le \bdc(A)$.

Let $\Tc = (T,\lmap)$ be a branch decomposition of $G \rescliqs A$ of width at most $\bdc(A) < k$.
Consider $T$ as rooted at the leaf $\lmap^{-1}(e_A)$, where $e_A$ is the hyperedge of $G \rescliqs A$ corresponding to $A$.
Now, let $x \in V(T)$ be a node so that $|V(\lmap(xp))| \ge 2s_1$, where $p = \parent(x)$, selected so that it maximizes the distance to the root.
Note that such a node exists because $|V(\co{A})| > 2 s_1$.
It cannot be that $x$ is a leaf, because $2 s_1 \ge k$ and $G$ is normal.
Therefore, $x$ has exactly two children $c_1$ and $c_2$.
By the selection of $x$, we have that $|V(\lmap(c_i x))| < 2s_1$ for both $i \in [2]$, and because $V(\lmap(xp)) = V(\lmap(c_1 x)) \cup V(\lmap(c_2 x))$, we have that $|V(\lmap(c_i x))| \ge s_1$ for at least one $i \in [2]$.
Such an $i$ gives a set $A' = \lmap(c_i x)$ so that $s_1 \le |V(A')| \le 2 s_1$ and $|V(\co{A'})| \ge s_1$.
\end{proof}

Then we prove that $k$-tangle-unbreakability combined with the absence of chips implies unbreakability.
To avoid writing complicated expressions, let us define that for an integer $p \ge 0$, $s(p) = p^{2p-1}$, $u(p) = p \cdot s(p) = p^{2p}$, and $h(p) = 2 \cdot u(p) = 2 \cdot p^{2p}$ (here define $0^0 = 0$).
In the rest of this section, let us reserve the symbols $s$, $u$, and $h$ for these functions.
Let us immediately remark that $3 \cdot u(p-1) < s(p)$ for all $p \ge 2$.

\begin{lemma}
\label{lem:tubrimprunbrklem}
If a normal hypergraph $G$ is $k$-tangle-unbreakable and does not have $(s(p), h(p), p)$-chips for any $p \in [k]$, then $G$ is $(u(k), k)$-unbreakable.
\end{lemma}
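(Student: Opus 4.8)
The plan is to prove the lemma by induction on $k$, the absence of $(s(p),h(p),p)$-chips for $p<k$ being what feeds the induction hypothesis. For the base case $k=1$ we have $u(1)=1$, so we must show $\primal(G)$ is connected. If it were not, an order-$0$ separation $(A,\co{A})$ of $G$ with $V(A),V(\co{A})$ both nonempty would witness this; choosing one vertex on each side places them in distinct connected components of $\primal(G)$, which define two distinct tangles of order $1$ distinguished by $(A,\co{A})$, contradicting $1$-tangle-unbreakability (the chip hypothesis is not needed here). For the inductive step, $k\ge 2$: since $k$-tangle-unbreakability implies $(k-1)$-tangle-unbreakability, and $G$ has no $(s(p),h(p),p)$-chip for $p\in[k-1]$, the induction hypothesis gives that $G$ is $(u(k-1),k-1)$-unbreakable. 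Assume for contradiction that $G$ is not $(u(k),k)$-unbreakable; then (as $s(k)\le u(k)$) it is not $(s(k),k)$-unbreakable, and since $3u(k-1)<s(k)$ and $s(k)\ge k$ I may apply \Cref{lem:ktubnsmallwitness} with $s_1=s(k)$, $s_2=u(k-1)$ to obtain a separation $(A,\co{A})$ with $\bdc(A)<k$, $|V(A)|,|V(\co{A})|\ge s(k)$ (and $|V(A)|\le 2s(k)$, which I will not need).

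First I pin down $\bdc(A)=k-1$: if $\bdc(A)\le k-2$, then $(A,\co{A})$ would be a separation of order $<k-1$ with both sides of size $\ge s(k)>u(k-1)$, contradicting the $(u(k-1),k-1)$-unbreakability of $G$. Since $|V(A)|,|V(\co{A})|\ge s(k)\ge 3u(k-1)$, \Cref{lem:bigenoughunbreak} (applied once with $C=A$ and once with $C=\co{A}$) shows both $A$ and $\co{A}$ are tri-well-linked, so $(A,\co{A})$ is doubly tri-well-linked of order $k-1<k$; hence \Cref{lem:tangleunbreakablebwsmall} yields, after possibly swapping $A$ and $\co{A}$, that $H:=G\rescliqs A$ has $\bw(H)\le k-1$.

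Next I observe the structure of $H$. It is normal (every vertex of $V(A)$ lies in at least two hyperedges of $H$), and $\primal(H)$ is connected (connectivity of $\primal(G)$ follows from $k$-tangle-unbreakability exactly as in the base case, and is preserved by $\rescliqs A$ since $\bd(A)$ becomes a clique); combined with $\bw(H)\le k-1$ and normality this forces every hyperedge of $H$ to have rank $\le k-1$, and in particular $|E(H)|\ge 3$ (else $|V(H)|\le k-1<s(k)\le|V(H)|$). Fixing a branch decomposition $\Tc=(T,\lmap)$ of $H$ of width $\le k-1$ rooted at $\lmap^{-1}(e_A)$, I descend repeatedly into the child whose subtree carries the larger vertex set, steering the descent so that the current subtree separation stays (semi-)internally connected. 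Since leaf separations carry at most $k-1$ vertices while the first step carries $|V(A)|\ge s(k)$, the walk terminates at an internal node giving a separation $(B,\co{B})$ of $H$ with $e_A\notin B$, $\bdc_H(B)\le k-1<k$, $B$ (semi-)internally connected, and $s(k)\le|V(B)|<2s(k)$ (the two children of the terminal node each carry $<s(k)$ vertices, so their union carries $<2s(k)$). As $e_A\notin B$, the set $B$ is a set of hyperedges of $G$ with $\bd_G(B)=\bd_H(B)<k$ and the same vertex set, and $2s(k)\le h(k)$.

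It remains to turn $B$ into an actual chip with parameter $p=k$, and this is the main obstacle. If every hyperedge of $B$ has rank $<k-1$, then $B$ (kept semi-internally connected through the descent, with $|V(B)|\in[s(k),h(k)]$) is an $(s(k),h(k),k)$-smallchip of $G$. If some $e\in B$ has rank $\ge k-1$, I pass to the internal component $C\ni e$ of $B$: it is internally connected with $\bdc(C)\le\bdc(B)<k$ and $|V(C)|\le|V(B)|\le h(k)$, hence an $(h(k),k)$-bigchip whenever $\bdc(C)<|V(C)|$; the only failure is $C=\{e\}$ with $V(e)=\bd(C)$ of size exactly $k-1$, and the no-bigchip hypothesis then forces all such border hyperedges to be strippable, after which the remainder is again a smallchip. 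Either way $G$ has an $(s(p),h(p),p)$-chip with $p=k$, contradicting the hypothesis, so $G$ is $(u(k),k)$-unbreakable. The genuinely delicate part --- and where the exact shapes $s(p)=p^{2p-1}$, $u(p)=p\cdot s(p)$, $h(p)=2u(p)$ and the inequality $3u(p-1)<s(p)$ are all used --- is keeping $|V|$ inside the window $[s(k),h(k)]$ through the $\pm(k-1)$ fluctuations caused by refining to a (semi-)internally connected set and by stripping rank-$(k-1)$ border hyperedges, together with the case split on hyperedge ranks that chooses between smallchips and bigchips.
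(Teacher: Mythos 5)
Your reduction to a small witness is on the right track (the paper uses the same device: take the least $p$ with $G$ not $(u(p),p)$-unbreakable and invoke \Cref{lem:ktubnsmallwitness}), but the step that actually produces a chip has a genuine gap. First, the "steered descent" is not available: in a fixed branch decomposition of width $\le k-1$ the child sets $\lmap(\vec{cx})$ are arbitrary, and there is no way to choose the descent so that the current set stays internally or semi-internally connected; you would have to modify the decomposition itself. (This detour is also redundant: the set $A$ you already obtained from \Cref{lem:ktubnsmallwitness} has $s(k)\le|V(A)|\le 2s(k)$ and $\bdc(A)<k$, which is exactly what your second descent is trying to reproduce; the doubly-tri-well-linked/branchwidth argument you re-run is the internals of that lemma.)

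Second, and more fundamentally, insisting on a chip at level $p=k$ cannot work. To get connectivity you must pass from $A$ to its semi-internally connected components, and this partition can have up to $2^{k-1}$ parts whose borders are arbitrary subsets of $\bd(A)$; the largest part may have only about $s(k)/2^{k-1}$ vertices and border size $d<k-1$, so the level-$k$ lower bound $|V(C)|\ge s(k)$ is irretrievably lost. This is not a "$\pm(k-1)$ fluctuation". The paper's proof resolves it by \emph{changing level}: group the components by border size $d\in[0,p-1]$, note some group carries $\ge u(p)/p=s(p)$ vertices and has at most $\binom{p-1}{d}\le p^{p-1-d}$ members, so some component $C$ with $\bdc(C)=d$ has $|V(C)|\ge s(p)/p^{p-1-d}\ge s(d+1)$; the upper bound $|V(C)|\le h(d+1)$ then follows from the minimality of $p$, and the smallchip/bigchip case split is performed at level $d+1$, not $k$. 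This multi-level counting is precisely why the hypothesis excludes chips for \emph{all} $p\in[k]$ and why $s(p)=p^{2p-1}$ is chosen as it is; your proposal never uses the absence of chips at levels strictly between $1$ and $k$ except through the induction hypothesis, and without the level-shifting argument the contradiction cannot be reached.
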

\begin{proof}
Suppose that $G$ does not have $(s(p), h(p), p)$-chips for any $p \in [k]$, but is not $(u(k), k)$-unbreakable.
Let $p \le k$ be the smallest integer so that $G$ is not $(u(p), p)$-unbreakable.
Note that if $p \ge 2$, we have that $u(p) > 3 \cdot u(p-1)$, and if $p=1$, we have that $G$ is $(y,p-1)$-unbreakable for all $y$.
By \Cref{lem:ktubnsmallwitness}, there exists a set $A \subseteq E(G)$ so that $\bdc(A) < p$, $u(p) \le |V(A)| \le h(p)$, and $|V(\co{A})| \ge u(p)$.

Now, let $C_1,\ldots,C_{\ell}$ be the unique partition of $A$ into its semi-internally connected components.
Furthermore, for each $d \in [0,p-1]$, let $\compset_d$ be the collection of the components $C_i$ with $\bdc(C_i) = d$.
There exists $d \in [0,p-1]$ so that $\sum_{C \in \compset_d} |V(C)| \ge u(p)/p \ge s(p)$.
Because $\compset_d$ contains at most $\binom{p-1}{d} \le p^{p-1-d}$ sets, there exists $C \in \compset_d$ so that 
\[|V(C)| \ge s(p)/p^{p-1-d} = p^{2p-1}/p^{p-1-d} \ge p^{p+d} \ge (d+1)^{2(d+1)-1} \ge s(d+1).\]
If $|V(C)| > h(d+1)$, then because $|V(A)| \le h(p)$ we have that $d+1<p$, and because $h(d+1) > u(d+1)$, this would contradict the $(u(d+1), d+1)$-unbreakability of $G$.
Therefore, $|V(C)| \le h(d+1)$.

We then have two cases.
First, if $C$ has an internal component $C'$ with $|V(C')| > d$ containing $e \in C'$ with $\bdc(e) \ge d$, then $C'$ would be a $(h(d+1), d+1)$-bigchip.
Otherwise, all internal components $C'$ of $C$ containing hyperedges $e \in C'$ with $\bdc(e) \ge d$ are single-hyperedge components with $V(C') = \bd(C)$.
In this case, let $C''$ be the union of the internal components of $C$ not containing hyperedges $e$ with $\bdc(e) \ge d$.
Because $|V(C)| \ge h(d+1) > d = \bdc(C)$, $C''$ is non-empty, implying in fact that $V(C'') = V(C)$, implying that $C''$ is a $(s(d+1), h(d+1), d+1)$-smallchip.
This is a contradiction as $d<k$.
\end{proof}

Then the goal will be to decompose a given hypergraph by chips.
We first need the following lemma about bigchips, which also explains the motivation behind the definition of bigchips.

\begin{lemma}
\label{lem:connchiptriwl}
Let $G$ be a normal hypergraph and $k \ge 1$ an integer so that $G$ is $(u(k-1),k-1)$-unbreakable and $k$-tangle-unbreakable.
Let also $|V(G)| \ge 2 \cdot h(k) + 4 \cdot u(k-1)$.
If $C \subseteq E(G)$ is an $(h(k), k)$-bigchip of $G$, then 
\begin{itemize}
\item $(C,\co{C})$ is doubly tri-well-linked,
\item $|C|,|\co{C}| \ge 2$, and
\item $\bdc(C) = k-1$.
\end{itemize}
\end{lemma}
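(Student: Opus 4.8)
The three conclusions I will establish in turn, and the first two are quick. For $\lvert C\rvert,\lvert\co C\rvert\ge 2$: since $G$ is normal, a single hyperedge $e'$ has $\bd(\{e'\})=V(e')$, so if $C=\{e'\}$ then $\bdc(C)=\lvert V(C)\rvert$, contradicting $\bdc(C)<\lvert V(C)\rvert$; and if $\co C=\{e'\}$ then $\bdc(C)=\bdc(\co C)=\lvert V(\co C)\rvert\ge\lvert V(G)\rvert-\lvert V(C)\rvert\ge h(k)+4u(k-1)$, contradicting $\bdc(C)<k$. Note also that by normality the hyperedge $e\in C$ provided by the bigchip definition has $\lvert V(e)\rvert=\bdc(e)\ge k-1$, and that $\lvert V(\co C)\rvert\ge\lvert V(G)\rvert-\lvert V(C)\rvert\ge h(k)+4u(k-1)$; this large lower bound on $\lvert V(\co C)\rvert$ will be used repeatedly.

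Next, $\bdc(C)=k-1$. Certainly $\bdc(C)\le k-1$, so suppose $\bdc(C)<k-1$. The $e$-tangle of $G$ has order $\bdc(e)\ge k-1$, and its truncation $\tang_1$ to order $k-1$ is a tangle of order $k-1$ with $C\notin\tang_1$ (as $e\in C$), hence $\co C\in\tang_1$ by the second tangle axiom. On the other hand $G$ is normal, $(u(k-1),k-1)$-unbreakable, and has $\lvert V(G)\rvert\ge 3u(k-1)$, so \Cref{lem:vertexcardtangle} supplies the vertex-cardinality-tangle $\tang_2$ of order $k-1$; since $\lvert V(C)\rvert\le h(k)<\lvert V(\co C)\rvert$ we get $C\in\tang_2$. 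Then $(\co C,C)$ is a separation of order $<k$ distinguishing $\tang_1$ from $\tang_2$, contradicting $k$-tangle-unbreakability. So $\bdc(C)=k-1$.

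Finally, $(C,\co C)$ doubly tri-well-linked. By $k$-tangle-unbreakability there is at most one tangle of $G$ of order $k-1$, so $\tang_1=\tang_2=:\tang$; thus for every $A$ with $\bdc(A)<k-1$ we have $A\in\tang$ iff $e\notin A$ iff $\lvert V(A)\rvert\le\lvert V(\co A)\rvert$. For $C$: if $X$ satisfies $\{e\}\subseteq X\subseteq C$ and $\bdc(X)<k-1$, then $X\notin\tang$ (since $e\in X$), so $\co X\in\tang$ and hence $\lvert V(\co X)\rvert\le\lvert V(X)\rvert$; but $\co X\supseteq\co C$ forces $\lvert V(\co X)\rvert>\lvert V(C)\rvert\ge\lvert V(X)\rvert$, a contradiction. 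Hence every such $X$ has $\bdc(X)\ge k-1=\bdc(C)$, i.e.\ $C$ is linked into $\{e\}$; as $\bdc(\{e\})=\lvert V(e)\rvert\ge k-1$ and $\{e\}$ is trivially tri-well-linked, \Cref{lem:linkedcombtriwelllinked} gives that $C$ is tri-well-linked. For $\co C$: suppose it is not, so there is a tripartition $(B_1,B_2,B_3)$ of $\co C$ with $\bdc(B_i)<k-1$ for all $i$; each $(B_i,\co{B_i})$ has order $<k-1$, so by $(u(k-1),k-1)$-unbreakability $\min(\lvert V(B_i)\rvert,\lvert V(\co{B_i})\rvert)<u(k-1)$. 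If $\lvert V(B_i)\rvert<u(k-1)$ for all $i$ then $\lvert V(\co C)\rvert\le\sum_i\lvert V(B_i)\rvert<3u(k-1)$, contradicting $\lvert V(\co C)\rvert\ge h(k)+4u(k-1)$. Otherwise some part, say $B_1$, is large, and then $\co{B_1}=C\cup B_2\cup B_3$ has $\lvert V(\co{B_1})\rvert<u(k-1)$ while still containing $e$ with $\bdc(e)\ge k-1>\bdc(\co{B_1})$; I would rule this out by passing to $G\rescliqs B_1$, which is normal, $(u(k-1),k-1)$-unbreakable, has branchwidth $\ge k-1$ (from the $e$-tangle) but only fewer than $u(k-1)$ vertices, and then playing branchwidth against unbreakability via \Cref{lem:tangleunbreakablebwsmall} and the balanced-separator argument used in the proof of \Cref{lem:ktubnsmallwitness}.

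The main obstacle is precisely this last case: to invoke \Cref{lem:tangleunbreakablebwsmall} one needs $(B_1,\co{B_1})$ (or a suitable refinement of it) to be doubly tri-well-linked even though $\co{B_1}$ is small, which \Cref{lem:bigenoughunbreak} does not hand over directly. I expect the fix is to choose $(B_1,B_2,B_3)$ minimizing $\sum_i\lvert B_i\rvert$ — so that, as in the proof of \Cref{lem:uncrosswlappltri}, the parts are disjoint and well-behaved — and to exploit the concrete description of $\tang$ as the truncated $e$-tangle, so that the small side $\co{B_1}$ can be shown to carry a tangle of $G$ distinguished from $\tang$ by a separation of order $<k$. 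The preceding steps, by contrast, are routine consequences of the tangle and linkedness lemmas already proved.
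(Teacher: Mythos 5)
Your argument for $|C|,|\co{C}|\ge 2$, for $\bdc(C)=k-1$, and for the tri-well-linkedness of $C$ is correct and is essentially the paper's argument in a different packaging: the paper isolates a single claim (``there is no $C'$ with $e\in C'$, $|V(C')|\le h(k)+2u(k-1)$ and $\bdc(C')<k-1$'', proved by playing the $e$-tangle against the vertex-cardinality tangle of order $k-1$) and derives all three bullet points from it, whereas you phrase the same two-tangle contradiction via linkedness of $C$ into $\{e\}$ and \Cref{lem:linkedcombtriwelllinked}. Both are fine.

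The gap is exactly where you flag it, but the repair is not the branchwidth detour you sketch --- it is one more application of the same two-tangle argument you have already used twice. In the remaining case you have $|V(B_1)|\ge u(k-1)$, hence $|V(\co{B_1})|<u(k-1)$ by $(u(k-1),k-1)$-unbreakability, and $\co{B_1}=C\cup B_2\cup B_3$ contains $e$ with $\bdc(\co{B_1})=\bdc(B_1)<k-1$. In your notation: $\co{B_1}\notin\tang$ because $e\in\co{B_1}$, so $B_1\in\tang$, so $|V(B_1)|\le|V(\co{B_1})|<u(k-1)$, contradicting $|V(B_1)|\ge u(k-1)$. (Equivalently, $B_1$ lies in the $e$-tangle and $\co{B_1}$ lies in the vertex-cardinality tangle of order $k-1$, so $(B_1,\co{B_1})$ is a separation of order $<k$ distinguishing two tangles.) No passage to $G\rescliqs B_1$, no branchwidth, and no minimization of the tripartition is needed; indeed \Cref{lem:tangleunbreakablebwsmall} would require doubly tri-well-linkedness of $(B_1,\co{B_1})$, which is precisely what is not available, so that route would not close.
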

\begin{proof}
The facts that $G$ is normal and $|V(C)| > \bdc(C)$ imply $|C| \ge 2$.
Also, the facts that $|V(G)| > |V(C)|$ and $G$ is normal imply $|\co{C}| \ge 2$.

Let $e \in C$ so that $\bdc(e) \ge k-1$.

\begin{claim}
\label{lem:connchiptriwl:claim1}
There is no set $C'$ with $e \in C'$, $|V(C')| \le h(k)+2 \cdot u(k-1)$, and $\bdc(C') < k-1$.
\end{claim}
\begin{claimproof}
Suppose such a $C'$ exists.
Let $\tang_1$ be the $e$-tangle of $G$, and $\tang_2$ the vertex-cardinality-tangle of order $k-1$, which exists by \Cref{lem:vertexcardtangle} because $G$ is $(u(k-1),k-1)$-unbreakable.
Now, $\co{C'} \in \tang_1$ and $C' \in \tang_2$, contradicting that $G$ is $k$-tangle-unbreakable.
\end{claimproof}

If $\bdc(C)<k-1$, then $C' = C$ would contradict \Cref{lem:connchiptriwl:claim1}.

Now suppose that $C$ is not tri-well-linked, and let $(B_1,B_2,B_3)$ be a tripartition of $C$ with $\bdc(B_i) < \bdc(C)$ for all $i \in [3]$.
Now, if $e \in B_i$, then $B_i$ contradicts \Cref{lem:connchiptriwl:claim1}, but it must be that $e \in B_i$ for some $i \in [3]$.

Then suppose that $\co{C}$ is not tri-well-linked, and let $(B_1,B_2,B_3)$ be a tripartition of $\co{C}$ with $\bdc(B_i) < \bdc(C)$ for all $i \in [3]$.
Because $|V(C)| \le h(k)$, there is $i \in [3]$ with $|V(B_i)| \ge u(k-1)$.
Assume \wilog that $|V(B_1)| \ge u(k-1)$.
Now, $|V(B_2)| < u(k-1)$ and $|V(B_3)| < u(k-1)$ because $G$ is $(u(k-1),k-1)$-unbreakable.
Therefore, $|V(C \cup B_2 \cup B_3)| \le h(k)+2 \cdot u(k-1)$, and $\bdc(C \cup B_2 \cup B_3) = \bdc(B_1) < k-1$.
Therefore, $C \cup B_2 \cup B_3$ contradicts \Cref{lem:connchiptriwl:claim1}.
\end{proof}

Then we show the statement of \Cref{lem:connchiptriwl} applies in fact to all chips, under certain conditions.

\begin{lemma}
\label{lem:turbchipproperties}
Let $G$ be a normal hypergraph and $k \ge 1$ an integer so that $G$ is $k$-tangle-unbreakable.
If $|V(G)| \ge 6 \cdot h(k)$ and $G$ has no $(s(p), h(p), p)$-chips for any $p<k$, and $C \subseteq E(G)$ is an $(s(k), h(k), k)$-chip, then 
\begin{itemize}
\item $(C,\co{C})$ is doubly tri-well-linked,
\item $|C|,|\co{C}| \ge 2$,
\item $\bdc(C) = k-1$, and
\item $G \rescliqs C$ has no $(s(p), h(p), p)$-chips for any $p<k$.
\end{itemize}
\end{lemma}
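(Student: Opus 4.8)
\textbf{Proof plan for \Cref{lem:turbchipproperties}.}
The plan is to first establish that the $k$-tangle-unbreakability of $G$ together with the assumption that $G$ has no $(s(p),h(p),p)$-chips for any $p<k$ yields the unbreakability hypotheses needed to invoke \Cref{lem:connchiptriwl}. Concretely, I would first observe that by \Cref{lem:tubrimprunbrklem} applied ``up to level $k-1$'', $G$ is $(u(k-1),k-1)$-unbreakable: indeed, the hypothesis that $G$ has no $(s(p),h(p),p)$-chips for any $p<k$ is exactly the hypothesis of \Cref{lem:tubrimprunbrklem} with $k$ replaced by $k-1$, and $G$ remains $k$-tangle-unbreakable (hence also $(k-1)$-tangle-unbreakable, since having no separation of order $<k$ distinguishing two tangles implies having no separation of order $<k-1$ distinguishing two tangles). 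This gives the $(u(k-1),k-1)$-unbreakability hypothesis of \Cref{lem:connchiptriwl}.

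Next I would split into the two cases according to whether $C$ is an $(h(k),k)$-bigchip or an $(s(k),h(k),k)$-smallchip. In the bigchip case, the size bound $|V(G)| \ge 6 \cdot h(k) \ge 2 \cdot h(k) + 4 \cdot u(k-1)$ (using $h(k) > 2 \cdot u(k-1)$, which follows from $h(p) = 2u(p)$ and monotonicity) lets me apply \Cref{lem:connchiptriwl} directly, yielding the first three bullet points. In the smallchip case, $C$ is by definition semi-internally connected with $s(k) \le |V(C)| \le h(k)$ and $\bdc(e) < k-1$ for all $e \in C$; here I would argue that $C$ is in fact tri-well-linked by \Cref{lem:bigenoughunbreak} (since $G$ is $(u(k-1),k-1)$-unbreakable, $|V(C)| \ge s(k) \ge 3 \cdot u(k-1) \ge 3 \cdot s(k-1)$ — wait, I need $|V(C)| \ge 3s(k-1)$; in fact $s(k) = k^{2k-1} \gg 3(k-1)^{2k-3} = 3s(k-1)$ for $k \ge 2$, and the $k=1$ case is degenerate — and $|V(\co{C})| \ge |V(G)| - h(k) \ge 5 h(k) \ge u(k-1)$). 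The complication in the smallchip case is showing $\bdc(C) = k-1$ and $|C|,|\co{C}| \ge 2$ and that $\co{C}$ is tri-well-linked. For $\co{C}$, the same argument as in \Cref{lem:connchiptriwl} works: any bad tripartition of $\co{C}$ has one large part (of vertex-size $\ge u(k-1)$, since $|V(C)|$ is small), and absorbing the two small parts and $C$ into a single set yields a small set of border $<k-1$ that would distinguish the relevant tangle from the vertex-cardinality tangle; but this needs the property that $\bdc$ of the resulting set is $<k-1$, which holds since that border equals $\bdc$ of the large part $<\bdc(\co{C}) = \bdc(C) < k-1$ (by the definition of a smallchip, all $e \in C$ have $\bdc(e) < k-1$, but I still need $\bdc(C) < k$; actually $\bdc(C) = \bdc(\co C)$ and in the smallchip definition there is no explicit claim that $\bdc(C) = k-1$). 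So I would need a separate argument, analogous to \Cref{lem:connchiptriwl:claim1}, ruling out $\bdc(C) < k-1$: if $\bdc(C) < k-1$ then $C$ itself (being semi-internally connected, hence in particular containing an internal component spanning $\bd(C)$, and with $|V(C)| \le h(k)$ and all $e \in C$ having $\bdc(e) < k-1$) would be an $(s(k-1), h(k-1), k-1)$-chip or could be cut down to one — I should check carefully whether $C$ decomposes into smaller chips at level $\le k-1$; most likely one argues that some semi-internally-connected or internally-connected piece of $C$ of vertex-size in the right window is a $(s(p),h(p),p)$-chip for some $p \le k-1$, contradicting the hypothesis. Alternatively, and more cleanly, one shows $\bdc(C) < k-1$ contradicts $k$-tangle-unbreakability via the $e$-tangle/vertex-cardinality-tangle pair exactly as in \Cref{lem:connchiptriwl:claim1}, but with $e$ replaced by an arbitrary hyperedge of $C$ and noting $|V(C)| \le h(k) \le h(k) + 2u(k-1)$.

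For the fourth bullet point — that $G \rescliqs C$ has no $(s(p),h(p),p)$-chips for $p<k$ — the plan is to argue by contradiction: suppose $D \subseteq E(G \rescliqs C)$ is such a chip. If $e_C \notin D$, then $D$ (viewed in $G$) has the same border, the same semi-internal/internal-connectedness structure (since removing hyperedges outside $D$ and replacing them by $e_C$ does not change the internal components of $D$ as long as $e_C \notin D$), and the same vertex set, so $D$ would be a $(s(p),h(p),p)$-chip of $G$, a contradiction. If $e_C \in D$, then I would use the doubly tri-well-linkedness of $(C,\co{C})$ established above: $\bdc_G(D \orescliqs C) = \bdc_{G \rescliqs C}(D) < p < k$, so $D \orescliqs C$ is a separation of $G$ of order $<k-1 \le \bdc(C)$; but tri-well-linkedness (indeed well-linkedness) of $C$ forces any subset-extension behavior — actually the cleanest route is: the hyperedge $e_C$ of $G \rescliqs C$ has $\bdc(e_C) = \bdc(C) = k-1$, so if $D$ is a smallchip at level $p<k$ then $e_C \in D$ violates the requirement that all $e \in D$ have $\bdc(e) < p-1 \le k-2 < k-1$; and if $D$ is a bigchip at level $p<k$, then $\bdc(D) < p \le k-1 = \bdc(e_C)$, but $e_C \in D$ and $D$ internally connected with $|V(D)| > \bdc(D)$ — hmm, here I'd want that $D$ containing $e_C$ with $\bdc(e_C) = k-1 > \bdc(D)$ forces $V(e_C) \not\subseteq \bd(D)$, so $e_C$ lies in a multi-hyperedge internal component of $D$; then $D \orescliqs C$ gives a separation of $G$ of order $\bdc(D) < k-1$ with $C \subsetneq D \orescliqs C$, and since $C$ is well-linked with $\bdc(C) = k-1 > \bdc(D \orescliqs C)$... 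I need to derive a contradiction, probably via \Cref{lem:uncrosswlappl}: $C$ well-linked and $C \subseteq D \orescliqs C$ with $\bdc(D \orescliqs C) < \bdc(C)$ does not immediately contradict anything, so I should instead transfer $D \orescliqs C$ down to $G$ as a chip — I would argue $D \orescliqs C$ (or a suitable internal/semi-internal piece of it, after possibly shrinking its vertex set back into the $[s(p),h(p)]$ window, which requires care) is a chip of $G$ at some level $< k$, again contradicting the hypothesis.

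\textbf{Expected main obstacle.} I expect the delicate part to be the $e_C \in D$ subcase of the fourth bullet point: ensuring that when we ``un-contract'' a chip $D$ of $G \rescliqs C$ that contains $e_C$, the resulting set (or an appropriate sub-piece of it) is genuinely a chip of $G$ at some level $p' < k$ — in particular, controlling its vertex-set size so it lands in the window $[s(p'), h(p')]$, and matching up the semi-internal-connectedness (for smallchips) or internal-connectedness-plus-a-large-border-hyperedge (for bigchips) conditions through the $\orescliqs C$ operation. The transitivity lemmas for semi-internal connectedness (\Cref{lem:conddwsic}) and for tri-well-linkedness (\Cref{lem:triwelllinkedcom}, \Cref{lem:linkedcombtriwelllinked}) will be the main tools, together with the border-size accounting that already appears in \Cref{lem:tubrimprunbrklem}'s proof. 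Everything else — the first three bullets — should follow fairly mechanically from \Cref{lem:connchiptriwl}, \Cref{lem:bigenoughunbreak}, \Cref{lem:vertexcardtangle}, and the numerics $3 \cdot u(p-1) < s(p)$, $h(p) = 2u(p)$.
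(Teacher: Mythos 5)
Your treatment of the first three bullet points follows the paper's route: \Cref{lem:tubrimprunbrklem} gives $(u(k-1),k-1)$-unbreakability, \Cref{lem:connchiptriwl} handles bigchips, and \Cref{lem:bigenoughunbreak} handles smallchips. One local wobble: for a smallchip, both of the arguments you float for ruling out $\bdc(C)<k-1$ are problematic. The tangle route fails because every $e\in C$ has $\bdc(e)<k-1$, so the $e$-tangle has order $\bdc(e)$, which need not exceed $\bdc(C)$, and hence need not orient $(C,\co{C})$ at all; and the ``cut $C$ down to a level-$(k-1)$ chip'' route is not immediate since $|V(C)|\ge s(k)>h(k-1)$. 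The fix is much simpler and you already have the numerics in hand: if $\bdc(C)<k-1$, then $(C,\co{C})$ has $|V(C)|\ge s(k)\ge u(k-1)$ and $|V(\co{C})|\ge |V(G)|-h(k)\ge 5h(k)\ge u(k-1)$, directly contradicting the $(u(k-1),k-1)$-unbreakability of $G$.

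The genuine gap is exactly where you place it: the fourth bullet in the subcase where $D$ is an $(h(p),p)$-bigchip of $G\rescliqs C$ containing $e_C$. Your plan of un-contracting $D$ to a chip of $G$ does not go through: $|V(D\orescliqs C)|$ can be as large as roughly $h(p)+h(k)$, which lands in no window $[s(p'),h(p')]$, and if $C$ is a smallchip then every hyperedge of $D\orescliqs C$ has border $<k-1$, so after un-contraction there may be no hyperedge of border $\ge p-1$ left to certify a bigchip at any level. The paper avoids un-contraction entirely and argues inside $G\rescliqs C$: since $(C,\co{C})$ is doubly tri-well-linked (already established), \Cref{lem:tungunbhered} gives that $G\rescliqs C$ is $k$-tangle-unbreakable, and it is $(u(k-1),k-1)$-unbreakable because $G$ is. Then $\bdc(e_C)=k-1>\bdc(D)$, so the $e_C$-tangle $\tang_1$ of $G\rescliqs C$ contains $\co{D}$, while the vertex-cardinality tangle $\tang_2$ of order $k-1$ (which exists by \Cref{lem:vertexcardtangle} since $|V(G\rescliqs C)|\ge 5h(k)\ge 3u(k-1)$) contains $D$ because $|V(D)|\le h(p)\le h(k)$ is the small side. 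Thus $(D,\co{D})$, a separation of order $<k$, distinguishes two tangles of $G\rescliqs C$ — a contradiction. You should replace the un-contraction plan with this tangle-distinguishing argument; without it the proof of the fourth bullet is incomplete.
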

\begin{proof}
By \Cref{lem:tubrimprunbrklem}, we have that $G$ is $(u(k-1), k-1)$-unbreakable.
Now, because $|V(G)| \ge 6 \cdot h(k) \ge 2 \cdot h(k) + 4 \cdot u(k-1)$, \Cref{lem:connchiptriwl} immediately implies the first three bullet points for all $(s(k), k)$-bigchips.

For $(s(k), h(k), k)$-smallchips, we have that if $k=1$, all the properties are trivial (note that $|C|,|\co{C}| \ge 2$ follows from the fact that $G$ is normal).
For $k \ge 2$, we have that $s(k) \ge 3 \cdot u(k-1)$, and therefore by \Cref{lem:bigenoughunbreak} both $C$ and $\co{C}$ are tri-well-linked.
We also have $|V(C)|,|V(\co{C})| > \bdc(C)$, implying that $|C|,|\co{C}| \ge 2$.
The $(u(k-1), k-1)$-unbreakability of $G$ also implies that $\bdc(C) = k-1$.

It remains to prove the last bullet point.
Suppose that $D$ is a $(s(p), h(p), p)$-chip of $G \rescliqs C$ for $p<k$.
If $e_C \notin D$, then $D$ would directly also be a $(s(p), h(p), p)$-chip of $G$.
If $e_C \in D$, then $D$ is not a $(s(p), h(p), p)$-smallchip because $\bdc(e_C) = k-1$.

It remains to consider the case that $D$ is a $(h(p), p)$-bigchip of $G \rescliqs C$, and we have that $\bdc(e_C) = k-1 > p-1 \ge \bdc(D)$.
Because $(C,\co{C})$ is doubly tri-well-linked, we have that $G \rescliqs C$ is $k$-tangle-unbreakable by \Cref{lem:tungunbhered}.
We also have that $G \rescliqs C$ is $(u(k-1), k-1)$-unbreakable because $G$ is $(u(k-1), k-1)$-unbreakable.
Let $\tang_1$ be the $e_C$-tangle of $G \rescliqs C$ and $\tang_2$ the vertex-cardinality-tangle of $G \rescliqs C$ of order $k-1$, which exists by \Cref{lem:vertexcardtangle} because $|V(G \rescliqs C)| \ge 5 \cdot h(k) \ge 3 \cdot u(k-1)$.
Now, $\co{D} \in \tang_1$ and $D \in \tang_2$, contradicting that $G \rescliqs C$ is $k$-tangle-unbreakable.
\end{proof}

Now we are ready to give the algorithm to remove smallchips.

\begin{lemma}
\label{lem:tubrtoubrremovesmallchipsalg}
There is an algorithm that, given a normal hypergraph $G$ and an integer $k \ge 1$, so that
\begin{itemize}
\item $G$ is $k$-tangle-unbreakable,
\item $G$ has multiplicity at most $3$, and
\item $G$ has no $(s(p), h(p), p)$-chips for any $p < k$,
\end{itemize}
in time $k^{\OO(k \cdot (k+\rank(G)))} \cdot \|G\|$ returns a superbranch decomposition $\Tc = (T,\bag)$ of $G$ so that
\begin{itemize}
\item the internal separations of $\Tc$ are doubly tri-well-linked and have order $<k$, and
\item each torso of $\Tc$ either
\begin{itemize}
\item has at most $6 \cdot h(k)$ vertices, or
\item has no $(s(k), h(k), k)$-smallchips, no $(s(p), h(p), p)$-chips for any $p < k$, and no multiplicity-$3$-violators of rank $\neq k-1$.
\end{itemize}
\end{itemize}
\end{lemma}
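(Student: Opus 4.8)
The plan is to follow the same ``local-search chipping'' strategy used in \Cref{lem:algfromunbrtotangunbr}, but now chipping off \emph{smallchips} instead of connchips. I would maintain a superbranch decomposition $\Tc = (T,\lmap)$ of $G$, starting from the trivial one-internal-node decomposition, together with the invariants that (i) every internal separation of $\Tc$ is doubly tri-well-linked and has order $<k$, (ii) every torso of $\Tc$ is $k$-tangle-unbreakable, has multiplicity at most $3$, and has no $(s(p),h(p),p)$-chips for any $p<k$. The invariant (ii) is preserved because \Cref{lem:turbchipproperties} guarantees that decomposing along an $(s(k),h(k),k)$-chip $C$ (which has $\bdc(C)=k-1$, $|C|,|\co{C}|\ge 2$, $(C,\co{C})$ doubly tri-well-linked) leaves both sides $k$-tangle-unbreakable (via \Cref{lem:tungunbhered}), with no $(s(p),h(p),p)$-chips for $p<k$, and multiplicity at most $3$ is preserved since removing hyperedges and inserting one rank-$(k-1)$ hyperedge cannot increase multiplicity above $3$ once we also split off any rank-$(k-1)$ multiplicity-$3$-violator separately; alternatively one can split these multiplicity-violators off as trivial doubly-tri-well-linked chips first. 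The process terminates when no torso has an $(s(k),h(k),k)$-smallchip (and, after handling multiplicity-$3$-violators of rank $\ne k-1$ by splitting them off as chips, no such violators either), at which point each torso either already had $<6\cdot h(k)$ vertices initially, or by \Cref{lem:turbchipproperties} and the termination condition satisfies the second alternative in the conclusion.

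The algorithmic core is a queue $Q$ of pairs $(t,e)$ with $t\in\vint(T)$ and $e\in E(\torso(t))$, exactly as in \Cref{lem:algfromunbrtotangunbr}, maintaining the invariant that if some torso has an $(s(k),h(k),k)$-smallchip, then $Q$ contains a pair $(t,e)$ with $e$ in that smallchip. Initially $Q$ holds all pairs for the single internal node. When we pop $(t,e)$: if it is stale, discard it; otherwise, since any $(s(k),h(k),k)$-smallchip $C$ containing $e$ is semi-internally connected with $|V(C)|\le h(k)$ and $\bdc(C)<k$, and each internal component of $C$ meets $\bd(C)\subseteq V(\ninc(e))$, I would use \Cref{lem:localchipalgo} (with $I$ taken to be $\{e\}$ together with all hyperedges incident to any vertex of a guessed subset of $V(e)$ playing the role of $\bd(C)\cap V(e)$; more precisely, guess $X=\bd(C)\cap V(e)$ among the at most $2^{\rank(G)}$ subsets, note $\ninc(X)\setminus$ stuff must lie in $C$, and since $|V(C)|\le h(k)$ multiplicity-$3$ bounds $|\ninc(v)|$ appropriately or yields a multiplicity-violator) to enumerate all candidate sets $C$ in time $h(k)^{\OO(k+\rank(G))}$, also detecting multiplicity-$3$-violators. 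For each candidate I would check the smallchip conditions directly: $\bdc(C)<k$ and $|V(C)|$ in range via \Cref{lem:hypergraph_impl2}, semi-internal connectedness via \Cref{lem:alg_internal_comps}, and $\bdc(e')<k-1$ for all $e'\in C$ by iterating. If a genuine smallchip is found, apply \Cref{lem:sbdchippingalgo} to split $t$ along $C$, push all pairs for the new small-side torso $\torso(t_C)$ (which has $|C|+1$ hyperedges) and the pair $(t,e_C)$ for the large side into $Q$; invariant preservation uses that \Cref{lem:turbchipproperties} guarantees $G\rescliqs C$ has no new $(s(p),h(p),p)$-chips, hence no new smallchips, for $p<k$, and for $p=k$ smallchips on the large side still contain a pair in $Q$ since we re-enqueued $(t,e_C)$ — wait, we must re-enqueue \emph{all} pairs $(t,e')$ that were pending, or argue that a smallchip on the large side containing $e$ would have been found already; the cleaner fix is to re-enqueue $(t,e)$ itself on the large side, or note that $e$ is now on the large side so the popped-and-failed logic still applies. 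If no smallchip contains $e$, discard $(t,e)$.

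The running-time bound $k^{\OO(k(k+\rank(G)))}\cdot\|G\|$ should follow as in \Cref{lem:algfromunbrtotangunbr}: each chipping step strictly increases $|V(T)|$, bounded by $\OO(\|G\|)$; each queue operation costs $h(k)^{\OO(k+\rank(G))} = k^{\OO(k(k+\rank(G)))}$ since $h(k)=2k^{2k}$; and each hyperedge of the original $G$ generates $\OO(1)$ enqueued pairs over its lifetime (a hyperedge moves to a strictly smaller torso only $\OO(\log\|G\|)$ times, but one can charge more carefully, or simply observe the total number of pairs ever enqueued is $\OO(\|G\|)$ because every split produces a small side of size $|C|+1$ and we enqueue $\OO(|C|)$ pairs for it while the original torso shrinks, giving a telescoping bound as in the potential argument of \Cref{lem:algfromunbrtotangunbr}). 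I would also insert, before running the smallchip loop, a preprocessing pass that repeatedly splits off any rank-$\ne(k-1)$ multiplicity-$3$-violator $\{e_1,\dots,e_4\}$ as a $2$-element chip $C=\{e_1,e_2\}$ — this is doubly tri-well-linked (each side, having $\ge 2$ copies of the same hyperedge, is linked into a single such hyperedge hence tri-well-linked by \Cref{lem:linkednesscornercase,lem:linkedcombtriwelllinked}) and has order $\le\rank(G)\le k-1$ or, if rank $>k-1$, we argue it cannot be an internal adhesion of the final decomposition and handle it directly; actually since the conclusion only forbids multiplicity-$3$-violators of rank $\ne k-1$ in torsos of the \emph{second} type, it suffices to handle those, and a rank-$r$ multiplicity-$3$-violator with $r<k-1$ gives $\bdc(C)=r<k$ so the split is legal. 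The main obstacle I anticipate is the bookkeeping around \Cref{lem:localchipalgo}: that lemma enumerates internally connected sets with \emph{every} internal component meeting $I$, whereas a smallchip is only \emph{semi}-internally connected, so I must take $I$ large enough (all incident hyperedges of a guessed border subset) that each internal component of the target smallchip is forced to meet $I$, while keeping $|I|$ and the enumeration bounded — this requires the multiplicity-$3$ assumption to control $|\ninc(v)|$ for border vertices, and getting the parameters $s,p$ in the call to \Cref{lem:localchipalgo} right (with $p=3$, $s=h(k)$, $k$ the order bound) so the output count stays $h(k)^{\OO(k)}$.
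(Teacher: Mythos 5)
Your overall architecture differs from the paper's: you chip off smallchips one at a time via a queue-driven local search (modelled on \Cref{lem:algfromunbrtotangunbr}), whereas the paper enumerates \emph{all} internally connected ``prechips'' in one global pass (one call to \Cref{lem:localchipalgo} per hyperedge), groups them by border with radix sort, greedily packs each border class into a \emph{maximal disjoint family} of $(s(k),h(k),k)$-smallchips, and performs a single splitting $\Tc' \rescliqs (r,\compset)$. Maximality then immediately gives that the central torso has no smallchips (any surviving smallchip would avoid the rank-$(k-1)$ hyperedges $e_C$ and hence be a smallchip of $G$ disjoint from $\compset$). The invariant-preservation parts of your argument (via \Cref{lem:turbchipproperties,lem:tungunbhered}, and the observation that $e_C$ has $\bdc(e_C)=k-1$ so it cannot enter a smallchip) are sound and match the paper.

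The genuine gap is exactly the obstacle you flag at the end: deciding, for a popped pair $(t,e)$, whether \emph{some} smallchip contains $e$. A smallchip is only semi-internally connected: it is a union of several internal components (prechips), each with border equal to $\bd(C)$, and the component containing $e$ may have far fewer than $s(k)$ vertices, so you must locate enough \emph{other} prechips with the same border to reach $s(k)$ vertices. Those other prechips are reachable from $e$ only through the border vertices, and your proposed fix---guess $X\subseteq\bd(C)$ and feed $\ninc(X)$ into \Cref{lem:localchipalgo}---does not work: \Cref{lem:localchipalgo} requires $I\subseteq A$, but a border vertex $v\in\bd(C)$ by definition has hyperedges of $\ninc(v)$ outside $C$, and $|\ninc(v)|$ is not bounded by any function of $k$ (multiplicity at most $3$ bounds parallel copies of a hyperedge, not vertex degrees). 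So the per-pop search cannot, within $h(k)^{\OO(k+\rank(G))}$ time, either produce a smallchip or certify none contains $e$; if it discards $(t,e)$ incorrectly, the queue invariant and hence the final ``no smallchips'' guarantee fail. The natural repair is the paper's: enumerate and group all prechips globally up front, which is an $\|G\|\cdot h(k)^{\OO(k+\rank(G))}$ one-shot computation rather than a local query. Two smaller points: your preprocessing for multiplicity-$3$-violators of rank $\neq k-1$ is vacuous, since the input has multiplicity at most $3$ and thus no such violators exist (the conclusion's clause follows simply because the only new hyperedges have rank $k-1$); and your amortized running-time accounting for re-enqueued pairs is hand-waved, though it would be fixable by a potential argument if the search step itself were correct.
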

\begin{proof}
First, if $|V(G)| \le 6 \cdot h(p)$, then we return the trivial superbranch decomposition with only one internal node.
Therefore assume that $|V(G)| > 6 \cdot h(p)$.

Our first goal is to find a maximal collection $\compset$ of disjoint $(s(k),h(k),k)$-smallchips of $G$.
By maximal, we mean that every $(s(k),h(k),k)$-smallchip of $G$ intersects a set in $\compset$.

We say that a \emph{prechip} is a set $C \subseteq E(G)$ so that
\begin{itemize}
\item $|V(C)| \le h(k)$,
\item $\bdc(C) < k$,
\item $C$ is internally connected, and
\item for each $e \in C$, $\bdc(e) < k-1$.
\end{itemize}

We observe that if $C$ is an $(s(k),h(k),k)$-smallchip, then all internal components of $C$ are prechips, and that if $C_1, \ldots, C_d$ are prechips, then $C_1 \cup C_2 \cup \ldots \cup C_d$ is an $(s(k),h(k),k)$-smallchip if $\bd(C_1) = \bd(C_2) = \ldots = \bd(C_d)$ and $h(k) \ge |V(C_1 \cup \ldots \cup C_d)| \ge s(k)$.

We enumerate all prechips of $G$ with the help of \Cref{lem:localchipalgo} in time $\|G\| \cdot h(k)^{\OO(k+\rank(G))}$.
Specifically, for each $e \in E(G)$ with $\bdc(e) < k$ we apply \Cref{lem:localchipalgo} to enumerate all prechips containing $e$ in time $h(k)^{\OO(k+\rank(G))}$, and finally, we use radix sort to deduplicate the resulting collection.
We then use radix sort to in time $\|G\| \cdot h(k)^{\OO(k+\rank(G))}$ partition these prechips to collections based on $\bd(C)$.

We then construct $\compset$ as follows.
We iterate over these collections one by one, maintaining for each $e \in E(G)$ the information whether a smallchip containing $e$ has already been added to $\compset$.

Let $X \subseteq V(G)$ and let $\compset_X$ be the collection of all prechips $C$ of $G$ with $\bd(C) = X$.
Note that the sets $C \in \compset_X$ are pairwise disjoint because they are internally connected and have the same border.
We first remove from $\compset_X$ all sets $C$ so that $C$ intersects a smallchip already in $\compset$.
Second, for all remaining $C \in \compset_X$ with $|V(C)| \ge s(k)$, we add $C$ to $\compset$ and remove $C$ from $\compset_X$.
Third, as long as $|\bigcup_{C \in \compset_X} V(C)| \ge s(k)$, we can identify $\compset_X' \subseteq \compset_X$ with $s(k) \le |\bigcup_{C \in \compset_X'} V(C)| \le 2 \cdot s(k) \le h(k)$, add $\bigcup_{C \in \compset_X'} C$ to $\compset$, and remove them from $\compset_X$.

This process for one $X$ can be implemented in time $\OO(h(k) \cdot |\bigcup_{C \in \compset_X} C|)$ by one iteration over $\compset_X$, and therefore the total running time of constructing $\compset$ is $\|G\| \cdot h(k)^{\OO(k+\rank(G))}$.

\begin{claim}
$\compset$ is a maximal collection of disjoint $(s(k),h(k),k)$-smallchips of $G$.
\end{claim}
\begin{claimproof}
By the construction, all $C \in \compset$ are $(s(k),h(k),k)$-smallchips of $G$.

Assume that $\compset$ is not maximal, i.e., there exists a $(s(k),h(k),k)$-smallchip $C$ of $G$ so that $C$ does not intersect any set in $\compset$.
Let $X = \bd(C)$.
Because $C$ is semi-internally connected and disjoint from $\compset$, in the end of the iteration considering $\compset_X$, all internal components of $C$ were still in $\compset_X$.
However, this contradicts that in the of the iteration $|\bigcup_{C \in \compset_X} V(C)| < s(k)$.
\end{claimproof}

By \Cref{lem:turbchipproperties}, for all $C \in \compset$ we have that $(C,\co{C})$ is doubly tri-well-linked, $|C|,|\co{C}| \ge 2$, and $\bdc(C) = k-1$.
Now, let $\Tc' = (T',\lmap')$ be the trivial superbranch decomposition of $G$, with $r = V(T')$ being its only internal node, and let $\Tc = \Tc' \rescliqs (r, \compset)$.
We construct $\Tc$ with \Cref{lem:superbdsplittingalgo} in time $\OO(\|G\| \cdot \rank(G))$, and return $\Tc$.

All internal separations of $\Tc$ are of form $(C,\co{C})$ for $C \in \compset$, so they have order $<k$ and are doubly tri-well-linked.
The torsos of nodes of form $r_C$ have $|V(\torso(r_C))| \le h(k)$, and the torso of the node $r'$ is $\torso(r') = G \rescliqs \compset$.
If $|V(G \rescliqs \compset)| \le 6 \cdot h(k)$ we are done, so assume that $|V(G \rescliqs \compset)| > 6 \cdot h(k)$, in which case we have to prove that $\torso(r') = G \rescliqs \compset$ satisfies the required properties.

As $\bdc(C) = k-1$ for all $C \in \compset$ and $G$ has no multiplicity-$3$-violators, all multiplicity-$3$-violators of $G \rescliqs \compset$ must have rank $k-1$.

By repeated application of \Cref{lem:turbchipproperties}, using the properties that $|V(G \rescliqs \compset)| \ge 6 \cdot h(k)$ and that if $G$ is $k$-tangle-unbreakable and $(C,\co{C})$ is doubly tri-well-linked, then $G \rescliqs C$ is $k$-tangle-unbreakable (\Cref{lem:tungunbhered}), we get that $G \rescliqs \compset$ has no $(s(p), h(p), p)$-chips for any $p < k$.

It remains to prove that $G \rescliqs \compset$ has no $(s(k), h(k), k)$-smallchips.
Suppose otherwise, and let $C'$ be an $(s(k), h(k), k)$-smallchip of $G \rescliqs \compset$.
For all $C \in \compset$, we have that $e_C \in G \rescliqs \compset$ has $\bdc(e_C) = k-1$, so $C'$ cannot contain $e_C$.
Therefore, $C'$ would be a $(s(k),h(k),k)$-smallchip of $G$ disjoint from all smallchips in $\compset$, but this contradicts the maximality of $\compset$.
\end{proof}

We need to also argue that we can reduce the multiplicity of $G$.

\begin{lemma}
\label{lem:tubr_to_ubr_chip_mult_ok}
Let $G$ be a normal hypergraph, $k \ge 1$ an integer, and $C \subseteq E(G)$ a multiplicity-violator.
Then,
\begin{itemize}
\item if $G$ does not have $(s(k), h(k), k)$-smallchips, then $G \rescliqs C$ does not have $(s(k), h(k), k)$-smallchips, and
\item if $G$ does not have $(h(k), k)$-bigchips, then $G \rescliqs C$ does not have $(h(k), k)$-bigchips.
\end{itemize}
\end{lemma}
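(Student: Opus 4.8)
The plan is to prove the two implications together by a case analysis on whether the new hyperedge $e_C$ of $G \rescliqs C$ lies in a would‑be chip. Write $W$ for the common vertex set of the hyperedges in $C$ (recall a multiplicity‑violator consists of at least two hyperedges, all with vertex set $W$). First I would record a few preliminary facts. The hypergraph $G \rescliqs C$ is again normal: every vertex of $\bd(C)$ lies in $e_C$ and, by definition of $\bd(C)$, in some hyperedge of $\co C$, while every vertex of $V(\co C)\setminus\bd(C)$ lies in no hyperedge of $C$ and hence keeps all of its at least two incident hyperedges. The set $C$ is semi‑internally connected in $G$: if some vertex of $W$ occurs only inside $C$ then $C$ is internally connected, and otherwise $W\subseteq V(\co C)$, the internal components of $C$ are the singletons $\{e\}$, $e\in C$, and each has border $\bd(C)$. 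Finally, for every $X\subseteq\co C$ we have $V_G(X)=V_{G\rescliqs C}(X)$ and, since $V_G(X)\cap W=V_G(X)\cap\bd(C)$, also $\bd_G(X)=\bd_{G\rescliqs C}(X)$; in particular each hyperedge of $\co C$ has the same rank and $\bdc$‑value in $G$ and in $G\rescliqs C$, and a subset of $\co C$ has the same internal components in both.

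Now let $D$ be a smallchip or a bigchip of $G\rescliqs C$ and suppose first that $e_C\notin D$, so $D\subseteq\co C\subseteq E(G)$. By the observations above $\bd_G(D)=\bd_{G\rescliqs C}(D)$, $V_G(D)=V_{G\rescliqs C}(D)$, the internal components of $D$ agree in $G$ and in $G\rescliqs C$, and each hyperedge of $D$ has the same rank in both. Hence every condition in the definition of a smallchip (respectively bigchip) of $G\rescliqs C$ holds verbatim for $D$ in $G$, so $D$ is a chip of $G$ of the same type, contradicting the hypothesis. This settles the case $e_C\notin D$ for both statements.

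It remains to treat $e_C\in D$, for which the candidate chip of $G$ is $D^{+}:=D\orescliqs C=(D\setminus\{e_C\})\cup C$. By the identity $\bd_G(D^{+})=\bd_{G\rescliqs C}(D)$ we get $\bdc_G(D^{+})<k$. Since $C$ is semi‑internally connected and $D$, being a chip, is semi‑internally connected in $G\rescliqs C$, \Cref{lem:conddwsic} gives that $D^{+}$ is semi‑internally connected in $G$. Moreover $V_G(D^{+})=V_{G\rescliqs C}(D)\sqcup(W\setminus\bd(C))$, so $|V_G(D^{+})|\ge|V(D)|$, which provides the lower bound on the vertex count required for a smallchip and the strict inequality $|V_G(D^{+})|>\bdc_G(D^{+})$ required for a bigchip. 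For the bigchip implication one then passes to the internal component $K$ of $D^{+}$ containing the witnessing high‑$\bdc$ hyperedge of $D$ (all of $C$, if that hyperedge was $e_C$, in which case its hyperedges have $\bdc$‑value $|W|\ge|\bd(C)|=\bdc_{G\rescliqs C}(e_C)\ge k-1$): $K$ is internally connected with $\bdc_G(K)<k$, contains a hyperedge of $\bdc$‑value at least $k-1$, and has strictly more vertices than its border. For the smallchip implication one uses that $\bdc_{G\rescliqs C}(e_C)<k-1$ and that the hyperedges of $C$ have $\bdc$‑value $|W|$.

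The main obstacle is finishing this last case: one must still bound $|V_G(D^{+})|$ (respectively $|V(K)|$) by $h(k)$, and, for the smallchip implication, rule out that $C$ has hyperedges of $\bdc$‑value at least $k-1$. Both difficulties are governed by the set $W\setminus\bd(C)$ of vertices occurring only inside $C$: its size is exactly the surplus $|V_G(D^{+})|-|V(D)|$, and $\bd(C)\subsetneq W$ is precisely the situation in which the hyperedges of $C$ can have larger rank than $e_C$. I expect the key step to be showing that when $e_C$ participates in a chip of $G\rescliqs C$ one necessarily has $\bd(C)=W$ — so that $D^{+}$ and $D$ have the same vertex set and $|W|=\bdc_{G\rescliqs C}(e_C)$, which is $<k-1$ in the smallchip case — or, failing that, trimming $D^{+}$ (respectively the witnessing structure) to a semi‑internally connected set whose vertex count lands in the admissible window $[s(k),h(k)]$, which is wide since $h(k)=2k\cdot s(k)$. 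The surrounding chip‑transfer arguments are routine; this last measure‑control step is where the care is needed.
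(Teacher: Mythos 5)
Your reduction to the case $e_C \in D$ and your treatment of connectivity (transitivity of semi-internal connectedness via \Cref{lem:conddwsic} for smallchips) match the paper's proof, and the case $e_C \notin D$ is handled identically. But the proposal does not close the lemma: you explicitly leave open the upper bound $|V(D \orescliqs C)| \le h(k)$ and, for smallchips, the requirement $\bdc(e) < k-1$ for the hyperedges $e \in C$ (which have $\bdc(e) = |V(C)|$, possibly larger than $\bdc(e_C) = |\bd(C)|$). That is a genuine gap, and neither of your proposed escape routes works as stated: $\bd(C) = V(C)$ is not forced by the hypotheses (a multiplicity-violator may contain every occurrence of some vertex), and ``trimming'' $D \orescliqs C$ into the window $[s(k), h(k)]$ would also have to re-establish semi-internal connectedness and control the border, which you do not attempt.

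For comparison, the paper's proof simply asserts that $|V(D \orescliqs C)| = |V(D)|$ and that the maximum of $\bdc(e)$ over $D$ and over $D \orescliqs C$ coincide; both assertions are equivalent to $V(e) = V(e_C)$ for $e \in C$, i.e.\ to $\bd(C) = V(C)$ --- exactly the identity you could not derive. It does hold at every point where the lemma is invoked (the algorithms always leave at least one hyperedge with the same vertex set outside $C$, so $V(C) \subseteq V(\co{C})$), but it is not a consequence of ``$C$ is a multiplicity-violator in a normal hypergraph'' alone; so your proof stalls precisely where the paper's is tacit, and the intended completion is this identity rather than any of the alternatives you sketch. One further point: for the bigchip case the paper shows that $D \orescliqs C$ is itself internally connected (since $D$ is internally connected with $|D| \ge 2$, some $v \in V(e_C) \setminus \bd(D)$ survives and glues all of $C$ to the rest of $D \orescliqs C$), so there is no need to pass to an internal component $K$ --- which is just as well, because your claim that $K$ has strictly more vertices than its border is not justified when $K$ could be a single hyperedge with $V(K) = \bd(K)$.
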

\begin{proof}
Let $D \subseteq E(G \rescliqs C)$ be either a $(s(k), h(k), k)$-smallchip or a $(h(k), k)$-bigchip of $G \rescliqs C$.
We have that $\bdc(D \orescliqs C) = \bdc(D)$, $|V(D \orescliqs C)| = |V(D)|$, and the maximum of $\bdc(e)$ for $e \in D$ and $e \in D \orescliqs C$ is the same.
Therefore, it remains to prove that if $D$ is a $(s(k), h(k), k)$-smallchip, then $D \orescliqs C$ is semi-internally connected, and that if $D$ is $(h(k), k)$-bigchip, then $D \orescliqs C$ is internally connected.
The former follows from the transitivity of semi-internally connectedness (proven as \Cref{lem:conddwsic}), because multiplicity-violators are semi-internally connected.

For the latter, assume that $e_C \in D$ as otherwise it is trivial.
Because $D$ is a $(h(k), k)$-bigchip and $G$ is normal, we have that $|D| \ge 2$.
Because $D$ is internally connected, there must be then $v \in V(e_C) \setminus \bd(D)$.
Therefore, in $D \orescliqs C$ there is also $v \in V(D \orescliqs C) \setminus \bd(D)$ that intersects all $V(e)$ for $e \in C$, and therefore $D \orescliqs C$ is internally connected.
\end{proof}

We then give the algorithm for removing multiplicity-violators.

\begin{lemma}
\label{lem:algdecompmultihyperedges}
There is an algorithm that, given an integer $k \ge 1$ and a hypergraph $G$ that has no multiplicity-violators of rank $\ge k$, in time $k^{\OO(1)} \cdot \|G\|$ outputs a superbranch decomposition $\Tc = (T,\lmap)$ of $G$ so that
\begin{itemize}
\item internal separations of $\Tc$ have order $<k$ and are doubly tri-well-linked,
\item each torso of $\Tc$ either has $<k$ vertices or has multiplicity $\le 3$,
\item for all $p \ge 1$, if $G$ does not have $(s(p), h(p), p)$-smallchips, then no torso of $\Tc$ with $\ge k$ vertices has $(s(p), h(p), p)$-smallchips, and
\item for all $p \ge 1$, if $G$ does not have $(h(p), p)$-bigchips, then no torso of $\Tc$ with $\ge k$ vertices has $(h(p), p)$-smallchips.
\end{itemize}
\end{lemma}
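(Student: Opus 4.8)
\textbf{Proof plan for \Cref{lem:algdecompmultihyperedges}.}
The plan is to decompose $G$ by taking, for each node of a superbranch decomposition, a \emph{maximal} collection of pairwise disjoint multiplicity-violators and splitting the node along the corresponding separations, analogously to the smallchip-removal algorithm of \Cref{lem:tubrtoubrremovesmallchipsalg}. Recall that a multiplicity-$p$-violator is a set $A$ of $p+1$ hyperedges all having the same vertex set, so $\bdc(A)$ equals the rank of these hyperedges, which by hypothesis is $<k$. Thus every multiplicity-violator $C$ is a separation $(C,\co{C})$ of order $<k$. If we also ensure $|C|,|\co{C}| \ge 2$, which we can by restricting attention to multiplicity-$2$-violators (sets of size exactly $3$) and using that $G$ is normal so $|\co{C}| \ge 2$ automatically once $|V(G)| \ge k$, then splitting is legal.

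First I would handle the base case: if $|V(G)| < k$, return the trivial one-internal-node superbranch decomposition, which vacuously satisfies all bullets. Otherwise, enumerate multiplicity-$2$-violators efficiently: group the hyperedges of $G$ by their vertex set using radix sort in $\OO(\|G\|)$ time (every hyperedge in a violator has rank $<k$, so vertex sets to be compared have length $<k$), and within each group of size $\ge 3$ pick disjoint triples greedily. This directly gives a maximal collection $\compset$ of pairwise disjoint multiplicity-$2$-violators in time $k^{\OO(1)}\cdot\|G\|$; maximality here means every multiplicity-violator of $G$ of rank $<k$ shares a hyperedge with some set in $\compset$ — indeed any multiplicity-violator contains three hyperedges with a common vertex set, which by maximality must meet $\compset$. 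Then form the trivial decomposition $\Tc'$ with single internal node $r$, and set $\Tc = \Tc' \rescliqs (r,\compset)$, constructible by \Cref{lem:superbdsplittingalgo} in time $\OO(\|G\|\cdot\rank(G)) = \OO(k\cdot\|G\|)$ (note $\rank(G)<k$ for the hyperedges inside violators, but other hyperedges could have larger rank; since $\compset$ consists only of low-rank violators all new adhesions have size $<k$, so $\adhsize(\Tc)$ is fine).

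The verification steps: each internal separation of $\Tc$ is $(C,\co{C})$ for $C \in \compset$, which has order $<k$; to see it is doubly tri-well-linked, note that a multiplicity-violator is semi-internally connected (all its hyperedges have the same vertex set equal to its border), hence trivially tri-well-linked, and $\co{C}$ is tri-well-linked because $\bd(\co{C}) \subseteq \bd(C) = V(C)$ together with the structure — more carefully, a single-vertex-set set $C$ with $\bd(C)=V(C)$ has $\co{C}$ linked into nothing helpful, so I would instead argue $\co C$ is tri-well-linked directly: any tripartition of $\co C$ with all three borders $<\bdc(C)$ would, after adding $C$ back, contradict... here I need to be careful and may instead restrict $\compset$ so that each $C$ is in fact \emph{exactly} three hyperedges and use that then $\bd(C)=V(C)$, so $\co C$'s border equals $V(C)$ minus isolated-in-$\co C$ vertices, and tri-well-linkedness of $\co C$ follows because contracting $C$ to a single hyperedge of rank $\bdc(C)$ doesn't reduce border sizes — this is the posimodularity-style argument of \Cref{lem:uncrosswlappltri}. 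For the torso bullets: the torsos $\torso(r_C)$ have $|V(\torso(r_C))| = |V(C)| < k$ since all hyperedges of $C$ share one vertex set of size $<k$; the torso $\torso(r') = G \rescliqs \compset$ has multiplicity $\le 3$ because any multiplicity-$3$-violator in $G\rescliqs\compset$ not using any $e_C$ would be one in $G$, contradicting maximality, and any using an $e_C$ has rank $\bdc(C)<k$... again requiring the no-multiplicity-violator-of-rank-$\ge k$ hypothesis — wait, $e_C$ has rank $<k$, so such a violator has rank $<k$, not excluded; the fix is that $e_C$ is unique per $C$ and the $e_C$'s have distinct vertex sets only if the $C$'s did, so I would additionally require in the greedy selection that we exhaust \emph{all but at most $3$} hyperedges of each vertex-set-group, making $\torso(r')$ have multiplicity $\le 3$ by construction. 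The smallchip/bigchip preservation bullets follow from \Cref{lem:tubr_to_ubr_chip_mult_ok} applied iteratively along $\compset$, using that $G\rescliqs\compset$ and the intermediate graphs are obtained by contracting multiplicity-violators.

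\textbf{Main obstacle.} The delicate point is getting the multiplicity of $\torso(r')$ down to $\le 3$ \emph{and} simultaneously keeping $|C|,|\co C|\ge 2$ for every $C$ used in the split while all new adhesions have order $<k$: a single vertex-set-group of hyperedges of rank $<k$ may need to be broken into several violators, and one must choose these so that the leftover in $\torso(r')$ has $\le 3$ copies, which forces a careful accounting (partition a group of size $m$ into $\lfloor(m-3)/ ?\rfloor$ triples plus a remainder of size in $\{3,4,5\}$, or similar), and one must also ensure no split creates a side of size $1$. The rest is routine bookkeeping of the kind already carried out in \Cref{lem:tubrtoubrremovesmallchipsalg,lem:tubr_to_ubr_chip_mult_ok}.
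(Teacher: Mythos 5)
Your overall strategy (group hyperedges by vertex set with radix sort, split off low-rank multiplicity-violators, invoke \Cref{lem:tubr_to_ubr_chip_mult_ok} for the chip-preservation bullets) is the right one, but the specific decomposition you choose — a maximal collection of pairwise \emph{disjoint triples} — has a gap that your proposed fixes do not close. If a vertex-set group contains $m$ copies of a hyperedge with vertex set $X$, splitting it into $\Theta(m)$ disjoint triples $C_1,C_2,\ldots$ puts $\Theta(m)$ hyperedges $e_{C_1},e_{C_2},\ldots$ into the central torso $\torso(r')$, each with $V(e_{C_i})=\bd(C_i)=X$. So $\torso(r')$ acquires multiplicity $\Theta(m)$, not $\le 3$, regardless of how many leftover copies you keep; you notice that ``the $e_C$'s have distinct vertex sets only if the $C$'s did'' but your remedy (exhaust all but at most $3$ of each group, with a remainder of size $3$--$5$) only controls the leftovers, not the accumulating $e_{C_i}$'s. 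A second unresolved point is the tri-well-linkedness of $\co{C}$ when $C$ swallows an entire group: your argument for that case trails off, and it is in fact not guaranteed in general.

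The paper's proof resolves both issues with one choice: for each vertex-set group of size $\ge 3$ it takes a \emph{single} set $C$ consisting of all hyperedges of the group \emph{except one}. Then both $C$ and $\co{C}$ contain a hyperedge $e$ with $V(e)=\bd(C)$, so any nonempty part of any tripartition of either side has border containing $\bd(C)$ — doubly tri-well-linkedness is immediate (this is the \Cref{lem:linkednesscornercase}-style argument you were reaching for) — and the central torso receives exactly one $e_C$ plus one leftover copy per group, giving multiplicity $\le 3$ by construction. With that substitution your proof goes through; without it, the second bullet of the lemma fails.
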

\begin{proof}
If $|E(G)| \le 3$ or $|V(G)| < k$ we return the trivial superbranch decomposition with one internal node.

We partition $E(G)$ into a collection of sets $\compset$ based on $V(e)$ for each hyperedge $e \in E(G)$.
This can be done in $\OO(k \cdot \|G\|)$ time with the help of radix sort and the promise that there are no multiplicity-violators of rank $\ge k$.

Because $|V(G)| \ge k$, we have that $|\compset| \ge 2$.
We construct $\compset'$ from $\compset$ by removing all sets $C \in \compset$ with $|C| \le 2$, and removing one hyperedge from each set $C \in \compset'$ with $|C| \ge 3$.

We observe that now, for all $C \in \compset'$, the separation $(C, \co{C})$ is doubly tri-well-linked, simply because both $C$ and $\co{C}$ contain a hyperedge $e$ with $V(e) = \bd(C)$.
Also, for all $C \in \compset'$ it holds that $|C| \ge 2$, $|\co{C}| \ge 2$, and $\co{C} \notin \compset'$.

Let $\Tc' = (T',\lmap')$ be the trivial superbranch decomposition of $G$ with one internal node $r$, and let us use \Cref{lem:superbdsplittingalgo} to construct $\Tc = (T,\lmap) = \Tc' \rescliqs \compset'$ in time $k^{\OO(1)} \|G\|$.
As discussed before, the internal separations of $\Tc$ are now doubly tri-well-linked and have order $<k$.
All other torsos than $\torso(r')$ have $<k$ vertices, and $\torso(r')$ has multiplicity at most $3$.
Now, repeated application of \Cref{lem:tubr_to_ubr_chip_mult_ok} implies that if $G$ does not have $(s(p), h(p), p)$-smallchips, then also $\torso(r')$ has no $(s(p), h(p), p)$-smallchips, and if $G$ does not have $(h(p), p)$-bigchips, then also $\torso(r')$ does not have $(h(p), p)$-bigchips.
\end{proof}

As the last step, we then give the algorithm to remove bigchips.

\begin{lemma}
\label{lem:tubrtoubralgremovebigchipsalg}
There is an algorithm, that given a normal hypergraph $G$ and an integer $k \ge 1$, so that 
\begin{itemize}
\item $G$ has no $(s(p), h(p), p)$-chips for any $p < k$,
\item $G$ has no $(s(k), h(k), k)$-smallchips, and
\item $G$ has no multiplicity-$3$-violators of rank $\neq k-1$,
\end{itemize}
in time $k^{\OO(k \cdot (k+\rank(G)))} \cdot \|G\|$ returns a superbranch decomposition $\Tc = (T,\bag)$ of $G$ so that
\begin{itemize}
\item the internal separations of $\Tc$ are doubly tri-well-linked and have order $<k$,
\item each torso of $\Tc$ either
\begin{itemize}
\item has at most $6 \cdot h(k)$ vertices, or
\item has no $(s(p),h(p),p)$-chips for any $p \le k$.
\end{itemize}
\end{itemize}
\end{lemma}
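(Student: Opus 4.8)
The plan is to mirror the structure of the smallchip-removal algorithm (\Cref{lem:tubrtoubrremovesmallchipsalg}), but this time finding a maximal collection of disjoint $(h(k),k)$-bigchips instead of smallchips. First I would handle the base case: if $|V(G)| \le 6 \cdot h(k)$, return the trivial one-node superbranch decomposition. Otherwise, the goal is to enumerate all $(h(k),k)$-bigchips of $G$, greedily select a maximal disjoint subcollection $\compset$ of them, and then split $G$ along $\compset$. By the assumptions and \Cref{lem:turbchipproperties} (whose hypotheses are met: $G$ is $k$-tangle-unbreakable by the absence of chips at level $k-1$ together with \Cref{lem:nokchipsimpliesktangunbrk} applied to... actually $k$-tangle-unbreakability should be derivable from \Cref{lem:tubrimprunbrklem} plus the given chip-absence; I would need to verify that the $k$-tangle-unbreakability hypothesis of \Cref{lem:turbchipproperties} holds here — one option is to add it as an explicit hypothesis if it does not follow, but given the flow it seems intended to follow), each selected bigchip $C$ satisfies that $(C,\co{C})$ is doubly tri-well-linked, $|C|,|\co{C}| \ge 2$, and $\bdc(C) = k-1$.

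For the enumeration, I would reuse the local-search machinery of \Cref{lem:localchipalgo}: a $(h(k),k)$-bigchip is internally connected, has $\bdc(C) < k$, and has $\bdc(C) < |V(C)| \le h(k)$, and contains a hyperedge $e$ with $\bdc(e) \ge k-1$. So for each hyperedge $e \in E(G)$ with $\bdc(e) \ge k-1$ (and since $G$ has no multiplicity-$3$-violators of rank $\neq k-1$, and bigchips need $\bdc(e) \ge k-1$ which in a $k$-tangle-unbreakable, $(u(k-1),k-1)$-unbreakable graph forces $\bdc(e) = k-1$ on such witnessing hyperedges), I would call \Cref{lem:localchipalgo} with $I = \{e\}$, parameters $s = h(k)$, $k$, and $p = 3$, to list all internally connected sets $A$ containing $e$ with $|V(A)| \le h(k)$ and $\bdc(A) \le k-1$; among these I keep the ones with $|V(A)| > \bdc(A)$ and containing a rank-$\ge k-1$ hyperedge. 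This runs in time $h(k)^{\OO(k+\rank(G))}$ per hyperedge, so $k^{\OO(k(k+\rank(G)))} \cdot \|G\|$ total; I would use radix sort to deduplicate. Then build $\compset$ greedily by one pass, maintaining for each hyperedge a bit recording whether it is already covered — a candidate bigchip is added iff it is disjoint from everything added so far; this gives a maximal disjoint collection.

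Having $\compset$, let $\Tc'$ be the trivial one-node decomposition and set $\Tc = \Tc' \rescliqs (r, \compset)$, constructed via \Cref{lem:superbdsplittingalgo}. Since every $C \in \compset$ has $|C|,|\co{C}| \ge 2$, this is a valid splitting family; every internal separation of $\Tc$ is some $(C,\co{C})$ with $C \in \compset$, hence doubly tri-well-linked of order $k-1 < k$. The new ``chip'' torsos $\torso(r_C)$ have $|V(\torso(r_C))| \le h(k) \le 6 \cdot h(k)$, so they satisfy the first alternative. The central torso is $\torso(r') = G \rescliqs \compset$; if it has $\le 6 \cdot h(k)$ vertices we are done, otherwise I must show it has no $(s(p),h(p),p)$-chips for any $p \le k$. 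For $p < k$ and for the smallchip case at $p = k$, I would apply \Cref{lem:turbchipproperties} repeatedly (each $\rescliqs C$ preserves $k$-tangle-unbreakability by \Cref{lem:tungunbhered} and preserves the relevant unbreakability since removing hyperedges cannot create small balanced separations) to conclude $G \rescliqs \compset$ inherits the absence of $(s(p),h(p),p)$-chips for $p<k$ and of $(s(k),h(k),k)$-smallchips. The main obstacle is the bigchip case at $p = k$: I need that $G \rescliqs \compset$ has no $(h(k),k)$-bigchip. Suppose $D$ were one. Each $e_C$ for $C \in \compset$ has $\bdc(e_C) = k-1$, so if $e_C \notin D$ for all $C$, then $D$ (lifted back, $D \orescliqs \compset = D$) would be a $(h(k),k)$-bigchip of $G$ disjoint from all of $\compset$, contradicting maximality. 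If some $e_C \in D$, I would argue, using internal connectedness of $D$ and $|V(e_C)| = k-1 < |V(D)|$ to find $v \in V(e_C)\setminus \bd(D)$ (as in \Cref{lem:tubr_to_ubr_chip_mult_ok}), that $D \orescliqs C$ is again internally connected; iterating over all $e_C \in D$ gives an internally connected bigchip $D \orescliqs \compset$ of $G$; but $D \orescliqs \compset$ necessarily contains at least one $C$ with $|C| \ge 2$, and since $\bd(D \orescliqs \compset) = \bd(D)$ has size $<k$ while $D \orescliqs \compset$ swallows all of some $C \in \compset$, this $D \orescliqs \compset$ must intersect $\compset$ — contradicting maximality of $\compset$. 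Making this last intersection/maximality argument fully rigorous (in particular ruling out that $D$ could "reuse" a chip $C$ in a way that does not violate maximality) is the delicate point I expect to spend the most effort on.
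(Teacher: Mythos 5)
There is a genuine gap at the decisive step of your argument, and it is exactly the point you flagged as delicate. Your plan is to compute one maximal disjoint collection $\compset$ of $(h(k),k)$-bigchips of $G$ and split along all of them at once, in analogy with the smallchip case. That analogy breaks down: a smallchip of $G \rescliqs \compset$ cannot contain any new hyperedge $e_C$ (smallchips only use hyperedges of rank $<k-1$, while $\bdc(e_C)=k-1$), so it lifts to a smallchip of $G$ disjoint from $\compset$ and maximality kills it. A bigchip $D$ of $G \rescliqs \compset$, by contrast, is \emph{required} to contain a rank-$\ge k-1$ hyperedge, and the natural candidates are precisely the $e_C$'s. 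If $e_C \in D$, then $D \orescliqs \compset$ contains all of $C$, so $|V(D \orescliqs \compset)|$ can be as large as $|V(D)| + \sum |V(C)|$, well above the bound $h(k)$; hence $D \orescliqs \compset$ is generally \emph{not} a bigchip of $G$, and the fact that it intersects $\compset$ is not a contradiction with maximality (maximality only says every bigchip of $G$ meets $\compset$; it says nothing about sets that only become bigchips after contraction). So the one-shot construction can leave bigchips in the central torso, and no uncrossing of the kind you sketch repairs this. Re-running your one-shot step until no bigchips remain would be correct but costs $\Omega(\|G\|)$ per round with up to $\Omega(\|G\|)$ rounds, missing the linear-time bound.

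The paper avoids this by chipping \emph{iteratively}: it defines a ``chiplet'' to be either an $(h(k),k)$-bigchip or a two-element subset of a multiplicity-$3$-violator of rank $k-1$, repeatedly finds a single chiplet $C$ in the current central torso, performs $\Tc \rescliqs (r,C)$, and — crucially — re-enqueues the freshly created hyperedge $e_C$ so that bigchips of the contracted graph that contain $e_C$ are discovered in later iterations. Termination and the linear running time come from the facts that every split increases $|V(T)|$ (so there are at most $|E(G)|$ splits) and each queue element is processed in $h(k)^{\OO(k+\rank(G))}$ time via \Cref{lem:localchipalgo}. Your proposal also leaves unhandled the case where \Cref{lem:localchipalgo} returns a multiplicity-$3$-violator (necessarily of rank $k-1$ here); the paper's chiplet definition is exactly the device for splitting those off. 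Finally, you are right that \Cref{lem:turbchipproperties} needs $G$ to be $k$-tangle-unbreakable: the paper's own proof lists this as an invariant ``satisfied by the initial guarantees'' even though it is absent from the lemma statement, and it does hold at the point of application in \Cref{the:highlevel:tunbrtounbr}, so flagging it was correct rather than a flaw in your proposal.
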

\begin{proof}
Let us say that a \emph{chiplet} is a set $C \subseteq E(G)$ so that either
\begin{itemize}
\item $C$ is a $(h(k), k)$-bigchip, or
\item $C$ is a multiplicity-$1$-violator of rank $k-1$ (so $|C|=2$ and $|V(C)| = k-1$), so that there exists a multiplicity-$3$-violator $C' \supseteq C$.
\end{itemize}

We initially let $\Tc = (T,\lmap)$ be the unique superbranch decomposition of $G$ with one internal node $r \in V(T)$.
Note that $\torso(r) = G$, and from now identify $G$ with $\torso(r)$.
We will implement a process where we repeatedly discover a chiplet $C$, and transform $\Tc$ to $\Tc \rescliqs (r,C)$, transforming $G$ to $G \rescliqs C$, maintaining that $\torso(r) = G$.

This process stops either when $|V(G)| \le 6 \cdot h(k)$, or when we have concluded that $G$ has no $(h(k), k)$-bigchips.

Let us then state the invariants maintained by the process.
Denote by $G_{\mathsf{init}}$ the initial input hypergraph $G$.

\begin{claim}
\label{lem:algotunbtotunb:claim0}
The process maintains the invariants that
\begin{enumerate}
\item the internal separations of $\Tc$ have order $<k$ and are doubly tri-well-linked,\label{lem:algotunbtotunb:claim0:enu2}
\item if $t \in \vint(T) \setminus \{r\}$, then $|V(\torso(t))| \le h(k)$,\label{lem:algotunbtotunb:claim0:enu3}
\item $\rank(G) \le \rank(G_{\mathsf{init}})$,\label{lem:algotunbtotunb:claim0:enu4}
\item $G$ is $k$-tangle-unbreakable,\label{lem:algotunbtotunb:claim0:enu5}
\item $G$ has no $(s(p), h(p), p)$-chips for any $p < k$,\label{lem:algotunbtotunb:claim0:enu6}
\item $G$ has no $(s(k), h(k), k)$-smallchips, and\label{lem:algotunbtotunb:claim0:enu7}
\item $G$ has no multiplicity-$3$-violators of rank $\neq k-1$.\label{lem:algotunbtotunb:claim0:enu8}
\end{enumerate}
\end{claim}
\begin{claimproof}
Initially, these invariants are satisfied by the initial guarantees for $G$ and the construction of $\Tc$.

Then, suppose that $\Tc$ satisfies the invariants and let $\Tc' = (T',\lmap') = \Tc \rescliqs (r,C)$, where $C$ is a chiplet.
If $C$ is a $(h(k), k)$-bigchip, then $(C,\co{C})$ is doubly tri-well-linked by \Cref{lem:turbchipproperties}.
If $C$ is a multiplicity-violator with $\bdc(C) = k-1$ and $|C| = 2$, so that there exists a multiplicity-$3$-violator $C' \supseteq C$, then $(C,\co{C})$ is doubly tri-well-linked because both $C$ and $\co{C}$ contain a hyperedge $e$ with $\bd(e) = \bd(C)$.

All internal separations of $\Tc'$ either (1) are internal separations of $\Tc$ or (2) are an orientation of the separation $(C \orescliqs \Tc, \co{C} \orescliqs \Tc)$.
The order of this separation is $k-1$.
Furthermore, as $(C,\co{C})$ is doubly tri-well-linked, and all internal separations of $\Tc$ are doubly tri-well-linked, the transitivity of tri-well-linkedness (\Cref{lem:triwelllinkedcom}) implies that $(C \orescliqs \Tc, \co{C} \orescliqs \Tc)$ is doubly tri-well-linked, implying that $\Tc'$ satisfies \Cref{lem:algotunbtotunb:claim0:enu2}.
The torsos of $\Tc'$ are either torsos of $\Tc$ or are equal to $G \rescliqs C$ or $G \rescliqs \co{C}$.
Therefore, for \Cref{lem:algotunbtotunb:claim0:enu3} it suffices to observe that $|V(G \rescliqs \co{C})| = |V(C)| \le h(k)$.

For \Cref{lem:algotunbtotunb:claim0:enu4}, it suffices to observe that if $C$ is a chiplet, then there is $e \in C$ with $\bdc(e) \ge \bdc(C)$.
\Cref{lem:algotunbtotunb:claim0:enu5} follows from the fact that if $G$ is $k$-tangle-unbreakable and $(C,\co{C})$ is doubly tri-well-linked, then $G \rescliqs C$ is also $k$-tangle-unbreakable by \Cref{lem:tungunbhered}.
\Cref{lem:algotunbtotunb:claim0:enu6} follows from \Cref{lem:turbchipproperties} in the case of a $(h(k),k)$-bigchip and from \Cref{lem:tubr_to_ubr_chip_mult_ok} in the case of a multiplicity-violator.

For \Cref{lem:algotunbtotunb:claim0:enu7}, suppose that $G \rescliqs C$ contains a $(s(k), h(k), k)$-smallchip $D$.
Because $\bdc(e) < k-1$ for all $e \in D$, we have that $e_C \notin D$.
Therefore, $D$ would also be a $(s(k), h(k), k)$-smallchip of $G$, contradicting the invariant.

\Cref{lem:algotunbtotunb:claim0:enu8} follows from that $\bdc(C) = k-1$.
\end{claimproof}

If in the end of the process either $|V(G)| \le 6 \cdot h(k)$ or $G$ has no $(h(k), k)$-bigchips, then we can return $\Tc$.
It remains to describe how to implement this process efficiently.

We let $Q$ to be a queue storing hyperedges of $G$ (including hyperedges that are no longer in $G$, which will be discarded when popping the queue).
This queue will have the invariant that if $C \subseteq E(G)$ is an $(h(k), k)$-bigchip of $G$, then there exist $e \in C$ so that $e \in Q$.
We initialize $Q$ to contain all hyperedges of $G$, meaning that initially the invariant is trivially satisfied.

\begin{claim}
Assume that $Q$ is non-empty.
We can in time $h(k)^{\OO(k+\rank(G))}$ either obtain a chiplet or remove an element from $Q$.
\end{claim}
\begin{claimproof}
Let $e \in Q$ be the front element from $Q$.
If $e$ has been deleted from $E(G)$, we can remove $e$ from $Q$.
Otherwise, we use \Cref{lem:localchipalgo} to either (1) enumerate all sets $C \subseteq E(G)$ so that $e \in C$, $|V(C)| \le h(k)$, $\bdc(C) < k$, and $C$ is internally connected, or (2) return a multiplicity-$3$-violator in $G$.
This algorithm runs in time $h(k)^{\OO(k+\rank(G))}$.

If it returns a multiplicity-$3$-violator $X \subseteq E(G)$, then because $G$ has no multiplicity-$3$-violators of rank $\neq k-1$, we have that any subset $X' \subseteq X$ with $|X'| = 2$ is a chiplet.

Otherwise, if there is an $(h(k), k)$-bigchip $C$ with $e \in C$, such $C$ must be enumerated by the algorithm.
We check each $C$ enumerated by the algorithm in time $\OO(|C| \cdot (k+\rank(G)))$, and if there is an $(h(k), k)$-bigchip, we return it.
If there is no $(h(k), k)$-bigchip, we can remove $e$ from~$Q$.
\end{claimproof}

Now suppose we have a chiplet $C \subseteq E(G)$.
We use \Cref{lem:sbdchippingalgo} to transform $\Tc$ into $\Tc \rescliqs (r,C)$ in time $\OO(|C| \cdot \adhsize(\Tc)) = h(k)^{\OO(k+\rank(G))}$, preserving the pointers to hyperedges in $\co{C}$.
Then, we add the hyperedge $e_C$ to the queue $Q$, which maintains the invariant of the queue because if $C'$ is a $(h(k),k)$-bigchip of $G \rescliqs C$ with $e_C \notin C'$, then $C'$ is also a $(h(k),k)$-bigchip of $G$.

As each chiplet $C$ has size $|C| \ge 2$, this process runs in time $|E(G_{\mathsf{init}})| \cdot h(k)^{\OO(k+\rank(G_{\mathsf{init}}))} = \|G\| \cdot k^{\OO(k \cdot (k+\rank(G)))}$.
\end{proof}

We then put together the proof of \Cref{the:highlevel:tunbrtounbr} by combining the algorithms of \Cref{lem:algdecompmultihyperedges,lem:tubrtoubrremovesmallchipsalg,lem:tubrtoubralgremovebigchipsalg}.

\thehighleveltunbrtounbr*
\begin{proof}
We will compute a series of superbranch decompositions $\Tc_0 = (T_0, \lmap_0), \ldots, \Tc_k = (T_k,\lmap_k)$ so that for each $i \in [0,k]$,
\begin{enumerate}
\item the internal separations of $\Tc_i$ have order $<k$ and are doubly well-linked,
\item the torsos of $\Tc_i$ are $k$-tangle-unbreakable, and
\item each torso of $\Tc_i$ either has at most $6 \cdot h(k)$ vertices, or has no $(s(p), h(p), p)$-chips for any $p \le i$.
\end{enumerate}

By \Cref{lem:tubrimprunbrklem}, this implies that each torso of $\Tc_k$ is $(6 \cdot h(k), k)$-unbreakable, so we can output $\Tc' = \Tc_k$.
Note also that we can initially set $\Tc_0 = \Tc$.

Let us then show that $\Tc_i$ can be computed from $\Tc_{i-1}$, for $i \in [k]$, in time $k^{\OO(k^2)} \cdot \|G\|$.
This boils down to applying the algorithms of \Cref{lem:algdecompmultihyperedges,lem:tubrtoubrremovesmallchipsalg,lem:tubrtoubralgremovebigchipsalg} in sequence to decompose the torsos.

First, we compute a superbranch decomposition $\Tc_{i-1}' = (T_{i-1}', \lmap_{i-1}')$, which will have the same guarantees as $\Tc_{i-1}$, except that each torso with more than $6 \cdot h(k)$ vertices has multiplicity at most $3$.
For this, we apply for each $t \in \vint(T_{i-1})$ the algorithm of \Cref{lem:algdecompmultihyperedges} with the parameter $k$ to compute a superbranch decomposition $\Tc_{i-1,t}'$ of $\torso(t)$.
This runs in total time $k^{\OO(1)} \cdot \|\Tc_{i-1}\| = k^{\OO(1)} \cdot \|G\|$.
Note that $\torso(t)$ does not have multiplicity-violators of rank $\ge k$ because $G$ has multiplicity $1$ and the internal separations of $\Tc_{i-1}$ have order $<k$.
We have that the internal separations of $\Tc_{i-1,t}'$ are doubly tri-well-linked, each torso of $\Tc_{i-1,t}'$ has either (1) at most $6 \cdot h(k)$ vertices, or (2) has multiplicity at most $3$ and no $(s(p), h(p), p)$-chips for any $1 \le p \le i-1$.
Furthermore, \Cref{lem:tungunbhered} implies that the torsos of $\Tc_{i-1,t}'$ are $k$-tangle-unbreakable.
We then apply the algorithm of \Cref{lem:superbdalgrefinebyrefiset} to obtain $\Tc_{i-1}' = \Tc_{i-1} \rescliqs \{\Tc_{i-1,t}'\}_{t \in \vint(T_{i-1})}$ in time $k^{\OO(1)} \cdot \|G\|$.
All torsos of $\Tc_{i-1}'$ are torsos of $\Tc_{i-1,t}'$ so they satisfy the required properties, and all internal separations of $\Tc_{i-1}'$ are either internal separations of $\Tc_{i-1}$ or correspond to internal separations of $\Tc_{i-1,t}'$, so by the transitivity of well-linkedness (\Cref{lem:linkedcliq}) they also satisfy the required properties.

Then we compute a superbranch decomposition $\Tc_{i-1}'' = (T_{i-1}'', \lmap_{i-1}'')$, which will have the same guarantees as $\Tc_{i-1}$, except that each torso with more than $6 \cdot h(k)$ vertices has no $(s(i), h(i), i)$-smallchips and has no multiplicity-$3$-violators of rank $\neq i-1$.
Let $Y \subseteq \vint(T_{i-1}')$ be the set of nodes of $\Tc_{i-1}'$ whose torsos have more than $6 \cdot h(k)$ vertices.
We apply for each $t \in Y$ the algorithm of \Cref{lem:tubrtoubrremovesmallchipsalg} with $\torso(t)$ and the parameter $i$.
We obtain for each $t \in Y$ a superbranch decomposition $\Tc_{i-1,t}''$ of $\torso(t)$ so that the internal separations of $\Tc_{i-1,t}''$ are doubly tri-well-linked and have order $<i$, and each torso of $\Tc_{i-1,t}''$ has either at most $6 \cdot h(k)$ vertices or has no $(s(p), h(p), p)$-smallchips for any $p \le i$, no $(h(p), p)$-bigchips for any $p < i$, and no multiplicity-$3$-violators of rank $\neq i-1$.
Furthermore, the fact that the internal separations are doubly tri-well-linked implies by \Cref{lem:tungunbhered} that the torsos of $\Tc_{i-1,t}''$ are $k$-tangle-unbreakable.
This runs in total time $k^{\OO(k \cdot (k+\adhsize(\Tc_{i-1}')))} \cdot \|\Tc_{i-1}'\| = k^{\OO(k^2)} \cdot \|G\|$.
We then apply the algorithm of \Cref{lem:superbdalgrefinebyrefiset} to obtain $\Tc_{i-1}'' = \Tc_{i-1}' \rescliqs \{\Tc_{i-1,t}''\}_{t \in Y}$ in time $k^{\OO(1)} \cdot \|G\|$.
All torsos of $\Tc_{i-1}''$ are either torsos of $\Tc_{i-1}'$ with at most $6 \cdot h(k)$ vertices, or are torsos of $\Tc_{i-1,t}''$, so they satisfy the required properties.
The internal separations of $\Tc_{i-1}''$ are either internal separations of $\Tc_{i-1}'$ or correspond to internal separations of $\Tc_{i-1,t}''$, so by the transitivity of well-linkedness (\Cref{lem:linkedcliq}) they also satisfy the required properties.

Then we finally compute the superbranch decomposition $\Tc_i$ from $\Tc_{i-1}''$.
Let $Y \subseteq \vint(T_{i-1}'')$ be the set of nodes of $\Tc_{i-1}''$ whose torsos have more than $6 \cdot h(k)$ vertices.
We apply for each $t \in Y$ the algorithm of \Cref{lem:tubrtoubralgremovebigchipsalg} with $\torso(t)$ and the parameter $i$.
We obtain for each $t \in Y$ a superbranch decomposition $\Tc_{i,t}$ of $\torso(t)$ so that the internal separations of $\Tc_{i,t}$ are doubly tri-well-linked and have order $<i$, and each torso of $\Tc_{i,t}$ has either at most $6 \cdot h(k)$ vertices, or has no $(s(p), h(p), p)$-chips for any $p \le i$.
Furthermore, the fact that the internal separations are doubly tri-well-linked implies by \Cref{lem:tungunbhered} that the torsos of $\Tc_{i,t}$ are $k$-tangle-unbreakable.
This runs in total time $k^{\OO(k \cdot (k+\adhsize(\Tc_{i-1}'')))} \cdot \|\Tc_{i-1}''\| = k^{\OO(k^2)} \cdot \|G\|$.
We then apply the algorithm of \Cref{lem:superbdalgrefinebyrefiset} to obtain $\Tc_{i} = \Tc_{i-1}'' \rescliqs \{\Tc_{i,t}\}_{t \in Y}$ in time $k^{\OO(1)} \cdot \|G\|$.
All torsos of $\Tc_{i}$ are either torsos of $\Tc_{i-1}''$ with at most $6 \cdot h(k)$ vertices, or are torsos of $\Tc_{i,t}$, so they satisfy the required properties.
The internal separations of $\Tc_{i}$ are either internal separations of $\Tc_{i-1}''$ or correspond to internal separations of $\Tc_{i,t}$, so by the transitivity of well-linkedness (\Cref{lem:linkedcliq}) they also satisfy the required properties.

This completes the description of computing $\Tc_i$ from $\Tc_{i-1}$ in $k^{\OO(k^2)} \cdot \|G\|$ time.
\end{proof}

\section{Combining lean tree decompositions}
\label{sec:combin}
In this section, we prove \Cref{the:highlevel:leanimplication}, which we re-state now.

\thehighlevelleanimplication*

The algorithm of \Cref{the:highlevel:leanimplication} simply stitches together the tree decompositions $\Tc_t$ following the structure of $\Tc$.
For the correctness of this, the main graph-theoretical lemma to prove is that vertex cuts of $\primal(G)$ separating subsets of torsos of $\Tc$ can be uncrossed with vertex cuts corresponding to internal separations of $\Tc$, which we prove now.

\begin{lemma}
\label{lem:insideleanuncross}
Let $G$ be a hypergraph and $\Tc = (T,\lmap)$ a superbranch decomposition of $G$ whose all internal adhesions are doubly well-linked.
Let $t_1, t_2 \in \vint(T)$, $X_1 \subseteq V(t_1)$, and $X_2 \subseteq V(t_2)$.
For every vertex cut $(A,B)$ of $\primal(G)$ with $X_1 \subseteq A$ and $X_2 \subseteq B$, there is a vertex cut $(C,D)$ of $\primal(G)$ that 
\begin{itemize}
\item is parallel with every vertex cut of form $(V(\lmap(\vec{xy})), V(\lmap(\vec{yx})))$, where $xy$ is an edge of $T$,
\item has $X_1 \subseteq C$, $X_2 \subseteq D$, and 
\item has $|C \cap D| \le |A \cap B|$.
\end{itemize}
\end{lemma}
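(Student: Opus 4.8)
The plan is to take a vertex cut $(A,B)$ of $\primal(G)$ with $X_1 \subseteq A$, $X_2 \subseteq B$, and iteratively ``uncross'' it with the vertex cuts coming from internal separations of $\Tc$ until it becomes parallel with all of them, never increasing the order and never losing the property of separating $X_1$ from $X_2$. Concretely, I would pick $(C,D)$ among all vertex cuts of $\primal(G)$ satisfying $X_1 \subseteq C$, $X_2 \subseteq D$, $|C\cap D| \le |A\cap B|$, that minimizes first the order $|C\cap D|$ and then, subject to that, the number of edges $xy \in E(T)$ with which the corresponding vertex cut crosses $(C,D)$. Such a $(C,D)$ exists since $(A,B)$ is a candidate. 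The goal is then to show this crossing-number is zero.

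Suppose not: then some edge $xy \in E(T)$ gives a vertex cut $(V(\lmap(\vec{xy})), V(\lmap(\vec{yx})))$ crossing $(C,D)$. Orient it and take a submodular combination: by submodularity of vertex cuts (the Submodularity lemma from the preliminaries), one of $(C \cap V(\lmap(\vec{xy})), D \cup V(\lmap(\vec{yx})))$ and $(C \cup V(\lmap(\vec{yx})), D \cap V(\lmap(\vec{xy})))$ has order $\le |C\cap D|$. The key point is that $X_1$ and $X_2$ lie inside $V(t_1)$ and $V(t_2)$, which are bags of $\Tc$, so each of them is entirely contained in one side of the vertex cut induced by $xy$ (a bag is never split by an adhesion-induced cut, since $xy$ partitions $T$ and $t_1$, $t_2$ are on definite sides unless $xy$ is incident to them, in which case the adhesion is a subset of the bag and so still the bag lies on one ``side'' appropriately). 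Choosing the orientation $\vec{xy}$ so that $X_1$ is on the $C$-heavy side, the taken combination keeps $X_1 \subseteq C'$ and $X_2 \subseteq D'$; if instead $X_1$ and $X_2$ end up on the same side one has to argue $(C,D)$ already does not need to be modified at $xy$ or handle it directly. Then, as in the crossing-count arguments in \Cref{sec:smalleradhesions} (e.g.\ the proof of \Cref{lem:decomptangleunbrimpliestangunbr}), the new cut crosses strictly fewer edges of $T$ than $(C,D)$ did: it is now parallel with $xy$, parallel with every edge on the $xy$-to-$X_2$ side was preserved, and no new crossings are created because the other parts of $T$ hang off one side of $\lmap(\vec{xy})$. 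This contradicts minimality, so $(C,D)$ crosses no edge of $T$, giving the first bullet; the other two bullets hold by construction.

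The main obstacle I anticipate is making precise the ``crossings can only decrease'' bookkeeping when uncrossing with $xy$ — this is exactly the delicate case analysis already carried out in the claim inside the proof of \Cref{lem:decomptangleunbrimpliestangunbr}, and the right move is to mimic that argument verbatim (split edges of $T$ into those closer to $x$ and those closer to $y$, and check each group). A secondary subtlety is the boundary case where $xy$ is incident to $t_1$ or $t_2$: there $V(t_i)$ is not literally contained in one of $V(\lmap(\vec{xy}))$, $V(\lmap(\vec{yx}))$, but the adhesion $\adh(xy) = V(t_i) \cap V(t_j)$ is a subset of $V(t_i)$, and since the doubly-well-linked internal separations have the uncrossing property with separations between subsets of torsos, one can invoke the well-linkedness of the adhesion (via the uncrossing machinery already developed, or directly via submodularity) to push the cut to one side; but in fact for this lemma one only needs that $X_i \subseteq V(t_i)$ and that the cut induced by $xy$ does not split $X_i$, which follows because $X_i$ is contained in a single bag and bags are connected in the tree-decomposition sense. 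I would double-check whether the hypothesis ``doubly well-linked'' is actually used here or whether plain ``tree decomposition'' suffices; I expect it is needed precisely in the incident-edge boundary case, so I would keep the invocation of well-linkedness there.
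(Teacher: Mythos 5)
There is a genuine gap at the heart of your uncrossing step. You claim that ``by submodularity, one of $(C \cap V(\lmap(\vec{xy})), D \cup V(\lmap(\vec{yx})))$ and $(C \cup V(\lmap(\vec{yx})), D \cap V(\lmap(\vec{xy})))$ has order $\le |C\cap D|$.'' Submodularity only bounds the \emph{sum} of the two corner orders by $|C \cap D| + |\adh(xy)|$, so this conclusion fails whenever $|\adh(xy)| > |C \cap D|$ --- and the lemma places no bound on adhesion sizes. This is precisely why the doubly-well-linked hypothesis is in the statement, and it is not (as you suggest at the end) a technicality for the incident-edge boundary case: it is the replacement for your false step. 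The paper's proof first uses well-linkedness of $\lmap(\vec{xy})$ to force one of $\bdc(Q \cap \lmap(\vec{xy}))$, $\bdc(\co{Q} \cap \lmap(\vec{xy}))$ to be at least $\bdc(\lmap(\vec{xy}))$, and only \emph{then} does submodularity yield a corner of order at most $\bdc(Q)$. Moreover, when $xy$ lies on the $(t_1,t_2)$-path, the low-order corner produced this way may be one that merges $X_1$ and $X_2$ onto the same side; ruling this out requires well-linkedness of \emph{both} $\lmap(\vec{xy})$ and $\lmap(\vec{yx})$ and a chain of alternating submodularity/well-linkedness implications. Your parenthetical ``if instead $X_1$ and $X_2$ end up on the same side one has to argue \ldots or handle it directly'' is exactly the case that cannot be waved away.

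Two secondary points. First, the paper performs the whole argument with hypergraph separations $(Q,\co{Q})$ of an augmented hypergraph (adding singleton hyperedges for the vertices of $X_1$ and $X_2$ as new leaves at $t_1$ and $t_2$), not with vertex cuts of $\primal(G)$; this is needed both to invoke well-linkedness (which is a property of hyperedge sets) and to make the ``$X_1$ stays on the $C$-side'' bookkeeping exact, since a vertex cut of the primal graph does not canonically induce a hypergraph separation. Second, your crossing-count bookkeeping (modelled on the claim in the proof of \Cref{lem:decomptangleunbrimpliestangunbr}) is the right template for the ``strictly fewer crossings'' part, but it only becomes available once the order-preserving uncrossed corner has been correctly identified via well-linkedness.
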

\begin{proof}
Let $G'$ be the hypergraph obtained from $G$ by adding for each $v \in X_1$ a hyperedge $e_{v,1}$ with $V(e_{v,1}) = \{v\}$, and for each $v \in X_2$ a hyperedge $e_{v,2}$ with $V(e_{v,2}) = \{v\}$.
Let also $\Tc' = (T',\lmap')$ be the superbranch decomposition of $G'$ obtained from $\Tc$ by adding leaves corresponding to the hyperedges of form $e_{v,1}$ adjacent to $t_1$, and adding leaves corresponding to the hyperedges of form $e_{v,2}$ adjacent to $t_2$.

We observe that the internal edges of $T'$ are the same as the internal edges of $T$.
Moreover, for every internal edge $xy$ of $T'$ we have that $(V(\lmap(\vec{xy})), V(\lmap(\vec{yx}))) = (V(\lmap'(\vec{xy})), V(\lmap'(\vec{yx})))$ and $(\lmap'(\vec{xy}), \lmap'(\vec{yx}))$ is doubly well-linked.
In particular, the internal separations of $\Tc'$ are doubly well-linked.

Let us denote $X_1^e = \{e_{v,1} : v \in X_1\}$ and $X_2^e = \{e_{v,2} : v \in X_2\}$.

Let $(Q,\co{Q})$ be a separation of $G'$ so that (1) $X_1^e \subseteq Q$, (2) $X_2^e \subseteq \co{Q}$, (3) $\bdc(Q) \le |A \cap B|$, and (4) subject to (1), (2), and (3), $(Q,\co{Q})$ crosses the minimum number of separations of $\Tc'$.
We have that such a separation $(Q,\co{Q})$ satisfying (1), (2), and (3) exists by assigning each $e \in E(G)$ with $V(e) \cap (A \setminus B) \neq \emptyset$ to $Q$, each $e \in E(G)$ with $V(e) \cap (B \setminus A) \neq \emptyset$ to $\co{Q}$, each $e \in X_1^e$ to $Q$, each $e \in X_2^e$ to $\co{Q}$, and the remaining hyperedges $e \in E(G')$, for all of which it holds that $V(e) \subseteq A \cap B$, arbitrarily.

\begin{claim}
\label{lem:insideleanuncross:claim1}
$(Q,\co{Q})$ crosses no separations of $\Tc'$.
\end{claim}
\begin{claimproof}
Suppose that $xy \in E(T')$ and $(Q,\co{Q})$ crosses $(\lmap'(\vec{xy}), \lmap'(\vec{yx}))$.
Note that for $(Q,\co{Q})$ to cross $(\lmap'(\vec{xy}), \lmap'(\vec{yx}))$, it must be that $|\lmap'(\vec{xy})|, |\lmap'(\vec{yx})| \ge 2$.
Therefore, neither $x$ nor $y$ is a leaf, i.e., $(\lmap'(\vec{xy}), \lmap'(\vec{yx}))$ is an internal separation of $\Tc'$.

First, assume that $xy$ is not on the unique $(t_1,t_2)$-path in $T'$.
Assume \wilog that both $t_1$ and $t_2$ are closer to $y$ than to $x$ in $T'$, in particular, $X_1^e \cup X_2^e \subseteq \lmap(\vec{yx})$.
Because $\lmap(\vec{xy})$ is well-linked, we have that either (a) $\bdc(Q \cap \lmap(\vec{xy})) \ge \bdc(\lmap(\vec{xy}))$, or (b) $\bdc(\co{Q} \cap \lmap(\vec{xy})) \ge \bdc(\lmap(\vec{xy}))$.

If (a), then by submodularity
\begin{align*}
\bdc(Q \cup \lmap(\vec{xy})) \le \bdc(Q).
\end{align*}
However, in this case the separation $(Q \cup \lmap(\vec{xy}), \co{Q} \cap \lmap(\vec{yx}))$ satisfies the required properties (1), (2), and (3), and crosses less internal separations of $\Tc'$ than $(Q,\co{Q})$, and therefore contradicts the choice of $(Q,\co{Q})$.

If (b), then by submodularity
\begin{align*}
\bdc(\co{Q} \cup \lmap(\vec{xy})) \le \bdc(Q).
\end{align*}
In this case, the separation $(Q \cap \lmap(\vec{yx}), \co{Q} \cup \lmap(\vec{xy}))$ satisfies the required properties (1), (2), and (3), and crosses less internal separations of $\Tc'$ than $(Q,\co{Q})$, and therefore contradicts the choice of $(Q,\co{Q})$.

Second, assume that $xy$ is on the unique $(t_1, t_2)$-path in $T$.
Assume \wilog that $t_1$ is closer to $x$ than to $y$, and $t_2$ is closer to $y$ than $x$.
In particular, $X_1^e \subseteq \lmap(\vec{xy})$ and $X_2^e \subseteq \lmap(\vec{yx})$.
First, if $\bdc(\lmap(\vec{xy})) \le \bdc(Q)$, then the separation $(\lmap(\vec{xy}), \lmap(\vec{yx}))$ would contradict the choice of $(Q,\co{Q})$, so assume that $\bdc(\lmap(\vec{xy})) > \bdc(Q)$.
If $\bdc(Q \cup \lmap(\vec{xy})) \le \bdc(Q)$, then the separation $(Q \cup \lmap(\vec{xy}), \co{Q} \cap \lmap(\vec{yx}))$ would contradict the choice of $(Q,\co{Q})$, so assume that $\bdc(Q \cup \lmap(\vec{xy})) > \bdc(Q)$.
Now, 
\begin{align*}
&& \bdc(Q \cup \lmap(\vec{xy})) > \bdc(Q) && \\
\Rightarrow && \bdc(Q \cap \lmap(\vec{xy})) < \bdc(\lmap(\vec{xy})) && \text{ (submodularity) }\\
\Rightarrow && \bdc(\co{Q} \cap \lmap(\vec{xy})) \ge \bdc(\lmap(\vec{xy})) > \bdc(Q) && \text{ (well-linkedness of $\lmap(\vec{xy})$) }\\
\Rightarrow && \bdc(\co{Q} \cup \lmap(\vec{xy})) < \bdc(\lmap(\vec{xy})) && \text{ (submodularity) }\\
\Rightarrow && \bdc(Q \cap \lmap(\vec{yx})) < \bdc(\lmap(\vec{yx})) && \text{ (symmetry) }\\
\Rightarrow && \bdc(\co{Q} \cap \lmap(\vec{yx})) \ge \bdc(\lmap(\vec{yx})) && \text{ (well-linkedness of $\lmap(\vec{yx})$) }\\
\Rightarrow && \bdc(\co{Q} \cup \lmap(\vec{yx})) \le \bdc(Q) && \text{ (submodularity) }\\
\Rightarrow && \bdc(Q \cap \lmap(\vec{xy})) \le \bdc(Q) && \text{ (symmetry) }
\end{align*}

Therefore, the separation $(Q \cap \lmap(\vec{xy}), \co{Q} \cup \lmap(\vec{yx}))$ would contradict the choice of $(Q,\co{Q})$.
\end{claimproof}

Now, let $(C,D) = (V(Q), V(\co{Q}))$.
By definition, $|C \cap D| = \bdc(Q) \le |A \cap B|$.
Furthermore, as $X_1^e \subseteq Q$ and $X_2^e \subseteq \co{Q}$, we have that $X_1 \subseteq C$ and $X_2 \subseteq D$.
It remains to prove that $(C,D)$ is parallel with every vertex cut of $\primal(G)$ of form $(V(\lmap(\vec{xy}), V(\lmap(\vec{yx}))))$, where $xy$ is an edge of $T$.
For this, it suffices to prove that $(C,D)$ is parallel with every vertex cut of $\primal(G)$ of form $(V(\lmap'(\vec{xy})), V(\lmap'(\vec{yx})))$, where $xy$ is an edge of $T'$.
By \Cref{lem:insideleanuncross:claim1}, for every $e \in E(T')$, there exists an orientation $\vec{e}$ so that either $Q \subseteq \lmap'(\vec{e})$ or $\co{Q} \subseteq \lmap'(\vec{e})$.
In the first case, we have that $C \subseteq V(\lmap'(\vec{e}))$, and in the second that $D \subseteq V(\lmap'(\vec{e}))$.
\end{proof}

Now we are ready to prove \Cref{the:highlevel:leanimplication}.

\thehighlevelleanimplication*
\begin{proof}
For each $t \in \vint(T)$, denote $\Tc_t = (T_t, \bag_t)$.
We define a tree decomposition $\Tc' = (T', \bag')$ as follows.
First, we set $V(T')$ to consist of two different kinds of nodes, the union of all nodes of the decompositions $\Tc_t$, and a set of nodes corresponding to internal adhesions of $\Tc$, denoted by $\alpha_{xy}$ for $xy \in \eint(T)$, i.e.,
\[V(T') = \bigcup_{t \in \vint(T)} V(T_t) \cup \bigcup_{xy \in \eint(T)} \{\alpha_{xy}\}.\]
Now, for $xy \in \eint(T)$, let $c_{\vec{xy}} \in V(T_x)$ be a node of $\Tc_x$ with $\adh(xy) \subseteq \bag_x(c_{\vec{xy}})$ (if there are multiple such nodes, choose an arbitrary one).
Note that such a node exists because $\adh(xy)$ is a clique in $\primal(\torso(x))$.
Moreover, it can be found in $|\adh(xy)|$ time for each $xy$ after $\OO(\|\Tc_t\|)$ time precomputation by \Cref{lem:cliqtdds}.

Then, we set $E(T')$ to connect $V(T')$ in the natural way, i.e.,
\[E(T') = \bigcup_{t \in \vint(T)} E(T_t) \cup \bigcup_{xy \in \eint(T)} \{\alpha_{xy} c_{\vec{xy}}, \alpha_{xy} c_{\vec{yx}}\}.\]

Finally, we define $\bag'$ in the natural way, i.e.,
\begin{itemize}
\item for $t \in \vint(T)$ and $x \in V(T_t)$, we set $\bag'(x) = \bag_t(x)$, and
\item for $xy \in \eint(T)$, we set $\bag'(\alpha_{xy}) = \adh(xy)$.
\end{itemize}

By using the above mentioned technique for finding the nodes $c_{\vec{xy}}$, the tree decomposition $\Tc' = (T',\bag')$ can be constructed in time $\OO((k+\rank(G)) \cdot |E(T)| +  \sum_{t \in \vint(T)} \|\Tc_t\|) = \OO((k+\rank(G)) \cdot \|G\| + \sum_{t \in \vint(T)} \|\Tc_t\|)$.

It remains to prove that $\Tc'$ is indeed a $k$-lean tree decomposition of $\primal(G)$.

\begin{claim}
\label{the:highlevel:leanimplication:claim1}
$\Tc'$ is a tree decomposition of $\primal(G)$.
\end{claim}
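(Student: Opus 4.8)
The plan is to verify the two defining conditions of a tree decomposition for $\Tc' = (T', \bag')$ as constructed: the vertex condition (each vertex induces a connected subtree) and the edge condition (each edge of $\primal(G)$ is covered by some bag). I would first establish the vertex condition, because it is the more delicate of the two. Fix a vertex $v \in V(G)$. The set of nodes $t \in \vint(T)$ whose torso contains $v$, i.e.\ with $v \in V(t)$, forms a connected subtree $T_v$ of $T$: this is because the separations of $\Tc$ are "nested" (the internal separations form a laminar-like family), and $v \in V(t)$ exactly when $t$ lies on the part of $T$ not separated away from $v$'s occurrences; more concretely, $v \in V(t)$ iff $t$ is incident to at least two edges $xy$ with $v \in \adh(xy)$, or $v$ lies in a leaf hyperedge attached to $t$, and the edges of $T$ whose adhesion contains $v$ form a connected subtree by the vertex condition of the tree decomposition $(T,\bag)$ with $\bag(t) = V(t)$ noted earlier in the excerpt. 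Within each $T_t$ with $t \in T_v$, the set $\{x \in V(T_t) : v \in \bag_t(x)\}$ is connected because $\Tc_t$ is a tree decomposition of $\primal(\torso(t))$ and $v \in V(\torso(t)) = V(t)$. Finally, the "gluing" nodes $\alpha_{xy}$ for $xy \in \eint(T)$ with $v \in \adh(xy)$ carry $v$ in their bag, and $\alpha_{xy}$ is attached to $c_{\vec{xy}} \in V(T_x)$ and $c_{\vec{yx}} \in V(T_y)$, which by construction satisfy $\adh(xy) \subseteq \bag_x(c_{\vec{xy}})$ and $\adh(xy) \subseteq \bag_y(c_{\vec{yx}})$, hence contain $v$. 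So I would argue that the union of these pieces — the subtrees $\{x \in V(T_t) : v \in \bag_t(x)\}$ over $t \in T_v$, joined through the $\alpha_{xy}$ nodes along the edges of $T_v$ — is exactly the set $\{x \in V(T') : v \in \bag'(x)\}$ and is connected, by a straightforward induction on the size of $T_v$: adding one more node $t'$ of $T_v$ (adjacent to an already-processed $t$ via edge $tt' \in \eint(T)$ with $v \in \adh(tt')$) attaches its subtree of $v$-containing nodes to the current connected set through $\alpha_{tt'}$.

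Next I would handle the edge condition. Every edge $uv \in E(\primal(G))$ arises from some hyperedge $e \in E(G)$ with $\{u,v\} \subseteq V(e)$, and $e$ corresponds to a leaf $\ell \in \leafs(T)$ with an incident internal node $t = t(\ell) \in \vint(T)$. Then $V(e) \subseteq V(t)$, so $\{u,v\} \subseteq V(t) = V(\torso(t)) = V(\primal(\torso(t)))$, and $uv$ is an edge of $\primal(\torso(t))$ (since $u,v$ both lie in the hyperedge of $\torso(t)$ corresponding to $\ell$, which has vertex set $V(e)$ restricted appropriately — here I'd use normality of $G$ so that $\bd(\{e\}) = V(e)$, and the fact that $\lmap(\ell) = e$ gives the adhesion at that edge). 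Since $\Tc_t$ is a tree decomposition of $\primal(\torso(t))$, some bag $\bag_t(x) = \bag'(x)$ contains $\{u,v\}$, which settles the edge condition.

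The main obstacle I expect is the precise bookkeeping in the vertex condition: making sure that the subtree $T_v \subseteq T$ of nodes whose torso contains $v$ is genuinely connected, and that when we pass from $\Tc$ to $\Tc'$ the gluing nodes $\alpha_{xy}$ correctly "bridge" the occurrences of $v$ across adjacent torsos without creating a disconnection or an unintended extra occurrence. This relies on two facts that are essentially structural properties of superbranch decompositions already recorded earlier: first, that $(T,\bag)$ with $\bag(t)=V(t)$ is a tree decomposition of $\primal(G)$ (so $v$-containing torsos form a connected subtree, and $v \in \adh(xy)$ exactly when $v$ is in the bags on both sides, which happens along a connected subtree of edges); and second, that $\adh(xy)$ being a clique in each adjacent $\primal(\torso(\cdot))$ guarantees existence of the anchor nodes $c_{\vec{xy}}$. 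I do not anticipate needing doubly well-linkedness or any leanness property for this particular claim — that hypothesis is presumably used only in the subsequent claims establishing that $\Tc'$ is actually $k$-lean (via \Cref{lem:insideleanuncross}), not merely a tree decomposition. So for this claim I would keep the argument purely combinatorial, carefully phrased as the induction on $|V(T_v)|$ sketched above.
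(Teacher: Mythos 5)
Your proposal is correct and follows essentially the same route as the paper: the edge condition via normality (each hyperedge of $G$ reappears as a hyperedge of an adjacent torso with the same vertex set, hence a clique in $\primal(\torso(t))$ covered by a bag of $\Tc_t$), and the vertex condition by combining the vertex condition of each $\Tc_t$, the containments $\bag'(\alpha_{xy}) \subseteq \bag'(c_{\vec{xy}}) \cap \bag'(c_{\vec{yx}})$, and the fact that a vertex appearing in two torsos lies in every adhesion on the path between them. Your explicit induction on the subtree $T_v$ is just a more detailed phrasing of the paper's final observation, and you are right that doubly well-linkedness is not needed for this claim.
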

\begin{claimproof}
Let us first prove the edge condition.
Because $G$ is normal, for each $e \in E(G)$ there exists $t \in \vint(T)$ and $e' \in E(\torso(t))$ with $V(e') = V(e)$.
Now, $V(e') = V(e)$ is a clique in $\primal(\torso(t))$, implying that $\Tc_t$ contains a bag containing $V(e)$, and the same bag exists also in $\Tc'$.

Then we prove the vertex condition.
Because each vertex of $\primal(G)$ is in $V(e)$ for some $e \in E(G)$, by the above argument we have that each vertex of $\primal(G)$ is in at least one bag of $\Tc'$.
The decompositions $\Tc_t$ satisfy the vertex condition, and for each $xy \in \eint(T)$ we have that $\bag'(\alpha_{xy}) \subseteq \bag'(c_{\vec{xy}})$ and $\bag'(\alpha_{xy}) \subseteq \bag'(c_{\vec{yx}})$ by construction.
It remains to observe that if $t_1, t_2 \in \vint(T)$, and $v \in V(t_1) \cap V(t_2)$, then $v \in \adh(xy)$ for all $xy \in \eint(T)$ that are on the unique $(t_1,t_2)$-path in $T$.
\end{claimproof}

\begin{claim}
\label{the:highlevel:leanimplication:claim2}
$\Tc'$ is $k$-lean.
\end{claim}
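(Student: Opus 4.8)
The plan is to check the two defining conditions of $k$-leanness for $\Tc'$ separately. The adhesion size bound $\adhsize(\Tc') < k$ is immediate: every adhesion of $\Tc'$ is either an adhesion of some $\Tc_t$ (of size $<k$ since $\Tc_t$ is $k$-lean), or equals $\adh(xy)$ for $xy \in \eint(T)$ (of size $<k$ since the internal separations of $\Tc$ have order $<k$), or is the adhesion at an edge $\alpha_{xy}c_{\vec{xy}}$, which is $\bag'(\alpha_{xy}) \cap \bag'(c_{\vec{xy}}) = \adh(xy)$ by the choice of $c_{\vec{xy}}$. So it remains to rule out non-$k$-lean-witnesses. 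Suppose $(A,B,s_1,s_2)$ is one. First I would normalize: if $s_i = \alpha_{xy}$, replace it by $c_{\vec{xy}} \in V(T_x)$; since $\bag'(\alpha_{xy}) = \adh(xy) \subseteq \bag'(c_{\vec{xy}})$ and $|\adh(xy)| \ge |\bag'(s_i)\cap A| > |A\cap B|$, this stays a non-$k$-lean-witness, because every new $(s_1,s_2)$-adhesion is either $\adh(xy)$ or an old $(s_1,s_2)$-adhesion, all of size $>|A\cap B|$. Hence we may assume $s_1 \in V(T_{t_1})$, $s_2 \in V(T_{t_2})$ with $t_1,t_2 \in \vint(T)$. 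Put $X_1 = \bag'(s_1)\cap A \subseteq V(t_1)$ and $X_2 = \bag'(s_2)\cap B \subseteq V(t_2)$; then $X_1\subseteq A$, $X_2\subseteq B$, and $|X_1|,|X_2| > |A\cap B|$. Applying \Cref{lem:insideleanuncross} to $(A,B)$ produces a vertex cut $(C,D)$ of $\primal(G)$ with $X_1\subseteq C$, $X_2\subseteq D$, $|C\cap D| \le |A\cap B| < k$, that is parallel with every vertex cut of the form $(V(\lmap(\vec{xy})), V(\lmap(\vec{yx})))$.

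Next I would run a sink-finding argument in the style of the proof of \Cref{lem:decomptangleunbrimpliestangunbr}. Since $|X_i| > |C\cap D|$, we can fix $v_1 \in X_1\cap(C\setminus D)$ and $v_2 \in X_2\cap(D\setminus C)$. For each edge $st \in E(T)$, parallelism of $(C,D)$ with the corresponding tree separation forces at most one of $V(\lmap(\vec{st}))$, $V(\lmap(\vec{ts}))$ to meet both $C\setminus D$ and $D\setminus C$; orienting every edge towards the endpoint whose far side meets at most one of these two sets yields an acyclic orientation, hence a sink $t^\ast \in \vint(T)$ all of whose incident edges have their far side meeting at most one of $C\setminus D$, $D\setminus C$. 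Using that $v_1,v_2$ lie (via their incident hyperedges) on the $(t_1,t_2)$-path side of every edge adjacent to a node off that path, one checks $t^\ast$ must lie on the $(t_1,t_2)$-path; write it as $t^\ast = p_i$ along $p_0 = t_1,\dots,p_m = t_2$. Setting $\tilde C = C\cap V(t^\ast)$ and $\tilde D = D\cap V(t^\ast)$, the sink property implies $(\tilde C,\tilde D)$ is a genuine vertex cut of $\primal(\torso(t^\ast))$ with $\tilde C\cap\tilde D \subseteq C\cap D$, because every clique $\adh(st^\ast) = V(e_s)$ of $\primal(\torso(t^\ast))$ is contained in $V(\lmap(\vec{st^\ast}))$ and so cannot meet both $\tilde C\setminus\tilde D$ and $\tilde D\setminus\tilde C$.

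Finally I would exhibit a non-$k$-lean-witness for $\Tc_{t^\ast}$, contradicting that $\Tc_{t^\ast}$ is $k$-lean. If $m = 0$ (so $t^\ast = t_1 = t_2$), then $(\tilde C,\tilde D,s_1,s_2)$ works: $X_1 \subseteq \bag_{t^\ast}(s_1)\cap\tilde C$ and $X_2 \subseteq \bag_{t^\ast}(s_2)\cap\tilde D$ have size $> |A\cap B| \ge |\tilde C\cap\tilde D|$, and every $(s_1,s_2)$-adhesion inside $T_{t^\ast}$ is also a $(s_1,s_2)$-adhesion of $\Tc'$, hence of size $> |A\cap B|$. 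Otherwise, for a boundary edge $\vec{p_{i-1}p_i}$ (resp. $\vec{p_ip_{i+1}}$), the sink property at $p_i$ together with $v_1\in C\setminus D$ on the $p_{i-1}$-side (resp. $v_2\in D\setminus C$ on the $p_{i+1}$-side) forces $\adh(p_{i-1}p_i) \subseteq \tilde C$ (resp. $\adh(p_ip_{i+1}) \subseteq \tilde D$); moreover $\adh(p_{i-1}p_i) \subseteq \bag_{t^\ast}(c_{\vec{p_ip_{i-1}}})$ and this adhesion is a $(s_1,s_2)$-adhesion of $\Tc'$, hence of size $> |A\cap B| \ge |\tilde C\cap\tilde D|$, and the path between $c_{\vec{p_ip_{i-1}}}$ and $c_{\vec{p_ip_{i+1}}}$ in $T_{t^\ast}$ consists of $(s_1,s_2)$-adhesions of $\Tc'$. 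Taking $(\tilde C,\tilde D)$ with the node pair $(s_1, c_{\vec{p_0p_1}})$ when $i=0$, symmetrically when $i=m$, and $(c_{\vec{p_ip_{i-1}}}, c_{\vec{p_ip_{i+1}}})$ when $0<i<m$, gives a non-$k$-lean-witness for $\Tc_{t^\ast}$, a contradiction.

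The main obstacle is the middle step: establishing that $(C,D)$, being parallel with all tree separations, "localizes" at a single node $t^\ast$, that $t^\ast$ lies on the $(t_1,t_2)$-path, and that $(\tilde C,\tilde D)$ descends to an honest vertex cut of $\primal(\torso(t^\ast))$ with the relevant adhesions landing on the prescribed sides. This is essentially orientation/uncrossing bookkeeping analogous to the sink argument in \Cref{lem:decomptangleunbrimpliestangunbr}, but it must be done with some care because we work with vertex cuts of $\primal(G)$ rather than separations of $G$, and because the adhesions $\adh(p_{j}p_{j+1})$ can contain vertices of $C\cap D$.
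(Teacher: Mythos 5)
Your proof is correct and follows the same skeleton as the paper's: bound the adhesions, replace any witness node of the form $\alpha_{xy}$ by $c_{\vec{xy}}$, apply \Cref{lem:insideleanuncross} to make the cut parallel with all tree separations, and localize the witness to a single torso to contradict the $k$-leanness of some $\Tc_t$. The only divergence is the step you flag as the main obstacle, and there the paper is noticeably more direct: parallelism immediately gives $\adh(xy) \subseteq A$ or $\adh(xy) \subseteq B$ for every $xy \in E(T)$, so (i) when both witness nodes lie in the same $V(T_t)$ the restriction $(A \cap V(t), B \cap V(t))$ is already a vertex cut of $\primal(\torso(t))$ and yields a non-$k$-lean-witness for $\Tc_t$, and (ii) when they lie in $V(T_{t_1})$ and $V(T_{t_2})$ with $t_1 \neq t_2$ one simply walks the path $t_1 = h_1, \dots, h_p = t_2$: if $\adh(h_1h_2) \subseteq B$ or $\adh(h_{p-1}h_p) \subseteq A$ one reduces to case (i) at an endpoint via $c_{\vec{h_1h_2}}$ or $c_{\vec{h_ph_{p-1}}}$, and otherwise some consecutive pair has $\adh(h_{i-1}h_i)\subseteq A$ and $\adh(h_ih_{i+1})\subseteq B$, so $(A,B,c_{\vec{h_ih_{i-1}}},c_{\vec{h_ih_{i+1}}})$ is a case-(i) witness at $h_i$. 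This sidesteps entirely the need to verify that your orientation admits a sink, that the sink is an internal node, and that it lies on the $(t_1,t_2)$-path — all of which do go through (normality of $G$ rules out leaf sinks), but cost you the bookkeeping you identified.
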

\begin{claimproof}
To prove that $\adhsize(\Tc') < k$, we note that all adhesions of $\Tc'$ are either (1) adhesions of $\Tc_t$ for $t \in \vint(T)$, which by assumption have size $<k$, or (2) equal to $\adh(xy)$ for $xy \in \eint(T)$, which also by assumption have size $<k$.

Then it remains to prove that $\Tc'$ has no non-$k$-lean-witnesses.
For the sake of contradiction, assume that $(A,B,t_a,t_b)$ is a non-$k$-lean-witness of $\Tc'$, i.e., $(A,B)$ is a vertex cut of $\primal(G)$ of order $<k$, $t_a,t_b \in V(T')$, $|\bag'(t_a) \cap A|>|A \cap B|$, $|\bag'(t_b) \cap B| > |A \cap B|$, and all $(t_a,t_b)$-adhesions have size $>|A \cap B|$.

We can assume that neither $t_a$ nor $t_b$ is a node of type $\alpha_{xy}$ corresponding to an adhesion of $\Tc$, because if $t_a = \alpha_{xy}$, then in fact both $(A,B,c_{\vec{xy}}, t_b)$ and $(A,B,c_{\vec{yx}}, t_b)$ would also be non-$k$-lean-witnesses.

By applying \Cref{lem:insideleanuncross} with the vertex cut $(A,B)$, the superbranch decomposition $\Tc$, $X_1 = \bag'(t_a) \cap A$, and $X_2 = \bag'(t_b) \cap B$, we can assume that $(A,B)$ is parallel with every vertex cut of form $(V(\lmap(\vec{xy})), V(\lmap(\vec{yx})))$, where $xy \in E(T)$.
In particular, it holds that $\adh(xy) \subseteq A$ or $\adh(xy) \subseteq B$ for all $xy \in E(T)$.

Consider first the case that $t_a,t_b \in V(T_t)$ for some $t \in \vint(T)$.
Because the adhesions of $\Tc$ are subsets of $A$ or $B$, we have that $(A \cap V(t), B \cap V(t))$ is a vertex cut of $\primal(\torso(t))$.
However, now $(A \cap V(t), B \cap V(t), t_a, t_b)$ would be a non-$k$-lean-witness for $\Tc_t$.

Then suppose that $t_a \in V(T_{t_1})$ and $t_b \in V(T_{t_2})$, where $t_1,t_2 \in \vint(T)$ and $t_1 \neq t_2$.
Let $t_1 = h_1, h_2, \ldots, h_p = t_2$ be the unique $(t_1,t_2)$-path in $T$.
Recall that by the definition of non-$k$-lean-witness, we have that $|\adh(h_i h_{i+1})| > |A \cap B|$ for all $i \in [h-1]$.

Suppose first that $\adh(h_1 h_2) \subseteq B$.
Then, however, $(A,B,t_a, c_{\vec{h_1 h_2}})$ would be a non-$k$-lean-witness with $t_a, c_{\vec{h_1 h_2}} \in V(T_{t_1})$, but we already proved that such non-$k$-lean-witnesses do not exist.
Similary, if $\adh(h_{p-1} h_p) \subseteq A$, then $(A,B,c_{\vec{h_p h_{p-1}}}, t_b)$ would be a non-$k$-lean-witness with $c_{\vec{h_p h_{p-1}}}, t_b \in V(T_{t_2})$.

It remains to consider the case when $\adh(h_1 h_2) \subseteq A$ and $\adh(h_{p-1} h_p) \subseteq B$.
Now, there exists $i \in [2,h-1]$ so that $\adh(h_{i-1} h_i) \subseteq A$ and $\adh(h_i h_{i+1}) \subseteq B$.
However, then $(A,B,c_{\vec{h_i h_{i-1}}}, c_{\vec{h_i h_{i+1}}})$ would be a non-$k$-lean-witness with $c_{\vec{h_i h_{i-1}}}, c_{\vec{h_i h_{i+1}}} \in V(T_{h_i})$.
\end{claimproof}
\Cref{the:highlevel:leanimplication:claim1,the:highlevel:leanimplication:claim2} finish the proof.
\end{proof}

\section{Lean tree decompositions of unbreakable graphs}
\label{sec:lean}
In this section we prove \Cref{the:highlevel:unbrtolean}, which we re-state now.

\thehighlevelunbrtolean*

\subsection{Reduction to simpler cases}
In this subsection we reduce \Cref{the:highlevel:unbrtolean} to simpler ingredients, which then will be proven in the subsequent subsections.

We will exploit the fact that $k$-lean tree decompositions of $(s,k)$-unbreakable graphs have a lot of structure.
We start by observing that if $G$ is $(s,k)$-unbreakable, then tree decompositions of $G$ with adhesion size $<k$ have at most one large bag.

\begin{lemma}
\label{lem:atmostonelargebag}
Let $s \ge k \ge 1$ be integers and $G$ be an $(s,k)$-unbreakable graph.
Then, any tree decomposition $\Tc$ of $G$ with $\adhsize(\Tc)<k$ has at most one bag of size $\ge s$.
\end{lemma}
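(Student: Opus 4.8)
The plan is to argue by contradiction, using the tree structure of the decomposition together with the unbreakability hypothesis. Suppose $\Tc = (T,\bag)$ is a tree decomposition of $G$ with $\adhsize(\Tc) < k$, and suppose for contradiction that there are two distinct nodes $t_1, t_2 \in V(T)$ with $|\bag(t_1)| \ge s$ and $|\bag(t_2)| \ge s$. The key observation is that there is an edge $st$ on the unique $(t_1,t_2)$-path in $T$; deleting this edge splits $T$ into two subtrees, one containing $t_1$ and the other containing $t_2$. Let $A$ be the union of the bags on the $t_1$-side and $B$ the union of the bags on the $t_2$-side. Standard properties of tree decompositions give that $(A,B)$ is a vertex cut of $G$ with $A \cap B = \adh(st)$, so $|A \cap B| = |\adh(st)| \le \adhsize(\Tc) < k$.

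First I would verify the standard facts needed: that $(A,B)$ as defined is indeed a vertex cut (every edge of $G$ lies in some bag, and that bag lies entirely on one side), and that $A \cap B = \adh(st)$ (this is the usual consequence of the vertex condition — a vertex appearing in bags on both sides must appear in every bag on the path between them, in particular in both $\bag(s)$ and $\bag(t)$). These are routine and I would state them with a one-line justification rather than a full proof. Next, since $\bag(t_1) \subseteq A$ and $|\bag(t_1)| \ge s$, we get $|A \cap V(G)| = |A| \ge s$; similarly $|B| \ge s$. Thus $(A,B)$ is a vertex cut of $G$ of order $|A \cap B| < k$ with both sides of size $\ge s$, which directly contradicts the $(s,k)$-unbreakability of $G$ (taking $X = V(G)$ in the definition of unbreakability).

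The main point requiring a little care is the claim $A \cap B = \adh(st)$: one must check both inclusions. The inclusion $\adh(st) \subseteq A \cap B$ is immediate since $\bag(s) \subseteq A$ and $\bag(t) \subseteq B$ (or the appropriate orientation). For the reverse inclusion, if $v \in A \cap B$, then $v$ appears in a bag strictly on the $t_1$-side and in a bag strictly on the $t_2$-side of the edge $st$, so by the vertex condition the set of nodes whose bag contains $v$ is a connected subtree meeting both sides, hence it contains both $s$ and $t$, giving $v \in \bag(s) \cap \bag(t) = \adh(st)$. I expect this to be the only subtlety; everything else is a direct application of the unbreakability definition. I would also note the degenerate case $|V(T)| = 1$ is vacuous (there is no pair of distinct nodes), and the case where $t_1$ and $t_2$ are adjacent is covered by the same argument since the $(t_1,t_2)$-path then consists of the single edge $t_1 t_2$.
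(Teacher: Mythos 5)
Your proof is correct and follows exactly the paper's argument: the paper's one-line proof also takes a $(t_1,t_2)$-adhesion, observes that the corresponding vertex cut has order $<k$ with both sides containing a bag of size $\ge s$, and contradicts $(s,k)$-unbreakability. You have merely spelled out the routine verifications (that the adhesion induces a vertex cut with $A \cap B = \adh(st)$) that the paper leaves implicit.
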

\begin{proof}
If $\Tc$ would have two nodes $t_1,t_2$ with bags of size $\ge s$, then any $(t_1,t_2)$-adhesion would give a vertex cut that would contradict the $(s,k)$-unbreakability of $G$.
\end{proof}

Then we show that if the number of vertices of $G$ is large enough, every tree decomposition of $G$ with adhesion size $<k$ has in fact exactly one large bag.

\begin{lemma}
\label{lem:bigbagexists}
Let $s \ge k \ge 1$ be integers and $G$ an $(s,k)$-unbreakable graph with $|V(G)| \ge (2s)^{k+2}$.
Any tree decomposition of $G$ with adhesion size $<k$ has exactly one bag of size $\ge 2s$.
\end{lemma}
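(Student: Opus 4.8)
By Lemma~\ref{lem:atmostonelargebag}, any tree decomposition $\Tc = (T,\bag)$ of $G$ with $\adhsize(\Tc) < k$ has at most one bag of size $\ge s$, hence at most one bag of size $\ge 2s$. So the only thing to prove is that at least one bag of size $\ge 2s$ exists. The plan is to argue by contradiction: suppose every bag of $\Tc$ has size $< 2s$, and derive that $|V(G)|$ must be small, contradicting the hypothesis $|V(G)| \ge (2s)^{k+2}$.

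**The counting argument.** Root $T$ at an arbitrary node. For each vertex $v \in V(G)$, let the \emph{forget-node} of $v$ be the node $t$ closest to the root with $v \in \bag(t)$; by the vertex condition of tree decompositions this is well-defined, and $V(G)$ is partitioned according to which forget-node each vertex has. I would like to bound the number of vertices forgotten at each node, and the number of ``relevant'' nodes. The issue is that $T$ can have arbitrarily many nodes, so I first prune: contract every non-root node $t$ whose bag is contained in its parent's bag into its parent (as in Lemma~\ref{lem:leantdreducetotalsize}); this does not increase adhesion size, does not change $V(G)$, and only removes bags that are subsets of other bags, so it can only decrease the maximum bag size. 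After pruning, every non-root node $t$ forgets at least one vertex not present in its parent (i.e.\ $\bag(t) \not\subseteq \bag(\parent(t))$), so every node is the forget-node of at least one vertex. The number of vertices forgotten at a node $t$ is $|\bag(t)| - |\adh(t\,\parent(t))| < 2s$, and is at most $|\bag(\troot)| < 2s$ at the root. Thus $|V(G)| \le 2s \cdot |V(T)|$, and it remains to bound $|V(T)|$ — but this can be exponential in $|V(G)|$ rather than small, so a naive counting bound of this form is too weak.

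**The main obstacle and how to get around it.** The real difficulty is that a tree decomposition with small adhesions can be very ``bushy'', so $|V(T)|$ is not directly controlled. The right move is to exploit unbreakability more carefully: since $G$ is $(s,k)$-unbreakable, no adhesion (which has size $< k < s$) can split $G$ into two parts each of size $\ge s$. So for \emph{every} edge $xy \in E(T)$, one of the two sides of the corresponding separation contains fewer than $s$ vertices of $V(G)$ outside the adhesion. Orient each edge of $T$ toward its ``large'' side; this gives every edge an orientation, so there is a node $t^\star$ all of whose incident edges point toward it, and every branch of $T$ hanging off $t^\star$ contributes $< s$ new vertices. But there can be up to $|V(G)|$ branches, so again I need one more idea: group the branches by their adhesion. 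Because $\adhsize(\Tc) < k$, each branch's adhesion is a subset of $\bag(t^\star)$ of size $< k$; I would bound the number of \emph{distinct} small adhesions appearing, and within each such adhesion argue (using unbreakability recursively, or a direct submodularity/uncrossing argument on the union of branches sharing an adhesion) that the total number of vertices they contribute is $< s$, or else that union together with the adhesion forms a second part of size $\ge s$, contradicting unbreakability. Summing: $|V(G)| \le |\bag(t^\star)| + (\text{number of distinct adhesions}) \cdot s \le 2s + \binom{2s}{<k} \cdot s$, which is bounded by a polynomial in $s$ of degree roughly $k$ — I would need to be slightly generous in the constants to reach the clean bound $(2s)^{k+2}$, e.g.\ using $|\bag(t^\star)| < 2s$, the number of subsets of $\bag(t^\star)$ of size $< k$ being at most $(2s)^k$, and an extra factor of $2s$ of slack. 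Once this is below $(2s)^{k+2}$, we contradict $|V(G)| \ge (2s)^{k+2}$, so some bag has size $\ge 2s$; combined with Lemma~\ref{lem:atmostonelargebag}, it is unique.
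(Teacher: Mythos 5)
Your proposal is correct in outline and takes essentially the same route as the paper: the paper also picks a ``centroid'' node $t$ (one for which every component of $G \setminus \bag(t)$ has at most $|V(G)|/2$ vertices, hence by unbreakability fewer than $s$), groups the components of $G \setminus \bag(t)$ by their neighborhood $N(C) \subseteq \bag(t)$ of size $<k$, observes there are at most $(2s)^k$ groups, and derives a contradiction from the group containing at least $s$ vertices. One step of yours needs care: when a group $U_A$ has $|U_A| \ge s$, you cannot simply play the whole union against its complement, since if almost all of $V(G)$ lies in branches with the same adhesion the complement could have fewer than $s$ vertices and unbreakability is not violated. The fix is exactly what the paper does: since each individual branch contributes fewer than $s$ vertices, add branches one at a time to extract a sub-union $C'$ with $s \le |C'| \le 2s$ and $|N(C')| < k$; then $|V(G) \setminus C'| \ge |V(G)| - 2s \ge s$ and unbreakability is contradicted. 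With that adjustment your argument closes, and the resulting bound $2s + (2s)^{k+1}$ is comfortably below $(2s)^{k+2}$.
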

\begin{proof}
Let $\Tc = (T,\bag)$ be a tree decomposition of $G$ with $\adhsize(\Tc) < k$.
By \Cref{lem:atmostonelargebag}, there is at most one bag of size $\ge 2s$.

Suppose all bags of $\Tc$ have size $<2s$.
Let $t \in V(T)$ be a central node of $\Tc$ in the sense that all connected components of $G \setminus \bag(t)$ have size $\le |V(G)|/2$.
Because $\adhsize(\Tc) < k$, for such connected components $C$ it holds that $|N(C)|<k$.
This in turn, by $|V(G)| \ge 2s$ and the $(s,k)$-unbreakability of $G$, implies that $|C| < s$.

Let us group vertices in $V(G) \setminus \bag(t)$ based on the neighborhood $N(C)$ of the component $C$ of $G \setminus \bag(t)$ containing the vertex.
Because $|N(C)|<k$, $N(C) \subseteq \bag(t)$, and $|\bag(t)|<2s$, there are at most $(2s)^k$ such groups, so one such group must contain at least $s$ vertices.
This implies that we can choose a union $C'$ of some connected components of $G \setminus \bag(t)$, so that $s \le |C'| \le 2s$, and $|N(C')|<k$.
As $|V(G)| \ge 4s$, this would contradict the $(s,k)$-unbreakability of $G$.
\end{proof}

We then show that such tree decompositions containing exactly one large bag must have non-$k$-lean-witnesses with certain structure.

\begin{lemma}
\label{lem:leanwitstruct}
Let $s \ge k \ge 1$ be integers, $G$ an $(s,k)$-unbreakable graph, and $\Tc = (T,\bag)$ a tree decomposition of $G$ with $\adhsize(\Tc) < k$ and containing a node $t \in V(T)$ with $|\bag(t)| \ge 2s$.
If there exists a non-$k$-lean-witness $(A,B,x,y)$ for $\Tc$, then there exists a non-$k$-lean-witness $(A',B',x',y')$ for $\Tc$ so that either 
\begin{itemize}
\item $x' = y' = t$, or
\item the unique $(x',y')$-path in $T$ is disjoint from $t$,
\end{itemize}
and furthermore
\begin{itemize}
\item the distance between $x'$ and $y'$ in $T$ is at most the distance between $x$ and $y$ in $T$.
\end{itemize}
\end{lemma}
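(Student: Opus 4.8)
The statement asserts that a non-$k$-lean-witness $(A,B,x,y)$ for $\Tc$ can be ``pushed'' towards the unique large bag $\bag(t)$: either it becomes centered at $t$ (i.e.\ $x'=y'=t$), or its path avoids $t$ entirely, without increasing the distance between the two endpoints. The plan is to do a minimization argument: among all non-$k$-lean-witnesses $(A',B',x',y')$ for $\Tc$, pick one that minimizes the distance $d(x',y')$ in $T$, breaking ties by some secondary measure (for instance, the distance from $t$ to the $(x',y')$-path, or the number of nodes the path shares with some fixed subtree). Then I would argue that any such minimal witness already has the required structure. So $(A',B',x',y')$ will be obtained from $(A,B,x,y)$ purely by replacing $x$ and $y$ with ``better'' nodes, never touching $A$ and $B$; this automatically gives the distance bound in the last bullet, since at worst we keep the original $x,y$.

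\textbf{The core case analysis.} Suppose the $(x',y')$-path $P$ in $T$ passes through $t$ but $P \neq \{t\}$ (i.e.\ $x' \neq t$ or $y' \neq t$); I must derive a contradiction with minimality. The key structural facts I would use: first, the defining properties of a non-$k$-lean-witness are $|A'\cap B|<k$, $|\bag'(x')\cap A'|>|A'\cap B'|$, $|\bag'(y')\cap B'|>|A'\cap B'|$, and every $(x',y')$-adhesion has size $>|A'\cap B'|$; second, since $|\bag(t)|\ge 2s$ and $G$ is $(s,k)$-unbreakable and $|A'\cap B'|<k$, the vertex cut $(A',B')$ cannot split $\bag(t)$ evenly --- one of $|\bag(t)\cap A'|$ or $|\bag(t)\cap B'|$ must be $<s$, while $|\bag(t)\cap(A'\cup B')|=|\bag(t)|\ge 2s$, so actually one side, say $|\bag(t)\cap A'|\ge 2s-|A'\cap B'| > |A'\cap B'|$ (using $s\ge k$), is ``large''. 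Thus $t$ itself can serve as a new endpoint on the $A'$-side: $(A',B',t,y')$ is a non-$k$-lean-witness provided every $(t,y')$-adhesion still has size $>|A'\cap B'|$. Since $t$ lies on the $(x',y')$-path, the $(t,y')$-path is a subpath of $P$, so all its adhesions are among the $(x',y')$-adhesions, hence this holds. But $d(t,y')<d(x',y')$ (strictly, because $x'\neq t$ in this sub-branch, or by symmetry if $y'\neq t$ we push the $y$-endpoint instead), contradicting minimality. The only remaining possibility is $x'=y'=t$ or that $t$ is not on $P$ at all --- exactly the dichotomy in the statement.

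\textbf{The subtle point.} The genuinely delicate step is handling the possibility that $t$ does lie on $P$ but is one of the endpoints, say $x'=t$ and $y'\neq t$, or symmetrically; here pushing towards $t$ does nothing, and I need the ``path disjoint from $t$'' alternative to kick in or else show this configuration is already impossible under minimality together with the structure of $(A',B')$. If $x'=t$ and $P\neq\{t\}$, then $P$ leaves $t$; I would then try to replace $y'$ by $t$'s neighbor $c$ on $P$ if $\bag(c)$ on the $B'$-side is still large, or otherwise argue the adhesion $\adh(tc)$ already witnesses a contradiction because it lies on $P$ (size $>|A'\cap B'|$) yet separates a large-$A'$ bag from a large-$B'$ bag, making $(A',B',t,t)$ itself a witness --- this uses that $\bag(t)$ contains, on the $A'$ side, more than $|A'\cap B'|$ vertices, and on the $B'$ side we can route through the adhesion. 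I would also need \Cref{lem:atmostonelargebag} to be sure $t$ is genuinely the \emph{unique} candidate large bag so these relocations are unambiguous. Once all sub-cases collapse to the stated dichotomy, the distance bound follows for free from the minimization, finishing the proof. I expect the bookkeeping in this endpoint-on-$t$ sub-case, and correctly choosing the secondary tie-breaking measure so that ``path avoids $t$'' is actually attained rather than oscillating, to be the main obstacle; everything else is routine uncrossing-free counting.
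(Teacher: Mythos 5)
Your reduction to the case where one endpoint of the witness is $t$ is sound and matches the paper's first step: since $(A,B)$ has order $<k$ and $G$ is $(s,k)$-unbreakable, one side meets $\bag(t)$ in fewer than $s$ vertices, so the other meets it in at least $s>|A\cap B|$ vertices, and replacing the corresponding endpoint by $t$ (a node on the $(x,y)$-path) keeps all witness conditions and does not increase the distance. The gap is in the remaining case, which you yourself flag as ``the main obstacle'': a witness $(A,B,x,t)$ with $x\neq t$, so that $t$ is an endpoint of the path. Your two proposed escapes --- (i) move the $t$-endpoint to its neighbor $c$ on the path, or (ii) show $(A,B,t,t)$ is already a witness --- both fail for the same reason. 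For (ii) you need $|\bag(t)\cap A|>|A\cap B|$ in addition to $|\bag(t)\cap B|>|A\cap B|$; for (i) you need $|\bag(c)\cap B|>|A\cap B|$. The only leverage you have is $\adh(ct)\subseteq\bag(t)\cap\bag(c)$ with $|\adh(ct)|>|A\cap B|$, but ``routing through the adhesion'' does not deliver either inequality: if, say, $|A\cap B|=10$ and $|\adh(ct)|=11$, the adhesion can split as $6$ vertices in $A\setminus B$ and $5$ in $B\setminus A$, so neither $|\adh(ct)\cap A|$ nor $|\adh(ct)\cap B|$ exceeds $|A\cap B|$, and nothing forces the rest of $\bag(t)$ or $\bag(c)$ to help. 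So neither alternative need hold for the \emph{given} cut $(A,B)$.

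The root cause is your upfront commitment that $(A',B',x',y')$ ``will be obtained from $(A,B,x,y)$ purely by replacing $x$ and $y$ \dots never touching $A$ and $B$,'' and that the argument is ``uncrossing-free.'' The paper's proof must (and does) modify the cut: it takes the vertex cut $(C,D)$ induced by $\adh(ct)$ (with $\bag(x)\subseteq C$, $\bag(t)\subseteq D$) and applies submodularity, obtaining either $|(A\cap C)\cap(B\cup D)|\le|A\cap B|$, in which case $(A\cap C,\,B\cup D,\,x,\,c)$ is a witness whose path avoids $t$ (here $\adh(ct)\subseteq D$ supplies the large $B$-side intersection with $\bag(c)$), or $|(A\cup C)\cap(B\cap D)|<|C\cap D|$, in which case $(A\cup C,\,B\cap D,\,t,\,t)$ is a witness (here $\adh(ct)\subseteq C\cap\bag(t)$ supplies the large $A$-side intersection with $\bag(t)$, and one more application of $(s,k)$-unbreakability shows $|A\cup C|<s$, hence $|\bag(t)\cap B\cap D|\ge s$). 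Without this uncrossing of $(A,B)$ against the adhesion cut, the dichotomy in the statement cannot be established, so the proposal as written does not go through.
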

\begin{proof}
First, because $G$ is $(s,k)$-unbreakable and the order of the vertex cut $(A,B)$ is $<k$, we can assume that either $|A|<s$ or $|B|<s$.
Then, noting that if $(A,B,x,y)$ is a non-$k$-lean-witness then $(B,A,y,x)$ is a non-$k$-lean-witness, assume \wilog that $|A| < s$.

Our goal is now to turn $(A,B,x,y)$ into a non-$k$-lean-witness that satisfies the conclusion of the lemma.
Note that if the unique $(x,y)$-path in $T$ does not contain $t$, then $(A,B,x,y)$ already satisfies the conclusion, so assume that the $(x,y)$-path contains $t$.

Because $|A|<s$ and $|\bag(t)| \ge 2s$, we have that $|\bag(t) \cap B| \ge s > |A \cap B|$.
Now, because the $(x,y)$-path in $T$ contains $t$, every $(x,t)$-adhesion has size $> |A \cap B|$, and therefore $(A,B,x,t)$ is a non-$k$-lean-witness for $(T,\bag)$.
Furthermore, in this case the distance between $x$ and $t$ is at most the distance between $x$ and $y$.
If $x=t$ we are done, and otherwise we reduced the general case to the case when $y=t$ and $x \neq t$, so from now assume these conditions.

Let $c \in V(T)$ be the node adjacent to $t$ on the $(x,t)$-path, and let $(C,D)$ be the vertex cut of $G$ corresponding to the adhesion at $ct$, particularly with $\bag(x) \subseteq C$, $\bag(t) \subseteq D$, and $\adh(ct) = C \cap D$.
We have that $(A \cap C, B \cup D)$ and $(A \cup C, B \cap D)$ are vertex cuts, and by submodularity that either
\begin{equation}
\label{lem:leanwitstruct:case1}
|(A \cap C) \cap (B \cup D)| \le |A \cap B|,
\end{equation}
or
\begin{equation}
\label{lem:leanwitstruct:case2}
|(A \cup C) \cap (B \cap D)| < |C \cap D|.
\end{equation}

\begin{claim}
\label{lem:leanwitstruct:claimcase1}
If \Cref{lem:leanwitstruct:case1} holds, then $(A \cap C, B \cup D, x, c)$ is a non-$k$-lean-witness.
\end{claim}
\begin{claimproof}
First, because $\bag(x) \subseteq C$, we have that $\bag(x) \cap A = \bag(x) \cap A \cap C$, implying that 
\[|\bag(x) \cap A \cap C| > |A \cap B| \ge |(A \cap C) \cap (B \cup D)|.\]

Then, because every $(x,t)$-adhesion has size $>|A \cap B|$, we have that $|\adh(ct)| > |A \cap B|$.
Because $\adh(ct) \subseteq D$, it follows that 
\[|\bag(c) \cap (B \cup D)| > |A \cap B| \ge |(A \cap C) \cap (B \cup D)|.\]

Finally, because every $(x,t)$-adhesion has size $>|A \cap B|$, clearly every $(x,c)$-adhesion has size $>|(A \cap C) \cap (B \cup D)|$.
\end{claimproof}

\begin{claim}
\label{lem:leanwitstruct:claimcase2}
If \Cref{lem:leanwitstruct:case2} holds, then $(A \cup C, B \cap D, t, t)$ is a non-$k$-lean-witness.
\end{claim}
\begin{claimproof}
Because $|A| < s$, $|C| < s$, $|V(G)| \ge 4s$, and $|(A \cup C) \cap (B \cap D)|<k$, the $(s,k)$-unbreakability of $G$ implies that $|A \cup C| < s$.
This in turn, with $|\bag(t)| \ge 2s$, implies that $|B \cap D \cap \bag(t)| \ge s > |C \cap D| > |(A \cup C) \cap (B \cap D)|$.

Then, we have that $\adh(ct) \subseteq C \cap \bag(t)$ and $|\adh(ct)| = |C \cap D| > |(A \cup C) \cap (B \cap D)|$, implying that $|\bag(t) \cap (A \cup C)| > |(A \cup C) \cap (B \cap D)|$.
\end{claimproof}

Because both the distance between $x$ and $c$ and the distance between $t$ and $t$ are at most the distance between $x$ and $y$, \Cref{lem:leanwitstruct:claimcase1,lem:leanwitstruct:claimcase2} finish the proof.
\end{proof}

We say that a \emph{star decomposition} of a graph is a rooted tree decomposition $(T,\bag)$ so that all other nodes of $T$ than the root have degree $1$.
Let $s \ge k \ge 1$ be integers, $G$ be an $(s,k)$-unbreakable graph, and $(T,\bag)$ a star decomposition of $G$ with $\adhsize(\Tc) < k$.
We say that $(T,\bag)$ is \emph{big-star-$k$-lean} if all non-root nodes have bags of size $<s$, and there is no non-$k$-lean-witness of form $(A,B,\troot(T),\troot(T))$ for $(T,\bag)$.

Next we show that in the setting when $G$ has a large number of vertices, the computation of a $k$-lean tree decomposition reduces to computing a big-star-$k$-lean tree decomposition of $G$, and many $k$-lean tree decompositions of graphs with less than $s$ vertices.
In particular, we will show that after computing a big-star-$k$-lean tree decomposition, one can greedily commit to the bag of the root node of this decomposition, independently compute $k$-lean tree decompositions of the small graphs hanging from it, and then combine them to obtain a $k$-lean tree decomposition.

To formalize this, we introduce the following notation.
Let $(A,B)$ be a vertex cut of $G$.
We denote by $G \resgcliqs (A,B)$ the graph obtained from $G[A]$ by adding edges between all pairs of vertices in $A \cap B$.

\begin{lemma}
\label{lem:mergeintolean}
Let $s \ge k \ge 1$ be integers, $G$ an $(s,k)$-unbreakable graph with $|V(G)| \ge (2s)^{k+2}$, and $\Tc = (T,\bag)$ a big-star-$k$-lean tree decomposition of $G$.
Let $t$ be the root of $(T,\bag)$, and enumerate the other nodes by $x_1,\ldots,x_\ell$.
For $i \in [\ell]$, let $(C_i, D_i)$ be the vertex cut of $G$ corresponding to the adhesion at $x_i t$, i.e., with $C_i = \bag(x_i)$ and $C_i \cap D_i = \adh(x_i t)$.
Let also $(T^i, \bag^i)$ be a $k$-lean tree decomposition of $G \resgcliqs (C_i,D_i)$, with a node $r^i \in V(T^i)$ with $C_i \cap D_i \subseteq \bag^i(r^i)$.
Then, the tree decomposition $(T',\bag')$, obtained by replacing each $x_i$ by the tree decomposition $(T^i,\bag^i)$ rooted at $r^i$ is a $k$-lean tree decomposition of $G$.
\end{lemma}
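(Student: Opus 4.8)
The plan is to verify the two defining conditions of a $k$-lean tree decomposition for $(T',\bag')$: that it is a tree decomposition of $G$ with $\adhsize(T',\bag') < k$, and that it admits no non-$k$-lean-witness. First I would check that $(T',\bag')$ is a valid tree decomposition. The edge and vertex conditions follow routinely: each $G \resgcliqs (C_i, D_i)$ is a supergraph of $G[C_i]$ on vertex set $C_i$ that additionally makes $C_i \cap D_i$ a clique, so every edge of $G$ lives inside some $C_i$ (or is covered by $\bag(t)$ if both endpoints are in $\bag(t) \subseteq D_i$ for all $i$, but actually every edge of $G$ with an endpoint outside $\bag(t)$ lies entirely in some $C_i$, and every edge inside $\bag(t)$ is covered since $\bag(t) = \bigcap_i D_i \cup (\text{root bag})$; I'd state this carefully). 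The key point for the vertex condition is that $C_i \cap C_j \subseteq \bag(t)$ for $i \neq j$ (since $(C_i, D_i)$ and $(C_j, D_j)$ are ``parallel'' cuts coming from a star decomposition), and the node $r^i$ containing $C_i \cap D_i$ is attached to where $\bag(t)$ sits, so the subtree of any vertex stays connected. For the adhesion size: the adhesions of $(T',\bag')$ are either adhesions internal to some $(T^i, \bag^i)$, which have size $<k$ since $(T^i,\bag^i)$ is $k$-lean, or they are subsets of $C_i \cap D_i = \adh(x_i t)$, which has size $<k$ since $\adhsize(\Tc) < k$.

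The heart of the proof is showing there is no non-$k$-lean-witness for $(T',\bag')$. Suppose $(A,B,x,y)$ is one. By \Cref{lem:bigbagexists} (applicable since $|V(G)| \ge (2s)^{k+2}$), any tree decomposition of $G$ with adhesion size $<k$ has exactly one bag of size $\ge 2s$; since the non-root nodes of $\Tc$ have bags of size $<s$, and each $(T^i,\bag^i)$ is a $k$-lean decomposition of $G \resgcliqs(C_i,D_i)$ which has fewer than $s$ vertices (because $|C_i| = |\bag(x_i)| < s$ as $\Tc$ is big-star-$k$-lean), all bags of $(T',\bag')$ coming from the $(T^i,\bag^i)$ have size $< s$. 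Hence the unique large bag of $(T',\bag')$ is $\bag'(t) = \bag(t)$ (which has size $\ge 2s$). Now apply \Cref{lem:leanwitstruct} to $(T',\bag')$ and $t$: we may assume the witness $(A',B',x',y')$ has either $x' = y' = t$, or the $(x',y')$-path disjoint from $t$. The first case is excluded because $(T,\bag)$ is big-star-$k$-lean, so there is no non-$k$-lean-witness of form $(A,B,t,t)$ for $(T,\bag)$, and such a witness for $(T',\bag')$ with $x'=y'=t$ would also be one for $(T,\bag)$ since $\bag'(t) = \bag(t)$ — here I'd note $(A',B')$ is still a vertex cut of $G$ and every $(t,t)$-adhesion is empty so the ``adhesion size'' condition is vacuous, matching both decompositions.

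In the second case, the $(x',y')$-path in $T'$ avoids $t$, so it lies entirely within a single one of the subtrees $T^i$ (the subtrees $T^i$ are the connected components of $T' - t$ after collapsing the attachment; more precisely, since $t$ is a cut vertex of $T'$ separating the $T^i$ from each other, any path avoiding $t$ stays inside one $T^i$). Fix this $i$. I then want to derive a non-$k$-lean-witness for the $k$-lean decomposition $(T^i, \bag^i)$ of $G \resgcliqs (C_i, D_i)$, contradicting its $k$-leanness. The natural candidate is $(A' \cap C_i, B' \cap C_i, x', y')$. The main obstacle — and the step I'd expect to require genuine care — is checking that $(A' \cap C_i, B' \cap C_i)$ is a vertex cut of $G \resgcliqs(C_i, D_i)$ of order $\le |A' \cap B'| < k$, and that the cardinality inequalities $|\bag^i(x') \cap (A' \cap C_i)| > |(A'\cap C_i)\cap(B'\cap C_i)|$ etc. are preserved. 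Being a vertex cut of $G \resgcliqs(C_i,D_i)$ should follow because $G \resgcliqs(C_i,D_i)$ adds only edges inside the clique $C_i \cap D_i$, and $C_i \cap D_i \subseteq D_i$; since $\bag'(x'), \bag'(y') \subseteq C_i$ and all the action is confined to $G[C_i]$ (with $C_i \cap D_i$ lying in the ``interface''), one needs $C_i \cap D_i$ to be entirely on one side or to be split consistently — I would handle this by using that the witness-bags $\bag^i(x') = \bag'(x')$ and $\bag^i(y') = \bag'(y')$ both contain vertices only from $C_i$, and that $C_i\cap D_i$ being a clique in $G\resgcliqs(C_i,D_i)$ forces it to lie on one side of $(A'\cap C_i, B'\cap C_i)$ unless it meets $A'\cap B'$. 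The cardinality and adhesion-size conditions transfer directly since $\bag^i$ and $\bag'$ agree on $V(T^i)$ and restricting the cut to $C_i$ only removes vertices outside $C_i$, which do not lie in $\bag'(x')$ or $\bag'(y')$. Once this reduction goes through, both cases yield contradictions, so $(T',\bag')$ has no non-$k$-lean-witness and is therefore $k$-lean, completing the proof.
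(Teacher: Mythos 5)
Your overall architecture matches the paper's: reduce via \Cref{lem:bigbagexists} and \Cref{lem:leanwitstruct} to the two cases (witness at the root, which big-star-$k$-leanness kills, or witness whose path lies inside a single $T^i$), then push the witness into $(T^i,\bag^i)$. The gap is in the step you yourself flag as needing ``genuine care'': the naive restriction $(A'\cap C_i,\,B'\cap C_i)$ is in general \emph{not} a vertex cut of $G \resgcliqs (C_i,D_i)$, and your proposed justification does not repair this. The cut $(A',B')$ is a vertex cut of $G$, so nothing prevents the adhesion $C_i\cap D_i$ from containing a vertex $u\in A'\setminus B'$ and a vertex $v\in B'\setminus A'$ that are non-adjacent in $G$; the clique added on $C_i\cap D_i$ in $G\resgcliqs(C_i,D_i)$ then creates the edge $uv$ crossing your candidate cut. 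Saying that the clique ``forces $C_i\cap D_i$ to lie on one side'' is circular: the clique lives in the target graph and imposes no constraint on how $(A',B')$ splits it.

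The missing idea is an uncrossing step. The paper first assumes \wilog{} $|A'|<s$ (using $(s,k)$-unbreakability of $G$, which your argument never invokes at this point), and then replaces $(A',B')$ by $(A'\cap C_i,\,B'\cup D_i)$, which puts all of $C_i\cap D_i$ on the $B$-side so that the added clique cannot cross the cut. To see that this replacement does not increase the order, one uses submodularity: either $|(A'\cap C_i)\cap(B'\cup D_i)|\le|A'\cap B'|$, as desired, or $|(A'\cup C_i)\cap(B'\cap D_i)|<|C_i\cap D_i|$; in the latter case $(A'\cup C_i,\,B'\cap D_i,t,t)$ is shown to be a non-$k$-lean-witness at the root of $(T,\bag)$ (here one needs $|A'\cup C_i|<s$, hence the \wilog, and $|\bag(t)|\ge 2s$), contradicting big-star-$k$-leanness. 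Only after this uncrossing does the restriction to $C_i$ yield a valid non-$k$-lean-witness for $(T^i,\bag^i)$. Without this step your case analysis does not close.
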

\begin{proof}
As the tree decompositions $(T,\bag)$ and $(T^i,\bag^i)$ for all $i \in [\ell]$ have adhesion size $<k$, the resulting tree decomposition $(T',\bag')$ has adhesion size $<k$.
Let $t \in V(T')$ be the root node of $(T',\bag')$.
We have that $\bag'(t) = \bag(t)$, and therefore, because $(T,\bag)$ is big-star-$k$-lean, there is no non-$k$-lean-witness of form $(A,B,t,t)$ for $(T',\bag')$.
By \Cref{lem:bigbagexists}, $|\bag'(t)| \ge 2s$.

By \Cref{lem:leanwitstruct}, the other possible form of a non-$k$-lean-witness for $(T',\bag')$ is $(A,B,x,y)$, where the $(x,y)$-path in $T'$ does not contain $t$.
For the sake of contradiction, assume that such a non-$k$-lean-witness exists.
Because the $(x,y)$-path does not contain $t$, both $x$ and $y$ come from the same tree decomposition $(T^i,\bag^i)$.

Assume \wilog that $|A| < s$.
Recall that $(A \cap C_i, B \cup D_i)$ is a vertex cut of $G$.

\begin{claim}
$|(A \cap C_i) \cap (B \cup D_i)| \le |A \cap B|$.
\end{claim}
\begin{claimproof}
By submodularity, we have that either
\begin{equation}
\label{lem:mergeintolean:eq1}
|(A \cap C_i) \cap (B \cup D_i)| \le |A \cap B|,
\end{equation}
or
\begin{equation}
\label{lem:mergeintolean:eq2}
|(A \cup C_i) \cap (B \cap D_i)| < |C_i \cap D_i|.
\end{equation}

In the case of \Cref{lem:mergeintolean:eq1} we are done, so assume \Cref{lem:mergeintolean:eq2} holds.
We show that in this case, $(A \cup C_i, B \cap D_i, t, t)$ is a non-$k$-lean-witness for $(T,\bag)$, which is a contradiction.

Because $|A| < s$, $|C_i| < s$, and $|V(G)| \ge 4s$, we have that $|A \cup C_i| < s$, implying that $|\bag(t) \cap (B \cup D_i)| \ge s > |C_i \cap D_i| > |(A \cup C_i) \cap (B \cap D_i)|$.
We also have that $C_i \cap D_i \subseteq \bag(t)$, and $C_i \cap D_i \subseteq A \cup C_i$, implying that $|\bag(t) \cap (A \cup C_i)| \ge |C_i \cap D_i| > |(A \cup C_i) \cap (B \cap D_i)|$.
\end{claimproof}

Because $\bag'(x) \subseteq C_i$, it follows that that $(A \cap C_i, B \cup D_i, x, y)$ is a non-$k$-lean-witness for $(T',\bag')$.
However, note that $(A \cap C_i, (B \cup D_i) \cap C_i)$ is a vertex cut of order $|(A \cap C_i) \cap (B \cup D_i)|$ in the graph $G \resgcliqs (C_i, D_i)$, which implies that $(A \cap C_i, (B \cup D_i) \cap C_i, x_i, y_i)$ is a non-$k$-lean-witness for $(T^i, \bag^i)$, where $x_i$ and $y_i$ are the nodes of $T^i$ corresponding to $x$ and $y$.
This contradicts the $k$-leannes of $(T^i,\bag^i)$.
\end{proof}

We now state two lemmas whose proofs are the goals of \Cref{subsec:leandecompsmallgraph,subsec:bignodelean}, respectively.
First, in \Cref{subsec:bignodelean} we give an algorithm for computing big-star-$k$-lean tree decompositions.

\begin{restatable}{lemma}{leanalglarge}
\label{lem:leanalglarge}
There is an algorithm that, given integers $s \ge k \ge 1$, a graph $G$ that is $(s,k)$-unbreakable and has $|V(G)| \ge (2s)^{k+2}$ vertices, in time $s^{\OO(k)} \cdot \|G\|$ returns a big-star-$k$-lean tree decomposition of $G$.
\end{restatable}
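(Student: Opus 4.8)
\textbf{Proof plan for \Cref{lem:leanalglarge}.}
The plan is to implement the Bellenbaum--Diestel improvement procedure~\cite{bellenbaum2002two,DBLP:journals/talg/CyganKLPPSW21}, but tailored so that we only ever work with a single ``central'' bag, which by \Cref{lem:bigbagexists} must exist and have size $\ge 2s$ in any adhesion-$<k$ tree decomposition of $G$. Concretely, I would start from the trivial star decomposition whose root bag is all of $V(G)$ (and which has no other nodes), and repeatedly look for a non-$k$-lean-witness of the form $(A,B,\troot(T),\troot(T))$. If none exists, the current decomposition is already big-star-$k$-lean after we peel off the small pieces hanging from the root, so we are done. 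Otherwise, by $(s,k)$-unbreakability we may assume $|A|<s$, and the Bellenbaum--Diestel uncrossing step tells us how to improve the decomposition: we replace the root bag $W$ by $B$ (the ``big side'' of the witnessing cut) and attach $A$ as a new leaf with adhesion $A\cap B$. I would then argue, exactly as in~\cite{bellenbaum2002two,DBLP:journals/talg/CyganKLPPSW21}, that an appropriate potential of the root bag strictly decreases, so the process terminates; along the way all other nodes remain leaves with bags of size $<s$ by \Cref{lem:atmostonelargebag} applied to the intermediate decompositions.

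The efficiency is where the real work lies, since a naive implementation is far from linear-time. The key is that each improvement step only needs a non-$k$-lean-witness $(A,B)$ with $|A|<s$, i.e.\ a vertex cut of order $<k$ separating $\ge k$ vertices of $W$ from another $\ge k$ vertices of $W$ with $|A|<s$; such a cut, if it exists, can be found by a bounded local search around $W$ of cost proportional to $s^{\OO(k)}$ times the ``frontier'' explored, rather than $\|G\|^{\OO(1)}$. I would maintain the graph $G\resgcliqs(W,V(G))$ restricted to what is reachable from $W$ together with the small attached pieces, and use the local-search machinery of \Cref{sec:localsearch} (in particular the routines behind \Cref{lem:kwlubalg,lem:localchipalgo}, adapted to the graph setting via $\hyperg(G)$) to test for a witnessing cut: enumerate candidate boundaries $A\cap B\subseteq W$ of size $<k$, and for each run a Menger/Ford--Fulkerson computation on the $\le s$-sized ball that the small side $A$ would have to live in. Each improvement step then costs $s^{\OO(k)}$ times the size of the pieces cut off, and since every vertex leaves the root bag at most a bounded number of times under the Bellenbaum--Diestel potential argument, the total cost telescopes to $s^{\OO(k)}\cdot\|G\|$.

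The main obstacle I expect is the amortization: in the Bellenbaum--Diestel procedure a vertex can move from the central bag to a small piece and later, when that small piece is re-absorbed because a new cut crosses it, move back into the central bag, so a purely ``each vertex leaves once'' charging does not literally hold. The fix, following~\cite{DBLP:journals/talg/CyganKLPPSW21}, is to choose the improving cut $(A,B)$ to be, among all non-$k$-lean-witnesses at the root, one minimizing $|A\cap B|$ and then $|A|$ (or a lexicographic refinement thereof), which guarantees that $A\cap B$ is \emph{linked} into the relevant part of the previous decomposition and hence that the newly created small piece $A$ is never again disturbed by a later improvement; the submodularity lemmas of \Cref{subsec:linkedness} (especially \Cref{lem:transitivityoflinkedness,lem:linkedcombwelllinked,lem:minnonkwlwitnesslinked}) are exactly what is needed to make this ``no re-absorption'' invariant precise. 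Granting that invariant, each vertex is cut off from the root bag at most once, each local search that fails to find a witness is charged to the termination of the procedure, and the running time bound follows. Finally I would note that, together with \Cref{lem:leandecompsmallgraph} (the small-graph case proven in \Cref{subsec:leandecompsmallgraph}) and \Cref{lem:mergeintolean}, this yields the full algorithm of \Cref{the:highlevel:unbrtolean} for graphs with at least $(2s)^{k+2}$ vertices, while graphs with fewer vertices are handled directly by the small-graph routine.
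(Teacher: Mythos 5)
Your high-level shape matches the paper: iteratively refine a star decomposition by uncrossing a witness at the root, keep the root bag shrinking, and keep all other nodes as leaves with small bags. But there are two genuine gaps in how you propose to make this run in $s^{\OO(k)}\cdot\|G\|$ time.

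First, your witness-finding step is not correctly specified and is the place where the real missing idea lives. You propose to ``enumerate candidate boundaries $A\cap B\subseteq W$ of size $<k$,'' but the separator of a non-$k$-lean-witness need not be contained in the root bag $W$, and even if it were, $|W|$ can be $\Theta(|V(G)|)$, so this enumeration costs $|V(G)|^{\OO(k)}$ per step, not $s^{\OO(k)}$. The paper's solution is a structural lemma (\Cref{lem:wellstructwitness}) showing that if any witness $(A,B,t,t)$ exists then one exists in a very restricted form: $|A|<s$, $A\setminus B$ is a non-empty set of vertices of degree $<s$, $N(A\setminus B)=A\cap B$, $G[A\cap L]$ is connected, and $G[A\setminus B]$ has at most $k$ connected components. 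This is what permits an $s^{\OO(k)}$ branching enumeration \emph{anchored at a single low-degree vertex} $v$ (\Cref{lem:compwelstrucleanwit}), together with a queue of low-degree vertices carrying the invariant that some vertex of $A$ of every surviving witness is still in the queue; after a refinement only the $\OO(s)$ vertices of the new leaf need to be re-inserted. Your appeal to the hypergraph local-search routines of \Cref{sec:localsearch} does not substitute for this: those enumerate internally connected hyperedge sets around a given seed, whereas here one must first prove that a witness with a connected, low-degree, few-component small side exists at all.

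Second, the amortization worry you flag (re-absorption of small pieces) is not the actual difficulty, and your proposed fix does not match what is needed. The refinement sets the new root bag to $(W\setminus A)\cup(A\cap B)$ (not to $B$, as you write), and since $|W\cap A|>|A\cap B|$ the root bag \emph{strictly shrinks} at every refinement; that alone bounds the number of refinements by $|V(G)|$, with no ``never disturbed again'' invariant required --- indeed later refinements do intersect old leaf bags with $B$. Where linkedness genuinely enters is elsewhere: one must post-process the witness so that $\flow(A\cap B, A\cap W)=|A\cap B|$ ($A$-linkedness, \Cref{lem:alinkedwitness}), and then a submodularity exchange (\Cref{lem:refinresultingdecomp}) shows the adhesions between the root and the \emph{pre-existing} leaves do not grow to size $\ge k$. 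Without that step your refinement can violate the adhesion bound $\adhsize(\Tc)<k$, which is needed both for correctness and to keep invoking \Cref{lem:bigbagexists}.
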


Then, in \Cref{subsec:leandecompsmallgraph} we show that $k$-lean tree decompositions of $(s,k)$-unbreakable graphs can be computed in time $s^{\OO(k)} \cdot \|G\|^{\OO(1)}$, essentially by a direct algorithmic adaptation of the proof of \cite{bellenbaum2002two} (as was done also in \cite{DBLP:journals/talg/CyganKLPPSW21} for computing unbreakable decompositions).

\begin{restatable}{lemma}{leanalgsmall}
\label{lem:leanalgsmall}
There is an algorithm that, given integers $s \ge k \ge 1$, and a graph $G$ that is $(s,k)$-unbreakable, in time $s^{\OO(k)} \cdot \|G\|^{\OO(1)}$ returns a $k$-lean tree decomposition of $G$.
\end{restatable}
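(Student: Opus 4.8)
\textbf{Proof proposal for \Cref{lem:leanalgsmall}.}

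The plan is to give a direct algorithmic implementation of the Bellenbaum--Diestel improvement procedure~\cite{bellenbaum2002two}, in the same spirit as the computation of unbreakable tree decompositions in~\cite{DBLP:journals/talg/CyganKLPPSW21}, but without worrying about linear-time efficiency: here we are allowed a factor of $\|G\|^{\OO(1)}$, so the only parameter dependence we need to control carefully is $s^{\OO(k)}$. First I would start with the trivial tree decomposition consisting of a single bag equal to $V(G)$, which has adhesion size $-\infty < k$, and then repeatedly ``improve'' it: as long as there is a non-$k$-lean-witness $(A,B,t_1,t_2)$ for the current decomposition $\Tc = (T,\bag)$, I would replace $\Tc$ by a strictly ``more lean'' decomposition by splitting along a suitable minimum-order vertex cut. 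Concretely, given such a witness, let $\ell = |A\cap B| < k$, and by submodularity and posimodularity (the hypergraph analogues proven as \Cref{lem:uncrosswl,lem:uncrosswlappl}, or directly the vertex-cut submodularity stated in the preliminaries) one can uncross $(A,B)$ with the adhesions on the $(t_1,t_2)$-path to obtain a vertex cut $(A',B')$ of order $\le \ell$ that is parallel to all adhesions on that path, still separates a large part of $\bag(t_1)$ from a large part of $\bag(t_2)$, and is ``non-trivial'' in that neither side contains $\bag(t_1)$ nor $\bag(t_2)$ entirely. Then I would use $(A',B')$ to cut the tree: find the unique edge $e = uv$ on the $(t_1,t_2)$-path (or the unique node, if $t_1=t_2$) whose adhesion is crossed in the relevant sense, and subdivide/modify $\Tc$ by distributing the subtree on each side of $e$ according to which side of $(A',B')$ it falls into, inserting a new adhesion equal to $A'\cap B'$ of size $\le \ell < k$. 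Following~\cite{bellenbaum2002two}, one assigns a potential to $\Tc$ --- essentially a lexicographically ordered multiset recording, for each unordered pair of nodes, the minimum adhesion size on the path between them (capped at $k$), together with bag sizes --- and shows that each improvement step strictly decreases this potential, so the process terminates; since all quantities are bounded by $\|G\|$ and $k$, the number of steps is $\|G\|^{\OO(1)}$.

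The second ingredient is: how do we \emph{find} a non-$k$-lean-witness when one exists, within the allotted time? Here is where $(s,k)$-unbreakability enters, combined with \Cref{lem:atmostonelargebag}: any tree decomposition of $G$ with adhesion size $<k$ has at most one bag of size $\ge s$. Checking the $k$-leanness condition requires, for each pair of nodes $t_1,t_2$ and subsets $X_1\subseteq\bag(t_1), X_2\subseteq\bag(t_2)$ with $|X_1|,|X_2| > \ell$, deciding whether $\flow(X_1,X_2) \le \ell$ for $\ell$ up to $k-1$; by Menger's theorem this is a bounded-flow computation. The key point is that a minimal witness $(A,B,t_1,t_2)$ can be chosen so that $\min(|A|,|B|) < s$ (otherwise $(A,B)$ itself would violate $(s,k)$-unbreakability), which bounds the ``interesting'' side; but even without that reduction, we can afford to, for every ordered pair of nodes $(t_1,t_2)$ and every $\ell \in [0,k-1]$, try all subsets $X_1$ of size $\le \ell+1$ of the \emph{small} bags and, for the at-most-one large bag, use max-flow from the large bag to each other bag. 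Since there is at most one large bag, the number of ``small-bag'' subsets of size $\le k$ across all nodes is at most $\|G\| \cdot s^{\OO(k)}$, and each flow computation costs $\|G\|^{\OO(1)}$; for the large bag, one runs $\OO(k\cdot\|G\|)$ max-flow computations of value $<k$, each in $\OO(k\cdot\|G\|)$ time. Thus one improvement step, including detecting a witness or certifying $k$-leanness, runs in $s^{\OO(k)}\cdot\|G\|^{\OO(1)}$ time, and multiplying by the $\|G\|^{\OO(1)}$ bound on the number of steps yields the claimed $s^{\OO(k)}\cdot\|G\|^{\OO(1)}$ total running time.

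The main obstacle, I expect, is not the algorithmics but getting the \emph{potential function and the improvement step to match up exactly}: one must define the right lexicographic order (Bellenbaum--Diestel use a carefully chosen ordering so that uncrossing never increases it), verify that the uncrossed cut $(A',B')$ obtained by the submodularity argument is genuinely ``tighter'' --- i.e. that re-splitting the tree along it strictly decreases the potential rather than merely leaving it unchanged --- and handle the corner cases ($t_1=t_2$, adhesions of equal size, the edge $e$ not being unique) without looping forever. I would address this by: (i) reproving the Bellenbaum--Diestel potential-decrease lemma in the vertex-cut language already set up in \Cref{subsec:defs:graphtheory} (it is essentially the statement that from any non-$k$-lean-witness one extracts a witness whose cut is parallel to the path-adhesions, which is the vertex-cut specialization of lemmas like \Cref{lem:insideleanuncross}); and (ii) being careful that when we split along $(A',B')$ we only ever introduce adhesions of size $\le \ell$, so adhesion size stays $<k$ throughout and the invariant that the final object is a genuine tree decomposition (edge and vertex conditions) is preserved --- the edge condition because $A'\cap B'$ is made into the new adhesion and every original edge stays inside one side, the vertex condition because $(A',B')$ is parallel to all path-adhesions so the subtree on which any vertex lives stays connected. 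Once the potential argument is nailed down, correctness of the output (it is $k$-lean, by \Cref{klean:prop2} and the observation relating $k$-leanness to absence of non-$k$-lean-witnesses) and the running-time bound follow routinely.
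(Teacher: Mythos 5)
Your overall route is the same as the paper's (iterate a Bellenbaum--Diestel refinement starting from the one-bag decomposition, using $(s,k)$-unbreakability and \Cref{lem:atmostonelargebag} to locate witnesses), but two steps that you treat as routine are exactly where the work lies, and as written they are gaps. First, witness detection involving the unique large bag. Your search enumerates $\le k$-subsets of the small bags and, for the large bag $t$, runs ``max-flow from the large bag to each other bag''. This never tests the diagonal case $t_1=t_2=t$ with $X_1,X_2\subseteq\bag(t)$, which is the essential case; enumerating subsets of $\bag(t)$ costs $|V(G)|^{\OO(k)}$, which is not $s^{\OO(k)}\cdot\|G\|^{\OO(1)}$ and would break the use of this lemma inside \Cref{the:highlevel:unbrtolean}; and flow computations from the \emph{entire} large bag can miss witnesses in which a few vertices of $\bag(t)$ sit on the far side of the cut, so the mixed case (large bag paired with a small bag) is not correctly handled either. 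The paper resolves this by first reducing every witness to one with $t_1=t_2=t$ or with a path avoiding $t$ (\Cref{lem:leanwitstruct}), and then giving a dedicated $s^{\OO(k)}$ local-search enumeration of ``well-structured'' witnesses attached to the large bag (\Cref{lem:wellstructwitness,lem:compwelstrucleanwit}). Your observation that a minimal witness has a side of size $<s$ is the right starting point, but by itself it yields no $s^{\OO(k)}$ search procedure, since that small side need not lie inside any small bag.

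Second, termination. Asserting that a lexicographically ordered multiset potential decreases strictly does not give a $\|G\|^{\OO(1)}$ bound on the number of refinements; well-founded multiset orders admit superpolynomially long descending chains, and the original Bellenbaum--Diestel argument only yields termination, not a step count. The paper gets a quantitative bound by (i) insisting on a \emph{minimal} witness in a specific sense (minimize the $t_1$--$t_2$ distance, then $|A\cap B|$, then $\sum_{v\in A\cap B} d_{t_1,t_2}(v)$, found via a weighted min-cut in \Cref{lem:smallleanfindminiwitness}), (ii) using the full refinement that splits the \emph{whole} decomposition into two copies glued along $S$ via the pull construction --- your single-edge redistribution is not even defined in the case $t_1=t_2$, which is the case you must handle for the large bag, and it is not clear it keeps adhesions below $k$ --- and (iii) a capped numeric potential $\Phi$ of initial value $2^{\OO(k)}\cdot|V(G)|$ that provably drops by at least $1$ per refinement (\Cref{lem:bdanalysis,lem:smallleanfirstpotanalysis,lem:smallleansplitnodeexists,lem:smallleanpotanal}). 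You correctly identify this as the main obstacle, but the proposal does not supply the potential, the minimality conditions it requires, or the refinement operation for which the decrease can be proved, so the claimed $\|G\|^{\OO(1)}$ iteration bound is unsupported.
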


Before proceeding to the proofs of \Cref{lem:leanalglarge,lem:leanalgsmall}, we write out how the combination of \Cref{lem:mergeintolean,lem:leanalglarge,lem:leanalgsmall} implies \Cref{the:highlevel:unbrtolean}.

\thehighlevelunbrtolean*
\begin{proof}
First, if $|V(G)| < (2s)^{k+2}$, we run the algorithm of \Cref{lem:leanalgsmall}, which runs in time $s^{\OO(k)}$ in this case.

Otherwise, $|V(G)| \ge (2s)^{k+2}$.
We apply the algorithm of \Cref{lem:leanalglarge} to compute a big-star-$k$-lean tree decomposition $(T,\bag)$ of $G$.
Let $t \in V(T)$ be the root of $T$, and $x_1, \ldots, x_{\ell}$ the leaves of $T$.
One can assume that $\ell \le |V(G)|$, because if $\bag(x_i) \setminus \bag(t)$ is empty, the leaf $x_i$ can be simply removed.

For each leaf $x_i$, we construct the graph $G_i = G \resgcliqs (\bag(x_i), (V(G) \setminus \bag(x_i)) \cup \adh(x_i t))$.
These graphs can be constructed explicitly in total time $k^{\OO(1)} \cdot \|G\|$ by exploiting the fact that $|\adh(x_i t)|<k$.
Because $|\bag(x_i)|<s$, each such graph has at most $s$ vertices, and is trivially $(s,k)$-unbreakable.

We then apply \Cref{lem:leanalgsmall} to compute $k$-lean tree decompositions of each $G_i$, which takes total $s^{\OO(k)} \cdot |V(G)|$ time.
We then combine these $k$-lean tree decompositions with $(T,\bag)$ as described in the statement of \Cref{lem:mergeintolean}, which by \Cref{lem:mergeintolean} results in a $k$-lean tree decomposition of $G$.
\end{proof}

\subsection{Structure of non-lean-witnesses}
Before going to the proofs of \Cref{lem:leanalglarge,lem:leanalgsmall}, we prove that non-$k$-lean-witnesses for bags of size $\ge s$ can be assumed to have several extra properties.
We start with an observation about vertices of degree $\ge s$.

\begin{lemma}
\label{lem:degnodecenter}
Let $s \ge k \ge 1$ be integers, $G$ an $(s,k)$-unbreakable graph, $\Tc = (T,\bag)$ a tree decomposition of $G$ with $\adhsize(\Tc)<k$, and $t \in V(T)$ a node with $|\bag(t)| \ge s$.
Then, $\bag(t)$ contains all vertices of degree $\ge s$ of $G$.
\end{lemma}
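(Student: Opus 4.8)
The plan is to show that if some vertex $v$ of degree $\ge s$ were outside $\bag(t)$, then the tree structure forces an adhesion that violates the $(s,k)$-unbreakability of $G$. First I would set up the contradiction: suppose $v \in V(G) \setminus \bag(t)$ with $|N(v)| \ge s$. By the vertex condition of tree decompositions, there is a unique node $t_v$ with $v \in \bag(t_v)$ that is closest to $t$ in $T$, and $t_v \neq t$. Let $p$ be the neighbor of $t_v$ on the $(t_v, t)$-path. Since $v \notin \bag(t)$ and $v \notin \bag(p)$ (as $p$ is closer to $t$ than $t_v$), the adhesion $\adh(t_v p)$ separates $v$ from $\bag(t)$.

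Next I would use the edge condition: every neighbor $u$ of $v$ appears together with $v$ in some bag, hence in a bag in the subtree on the $t_v$-side of the edge $t_v p$ (by the vertex condition applied to $v$, this subtree is exactly where $v$ lives). Therefore $N(v) \subseteq \bag(t_v') $ side; more precisely, letting $(A,B)$ be the vertex cut of $G$ corresponding to $\adh(t_v p)$ with $v \in A \setminus B$ and $\bag(t) \subseteq B$, we get $N[v] \subseteq A$, so $|A| \ge |N[v]| \ge s+1 \ge s$. Also $|B| \ge |\bag(t)| \ge s$. Since $A \cap B = \adh(t_v p)$ has size $<k$ (because $\adhsize(\Tc) < k$), the pair $(A,B)$ is a vertex cut of order $<k$ with both sides meeting $V(G)$ in $\ge s$ vertices — but $V(G)$ itself being $(s,k)$-unbreakable is exactly the statement that no such cut exists. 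Wait, one must be slightly careful: unbreakability is about $|A \cap V(G)|$ and $|B \cap V(G)|$, which here are just $|A|$ and $|B|$ since $A \cup B = V(G)$; so $|A| \ge s$ and $|B| \ge s$ directly contradict $(s,k)$-unbreakability of $G$.

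I expect the only mild subtlety — not really an obstacle — is justifying that $N[v]$ lies entirely on the $v$-side of the adhesion $\adh(t_v p)$; this follows from combining the vertex condition for $v$ (the bags containing $v$ form a connected subtree avoiding $p$ and $t$) with the edge condition for each edge $uv$. Everything else is a direct application of the definitions of tree decomposition, adhesion, and $(s,k)$-unbreakability. So the proof is short: introduce the hypothetical bad vertex, locate its closest bag node, read off the separating adhesion, note both sides are large, and invoke unbreakability for the contradiction.

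\begin{proof}
Suppose for contradiction that there is a vertex $v \in V(G)$ with $|N(v)| \ge s$ but $v \notin \bag(t)$.
By the vertex condition of tree decompositions, the set $\{x \in V(T) : v \in \bag(x)\}$ induces a non-empty connected subtree of $T$, and it does not contain $t$.
Let $t_v$ be the node of this subtree closest to $t$ in $T$, and let $p$ be the neighbor of $t_v$ on the unique $(t_v,t)$-path in $T$.
Then $v \notin \bag(p)$, since $p$ is strictly closer to $t$ than $t_v$, so $v \notin \adh(t_v p)$.

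Let $(A,B)$ be the vertex cut of $G$ with $A \cap B = \adh(t_v p)$ corresponding to the edge $t_v p$, oriented so that $v \in A \setminus B$ and $\bag(t) \subseteq B$; that is, $A$ is the union of bags on the $t_v$-side of $t_v p$ and $B$ is the union of bags on the $p$-side.
For every $u \in N(v)$, the edge condition gives a node $x$ with $\{u,v\} \subseteq \bag(x)$, and since $v \in \bag(x)$ the node $x$ lies in the subtree of $v$, hence on the $t_v$-side of the edge $t_v p$; therefore $u \in A$.
Thus $N[v] \subseteq A$, so $|A| \ge |N[v]| \ge s+1 > s$.
Also $|B| \ge |\bag(t)| \ge s$.
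Since $A \cup B = V(G)$, we get $|A \cap V(G)| = |A| \ge s$ and $|B \cap V(G)| = |B| \ge s$, while $|A \cap B| = |\adh(t_v p)| \le \adhsize(\Tc) < k$.
Hence $(A,B)$ is a vertex cut of $G$ of order $<k$ witnessing that $V(G)$ is not $(s,k)$-unbreakable, contradicting the assumption that $G$ is $(s,k)$-unbreakable.
\end{proof}
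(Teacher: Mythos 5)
Your proof is correct and follows essentially the same route as the paper's: locate the closest bag node $t_v$ to $t$ containing the hypothetical bad vertex $v$, take the vertex cut induced by the adhesion $\adh(t_v p)$, observe that $N[v]$ lies on the $v$-side and $\bag(t)$ on the other, and contradict $(s,k)$-unbreakability. The only difference is that you spell out via the edge/vertex conditions why $N[v] \subseteq A$, whereas the paper leaves this implicit (it also follows directly from $(A,B)$ being a vertex cut with $v \in A \setminus B$).
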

\begin{proof}
Suppose $v \in V(G)$ is a vertex of degree $\ge s$ that is not in $\bag(t)$.
Let $t_v \in V(T)$ be the node closest to $t$ so that $v \in \bag(t_v)$, and $p \in N_T(t_v)$ the neighbor of $t_v$ on the $(t_v,t)$-path.
Now, the adhesion at $t_v p$ gives a separation $(A,B)$ of order $<k$ with $v \in A \setminus B$ and $\bag(t) \subseteq B$.
Because $v \in A \setminus B$ we have $|A| \ge s$, and because $\bag(t) \subseteq B$ we have $|B| \ge s$, so $(A,B)$ would contradict the $(s,k)$-unbreakability of $G$.
\end{proof}

Then we show that if there is a non-$k$-lean-witness of form $(A,B,t,t)$, where $t$ is node with large bag, then there is such a non-$k$-lean-witness with many extra properties.
This will be the key for finding non-$k$-lean-witnesses for large bags.

\begin{lemma}
\label{lem:wellstructwitness}
Let $s \ge k \ge 1$ be integers, $G$ an $(s,k)$-unbreakable graph, $\Tc = (T,\bag)$ a tree decomposition of $G$ with $\adhsize(\Tc)<k$, and $t \in V(T)$ so that $|\bag(t)| \ge s$.
Let also $L \subseteq V(G)$ be the vertices of $G$ with degree $<s$.
If there is a non-$k$-lean-witness of form $(A,B,t,t)$ for $\Tc$, then there is such non-$k$-lean-witness so that
\begin{enumerate}
\item $|A| < s$, \label{lem:wellstructwitness:cond1}
\item $A \setminus B \subseteq L$, \label{lem:wellstructwitness:cond2}
\item $A \setminus B \neq \emptyset$, \label{lem:wellstructwitness:cond25}
\item $N(A \setminus B) = A \cap B$, \label{lem:wellstructwitness:cond3}
\item $G[A \cap L]$ is connected, and \label{lem:wellstructwitness:cond4}
\item $G[A \setminus B]$ has at most $k$ connected components. \label{lem:wellstructwitness:cond5}
\end{enumerate}
\end{lemma}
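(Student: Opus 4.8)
The plan is to start from an arbitrary non-$k$-lean-witness $(A,B,t,t)$ and refine it step by step, obtaining the six listed properties roughly in the order given. First I would fix $(A,B,t,t)$ so that it minimizes $|A \cap B|$ among all non-$k$-lean-witnesses of this form, and then, subject to that, minimizes $|A \setminus B|$. Property~\ref{lem:wellstructwitness:cond1} ($|A| < s$) follows immediately: since $G$ is $(s,k)$-unbreakable and $|A \cap B| < k$, we cannot have both $|A| \ge s$ and $|B| \ge s$; if $|B| < s$ then by \Cref{lem:degnodecenter} (as $|\bag(t)| \ge s$) all high-degree vertices of $G$ lie in $\bag(t)$, and I would argue we may as well orient so that the small side is $A$ — more carefully, since $|\bag(t) \cap A| > |A\cap B|$ and $|\bag(t) \cap B| > |A\cap B|$, and $|\bag(t)| \ge s$ while $|B| < s$ would force... here I would simply note that $(B,A,t,t)$ is also a non-$k$-lean-witness, so by relabeling we may assume $|A|<s$ (using that one of $|A|,|B|$ is $<s$). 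Property~\ref{lem:wellstructwitness:cond25} ($A \setminus B \ne \emptyset$) is forced because $|\bag(t) \cap A| > |A \cap B| = |A \cap B|$ and $A \cap B \subseteq A$, so $A$ strictly contains $A\cap B$.

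Next I would handle property~\ref{lem:wellstructwitness:cond3} ($N(A\setminus B) = A\cap B$). The inclusion $N(A\setminus B) \subseteq A$ holds since $(A,B)$ is a vertex cut, and $N(A\setminus B) \cap (A\setminus B) = \emptyset$ is not automatic, so what I really want is: replace $A$ by $N[A\setminus B]$, i.e.\ set $A' = (A\setminus B) \cup N(A\setminus B)$ and $B' = V(G) \setminus (A'\setminus B')$. Then $A'\cap B' = N(A\setminus B) \subseteq A\cap B$, so $|A'\cap B'| \le |A\cap B|$, and since we chose $(A,B)$ to minimize $|A\cap B|$, equality holds and $N(A\setminus B) = A\cap B$; I must then check $(A',B',t,t)$ is still a non-$k$-lean-witness, which holds because $A'\supseteq A$ (wait: $A' \subseteq A$ actually) — here I would verify $\bag(t)\cap A' \supseteq \bag(t)\cap(A\setminus B)$, which has size $> |A\cap B| \ge |A'\cap B'|$, and $\bag(t)\cap B'$ still large. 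Property~\ref{lem:wellstructwitness:cond2} ($A\setminus B \subseteq L$): if some $v \in A\setminus B$ has degree $\ge s$, then by \Cref{lem:degnodecenter} $v \in \bag(t)$, but also $v$ together with its $\ge s$ neighbors; since $N(v) \subseteq A$ (as $v \in A\setminus B$ and $(A,B)$ is a vertex cut) and $|A| < s$, this is a contradiction — so every $v \in A\setminus B$ has degree $<s$, i.e.\ $A\setminus B \subseteq L$.

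For property~\ref{lem:wellstructwitness:cond4} ($G[A\cap L]$ connected), I would argue via connectivity/submodularity: note $A\cap L = (A\setminus B) \cup ((A\cap B)\cap L)$, and since $N(A\setminus B) = A\cap B$ and $A\setminus B \subseteq L$, every vertex of $(A\cap B)\cap L$... hmm, actually the right move is: if $G[A\cap L]$ is disconnected, split $A\setminus B$ along the components and use minimality of $|A\setminus B|$ or $|A\cap B|$; more precisely, for a connected component decomposition one of the pieces still forms a non-$k$-lean-witness (the submodularity argument as in \Cref{lem:leanwitstruct}) with smaller $|A\setminus B|$, contradicting minimality. Finally property~\ref{lem:wellstructwitness:cond5} (at most $k$ components of $G[A\setminus B]$): each component $C$ of $G[A\setminus B]$ has $N(C) \subseteq A\cap B$ nonempty (since $G$ is connected, or handle the isolated case) and $N(A\setminus B) = A\cap B = \bigcup_C N(C)$ with $|A\cap B| < k$; if there were $\ge k$ components then some component $C$ has $N(C)$ a proper subset of $A\cap B$, and I would argue that restricting to $C$ (setting $A'' = C \cup N(C)$) gives a non-$k$-lean-witness — but this needs $|\bag(t)\cap A''| > |A''\cap B''| = |N(C)|$, which may fail, so the cleaner route is: since $|A\cap B| < k$ and the sets $N(C)$ cover $A\cap B$, and each is nonempty, there are at most $|A\cap B| < k$... no, components can share neighbors. \textbf{I expect this last step to be the main obstacle}: I would instead prove it by a counting/uncrossing argument — group the components of $G[A\setminus B]$ by their neighborhood $N(C) \subseteq A\cap B$; if two components have the same neighborhood they could in principle be separated, but we should merge or argue via minimality of $|A\setminus B|$ that in a minimal witness, distinct components must have incomparable neighborhoods, giving an antichain in the subsets of a set of size $<k$, hence... this still doesn't immediately give $\le k$. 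The correct final argument is likely: in a witness minimizing $|A\setminus B|$ subject to the earlier constraints, if $G[A\setminus B]$ has components $C_1,\dots,C_m$, then for each $i$ the pair $(C_i \cup N(C_i),\, V(G)\setminus C_i)$ would be a candidate and minimality forces $m$ small; I would carefully set up the minimization so that the number of components is controlled by $|A\cap B| + 1 \le k$, using that removing a component decreases $|A\setminus B|$ and one can maintain the witness property because $\bag(t)$ is large enough ($|\bag(t)| \ge s$, $|A| < s$ so $|\bag(t)\cap B| \ge s - s$... ) — the bookkeeping here, ensuring the witness property survives after deleting components, is where I would spend the most care.
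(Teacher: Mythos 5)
Your overall strategy -- fix an extremal witness and repeatedly delete pieces of $A$, using minimality to derive each property -- is the same as the paper's, and your arguments for $|A|<s$, $A\setminus B\neq\emptyset$, $A\setminus B\subseteq L$, and $N(A\setminus B)=A\cap B$ are essentially correct (the paper gets all four at once by minimizing the single quantity $|A|$, which also sidesteps the wrinkle in your setup that minimizing $|A\setminus B|$ is not symmetric under swapping $A$ and $B$, so "relabel so that $|A|<s$" can destroy exactly the minimality you later invoke). However, the last two properties are where the substance of the lemma lies, and there you have a genuine gap. For connectivity of $G[A\cap L]$ you say "one of the pieces still forms a non-$k$-lean-witness" without identifying which piece or why; the paper's argument is to assign each $A'\subseteq A$ the utility $|A'\cap\bag(t)|-|A'\cap B|$, observe that $A\cap L$ has positive utility (vertices of $A\setminus L$ lie in $B\cap\bag(t)$ by \Cref{lem:degnodecenter}, so contribute $0$), and delete the connected component of $G[A\cap L]$ of \emph{smallest} utility; the remaining set still has positive utility and hence still witnesses non-leanness, contradicting minimality of $|A|$.

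For \Cref{lem:wellstructwitness:cond5} you explicitly fail to close the argument, and none of the routes you try (antichains of neighborhoods, counting via $|A\cap B|$) works. The missing idea is again a deletion argument, but applied to $G[A\setminus B]$ rather than $G[A\cap L]$: let $C$ be the component of $G[A\setminus B]$ minimizing $|C\cap\bag(t)|$. Since $C\subseteq A\setminus B$, deleting it does not change the separator: $(A\setminus C)\cap(B\cup C)=A\cap B$, so the side-$B$ condition and the cut order are untouched -- this is precisely the bookkeeping you were worried about, and it is trivial for components of $A\setminus B$ (unlike for components of $A\cap L$, which may contain separator vertices). If $G[A\setminus B]$ had more than $k$ components, then either $C\cap\bag(t)=\emptyset$, in which case $\bag(t)\cap A$ is unchanged, or every component meets $\bag(t)$, in which case the $\geq k$ remaining components give $|(A\setminus C)\cap\bag(t)|\geq k>|A\cap B|$. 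Either way $(A\setminus C, B\cup C, t, t)$ is a non-$k$-lean-witness with $|A\setminus C|<|A|$, contradicting minimality. Without this step (or an equivalent one) your proof does not establish the bound of $k$ components, which is exactly the property the local-search algorithm of \Cref{lem:compwelstrucleanwit} relies on.
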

\begin{proof}
Let $(A,B,t,t)$ be a non-$k$-lean-witness for $\Tc$ that minimizes $|A|$ among all such non-$k$-lean-witnesses.
Because $A$ and $B$ can be swapped if $|A| > |B|$, we have that $|A| < s$ (\Cref{lem:wellstructwitness:cond1}).
We have $A \setminus B \subseteq L$ (\Cref{lem:wellstructwitness:cond2}), because if $A \setminus B$ would contain a vertex of degree $\ge s$, then $|A| \ge s$.
The set $A \setminus B$ is non-empty (\Cref{lem:wellstructwitness:cond25}) because otherwise $|\bag(t) \cap A| \le |A \cap B|$.

For \Cref{lem:wellstructwitness:cond3}, it must be that $N(A \setminus B) \subseteq A \cap B$, so assume that $N(A \setminus B) \subsetneq A \cap B$.
We observe that in this case $(N[A \setminus B], B,t,t)$ is also a non-$k$-lean-witness for $(T,\bag)$, which by $|N[A \setminus B]|<|A|$ contradicts the choice of $(A,B,t,t)$.

For \Cref{lem:wellstructwitness:cond4}, suppose that $G[A \cap L]$ is not connected.
Define the \emph{utility} of a subset $A' \subseteq A$ as $|A' \cap \bag(t)|-|A' \cap B|$.
Because $(A,B,t,t)$ is a non-$k$-lean-witness, the utility of $A$ is positive.
Then, by $A \setminus B \subseteq L$ and \Cref{lem:degnodecenter}, all vertices in $A \setminus L$ must be in $B$ and $\bag(t)$, implying that the utility of $A \setminus L$ is $0$.
Therefore, the utility of $A \cap L$ is positive.

Let $C \in \cc(G[A \cap L])$ be the connected component of $G[A \cap L]$ with the smallest utility, and observe that because the utilities of $A$ and $A \cap L$ are positive, the utility of $A \setminus C$ is positive.

\begin{claim}
$(A \setminus C, B \cup C)$ is a vertex cut of $G$.
\end{claim}
\begin{claimproof}
We must prove that there is no edge between $A \setminus (B \cup C)$ and $(B \setminus A) \cup C$.
Because $(A,B)$ is a vertex cut, it suffices to prove that there is no edge between $A \setminus (B \cup C)$ and $C$.
Because $A \setminus B \subseteq L$ and $C \subseteq L$, we have that $A \setminus (B \cup C) \subseteq A \cap L$ and $C \subseteq A \cap L$.
Thus, if such an edge existed, then $C$ would not be a connected component of $G[A \cap L]$.
\end{claimproof}

By the fact that the utility of $A \setminus C$ is positive, we deduce that
\begin{align*}
|(A \setminus C) \cap \bag(t)| - |(A \setminus C) \cap (B \cup C)| = |(A \setminus C) \cap \bag(t)| - |(A \setminus C) \cap B| > 0,
\end{align*}
implying that $|(A \setminus C) \cap \bag(t)| > |(A \setminus C) \cap (B \cup C)|$.
We also have that 
\[|(B \cup C) \cap \bag(t)| \ge |B \cap \bag(t)| > |A \cap B| \ge |(A \setminus C) \cap (B \cup C)|.\]
These imply that $(A \setminus C, B \cup C,t,t)$ is a non-$k$-lean-witness for $(T,\bag)$, which contradicts the choice of $(A,B,t,t)$.
Therefore $G[A \cap L]$ is connected.

Lastly, for \Cref{lem:wellstructwitness:cond5} we will prove that $G[A \setminus B]$ has at most $k$ connected components, so for the sake of contradiction suppose $G[A \setminus B]$ has more than $k$ connected components.
Let $C \in \cc(G[A \setminus B])$ be a component that minimizes $|C \cap \bag(t)|$.
Now, $(A \setminus C, B \cup C)$ is a vertex cut of $G$, and 
\[|(A \setminus C) \cap \bag(t)| \ge k > |A \cap B| = |(A \setminus C) \cap (B \cup C)|.\]
Therefore, $(A \setminus C, B \cup C, t, t)$ is a non-$k$-lean-witness for $(T,\bag)$, which contradicts the choice of $(A,B,t,t)$.
\end{proof}

We call a non-$k$-lean-witness of form $(A,B,t,t)$ that satisfies \Cref{lem:wellstructwitness:cond1,lem:wellstructwitness:cond2,lem:wellstructwitness:cond25,lem:wellstructwitness:cond3,lem:wellstructwitness:cond4,lem:wellstructwitness:cond5} of \Cref{lem:wellstructwitness} a \emph{well-structured} non-$k$-lean-witness.
We then give our algorithm for finding well-structured non-$k$-lean-witnesses for large bags.

\begin{lemma}
\label{lem:compwelstrucleanwit}
Let $s \ge k \ge 1$ be integers, $G$ an $(s,k)$-unbreakable graph, $\Tc = (T,\bag)$ a tree decomposition of $G$ with $\adhsize(\Tc)<k$, and $t \in V(T)$ so that $|\bag(t)| \ge s$.
Suppose a representation of $G$ is already stored, and moreover a representation of $(T,\bag)$, where one can in $\OO(1)$ time query whether a given vertex $v \in V(G)$ is in $\bag(t)$, is stored.
Then, there is an algorithm that, given a vertex $v \in V(G)$ with degree $< s$, in time $s^{\OO(k)}$ either
\begin{itemize}
\item returns a set $C \subseteq V(G)$ so that $v \in N[C]$ and $(N[C], V(G) \setminus C,t,t)$ is a well-structured non-$k$-lean-witness, or
\item correctly concludes that no well-structured non-$k$-lean-witness $(A,B,t,t)$ with $v \in A$ exist.
\end{itemize}
\end{lemma}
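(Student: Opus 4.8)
\textbf{Proof plan for \Cref{lem:compwelstrucleanwit}.}
The plan is to exploit the structure of a well-structured non-$k$-lean-witness $(N[C],V(G)\setminus C,t,t)$ guaranteed by \Cref{lem:wellstructwitness}: here $A\setminus B = C$, so $C\subseteq L$ (vertices of degree $<s$), $G[C]$ has at most $k$ connected components, $N(C)=N[C]\cap(V(G)\setminus C)=A\cap B$ has size $<k$, and moreover by property \Cref{lem:wellstructwitness:cond4} the set $G[A\cap L]=G[N[C]\cap L]$ is connected. The key quantitative fact is that $|A|<s$, hence $|C|<s$ and $|N[C]|<s$. So it suffices to enumerate all candidate connected sets $C$ of size $<s$ whose closed neighborhood $N[C]$ also has size $<s$ and that contain $v$ in $N[C]$, and for each of them test in polynomial (in $s$) time whether $(N[C],V(G)\setminus C,t,t)$ is a non-$k$-lean-witness satisfying all six properties. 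The non-$k$-lean-witness condition amounts to checking $|N[C]|-|N(C)|\ge 1$, i.e. $C\neq\emptyset$, and $|\bag(t)\cap(V(G)\setminus C)|>|N(C)|$ and $|\bag(t)\cap N[C]|>|N(C)|$; these can all be evaluated in $O(s)$ time per candidate using the assumed $O(1)$-time membership query into $\bag(t)$ and the stored representation of $G$. Property \Cref{lem:wellstructwitness:cond25} is $C\neq\emptyset$, property \Cref{lem:wellstructwitness:cond2} is $C\subseteq L$, property \Cref{lem:wellstructwitness:cond3} is automatic from $N[C]$ being the closed neighborhood, and properties \Cref{lem:wellstructwitness:cond4} and \Cref{lem:wellstructwitness:cond5} are checked directly by computing connected components of $G[N[C]\cap L]$ and $G[C]$.

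\textbf{The enumeration.}
First I would compute $L$, the set of vertices of degree $<s$, and check that $v\in L$ (given). Since a well-structured witness with $v\in A=N[C]$ must have $v\in N[C]$, either $v\in C$ or $v$ has a neighbor in $C$; in the latter case, since $G[N[C]\cap L]$ is connected and $v\in L$, the vertex $v$ lies in the connected subgraph $G[N[C]\cap L]$ which contains $C\cap L=C$, so in all cases $v$ is connected to $C$ inside $G[N[C]\cap L]$. Thus the connected set $C'=C\cup\{v\}$ (if $v\notin C$, note $v$ may have degree $<s$ and be adjacent to $C$) — more carefully, the set $N[C]\cap L$ is a connected set of size $<s$ containing both $v$ and all of $C$, and every vertex of it has degree $<s$ in $G$. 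So I would run the local-search enumeration of connected vertex sets starting from $v$: enumerate all connected subsets $S$ of $G[L]$ with $v\in S$ and $|S|<s$ such that additionally $|N_G[S]|<s$. The branching is the standard one: maintain a growing connected set and a frontier; at each step either add a frontier vertex to $S$ or exclude it, pruning once $|S|\ge s$ or $|N[S]|\ge s$; this is analogous in spirit to the local-search routine of \Cref{lem:localchipalgo}, and the number of such sets is $s^{\OO(k)}$ because the ``budget'' that must stay bounded is $|N_G(S)\setminus S|<k$ together with $|S\setminus N_G(\text{core})|<s$, giving a recursion tree of size $\binom{s+k}{k}\le s^{\OO(k)}$. (Here one uses that the relevant $S=N[C]\cap L$ has $|N_G(S)|=|N(C)\cap L|\le|N(C)|<k$ — because neighbors of $C$ outside $L$ are in $\bag(t)$ by \Cref{lem:degnodecenter} but that does not immediately bound $N(S)$; the correct invariant to branch on is that $S$ is connected of size $<s$ with $|N_G(S)|<s$, still giving $s^{\OO(s)}$; to get $s^{\OO(k)}$ one should instead enumerate $C$ directly, branching on the vertices of $N(C)$ analogously to \Cref{lem:localchipalgo}, since $|N(C)|<k$.) For each enumerated connected $C$ with $v\in N[C]$, I compute $N[C]$, $N(C)$, check $|N(C)|<k$, check $C\subseteq L$, check $G[N[C]\cap L]$ connected, check $G[C]$ has $\le k$ components, and check the non-$k$-lean-witness inequalities; if all pass, return $C$.

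\textbf{Correctness and the main obstacle.}
Correctness of the ``yes'' output is immediate: returning $C$ such that $(N[C],V(G)\setminus C,t,t)$ passes all checks is exactly what is asked. Correctness of the ``no'' output follows because if a well-structured non-$k$-lean-witness $(A,B,t,t)$ with $v\in A$ exists, then $C:=A\setminus B$ is a connected-or-$\le k$-components set of size $<s$ with $N[C]=A$ (by \Cref{lem:wellstructwitness:cond3}), $v\in A=N[C]$, and $v$ lies in the connected set $N[C]\cap L\ni C$; one then argues the enumeration reaches this $C$ — the subtlety is that $G[C]$ need only have $\le k$ components, not be connected, so the local search must be seeded to build up $C$ as a union of at most $k$ connected pieces all attached to the connected ``hull'' $N[C]\cap L$; since $N[C]\cap L$ \emph{is} connected and contains $C$, enumerating the connected set $N[C]\cap L$ first (from $v$) and then reading off $C=N[C]\cap L\,\cap\,(\text{non-}\bag(t)\text{-forced part})$ handles this cleanly, but care is needed because $C$ might contain vertices not adjacent to each other. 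The main obstacle I anticipate is precisely making the local-search enumeration both \emph{complete} (it must reach every well-structured witness's $C$) and \emph{fast} ($s^{\OO(k)}$, not $s^{\OO(s)}$): the clean way is to enumerate $C$ by branching on the $<k$ vertices of its border $N(C)$, in direct analogy with the proof of \Cref{lem:localchipalgo}, so that the recursion depth is governed by $k$ rather than $s$, while the size bound $|N[C]|<s$ prunes the branches that grow $C$. Once the enumeration is set up this way, all per-candidate checks are routine $O(s)$ or $O(s\cdot k)$ computations using the stored representations, and the total running time is $s^{\OO(k)}$.
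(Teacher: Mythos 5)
Your overall strategy is the paper's: use \Cref{lem:wellstructwitness} to restrict attention to well-structured witnesses, enumerate a bounded list of candidates by a local search seeded at $v$, and verify each candidate in $s^{\OO(1)}$ time using the $\OO(1)$-time membership queries into $\bag(t)$. The verification phase and the reconstruction of $(A,B)$ from the candidate are fine. The genuine gap is in the one step you yourself flag as the main obstacle: why the enumeration has size $s^{\OO(k)}$ rather than $s^{\OO(s)}$. You oscillate between two schemes and neither is carried through. Enumerating the connected hull $S=N[C]\cap L=A\cap L$ as a rooted connected set of size $<s$ is, as you note, an $s^{\OO(s)}$-size family in general. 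Your proposed fix --- ``enumerate $C$ directly, branching on the vertices of $N(C)$ analogously to \Cref{lem:localchipalgo}, since $|N(C)|<k$'' --- does not close the gap: the localchipalgo-style dichotomy (border vertex vs.\ absorbed vertex) only enumerates the components of $C$ reachable from the seed, whereas $G[C]$ may have up to $k$ components that are pairwise non-adjacent and are glued together only through $A\cap B\cap L$; to discover a new component you must pass through those gluing vertices, which puts you back in the hull-enumeration scheme you already identified as too slow. Your intermediate claim that $|N_G(S)|\le|N(C)|<k$ is also false (a vertex of $A\cap B\cap L$ can have up to $s-1$ neighbors in $B\setminus A$), though you partially retract it.

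What is missing is the amortized branching analysis that makes the hull enumeration work. The paper enumerates pairs $(D,R)$ with $D$ a candidate for $A\cap L$ and $R$ a candidate for $A\cap B\cap L$, and distinguishes \emph{forced} extensions from \emph{free} ones: if some $u\in D'\setminus R'$ has a neighbor $w\in L\setminus D'$, then $w$ is forced into $D$ (else $w\in B\setminus A$ would be adjacent to $A\setminus B$), and the only binary choice is whether $w\in R$, which can happen fewer than $k$ times along a root--leaf path. Only when no forced extension exists does one branch freely over $N(D')\cap L$, and then a vertex added to $D'\setminus R'$ is adjacent only to $R'$, so it starts a new connected component of $G[D'\setminus R']$; since that component count is bounded by $k$ (\Cref{lem:wellstructwitness}) and $|R'|<k$, there are at most $2k$ free branchings per path, each with $\OO(s^2)$ directions. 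This charging of every free branching to one of two budgets of size $\OO(k)$ is the idea your proposal lacks, and without it the claimed $\binom{s+k}{k}$ bound on the recursion tree is not justified.
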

\begin{proof}
Let $L \subseteq V(G)$ be the set of vertices of $G$ of degree $<s$.
We will first enumerate a set of $s^{\OO(k)}$ candidates for the pair of sets $(A \cap L, A \cap B \cap L)$, where $A$ and $B$ are as in \Cref{lem:wellstructwitness}.

In particular, we will enumerate all pairs $D,R \subseteq V(G)$ that satisfy
\begin{enumerate}
\item $R \subseteq D \subseteq L$, \label{lem:compwelstrucleanwit:enumcond1}
\item $v \in D$, \label{lem:compwelstrucleanwit:enumcond2}
\item $|D|<s$, $|R|<k$, \label{lem:compwelstrucleanwit:enumcond3}
\item $G[D]$ is connected, \label{lem:compwelstrucleanwit:enumcond4}
\item $G[D \setminus R]$ has at most $k$ connected components, and \label{lem:compwelstrucleanwit:enumcond5}
\item there are no edges between $D \setminus R$ and $L \setminus D$. \label{lem:compwelstrucleanwit:enumcond6}
\end{enumerate}

Note that when setting $D = A \cap L$ and $R = A \cap B \cap L$, any well-structured non-$k$-lean-witness $(A,B,t,t)$ with $v \in A$ must satisfy \Cref{lem:compwelstrucleanwit:enumcond1,lem:compwelstrucleanwit:enumcond2,lem:compwelstrucleanwit:enumcond3,lem:compwelstrucleanwit:enumcond4,lem:compwelstrucleanwit:enumcond5,lem:compwelstrucleanwit:enumcond6}.

Let us then describe the enumeration algorithm.
The algorithm is a recursive branching algorithm.
A recursive call of the algorithm takes as inputs two sets $D',R' \subseteq V(G)$, satisfying $R' \subseteq D' \subseteq L$, $v \in D'$, $|D'| \le s$, $|R'| \le k$, $G[D']$ is connected, and $G[D' \setminus R']$ has at most $k$ connected components.
It outputs all pairs $D,R$ satisfying \Cref{lem:compwelstrucleanwit:enumcond1,lem:compwelstrucleanwit:enumcond2,lem:compwelstrucleanwit:enumcond3,lem:compwelstrucleanwit:enumcond4,lem:compwelstrucleanwit:enumcond5,lem:compwelstrucleanwit:enumcond6} so that $D \supseteq D'$ and $R \cap D' = R'$.

The branching works as follows.
First, if $|D'| \ge s$ or $|R'| \ge k$, we return the empty set.

Second, if there exists a vertex $u \in D' \setminus R'$ with a neighbor $w \in N(u) \cap (L \setminus D')$, we must extend $D'$ to include $w$.
Therefore, we call the algorithm recursively with the pairs $(D' \cup \{w\}, R')$ and $(D' \cup \{w\}, R' \cup \{w\})$, and return the union of the obtained lists.
We will call this ``branching of the first type''.
Note that both recursive calls indeed satisfy the preconditions, in particular, they do not increase the number of connected components of $G[D' \setminus R']$.

If the above two cases do not hold, the pair $D',R'$ itself satisfies \Cref{lem:compwelstrucleanwit:enumcond1,lem:compwelstrucleanwit:enumcond2,lem:compwelstrucleanwit:enumcond3,lem:compwelstrucleanwit:enumcond4,lem:compwelstrucleanwit:enumcond5,lem:compwelstrucleanwit:enumcond6}, and the output thus must include at least the pair $D',R'$.
However, it is not necessarily correct to return only the pair $D',R'$, and we must again branch.
Any desired pair $D,R$ with $D \supsetneq D'$ must contain a vertex $w \in N(D') \cap L$.
However, we note that if the number of connected components of $G[D' \setminus R']$ is $k$, then if $w \in D \setminus R$, the number of connected components of $G[D \setminus R]$ must be more than $k$.
Therefore, if the number of connected components of $G[D' \setminus R']$ is less than $k$, we branch to $2 \cdot |N(D') \cap L| \le 2 s^2$ directions, calling the algorithm recursively with $(D' \cup \{w\}, R')$ and $(D' \cup \{w\}, R' \cup \{w\})$ for each $w \in N(D') \cap L$, and returning the union of the obtained lists and the pair $D',R'$.
Note that in the first case, the number of connected components of $G[D' \cup \{w\} \setminus R']$ must be more than that of of $G[D' \setminus R']$.
If the number of connected components of $G[D' \setminus R']$ is $k$, we branch to $|N(D') \cap L| \le s^2$ directions, calling the algorithm recursively with $(D' \cup \{w\}, R' \cup \{w\})$ for each $w \in N(D') \cap L$, and returning the union of the obtained lists and the pair $D',R'$.
We will call this ``branching of the second type''.

We claim that the size of the resulting recursion tree is $s^{\OO(k)}$.
Consider a root-leaf path in the recursion tree.
It has at most $s$ edges, as each branch increases the size of $D'$.
It has at most $2k$ branchings of the second type, as each such branching increases either the number of connected components of $G[D' \setminus R']$ (which is never decreased), or the size of $R'$.
Of the branchings of the first type, at most $k$ of them increase the size of $R'$, while the rest just increase the size of $D'$.
Therefore, the root-leaf path is characterized by the choice of at most $2k$ edges of the second type, for each of them the choice of one of the at most $2 \cdot s^2$ possible directions, and then the choice of at most $k$ branchings of the first type that increase $R'$.
This means that the number of leaves of the branching tree is at most

\[(s+1)^{2k} \cdot (2 \cdot s^2)^{2k} \cdot (s+1)^{k} = s^{\OO(k)},\]
which is also a bound for the overall size of the branching tree and the size of the outputted list.
As $|D'| \le s$ and all vertices in $L$ have degree $<s$, it is easy to implement one recursive call in time $s^{\OO(k)}$, so the overall running time of this enumeration is also $s^{\OO(k)}$.

We then iterate through all of the produced candidates for the pair $(A \cap L, A \cap B \cap L)$, and for each determine if there exists a well-structured non-$k$-lean-witness $(A,B,t,t)$ corresponding to it.

We claim that the sets $A \cap L$ and $A \cap B \cap L$ uniquely determine the well-structured non-$k$-lean-witness $(A,B,t,t)$ if one exists.
In particular, because $B$ must contain all vertices in $V(G) \setminus L$ by $|A| < s$, we have that $B = (A \cap B \cap L) \cup (L \setminus A) \cup (V(G) \setminus L)$.
Then, we must set $A = (A \cap L) \cup N((A \cap L) \setminus (A \cap B \cap L))$.

Clearly, this set $A$, along with the set $A \cap B$, can be constructed in $s^{\OO(1)}$ time from $A \cap L$ and $A \cap B \cap L$.
Having constructed $A$ and $A \cap B$ explicitly, we can check the conditions of \Cref{lem:wellstructwitness} in $s^{\OO(1)}$ time, and if they are satisfied, return the set $C = A \setminus B$.
\end{proof}

\subsection{Big-star-lean tree decompositions}
\label{subsec:bignodelean}
This subsection is dedicated to the proof of \Cref{lem:leanalglarge}, which we now re-state.

\leanalglarge*

Let us first introduce some definitions.
Let $(T,\bag)$ be a star decomposition with the root $t = \troot(T)$.
We say that a non-$k$-lean-witness $(A,B,t,t)$ is \emph{$A$-linked} if $\flow(A \cap B, A \cap \bag(t)) = |A \cap B|$.
Let us show that we can easily turn a non-$k$-lean-witness outputted by the algorithm of \Cref{lem:compwelstrucleanwit} into an $A$-linked non-$k$-lean-witness.

\begin{lemma}
\label{lem:alinkedwitness}
Let $G$ be a graph whose representation is already stored, and $\Tc = (T,\bag)$ a star decomposition of $G$ with root $t = \troot(T)$, and assume a representation of $\Tc$, where one can in $\OO(1)$ time query whether a vertex $v \in V(G)$ is in $\bag(t)$, is stored.
Then, there is an algorithm that, given a set $C \subseteq V(G)$ so that $(N[C], V(G) \setminus C, t, t)$ is a non-$k$-lean-witness, in time $(|C|+k)^{\OO(1)}$ returns sets $S \subseteq A \subseteq N[C]$ so that $(A, V(G) \setminus (A \setminus S), t, t)$ is an $A$-linked non-$k$-lean-witness.
\end{lemma}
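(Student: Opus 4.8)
The plan is to start from the given non-$k$-lean-witness $(N[C], V(G) \setminus C, t, t)$ and repeatedly "push in" the side $A = N[C]$ towards $\bag(t)$ using submodularity, until the separator $A \cap B$ becomes linked to $A \cap \bag(t)$. Write $A_0 = N[C]$, $B_0 = V(G) \setminus C$, so $A_0 \cap B_0 = N(C)$ is the separator of this vertex cut, and let $S_0 = A_0 \cap B_0$. The key quantity is $\flow_G(S_0, A_0 \cap \bag(t))$. If this equals $|S_0|$, we are already done with $A = A_0$, $S = S_0$. Otherwise, by Menger's theorem there is a vertex cut $(A', B')$ of $G[A_0]$ (equivalently, a separation) with $S_0 \subseteq A'$, $A_0 \cap \bag(t) \subseteq B'$, and $|A' \cap B'| < |S_0|$; here I would take a minimum-order such separator, which can be computed by a single max-flow computation in $G[A_0]$ in time $(|C| + k)^{O(1)}$ since $G[A_0]$ has $|N[C]| \le |C| + k \cdot |C|$ vertices (degrees of vertices in $C$ are bounded because $(N[C], \cdot)$ arose from the algorithm of \Cref{lem:compwelstrucleanwit}). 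Set $A_1 = A' \cup (A_0 \cap \bag(t))$... actually more carefully, I set $A_1$ to be the "far side" of this cut: $A_1 = B' \cap A_0$, which still contains $A_0 \cap \bag(t)$, has separator $S_1 = A' \cap B'$ of strictly smaller size, and crucially $(A_1, V(G) \setminus (A_1 \setminus S_1))$ is still a vertex cut of $G$ because any edge leaving $A_1$ into $A_0 \setminus A_1$ is captured by $S_1$, and edges leaving $A_0$ were already captured by $S_0 \subseteq A'$.

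The main verification is that the new tuple is still a non-$k$-lean-witness, and here I would argue as in \Cref{lem:wellstructwitness}: since $S_0 \subseteq A'$ and $|S_1| < |S_0|$, we have $|A_1 \cap \bag(t)| = |A_0 \cap \bag(t)| \ge |\bag(t) \cap N[C]| > |S_0| > |S_1|$, so the $A$-side intersection with $\bag(t)$ is still large; the $B$-side intersection with $\bag(t)$ only grows; and since $\Tc$ is a star decomposition rooted at $t$, every $(t,t)$-adhesion is vacuous, so the adhesion condition is trivially satisfied. Thus $(A_1, V(G) \setminus (A_1 \setminus S_1), t, t)$ is again a non-$k$-lean-witness, with smaller separator. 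Repeating, the separator size strictly decreases, so after at most $k$ iterations the process terminates with a tuple $(A, V(G) \setminus (A \setminus S), t, t)$ where $S = A \cap (V(G) \setminus (A\setminus S))$ satisfies $\flow_G(S, A \cap \bag(t)) = |S|$, i.e., it is $A$-linked. We then output this $A$ and $S$; since $A$ only ever shrinks from $N[C]$ and $S$ is always the current separator, we have $S \subseteq A \subseteq N[C]$ as required. The running time is $k$ iterations of a max-flow computation on a graph of size $(|C|+k)^{O(1)}$, hence $(|C|+k)^{O(1)}$ total.

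The subtle point I expect to need care is the bookkeeping that $(A_1, V(G)\setminus(A_1\setminus S_1))$ is genuinely a vertex cut of the \emph{whole} graph $G$ (not just of $G[A_0]$): one must check both that $A_1 \setminus S_1$ has no neighbors outside $A_1$ — which follows since the separator $A'\cap B'$ of the cut inside $G[A_0]$ plus the old separator $S_0 \subseteq A'$ together block everything — and that $A_1 \cap \bag(t)$ remains inside the new $A$-side with the right size, which is where we use that the old separator was pushed entirely into $A'$ (the near side), so it does not "leak" into $\bag(t)$. The other mild obstacle is arranging the max-flow computation to run in time polynomial in $|C| + k$ rather than in $\|G\|$; this is handled by working inside the explicitly-stored induced subgraph $G[N[C]]$, which has the required size bound, and querying membership in $\bag(t)$ in $O(1)$ time as permitted by the stored representation.
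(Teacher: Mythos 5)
Your proposal is correct and takes essentially the same route as the paper: a minimum vertex cut in $G[N[C]]$ between $N(C)$ and $\bag(t)\cap N[C]$, computed by max-flow on the explicitly constructed induced subgraph (whose size is bounded since $|N(C)|<k$ forces $|N[C]|\le|C|+k$), with the new witness verified exactly as you do. The only superfluous part is the iteration: by Menger's theorem any minimum $(N(C),\,\bag(t)\cap N[C])$-separator $S_1$ already satisfies $\flow(S_1,\bag(t)\cap N[C])=|S_1|$, so your loop terminates after a single round, which is precisely the paper's one-shot argument.
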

\begin{proof}
If $(X,Y)$ is a a vertex cut in $G[N[C]]$ of minimum order so that $\bag(t) \cap N[C] \subseteq X$ and $N(C) \subseteq Y$, then $(X, V(G) \setminus (X \setminus Y), t, t)$ is an $A$-linked non-$k$-lean-witness, because then $\flow(X \cap Y, X \cap \bag(t)) = |X \cap Y|$.
To find such minimum vertex cut, we first construct in time $(|C|+k)^{\OO(1)}$ the graph $G[N[C]]$ without the edges between vertices in $N(C)$, as they are irrelevant, and then use the Ford-Fulkerson algorithm.
\end{proof}

The idea of the proof of \Cref{lem:leanalglarge} is to repeatedly improve a star decomposition of $G$ by \emph{refinement operations}.
We now define this operation.

Let $\Tc = (T,\bag)$ be a star decomposition of a graph $G$ with root $t = \troot(T)$. 
We define that the \emph{refinement} of $(T,\bag)$ with a non-$k$-lean-witness $(A,B,t,t)$ is the star decomposition $(T', \bag')$, where $T'$ is obtained from $T$ by adding an additional leaf $y$ adjacent to $t$, and the $\bag'$ function is constructed as
\[\bag'(t) = (\bag(t) \setminus A) \cup (A \cap B),\]
for the root $t$,
\[\bag'(x) = \bag(x) \cap B,\]
for nodes $x \in V(T) \setminus \{t\}$ of $T$, and
\[\bag'(y) = A,\]
for the new leaf $y$.

\begin{lemma}
\label{lem:refinresultingdecomp2}
$(T',\bag')$ is a star decomposition of $G$.
\end{lemma}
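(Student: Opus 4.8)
The plan is to verify directly that $(T',\bag')$ satisfies the two conditions in the definition of a tree decomposition (the edge condition and the vertex condition), plus the structural requirement that all non-root nodes of $T'$ have degree $1$. The last point is immediate: $T'$ is obtained from the star $T$ by attaching one more leaf $y$ to the root $t$, so $T'$ is again a star with root $t$.

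For the vertex condition, I would go through each vertex $v\in V(G)$ and identify where it lives in $\bag'$. There are three regimes. If $v\notin A$, then $v$ is not in $\bag'(y)=A$, and for every node $x$ (including $t$) we have $v\in\bag'(x)$ iff $v\in\bag(x)$, since $\bag'(t)=(\bag(t)\setminus A)\cup(A\cap B)$ agrees with $\bag(t)$ on vertices outside $A$ (note $(A\cap B)\setminus A=\emptyset$), and $\bag'(x)=\bag(x)\cap B$ agrees with $\bag(x)$ on vertices outside $A$ because $(A,B)$ is a vertex cut so $V(G)\setminus A\subseteq B$. Hence the set of nodes containing $v$ is unchanged and still induces a non-empty connected subtree. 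If $v\in A\setminus B$, then $v\notin\bag'(t)$ and $v\notin\bag'(x)$ for any old leaf $x$ (as those bags are intersected with $B$), so $v$ appears only in $\bag'(y)$; that single node is certainly connected, and it is non-empty since $v\in A$ witnesses $v\in\bag'(y)$. Finally, if $v\in A\cap B$, then $v\in\bag'(y)=A$ and $v\in\bag'(t)$ since $A\cap B\subseteq\bag'(t)$; moreover the $(v$-containing$)$ nodes among $V(T)$ form, after intersecting with $B$, a subtree still containing $t$ (because $v\in B\cap\bag(t)$, so $t$ was already a $v$-node and stays one), and attaching the leaf $y$ adjacent to $t$ keeps the subtree connected and non-empty.

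For the edge condition, let $uv\in E(G)$. By the vertex condition of the original $(T,\bag)$ there is a node $z\in V(T)$ with $\{u,v\}\subseteq\bag(z)$. If neither $u$ nor $v$ lies in $A\setminus B$, then as computed above $\bag'(z)$ still contains both $u$ and $v$ (each of $u,v$ is either outside $A$, hence retained, or in $A\cap B\subseteq\bag'(t)$, and in fact one checks $\{u,v\}\subseteq\bag'(z)$ directly from the formulas since the only vertices dropped from $\bag(z)$ are those in $A\setminus B$). If on the other hand $u\in A\setminus B$, then since $(A,B)$ is a vertex cut there is no edge between $A\setminus B$ and $B\setminus A$, so $v\in A$; thus $\{u,v\}\subseteq A=\bag'(y)$. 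Symmetrically if $v\in A\setminus B$. In all cases some bag of $(T',\bag')$ contains both endpoints.

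I do not expect any serious obstacle here; the only mildly delicate point is bookkeeping the three cases for vertices and using the vertex-cut property $V(G)\setminus A\subseteq B$ (equivalently, no edges between $A\setminus B$ and $B\setminus A$) at exactly the right moments — once to see that old bags are only modified on $A$, and once to see that an edge with an endpoint in $A\setminus B$ has its other endpoint in $A$. These are routine and I would present them in the compact case analysis sketched above.
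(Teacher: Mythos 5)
Your proof is correct and follows essentially the same route as the paper's: observe that $T'$ is still a star, check the edge condition by splitting edges into those with both endpoints in $B$ (retained in their old bag) and those with both endpoints in $A$ (covered by $\bag'(y)=A$, using the vertex-cut property), and check the vertex condition via the trichotomy $B\setminus A$, $A\setminus B$, $A\cap B$. One parenthetical justification is unwarranted: for $v\in A\cap B$ you assert $v\in B\cap\bag(t)$, but nothing in the definition of a non-$k$-lean-witness guarantees $A\cap B\subseteq\bag(t)$; this is harmless, because the fact you actually need — that $v\in\bag'(t)$ — already follows from $A\cap B\subseteq\bag'(t)$ by construction, and in a star any set of nodes containing the root (together with the new leaf $y$ attached to it) induces a connected subtree.
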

\begin{proof}
Clearly, other nodes of $T$ than $t$ have degree $1$.
The edge condition of tree decompositions holds for edges between vertices in $B$, as no vertex in $B$ is removed from any bag.
The edge condition also holds for edges between vertices in $A$ as there is a bag containing $A$, and therefore as $(A,B)$ is a vertex cut, it holds for all edges.

The vertex condition holds for vertices in $B \setminus A$, because the occurrences of vertices in $B \setminus A$ are the same as in $(T,\bag)$.
We then observe that vertices in $A \setminus B$ are only in the new bag $\bag'(y)$, and vertices in $A \cap B$ are in $\bag'(y)$ and $\bag'(t)$, and thus the fact that they are in $\bag'(t)$ implies that the vertex condition must hold for them.
\end{proof}

Then we show that if we use an $A$-linked non-$k$-lean-witness, then the refinement operation does not increase adhesion size above $k-1$.
We also observe that the refinement operation makes progress by decreasing the number of vertices of the root bag.

\begin{lemma}
\label{lem:refinresultingdecomp}
Let $\Tc = (T,\bag)$ be a star decomposition rooted at $t = \troot(\Tc)$ so that $\adhsize(\Tc) < k$, and let $(A,B,t,t)$ be an $A$-linked non-$k$-lean-witness for $\Tc$.
If $\Tc' = (T',\bag')$ the refinement of $\Tc$ with $(A,B,t,t)$, then
\begin{itemize}
\item $|\bag'(t)|<|\bag(t)|$ and
\item $\adhsize(\Tc') < k$.
\end{itemize}
\end{lemma}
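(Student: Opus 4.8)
The plan is to verify the two bullet points directly from the definitions of the refinement operation and of an $A$-linked non-$k$-lean-witness.

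For the first bullet point, $|\bag'(t)| < |\bag(t)|$, recall that $\bag'(t) = (\bag(t) \setminus A) \cup (A \cap B)$. Writing $\bag(t)$ as the disjoint union of $\bag(t) \setminus A$ and $\bag(t) \cap A$, we see that refining replaces the part $\bag(t) \cap A$ by the part $(A \cap B) \setminus \bag(t) \subseteq A \cap B$ — more precisely, $|\bag'(t)| = |\bag(t) \setminus A| + |(A \cap B) \setminus (\bag(t) \setminus A)|$, and since $(A \cap B) \setminus (\bag(t) \setminus A) \subseteq A \cap B$ while $\bag(t) \cap A \supseteq \bag(t) \cap A \cap B = (A\cap B) \cap \bag(t)$, it suffices to compare $|\bag(t) \cap A|$ with $|A \cap B|$. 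By the definition of a non-$k$-lean-witness, $|\bag(t) \cap A| > |A \cap B|$, so the part being removed is strictly larger than the part being added, giving $|\bag'(t)| < |\bag(t)|$. I would write this out as a short chain of (in)equalities; the only thing to be careful about is that $A \cap B$ need not be contained in $\bag(t)$, so one should phrase the counting in terms of the symmetric-difference-style bookkeeping above rather than naively subtracting.

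For the second bullet point, $\adhsize(\Tc') < k$, I would check each adhesion of $\Tc'$. The edges of $T'$ are the edges $t x$ for the old leaves $x \in V(T) \setminus \{t\}$, plus the new edge $t y$. For an old edge $t x$, the adhesion in $\Tc'$ is $\bag'(t) \cap \bag'(x) = \bigl((\bag(t) \setminus A) \cup (A \cap B)\bigr) \cap \bigl(\bag(x) \cap B\bigr)$, which is contained in $\bag(t) \cap \bag(x) = \adh_{\Tc}(tx)$ (using that $(\bag(t)\setminus A) \cup (A \cap B) \subseteq \bag(t)$ and $\bag(x) \cap B \subseteq \bag(x)$); hence its size is at most $\adhsize(\Tc) < k$. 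For the new edge $t y$, the adhesion is $\bag'(t) \cap \bag'(y) = \bigl((\bag(t) \setminus A) \cup (A \cap B)\bigr) \cap A = A \cap B$, where I use that $(\bag(t) \setminus A) \cap A = \emptyset$ and $A \cap B \subseteq A$. So I must bound $|A \cap B|$, and here is where the $A$-linkedness hypothesis enters: $A$-linked means $\flow_G(A \cap B, A \cap \bag(t)) = |A \cap B|$. The point is that $A \cap B$ is an $(A \cap B, A \cap \bag(t))$-separator realizing this flow, so every vertex of $A \cap B$ lies on one of the $|A\cap B|$ vertex-disjoint paths, which in particular means $A \cap B \subseteq N[A \cap \bag(t)]$ is ``used'' — but more to the point, $|A \cap B| = \flow_G(A\cap B, A \cap \bag(t)) \le |A \cap \bag(t)|$ trivially, and combined with $|A \cap \bag(t)| = |\bag(t) \cap A| > |A\cap B|$ from the witness this is consistent but not yet the bound I want.

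The cleaner argument for $|A \cap B| < k$ is that $(A,B)$ is a vertex cut of $G$ of order $|A \cap B| < k$ by the very first clause of the definition of a non-$k$-lean-witness; so in fact the $A$-linkedness is not needed for the adhesion-size claim at all, only the definitional bound $|A \cap B| < k$. Thus the main obstacle I anticipated — controlling the new adhesion — dissolves into just quoting the definition, and the role of $A$-linkedness (and of \Cref{lem:alinkedwitness}) is merely to guarantee such a witness can be produced efficiently and with the right structure for later use, not for this particular lemma. So the proof is short: verify $|\bag'(t)| < |\bag(t)|$ by the counting above, then observe that the new adhesion equals $A \cap B$ whose size is $< k$ by definition and every other adhesion of $\Tc'$ is a subset of a corresponding adhesion of $\Tc$. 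I would present it in that order, being slightly careful only about the set-theoretic bookkeeping in the first part.
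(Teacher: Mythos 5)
Your handling of the first bullet and of the new adhesion at $ty$ is correct and matches the paper. But there is a genuine gap in your treatment of the adhesions at the old edges $tx$. You claim $\bag'(t)\cap\bag'(x)\subseteq\bag(t)\cap\bag(x)$ "using that $(\bag(t)\setminus A)\cup(A\cap B)\subseteq\bag(t)$" --- but that containment is false in general, for exactly the reason you yourself flagged one paragraph earlier: $A\cap B$ need not be contained in $\bag(t)$. Concretely, $\bag'(t)\cap\bag'(x)$ contains $A\cap B\cap\bag(x)$, which may include vertices outside $\bag(t)$; together with the surviving part $(\bag(t)\setminus A)\cap\bag(x)\cap B$ of the old adhesion, the new adhesion could a priori have size up to $(k-1)+(k-1)$. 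So the old adhesions are precisely where the work lies, and your conclusion that ``$A$-linkedness is not needed for the adhesion-size claim at all'' is the opposite of the truth.

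The paper's argument for the old edges is an uncrossing step that uses $A$-linkedness essentially. Let $(A_x,B_x)=(V(G)\setminus(\bag(x)\setminus\bag(t)),\,\bag(x))$, the vertex cut whose middle set is the old adhesion $\bag(t)\cap\bag(x)$. If $|\bag'(t)\cap\bag'(x)|>|\bag(t)\cap\bag(x)|$, then one checks that $|(A_x\cup A)\cap(B_x\cap B)|=|\bag'(t)\cap\bag'(x)|>|A_x\cap B_x|$, so by submodularity $|(A_x\cap A)\cap(B_x\cup B)|<|A\cap B|$. Since $A\cap\bag(t)\subseteq A_x\cap A$ and $A\cap B\subseteq B_x\cup B$, this middle set is an $(A\cap\bag(t),A\cap B)$-separator of size $<|A\cap B|$, contradicting $\flow(A\cap B,A\cap\bag(t))=|A\cap B|$, i.e., $A$-linkedness. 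You need to supply this (or an equivalent) argument; as written, the second bullet is not proved.
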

\begin{proof}
Let us first prove that $|\bag'(t)|<|\bag(t)|$.
We have that $\bag'(t) = (\bag(t) \setminus A) \cup (A \cap B)$, and by the definition of a non-$k$-lean-witness, we have that $|A \cap \bag(t)| > |A \cap B|$, implying $|\bag'(t)|<|\bag(t)|$.

Then we prove that the adhesion size of $(T',\bag')$ is $<k$.
First, because $\bag'(y) = A$ and $\bag'(t) \cap A = A \cap B$, we have $\bag'(y) \cap \bag'(t) = A \cap B$, which has size $<k$.

Then, for the sake of contradiction, assume that there is a leaf node $x \in V(T) \setminus \{t,y\}$ so that $|\bag'(x) \cap \bag'(t)| > |\bag(x) \cap \bag(t)|$.
Denote also $(A_x,B_x) = (V(G) \setminus (\bag(x) \setminus \bag(t)), \bag(x))$, and note that $A_x \cap B_x = \bag(t) \cap \bag(x)$.
Because both $(A,B)$ and $(A_x,B_x)$ are vertex cuts, also $(A_x \cap A, B_x \cup B)$ is a vertex cut.
\begin{claim}
The order of $(A_x \cap A, B_x \cup B)$ is less than the order of $(A,B)$.
\end{claim}
\begin{claimproof}
For the sake of contradiction, assume that the order of $(A_x \cap A, B_x \cup B)$ is at least the order of $(A,B)$.
By submodularity, then the order of the vertex cut $(A_x \cup A, B_x \cap B)$ is at most the order of $(A_x, B_x)$.
In other words, then
\begin{align*}
|(A_x \cup A) \cap B_x \cap B| \le |A_x \cap B_x|\\
|(\bag(t) \cup A) \cap \bag(x) \cap B| \le |A_x \cap B_x|\\
|((\bag(t) \setminus A) \cup (A \cap B)) \cap \bag(x) \cap B| \le |A_x \cap B_x|\\
|\bag'(t) \cap \bag'(x)| \le |\bag(t) \cap \bag(x)|
\end{align*}
However, this contradicts the assumption that $|\bag'(x) \cap \bag'(t)| > |\bag(x) \cap \bag(t)|$.
\end{claimproof}

However, $A \cap \bag(t) \subseteq A_x \cap A$, and $A \cap B \subseteq B_x \cup B$, so the vertex cut $(A_x \cap A, B_x \cup B)$ contradicts that $(A,B,t,t)$ is $A$-linked.
\end{proof}

Now we are ready to prove \Cref{lem:leanalglarge} by giving an algorithm that repeatedly applies the refinement operation.

\leanalglarge*
\begin{proof}
We implement an iterative improvement process that maintains a star decomposition $\Tc = (T,\bag)$ of $G$ with root $t = \troot(T)$.
During the process, $\Tc$ is not represented by the usual representation of tree decompositions.
Instead of storing $\bag(t)$ as a linked list, it is stored by an array indexed by $V(G)$ that tells for each vertex whether it is in $\bag(t)$.
This allows to insert and delete vertices from $\bag(t)$ in $\OO(1)$ time, and query whether a vertex is in $\bag(t)$ in $\OO(1)$ time.
Otherwise, the representation of $\Tc$ is as usual, i.e., $T$ is explicitly represented and the non-root bags are stored as linked lists with the nodes of $T$ storing pointers pointing to their bags.
In addition to this, for each vertex $v \in V(G)$ we store a linked list listing the non-root nodes of $T$ in whose bags $v$ is in.

The invariants that our process maintains will be that $\adhsize(\Tc) < k$ and all non-root bags of $\Tc$ have size $<s$.
By \Cref{lem:bigbagexists}, this implies that $|\bag(t)| \ge 2s$.

In our process, we maintain a queue $Q$ storing vertices of $G$ with degree $<s$.
This queue will satisfy the invariant that if there exists a well-structured non-$k$-lean-witness of form $(A,B,t,t)$, then there exists $v \in Q$ with $v \in A$.
Initially, we set $Q$ to contain all vertices of degree $<s$, which satisfies the invariant by \Cref{lem:wellstructwitness}.

Now, our algorithm works as follows.
As long as $Q$ is non-empty, we let $v \in Q$ be the front vertex in $Q$, and use the algorithm of \Cref{lem:compwelstrucleanwit} to in time $s^{\OO(k)}$ either (1) return a set $C \subseteq V(G)$ so that $(N[C], V(G) \setminus C, t, t)$ is a well-structured non-$k$-lean-witness, or (2) conclude that no well-structured non-$k$-lean-witness of form $(A,B,t,t)$ with $v \in A$ exists.
In the latter case, we simply pop $v$ from $Q$.

In the former case, we use the algorithm of \Cref{lem:alinkedwitness} to in time $s^{\OO(1)}$ obtain sets $S \subseteq A \subseteq N[C]$ so that $(A, V(G) \setminus (A \setminus S), t, t)$ is an $A$-linked non-$k$-lean-witness.
Denote the refinement of $\Tc$ with $(A, V(G) \setminus (A \setminus S), t, t)$ by $\Tc' = (T',\bag')$.
By \Cref{lem:refinresultingdecomp2,lem:refinresultingdecomp}, $\Tc'$ satisfies the invariants, and furthermore, $|\bag'(t)| < |\bag(t)|$.

We transform $\Tc$ into $\Tc'$ by (1) removing vertices in $A \setminus S$ from all bags, (2) inserting vertices in $S$ to $\bag(t)$, and (3) inserting a new non-root node $y$ with $\bag'(y) = A$.
Clearly, (2) and (3) can be implemented in time $\OO(s)$.
The operation (1) runs in time $\OO(s)$ times the number of bags containing vertices from $A \setminus S$ (as all non-root bags have size less than $s$).
Across all refinements, this takes in total $\OO(|V(G)| \cdot s^2)$ time, as each iteration increases the sum of the bag sizes by at most $\OO(s)$, but by \Cref{lem:refinresultingdecomp}, $|\bag(t)|$ decreases in each refinement, so there are at most $|V(G)|$ iterations.

Finally, we need to update $Q$ to satisfy its invariant.
We add all vertices in $A$ with degree $<s$ to $Q$.
To prove that $Q$ now satisfies the invariant, suppose otherwise, and let $(A',B',t,t)$ be a well-structured non-$k$-lean-witness of $\Tc'$ so that $A'$ is disjoint with $Q$.
Now, $A' \setminus B'$ must be disjoint with $A$, and $A' \cap B'$ intersects $A$ only in vertices of degree $\ge s$.
Because vertices of degree $\ge s$ are in $\bag'(t)$ and $\bag(t)$ (by \Cref{lem:bigbagexists,lem:degnodecenter}), it follows that $A' \cap \bag'(t) = A' \cap \bag(t)$.
From this, it follows that $(A',B',t,t)$ is in fact also a well-structured non-$k$-lean-witness of $\Tc$, which would contradict the invariant of $Q$ at the previous timestep.

As in each refinement we add $\OO(s)$ vertices to $Q$, and processing one vertex from $Q$ is done in time $s^{\OO(k)}$, and there are at most $|V(G)|$ refinements, the algorithm runs in total $\OO(\|G\|) + |V(G)| \cdot s^{\OO(k)}$ time.
\end{proof}

\subsection{Lean tree decompositions of small graphs}
\label{subsec:leandecompsmallgraph}
This subsection is dedicated to the proof of \Cref{lem:leanalgsmall}, which we re-state now.

\leanalgsmall*

The proof of \Cref{lem:leanalgsmall} will be an algorithmic version of the proof of Bellenbaum and Diestel~\cite{bellenbaum2002two}.
A similar algorithmic procedure, but only for obtaining an unbreakable tree decomposition, was given before by~\cite{DBLP:journals/talg/CyganKLPPSW21}.

We start by defining a special type of a non-$k$-lean-witness, following an analogous definition in~\cite{bellenbaum2002two}.

Let $G$ be a graph and $\Tc = (T,\bag)$ a tree decomposition of $G$.
For two nodes $t_1, t_2 \in V(T)$ and a vertex $v \in V(G)$, let us denote by $d_{t_1,t_2}(v)$ the distance in $T$ from the the unique $t_1$-$t_2$-path to the closest node of $T$ whose bag contains $v$.
Note that if $v$ is in a bag whose node is in the $t_1$-$t_2$-path, then $d_{t_1,t_2}(v) = 0$, and otherwise a node $t_v$ with $v \in \bag(t_v)$ minimizing the distance to the $t_1$-$t_2$-path is unique.

We say that a non-$k$-lean-witness $(A,B,t_1,t_2)$ for $\Tc$ is a \emph{minimal non-$k$-lean-witness} for $\Tc$ if among all non-$k$-lean-witnesses it
\begin{enumerate}
\item minimizes the distance between $t_1$ and $t_2$ in $T$, and\label{enu:minkleancond1}
\item subject to \Cref{enu:minkleancond1}, among all $(\bag(t_1) \cap A, \bag(t_2) \cap B)$-separators, $A \cap B$ minimizes $|A \cap B|$, and\label{enu:minkleancond2}
\item subject to \Cref{enu:minkleancond1,enu:minkleancond2}, among all $(\bag(t_1) \cap A, \bag(t_2) \cap B)$-separators, $A \cap B$ minimizes $\sum_{v \in A \cap B} d_{t_1,t_2}(v)$.\label{enu:minkleancond3}
\end{enumerate}

We then give an algorithm for finding a minimal non-$k$-lean-witness whenever $\Tc$ is not $k$-lean.

\begin{lemma}
\label{lem:smallleanfindminiwitness}
There is an algorithm that, given a graph $G$, integers $s \ge k \ge 1$ so that $G$ is $(s,k)$-unbreakable, and a tree decomposition $\Tc$ of $G$ with $\adhsize(\Tc) < k$, in time $s^{\OO(k)} \cdot \|G\|^{\OO(1)} \cdot \|\Tc\|^{\OO(1)}$ either returns a minimal non-$k$-lean-witness for $\Tc$ or concludes that $\Tc$ is $k$-lean.
\end{lemma}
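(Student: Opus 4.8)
\textbf{Proof plan for \Cref{lem:smallleanfindminiwitness}.}
The plan is to iterate over all candidate pairs of nodes $(t_1,t_2) \in V(T) \times V(T)$, ordered by non-decreasing distance between $t_1$ and $t_2$ in $T$, and for each pair test whether there is a non-$k$-lean-witness of the form $(A,B,t_1,t_2)$; as soon as we find one, we extract from it a minimal non-$k$-lean-witness. Testing whether such a witness exists for a fixed pair $(t_1,t_2)$ amounts to a max-flow/min-cut computation: by the definition of a non-$k$-lean-witness, we seek a vertex cut $(A,B)$ of $G$ with $t_1,t_2$ fixed such that $|\bag(t_1)\cap A|>|A\cap B|$, $|\bag(t_2)\cap B|>|A\cap B|$, and every $(t_1,t_2)$-adhesion has size $>|A\cap B|$. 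The last condition can be checked in advance for the pair $(t_1,t_2)$ by computing all adhesions along the $(t_1,t_2)$-path using \Cref{lem:computeadhesions}; let $q$ be the minimum adhesion size along this path. So it remains to look for a separation $(A,B)$ of order $<q$, of order $<k$, with $|\bag(t_1)\cap A| > |A\cap B|$ and $|\bag(t_2)\cap B| > |A\cap B|$.

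To find such a separation of minimum order, I would iterate over guesses $c \in \{0,1,\dots,\min(q,k)-1\}$ for $|A\cap B|$, and for each $c$ guess which $c'\le c$ vertices of $\bag(t_1)$ and which $c'\le c$ vertices of $\bag(t_2)$ lie in $A\cap B$ (or, more cleanly, run a standard Ford--Fulkerson-style computation: add super-source/super-sink gadgets so that a minimum $(X_1,X_2)$-separator for $X_1=\bag(t_1)$, $X_2=\bag(t_2)$ is computed, but we actually want the separator to be small while still leaving $>|A\cap B|$ vertices of $\bag(t_1)$ on the $A$-side and of $\bag(t_2)$ on the $B$-side). The cleanest formulation: we are looking for the minimum value $c$ such that $\flow(X_1',X_2') \le c$ for some $X_1'\subseteq\bag(t_1)$, $X_2'\subseteq\bag(t_2)$ with $|X_1'|, |X_2'| > c$; equivalently, there is a vertex cut of order $c$ that is "unbalanced enough" on both bags. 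Since $|A\cap B| < k$ and by $(s,k)$-unbreakability one of $A,B$ has $\le s$ vertices on each bag is not directly bounded, but the number of subsets $X_1',X_2'$ to try can be bounded: if $|\bag(t_1)| \le s$ or we only care about whether $>c$ vertices remain, we can use the fact that it suffices to take $X_1' = \bag(t_1)$ and $X_2'=\bag(t_2)$ and ask for a minimum $(X_1',X_2')$-separator, then check whether its order is $<c+1 \le \min(q,k)$ and whether it leaves $>$ (its order) vertices of each bag uncovered. A single $(\bag(t_1),\bag(t_2))$-min-separator computation via Ford--Fulkerson runs in $\OO(k\cdot\|G\|)$ time (we only need to find a separator of size $<k$, aborting otherwise), and automatically gives a separator satisfying \Cref{enu:minkleancond2}. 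To also satisfy \Cref{enu:minkleancond3} (minimizing $\sum_{v\in A\cap B} d_{t_1,t_2}(v)$), I would among all min-separators pick the "closest" one: this is a classical min-cost-flow / lexicographic-min-cut argument — assign each vertex $v$ a tiny secondary cost $\varepsilon^{d_{t_1,t_2}(v)}$, or run the standard procedure of repeatedly pushing the cut towards the path, which is exactly what Bellenbaum--Diestel-style arguments do; this can be computed in polynomial time in $\|G\|$ and $\|\Tc\|$. Computing the distances $d_{t_1,t_2}(v)$ for all $v$ is a BFS in $T$ from the $(t_1,t_2)$-path, done in $\OO(\|\Tc\|)$ time per pair.

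Putting this together: there are $\OO(\|\Tc\|^2)$ pairs $(t_1,t_2)$; for each we do $\OO(\|\Tc\|)$ preprocessing (adhesions along the path, distances), one Ford--Fulkerson computation of cost $\OO(k\cdot\|G\|)$, and a polynomial-time lexicographic refinement; so the overall running time is $s^{\OO(k)}\cdot\|G\|^{\OO(1)}\cdot\|\Tc\|^{\OO(1)}$ as required (the $s^{\OO(k)}$ slack is more than enough; in fact $k^{\OO(1)}$ suffices, but we do not need the tighter bound). If for every pair $(t_1,t_2)$ no non-$k$-lean-witness of the above form is found, then by \Cref{klean:prop2} and the \Cref{observation} characterizing $k$-leanness via the absence of non-$k$-lean-witnesses, $\Tc$ is $k$-lean, so we output that. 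Correctness of minimality: processing pairs in order of non-decreasing distance ensures \Cref{enu:minkleancond1}; the min-separator computation ensures \Cref{enu:minkleancond2}; the lexicographic refinement ensures \Cref{enu:minkleancond3}.

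\textbf{Main obstacle.} The subtle point is making sure that "there exists a non-$k$-lean-witness with these $t_1,t_2$" can really be decided by a \emph{single} min-separator computation between the full bags $\bag(t_1)$ and $\bag(t_2)$, rather than needing to enumerate subsets $X_1\subseteq\bag(t_1)$, $X_2\subseteq\bag(t_2)$. The resolution is that a min $(\bag(t_1),\bag(t_2))$-separator $S$ has order $\flow(\bag(t_1),\bag(t_2))$, and if this order is $<\min(q,k)$ then taking $A,B$ from the corresponding vertex cut automatically gives $|\bag(t_1)\cap A|,|\bag(t_2)\cap B| \ge$ something larger than $|S|$ by a Menger-type/posimodularity counting argument (as used implicitly in the overview's discussion of $k$-leanness), so the condition $|\bag(t_i)\cap\cdot| > |A\cap B|$ is satisfied "for free"; conversely, any non-$k$-lean-witness with nodes $t_1,t_2$ yields an $(\bag(t_1)\cap A, \bag(t_2)\cap B)$-separator of size $<\min(|\bag(t_1)\cap A|, |\bag(t_2)\cap B|, k)$, which by $k$-leanness's internal flow property forces $\flow(\bag(t_1)\cap A,\bag(t_2)\cap B)$ to be small, hence a small $(\bag(t_1),\bag(t_2))$-separator exists too. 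Nailing down this equivalence cleanly — and checking that the lexicographically-closest min-separator indeed realizes conditions \Cref{enu:minkleancond2,enu:minkleancond3} simultaneously — is where the care is needed; everything else is routine bookkeeping and standard flow algorithmics.
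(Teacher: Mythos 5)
There is a genuine gap in the first (and main) part of your plan: deciding, for a fixed pair $(t_1,t_2)$, whether a non-$k$-lean-witness with these nodes exists. You propose a \emph{single} minimum $(\bag(t_1),\bag(t_2))$-separator computation and argue in your "main obstacle" paragraph that this suffices. It does not. A non-$k$-lean-witness only requires $|\bag(t_1)\cap A|>|A\cap B|$ and $|\bag(t_2)\cap B|>|A\cap B|$, i.e.\ it is witnessed by \emph{subsets} $X_1\subseteq\bag(t_1)$, $X_2\subseteq\bag(t_2)$ with $\flow(X_1,X_2)<\min(|X_1|,|X_2|,k)$. Your converse step --- "$\flow(\bag(t_1)\cap A,\bag(t_2)\cap B)$ is small, hence a small $(\bag(t_1),\bag(t_2))$-separator exists too" --- is false: flow is monotone under enlarging the terminal sets, so a small separator between subsets says nothing about the flow between the full bags, which can be $\ge k$ while a witness on subsets still exists. (The forward direction is also shaky: a minimum full-bag separator $S$ can have $|S|=|\bag(t_1)|$, in which case $|\bag(t_1)\cap A|>|S|$ fails.) This is exactly why the paper's proof enumerates \emph{all} subsets $X_1\subseteq\bag(t_1)$, $X_2\subseteq\bag(t_2)$ of size at most $k$ and runs Ford--Fulkerson for each pair; the $(2s)^{2k}=s^{\OO(k)}$ factor in the statement is there precisely to pay for this enumeration, which your remark that "$k^{\OO(1)}$ suffices" shows you did not account for.

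A second, related omission: subset enumeration is only affordable because all but at most one bag has size $<2s$ (\Cref{lem:atmostonelargebag}). For the unique possibly-huge bag $\bag(t)$ (size up to $|V(G)|$), enumeration is hopeless, and the paper handles it by showing via \Cref{lem:leanwitstruct} that one may restrict to witnesses with $t_1=t_2=t$ or with the $(t_1,t_2)$-path avoiding $t$, and then invoking the well-structured-witness local search of \Cref{lem:wellstructwitness,lem:compwelstrucleanwit} for the case $t_1=t_2=t$. Your plan has no mechanism for this case. The second part of your plan (realizing \Cref{enu:minkleancond2,enu:minkleancond3} by a weighted minimum separator once the pair and the subsets are fixed) matches the paper's argument and is fine.
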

\begin{proof}
The algorithm boils down to two parts, first finding a non-$k$-lean-witness $(A,B,t_1,t_2)$ minimizing the distance between $t_1$ and $t_2$ in $T$, and then transforming it so that it satisfies the conditions of \Cref{enu:minkleancond2,enu:minkleancond3} in the definition of a minimal non-$k$-lean-witness.

\paragraph{Finding a non-$k$-lean-witness.}
We first focus on the problem of finding a non-$k$-lean-witness $(A,B,t_1,t_2)$ that minimizes the distance between $t_1$ and $t_2$.
When allowing a factor of $\|\Tc\|^{\OO(1)}$ in the running time, this boils down to the problem of finding a non-$k$-lean-witness $(A,B,t_1,t_2)$ when $t_1$ and $t_2$ are given, or concluding that none exist.
By \Cref{lem:atmostonelargebag}, $\Tc$ has at most one node $t$ with $|\bag(t)| \ge 2s$, and by \Cref{lem:leanwitstruct} we can assume that either $t_1 = t_2 = t$ or $t_1 \neq t$ and $t_2 \neq t$.

When $t_1 = t_2 = t$ and $|\bag(t)| \ge 2s$, we can use \Cref{lem:compwelstrucleanwit} to in time $s^{\OO(k)} \cdot \|G\|^{\OO(1)}$ find a non-$k$-lean-witness of form $(A,B,t,t)$.

Assume then that $|\bag(t_1)| < 2s$, $|\bag(t_2)| < 2s$, and the minimum size of an $(t_1,t_2)$-adhesion is $\alpha$. (Let $\alpha = \infty$ if $t_1 = t_2$.)
We iterate through all subsets $X_1 \subseteq \bag(t_1)$ and $X_2 \subseteq \bag(t_2)$ of size $|X_1| = |X_2| \le k$, and compute $\flow(X_1,X_2)$ with the Ford-Fulkerson algorithm.
If $\flow(X_1,X_2) \ge \min(|X_1|, \alpha)$ for all such pairs, then there is no non-$k$-lean-witness of form $(A,B,t_1,t_2)$.
Otherwise, we find a non-$k$-lean-witness $(A,B,t_1,t_2)$ with $X_1 \subseteq A$, $X_2 \subseteq B$, and $|A \cap B| = \flow(X_1,X_2)$.
This procedure runs in time $(2s)^{2k} \cdot \|G\|^{\OO(1)} = s^{\OO(k)} \cdot \|G\|^{\OO(1)}$.

\paragraph{Minimalizing the witness.}
We then transform the non-$k$-lean-witness $(A,B,t_1,t_2)$ into a minimal non-$k$-lean-witness.
The first part already guarantees that it satisfies \Cref{enu:minkleancond1} of the definition of a minimal non-$k$-lean-witness.

We assign for each vertex $v \in V(G)$ a weight $w(v) = |V(T)| \cdot |V(G)| + d_{t_1,t_2}(v)$.
Then, we compute a $(\bag(t_1) \cap A, \bag(t_2) \cap B)$-separator $S$ minimizing $\sum_{v \in S} w(v)$.
Because the weights are bounded by $\|G\|^{\OO(1)} \cdot \|\Tc\|^{\OO(1)}$, this can be done in $\|G\|^{\OO(1)} \cdot \|\Tc\|^{\OO(1)}$ time with the Ford-Fulkerson algorithm.
By the choice of the weights, such a separator $S$ corresponds to an $(\bag(t_1) \cap A, \bag(t_2) \cap B)$-separator that minimizes $|S|$, and subject to that minimizes $\sum_{v \in S} d_{t_1,t_2}(v)$.

Now we let $(A',B')$ be the separation with $\bag(t_1) \cap A \subseteq A'$, $\bag(t_2) \cap B \subseteq B'$, and $A' \cap B' = S$.
It follows that $(A',B',t_1,t_2)$ is a minimal non-$k$-lean-witness.
\end{proof}

We then define the \emph{refinement} of a tree decomposition $\Tc = (T,\bag)$ with a minimal non-$k$-lean-witness $(A,B,t_1,t_2)$.
This will be similar to the refinement operation of the previous subsection, in particular the refinement of the previous subsection is a simplified version of the refinement here.
From now, instead of denoting a non-$k$-lean-witness by $(A,B,t_1,t_2)$, we will denote a non-$k$-lean-witness by $(C_1,C_2,t_1,t_2)$, as this will make stating the definitions easier.
We also point out that in what follows, we will always work with minimal non-$k$-lean-witnesses instead of arbitrary non-$k$-lean-witnesses.

When $G$ is a graph, $\Tc = (T,\bag)$ a rooted tree decomposition of $G$, and $v \in V(G)$, we define the \emph{forget-node} of $v$ to be the closest node $t \in V(T)$ to the root with $v \in \bag(t)$.
We denote the forget-node of $v$ by $\forget_{\Tc}(v)$.

For the refinement, we construct a rooted tree decomposition $\Tc^i = (T^i,\bag^i)$ of $G[C_i]$ for each $i \in [2]$ as follows.
Denote $j = 3-i$ and $S = C_1 \cap C_2$.
Consider the tree decomposition $\Tc = (T,\bag)$ as rooted at the node $t_j$ (note $t_j$, not $t_i$).
Then, define for all $t \in V(T)$ that
\[\pull^i(t) = \{v \in S : \forget_{\Tc}(v) \text{ is a strict descendant of } t \text{ in } T\}.\]
Note that in this definition it is important that we consider $\Tc$ to be rooted at $t_j$.
Then, we let $T^i = T$, and set
\[\bag^i(t) = (\bag(t) \cap C_i) \cup \pull^i(t)\]
for all $t \in V(T)$.
We then prove the basic properties of $\Tc^i$.

\begin{lemma}
$\Tc^i = (T^i,\bag^i)$ is a tree decomposition of $G[C_i]$, and $S \subseteq \bag^i(t_j)$.
\end{lemma}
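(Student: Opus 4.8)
The plan is to verify the three defining conditions of a tree decomposition for $\Tc^i = (T^i, \bag^i)$, plus the extra claim $S \subseteq \bag^i(t_j)$, using the structure of the minimal non-$k$-lean-witness $(C_1, C_2, t_1, t_2)$ and the fact that $(C_1, C_2)$ is a vertex cut of $G$ with $C_1 \cap C_2 = S$ a $(\bag(t_1)\cap C_1, \bag(t_2)\cap C_2)$-separator of minimum size (via \Cref{enu:minkleancond2}) that furthermore minimizes $\sum_{v\in S} d_{t_1,t_2}(v)$ (via \Cref{enu:minkleancond3}). Throughout, recall we root $\Tc$ at $t_j$ and $T^i = T$ as a tree.

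First I would handle the edge condition. Take an edge $uv \in E(G[C_i])$, so $u,v \in C_i$ and $uv \in E(G)$. Since $\Tc$ is a tree decomposition of $G$, there is $t \in V(T)$ with $\{u,v\} \subseteq \bag(t)$; then $\{u,v\}\subseteq \bag(t)\cap C_i \subseteq \bag^i(t)$, done. Next, the claim $S\subseteq \bag^i(t_j)$: for $v \in S$, the forget-node $\forget_\Tc(v)$ is either $t_j$ itself or a strict descendant of $t_j$. I would argue that $\forget_\Tc(v)$ cannot lie strictly above $t_j$ — since $t_j$ is the root, it has no strict ancestors, so trivially $\forget_\Tc(v)$ is a descendant of $t_j$. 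If $\forget_\Tc(v)$ is a strict descendant of $t_j$, then $v\in\pull^i(t_j)\subseteq\bag^i(t_j)$; if $\forget_\Tc(v)=t_j$, then $v\in\bag(t_j)$, and since $S\subseteq C_i$ (as $S = C_1\cap C_2 \subseteq C_i$), we get $v\in\bag(t_j)\cap C_i\subseteq\bag^i(t_j)$. Either way $v\in\bag^i(t_j)$.

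The main work is the vertex condition: for each $v\in C_i$, the set $U_v = \{t\in V(T): v\in\bag^i(t)\}$ must be a non-empty connected subtree of $T$. Non-emptiness follows because $v\in C_i$ means $v\in\bag(t)$ for some $t$ (as $v\in V(G)$) and then $v\in\bag(t)\cap C_i\subseteq\bag^i(t)$. For connectedness I would split into two cases. If $v\in C_i\setminus S$: here $v$ never appears in any $\pull^i(t)$ (those only contain vertices of $S$), so $U_v = \{t: v\in\bag(t)\cap C_i\} = \{t: v\in\bag(t)\}$, which is a connected subtree by the vertex condition for $\Tc$. If $v\in S$: let $R_v = \{t: v\in\bag(t)\}$ (connected subtree of $\Tc$, rooted at $t_j$), and let $f = \forget_\Tc(v)$ be its node closest to $t_j$. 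Then $v\in\pull^i(t)$ exactly for those $t$ that are strict ancestors of $f$ (equivalently, on the path from $t_j$ to $f$ excluding $f$), and $v\in\bag(t)\cap C_i$ exactly for $t\in R_v$ (since $v\in S\subseteq C_i$). So $U_v$ is the union of the path from $t_j$ down to $f$ with the subtree $R_v$; since $f\in R_v$ and $f$ is the top of $R_v$, this union is a connected subtree containing $t_j$. I do not expect the definition of $\pull^i$ to create problems here — the key observation making connectedness work is precisely that $\pull^i$ "extends" the occurrence of each $v\in S$ upward along a single path to the root $t_j$, which is exactly what is needed to glue $T^1$ and $T^2$ along $S$ later. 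The one subtlety to check carefully is that $v\notin\bag^i(t)$ for nodes $t$ that are neither on the $t_j$-to-$f$ path nor in $R_v$; this holds because such a node is neither a strict ancestor of $f$ (hence $v\notin\pull^i(t)$) nor in $R_v$ (hence $v\notin\bag(t)$), so I would simply make this explicit. The anticipated main obstacle is just being careful about the root-at-$t_j$ convention and the "strict descendant" vs "descendant" distinction, not any deep argument.
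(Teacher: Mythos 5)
Your proof is correct and follows essentially the same approach as the paper: the edge condition from not deleting occurrences of $C_i$-vertices, the unchanged occurrences for $C_i \setminus S$, and for $v \in S$ the observation that $\pull^i$ extends the occurrence set of $v$ upward along the path to the root $t_j$ (the paper phrases this as the occurrences forming a prefix of $T^i$, which is equivalent to your explicit description as the union of $R_v$ with the $t_j$-to-$\forget_\Tc(v)$ path). The claim $S \subseteq \bag^i(t_j)$ is argued identically.
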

\begin{proof}
First, to prove that $S \subseteq \bag^i(t_j)$, we observe that because $t_j$ is the root, for all $v \in S$ either $v \in \bag(t_j)$, implying that $v \in \bag(t_j) \cap C_i$, or the forget-node of $v$ is a strict descendant of $t_j$, implying that $v \in \pull^i(t_j)$.

The edge condition of tree decompositions follows from the fact that we do not remove any occurrences of vertices in $C_i$ from the bags of $\Tc^i$ compared to $\Tc$.
Also, the occurrences of vertices in $C_i \setminus S$ are not altered at all compared to $\Tc$, which implies the vertex condition for them.
It remains to prove the vertex condition for vertices in $S$.

For this, we observe that if $v \in \bag^i(t) \cap S$, and $p$ is the parent of $t$, then either $v \in \bag(p)$ or $v \in \pull^i(p)$, implying that $v \in \bag^i(p)$.
Therefore, the occurrences of each $v \in S$ in $\Tc^i$ form a prefix of $T^i$, implying that the vertex condition is satisfied for vertices in $S$.
\end{proof}

Then we bound the adhesion size of $\Tc^i$.

\begin{lemma}
It holds that $\adhsize(\Tc^i) \le \adhsize(\Tc)$.
\end{lemma}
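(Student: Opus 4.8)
The plan is to bound the adhesion size of $\Tc^i$ edge by edge. Fix an edge $tp \in E(T^i) = E(T)$ where, with $\Tc$ rooted at $t_j$, $p = \parent(t)$. I want to show $|\bag^i(t) \cap \bag^i(p)| \le |\bag(t) \cap \bag(p)| < k$. The idea is to split each side-bag according to its two natural pieces: $\bag^i(t) = (\bag(t) \cap C_i) \cup \pull^i(t)$ and similarly for $p$, and then analyze where a vertex $v \in \bag^i(t) \cap \bag^i(p)$ can come from.

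First I would observe the basic containments. A vertex $v \in \pull^i(t)$ has its forget-node (in $\Tc$ rooted at $t_j$) a strict descendant of $t$, hence also a strict descendant of $p$, so $\pull^i(t) \subseteq \pull^i(p)$; conversely, for $v \in \pull^i(p) \setminus \pull^i(t)$ the forget-node of $v$ is a strict descendant of $p$ but not of $t$, so it lies in a subtree hanging off $p$ that does not contain $t$, and in particular $v \notin \bag^i(t)$. Therefore $\pull^i(p) \cap \bag^i(t) = \pull^i(t)$. Combined with the obvious $\pull^i(t) \subseteq \pull^i(p)$, the intersection $\bag^i(t) \cap \bag^i(p)$ decomposes as $\pull^i(t)$ together with those vertices of $(\bag(t) \cap C_i)$ that land in $\bag^i(p)$; and a vertex of $\bag(t) \cap C_i$ that lies in $\bag^i(p) = (\bag(p) \cap C_i) \cup \pull^i(p)$ either lies in $\bag(p) \cap C_i$, hence in $\bag(t) \cap \bag(p)$, or lies in $\pull^i(p)$ — but a vertex in $\bag(t)$ cannot be in $\pull^i(p)$ if $t$ is a descendant of $p$, since then the forget-node of $v$ is at $t$ or an ancestor of $t$, not a strict descendant of $p$ unless it equals some node below $p$; the cleanest way is: if $v \in \bag(t)$ with $t$ a descendant of $p$, then $\forget_\Tc(v)$ is an ancestor of $t$ (or $t$ itself), so it is either $p$, an ancestor of $p$, or a node strictly between; in the last case $v$ is still in $\bag(p)$ by the vertex (connectedness) condition of $\Tc$ applied to the path, so $v \in \bag(p) \cap C_i$. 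This shows $\bag^i(t) \cap \bag^i(p) \subseteq \pull^i(t) \cup (\bag(t) \cap \bag(p) \cap C_i)$.

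The remaining and main point is to show these two pieces are ``charged'' disjointly against $\bag(t) \cap \bag(p)$ — or rather, to bound $|\pull^i(t)| + |\bag(t)\cap\bag(p)\cap C_i \setminus \pull^i(t)|$ by $|\bag(t) \cap \bag(p)|$. Here I expect to use the minimality of the non-$k$-lean-witness, in particular condition \Cref{enu:minkleancond3} of the definition: the separator $S = C_1 \cap C_2$ minimizes $\sum_{v \in S} d_{t_1,t_2}(v)$ among $(\bag(t_1)\cap C_1, \bag(t_2)\cap C_2)$-separators. A vertex $v \in \pull^i(t) \subseteq S$ has its forget-node a strict descendant of $t$ in the $t_j$-rooted tree, so the $t_1$-$t_2$-path does not reach into that subtree (the path passes through $p$ and $t$ only on its way); the plan is to use this to find, for each such $v$, a distinct ``witness'' vertex in $\bag(t) \cap \bag(p)$ lying on paths from $v$ toward $\bag(t_1) \cap C_1$ (via a flow/Menger argument through the adhesion at $tp$), and to argue that if two of these pieces overlapped or exceeded the adhesion size, one could reroute $S$ to strictly decrease $\sum_v d_{t_1,t_2}(v)$, contradicting minimality.

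The hard part will be exactly this last rerouting argument — making precise how minimality of $S$ forces $|\bag^i(t) \cap \bag^i(p)| \le |\bag(t)\cap\bag(p)|$; the combinatorics of which vertices of $S$ get ``pulled'' across the edge $tp$ and how to exhibit an alternative separator of equal size but smaller total distance is delicate, and is the place where the proof genuinely uses that $(C_1,C_2,t_1,t_2)$ is \emph{minimal} rather than just any non-$k$-lean-witness. Once that inequality is established for every edge, together with the already-shown fact that the new adhesion at the edge incident to the newly split structure (here, between $\Tc^1$ and $\Tc^2$ glued at $t_i$/$t_j$) equals $S$ with $|S| < k$, we conclude $\adhsize(\Tc^i) \le \adhsize(\Tc)$, which is the statement. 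I would also double-check the corner cases $t_1 = t_2$ (so $\alpha = \infty$ and the $t_1$-$t_2$-path is a single node) and when $t$ or $p$ lies on the path itself, where $\pull^i$ behaves slightly differently but the same bounds go through.
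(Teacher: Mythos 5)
There is a genuine gap, and also a concrete error in the set-theoretic part.

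First, the decomposition of the new adhesion is wrong. With $p=\parent(t)$ (tree rooted at $t_j$), a vertex $v\in S$ with $\forget_\Tc(v)=t$ lies in $\pull^i(p)\setminus\pull^i(t)$ \emph{and} in $\bag(t)\cap C_i\subseteq\bag^i(t)$ (note $S\subseteq C_i$), yet $v\notin\bag(p)$ and $v\notin\pull^i(t)$. Such a $v$ belongs to $\bag^i(t)\cap\bag^i(p)$ but not to $\pull^i(t)\cup(\bag(t)\cap\bag(p)\cap C_i)$, so your claimed containment fails; your step ``the forget-node of $v$ is a strict descendant of $p$ but not of $t$, so it lies in a subtree hanging off $p$ that does not contain $t$'' overlooks that the forget-node can be $t$ itself. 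The correct containment is $\adh_{\Tc^i}(tp)\subseteq(\adh_\Tc(tp)\cap C_i)\cup\pull^i(p)$, i.e., one must control the pull of the \emph{parent}, and the lemma reduces to $|\pull^i(p)|\le|\adh_\Tc(tp)\setminus C_i|$.

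Second, and more importantly, the counting step is exactly the content of the lemma and you leave it as a plan (``the hard part will be exactly this last rerouting argument''), and the tool you point to is not the right one. The distance-minimization in the choice of $S$ (your proposed rerouting to decrease $\sum_v d_{t_1,t_2}(v)$) is not used here at all; it is needed only later, when showing the bags strictly shrink. What this lemma needs is only that $S$ is a \emph{minimum-size} $(\bag(t_1)\cap C_1,\bag(t_2)\cap C_2)$-separator: by Menger's theorem there are $|S|$ vertex-disjoint paths from $S$ to $X_j=\bag(t_j)\cap C_j$ whose vertices after the first all lie in $C_j\setminus C_i$. Each such path starting at a vertex of $\pull^i(p)$ begins in a bag of a descendant of $t$ and ends in the root bag $\bag(t_j)$, hence must cross $\adh_\Tc(tp)$, and it can only do so in $\adh_\Tc(tp)\setminus C_i$ (its first vertex is not in $\adh_\Tc(tp)$ since its forget-node lies strictly below $p$). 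Disjointness of the paths then gives $|\pull^i(p)|\le|\adh_\Tc(tp)\setminus C_i|$. Note also that the paths must be routed toward $\bag(t_j)\cap C_j$ (the root side, inside $C_j$), not toward $\bag(t_1)\cap C_1$ as you suggest for general $i$: routing inside $C_i$ would only hit the adhesion in $\adh_\Tc(tp)\cap C_i$, which is already charged by the other piece of the decomposition.
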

\begin{proof}
We prove this for $i=1$, but by symmetry this implies it also for $i=2$.
Let $xy \in E(T)$, and denote by $A = \adh_{\Tc}(xy)$ the adhesion at $xy$ in $\Tc$ and by $A^1 = \adh_{\Tc^1}(x y)$ the adhesion at $xy$ in $\Tc^1$.
We will prove that $|A^1| \le |A|$.

Consider $\Tc$ as rooted at $t_2$, and assume that $y$ is the parent of $x$.
Observe that $\pull^1(y) \subseteq \bag^1(x)$ and $\pull^1(x) \subseteq \pull^1(y)$.
Using this, we observe that
\[A^1 = (A \cap C_1) \cup \pull^1(y).\]
Now, to prove that $|A^1| \le |A|$, it suffices to prove that $|\pull^1(y)| \le |A \setminus C_1|$.

Denote $X_1 = C_1 \cap \bag(t_1)$ and $X_2 = C_2 \cap \bag(t_2)$.
Because $S$ is a minimum-size $(X_1,X_2)$-separator, by Menger's theorem there exists a collection $\paths = P_1,\ldots,P_{|S|}$ of $|S|$ vertex-disjoint $(S,X_2)$-paths.
Because $(C_1,C_2)$ is a separation, $S = C_1 \cap C_2$, and $X_2 \subseteq C_2$, these paths are disjoint from $C_1 \setminus C_2$, and in particular, all of their vertices except the first ones are in $C_2 \setminus C_1$.
Because $\pull^1(y) \subseteq S$, we can select a subcollection $\paths' \subseteq \paths$ of $|\pull^1(y)|$ vertex-disjoint $(\pull^1(y),X_2)$-paths.

Each path $P \in \paths'$ intersects a bag of a descendant of $x$ because the forget-nodes of vertices in $\pull^1(y)$ are descendants of $x$.
Also, each path $P \in \paths'$ intersects the root-bag $\bag(t_2)$ because $X_2 \subseteq \bag(t_2)$.
Therefore, as $A$ separates $\pull^1(y)$ from $\bag(t_2)$, we have that each $P \in \paths'$ intersects $A$.
Furthermore, as $A$ and $\pull^1(y)$ are disjoint, and each $P \in \paths'$ intersects $C_1$ only in $\pull^1(y)$, we get that each $P \in \paths'$ intersects $A \setminus C_1$.
It follows that $|A \setminus C_1| \ge |\paths'| = |\pull^1(y)|$.
\end{proof}

Then, we define that the \emph{pre-refinement} $\Tc' = (T',\bag')$ of $\Tc$ with $(C_1,C_2,t_1,t_2)$ is the tree decomposition obtained by taking the disjoint union of $\Tc^1$ and $\Tc^2$, adding a new node $r$ with a bag $\bag'(r) = S$, and inserting an edge between the root $t_2$ of $\Tc^1$ and the root $t_1$ of $\Tc^2$.

\begin{lemma}
The pre-refinement $\Tc'$ is a tree decomposition of $G$ with $\adhsize(\Tc') \le \max(\adhsize(\Tc), k-1)$.
\end{lemma}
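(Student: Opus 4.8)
The plan is to verify the three defining properties in turn: that $\Tc'$ satisfies the edge condition, that it satisfies the vertex condition, and that $\adhsize(\Tc') \le \max(\adhsize(\Tc), k-1)$. Throughout I will use the facts already established in this subsection, namely that each $\Tc^i$ is a tree decomposition of $G[C_i]$ with $\adhsize(\Tc^i) \le \adhsize(\Tc)$ and that $S = C_1 \cap C_2$ is contained in the bag of the root of $\Tc^i$, together with the fact that $(C_1,C_2)$ is a vertex cut of $G$ of order $|S| < k$ (since $(C_1,C_2,t_1,t_2)$ is a non-$k$-lean-witness).

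For the edge condition, I would first observe that since $(C_1,C_2)$ is a vertex cut there are no edges of $G$ between $C_1 \setminus C_2$ and $C_2 \setminus C_1$, so every edge $uv \in E(G)$ lies inside $G[C_1]$ or inside $G[C_2]$; it is therefore contained in some bag of the corresponding $\Tc^i$, which is a subtree of $T'$. For the vertex condition I would split on whether $v \in C_i \setminus C_j$ (with $j = 3-i$) or $v \in S$. A vertex $v \in C_i \setminus C_j$ occurs only in bags of $\Tc^i$ and not in $\bag'(r)$, so its occurrences form a connected subtree because $\Tc^i$ is a tree decomposition. A vertex $v \in S$ occurs in bags of both $\Tc^1$ and $\Tc^2$ and in $\bag'(r)$; since $v$ lies in the root bag of each $\Tc^i$ (this is exactly the already-proved containment $S \subseteq \bag^i(t_j)$) and the new node $r$ (with $v \in \bag'(r) = S$) joins the two roots, the union of the two subtrees of occurrences of $v$ together with $r$ is connected, as required.

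For the adhesion size, the adhesions internal to $\Tc^1$ or to $\Tc^2$ have size at most $\adhsize(\Tc^i) \le \adhsize(\Tc)$. The only new adhesions are those created along the join of the two roots (the edges incident to $r$). Here I would use that $\bag^i(t_j) = (\bag(t_j) \cap C_i) \cup \pull^i(t_j) \subseteq C_i$, since every pulled vertex lies in $S \subseteq C_i$; hence the intersection of the two root bags is contained in $C_1 \cap C_2 = S$, and conversely $S$ is contained in both root bags and in $\bag'(r)$. Thus each new adhesion equals $S$, which has size $|C_1 \cap C_2| < k$, i.e.\ at most $k-1$. Combining the two bounds gives $\adhsize(\Tc') \le \max(\adhsize(\Tc), k-1)$.

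All of these verifications are short and mostly bookkeeping; the only point that needs a little care is the vertex condition for $v \in S$, where one must invoke the lemma guaranteeing $S \subseteq \bag^i(t_j)$ so that gluing the two root bags through $r$ preserves connectivity of the occurrences of each separator vertex. I do not anticipate a genuine obstacle.
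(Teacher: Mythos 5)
Your proposal is correct and follows essentially the same route as the paper, which simply cites the same three facts (each $\Tc^i$ is a tree decomposition of $G[C_i]$ with $\adhsize(\Tc^i)\le\adhsize(\Tc)$, the root bags contain $S$, and $(C_1,C_2)$ is a separation with $|S|<k$) and observes that the only new adhesions are subsets of $S$. Your write-up just makes the edge/vertex-condition bookkeeping explicit where the paper leaves it implicit.
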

\begin{proof}
The fact that $\Tc'$ is a tree decomposition of $G$ follows from the facts that $\Tc^1$ is a tree decomposition of $G[C_1]$, $\Tc^2$ is a tree decomposition of $G[C_2]$, the root bags of $\Tc^1$ and $\Tc^2$ contain $S = C_1 \cap C_2$, and $(C_1,C_2)$ is a separation of $G$.

The adhesion size of $\Tc'$ is at most 
\[\max(\adhsize(\Tc^1), \adhsize(\Tc^2), |S|) \le \max(\adhsize(\Tc), k-1),\]
because the only adhesions of $\Tc'$ that are not adhesions of either $\Tc^1$ or $\Tc^2$ are subsets of $S$, and we have that $|S| < k$ because $(C_1,C_2,t_1,t_2)$ is a non-$k$-lean-witness.
\end{proof}

For finally defining the refinement of $\Tc$ with $(C_1,C_2,t_1,t_2)$, we need one more lemma.

\begin{lemma}
\label{lem:leansimplifydecomp}
There is an algorithm that, given a tree decomposition $\Tc = (T,\bag)$ and a node $r \in V(T)$, in time $\|\Tc\|^{\OO(1)}$ returns a tree decomposition $\Tc' = (T',\bag')$ with
\begin{itemize}
\item $\adhsize(\Tc') \le \adhsize(\Tc)$
\end{itemize}
and so that
\begin{itemize}
\item there exists an injection $\phi \colon V(T') \to V(T)$ so that $\bag'(t) = \bag(\phi(t))$ for all $t \in V(T')$, and
\item there is exactly one node $p \in V(T')$ so that $\bag'(p) \subseteq \bag(r)$. For this node $p$ it holds that $\phi(p) = r$.
\end{itemize}
\end{lemma}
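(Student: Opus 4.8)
The plan is to obtain $\Tc' = (T',\bag')$ from $\Tc$ purely by deleting nodes, so that $\bag'(t) = \bag(t)$ holds for every surviving node and the required injection can simply be taken to be the inclusion $V(T') \hookrightarrow V(T)$. First I would root $T$ at $r$. Then I would repeatedly do the following, using the terminology of \Cref{lem:leantdreducetotalsize}: as long as there is a non-root node $t$ with $\bag(t) \subseteq \bag(\parent(t))$ — an \emph{unnecessary} node — contract $t$ into its parent, i.e.\ remove $t$ and reattach every child of $t$ to $\parent(t)$. Each such step strictly decreases $|V(T)|$, so the process terminates after at most $|V(T)|$ steps, and each step is easily implemented in $\OO(\|\Tc\|)$ time, so the total running time is $\|\Tc\|^{\OO(1)}$. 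The output is the resulting tree decomposition $\Tc'$ together with the inclusion $\phi \colon V(T') \to V(T)$.

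Next I would check the first two bullet points. Contracting an unnecessary node $t$ keeps the object a tree decomposition of the same graph: the edge condition is immediate since no bag is removed, and the vertex condition holds because for each vertex $x$ the subtree $\{s : x \in \bag(s)\}$ stays connected — if $x \in \bag(t)$ then $x \in \bag(\parent(t))$, and the former children of $t$ whose bags contain $x$ are reattached to $\parent(t)$. The adhesion size does not increase either: the only new adhesions have the form $\bag(\parent(t)) \cap \bag(c)$ for a former child $c$ of $t$, and applying the vertex condition along the path $\parent(t), t, c$ shows this set is contained in $\bag(t) \cap \bag(c) = \adh(tc)$. Since the process only deletes nodes and never changes a surviving bag, we get $V(T') \subseteq V(T)$, $\bag'(t) = \bag(t) = \bag(\phi(t))$ for all $t \in V(T')$, and $\adhsize(\Tc') \le \adhsize(\Tc)$. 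Moreover $r$ is never contracted, being the root and thus having no parent, so $r \in V(T')$ and $\bag'(r) = \bag(r) \subseteq \bag(r)$.

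Finally I would establish the third bullet point, that $r$ is the \emph{only} surviving node whose bag lies inside $\bag(r)$, which is the real content of the lemma. Suppose $t \in V(T')$ with $t \neq r$ and $\bag(t) \subseteq \bag(r)$. Viewing $\Tc'$ as rooted at $r$, the parent $\parent(t)$ of $t$ in $T'$ lies on the unique $t$–$r$ path. For each $x \in \bag(t) \subseteq \bag(r)$, the set of nodes of $T'$ whose bag contains $x$ is connected and contains both $t$ and $r$, hence also $\parent(t)$; therefore $\bag(t) \subseteq \bag(\parent(t))$, i.e.\ $t$ is unnecessary, contradicting that the contraction process has terminated. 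So $p := r$ is the unique node with $\bag'(p) \subseteq \bag(r)$, and $\phi(p) = r$. The only step that requires a moment of thought is this last one — converting "$\bag(t) \subseteq \bag(r)$" into "$t$ is unnecessary" via the connectivity of the bags along the path to the root; the rest is the routine clean-up already carried out in \Cref{lem:leantdreducetotalsize}.
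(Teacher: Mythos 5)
Your proposal is correct and follows essentially the same route as the paper: root at $r$, repeatedly contract nodes whose bag is contained in the parent's bag, and then observe that any surviving non-root node with $\bag(t) \subseteq \bag(r)$ would still be contractible (the paper phrases this via forget-nodes, you via connectivity of the bag-subtrees along the $t$--$r$ path, which is equivalent). Your additional verification of the tree-decomposition and adhesion properties is correct and merely fills in details the paper leaves implicit.
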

\begin{proof}
Consider $\Tc = (T,\bag)$ as rooted at $r$.
We will do a process where as long as there exists a node $t \in V(T)$ with a parent $p \in V(T)$ so that $\bag(t) \subseteq \bag(p)$, we contract the edge $tp$, i.e., remove $t$ and connect all children of $t$ as children of $p$.
Clearly, this process can be implemented in time $\|\Tc\|^{\OO(1)}$.
It also does not increase the adhesion size, and the required injection $\phi$ can be maintained throughout the process.

To prove there are no other nodes $t$ in the resulting tree decomposition $\Tc' = (T',\bag')$ with $\bag'(t) \subseteq \bag(r)$ than the root-node $r$, we observe that for the resulting tree decomposition it holds that every non-root node is the forget-node of at least one vertex, which therefore cannot be in the root-bag.
\end{proof}

Now we define the \emph{refinement} $\Tc''$ of $\Tc$ with $(C_1,C_2,t_1,t_2)$ to be the result of applying the algorithm of \Cref{lem:leansimplifydecomp} to the pre-refinement $\Tc'$ with the node $r$.
Note that $\adhsize(\Tc'') \le \adhsize(\Tc') \le \max(\adhsize(\Tc), k-1)$.

We will then argue that the process of iteratively replacing $\Tc$ by a refinement of $\Tc$ terminates in $2^{\OO(k)} \cdot \|\Tc\|$ steps.
For this, the following is the key lemma.

\begin{lemma}
\label{lem:bdanalysis}
For every $t \in V(T)$ and $i \in [2]$, we have that $|\bag^i(t)| \le |\bag(t)|$, and moreover, the equality $|\bag^i(t)| = |\bag(t)|$ holds only if $\bag^i(t) = \bag(t) \subseteq C_i$.
\end{lemma}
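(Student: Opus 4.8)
The plan is to fix $t \in V(T)$ and $i \in [2]$, recall the definition $\bag^i(t) = (\bag(t) \cap C_i) \cup \pull^i(t)$, and bound $|\pull^i(t)|$ in terms of $|\bag(t) \setminus C_i|$. The core claim I would establish is the inequality
\[
|\pull^i(t)| \le |\bag(t) \setminus C_i|,
\]
from which $|\bag^i(t)| = |\bag(t) \cap C_i| + |\pull^i(t)| \le |\bag(t) \cap C_i| + |\bag(t) \setminus C_i| = |\bag(t)|$ follows, since $\pull^i(t) \subseteq S = C_1 \cap C_2$ is disjoint from $\bag(t) \setminus C_i$ (a vertex of $\bag(t) \setminus C_i$ lies in $C_j \setminus C_i$ and hence not in $S$). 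To prove the displayed inequality I would reuse the Menger-flow argument already appearing in the adhesion-size lemma above: let $X_1 = C_1 \cap \bag(t_1)$, $X_2 = C_2 \cap \bag(t_2)$; since $S$ is a minimum-size $(X_1,X_2)$-separator, there is a family of $|S|$ vertex-disjoint $(S,X_{j})$-paths, all of whose vertices after the first lie in $C_j \setminus C_i$ (here $j = 3-i$), and one may select a subfamily $\paths'$ of $|\pull^i(t)|$ vertex-disjoint $(\pull^i(t), X_j)$-paths. Considering $\Tc$ rooted at $t_j$: each such path starts at a vertex of $\pull^i(t)$ whose forget-node is a strict descendant of $t$, and ends in $\bag(t_j)$ (the root bag), so each path must pass through $\bag(t)$; since $\pull^i(t)$ is disjoint from $\bag(t)$, each path meets $\bag(t) \setminus \pull^i(t)$, and since the paths meet $C_i$ only in their starting vertices, each meets $\bag(t) \setminus C_i$. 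Vertex-disjointness gives $|\bag(t) \setminus C_i| \ge |\paths'| = |\pull^i(t)|$.

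**The equality case.** Suppose $|\bag^i(t)| = |\bag(t)|$. Since $\bag(t) \cap C_i \subseteq \bag^i(t)$ and the two sets summing to $\bag^i(t)$, namely $\bag(t) \cap C_i$ and $\pull^i(t)$, are disjoint, equality of cardinalities forces $|\bag(t) \setminus C_i| = |\pull^i(t)|$, i.e. the chain of inequalities above is tight. In particular every path of $\paths'$ meets $\bag(t) \setminus C_i$ in \emph{exactly} one vertex and $\bag(t) \setminus C_i$ contains no vertices outside $\bigcup \paths'$. I would then argue that tightness already forces $\bag(t) \setminus C_i = \emptyset$: if $\bag(t) \setminus C_i$ were nonempty then $\pull^i(t)$ would be nonempty, and one can exploit the third minimality condition in the definition of a minimal non-$k$-lean-witness — that $A \cap B = S$ minimizes $\sum_{v \in S} d_{t_1,t_2}(v)$ among all minimum $(X_1,X_2)$-separators. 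Concretely, if some $v \in \pull^i(t)$ has forget-node a strict descendant of $t$ (rooting at $t_j$), then on the corresponding flow path one can "slide" the separator vertex from $v$ to the point where the path crosses $\bag(t) \setminus C_i$ without increasing $|S|$; because that crossing vertex lies in a bag strictly closer to the $t_1$–$t_2$ path than $\forget_\Tc(v)$ is, this strictly decreases $\sum_{v \in S} d_{t_1,t_2}(v)$, contradicting minimality. Hence $\pull^i(t) = \emptyset$, so $\bag^i(t) = \bag(t) \cap C_i$, and then $|\bag^i(t)| = |\bag(t)|$ gives $\bag(t) \subseteq C_i$ and $\bag^i(t) = \bag(t)$.

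**Main obstacle.** The routine direction (the inequality $|\bag^i(t)| \le |\bag(t)|$) is essentially a copy of the flow argument already written out for the adhesion-size bound, so it should be low-risk. The delicate part is the equality case: making the "slide the separator along a flow path" rerouting argument precise, in particular verifying that after rerouting one still has a valid $(X_1,X_2)$-separator of the same size and that the potential $\sum_{v} d_{t_1,t_2}(v)$ strictly drops. The subtlety is book-keeping about where forget-nodes lie relative to the $t_1$–$t_2$ path versus relative to $t$ (which is the root only for the auxiliary rooting at $t_j$); I would need to track both rootings carefully, and handle the boundary subcase where $\forget_\Tc(v)$ already lies on the $t_1$–$t_2$ path (in which case $v$ should not have been in $\pull^i(t)$ in the first place, giving a cleaner contradiction). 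Alternatively, if the slide argument proves cumbersome, a fallback is to only claim the weaker equality characterization actually needed downstream — but I expect the minimality conditions are exactly tailored so that the clean statement $\bag^i(t) = \bag(t) \subseteq C_i$ holds.
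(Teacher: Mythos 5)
Your first half --- the Menger/flow argument giving $|\pull^i(t)| \le |\bag(t) \setminus C_i|$ and hence $|\bag^i(t)| \le |\bag(t)|$ --- is correct and is essentially the same counting of disjoint $(S,X_j)$-paths through $\bag(t)\setminus C_i$ that the paper uses for the adhesion bound. The paper itself argues the lemma slightly differently (it replaces $\pull^1(t)$ by $\bag(t)\setminus C_1$ wholesale inside $S$ and shows the result is still an $(X_1,X_2)$-separator), but that difference is cosmetic for the weak inequality.

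The equality case has a genuine gap. You claim that if $\pull^i(t)\ne\emptyset$ and $|\pull^i(t)|=|\bag(t)\setminus C_i|$, then sliding separator vertices from $\pull^i(t)$ to $\bag(t)\setminus C_i$ strictly decreases $\sum_{v\in S} d_{t_1,t_2}(v)$, because the forget-node of $v\in\pull^i(t)$ is farther from the $(t_1,t_2)$-path than $t$ is. This is false when $t$ lies \emph{on} the $(t_1,t_2)$-path with $t\ne t_1$ (rooting at $t_j$): a strict descendant of such a $t$ can itself lie on the path, so a vertex $v\in\pull^i(t)$ can have $d_{t_1,t_2}(v)=0$, exactly like the vertices of $\bag(t)$ it is traded for, and the potential does not drop. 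Your proposed escape --- that such a $v$ ``should not have been in $\pull^i(t)$ in the first place'' --- misreads the definition: $\pull^i(t)$ only requires $\forget_{\Tc}(v)$ to be a strict descendant of $t$, and nothing excludes forget-nodes on the $(t_1,t_2)$-path. The paper closes this case by a different mechanism: it shows that $S'=(S\setminus\pull^1(t))\cup(\bag(t)\setminus C_1)$ is also a $(\bag(t),X_2)$-separator and uses it to build a non-$k$-lean-witness $(C_1',C_2',t,t_2)$ whose endpoints are strictly closer in $T$ than $t_1,t_2$, contradicting minimality condition~\ref{enu:minkleancond1} rather than condition~\ref{enu:minkleancond3}. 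Without some appeal to condition~\ref{enu:minkleancond1} (or an equivalent), the on-path case cannot be dispatched, so your argument as written does not establish the equality characterization.
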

\begin{proof}
We prove this for $i=1$, but by symmetry this implies it also for $i=2$.
Recall that
\[\bag^1(t) = (\bag(t) \cap C_1) \cup \pull^1(t),\]
so it suffices to prove that either $\pull^1(t)$ is empty or $|\pull^1(t)| < |\bag(t) \setminus C_1|$.
For the sake of contradiction, suppose that $\pull^1(t)$ is non-empty and $|\pull^1(t)| \ge |\bag(t) \setminus C_1|$.

Denote $X_1 = C_1 \cap \bag(t_1)$ and $X_2 = C_2 \cap \bag(t_2)$.
We view $\Tc$ as rooted at $t_2$.

\begin{claim}
\label{lem:bdanalysis:claim1}
The set $S' = (S \setminus \pull^1(t)) \cup (\bag(t) \setminus C_1)$ is an $(X_1,X_2)$-separator and a $(\bag(t), X_2)$-separator.
\end{claim}
\begin{claimproof}
We prove that $S'$ is a $(C_1, X_2)$-separator, which implies the claim because $X_1 \subseteq C_1$ and $\bag(t) \subseteq C_1 \cup S'$.

Assume not, and let $P$ be a shortest $(C_1 \setminus S' ,X_2 \setminus S')$-path in $G \setminus S'$.
Because $S$ is an $(C_1,X_2)$-separator in $G$, we have that $\pull^1(t)$ is an $(C_1 \setminus S', X_2 \setminus S')$-separator in $G \setminus S'$.
Therefore, we can assume that the first vertex of $P$ is in $\pull^1(t)$, and all other vertices in $C_2 \setminus C_1$.
Because $P$ is a path that intersects a descendant of $t$ and the root $t_2$, it must intersect $\bag(t)$.
However, $\bag(t)$ is disjoint with $\pull^1(t)$ and $\bag(t) \setminus C_1 \subseteq S'$, so no such path $P$ exists in $G \setminus S'$.
\end{claimproof}

Now we have that if $|\pull^1(t)| > |\bag(t) \setminus C_1|$, then $|S'| < |S|$, contradicting that $S$ is a minimum-size $(X_1,X_2)$-separator.
Otherwise, $|S'| = |S|$ and $\pull^1(t)$ is non-empty.
We then consider two cases.

\paragraph{Case 1.}
First suppose that $t \neq t_1$ and $t$ is on the $(t_1,t_2)$-path in $T$.
Let $(C_1',C_2')$ be a separation with $C_1' \cap C_2' = S'$, $\bag(t) \subseteq C_1'$, and $X_2 \subseteq C_2'$, which exists by \Cref{lem:bdanalysis:claim1}.
We claim that $(C_1',C_2',t,t_2)$ is a non-$k$-lean-witness that contradicts the minimality of $(C_1,C_2,t_1,t_2)$.
As the distance between $t$ and $t_2$ is smaller than the distance between $t_1$ and $t_2$, it suffices to prove that $(C_1',C_2',t,t_2)$ is a non-$k$-lean-witness.

Because $X_2 \subseteq C_2'$ we have that $|C_2' \cap \bag(t_2)| > |S'|$, and because $t$ is on the $(t_1,t_2)$-path, we have that $|\bag(t)| > |S| = |S'|$, implying that $|C_1' \cap \bag(t)| > |S'|$.
Furthermore, as the $(t,t_2)$-path is a subpath of the $(t_1,t_2)$-path, all $(t,t_2)$-adhesions are larger than $|S'| = |S|$.

\paragraph{Case 2.}
The second case is that $t$ is not on the $(t_1,t_2)$-path or $t = t_1$.
We claim that $\sum_{v \in S'} d_{t_1,t_2}(v) < \sum_{v \in S} d_{t_1,t_2}(v)$, which would contradict the fact that $(C_1,C_2,t_1,t_2)$ is a minimal non-$k$-lean-witness.
We observe that the distance between a strict descendant of $t$ and the $(t_1,t_2)$-path is strictly larger than the distance between $t$ and the $(t_1,t_2)$-path.
Therefore, for all $v \in \pull^1(t)$ and $u \in \bag(t)$, we have that $d_{t_1,t_2}(v) > d_{t_1,t_2}(u)$, which by the fact that $\pull^1(t)$ is non-empty implies that $\sum_{v \in S'} d_{t_1,t_2}(v) < \sum_{v \in S} d_{t_1,t_2}(v)$.
\end{proof}

Now we consider the following potential function on tree decompositions.
Let $\Tc = (T,\bag)$ be a tree decomposition.
We define that the potential $\Phi_{\Tc}(t)$ of a node $t \in V(T)$ is
\[
\Phi_{\Tc}(t) =
\begin{cases}
4^{4k} \cdot (|\bag(t)| - 3k) & \text{ if } |\bag(t)| > 3k\\
4^{|\bag(t)|} & \text{ otherwise,}
\end{cases}
\]
and the potential $\Phi(\Tc)$ of $\Tc$ is
\[\Phi(\Tc) = \sum_{t \in V(T)} \Phi_{\Tc}(t).\]

Our goal is to show that the potential of the refinement of $\Tc$ is smaller than the potential of $\Tc$.
For this, we start with the following.

\begin{lemma}
\label{lem:smallleanfirstpotanalysis}
If both $C_1 \setminus C_2$ and $C_2 \setminus C_1$ intersect $\bag(t)$, then $\Phi_{\Tc^1}(t) + \Phi_{\Tc^2}(t) \le \Phi_{\Tc}(t) - 4^{\min(|\bag(t)|, k)}/2$.
\end{lemma}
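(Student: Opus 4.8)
The plan is to use the bound $|\bag^i(t)| \le |\bag(t)|$ from \Cref{lem:bdanalysis} together with the hypothesis that both $C_1 \setminus C_2$ and $C_2 \setminus C_1$ intersect $\bag(t)$, which forces a \emph{strict} decrease in at least one of $|\bag^1(t)|$ and $|\bag^2(t)|$. Concretely, first I would observe that since $\bag(t)$ intersects $C_2 \setminus C_1$, we have $\bag(t) \not\subseteq C_1$, so by the ``moreover'' clause of \Cref{lem:bdanalysis} the equality $|\bag^1(t)| = |\bag(t)|$ cannot hold; hence $|\bag^1(t)| < |\bag(t)|$, i.e.\ $|\bag^1(t)| \le |\bag(t)| - 1$. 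Symmetrically, $|\bag^2(t)| \le |\bag(t)| - 1$. So both child bags are strictly smaller than $\bag(t)$. Moreover, using $\bag^i(t) = (\bag(t) \cap C_i) \cup \pull^i(t)$ and the fact that $\pull^i(t) \subseteq S = C_1 \cap C_2$, one sees that a vertex of $\bag(t) \cap (C_1 \setminus C_2)$ lies in $\bag^1(t)$ but not in $\bag^2(t)$, and vice versa; combined with the fact that every vertex of $\bag(t) \cap S$ is either in $\bag(t)$ of the relevant rooted decomposition or pulled up, this actually gives $|\bag^1(t)| + |\bag^2(t)| \le |\bag(t)| + |S| \le |\bag(t)| + (k-1)$. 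I would keep both inequalities available since the potential analysis splits into cases on the size of $\bag(t)$.

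The main work is then a case analysis on $|\bag(t)|$ relative to the thresholds $3k$ and $k$ appearing in the definition of $\Phi_{\Tc}$. In the regime where $|\bag(t)|$ is small (say $\le k$, or more generally $\le 3k$ but close enough that all three bags $\bag(t), \bag^1(t), \bag^2(t)$ are in the ``$4^{|\cdot|}$'' branch), I would bound $\Phi_{\Tc^1}(t) + \Phi_{\Tc^2}(t) = 4^{|\bag^1(t)|} + 4^{|\bag^2(t)|} \le 2 \cdot 4^{|\bag(t)| - 1} = 4^{|\bag(t)|}/2 = \Phi_{\Tc}(t) - 4^{|\bag(t)|}/2$, and since $|\bag(t)| = \min(|\bag(t)|, k)$ here, this is exactly the claimed bound (in fact with room to spare if $|S|$ could be used). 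In the regime where $|\bag(t)|$ is large — specifically $|\bag(t)| > 3k$, so $\Phi_{\Tc}(t) = 4^{4k} \cdot (|\bag(t)| - 3k)$, and $\min(|\bag(t)|,k) = k$ — I would use that $|\bag^i(t)| \le |\bag(t)| + k - 1 - |\bag^j(t)|$ is too crude and instead argue as follows: at most one of $|\bag^1(t)|, |\bag^2(t)|$ can exceed $3k$ (since if both did, then $|\bag^1(t)| + |\bag^2(t)| > 6k$, contradicting $|\bag^1(t)| + |\bag^2(t)| \le |\bag(t)| + k - 1$ only when $|\bag(t)|$ is also large — so this needs care; more robustly, I note $\min(|\bag^1(t)|,|\bag^2(t)|) \le (|\bag(t)|+k-1)/2$, and for $|\bag(t)| > 3k$ this is $< |\bag(t)| - k \le |\bag(t)|$, and in fact $\le 3k$ once $|\bag(t)| \le 5k+1$... ). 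The clean way: whichever of $\bag^1(t), \bag^2(t)$ is the larger, call its size $b$; then $b \le |\bag(t)|$ and the smaller one has size $\le |\bag(t)| + k - 1 - b$. If $b \le 3k$ both contribute at most $4^{3k}$ each, so the sum is $\le 2 \cdot 4^{3k} \le 4^{4k}/2 \le \Phi_{\Tc}(t) - 4^k/2$. If $b > 3k$, then $\Phi_{\Tc^{(b)}}(t) = 4^{4k}(b - 3k) \le 4^{4k}(|\bag(t)| - 1 - 3k) = \Phi_{\Tc}(t) - 4^{4k}$, while the smaller bag contributes at most $4^{4k} \cdot \max(|\bag(t)| + k - 1 - b - 3k, 0)$ which I would bound against the saved $4^{4k}$; here I would need the sharper $|\bag^1(t)| + |\bag^2(t)| \le |\bag(t)| + k - 1$ and the inequality $4^{4k} - 4^{4k}(\text{something} \le k-1) $ — actually the two linear terms satisfy $(b - 3k) + (b' - 3k) \le |\bag(t)| + k - 1 - 6k < |\bag(t)| - 3k$, giving $\Phi_{\Tc^1}(t) + \Phi_{\Tc^2}(t) \le 4^{4k}(|\bag(t)| - 3k - 1) < \Phi_{\Tc}(t) - 4^k/2$ comfortably.

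The step I expect to be the main obstacle is getting the linear-term arithmetic in the large-bag case to yield the clean constant $4^{\min(|\bag(t)|,k)}/2 = 4^k/2$ rather than something weaker, and in particular handling the ``boundary'' values of $|\bag(t)|$ just above $3k$ where one child bag may still be in the exponential branch and the other in the linear branch. The resolution is to always carry both inequalities $|\bag^i(t)| \le |\bag(t)| - 1$ and $|\bag^1(t)| + |\bag^2(t)| \le |\bag(t)| + |S| \le |\bag(t)| + k - 1$ (the latter coming from $\bag^1(t) \cap \bag^2(t) \subseteq \bag(t) \cap S$ and $\bag^1(t) \cup \bag^2(t) \subseteq \bag(t) \cup S$, which I would verify directly from the definition $\bag^i(t) = (\bag(t)\cap C_i) \cup \pull^i(t)$ with $\pull^i(t) \subseteq S$), since the choice of the constant $4^{4k}$ as multiplier versus $4^{|\cdot|}$ up to $|\cdot| = 3k$ is precisely engineered so that $2 \cdot 4^{3k} \le 4^{4k}/2$, i.e.\ one ``unit'' of the linear potential dominates both exponential contributions. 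I would then simply do the three or four subcases (both small; one large one small; both straddling; both large) and check the target inequality holds in each, absorbing the $+k-1$ slack using $|\bag(t)|$ large.
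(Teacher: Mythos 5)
Your proposal follows essentially the same route as the paper: strict decrease of both $|\bag^i(t)|$ via the ``moreover'' clause of \Cref{lem:bdanalysis}, the sum bound $|\bag^1(t)|+|\bag^2(t)| \le |\bag(t)| + O(k)$, and a case split on which bags fall into the linear versus exponential branch of $\Phi$; the case analysis and arithmetic all check out. One intermediate claim is too strong: $\bag^1(t)\cap\bag^2(t) \subseteq \bag(t)\cap S$ need not hold, since a vertex $v \in S \setminus \bag(t)$ whose entire subtree of occurrences lies strictly below $t$ (for $t$ off the $t_1$--$t_2$ path, the two rootings agree on what ``strictly below'' means) lands in both $\pull^1(t)$ and $\pull^2(t)$; the correct bound is only $\bag^1(t)\cap\bag^2(t)\subseteq S$, giving $|\bag^1(t)|+|\bag^2(t)| \le |\bag(t)|+2|S| \le |\bag(t)|+2k-2$ (the paper uses $+2k$), but the remaining slack in your large-bag case absorbs this without any change to the conclusion.
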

\begin{proof}
By \Cref{lem:bdanalysis}, we have that in this case $|\bag^i(t)| < |\bag(t)|$ for both $i \in [2]$.
Suppose first that $|\bag(t)| \le 3k$.
In this case,
\[\Phi_{\Tc^1}(t) + \Phi_{\Tc^2}(t) \le 4^{|\bag^1(t)|} + 4^{|\bag^2(t)|} \le 2 \cdot 4^{|\bag(t)|-1} \le \Phi_{\Tc}(t)-4^{|\bag(t)|}/2.\]
Then it remains to consider the case of $|\bag(t)| > 3k$.
First, if $|\bag^i(t)| \le 3k$ for both $i \in [2]$, we have that
\[\Phi_{\Tc^1}(t) + \Phi_{\Tc^2}(t) \le 2 \cdot 4^{3k} \le 4^{4k}-4^k/2 \le \Phi_{\Tc}(t)-4^k/2.\]
Then, if $|\bag^i(t)| \le 3k$ for one $i \in [2]$, we have that
\[\Phi_{\Tc^1}(t) + \Phi_{\Tc^2}(t) \le 4^{3k} + \Phi_{\Tc}(t)-4^{4k} \le \Phi_{\Tc}(t)-4^k/2.\]
Finally, it remains to consider the case when $|\bag^i(t)| > 3k$ for both $i \in [2]$.
In this case, a crucial observation is that $|\bag^1(t)|+|\bag^2(t)| \le |\bag(t)|+2k$, because $\bag^1(t) \cup \bag^2(t) \subseteq \bag(t) \cup S$, $\bag^1(t) \cap \bag^2(t) \subseteq S$, and $|S| < k$.
Using this, we can prove that
\begin{align*}
\Phi_{\Tc^1}(t) + \Phi_{\Tc^2}(t) &\le 4^{4k} \cdot (|\bag^1(t)|-3k) + 4^{4k} \cdot (|\bag^2(t)|-3k)\\
&\le 4^{4k} \cdot (|\bag^1(t)|+|\bag^2(t)| - 6k)\\
&\le 4^{4k} \cdot (|\bag(t)|+2k-6k) \le 4^{4k} \cdot (|\bag(t)|-4k) \le \Phi_{\Tc}(t)-4^k/2.
\end{align*}
\end{proof}

Now, the final prerequisite for showing that the potential of a refinement of $\Tc$ is smaller than the potential of $\Tc$ is to show that there exists a node $t \in V(T)$ so that $|\bag(t)| > |S|$ and $\bag(t)$ intersects both $C_1 \setminus C_2$ and $C_2 \setminus C_1$.

\begin{lemma}
\label{lem:smallleansplitnodeexists}
There exists $t \in V(T)$ so that $|\bag(t)| > |S|$ and $\bag(t)$ intersects both $C_1 \setminus C_2$ and $C_2 \setminus C_1$.
\end{lemma}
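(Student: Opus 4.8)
The plan is to leverage the minimality of the non-$k$-lean-witness $(C_1,C_2,t_1,t_2)$, in particular its property~(2): among all $(\bag(t_1) \cap C_1, \bag(t_2) \cap C_2)$-separators, $S = C_1 \cap C_2$ minimizes $|S|$. By Menger's theorem, this means there is a collection of $|S|$ vertex-disjoint $(X_1, X_2)$-paths, where $X_1 = C_1 \cap \bag(t_1)$ and $X_2 = C_2 \cap \bag(t_2)$, and each such path passes through exactly one vertex of $S$. The intuition is that some bag on the $(t_1,t_2)$-path of $T$ must ``cut across'' all these $|S|$ paths \emph{and strictly more}, because $(C_1,C_2,t_1,t_2)$ being a non-$k$-lean-witness forces $|\bag(t_1) \cap C_1| > |S|$ and $|\bag(t_2) \cap C_2| > |S|$, so there is genuine ``flow'' of size exceeding $|S|$ that the decomposition $\Tc$ must route through a single bag somewhere.

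First I would fix the collection $\paths = P_1, \ldots, P_{|S|}$ of vertex-disjoint $(X_1, X_2)$-paths guaranteed by Menger. I would also use the non-$k$-lean-witness conditions to extract a set of paths witnessing flow strictly larger than $|S|$ between $\bag(t_1) \cap C_1$ and $\bag(t_2) \cap C_2$: more precisely, since $|\bag(t_1) \cap C_1| > |S|$, there is a vertex $a \in (\bag(t_1) \cap C_1) \setminus S$ together with a path structure reaching towards $t_2$; similarly a vertex $b \in (\bag(t_2) \cap C_2)\setminus S$. The key point is that $S$ is an $(X_1, X_2)$-separator, so $a$ and $b$ lie on the same side but the separator has size exactly $|S|$, and the ``extra'' vertices $a, b$ are not in $S$. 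Then I would walk along the $(t_1, t_2)$-path $t_1 = s_0, s_1, \ldots, s_m = t_2$ in $T$ and consider the adhesions $\adh(s_{\ell}s_{\ell+1})$. Each such adhesion is a $(\bag(t_1) \cap C_1, \bag(t_2) \cap C_2)$-separator intersected with the relevant sides, hence has size $> |S|$ by the definition of a non-$k$-lean-witness (property that every $(t_1,t_2)$-adhesion has size $> |A \cap B|$). So $|\adh(s_\ell s_{\ell+1})| > |S| = |C_1 \cap C_2|$ for all $\ell$.

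Now the heart of the argument: I would show that at least one node $s_\ell$ on this path has $\bag(s_\ell)$ meeting both $C_1 \setminus C_2$ and $C_2 \setminus C_1$. Suppose not. Consider the coloring of the path nodes: a node $s_\ell$ is ``type $1$'' if $\bag(s_\ell) \cap (C_2 \setminus C_1) = \emptyset$ (so $\bag(s_\ell) \subseteq C_1$), and ``type $2$'' if $\bag(s_\ell) \subseteq C_2$; by assumption every node is type $1$ or type $2$ (or both, if $\bag(s_\ell) \subseteq S$). Since $\bag(t_1) \supseteq X_1$ contains a vertex of $C_1 \setminus C_2$ (as $|X_1| > |S|$ and $S = C_1 \cap C_2$, so $X_1 \not\subseteq S$, giving some vertex of $C_1 \setminus C_2$ in $\bag(t_1)$), $t_1 = s_0$ is type $1$ and not type $2$. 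Symmetrically $t_2 = s_m$ is type $2$ and not type $1$. Hence there is an index $\ell$ where the path transitions from type $1$ to type $2$, i.e., $\bag(s_\ell) \subseteq C_1$ but $\bag(s_{\ell+1}) \subseteq C_2$. Then $\adh(s_\ell s_{\ell+1}) = \bag(s_\ell) \cap \bag(s_{\ell+1}) \subseteq C_1 \cap C_2 = S$, so $|\adh(s_\ell s_{\ell+1})| \le |S|$ — contradicting $|\adh(s_\ell s_{\ell+1})| > |S|$ established above. Therefore some node $s_\ell$ on the $(t_1,t_2)$-path has $\bag(s_\ell)$ intersecting both $C_1 \setminus C_2$ and $C_2 \setminus C_1$; call it $t$. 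Finally, for this $t$: since $\bag(t)$ contains a vertex of $C_1 \setminus C_2$ and a vertex of $C_2 \setminus C_1$, and $S$ separates $C_1$ from $C_2$, the set $\bag(t) \cap S$ must separate $\bag(t) \cap (C_1 \setminus C_2)$ from $\bag(t) \cap (C_2 \setminus C_1)$ within $G[\bag(t)]$; but $\bag(t)$ is a clique is \emph{not} assumed — instead I would argue directly that $|\bag(t)| \ge |\bag(t)\cap S| + |\{v_1\}| + |\{v_2\}|$ where $v_1 \in \bag(t)\cap(C_1\setminus C_2)$, $v_2 \in \bag(t)\cap(C_2\setminus C_1)$ are disjoint from $S$, so $|\bag(t)| \ge |S| + 1 > |S|$ provided $\bag(t) \cap S = S$; and if $\bag(t)\cap S \subsetneq S$ one needs a small extra argument.

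\textbf{Main obstacle.} The delicate point is the very last step — ensuring $|\bag(t)| > |S|$ rather than merely $|\bag(t)| \ge |\bag(t)\cap S| + 2$. If $\bag(t)$ omits some vertex of $S$, I cannot immediately conclude $|\bag(t)|>|S|$. I expect this is handled by a Menger/flow argument: $\bag(t)$, sitting on the $(t_1,t_2)$-path with an adhesion of size $>|S|$ on at least one incident tree-edge, must contain an $(X_1,X_2)$-separator that routes at least $|S|$ disjoint paths \emph{plus} the ``extra'' path through $v_1$ or $v_2$, forcing $|\bag(t)| > |S|$ directly; so the right move is to choose $t$ not just as any transition node but using the adhesion-size lower bound together with the fact that every $(t_1,t_2)$-adhesion exceeds $|S|$, and then observe that $\bag(t)$ contains such an adhesion as a subset, giving $|\bag(t)| \ge |\adh(s_\ell s_{\ell+1})| > |S|$ for $\ell$ adjacent to $t$. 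This sidesteps the clique issue entirely, and is the clean way to close the argument.
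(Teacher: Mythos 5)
Your proposal is correct and matches the paper's proof: the core is the same transition argument along the $(t_1,t_2)$-path (a first node of ``type $1$'' followed by one of ``type $2$'' would force an adhesion contained in $S$, contradicting that every $(t_1,t_2)$-adhesion has size $>|S|$), and the bound $|\bag(t)|>|S|$ follows, exactly as you conclude at the end, from the fact that $\bag(t)$ contains a $(t_1,t_2)$-adhesion of size $>|S|$ (or, for $t\in\{t_1,t_2\}$, from $|\bag(t_i)\cap C_i|>|S|$). The Menger/flow machinery in the middle of your write-up is unnecessary; the clean closing move you identify in your last paragraph is precisely what the paper does.
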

\begin{proof}
Consider the $(t_1,t_2)$-path in $T$.
Because $(C_1,C_2,t_1,t_2)$ is a non-$k$-lean-witness and $S = C_1 \cap C_2$, all bags of nodes on this path are larger than $|S|$, as otherwise there would be an $(t_1,t_2)$-adhesion of size $\le |S|$.
We claim that there is a node $t$ on this path so that $\bag(t)$ intersects both $C_1 \setminus C_2$ and $C_2 \setminus C_1$.

Suppose not, implying that each bag on this path is either a subset of $C_1$ or a subset of $C_2$.
Note that no bag is a subset of both, because then the bag would be a subset of $S$.
Because $(C_1,C_2,t_1,t_2)$ is a non-$k$-lean-witness, $\bag(t_1)$ must intersect $C_1 \setminus C_2$ and $\bag(t_2)$ must intersect $C_2 \setminus C_1$.
Therefore, there are two consecutive nodes $x,y$ on the $(t_1,t_2)$-path so that $\bag(x) \subseteq C_1$ and $\bag(y) \subseteq C_2$.
However, then $\adh(xy) = \bag(x) \cap \bag(y) \subseteq C_1 \cap C_2$, contradicting that $(C_1,C_2,t_1,t_2)$ is a non-$k$-lean-witness.
\end{proof}

Now we can prove that the potential decreases in the refinement.

\begin{lemma}
\label{lem:smallleanpotanal}
Let $\Tc$ be a tree decomposition, $(C_1,C_2,t_1,t_2)$ a minimal non-$k$-lean-witness of $\Tc$, and $\Tc''$ the refinement of $\Tc$ with $(C_1,C_2,t_1,t_2)$.
Then, $\Phi(\Tc'') \le \Phi(\Tc) - 1$.
\end{lemma}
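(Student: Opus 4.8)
The plan is to bound $\Phi(\Tc'')$ by first determining which nodes of the pre-refinement $\Tc'$ survive the simplification of \Cref{lem:leansimplifydecomp}, and then carrying out a node-by-node comparison against $\Phi(\Tc)$. Throughout write $S = C_1 \cap C_2$ and $L_i = C_i \setminus C_{3-i}$, and recall $|S| < k$ since $(C_1,C_2,t_1,t_2)$ is a non-$k$-lean-witness. I will freely use that $\Phi_\Tc(t)$ depends only on $|\bag(t)|$ and is nondecreasing in it, which combined with \Cref{lem:bdanalysis} gives $\Phi_{\Tc^i}(t) \le \Phi_\Tc(t)$ for every node $t$ and $i\in\{1,2\}$.

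For the first step, \Cref{lem:leansimplifydecomp} equips $\Tc''$ with a bag-preserving injection $\phi$ of its node set into $V(T')$, under which there is a unique node $p$ of $\Tc''$ with $\bag''(p)\subseteq\bag'(r)=S$, and $\phi(p)=r$ (so $\bag''(p)=S$). Hence every other node $q$ of $\Tc''$ has $\phi(q)$ lying in the $\Tc^1$- or $\Tc^2$-part of $\Tc'$, and $\bag''(q)=\bag^i(t)$ for the corresponding $t\in V(T)$ is not contained in $S$; since $\bag^i(t)=(\bag(t)\cap C_i)\cup\pull^i(t)\subseteq C_i$, this forces $\bag^i(t)$ to contain a vertex of $L_i$. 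Let $W_i\subseteq V(T)$ be the set of $t$ for which some node of $\Tc''$ maps via $\phi$ to the $\Tc^i$-copy of $t$; then $t\in W_i$ implies $\bag^i(t)\cap L_i\neq\emptyset$, and since $\pull^i(t)\subseteq S$ we have $\bag^i(t)\cap L_i=\bag(t)\cap L_i$, so $\bag(t)$ meets $L_i$. As $\phi$ is injective and the $\Tc^1$- and $\Tc^2$-copies of a node are distinct, we conclude $\Phi(\Tc'')=4^{|S|}+\sum_{t\in W_1}\Phi_{\Tc^1}(t)+\sum_{t\in W_2}\Phi_{\Tc^2}(t)$, where the term $4^{|S|}$ is $\Phi_{\Tc''}(p)$ (using $|S|<k\le 3k$).

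Set $g(t)=[t\in W_1]\,\Phi_{\Tc^1}(t)+[t\in W_2]\,\Phi_{\Tc^2}(t)$; the task is to show $\sum_t g(t)\le\Phi(\Tc)-1-4^{|S|}$. I would first check $g(t)\le\Phi_\Tc(t)$ for every $t$: if $t\in W_1\cap W_2$ then $\bag(t)$ meets both $L_1$ and $L_2$, so \Cref{lem:smallleanfirstpotanalysis} yields $g(t)=\Phi_{\Tc^1}(t)+\Phi_{\Tc^2}(t)\le\Phi_\Tc(t)-\tfrac12\cdot 4^{\min(|\bag(t)|,k)}\le\Phi_\Tc(t)$, and in the remaining cases ($t$ in exactly one $W_i$, or in neither) the bound is immediate from $\Phi_{\Tc^i}(t)\le\Phi_\Tc(t)$. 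Then I would take the node $t_0$ from \Cref{lem:smallleansplitnodeexists}, so $|\bag(t_0)|>|S|$ (hence $\ge|S|+1$) and $\bag(t_0)$ meets both $L_1$ and $L_2$, and argue $g(t_0)\le\Phi_\Tc(t_0)-(1+4^{|S|})$ by cases. If $t_0\in W_1\cap W_2$, then since $|S|<k$ we have $\min(|\bag(t_0)|,k)\ge|S|+1$, so \Cref{lem:smallleanfirstpotanalysis} saves $\tfrac12\cdot 4^{|S|+1}=2\cdot 4^{|S|}$. If $t_0\in W_1\setminus W_2$ (the case $W_2\setminus W_1$ being symmetric), then $\bag(t_0)$ meets $L_2$, so $\bag(t_0)\not\subseteq C_1$, so \Cref{lem:bdanalysis} gives $|\bag^1(t_0)|<|\bag(t_0)|$; using the explicit form of $\Phi$ and $|\bag(t_0)|-1\ge|S|$ this yields $\Phi_\Tc(t_0)-\Phi_{\Tc^1}(t_0)\ge 3\cdot 4^{|S|}$. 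If $t_0\notin W_1\cup W_2$, then $g(t_0)=0$ and $\Phi_\Tc(t_0)\ge 4^{|\bag(t_0)|}\ge 4^{|S|+1}$. In every case the saved amount is at least $1+4^{|S|}$ because $4^{|S|}\ge 1$. Combining this with $g(t)\le\Phi_\Tc(t)$ for all $t\neq t_0$ gives $\sum_t g(t)\le\Phi(\Tc)-(1+4^{|S|})$, hence $\Phi(\Tc'')\le\Phi(\Tc)-1$.

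The main obstacle is the surviving-node description in the second paragraph: reading off from \Cref{lem:leansimplifydecomp} that every non-root node of $\Tc''$ has a bag not contained in $S$, and hence, after intersecting with the side $C_i$ it came from, contains a vertex of the correct $L_i$ — this is what lets one apply \Cref{lem:smallleanfirstpotanalysis} exactly to the nodes counted in both $W_1$ and $W_2$ and control the rest via \Cref{lem:bdanalysis}. Once this bookkeeping is in place, the remaining arithmetic with the explicit formula for $\Phi$ and the node supplied by \Cref{lem:smallleansplitnodeexists} is routine.
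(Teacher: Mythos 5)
Your proposal is correct and follows essentially the same route as the paper's proof: express $\Phi(\Tc'')$ through the surviving copies in the pre-refinement, bound each node's contribution by $\Phi_{\Tc}(t)$ using \Cref{lem:bdanalysis} and \Cref{lem:smallleanfirstpotanalysis}, and extract the definite saving of more than $4^{|S|}$ at the split node supplied by \Cref{lem:smallleansplitnodeexists}. The only cosmetic difference is your explicit bookkeeping via $W_1,W_2$ and the three-way case analysis at $t_0$, where the paper instead notes directly that a non-split node has one copy equal to $\bag(t)$ and the other contained in $S$.
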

\begin{proof}
In the pre-refinement $\Tc' = (T',\bag')$, there are for each node $t \in V(T)$ two nodes $t^1, t^2 \in V(T')$ with bags $\bag'(t^1) = \bag^1(t)$ and $\bag'(t^2) = \bag^2(t)$.
Additionally, there is a node $r$ with the bag $\bag'(r) = S$.

By \Cref{lem:leansimplifydecomp}, the collection of the bags of the refinement $\Tc''$ is a subcollection of the bags of the pre-refinement $\Tc'$, obtained by removing all bags that are subsets of $S$, except the bag $\bag'(r) = S$.
Therefore, we have that

\begin{equation}
\label{lem:smallleanpotanal:eq1}
\Phi(\Tc'') = 4^{|S|} + \sum_{\{t \in V(T) : \bag^1(t) \not\subseteq S\}} \Phi_{\Tc^1}(t) + \sum_{\{t \in V(T) : \bag^2(t) \not\subseteq S\}} \Phi_{\Tc^2}(t).
\end{equation}

We say a node $t \in V(T)$ is \emph{split} by $(C_1,C_2)$ if $\bag(t)$ intersects both $C_1 \setminus C_2$ and $C_2 \setminus C_1$.
We observe that if a node is not split by $(C_1,C_2)$, then there is $i \in [2]$ so that
\begin{itemize}
\item $\bag(t) \subseteq C_i$, 
\item $\bag^i(t) = \bag(t)$ (by \Cref{lem:bdanalysis}), and
\item $\bag^j(t) \subseteq S$ for $j = 3-i$.
\end{itemize}
Let $R \subseteq V(T)$ denote the set of all nodes of $T$ split by $(C_1,C_2)$.
By using \Cref{lem:smallleanpotanal:eq1}, we obtain that
\begin{align*}
\Phi(\Tc'') \le& 4^{|S|} + \sum_{t \in R} \left(\Phi_{\Tc^1}(t) + \Phi_{\Tc^2}(t)\right) + \sum_{t \in V(T) \setminus R} \Phi_{\Tc}(t)\\
\le& \Phi(\Tc) + 4^{|S|} + \sum_{t \in R} \left(\Phi_{\Tc^1}(t) + \Phi_{\Tc^2}(t) - \Phi_{\Tc}(t)\right).
\intertext{By applying \Cref{lem:smallleanfirstpotanalysis}, we get then that}
\Phi(\Tc'') \le& \Phi(\Tc) + 4^{|S|} - \sum_{t \in R} 4^{\min(|\bag(t)|,k)}/2,
\intertext{which by \Cref{lem:smallleansplitnodeexists} implies that}
\Phi(\Tc'') \le& \Phi(\Tc) + 4^{|S|} - 4^{\min(|S|+1, k)}/2\\
\le& \Phi(\Tc) - 1.
\end{align*}

\end{proof}

Now we can put these results together to complete the proof of \Cref{lem:leanalgsmall}, which we re-state.

\leanalgsmall*
\begin{proof}
The idea is to iteratively apply the refinement operation.
We maintain a tree decomposition $\Tc$ with $\adhsize(\Tc) < k$, and as long as $\Tc$ is not $k$-lean, apply the refinement operation to decrease its potential.

We start with $\Tc$ being a tree decomposition with one bag containing all vertices.
We have that $\Phi(\Tc) = 2^{\OO(k)} \cdot |V(G)|$.
Then, we repeat the following process.

We first use the algorithm of \Cref{lem:smallleanfindminiwitness} to in time $s^{\OO(k)} \cdot \|G\|^{\OO(1)} \cdot \|\Tc\|^{\OO(1)}$ either conclude that $\Tc$ is $k$-lean or return a minimal non-$k$-lean-witness $(C_1,C_2,t_1,t_2)$ for $\Tc$.
If it concludes that $\Tc$ is $k$-lean, we are done.
Otherwise, we use $(C_1,C_2,t_1,t_2)$ to transform $\Tc$ into the refinement of $\Tc$ with $(C_1,C_2,t_1,t_2)$.
It is not hard to see that this transformation can be implemented in time $\|\Tc\|^{\OO(1)} \cdot \|G\|^{\OO(1)}$.
Therefore, one iteration of the process takes time $s^{\OO(k)} \cdot \|\Tc\|^{\OO(1)} \cdot \|G\|^{\OO(1)}$.

As the refinement decreases the potential of $\Tc$ by at least one (\Cref{lem:smallleanpotanal}), there are at most $2^{\OO(k)} \cdot |V(G)|$ iterations.
Furthermore, we can bound $\|\Tc\| = \OO(\Phi(\Tc))$, and therefore as $\Phi(\Tc)$ is always bounded by $2^{\OO(k)} \cdot |V(G)|$, each iteration runs in time $s^{\OO(k)} \cdot \|G\|^{\OO(1)}$, yielding the overall running time of $s^{\OO(k)} \cdot \|G\|^{\OO(1)}$.
\end{proof}

\bibliographystyle{alpha}
\bibliography{book_kernels_fvf}

\end{document}